\definecolor{spred}{rgb}{1.0, 0.0, 0.0}
\definecolor{spgreen}{rgb}{0.0, 0.27, 0.13}
\definecolor{spblue}{rgb}{0.0, 0.0, 1.0}
\definecolor{spviolet}{rgb}{0.78, 0.08, 0.52}
\newcommand{\redtt}[1]{{\color{spred}{#1}}}
\newcommand{\bluett}[1]{{\color{spblue}{#1}}}
\newcommand{\greentt}[1]{{\color{spgreen}{#1}}}
\newcommand{\violett}[1]{{\color{spviolet}{#1}}}
\pgfplotsset{compat=newest} 
\tikzset{
    cross/.pic = {
    \draw[rotate = 45] (-#1,0) -- (#1,0);
    \draw[rotate = 45] (0,-#1) -- (0, #1);
    }
}
\def\simasO^#1{\stackrel{#1}{\sim}}
\def\simas^#1{\hskip -3pt%
\stackrel{#1}{\lower 5pt\hbox{$\widetilde{\hphantom{#1}}$}}%
\hskip -3pt}
\def \mblong      {{Mellin-Barnes{}}{}}    
\def \sd      {\texttt{SD{}}{}}    
\def \mb      {\texttt{MB{}}{}}
\def \MB      {\texttt{MB{}}{}}
\def \pltest      {\texttt{PlanarityTest}{}}
\def \pltestm      {\texttt{PlanarityTest.m}{}}
\def \mbn      {\texttt{MBnumerics.m}{}}
\def \ambrem   {\texttt{AMBRE.m}{}}
\def \math {\texttt{Mathematica}{}}
\def \mbm  {\texttt{MB.m}{}}
\def \mbresolve  {\texttt{MBresolve.m}{}}
\def \mbr  {\texttt{MB}{}}
\def \ar {\texttt{AMBRE}{}}
\def \arm {\texttt{AMBRE.m}{}}
\def \la {\texttt{LA}{}}
\def \ga {\texttt{GA}{}}
\def \ha {\texttt{HA}{}}
\def\mathswitch#1{\relax\ifmmode#1\else$#1$\fi}
\def\mathswitchr#1{\relax\ifmmode{\mathrm{#1}}\else$\mathrm{#1}$\fi}
\newcommand{\bea}{\begin{eqnarray}\nonumber}
\newcommand{\eea}{\end{eqnarray}}
\newcommand{\bqa}{\begin{eqnarray}}
\newcommand{\eqa}{\nonumber\end{eqnarray}}
\newcommand{\nn}{\nonumber}
\newcommand{\eps}{{\epsilon}}
\newcommand{\ep}{{\epsilon}}
\def \litwo {{\rm{Li_2}}}
\def \litr {{\rm{Li_3}}}
\newcommand{\bq}{\begin{equation}}
\newcommand{\eq}{\end{equation}}
\newcommand{\ba}{\begin{eqnarray}}
\newcommand{\ea}{\end{eqnarray}}
\newcommand{\hint}{\underline{Hint}:}
\def\beq{\begin{equation}}
\def\eeq{\end{equation}}
\def\beqa{\begin{eqnarray}}
\def\eeqa{\end{eqnarray}}
\def\ben{\begin{enumerate}}
\def\een{\end{enumerate}}
\def\bit{\begin{itemize}}
\def\eit{\end{itemize}}
\def\mcC{\mathcal{C}}
\def\mcJ{\mathcal{J}}
\def\mcA{\mathcal{A}}
\DeclareMathOperator*{\res}{Res}
\def\zinf{z_{\infty}}
\def\sln{\mathop{\rm sln}}
\def \h {\texttt{H}{}}
\newcommand{\colbref}[1]{{\color{blue}{\href{#1}{#1}}}}
\newcommand{\wwwaux}[1]{the file \texttt{MB\_#1\_Springer.nb} in the auxiliary material in \cite{www_aux_springer}}
\newcommand{\cF}{{\cal F}}
\newcommand{\cM}{{\cal M}}
\def \litri {{\rm{Li_3}}}
\def \lin   {{\rm{Li_n}}}
\def\litwo {{\rm{Li_2}}}
\def\litr {{\rm{Li_3}}}
\def\ep   {\epsilon}
\def\eps   {\epsilon}
\def\a4{{\rm{Li_4({1/2})}}}
\def\H(#1){{{H}}_{#1}}
\def\Hh(#1,#2){{{H}}_{#1,#2}}
\def\Hhh(#1,#2,#3){{{H}}_{#1,#2,#3}}
\def\Hhhh(#1,#2,#3,#4){{{H}}_{#1,#2,#3,#4}}
\begin{document}

\author{Ievgen Dubovyk, Janusz Gluza \\G\'abor Somogyi}
\title{Mellin-Barnes Integrals:\\
A Primer on Particle Physics Applications
%Mellin-Barnes integrals: \\A primer to particle physics
%applications
}
\subtitle{Lecture Notes in Physics}
\maketitle

\frontmatter%%%%%%%%%%%%%%%%%%%%%%%%%%%%%%%%%%%%%%%%%%%%%%%%%%%%%%
\bibliographystyle{elsarticle-num-ID} % long
%\bibliographystyle{bibs/utphys_zbb} % short
%\bibliographystyle{bibs/h-elsevier2}
% also long: utphys_spires_tit.bst

%%%%%%%%%%%%%%%%%%%%%%% dedic.tex %%%%%%%%%%%%%%%%%%%%%%%%%%%%%%%%%
%
% sample dedication
%
% Use this file as a template for your own input.
%
%%%%%%%%%%%%%%%%%%%%%%%% Springer %%%%%%%%%%%%%%%%%%%%%%%%%%

\begin{dedication}

%% Please have the foreword written here
We dedicate this book to the memory of \\Dr. Tord Riemann.
 
\begin{figure}
%   \centering 
%    \label{}
 %{\small}\\
%\includegraphics[height=3cm]{figs/Mellin}
\hspace{4cm}
 \includegraphics[height=4cm]{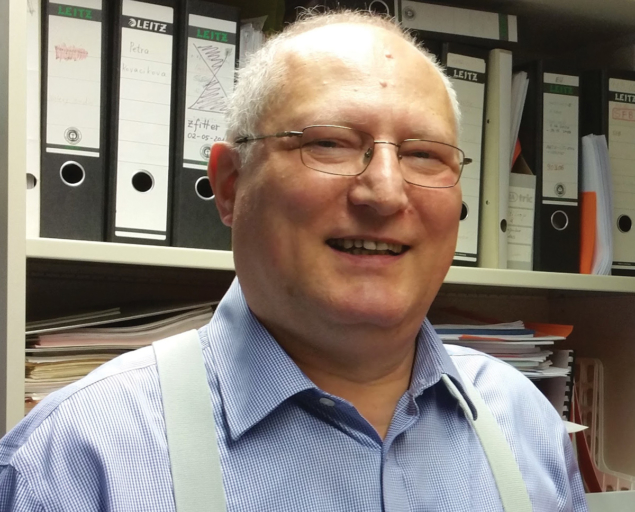}
%\hspace*{5cm}  {\small }
%\sidecaption
% \caption*{\vspace*{-.5cm} 
% Dr. Tord Riemann, 1951-2021, %\colbref{https://cerncourier.com/a/tord-riemann-1951-2021}. \newline   
% \hspace{-3cm} 
% Photo thanks to Dr. Sabine~Riemann.
% }
\end{figure}
\vspace{-.3cm}
 {\footnotesize{\rm Dr. Tord Riemann, 1951-2021, \colbref{https://cerncourier.com/a/tord-riemann-1951-2021}. %\newline   
 \hspace{4cm} 
 
 Photo thanks to Dr. Sabine~Riemann.
 }}
 \vspace{.5cm}
 \noindent
 
\hspace{-.5cm} We are thankful to Tord for his support and encouragement for writing up these notes.

%\vspace{\baselineskip}
%\begin{flushright}\noindent
%Place, month year\hfill {\it Authors}\\
%\end{flushright}

\end{dedication}

%\include{foreword}
%%%%%%%%%%%%%%%%%%%%%%preface.tex%%%%%%%%%%%%%%%%%%%%%%%%%%%%%%%%%%%%%%%%%
% sample preface
%
% Use this file as a template for your own input.
%
%%%%%%%%%%%%%%%%%%%%%%%% Springer %%%%%%%%%%%%%%%%%%%%%%%%%%

\preface

%% Please write your preface here
%%Customarily \textit{acknowledgments} are included as last part of the preface.
%}}}
%\begin{flushright}
%{\it Today in the sciences, books are usually either texts or retrospective reflections upon one aspect or another of the scientific life. The scientist who writes one is more likely to find his professional reputation impaired than enhanced. }\newline 
%Thomas~S.~Kuhn `The structure of scientific revolutions'
%\end{flushright}

\vspace{-3cm}
\begin{figure}
    \centering 
%    \label{}
 %{\small}\\
%\includegraphics[height=4cm]{figs/Mellin}
\includegraphics[height=4.5cm]{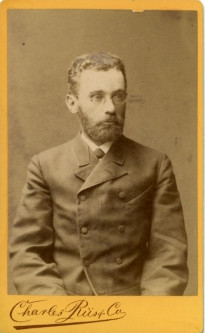}
 \includegraphics[height=4.5cm]{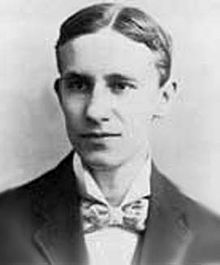}\\
%\hspace*{5cm}  {\small }
 \caption*{On the left: Hjalmar Mellin (1854-1933) Finnish mathematician, Helsinki University of Technology. \copyright{}
 Atelier Charles Rüs \& Co., Hjalmar Mellin, Ident.Nr. 1992,32/003a.024, Kunstbibliothek, Staatlichen Museen zu Berlin under CC BY-NC-SA 3.0 DE,  \colbref{http://www.smb-digital.de/eMuseumPlus}.\\
 On the right: Ernest William Barnes (1874-1953)  British mathematician, liberal 
 theologian and bishop. {\copyright Foto:  \colbref{https://en.wikipedia.org/wiki/Ernest\_Barnes}}.} %\colbref{https://en.wikipedia.org/wiki}}.
\end{figure}

This book is about the mathematical basics of {\textit{Mellin-Barnes}} integrals,
however, it is motivated by the practical needs of precision calculations in particle physics. 

In modelling physical phenomena, we aim at predictability and exactness. 
In particle physics a compelling model exists, commonly called the Standard Model (SM), which describes elementary particles and their interactions.  We know since the 1960's that gauge interactions, 
mathematically based on a product of $U(1)$, $SU(2)$ and $SU(3)$ groups, describe the electromagnetic, weak and strong forces. We know the charges, spins, C and P parities, as well as the masses of 
elementary particles. We understand how fermions and bosons interact by chiral weak interactions 
and we understand the dynamics of the quarks and gluons that partake in the strong interaction. 
The SM describes innumerable physical phenomena observed in the laboratory and the cosmos, linking present-day experiments with the first $10^{-11}$ second of the Universe. 
Looking at the list of Nobel laureates in Physics, many of them are connected with particle physics. 
Their awards speak to the appreciation of the Standard Model theory builders and to the enormous efforts 
made to confirm the SM structure and predictions experimentally.
%, with elusive effects defined within the model (QCD theory, neutrinos, Higgs boson). 
 
The SM is surprisingly precise in its predictions and complete in the sense that all of its theorized 
elementary particles have now been observed. %All its elementary particles have been discovered, and the model describes very well the observed effects.
The last missing piece of the SM, the Higgs boson was discovered in 2012. This discovery was made 
possible by the construction of the LHC accelerator {and} the ATLAS and CMS detectors, whose designs 
were guided by theoretical predictions based on the SM. Thus, we now believe that we can answer the  theoretical question dating from the 1960s: "what gives mass to the electroweak bosons?".  

{\textit{However, this achievement does not stop the need for further exploration!}}
%This discovery also completed the set of particles and interactions that form the Standard Model (SM). 

First, does the ${\rm 125 GeV/c^2}$ particle that was discovered really play the role which theory says it does? Second, several unanswered mysteries call for new phenomena beyond the SM: (i) the apparent 
presence of dark matter in the Universe; (ii) the matter-antimatter asymmetry of the Universe; and (iii) 
the existence and pattern of neutrino masses to name three outstanding ones. Although the list of 
issues to be explored in the future is long\footnote{\label{briefbook} Physics Briefing Book : Input for the European Strategy for Particle Physics Update 2020,
Richard Keith Ellis et al., CERN-ESU-004,
\href{http://arxiv.org/abs/arXiv:1910.11775}{arXiv:1910.11775}.}, there is at present no clear guidance on the energy scale or strength of the new phenomena. However since the SM is complete, there is no freedom in its predictions and any observed deviation would be a discovery and a precious pointer 
into the unknown. 
An ambitious program of precision Higgs and electroweak measurements is thus in order.
%, in Europe triggered by the European Strategy for Particle Physics Update 2020. 
%In this regard, SM predictions motivate which accelerators and detectors to construct and provide 
%a baseline to be compared with experimental data. 
%
%it ``calibrates'' what can be simulated and compared with experimental data. 
%Concerning lepton colliders we mean especially the FCC-ee project ...
%
%{\em However, collider physics needs more than experimental efforts: It needs Theory}. 
However, a well-prepared and well-motivated collider project requires precise theory input to guide 
the experiment and confront the results, thus optimizing the chances for significant scientific 
progress. Ever-increasing mathematical virtuosity is however required to reach the ever-improving 
level of experimental precision. 

{\textit{And this is a place where the theory and methods connected with Mellin-Barnes integrals described in the book enter. }}

To achieve the required precision of theoretical calculations, higher-order perturbative 
corrections in Quantum Field Theory must be calculated. This involves the computation of so-called 
multi loop Feynman integrals as well as phase space integrals. 
%\textit{Precision} is here a key word. 
However, currently there is no complete theory for Feynman integrals beyond one loop. This defines 
a challenge to theoreticians and calls for innovative methods and tools, both numerical and analytical.
%To extract tiny effects beyound the SM (BSM) from experiments we must measure more and more precisely and calculate corresponding measured observables with higher and higher accuracy. 
%From theoretical point of view in hunt for BSM effects in particle physics our main concern is about developing techniques and tools for precision observables measurements. For this exploration of advanced \textit{mathematics} is needed.
%And this is where we stand.

%Theoretical calculations in microscale must be precise, otherwise there is no way to extract tiny details connected with colliding particles, their creation and decays or indirect, virtual effects. 
This book focuses on one of these, the \textit{Mellin-Barnes} (MB) method. This method has been 
applied to real applications at the frontier of research and has proven to be one of very few 
ones which could solve so far untouchable problems of precision physics. 
%, which proved to be in frontier of real applications, as one of very few directions which can solve 
%so far untouchable regions of precision physics. 
The book provides a self-contained overview of the necessary elements of complex analysis and 
special functions, an easy-to-follow introduction to the subject of MB representations with 
hands-on examples and a comprehensive view on the analytical and numerical methods and tools
related to MB integrals. It should be a good starting point for learning and using the MB method 
in practical research.
%We plan to apply it to high energy collider calculations (LHC, ILC, FCC-ee, FCC-hh, GigaZ) as well low energy meson factory and intensity frontier experiments. 

%Though we focus on application of MB to collider physics it  can  serve also in other branches of physics. ...
%{\it Today in the sciences, books are usually either texts or retrospective reflections upon one aspect or another of the scientific life. The scientist who writes one is more likely to find his professional reputation impaired than enhanced}\footnote{
%Thomas~S.~Kuhn `The structure of scientific revolutions', The University of Chicago Press.}. 

\vspace{\baselineskip}
\begin{flushright}\noindent
Katowice, Cisownica, Budapest\hfill {\it Ievgen Dubovyk}\\
October 2022\hfill {\it Janusz Gluza}\\
\hfill {\it G\'abor Somogyi}
\end{flushright}

\extrachap{Acknowledgements}

We would like to thank for collaborations on the papers connected with the \mb{} subject and/or many interesting discussions and insights on the subject to: \newline
Stefano Actis,  Paolo Bolzoni,  Micha\l{} Czakon, Andrei Davydychev, Claude Duhr, Wojciech Flieger, Ayres Freitas, Marek Gluza, Krzysztof Grzanka, Tomasz Jeli\'nski, Krzysztof Kajda, Mikhail Kalmykov, David Kosower, Sven-Olaf Moch, Vladimir A. Smirnov, Bas Tausk, Johann Usovitsch, Szymon Zi\c{e}ba.

This work has been supported in part by the Polish National Science Center (NCN) under grants 2017/25/B/ST2/01987 and 2020/37/B/ST2/02371.

\tableofcontents

%%%%%%%%%%%%%%%%%%%%%%acronym.tex%%%%%%%%%%%%%%%%%%%%%%%%%%%%%%%%%%%%%%%%%
% sample list of acronyms
%
% Use this file as a template for your own input.
%
%%%%%%%%%%%%%%%%%%%%%%%% Springer %%%%%%%%%%%%%%%%%%%%%%%%%%

\extrachap{Acronyms}

\begin{description}[CABR]
\item[1BL]{First Barnes Lemma}
\item[2BL]{Second Barnes Lemma}
\item[BSM]{Beyond the Standard Model}
\item[CO]{Collinear Divergence}
\item[DEs]{Differential Equations}
\item[CW]{Cheng-Wu (Variables, Theorem)}
\item[IBP]{Integration-By-Parts}
\item[LT]{Lefschetz Thimbles}
\item[IR]{Infrared Divergence}
\item[LA]{Loop-by-Loop Approach}
\item[MIs]{Master Integrals}
\item[NLO]{Next-to-Leading Order in Perturbation Theory}
%\item[eMPL]{Elliptic Functions}
\item[FI]{Feynman Integrals}
\item[GA]{Global Approach}
\item[GRMT]{Generalized Ramanujan’s Master Theorem}
\item[HPL]{Harmonic Polylogarithms}
\item[MB]{Mellin-Barnes}
\item[MPL]{Multiple Polylogarithms}
\item[SD]{Sector Decomposition}
\item[SM]{Standard Model}
\item[UV]{Ultraviolet Divergence}
\item[QCD]{Quantum Chromodynamics}
\item[QED]{Quantum Electrodynamics}
\item[QFT]{Quantum Field Theory}
\item[] 
\newpage

\item[$\mathbb{N}$]{Set of Natural Numbers $(0, 1, 2, \ldots)$} 
\item[$\mathbb{N}^+$]{Set of Natural Numbers, Zero Excluded $(1, 2, \ldots)$} 
\item[$\mathbb{Z}$]{Set of Integers $(0, \pm 1, \pm 2, \ldots)$}
\item[$\mathbb{R}$]{Set of Real Numbers} 
\item[$\mathbb{C}$]{Set of Complex Numbers} 
\item[$\Re(z)$]{Real Part of the Complex Number $z$} 
\item[$\Im(z)$]{Imaginary Part of the Complex Number $z$} 
\item[$\gamma$]{Euler's Constant}
\item[$\Gamma$]{Gamma Function}
\item[$\psi$]{Digamma Function (Polygamma Function of Order Zero)}
\item[$\psi^{(n)}$]{$n^{\rm th}$ Derivative of the Digamma Function $\psi$} 
\item[$\mathrm{Li}_{n}(z)$]{Classical Polylogarithm of Weight $n$}
\item[$S_{n,p}(z)$]{Nielsen Generalized Polylogarithm}
\item[$(a)_n$]{Pochhammer's Symbol}
\item[$\zeta(z)$]{Euler-Riemann zeta function}
\end{description}

%
% Basic math?
% \mathbb{N}^+ = \{1,2,\ldots} is the set of positive integers

\mainmatter%%%%%%%%%%%%%%%%%%%%%%%%%%%%%%%%%%%%%%%%%%%%%%%%%%%%%%%\include{part}

% our stuff start
%%%%%%%%%%%%%%%%%%%%% chapter.tex %%%%%%%%%%%%%%%%%%%%%%%%%%%%%%%%%
\begin{bibunit}[elsarticle-num-ID] % define the bib-style for the unit: elsarticle-num.bst
%  text-1; this is the corresponding section
%\putbib[2loops] % the *.bib
%\end{bibunit}
% go-on
%--- from: bibunits.sty, adapts the font size of ``References'' to section
\let\stdthebibliography\thebibliography
\renewcommand{\thebibliography}{%
\let\section\subsection
\stdthebibliography}

\chapter[Precision in Perturbative Particle Physics 
]
{
Precision in Perturbative Particle Physics%{: Basic Concepts}
}
\label{chapter:intro}  

\abstract{Precision in particle physics may lead to the discovery of anomalies which in turn may prompt the formulation of new paradigms in theory. For instance, particles thought to be elementary could turn out to be composite. 
In this  {chapter}, we discuss classes of quantum corrections in particle physics which emerge in precision studies and {discuss basic properties of Feynman integrals, notably their singularities}. As a simple example, we sketch the  standard calculation of dimensionally regulated one-loop Feynman integrals, then we introduce the basic idea of using Mellin-Barnes representations to solve {Feynman integrals} in Euclidean and Minkowskian kinematic regimes.}

\section{Loops and Real Quantum Corrections in Precision Physics}

One of the most important ways of probing the laws of nature at the smallest distance 
scales is through the study of high-energy particle collisions. Indeed, the construction 
and verification of the SM has involved many such experiments of increasing energy, 
spanning decades. Today, a steady increase of the energy and the intensity of the colliding particles is indispensable in searching for unknown feeble interactions and exploring in a deeper way the structure of the smallest quantum objects, and ultimately the vacuum itself. For instance, particles currently thought of as elementary, i.e., having no internal structure, could turn out to be composite, being built out of yet undiscovered constituents. E.g. a particle of mass 125\,GeV (the Higgs boson) corresponds to the natural wavelength of $\sim 10^{-17}$\,m, so unraveling its internal structure, if any, requires that distance scales of at least one or two magnitudes smaller are explored. This will be possible at future colliders as well as through the indirect analysis of precision quantum corrections and effective theories that will probe scales of new physics at tens of TeV level. 
This means that future colliders and studies will probe physics and particle substructures at the level of $10^{-17}$~m--$10^{-19}$~m.
Especially powerful information can be obtained by merging analyses at lepton 
and hadron colliders  
{leading to the exploration of quantum structures at the unprecedented level of attometers ($ 10^{-18}$\,m) and below.}
\begin{svgraybox}In order to fully exploit the physics potential of future facilities (establish 
discovery limits, study the viable parameter space of new particle physics models 
and so on), the experimental data will need to be confronted with precision calculations.
\end{svgraybox}
Precision calculations of quantum effects in high-energy particle collisions are performed using the methods of perturbative quantum field theory, pioneered by Dyson and Feynman among others. The application of perturbation theory is made possible by the relative smallness of all SM coupling constants at the energies relevant for these collisions, i.e. above a few tens of GeVs. (We do not consider here a problem of gauge theories like QCD which become non-perturbative at low energies, where they must be treated by different methods.) The perturbative expansion of physical quantities in the small couplings then leads to mathematical objects called Feynman Integrals (FI) which must be computed explicitly to obtain numerical predictions. Feynman diagrams provide a very useful and convenient way of representing Feynman integrals and also provide a way to visualize particle collision processes. It is not our intent here to discuss how Feynman diagrams and the corresponding Feynman integrals arise in Quantum Field Theory (QFT) and we leave these questions to the whole spectrum of excellent textbooks on the subject.

For any given particle collision process, the lowest non-vanishing perturbative order, called the Born approximation, involves diagrams with a fixed number of external lines, these represent the initial and final particles in the collision. In most cases, the diagrams of the Born approximation are tree diagrams, i.e., they do not contain internal loops. Tree diagrams are characterized by the property that cutting any internal line in the diagram makes the diagram disconnected: not every vertex can be reached from any other by traversing along internal lines. In passing we note that a given diagram has $L$ loops if one can cut $L$ different internal lines such that the diagram remains connected, but cutting $L+1$ different internal lines makes the diagram disconnected.
Higher orders in perturbation theory then lead to Feynman diagrams with more and more lines. These can either be internal to the diagram, forming closed loops, or they can exit the diagram, such that the number of external legs is increased. In the former case, the closed lines represent internal virtual particles and hence we speak of {\it virtual or loop corrections}. In the latter, the extra external lines represent the emission of extra real particles and thus such contributions are called {\it real corrections}. Obviously at higher perturbative orders mixed corrections also appear, with extra real radiation from loop diagrams. Higher order contributions in perturbation theory are also called {\it radiative corrections}, because they involve the emission, or radiation of extra (virtual or real) particles. 

Before moving on, let us make a small remark about real corrections. While the existence of loop corrections is more or less straightforward, the reason why real radiation corrections must be taken into account is perhaps less intuitive. A quick and non-rigorous explanation is as follows. Consider e.g. the process of muon-antimuon pair production in electron-positron annihilation in QED, $e^+e^- \to \mu^+ \mu^-$. The Born approximation is described by the single Feynman diagram labeled Born in Fig.~\ref{fig:epem-mupmum}. The one-loop correction to this process involves the emission and reabsorption of a virtual photon in the basic diagram. An example is shown in Fig.~\ref{fig:epem-mupmum} (Virtual). On the other hand, the real correction involves the emission of an extra real photon, as in the rightmost picture of Fig.~\ref{fig:epem-mupmum} (Real). But why should the process with an extra photon, $e^+e^- \to \mu^+ \mu^- \gamma$ be considered as a correction to the process we are studying? The answer lies in the fact that the emitted photon can have an arbitrarily small energy (soft limit) and the direction of its momentum can be arbitrarily close to that of one of the fermions (collinear limit). But in these cases, it is impossible in practice -- and in the strict limits also in principle -- to distinguish the two final states: that with just the muon-antimuon pair and that with the muon-antimuon pair and the extra photon. But the rules of quantum mechanics tell us that when computing a physical observable, say a cross section, we must sum over all final states that we do not distinguish. Thus, we must include also real corrections. This can be done in a systematic way~\cite{Yennie:1961ad}. The above argument also shows that only massless particles (in the SM photons and gluons) can enter as extra radiation in real corrections.

\begin{figure}\centering

\begin{tikzpicture}[scale=0.8]  
\begin{feynman}

\vertex at (0,2) (i1) {\(e^-\)};
\vertex at (0,0) (i2) {\(e^+\)};
\vertex at (1,1) (a);
\vertex at (2,1) (b);
\vertex at (3,2) (f1) {\(\mu^-\)};
\vertex at (3,0) (f2) {\(\mu^+\)};

\diagram*{
(i1) -- [fermion] (a) -- [fermion] (i2),
(a) -- [photon, edge label=\(\gamma/Z\)] (b),
(f2) -- [fermion] (b) -- [fermion] (f1),
};

\node at (1.5,-0.5) {Born};

\end{feynman}
\end{tikzpicture} 
\hspace{2em}
\begin{tikzpicture}[scale=0.8] 
\begin{feynman}

\vertex at (0,2) (i1) {\(e^-\)};
\vertex at (0,0) (i2) {\(e^+\)};
\vertex at (1,1) (a);
\vertex at (2,1) (b);
\vertex at (3,2) (f1) {\(\mu^-\)};
\vertex at (3,0) (f2) {\(\mu^+\)};

\vertex at (2.5,1.5) (c);
\vertex at (2.5,0.5) (d);

\diagram*{
(i1) -- [fermion] (a) -- [fermion] (i2),
(a) -- [photon, edge label=\(\gamma/Z\)] (b),
(f2) -- [fermion] (b) -- [fermion] (f1),
(c) -- [photon, edge label=\(\gamma\)] (d),
};

\node at (1.5,-0.5) {Virtual};

\end{feynman}
\end{tikzpicture} 
\hspace{2em}
\begin{tikzpicture}[scale=0.8] 
\begin{feynman}

\vertex at (0,2) (i1) {\(e^-\)};
\vertex at (0,0) (i2) {\(e^+\)};
\vertex at (1,1) (a);
\vertex at (2,1) (b);
\vertex at (3,2) (f1) {\(\mu^-\)};
\vertex at (3,0) (f2) {\(\mu^+\)};

\vertex at (2.5,1.5) (c);
\vertex at (3.5,1.5) (d) {\(\gamma\)};

\diagram*{
(i1) -- [fermion] (a) -- [fermion] (i2),
(a) -- [photon, edge label=\(\gamma/Z\)] (b),
(f2) -- [fermion] (b) -- [fermion] (f1),
(c) -- [photon] (d),
};

\node at (1.5,-0.5) {Real};

\end{feynman}
\end{tikzpicture} 

\caption{\label{fig:epem-mupmum} Electron-positron annihilation into muon-antimuon 
pair. Left: Born approximation is given by a single diagram, center: sample diagram 
contributing to the one-loop correction, right: sample diagram contributing to 
the single real emission correction.}
\end{figure}
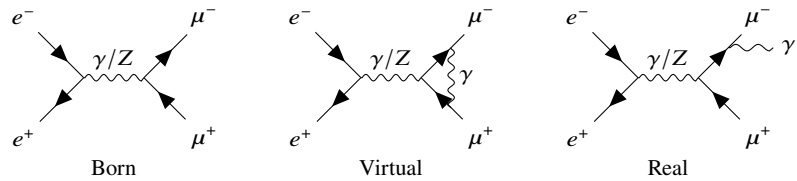

Finally we note that the integrals {corresponding to} virtual and real corrections exhibit various divergences in four spacetime dimensions. We will see in the next section that these are due either to integrating over very large momenta in loops (UV divergences) or to singularities that arise when a massless particle becomes soft (IR divergences) or their momentum becomes collinear to that of another particle (collinear divergences). The latter situation may occur both in loop and real radiation diagrams. So a naive calculation leads to infinite results! The way of treating the infinities associated with UV divergences has been known since the time of Feynman, Dyson and Tomonaga for QED, and thanks to 't Hooft and Veltman, it is also known in non-abelian gauge theories like the SM. It turns out that all such infinities are cancelled if one defines carefully the physical (i.e., measurable) couplings and masses of the theory. This procedure is called renormalization.\footnote{Besides the couplings and masses, the normalization of fields must also be considered, but we do not enter into these subtleties here.} On the other hand the infinities associated with soft and collinear particles are present even after renormalization and are in fact cancelled between virtual and real corrections. This is a somewhat more mathematical way of understanding why real radiation corrections must be included: they cancel the infrared and collinear singularities of the virtual corrections. Thus, the calculation of higher order radiative corrections to a real process involves UV renormalization and also the evaluation of both virtual and real corrections. Fig.~\ref{lepscheme} presents the a typical picture: the $2\to 2$ process of electron-positron annihilation which leads to fermion pair production, a process investigated in depth at LEP. In this figure we can find real emission and virtual loop contributions in the form of Feynman diagrams. In this book, we will consider the mathematical objects these diagrams give rise to and examine how they can be calculated using the \mblong{} method. 
%*{-1.5cm}

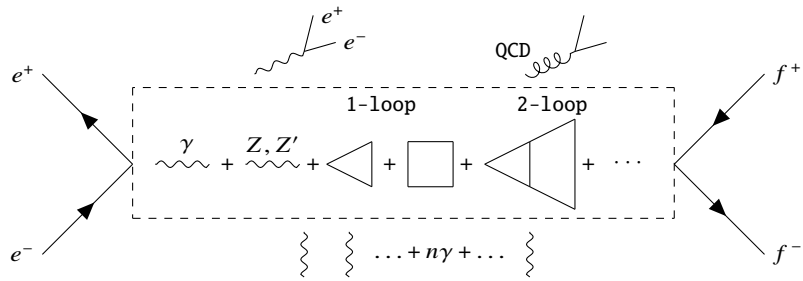
\begin{figure}[h!]
%sidecaption
\centering

\begin{tikzpicture}[scale=0.6]
\begin{feynman}
 
\vertex at (-1,4) (i1); 
\vertex at (1,3.7) (i2);
\vertex at (1,2) (i3);
\vertex at (1,0.8) (i4);
\vertex at (-1,0) (i5);

\vertex at (12,4) (i6);
\vertex at (13,3.7) (i7);
\vertex at (13,2) (i8);
\vertex at (13,0.8) (i9);
\vertex at (12,0) (i10);

\node[left] at (i1) {$e^+$};
\node[left] at (i5) {$e^-$};

\vertex at (3.7,4) (x1);
\vertex at (4.8,4.5) (x2);
\vertex at (5,5.3) (x3);
\vertex at (5.5,4.7) (x4);
\diagram*{
(x1) -- [photon] (x2) -- (x3),
(x2) -- (x4)
};
\node[right] at (x3) {\small $e^+$};
\node[right] at (x4) {\small $e^-$};

\vertex at (9.7,4) (y1);
\vertex at (10.8,4.5) (y2);
\vertex at (11,5.3) (y3);
\vertex at (11.5,4.7) (y4);
\diagram*{
(y1) -- [gluon] (y2) -- (y3),
(y2) -- (y4)
};
\node[left] at (10,4.5) {\small \texttt{QCD}};

\diagram*{
 (i5) -- [fermion] 
 (i3) --  [fermion] 
 (i1) --  [fermion], 
 (i2) --  [dashed] (i4) -- [dashed] (i9) -- [dashed] (i7)
 -- [dashed] (i2)
};

%right side
\vertex at (15,4) (f1); 
\vertex at (13,3.5) (f2);
\vertex at (13,2) (f3);
\vertex at (13,0.5) (f4);
\vertex at (15,0) (f5);

\node[right] at (f1) {$f^+$};
\node[right] at (f5) {$f^-$};

\diagram*{
 (f1) -- [fermion] 
 (f3) --  [fermion] 
 (f5)
};

\vertex at (1.5,2) (z1);
\vertex at (2.7,2) (z2);
\diagram*{
(z1) -- [photon] (z2)
};
\node at (2.2,2.4) {\small $\gamma$};
\node at (3.1,2) {$+$};

\vertex at (3.5,2) (z3);
\vertex at (4.7,2) (z4);
\diagram*{
(z3) -- [photon] (z4)
};
\node at (4.1,2.4) {\small{$Z,Z'$}};
\node at (5,2) {$+$};

\vertex at (5.3,2) (z5);
\vertex at (6.3,2.5) (z6);
\vertex at (6.3,1.5) (z7);
\diagram*{
(z5) -- (z6) -- (z7) -- (z5)
};
\node at (6.5,3.3) {\texttt{1-loop}};
\node at (6.7,2) {$+$};

%box
\vertex at (7.1,2.5) (v1);
\vertex at (8.1,2.5) (v2);
\vertex at (8.1,1.5) (v3);
\vertex at (7.1,1.5) (v4);
\diagram*{
(v1) -- (v2) -- (v3) -- (v4) -- (v1)
};
%\node at (5,2.8) {\small{1-loop}};
\node at (8.4,2) {$+$};

%2loop vertex
\vertex at (8.8,2) (w1);
\vertex at (9.8,2.5) (w2);
\vertex at (9.8,1.5) (w3);
\vertex at (10.8,3) (w4);
\vertex at (10.8,1) (w5);
\diagram*{
(w1) -- (w2) -- (w4) -- (w5) -- (w3) -- (w1), (w2) -- (w3)
};
\node at (10.3,3.3) {\texttt{2-loop}};
\node at (11.1,2) {$+$};
\node at (12,2) {$\cdots$};

%photons
\vertex at (4.8,0.5) (s1);
\vertex at (4.8,-0.5) (s2);
\vertex at (5.8,0.5) (s3);
\vertex at (5.8,-0.5) (s4);
\vertex at (7.8,0) (s5);
\vertex at (9.8,0.5) (s6);
\vertex at (9.8,-0.5) (s7);
\diagram*{
(s1) -- [photon] (s2),  (s3) -- [photon] (s4), (s6) -- [photon] (s7) };
\node at (s5) {$\ldots + n \gamma + \ldots$}; 

%\draw [yellow] (0,2.5) -- (4,2.5);
%\draw [cyan] (0, 2) -- (4,2);
\end{feynman}
\end{tikzpicture}

\caption{{A general scheme for $e^+e^- \to f \bar f$ process with typical real radiation effects (outside the dashed frame) and some chosen virtual effects (inside the dashed frame) at the lowest level (represented by $\gamma$, $Z$-boson propagators and $Z'$ extra gauge boson - which can be a part of some Beyond the Standard Model (\texttt{BSM}) extensions) and higher loop effects (represented by one-loop diagrams for vertex and box and two loops for the planar vertex).}}
\label{lepscheme}       
\end{figure}

%\section{Heart of the problems: singularities of integrals in QFT}
\section{Singularities of Amplitudes in QFT}
\label{sec:singQFT_general}

As we have described above, the basic building blocks of higher order perturbative corrections are loop and real emission diagrams. The extra emitted particles (virtual or real) carry momenta and when computing a physical observable, these momenta must be integrated over. Indeed, consider an example of the $L$-loop diagram such as the one in Fig.~\ref{fig:L-loop-diag}.
\begin{figure}\centering

\begin{tikzpicture}[scale=0.8]  
\begin{feynman}

\vertex at (0,0) (i1) {\(p_1\)};
\vertex at (0,2) (i2) {\(p_2\)};
\vertex at (1,0) (a1);
\vertex at (1,2) (a2);
\vertex at (3,0) (b1);
\vertex at (3,2) (b2);
\vertex at (5,0) (c1);
\vertex at (5,2) (c2);
\vertex at (6,0) (d1);
\vertex at (6,2) (d2);
\vertex at (7,0) (e1);
\vertex at (7,2) (e2);
\vertex at (8,0) (f1);
\vertex at (8,2) (f2);
\vertex at (10,0) (g1);
\vertex at (10,2) (g2);
\vertex at (11,0) (h1) {\(p_3\)};
\vertex at (11,2) (h2) {\(p_4\)};

\diagram*{
(i1) -- (a1) -- (b1) -- (c1) -- (d1) -- [ghost] (e1) -- (f1) -- (g1) -- (h1),
(i2) -- (a2) -- [momentum=\(k_1\)] (b2) -- [momentum=\(k_2\)] (c2) -- (d2) -- [ghost] (e2) -- (f2) -- [momentum=\(k_L\)] (g2) -- (h2),
(a2) -- (a1),
(b2) -- (b1),
(c2) -- (c1),
(f1) -- (g2),
(f2) -- (g1),
};

\end{feynman}
\end{tikzpicture}
\caption{\label{fig:L-loop-diag}  An example of the  $L$-loop diagram with four external momenta $p_1,\ldots,p_4$ and internal momenta $k_1,\ldots k_L$ which includes planar and non-planar sub-topologies.}
\end{figure}
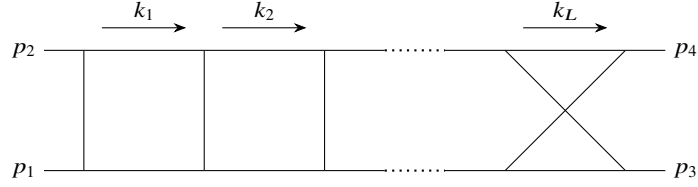
Since the diagram involves loops, momentum conservation does not fix the momenta of all internal lines. In fact, in an $L$-loop diagram, there are precisely $L$ independent internal momenta that are not fixed. The Feynman rules then instruct us to integrate over these loop momenta, thus the amplitude represented by the diagram in Fig.~\ref{fig:L-loop-diag} will have the following general structure,
\begin{equation}
G_L \propto \int d^4 k_1\, d^4 k_2\,\ldots\, d^4 k_L 
	\frac{T(k)}{(q_1^2 - m_1^2 + i \delta) (q_2^2 - m_2^2 + i \delta) \ldots 
	(q_N^2 - m_N^2 + i \delta)}\,.
\label{eq:L-loop-4-dim}
\end{equation}
Here $q_i$ and $m_i$ are the momenta and masses of the internal $N$ lines, while $T(k)$ denotes a generic numerator which itself may depend on the loop momenta $k_1\,,\ldots\, k_L$. Importantly, the range of integration is the complete four-dimensional momentum space for each $k_i$, i.e., we must integrate over each component of all loop momenta from $-\infty$ to $+\infty$.

\begin{tips}{The Role of the Feynman Prescription for Propagators}
Before moving on, let us clear up one point which may be cause for some confusion. In Eq.~(\ref{eq:L-loop-4-dim}), we were careful to indicate the Feynman prescription for propagators by including the $+i\delta$ terms. The textbooks tell us that $\delta$ is a small positive quantity and that we must set $\delta \to 0^+$ at the end of the computation. However, if we take $\delta \to 0^+$, then the integrand clearly develops poles for real momenta and masses. Namely, let us consider the on-shell particle with {\it four}-momentum $p = (p^0,  \vec{p})$ and a mass $m$. As 
\ba
p^2-m^2 = 0 \quad\Rightarrow\quad (p^0)^2 = \vec{p}^2+m^2
\ea
thus
\ba
p^0 &=& \pm E_p = \pm\sqrt{\vec{p}^2+m^2},
\ea  
and the solutions for the energy of the particle lie on the real axis in the complex $p^0$ plane, see Fig.~\ref{fig:tocomplex}, left. 
%
%\begin{tikzpicture}
%\draw [<->,thick] (0,2) -- (0,0) -- (4,0);
%\draw [dashed] (0,1.5) -- (3,0);
%\draw [dotted] (0,0) -- (2,2);
%\draw [help lines] (1,0) -- (1,1) -- (0,1);
%\end{tikzpicture}

\begin{figure}
\centering
\begin{tikzpicture}[scale=0.6]
\begin{feynman}
 
\vertex at (2,1) (i1);
\vertex at (4,1) (i2);
\vertex at (6,1) (i3);
\vertex at (8,1) (i4);
\vertex at (10,1) (i5);
\vertex at (6,3.5) (f1);
\vertex at (6,-1.5) (f2);

\node[right] at (f1) {$\rm{Im}\; p^0$};
\node at (10,0.5) {$\rm{Re}\; p^0$};
\node[above] at (i2) {$-E_k$};
\node[below] at (i4) {$E_k$};
\tikzset{cross/.style={cross out, draw=black, minimum size=2*(#1-\pgflinewidth), inner sep=0pt, outer sep=0pt},
%default radius will be 1pt. 
cross/.default={1pt}}
 
\path (4,1) pic[rotate = 0] {cross=4pt}; 
\path (8,1) pic[rotate = 0] {cross=4pt}; 
\draw [dashed] (4.5,1) arc[x radius=.5, y radius=.5, start angle=0, end angle=-180];

\draw [dashed] (8.5,1) arc[x radius=.5, y radius=.5, start angle=0, end angle=180];

\diagram*{
 (i1) -- [thin] 
 (i5) --  [thin] 
 (i3) --  [thin] 
 (f1) --  [thin]
 (i3) --  [thin] (f2), 
};

%%%%%
\vertex at (12,1) (j1);
\vertex at (14,1) (j2);
\vertex at (16,1) (j3);
\vertex at (18,1) (j4);
\vertex at (20,1) (j5);
\vertex at (16,3.5) (h1);
\vertex at (16,-1.5) (h2);

\node[right] at (h1) {$\rm{Im}\; p^0$};
\node at (20,0.5) {$\rm{Re}\; p^0$};
\node[above] at (14,1.5) {$-E_k+i \delta$};
\node[below] at (18,0.5) {$E_k-i \delta$};
\tikzset{cross/.style={cross out, draw=black, minimum size=2*(#1-\pgflinewidth), inner sep=0pt, outer sep=0pt},
%default radius will be 1pt. 
cross/.default={1pt}}
 
\path (14,1.5) pic[rotate = 0] {cross=4pt}; 
\path (18,0.5) pic[rotate = 0] {cross=4pt};

\diagram*{
 (j1) -- [thin] 
 (j5) --  [thin] 
 (j3) --  [thin] 
 (h1) --  [thin]
 (j3) --  [thin] (h2), 
};

\end{feynman}
\end{tikzpicture}
\caption{On-shell poles at the complex $p^0$ plane before (left) and after (right) infinitesimal shifts.}
\label{fig:tocomplex}
\end{figure}

As we will show below on a simple example, one cannot simply integrate over the singularities, thus it seems like we are faced with a problem. However, the Feynman-prescription tells us precisely how we must avoid these poles. In effect, it fixes the correct contour of integration to use in the complex $p_0$ plane. But this prescription can also be implemented in the following way. As the so-called on-shell relativistic relation $p^2-m^2=0$ enters the propagator, we can avoid the singularity when integrating on the real axis by moving the kinetic energy solution $E_p$ for $p^0$ to the complex plane by adding an infinitesimal parameter $\delta$, schematically\footnote{`One of the most remarkable discoveries in elementary particle physics
 has been that of the existence of the complex plane.'
- Julian Schwinger.}    
\begin{equation}
  \frac{1}{p^2-m^2} \longrightarrow \left[ {\rm singularity\; for}\; p^2 \to m^2 \right]
  \longrightarrow \frac{1}{p^2-m^2+ i \delta},
  \label{eq:tocomplex}
\end{equation} 
see Fig.~\ref{fig:tocomplex}, right. The solutions for $p^0$ are then indeed complex, since $p^2-m^2 + i \delta = (p^0)^2-E_p^2+i\delta=0$, which implies $(p^0)^2 = (E_p^2-i \delta)^2$, and so $p^0 = \pm\sqrt{E_p-i \delta} = \pm E_p \mp i \delta'$, where $E_p = +\sqrt{\vec{p}^2+m^2}$ and $\delta' = \delta/(2E_p)$. Note that since $E_p \ge m > 0$, $\delta'$ is positive.

Thus, the singularities in the $\delta \to 0^+$ limit are only apparent and the corresponding poles are never integrated over. Indeed, taking this limit in final results for Feynman or phase space integrals never leads to singularities in $\delta$. In this respect, they are quite different from the UV, IR and collinear (CO) singularities. We will see shortly that UV, IR and CO singularities can be regularized in a way where removing the regularization corresponds to taking the regulator $\epsilon$ to zero. Then these singularities will show up as poles in $\epsilon$.
\end{tips}

Unfortunately the integral in Eq.~(\ref{eq:L-loop-4-dim}) above is often ill-defined! From mathematical point of view, the heart of the problem which we encounter in calculating the integrals that appear in radiative corrections stems from the fact that the integrands have poles inside or on the boarders of the integration region. This means that there are kinematic configurations where the denominators of the considered amplitudes and \texttt{FI} tend to zero. We will see shortly how one may make sense of these divergent integrals, however let us first show an amusing example of how naive integration can fail spectacularly when there are poles around.

\begin{tips}{Example of a Simple Singular Integral}
As the first, purely mathematical example, let us consider the definite integral 
\begin{equation}
I = \int_{-1}^1 \frac{1}{x^2}dx\,.
\label{int-simplest}
\end{equation}
Elementary integration gives $I=(-\frac{1}{x})|_{-1}^1=-2$, though, the integrand in Eq.~(\ref{int-simplest}) is never negative. According to Riemann's definition of definite integrals, the result should be positive. The problem is of course that the region of integration includes $x=0$ where the integrand is infinite and the integrand exhibits {\it a singularity}. Such an integral is called improper. As warned in~\cite{nahin2020} {``You have to be ever alert for singularities when you are doing integrals; always, stay away from singularities. Singularities are the black holes of integrals; don't `fall into' one (don't integrate across a singularity).''} 
\end{tips}

We face a similar problem as in our example above for scattering amplitudes. Namely, let us consider the simple one-loop diagrams from the left of Fig.~\ref{fig:1-loop-UV-IR}. 
\begin{figure}\centering

\begin{tikzpicture}[scale=0.6]
\begin{feynman}
 
%tadpole 

\vertex at (0.5,1) (i1);
\vertex at (2,1) (a);
\vertex at (4,1) (b);
\vertex at (5.5,1) (f1);

\diagram*{
(i1) -- [thick,momentum=\(p\),/tikzfeynman/momentum/arrow distance=2mm] (a) -- [half left, momentum=\(k\),/tikzfeynman/momentum/arrow shorten=0.25,/tikzfeynman/momentum/arrow distance=2mm] (b) -- [thick,half left, momentum=\(k-p\),/tikzfeynman/momentum/arrow shorten=0.25,/tikzfeynman/momentum/arrow distance=2mm] (a),
(b) -- [thick] (f1),
};

\vertex at (8.5,1) (i1);
\vertex at (10,1) (a);
\vertex at (11,2) (b);
\vertex at (11,0) (c);
\vertex at (12.5,2) (f1);
\vertex at (12.5,0) (f2);

\diagram*{
(i1) -- [thick,momentum=\(p\),/tikzfeynman/momentum/arrow distance=2mm] (a) -- [momentum=\(k\),/tikzfeynman/momentum/arrow distance=2mm] (b) -- [momentum=\(p_1\),/tikzfeynman/momentum/arrow distance=2mm] (f1),
(f2) -- [thick] (c) -- [thick,momentum=\(k-p\),/tikzfeynman/momentum/arrow distance=2mm] (a),
(b) -- [momentum=\(k-p_1\),/tikzfeynman/momentum/arrow distance=2mm] (c)
};

\end{feynman}
\end{tikzpicture}

\caption{\label{fig:1-loop-UV-IR} One-loop diagrams exhibiting UV (left) and IR (right) singularities. The thick line denotes a particle of non-zero mass $m$ while the thin line represents a massless particle.}
\end{figure}
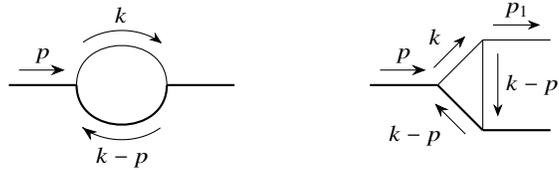
This leads to the integral ({after the earlier note on the Feynman prescription of the propagators, if not necessary,} we {will} not make the $+i\delta$ terms explicit in the propagators, notation using capital roman letters will be explained in section~\ref{sec:dimreg} where general integrals deviated from 4-dimensions will be discussed)
\begin{equation}
A_0 \propto \int \frac{d^4 k}{k^2}\,
\end{equation}
and
\begin{equation}
B_0 \propto \int \frac{d^4 k}{k^2 [(k-p)^2-m^2]}\,.
\end{equation}
Remember that the momenta of virtual particles in loops do not have to satisfy the mass-shell conditions (put differently, the loop integration over the four-momentum $k$ is fully unconstrained), and so $[(k-p)^2-m^2] = k^2 - 2k\cdot p$. (Notice that $p$ is an external momentum, so the mass-shell condition is satisfied for $p$ and thus $p^2 = m^2$.) Then for very large momenta $k \to \infty$ we find that our integrals behaves in the following way
\begin{equation}
A_0 \propto \int \frac{d^4 k}{k^2  } \xrightarrow{k \to \infty}
	\int \frac{d^4k}{k^2}
	\xrightarrow{\mathrm{Wick}}
	\int_0^\infty \frac{|K|^3 d|K| d\Omega}{|K|^2} \propto \int_0^\infty {|K| d|K|}{}\,,
\end{equation}
\begin{equation}
B_0 \propto \int \frac{d^4 k}{k^2 (k^2 - 2k\cdot p)} \xrightarrow{k \to \infty}
	\int \frac{d^4k}{(k^2)^2}
	\xrightarrow{\mathrm{Wick}}
	\int_0^\infty \frac{|K|^3 d|K| d\Omega}{|K|^4} \propto \int_0^\infty \frac{d|K|}{|K|}\,.
\end{equation}
In the second step, we have performed a Wick rotation, $k^0 \to i K^0$, $k^i \to K^i$, ($i=1,2,3$) in order to pass to {Euclidean} space in the loop momentum, so $|K|$ is the length of the Wick-rotated, {Euclidean} four-momentum $K$. Then, we passed to four-dimensional spherical coordinates and $d\Omega$ is the angular measure in four dimensions. In the last step, we have dropped the trivial angular integral that does not influence the behavior of the result at $k\to \infty$. Clearly $A_0$ and $B_0$ are UV divergent,  quadratically and    logarithmically singular at the upper limit of integration, respectively.  Similar asymptotic dependencies of integrals on the loop momentum behavior will be observed for analogous d-dimensional integrals in section~\ref{sec:dimreg}.

Turning to the one-loop diagram on the right of Fig.~\ref{fig:1-loop-UV-IR}, we find that it corresponds to the integral
\begin{equation}
C_0 = \int \frac{d^4 k}{k^2 (k-p_1)^2[(k-p)^2 - m^2]}\,.
\end{equation}
An analysis similar to the one above shows that this integral is not UV divergent. However, considering now the limit of very small momenta $k \to 0$, we find that
\begin{equation}
\begin{split}
C_0 &= \int \frac{d^4 k}{k^2 (k^2 - 2k\cdot p_1)(k^2 - 2k\cdot p)} \xrightarrow{k \to 0} 
	\int \frac{d^4 k}{k^2 (2k\cdot p_1)(2k\cdot p)}
\\&	\xrightarrow{\mathrm{Wick}}
	\int \frac{|K|^3 d|K| d\Omega}{|K^2|(2K \cdot p_1)(2K \cdot p)}
	\propto \int_0^\infty \frac{d|K|}{|K|}\,.
\end{split}
\end{equation}
As previously, we performed Wick-rotation and passed to four dimensional spherical coordinates. In the last step, we have again dropped an angular integral that does not play a role in analyzing the small $k$ behavior of the integral. Now we see that our integral is logarithmically singular at the lower limit of integration and so the integral is IR divergent. However, the integrand has yet another type of singularity: a collinear singularity when the internal momentum becomes collinear to the momentum of the neighboring massless leg. Indeed, let us parametrize the loop momentum $k$ as $k=x p_1 + k_\perp$, where $k_\perp$ points in the transverse direction with respect to $p_1$ so that $k_\perp\cdot p_1=0$. The collinear limit is reached as $k_\perp \to 0$. Then we find (recall $p_1^2=k_\perp\cdot p_1=0$)
\begin{equation}
\begin{split}
C_0 &= \int \frac{d^4 k}{k^2 (k^2 - 2k\cdot p_1)(k^2 - 2k\cdot p)}
	= \int \frac{d^4 k_\perp}
	{k_\perp^2 (k_\perp^2)[k_\perp^2 - 2(x p_1 + k_\perp)\cdot p]}
\\& \xrightarrow{k_\perp \to 0} 
	\int \frac{d^4 k_\perp}{k_\perp^2 (2k_\perp^2)(-2x p_1\cdot p)}
	\propto
	\int \frac{d^4k_\perp}{k_\perp^4} 
	\xrightarrow{\mathrm{Wick}}
	\int_0^\infty \frac{d|K_\perp|}{|K_\perp|}\,.
\end{split}
\end{equation}
In the last step we have performed the usual Wick rotation. As before, we find a logarithmic singularity for small $k_\perp$, i.e., in the collinear limit. We will see shortly how one can make sense out of the divergent integrals that we have encountered. 

We note in passing that special treatment is required also for so-called {\it{threshold}} effects connected with conspired relations among masses of virtual states~\cite{Landau:1959fi,Nakanishi:1961}. We will use thresholds for some specific constructions of \mb{} representations and their numerical evaluation in Sect.~\ref{sec:5MBrepr} and Sect.~\ref{sec:2num}. 

Next, we note that real emission integrals involving massless particles can also be ill-defined due to the presence of poles in the integrand. To see this, let us consider the emission of a real massless particle in some particle scattering process, see Fig.~\ref{fig:real-emission}. We assume that the emitted particle, denoted by a gluon in the figure, is massless, i.e., $k^2=0$.
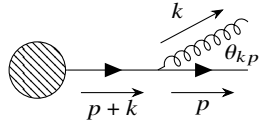
\begin{figure}\centering

\begin{tikzpicture}[scale=0.8]
\begin{feynman}

\vertex at (0,0) (m);
\vertex at (0.5,0) (ap);
\vertex at (2,0) (a);
\vertex at (3.5,0) (f1);
\vertex at (3.5,0.8) (f2);
\vertex at (3.4,0.25) (t) {\(\theta_{kp}\)};

\diagram*{
m [blob] -- (ap) -- [fermion, momentum'=\(p+k\)] (a) -- [fermion, momentum'=\(p\)] (f1),
(f2) -- [gluon, reversed momentum'=\(k\)] (a),
(t),
};

\end{feynman}
\end{tikzpicture}

\caption{\label{fig:real-emission} Schematic amplitude for single real emission.}
\end{figure}
The solid line emerging from the rest of the diagram (denoted by the blob in the figure) represents some particle's propagator with momentum $p+k$ and mass $m$, thus the complete amplitude will contain the scalar part of this propagator (i.e., its denominator without the Lorentz structure in the propagator's numerator),
\begin{equation}
\mathcal{M} \propto \frac{1}{(p+k)^2-m^2}=\frac{1}{2 p\cdot k} = \frac{1}{E_k E_p} \frac{1}{1-\beta_p \cos{\theta_{kp}}},
\label{eq:propag}
\end{equation}
where $p = (E_p, \vec{p})$ and $k = (E_k, \vec{k})$ and we set $\beta_p = \sqrt{1-\frac{m^2}{E_p^2}}$, see Problem~\ref{prob:propag}. Notice that $E_p \ge m > 0$, but since $k$ is massless, $E_k$ can go to zero. From the above we see that the amplitude has an IR divergence when $E_k \to 0$ and a collinear (CO) divergence when $\theta \to 0$ and in addition $\beta_p = 0$, i.e., the internal particle is also massless:
\begin{enumerate}[(i)]
\item\parbox{\widthof{$\theta \to 0 \quad\mbox{and}\quad \beta_p = 0\quad$}}{$E_k \to 0$} IR,
\item\parbox{\widthof{$\theta \to 0 \quad\mbox{and}\quad \beta_p = 0\quad$}}{$\theta \to 0 \quad\mbox{and}\quad \beta_p = 0$} CO.
\end{enumerate}
Thus, {\it{infrared singularities are associated with vanishing particle energies, while collinear singularities arise when the daughter particles are emitted in the same direction.}} We note in passing that if the internal particle is not massless, the $\theta \to 0$ collinear limit is not singular, however it produces what is called a ``mass singularity''. I.e., the integration over the angle of emission is finite, but the result becomes singular if the internal mass is taken to zero.

We should warn the reader that the proper analysis of real emission singularities is more involved than the above simple calculation suggests. First of all, when we compute a physical quantity, what enters the real emission integrals is actually the squared amplitude, not the amplitude itself. Second, the integration measure also plays a role in the analysis of singularities. Indeed, the quantum mechanical rules for obtaining a cross section from a squared amplitude state that we must integrate over the momenta of final-state particles with the following phase space measure,
\begin{equation}
d\phi_n(p_1,\ldots,p_n;Q) = 
	\prod_{k=1}^{n} \frac{d^4p_k}{(2\pi)^3} \delta_{+}(p_k^2-m_k^2)(2\pi)^4
	\delta^4(p_1 + \ldots + p_n - Q)\,.
\end{equation}
Here $Q$ is the total incoming momentum, while the the $\delta_+$ function is defined as
\begin{equation}
\delta_{+}(p_k^2-m_k^2) = \delta(p_k^2-m_k^2) \Theta(p_k^0)\,.
\end{equation}
A trivial way in which the measure enters the analysis of singularities is as follows. Since overall momentum is conserved, phase space is compact for any finite incoming momentum $Q$. This implies that we cannot have $p_k \to \infty$. This is very clear physically: the energy of any outgoing particle is clearly bounded from above. Hence, {\it{real emission integrals do not have UV singularities}}. A more subtle way in which the measure enters the analysis can be seen in the soft limit. Let us suppose that momentum $k$ is massless. Then the corresponding one-particle measure becomes
\begin{equation}
\frac{d^4 k}{(2\pi)^3}\delta_{+}(k^2) = \frac{d^3 \vec{k}}{(2\pi)^3 2E_k}
	= \frac{|\vec{k}|^2 d|\vec{k}|\,d(\cos\theta)\, d\phi}{(2\pi)^3 2E_k} 
	= \frac{E_k dE_k\,d(\cos\theta)\, d\phi}{2(2\pi)^3}\,.
\end{equation}
In the first step we performed the integration over the energy component of $k$ using the $\delta_+$ function, then we passed to three-dimensional spherical coordinates for the integration over the space-like components. Finally, we used $|k| = E_k$, which is a consequence of $k^2=0$ and the fact that the energy is non-negative. Now, considering the soft limit, i.e., $E_k \to 0$, we see that the integration measure carries one factor of $E_k$, so the integrand must diverge at least like $E_k^{-2}$ in order to give a non-integrable singularity. This is precisely the case for the square of the real emission amplitude (recall the amplitude itself behaves as $E_k^{-1}$ in the soft limit) and hence the phase space integration over the soft momentum produces an ill-defined integral. E.g. if the gluon in Fig.~\ref{fig:real-emission} can be emitted from more than one leg, say those with momenta $p_1$ and $p_2$, then the matrix element will involve a sum of Feynman-diagrams and hence a sum of terms like those in Eq.~(\ref{eq:propag}),
\begin{equation}
\mathcal{M} \propto \ldots + \frac{1}{2p_1\cdot k} + \frac{1}{2p_2\cdot k} + \ldots\,.
\end{equation}
Then the squared matrix element, $|\mathcal{M}|^2$ will contain terms such as
\begin{equation}
|\mathcal{M}|^2 \propto \ldots + \frac{1}{(2p_1\cdot k)(2p_2\cdot k)} + \ldots\,,
\end{equation}
which lead to integrals of the form
\begin{equation}
\begin{split}
&\int d\phi_n(p_1,\ldots,p_n;Q) \frac{1}{(2p_1\cdot k)(2p_2\cdot k)} = 
\\ & \hspace*{-.3cm} \quad =
	\int \frac{E_k dE_k\,d(\cos\theta)\, d\phi }{2(2\pi)^3} \frac{1}{E_k^2 E_{1} E_{2}} 
	\frac{1}{(1-\beta_1 \cos\theta_{k1})(1-\beta_1 \cos\theta_{k2})} 
	\times\ldots
\\ & \hspace*{-.3cm} \quad =
	\frac{1}{2(2\pi)^3 }\int_0^{Q} \frac{dE_k}{E_k} \frac{1}{E_{1} E_{2}} 
	\int d(\cos\theta)\, d\phi \frac{1}{(1-\beta_1 \cos\theta_{k1})(1-\beta_1 \cos\theta_{k2})} 
	\times\ldots\,,
\end{split}
\end{equation}
where the $\ldots$ denote the rest of the measure and integrand. The integration over $E_k$ is indeed singular on the lower limit signaling the IR divergence. We see also that the evaluation of phase space integrals can involve integrations over angular variables, a topic we will return to in section~\ref{sec:4MBrepr}.

\begin{svgraybox}We have described the basic structures and integrals in the computation of higher-order perturbative corrections in particle physics: loop integrals and phase space integrals. We have seen that both are ill-defined in four spacetime dimensions due to UV, IR, and CO divergences. While UV divergences can be removed by passing to renormalized quantities, IR and CO divergences cancel only after summing virtual and real corrections to some given process. 
\end{svgraybox}
Note that there are technical requirements on the physical observable we want to compute in order for this cancellation to work and this is the content of the Kinoshita--Lee--Nauenberg theorem~\cite{Kinoshita:1962ur,Lee:1964is}. Quantities for which the cancellation takes place are called IR and collinear-safe observables. Observables that are not IR and collinear-safe may still be defined and measured, however they cannot be computed order-by-order in perturbation theory. For example, the definition of jets using so-called `seeded' cone algorithms is not IR and collinear-safe beyond a certain perturbative order~\cite{Salam:2007xv}. Nevertheless, they have been extensively used in past hadron collider experiments.

However, before any specific calculation can be performed, we must first give a precise meaning to the divergent integrals we encountered above. This procedure is called regularization and will be the topic of the next section.

\section{Dimensional Regularization and Evaluation of Feynman Integrals 
\label{sec:dimreg}
}

As we have just seen, in QFT we can disentangle three types of divergences depending on the masses of the parent and emitted particles in a given interaction vertex or the whole scattering amplitude: UV, IR and CO. The presence of these singularities makes the naively defined amplitudes and cross sections ill-defined. Thus, our first task is to give a precise meaning to the various objects and integrals we have encountered so far.

There are several ways to modify our integrals such that the results become well-defined. This process is called regularization. It is important that in physical results we should be able to remove the regularization and obtain unambiguous predictions. One way of regularizing our integrals is through dimensional regularization. This method has proven to be very powerful in higher order perturbative calculations so much so that it is ubiquitous in the modern literature. The idea is to formally modify the dimensionality of spacetime from $d=4$ to $d=4-2 \eps$, where $\eps$ is an infinitesimal spacetime regulator\footnote{The factor of 2 appears because of practical reasons: many results take a simpler form with this choice, see e.g. Eq.~(\ref{eq:T1l1m}).}~\cite{Smirnov:2004,Weinzierl:2006qs,Blondel:2018mad} though recently methods of direct integration in $d=4$ spacetime are developing~\cite{Gnendiger:2017pys,Weinzierl:2022eaz}. The \mb{} method has been used mostly in conjunction with the former, dimensional regularization approach and we will employ dimensional regularization throughout. In spite of that, it is not our intention here to give a comprehensive introduction to dimensional regularization and we refer the reader to the many excellent QFT textbooks that discuss this topic in detail. However, the basic idea can be demonstrated on very simple integrals, which we discuss next.

\begin{tips}{Regularization: an Elementary Example}
Consider the very simple integral
\begin{equation}
I(\alpha) = \int_0^1 \frac{1}{x^{\alpha}} dx\,.
\end{equation}
Obviously this integral is convergent for $\alpha < 1$ and can be evaluated in an elementary way, $I(\alpha) = \left.\frac{x^{1-\alpha}}{1-\alpha}\right|_0^1 = \frac{1}{1-\alpha}$. However for $\alpha=1$, the integral is logarithmically singular on the lower limit of integration and indeed the $\alpha \to 1$ limit of the result is ill-defined. In order to give meaning to the integral even for $\alpha=1$, we can add a regulator to the measure in the following way. We set $dx \to x^{-\eps} dx$ and define the regularized integral
\begin{equation}
I_{\mathrm{reg}}(\alpha;\eps) = \int_0^1 \frac{1}{x^{\alpha}} x^{-\eps} dx\,.
\end{equation}
The original integral is recovered for $\eps \to 0$. The integration is of course still elementary and we find $I_{\mathrm{reg}}(\alpha;\eps) = \left.\frac{x^{1-\alpha-\eps}}{1-\alpha-\eps}\right|_0^1 = \frac{1}{1-\alpha-\eps}$. Now, notice the following. On the one hand, when $\alpha <1$, i.e., if the original integral is convergent, we may simply take the $\eps \to 0$ limit in the regularized result and recover the original one: $I_{\mathrm{reg}}(\alpha;\eps=0)=I(\alpha)$. On the other hand, for $\alpha \to 1$, we obtain $I_{\mathrm{reg}}(\alpha=1;\eps) = \frac{1}{-\eps}$. Hence, the regularized integral is well-defined for $\alpha=1$ and moreover, whenever the original integral is convergent, we recover the correct result after removing the regularization.

Perhaps it is worth emphasizing that regularization did not magically make our divergent integral finite, since we recover the singularity in the form of a pole in $\eps$ if we try to remove the regulator in $I_{\mathrm{reg}}(\alpha=1;\eps)$. However, we now have a well-defined expression that we can associate to the divergent integral $I(\alpha=1)$. The point is that in a real computation in QFT perturbation theory, although divergent integrals show up at intermediate stages of the calculation, the final physical results are of course finite. Thus, regularization is a tool used to make all intermediate results well-defined in a consistent way. The cancellation of $\eps$ poles in physical results is in fact a strong check on the correctness of the calculations.
\end{tips}

Let us now turn to properly defining the dimensionally regularized Feynman integrals whose evaluation via the \mb{} method will be the focus of this book. The $L$-loop Feynman integral in $d=4-2\eps$ dimensions with $N$ internal lines of masses $m_i$ carrying momenta $q_i$ and $E$ external legs with momenta $p_e$ can be written in the following way:
\begin{equation}
\boxed{
G_L[T(k)] =
	\frac{e^{\eps\gamma L} (2\pi\mu)^{(4-d)L}}{(i\pi^{d/2})^L} 
	\int \frac{d^dk_1 \ldots d^dk_L\,T(k)}
     {(q_1^2-m_1^2)^{\nu_1} \ldots 
      (q_N^2-m_N^2)^{\nu_N}  }\,. 
}
\label{eq-bha}
\end{equation}
%$$
%{\small{\boxed{
%G_L[T(k)] =
%	\frac{e^{\eps\gamma L} (2\pi\mu)^{(4-d)L}}{(i\pi^{d/2})^L} 
%	\int \frac{d^dk_1 \ldots d^dk_L\,T(k)}
%     {(q_1^2-m_1^2)^{\nu_1} \ldots (q_i^2-m_i^2)^{\nu_j} \ldots
%       (q_N^2-m_N^2)^{\nu_N}  }\,. 
%}}}
%$$
Notice in particular the $d$-dimensional measures $d^d k_i$: the integrals are now defined with all $L$ loop momenta living in $d$-dimensional momentum space.\footnote{External momenta can be defined to be either 4 or $d$-dimensional and this choice leads to different dimensional regularization schemes such as conventional dimensional regularization (CDR), 'tHooft-Veltman (HV) scheme, dimensional reduction (DR), etc. See e.g.~\cite{Weinzierl:2022eaz} for an overview.} The integration region is all of this $L\times d$-dimensional space. 
The numerator $T(k)$ is in general a tensor in the integration variables:
\begin{equation}\label{eq-T}
T(k) = 1\,, k_l^{\mu}\,, k_l^{\mu}k_n^{\nu}\,, \ldots\,.
\end{equation}
Integrals with $T(k)=1$ are referred to as basic scalar integrals. Internal four-momenta $k_i$ are connected with virtual particles propagating inside closed loops and each closed loop $i$ corresponds to one $k_i$, as shown schematically on Fig.~\ref{fig:L-loop-diag}. Each virtual particle of mass $m_i$ carries the four-momentum $q_i$, which are in general functions of both internal $k_i$ and external four-momenta $p_i$, see e.g. the right hand diagram in Fig.~\ref{fig:1-loop-UV-IR}. 
The powers $\nu_i$ in denominator are integer numbers which are not necessarily equal to one. For instance, in the so-called Integration-By-Parts (IBPs) procedure~\cite{Chetyrkin:1981qh} differentiation of the \texttt{FI} with respect to internal momenta leads to relations among integrals with different powers of scalar propagators, leading to a reduction of the \texttt{FI} to a small number of so-called master integrals (\texttt{MIs}). Hence, it is generally useful to consider arbitrary propagator powers as we have written. The overall pre-factor is chosen in a way to simplify the final expressions as we will see shortly. For example the reason for factoring $e^{\eps\gamma L}$ in Eq.~(\ref{eq-bha}) becomes clear with a discussion of the expansion of gamma functions in $\eps$, see Eq.~(\ref{eq:epsEuler}). Second, the parameter $\mu$ has dimensions of mass so including the factor of $(2\pi\mu)^{(4-d)L}$ restores the overall mass dimension of the integral to its original 4 dimensional value from the modified $d$-dimensional one. Finally, the factor ${(i\pi^{d/2})^L}$ in the integration measure is conventional, it could well be taken as  ${(2\pi)^{d L}}$, as discussed in~\cite{Weinzierl:2022eaz}.
%{\bf STILL TO FIX} 
The integrals in $G_L[T(k)]$ of Eq.~(\ref{eq-bha}) over the internal momenta $k_i$ in $d$-dimensional space can be parametrized in  many ways and we will examine this in detail in section~\ref{sec:FIrepr}. There we will show an example how it can be evaluated with Feynman parametrization. In chapter~\ref{chapter-MBrepr} we will give examples with corresponding evaluations by \mb{} approach. 

However, let us first present a simple example of how such loop integrals can be computed directly. 
The most basic integrals in loop calculations arise from 1-loop diagrams. Fig.~\ref{fig:1loopmasters} shows some basic examples: a so-called tadpole\footnote{The name tadpole is due to Sidney Coleman. 
They constitute vacuum effects in one space-time point and do have physical meaning if the corresponding external field has a non-zero vacuum expectation value, such is the case e.g. for the Higgs field. Tadpole diagrams have been used for the first time in~\cite{Coleman:1963pj} to explain symmetry breaking in the strong interaction. They are also considered in the context of the free energy density or pressure of the QCD plasma, finite temperature effective potentials and phase transitions, see for instance~\cite{Boyd:1993tz,Luthe:2016spi}.}, bubble and triangle.

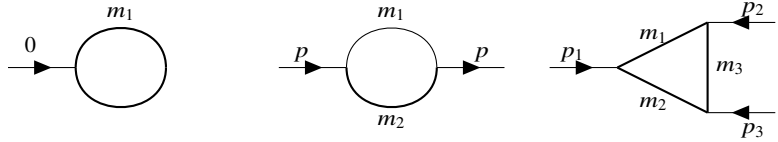
\begin{figure}[h!]
%\sidecaption
\centering

\begin{tikzpicture}[scale=0.6]
\begin{feynman}
 
%tadpole 
 
\vertex at (0.5,1) (i1);
\vertex at (2,1) (a); 
\vertex at (4,1) (b);
\vertex at (5.5,1) (f1);

\node at (1,1.5) {$0$};
\node at (3,2.2) {$m_1$};

\node at (7,1.3) {$p$};
\node at (9,2.2) {$m_1$};
\node at (9,-0.2) {$m_2$};
\node at (11,1.3) {$p$};

\diagram*{
(i1) -- [fermion] 
(a) --   [thick, half left] 
(b) -- [thick, half left] (a)
%(b) -- [thick] (f1),
};

%SE
\vertex at (6.5,1) (i1);
\vertex at (8,1) (a);
1\vertex at (10,1) (b);
\vertex at (11.5,1) (f1);

\diagram*{
(i1) -- [fermion] 
(a) -- [half left] 
(b) -- [thick,half left] (a),
(b) -- [fermion] (f1),
};

%vertex
\vertex at (12.5,1) (i1);
\vertex at (14,1) (a);
\vertex at (16,2) (b);
\vertex at (16,0) (c);
\vertex at (17.5,2) (f1);
\vertex at (17.5,0) (f2);

\node at (13,1.3) {$p_1$};
\node at (14.8,1.7) {$m_1$};
\node at (14.8,0.2) {$m_2$};
\node at (16.5,1) {$m_3$};
\node at (17,2.3) {$p_2$};
\node at (17,-0.4) {$p_3$};

\diagram*{
(i1) -- [fermion] (a) -- [thick] (b), (f1) -- [fermion] (b),
(f2) -- [fermion] (c) -- [thick] (a),
(b) -- [thick] (c)
};

\end{feynman}
\end{tikzpicture}

\caption{Some 1-loop Feynman diagrams with external momenta $p_i$ and internal masses $m_i$ which can be written using the basic integral Eq.~(\ref{eq:ddimgeneral}).
}
\label{fig:1loopmasters}       
\end{figure}

\noindent
Omitting here the factor $e^{\gamma \eps}$ given in Eq.~(\ref{eq-bha}) for a moment, let us define
\begin{equation}
I_{a}^b(M^2) \equiv 
	\frac{(2\pi\mu)^{4-d}}{i\pi^2}%\frac{\mu^{4-n}}{i\pi^2} 
	\int d^d Q \frac{(Q^2)^b}{\left(Q^2-M^2\right)^{a}}.
\label{eq:ddimgeneral}
\end{equation}
For $a=1$ and $b=0$ we get the simplest integral, the tadpole
\begin{equation}
A_0(m^2) = \mathrm{T1l1m} \equiv I_1^0\left(m^2\right)\,. 
\label{eq:specA0}
\end{equation}
The notation by symbols $A\,, B\,, C\,,\ldots$ for tadpoles, bubbles, triangles, \ldots is due to Passarino and Veltman~\cite{Passarino:1978jh}. We also call these functions one-point (1-PF), two-point (2-PF), three-point (3-PF) functions and so on, indicating the number of legs which represent the scattered objects with external four-momenta $p_1\,, p_2\,, p_3\,,\ldots$. Returning to $I_{a}^b$, integrals with indices $a$ and $b$ greater than one appear during \texttt{IBP} reduction~\cite{Chetyrkin:1981qh} and also in derivatives of more complicated multi-point diagrams with respect to external kinematics, such as the bubble and triangle shown in Fig.~\ref{fig:1loopmasters}. They also appear as building blocks of higher-point integrals, e.g. for $B_0$ and $C_0$ we have
\ba
B_0(p^2;m_1^2,m_2^2) &=& \int_0^1 dx\; I_2^0\left(M^2\right), \label{eq:specB0}
\\
C_0(p_2^2,p_3^2,p_1^2;m_1^2,m_2^2,0) &=&2 \int_0^1 dx \int_0^x dy \; I_3^0\left(M^2\right).
\label{eq:specC0}
\ea 
That the integrals $B_0$ and $C_0$ can be written as function of Eq.~(\ref{eq:ddimgeneral}) is the subject of Problem~\ref{prob:Ftrick}.

Having seen that the integral in Eq.~(\ref{eq:ddimgeneral}) is the basic 1-loop object, let us compute it. The calculation proceeds in two steps. First, we change the integral to the Euclidean space by performing the Wick-rotation $Q\to iq$. This is merely a change of integration variables and the integral becomes
\ba
I_{a}^b(M^2) &=& 
(-1)^{b-a}\frac{(2\pi\mu)^{4-d}}{i\pi^2}%\frac{\mu^{4-n}}{i\pi^2} 
\int d^d q \frac{(q^2)^b}{\left(q^2+M^2\right)^{a}}\,.
\label{eq:ddimgeneralwick}\ea
The propagator in Eq.~(\ref{eq:ddimgeneralwick}) is not zero, unless $M=0$ and $q^2=0$. Now we have to focus on the $d$-dimensional measure $d^d q$. The integrand in Eq.~(\ref{eq:ddimgeneralwick}) is spherically symmetric (independent of angular coordinates) and therefore we can separate out the angular part of the integral by going to spherical coordianates
\bea
d^d q = q^{d-1} dq d\Omega,
\eea
where, in $d$ dimensions, we can write 
\ba
d\Omega = d\theta_1 \sin^{d-2}\theta_1 d\theta_2 \sin^{d-3}\theta_2 \ldots d\theta_{d-1}, \;\;\; d \geq 2. \label{eq:angles}
\ea
We will consider this measure in details in section~\ref{sec:4MBrepr} where real radiation integrals are discussed. The integral over the angles in Eq.~(\ref{eq:angles}) can be evaluated analytically for general $d$ and we find (see Problem~\ref{prob:dangles}),
\ba
\int d\Omega = \frac{2\pi^{\frac{d}{2}}}{{\Gamma(\frac d2)}}.
\label{eq:omegad}
\ea
This is a situation when we see for the first time how the gamma function appears\footnote{The gamma function also appears in many basic formulae and many different areas of physics like statistical physics or in derivation of the Casimir effect. For more physical cases, see~\cite{de2019solved}.}. We will discuss its definition and properties in detail in chapter~\ref{chapter:complex}. We will also need an extension of this function to the complex plane, which will also be done in the next chapter.

Using the result for the angular integral, we can continue our evaluation of $I_{a}^b(M^2)$,
\begin{eqnarray}
I_{a}^b(M^2) &=& 
(-1)^{b-a}\frac{(2\pi\mu)^{4-d}}{i\pi^2}%\frac{\mu^{4-n}}{i\pi^2} 
\frac{2 \pi^{\frac{d}{2}}}{\Gamma{\frac{d}{2}}}
\int d\left( \frac q M \right) \left(\frac qM \right)^{d-1} \frac{M^d}{M^{2a}} \frac{1}{\left(1+\frac{q^2}{M^2}\right)^{a}} \nonumber \\
&=&(-1)^{b-a}\frac{(2\pi\mu)^{4-d}}{i\pi^2} 
\frac{ \pi^{\frac{d}{2}}}{\Gamma\big(\frac{d}{2}\big)}\; 2 \int_0^\infty dt\; t^{d-1}(1+t^2)^{ {-a}}. 
\label{eq:ddimres}
\end{eqnarray}
The last one-dimensional integral can be performed in terms of the so-called beta function. By definition
\ba\label{eq-betadef}
B(m,n) 
	= \frac{\Gamma(m)\Gamma(n)}{\Gamma(m+n)} 
	= \int_0^\infty dx\, x^{m-1}(1+x)^{-m-n} \,,\quad \Re(x)\,,\Re(y) > 0\,,
\ea
thus after setting $t^2 \to x$ we find that the last integral reduces simply to a beta function with
\begin{equation}
m=\frac d2\,,\quad\mbox{and}\quad n=a-m=a-\frac d2\,,  
\end{equation}
and we find the result
\begin{equation}
\boxed{
I_{a}^b(M^2) = (-1)^{b-a}\frac{(2\pi\mu)^{4-d}}{i\pi^2} \frac{ \pi^{\frac{d}{2}}}{\Gamma\big(\frac{d}{2}\big)} \frac{1}{(M^2)^{a-\frac d2}}\;\frac{\Gamma\big(\frac d2\big)\Gamma\big(a-\frac d2\big)}{\Gamma(a)}. 
}
\label{eq:tadpoleddim}
\end{equation}

Specializing to the case of $a=1$ and $b=0$ for the tadpole, from the general formula in Eq.~(\ref{eq:ddimgeneral}) we get\footnote{This coincides with definitions of the one-loop functions which are commonly used in public packages, e.g. {\tt LoopTools}~\cite{Hahn:1998yk}.}
\begin{eqnarray}
\label{eq-a0}
A_0(m^2) &=&  
m^2~\left( \frac{4\pi\mu^2}{m^2}\right)^{\epsilon}
~\frac{\Gamma(\epsilon)}{ 1-\epsilon} 
\nn \\
&=& m^2~\left[
 \frac{1}{\epsilon} - \gamma + \ln(4\pi)
+ 1-\ln\frac{m^2}{\mu^2} \right] + {\cal O}(\epsilon).
\label{eq:tadpoleeps}
\end{eqnarray}
Note that in Eq.~(\ref{eq:tadpoleeps}) the Euler gamma constant $\gamma$ (see chapter~\ref{chapter:complex}) and $\ln(4\pi)$ constants are present. Adding the factor $e^{\gamma \eps}$, as in the original definition of Feynman integrals in Eq.~(\ref{eq-bha}) we find 
\ba
{\bar A}_0(m^2) 
&=& \frac{(2\pi\mu)^{2\epsilon}}{i\pi^2}
~ e^{\gamma\epsilon}~  \int \frac{d^d q}{q^2-m^2+i\epsilon}
%\nl
~=~ m^2~\left(\frac{4\pi\mu^2}{m^2}\right)^{\epsilon} ~ e^{\gamma\epsilon}
~ \frac{\Gamma(\epsilon)}{ 1-\epsilon} 
\nn \\
& = & \frac{1}{\epsilon} + 1 + \left(1+ \frac{1}{2}\zeta_2\right)
\epsilon
+  \left(1 + \frac{1}{2}\zeta_2 - \frac{1}{3}\zeta_3\right)\epsilon^2
+\ldots
\ea
where we have set in addition
\bea
4\pi\mu^2 &=&1,
\\
m^2&=&1.
\eea
We see that the $\gamma$ can be absorbed by the appropriate pre-factor. Since
\begin{equation}
\Gamma(\ep) =
\frac{1}{\ep} -\gamma + \frac{1}{2}\left[ \gamma^2+\zeta(2)\right] \ep + \frac{1}{6}\left[ -\gamma^3-3\gamma^2\zeta(2)-2\zeta(3)\right] \ep^2 +\cdots  
\end{equation}
we have that\footnote{For the origin of Euler's gamma constant $\gamma$, see the chapter ``Gamma's Birthplace'' in~\cite{havilgamma}. It was for long not clear if $\gamma$ is irrational and G.~H.~Hardy offered to vacate his Savilian Chair at Oxford to anyone
who could prove gamma to be irrational. The emergence of the (Euler-Riemann) zeta function is discussed in this book also.}
\begin{equation}
e^{\ep \gamma}\,\Gamma(\ep) =
\frac{1}{\ep} +  \frac{1}{2}\zeta(2)\ep - \frac{1}{3}\zeta(3)\ep^2 +\cdots
\label{eq:epsEuler}
\end{equation}

As already indicated in Eq.~(\ref{eq:specA0}), in this book we will use also another notation for FI based on the number of massive and massless lines as done for example in~\cite{Czakon:2004wm} where in addition T stands for tadpoles, SE for self-energies, V for vertices and B for boxes, so 
\ba
{\bar A}_0(m^2=1) \equiv \texttt{T1l1m} = \frac{1}{\epsilon} + 1
+ \left(1+ \frac{\zeta_2}{2}\right)  \epsilon 
+ \left(1+ \frac{\zeta_2}{2} - \frac{\zeta_3}{3}\right)  \epsilon^2 
+    \ldots
\label{eq:T1l1m}
\ea  

As discussed in~\cite{'tHooft:1978xw}, one-loop Feynman integrals of higher multiplicity (i.e. more than five legs) and higher ranks (numerators) 
can be expressed in terms of a basis which includes 1-point functions -- tadpoles, 2-point functions -- self-energies, 3-point functions -- vertices and 4-point functions -- box diagrams.
 
In addition, QED integrals at the one-loop level lead to Euler dilogarithms discussed in chapter~\ref{chapter:complex} or simpler functions (at least up to finite terms in $\epsilon$), for electroweak integrals we meet more general hypergeometric functions, see e.g.~\cite{Fleischer:2006ht} and section~\ref{sec:gamma_hyperg}.  
{The general one-loop integrals were tackled systematically since 1990s; see e.g.~\cite{Tarasov:1996br}.
In~\cite{Tarasov:2000sf,Fleischer:2003rm}, the class of generalized hypergeometric functions for massive one-loop Feynman integrals with unit indices ($\nu_1, \ldots \nu_N=1$ in Eq.~(\ref{eq-bha})) were determined and studied with a novel approach based on dimensional difference equations:
\begin{itemize}
\item[(i)]\;
{ $_2F_1$ Gauss} hypergeometric functions are needed for self-energies;
\item[(ii)]\;
{ $F_1$ Appell} functions are needed for vertices;
\item[(iii)]\; 
{ $F_S$ Lauricella-Saran} functions are needed for boxes.
\end{itemize}
Finally, general massive one-loop one- to four-point functions with unit indices at arbitrary kinematics were determined  in~\cite{Phan:2018cnz}, where also the numerics of the generalized hypergeometric functions was worked out.
}

Below we gather how these simplest scalar integrals behave in four dimensions in the limits of the internal momentum $k \to \infty $ (\rm UV) and $k \to 0$ (\rm IR).

\begin{tabular}{lr}
\begin{minipage}{.2\textwidth}
\begin{tikzpicture}[scale=0.4]
\begin{feynman}
 
%tadpole 
 
\vertex at (0.5,1) (i1);
\vertex at (2,1) (a); 
\vertex at (4,1) (b);
\vertex at (5.5,1) (f1);

\node at (1,1.4) {\small{$0$}};
\node at (3,2.2) {\small{$m_1$}};

\diagram*{
(i1) -- [thick] 
(a) --   [thick, half left] 
(b) -- [thick, half left] (a)
%(b) -- [thick] (f1),
};
\end{feynman}
\end{tikzpicture}
\end{minipage}
&
\begin{minipage}{.5\textwidth}
\bea\nonumber
A_0 &=& \frac{1}{(i\pi^{d/2})} \int \frac{d^d k}  {D_1}
~~~~\rightarrow {\rm UV-divergent:~~}\sim \frac{d^4k}{k^2}\\
&& \hspace*{2.5cm}~~~~\rightarrow {\rm IR - finite} \nonumber
\eea
\end{minipage}
\end{tabular}
\vspace*{5mm}

\begin{tabular}{lr}
\begin{minipage}{.2\textwidth} 
\begin{tikzpicture}[scale=0.4]
\begin{feynman}
\vertex at (6.5,1) (i1);
\vertex at (8,1) (a);
1\vertex at (10,1) (b);
\vertex at (11.5,1) (f1);

\node at (7,1.3) {\small{$p_1$}};
\node at (9,2.2) {\small{$m_1$}};
\node at (9,-0.2) {\small{$m_2$}};
\node at (11,1.3) {\small{$p_1$}};

\diagram*{
(i1) -- [thick] 
(a) -- [half left] 
(b) -- [thick,half left] (a),
(b) -- [thick] (f1),
};
\end{feynman}
\end{tikzpicture}
\end{minipage}
&
\begin{minipage}{.5\textwidth}
\bea\nonumber
B_0 &=& \frac{1}{(i\pi^{d/2})} \int \frac{d^d k}  {D_1 D_2}
~~~~\rightarrow {\rm UV-divergent~~}\sim \frac{d^4k}{k^4}\nonumber\\
&& \hspace*{2.5cm}~~~~\rightarrow {\rm IR - finite} \nonumber
\eea 
\end{minipage}
\end{tabular}
\vspace*{5mm}

\begin{tabular}{lr}
\begin{minipage}{.2\textwidth} 
%\includegraphics[width=20mm]{figs/vert1l}
%vertex
\begin{tikzpicture}[scale=0.4]
\begin{feynman}
\vertex at (12.5,1) (i1);
\vertex at (14,1) (a);
\vertex at (16,2) (b);
\vertex at (16,0) (c);
\vertex at (17.5,2) (f1);
\vertex at (17.5,0) (f2);

\node at (13,1.3) {\small{$p_1$}};
\node at (14.8,1.8) {\small{$m_1$}};
\node at (14.8,0.1) {\small{$m_2$}};
\node at (16.5,1) {\small{$m_3$}};
\node at (17,2.3) {\small{$p_2$}};
\node at (17,-0.4) {\small{$p_3$}};

\diagram*{
(i1) -- [thick] (a) -- [thick] (b) -- [thick] (f1),
(f2) -- [thick] (c) -- [thick] (a),
(b) -- [thick] (c)
};
\end{feynman}
\end{tikzpicture}
\end{minipage}
&
\begin{minipage}{.5\textwidth}
\bea\nonumber
C_0 &=& \frac{1}{(i\pi^{d/2})} \int \frac{d^d k}  {D_1 D_2 D_3}
~~~~\rightarrow {\rm UV-finite~~}\sim \frac{d^4k}{k^6}\\
&& \hspace*{3cm}~~~~\rightarrow {\rm IR - divergent}\nonumber
\eea 
\end{minipage}
\end{tabular}
\vspace*{.5cm}

The propagators are:
\bea
D_1 = k^2-m_1^2,\; D_2 = (k+p_1)^2-m_2^2;\;
D_3 = (k+p_1+p_2)^2-m_3^2.
\eea

After defining \texttt{FI} and showing a traditional way how to solve \texttt{FI} in dimensional regularization procedure where singularities of the integrals are encoded in dimensional regulator $\eps$, we come to the idea of Mellin-Barnes representations and integrals.    

\section{Basic Idea of Mellin-Barnes Representations}

According to Slater~\cite{Slater:1966} the basic idea of representing a function by a contour integral with gamma functions in the integrand is due to the nineteenth century work by S.~Pincherle. It has been developed further by R.~Mellin and E.W~Barnes.
%\subsection{The main theorem for $_AF_B(1)$} \label{appgamma}
In 1895 Mellin~\cite{mellin1895} introduced theorem where if
\begin{equation}
    f(x) = \frac{1}{2 \pi i} \int_{c-i \infty}^{c+i \infty} x^{-s}g(s) ds, \label{eq:mellin}
\end{equation}
then 
\begin{equation}
    g(s) = \int_0^\infty x^{s-1} f(x) dx.
\end{equation}
The Mellin transform function $g(s)$ is a locally integrable function
where $x$ is a positive real number and $s$ is complex in general.

Five years after the work by Mellin, the paper ``The theory of the gamma function''~\cite{barnes1900} appeared, followed by the 1907s series of papers by Barnes~\cite{barnes1907a,barnes1907b,barnes1907c} in which the so-called Barnes contour integrals have been explored.  
They are of the type~\cite{Slater:1966}
\begin{equation}
    I_C = \int_C \frac{dz}{2\pi i}
    \frac{\Gamma(a+z)\Gamma(b+z)\Gamma(-z)}{\Gamma(c+z)} (-s)^z.
    \label{eq:barnes-type}
\end{equation}
It can be shown that this integral (when convergent with $|s|<1$ and $|arg(-s)<\pi|$) is equivalent to the hypergeometric function $_2F_1$ which will be discussed in section~\ref{sec:gamma_hyperg}.

There are different definitions of what is called the Mellin-Barnes integral. For instance, in recent textbook~\cite{de2019solved}, the integral in Eq.~(\ref{eq:mellin}) is directly called Mellin-Barnes as `any integral in the complex plane whose integrand contemplates
at least one gamma function'.  
For applications in particle physics calculations, we are interested in a slightly modified form of the above equations.  
What is nowadays commonly called the Mellin-Barnes representation\footnote{In many old textbooks they are called just Barnes integrals. Nevertheless, you may find Mellin-Barnes integrals in books by L.J.~Slater~\cite{Slater:1960:CHF} or  A.W.~Babister~\cite{Babister:1967:TFN} or more recently in a context of Feynman integrals in works by N.I.~Usyukina, A. Davydychev, B. Arbuzov, E. Boos, V. Smirnov, see e.g.~\cite{Smirnov:2004,Usyukina:1975yg,Boos:1990rg}.} in its simplest form is a representation of some power $\lambda$ of a sum of two terms $A$ and $B$, in terms of an integral on the complex plane
\begin{equation}\boxed{
%mathematics \longrightarrow  
\frac{1}{(A+B)^{\lambda}}=
  \frac{1}{\Gamma (\lambda)}
  \frac{1}{2 \pi i}
  \int_{-i \infty}^{+i \infty}dz
  \Gamma (\lambda+z)\Gamma (-z)
  \frac{B^{z}}{A^{\lambda +z}}.
  \label{mb1}
}\end{equation}
The exact conditions for validity of this formula and a proof will be given in section~\ref{sec:2MBrepr}. 
This integral follows from Eq.~(\ref{eq:mellin}) when $f(x)=\frac{x^\lambda}{(1+x)^\lambda}$ and $g(s)=\frac{\Gamma[-s]\Gamma[s+\lambda]}{\Gamma[\lambda]}$ 

\begin{equation}
      \frac{x^\lambda}{(1+x)^\lambda} = \frac{1}{2 \pi i} \int_{c-i \infty}^{c+i \infty} \frac{\Gamma[-s]\Gamma[s+\lambda]}{\Gamma[\lambda]}x^{-s} ds. \label{eq:mellinMB}
\end{equation}
The proof for this equation is exactly the same as will be given for Eq.~(\ref{mb1}) in  section~\ref{sec:2MBrepr}. 
Replacing x by $A/B$ Eq.~(\ref{mb1}) follows.

The relation in Eq.~(\ref{mb1}) has an immediate application to physics, for instance, a massive propagator can be written as 
\begin{equation}
%physics \longrightarrow   
\frac{1}{(p^2-m^2)^{a}} =
  \frac{1}{\Gamma (a)}
  \frac{1}{2 \pi i}
  \int_{-i \infty}^{+i \infty}dz
  \Gamma (a+z)\Gamma (-z)
  \frac{(-m^2)^{z}}{(p^2)^{a +z}}.
  \label{eq:tool2}
\end{equation}
 
The upshot of this change is that a mass parameter $m$ merges with a kinematic variable $p^2$ into the ratio $\left(-\frac{m^2}{p^2}\right)^z$ and the integral effectively becomes massless. 

In the context of phase space integrals the basic Mellin-Barnes formula can be employed to bring the integrand to a form where the integration over the original phase space variables can be performed trivially. Of course, the price to pay is that we must introduce \mb{} integrations, however this representation is many times much more convenient for further work than the original one. As an example, consider the integral
\begin{equation}
    \int d\phi_3(p_1,p_2,p_3,Q) 
    \frac{1}{(p_1\cdot p_2)(p_1\cdot p_2+p_1\cdot p_3)}\,,
\end{equation}
where all momenta are massless, i.e., $p_1^2=p_2^2=p_3^2=0$. It can be shown (see e.g.~\cite{Gehrmann-DeRidder:2003pne}) that in $d=4-2\epsilon$ dimensions, this integral is proportional to 
\begin{equation}
    \int_0^1 dx\, \int_{0}^{1-x} dy\, 
    x^{-\epsilon} y^{-\epsilon} (1-x-y)^{-\epsilon}
    \frac{1}{x(x+y)}\,.
\end{equation}
After performing the trivial change of variable $y\to(1-x)y$, we find
\begin{equation}
    \int_0^1 dx\, \int_{0}^{1} dy\, 
    x^{-1-\epsilon} (1-x)^{1-2\epsilon} y^{-\epsilon} (1-y)^{-\epsilon}
    \frac{1}{x+(1-x)y}\,.
\end{equation}
Were it not for the last factor, this integral would be trivial to evaluate in terms of the beta function of Eq.~(\ref{eq-betadef}). However, we can use the basic \mblong{} formula to convert the sum in the denominator into factors of $x$ and $(1-x)y$ with generic powers, which allows to perform the integration over $x$ and $y$ immediately. Then we obtain the \mb{} representation
\begin{equation}
    \int_{-i\infty}^{+i\infty} dz\,
    \frac{\Gamma (1-\eps) \Gamma \left(-z_1\right) \Gamma \left(z_1+1\right) \Gamma
   \left(-\eps-z_1\right) \Gamma \left(z_1-\eps\right)}{\Gamma (1-3 \eps)}\,,
\end{equation}
which can be further manipulated (e.g. expanded in $\eps$ or evaluated analytically and numerically), as we will show in detail in the later chapters.

{Obviously, the basic \MB{} formula in Eq.~(\ref{mb1}) can be applied multiple times if the problem demands and generally we encounter multi-dimensional \MB{} integrals. In fact, in this book we will refer to any multiple complex contour integral of the form
\begin{equation}
    \int_{-i\infty}^{+i\infty}\ldots \int_{-i\infty}^{+i\infty}
    \prod_{j=1}^{n}\frac{dz_j}{2\pi i}
    f(z_1,\ldots,z_n,x_1,\ldots,x_p,a_1,\ldots,a_q) 
    \frac{\prod_k \Gamma(A_k + V_k)}{\prod_l \Gamma(B_l + W_l)}\,,
\label{eq:mbmulti}
\end{equation}
as an \MB{} integral. In this expression the $x_j$ are fixed parameters (e.g. kinematic invariants and masses) while the $a_j$ are expressions of the form $a_i = n_i + b_i \epsilon$, with $n_i\in \mathbb{N}$ and $b_i\in \mathbb{R}$. Furthermore, $A_k$ and $B_l$ are linear combinations of the $a_i$, while $V_k$ and $W_l$ are linear combinations of the integration variables $z_i$. Finally, $f$ is an analytic function, in practice it is a product of powers of the $x_i$ with exponents that are linear combinations of $a_i$ and $z_i$.}

The use of the Mellin transform and Mellin-Barnes contour integrals in \texttt{QFT} dates back to the 1960's. Indeed, the Mellin transform was used for the analysis of scattering amplitudes and the asymptotic expansion of \texttt{FI}s already at that time, see for instance~\cite{Bjorken:1963zz} and~\cite{Bergere:1973fq, Cheng:1987a}.  Mellin-Barnes contour integrals were investigated for finite three-point functions in~\cite{Usyukina:1975yg}, followed by related works~\cite{Boos:1990rg,Davydychev:1992xr,Usyukina:1993ch}.
However, in terms of mass production of new results in the field, a real breakthrough came towards the end of the last millenium when the infra-red divergent massless planar two-loop box was solved  analytically using \mb\; methods~\cite{Smirnov:1999gc}, followed in the same year by the non-planar case~\cite{Tausk:1999vh}. These results were based on the Feynman parametrization of \texttt{FI} and the \mb{} representations were constructed by the repeated application of the basic formula of Eq.~(\ref{mb1}) to the elements of the $F$ and $U$ polynomials (called Symanzik polynomials) associated to the given \texttt{FI}.\footnote{Feynman parametrization and the representation of \texttt{FI} via the Symanzik polynomials will be discussed at depth in chapter~\ref{chapter-MBrepr}.}  
This procedure was automatized initially in~\cite{Gluza:2007rt} and will be discussed in full length in chapter~\ref{chapter-MBrepr}.  
We note that the gamma functions that appear in Eq.~(\ref{mb1}), and hence the \mb{} representations of \texttt{FI} and phase space integrals, play a pivotal role, changing the original singular structure of propagators in four-momentum space, see Eq.~(\ref{eq:tocomplex}), into another one, connected with singularities of the gamma functions. We will explore this in chapter~\ref{chapter-singul}.

\section{Analytical and Numerical Approaches to \mblong{} Integrals}  

Regarding the evaluation of \mblong{} integrals, there is no one universal method or program covering all cases at the frontier of perturbation theory. 

Concerning analytical approaches, given that the \mb{} representation involves integrations over complex variables, the use of Cauchy's residue theorem to evaluate these integrals comes immediately to mind. We will follow this approach in chapter~\ref{sec:3anal}, where we will show in detail how one can obtain a series representation of \mb{} integrals by summing over residues. However, we should note that so far, due to the complex nature of the series which arise and the difficulties in establishing their convergence, mostly expansion of \mb{} integrals in some kinematic variables (see section~\ref{sec:1exp}) have been explored.
Nevertheless, some new developments in understanding the structure of these series (see~\cite{Ananthanarayan:2020fhl} and section~\ref{section:approx}) allows to hope that a full chain of actions shown in Eq.~(\ref{eq:chainactions}) with a more efficient application of the approach based on convergence of series to physics studies will result.
\begin{equation}
\boxed{  {\bf Feynman\;  Integrals}  \hookrightarrow  (MB) {\bf Contour Integrals} 
\hookrightarrow (Residues) {\bf Series}}
\label{eq:chainactions}
\end{equation}

Other approaches for obtaining analytic solutions involve transforming the \mb{} integrals into real integrals of so-called Euler-type and evaluating those. We will discuss this approach in chapter~\ref{sec:1anal}, where we will show that it leads naturally to iterated integrals. Under suitable circumstances, such integrals can be evaluated analytically in terms of certain classes of functions that can be thought of as generalizations of the logarithm, so-called multiple polylogarithms.

We note that in these analytical approaches, the level of complexity, at a given loop order, is defined by the number of virtual massive particles that appear in the Feynman integrals. For one-loop integrals the most complicated mathematical objects that appear in their series expansions in $\epsilon$ \emph{up to finite terms} are Euler dilogarithm functions, that are special cases of multiple polylogarithms. However, in last years it has become obvious that analytical solutions to multiloop integrals, which are always the best solutions to have, involve also functions that can no longer be expressed with multiple polylogarithms, such as elliptic functions and iterated integrals of modular forms. In fact, the class of functions that arise may go even beyond the elliptic case~\cite{Adams:2017ejb}, see also section~\ref{sec:hplsmpls}.

{\it As an interesting aside, we note that elliptic curves, modularity and modular forms were some of the key ingredients to solve a long standing and very famous problem in number theory, Fermat's Last Theorem~\cite{Wiles:1995ig}. 
It is spectacular that the precision of calculations in high energy physics reached the level of sophistication in contemporary cutting-edge mathematical studies.} 

However, in our opinion, due to the many masses and large multiplicities (many external legs) that arise at higher orders in perturbation theory for precision studies in the SM, numerical integration methods may be the most promising avenues for addressing incoming challenges. In the context of \mb{} integrals, one key problem is the construction of \mb{} representations with the the lowest possible dimension. This will be explored in the main part of chapter~\ref{chapter-MBrepr}. Numerical evaluation of \MB{} integrals will be discussed thoroughly in chapter~\ref{chapter-MBnum}.

\section{Mellin and Barnes Meet Euclid and Minkowski}

As discussed in section~\ref{sec:singQFT_general}, the propagators in scattering amplitudes exhibit singularities. We can deal with them by relocating the problem into the complex plane, as schematically denoted in Eq.~(\ref{eq:tocomplex}). However, there are cases where it is convenient to avoid singularities altogether by considering artificial kinematic conditions. For example, we may choose to consider some four-momenta squared to be negative, e.g. $p^2 \to -p^2$ in Eq.~(\ref{eq:tocomplex}). In such cases, obviously, solutions for $E_k$ discussed in section~\ref{sec:singQFT_general} do not need special treatment in the complex plane. This kinematic configuration, which does not generate such singularities of amplitudes, is realized in Euclidean space with Euclidean kinematic conditions. The real {\it physical} kinematics is of course defined in Minkowski space and is called the Minkowskian kinematics. Obviously, the problem of singularities is related to the Minkowskian nature of kinematics in special relativity.
%, see QFT textbooks. 
By the way, we have already used Euclidean space for solving the integral in Eq.~(\ref{eq:ddimgeneral}) in a standard way in section~\ref{sec:singQFT_general}: the Wick rotation effectively transforms Minkowskian momenta into Euclidean ones.

In the context of multiloop calculations and the \mb{} method, considering Euclidean kinematic has two main advantages. First, numerical cross-checks of the obtained solutions for scattering amplitudes or specific multi-loop integrals can be obtained relatively easily. The reason for this is that in Euclidean kinematics the integrands behave ``well'' in a certain sense (they are smooth functions with no rapid oscillations), hence numerical computations are straightforward. Thus, packages for manipulating \mb{} integrals that use general purpose numerical integrators, such as \mbm{}~\cite{Czakon:2005rk}, can be used directly. Second, obtaining analytical solutions can be sometimes easier using  Euclidean kinematics. One example is the fixing of boundary conditions in the differential equation method of solving Feynman integrals. We can then {\it analytically continue} the results (functions) from Euclidean to Minkowski space, checking the results numerically in the Euclidean kinematic regime. Analytic continuation of complex functions will be discussed in chapter~\ref{chapter:complex}.

\begin{tips}{When is Euclidean ({\it unphysical}) kinematics useful?}
Evaluation of integrals in Euclidean kinematics is especially useful if we want to
\begin{enumerate}
    \item check individual integrals numerically against analytical solutions;
    \item check results obtained with different approaches were analytic solutions are not always available. 
\end{enumerate}

\end{tips}

Having said that, we will show in section~\ref{sec:2num} how to use the knowledge of the threshold behaviour of Symanzik polynomials to calculate \mb{} integrals numerically, directly in Minkowskian kinematics. We will avoid the bad behaviour of integrands by a special construction of the \mb{} integrals which restores their Euclidean behaviour.

To finish this section, let us show an example of the Euclidean and Minkowskian kinematic space useful for analytical and numerical solutions of Feynman integrals.

%\section{A simple worked example as an ``invitation'' to the \mb{} topic}

\begin{tips}{Euclidean and Minkowskian Kinematics}

As kinematics depends on the number of external particles present in a given process, it is useful to define it using kinematic invariants. In the simple case of $2 \to 2$ scattering processes these are known as Mandelstam invariants. Transformations between external four-momenta and kinematic invariants are defined in many packages used in high energy physics.

Here we give an example based on the \texttt{KinematicsGen.m} package for a vertex kinematics in which invariants are defined in a standard way, see Fig.~\ref{fig:vertexkin}, as squares of external momenta $p_1$ and $p_2$ (connected with two outgoing legs) and their sum squared $s$ (connected with the incoming leg).  

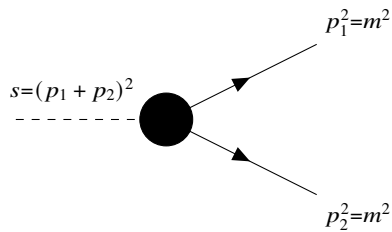
\begin{figure}[h!]
\begin{center}
\begin{tikzpicture}[scale=0.5]
\begin{feynman}

\node at (0.5,0.8) {$s$=$(p_1+p_2)^2$};
\node at (7,2) [above right] {$p_1^2$=$m^2$};
\node at (7,-2) [below right] {$p_2^2$=$m^2$};
\draw [fill] (3,0) circle [radius=20pt]; 

\vertex at (-1,0) (i1);
\vertex at (3,0) (a);
\vertex at (7,2) (b);
\vertex at (7,-2) (c); 

%\node at (3,1.3) {$p_1$};

\diagram*{
(i1) -- [dashed] (a) -- [fermion] (b), (a) -- [fermion] (c)
};
\end{feynman}

\end{tikzpicture}

\end{center}
%\sidecaption
\caption{Typical assignment of kinematic variables for vertex diagrams. A black blob can include multi-loop corrections like the three-point vertex in Fig.~\ref{fig:1loopmasters}.}
\label{fig:vertexkin}
\end{figure}
 
\begin{verbatim}
invariants = {p1^2 -> m^2, p2^2 -> m^2, p1*p2 -> -m^2 + s/2};
invEucl = {m->1, s->1};
invMink = {m->1, s->-1};
\end{verbatim} 
\begin{equation}
\label{eq:kinverb}
\end{equation}

In Eq.~(\ref{eq:kinverb}) we show verbatim some input kinematic parameters related to Fig.~\ref{fig:1loopmasters} with a notation which is taken from a \math{} script file\footnote{In this book we will often use examples evaluated with the \math{} computer algebra system (CAS)~\cite{Wolfram}. \math{} is frequently used in particle physics and for the construction and analysis \mb{} integrals, as will be described in the following Chapters and the Appendix.} \newline \verb+run_script_1loop_QED_vertex_mink+, available in section `MBnumerics', ``Related and auxiliary Software'' at~\cite{ambrewww}.
\end{tips}

\section{A Simple Example as an  {Invitation}  to the \mb{} Topic}\label{sec:simpleinvitation}

In Fig.~\ref{fig:scheme_actions} a scheme of possible actions is given based on the 1-loop example. Various analytical, semi-analytical and numerical analyses {for \texttt{FI}} based on the \mb{} method will be thoroughly discussed in chapters~\ref{chapter-MBanal}--\ref{chapter-MBnum}.  {`Other methods' highlighted in Fig.~\ref{fig:scheme_actions} for the \texttt{FI} evaluation include notably the powerful differential equations method (\texttt{DEs}) which is intensively developing. For recent studies and software used in analytical and numerical calculations see~\cite{Dubovyk:2022frj,Liu:2022chg,Cordero:2022gsh}. For other exploratory methods in \texttt{FI} computation, see a review~\cite{Heinrich:2020ybq}.}

\begin{figure}[h!]
\begin{center}
\begin{tikzpicture}[scale=1]

\node[rectangle, rounded corners, ultra thick, inner sep=10pt, draw=black!50, fill=black!10] (AA)
{
 
The problem
};

\node[rectangle, inner sep=10pt, draw=black!50, fill=black!10] (A) [below= of AA]
{
Feynman parameters
};

\node[rectangle, rounded corners, ultra thick, inner sep=10pt, draw=black!50, fill=black!10] (B) at (-2,-4) %[left=of A]
{
Mellin-Barnes 
representation
};

\node[rectangle, inner sep=10pt, draw=black!50, fill=black!10] (C) 
at (5,-4)%[right=of A]
{
Sector decomposition 
method
};

\node[rectangle, inner sep=10pt, draw=gray!50, fill=black!10] (D) at (4,0)
%[ right=of AA]
{
Other methods 
};

\node[rectangle, inner sep=10pt, draw=gray!50, fill=black!10] (E) at (1,-6) %[below=of B]
{
Analytical result
};
\draw [ultra thick, rounded corners] (2.4,-6.5) rectangle (6,-5);
\node at (3.7,-5.4) {\small{$\frac{1}{\epsilon}+2+\frac{1+y}{1-y} \ln{y}$}};
\node at (4.2,-6.1) {\small{$y=\frac{\sqrt{1-4 m^2/s}-1}{\sqrt{1-4 m^2/s}+1},\; s=p^2$}};

\draw [->] (AA) to (A);
\draw [->] (AA) to (D);
\draw [->] (A) to (B);
\draw [->] (A) to (C);
\draw [->] (B) to (E);

\draw [black, ultra thick] (E) to (A);
\draw [black, ultra thick] (B) to (C);

\node at (1.3,-3.5) {\color{black}{numerical }};
\node at (1.4,-3.8) {\color{black}{cross-checks}};

\draw [ultra thick, rounded corners] (-1.1,0.6) rectangle (1.1,2.2);
%\node[red] at (13.3,1.3) {\bf The problem};

\begin{feynman}
\vertex at (-0.9,1.4) (i1);
\vertex at (-.4,1.4) (a);
\vertex at (0.4,1.4) (b);
\vertex at (0.9,1.4) (c);
\diagram*{(i1) -- [photon, thick] (a) -- [half left, thick] (b) -- [photon, thick] (c), (a) -- [half right, thick] (b) ;
};
\end{feynman}
\node at (0,1.9) {\small $m$};
\node at (-0.6,1.6) {\small $p$};

\end{tikzpicture}
\end{center}
\caption{A scheme of actions which will be discussed in the book, the 1-loop example. 
We highlight \mb{} and sector decomposition (\sd{}) methods which are explored presently at the two- and three- loop levels~\cite{Blondel:2018mad}. The analytic result will be discussed and derived using the \mb{} method in sections~\ref{ssec:MBtoSums}~and~\ref{sec:moregensums}, see  Eq.~(\ref{eq:G1SE2l2m-sol}).
\label{fig:scheme_actions}}
\end{figure}
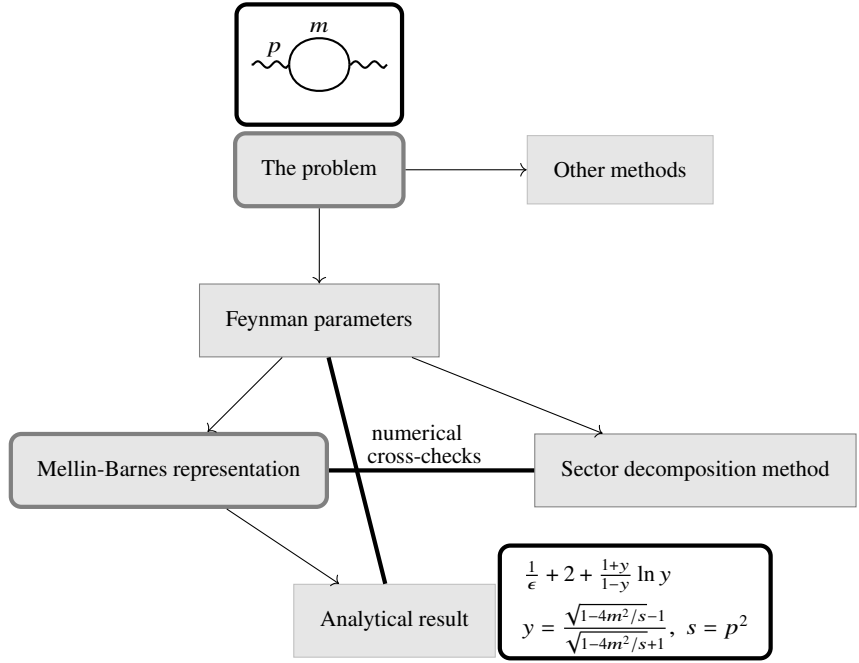

Let us consider the simple example of a one-loop correction to a three-point vertex as shown in Fig.~\ref{fig:vqed}.
\begin{figure}
    \centering
    \begin{tikzpicture}[scale=0.8]
\begin{feynman}
\vertex at (2.5,1) (i1);
\vertex at (4,1) (a);
\vertex at (6,2) (b);
\vertex at (6,0) (c);
\vertex at (7.5,2) (f1);
\vertex at (7.5,0) (f2);

%\node at (3,1.3) {$p_1$};
\node at (4.6,2) {$k$-$p_2$};
\node at (4.8,0.) {$k$+$p_1$};
\node at (6.4,1) {$k$};
\node at (7,2.5) {$p_1$};
\node at (7,-0.5) {$p_2$};

\diagram*{
(i1) -- [dashed] (a) -- [fermion,thick] (b),
(f2) -- [fermion,thick] (c) -- [fermion,thick] (a),
(b) -- [fermion] (c), (f1) -- [fermion,thick] (b)
};
\end{feynman}
\end{tikzpicture}
    \caption{One-loop correction to the three-point vertex. Thick lines denote massive particles with mass $m$.}
    \label{fig:vqed}
\end{figure}
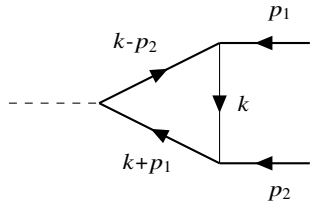
The corresponding Feynman-integral  is given in Eq.~(\ref{vqed}), where we have anticipated that the Laurent series expansion of the result\footnote{The exact result in $\eps$ includes the hypergeometric function $_2F_1$ and can be obtained with \math{}, see \wwwaux{miscellaneous} and~\cite{Gluza:2007vdp}.}  in $\epsilon$, $d=4-2 \epsilon$ starts at ${\mathcal O}(\epsilon^{-1})$,
\begin{equation}
\begin{split}
V(s)  &=
\frac{e^{\epsilon\gamma}}{i\pi^{d/2}} \int \frac{d^d k}  {[(k+p_1)^2-m^2] [k^2] [(k-p_2)^2-m^2]} \\
&= \frac{V_{-1}(s)}{\epsilon} + 
V_0(s) + \cdots
\end{split}
\label{vqed}
\end{equation}
In order to illustrate the issues we meet when considering both analytical and numerical solutions of such integrals, it is enough to look at the leading divergent integral $V_{-1}(s)$, which, translated {to} the \mb{} representation, takes the following form, with $m=1,s=(p_1+p_2)^2$ (the construction of this \mb{} representation and the corresponding {\tt Mathematica} file will be discussed in chapter~\ref{chapter-MBanal}),
\begin{equation}
V_{-1}(s) = 
-
% typos corrected 2015-01-21 (Johann Usovitsch)
 ~ \frac{1}{2s} 
\int\limits_{-\frac{1}{2}-i \infty}^{-\frac{1}{2}+i \infty} 
\frac{dz}{2\pi i}~~
\underbrace{(-s)^{-z}}_{\it {Part\; I}}
\overbrace{\frac{\Gamma^3(-z)\Gamma(1+z)} {\Gamma(-2z)}}^{\it {Part\; II}}. \label{parts}
\end{equation}
{\it Parts I} and {\it II} exhibit some general features of \MB{} integrals that are important to understand when seeking either numerical or analytic solutions. In particular, if the kinematic variable $s$ is positive, the exponential function in {\it Part I} is highly oscillating along the path of integration (Problem~\ref{prob:highlyoscil}),  while the gamma functions appearing in {\it Part II} exhibit singularities (infinite values) for real integer values of the arguments. These are obvious obstacles that prevent a straightforward numerical evaluation and in general we find a fragile stability for integrations over products and ratios of gamma functions. We will learn about gamma functions starting from chapter~\ref{chapter:complex}. There are many subtleties connected also with finding analytic solutions of \mb{} integrals. For instance, for diagrams with massless propagators only, the gamma functions in \mb{} integrals have arguments that involve the integration variable $z$  in the form $\Gamma(\ldots\pm z)$. However, in massive cases some of the gamma functions will contain in their argument the integration variable $z$ multiplied by 2, as in the denominator of Eq.~(\ref{parts}). The application of Cauchy's residue theorem to such integrals then leads to the appearance of so-called inverse binomial sums for massive cases. Nonetheless, for many basic integrals, like the one considered here, the analytical results are known, and can provide a good theoretical laboratory for investigating more complicated integrals. Indeed, the analytical result for $V_{-1}(s)$ can be written as
\begin{equation}
%\begin{split}
V_{-1}(s) =  
\frac{1}{2}\sum_{n=0}^{\infty}  \frac{s^n} { \binom{2n}{n} (2n+1)} %\label{eq:anal1} 
=
% next line had a typo in acat07
 \frac{2\arcsin(\sqrt{s}/2)}{\sqrt{4-s}\sqrt{s}} 
%\label{eq:anal2} 
=   -\frac{y \ln y}{1-y^2} %\frac{(1 + y) \Log[y]}{(-1 + y)}.%{\bf JG: correct it!}
\label{eq:anal3} 
%\end{split}
\end{equation}
This result arises by applying the residue theorem to the \MB{} integral and summing up the residues of Eq.~(\ref{parts}). This procedure, as well as the sums that arise will be discussed in detail in section~\ref{sec:1anal}. The conformal variable $y$ will be also discussed there.

\begin{tips}{Decomposition of \texttt{V3l2m}}
Actually, \texttt{FI} can be decomposed to MIs (\texttt{T1l1m} and \texttt{SE2l2m}) using \texttt{IBPs}, a procedure mentioned in section~\ref{sec:dimreg}.
In our case, we can use the dedicated public \texttt{IBPs} software \texttt{AIR}~\cite{Anastasiou:2004vj},  \texttt{Fire}~\cite{Smirnov:2008iw}, \texttt{Kira}~\cite{Maierhofer:2017gsa}, \texttt{LiteRed}~\cite{Lee:2013mka}  or \texttt{Reduze}~\cite{Studerus:2009ye}  to get 
\begin{equation}
    \texttt{V3l2m} = 
    \frac{{\eps}-1}{{\eps} (s-4)} \texttt{T1l1m} + \frac{1-2 \eps}{\eps (s-4)} \texttt{SE2l2m}.
     \label{eq:IBPdec1}
    % ((\eps - 1)/((s - 4) \eps))  \texttt{T1l1m}+ 
  % \texttt{SE2l2m} ((-2 \eps + 1)/((s - 4) \eps))
\end{equation}
Adding expansions for \texttt{T1l1m} in Eq.~(\ref{eq:T1l1m}) and for \texttt{SE2l2m} given in Fig.~\ref{fig:scheme_actions} (see also the result for $\eps^0$ term in section~\ref{sec:moregensums}) we get 
\begin{equation}
 \texttt{V3l2m} =  \frac{y H(0,y)}{{\eps} (y^2-1)}-\frac{y \left(12 H(-1,0,y)-6 H(0,0,y)+\pi ^2\right)}{6 (y^2-1)},
 \label{eq:IBPdec2}
\end{equation}
where functions $H$ are harmonic polylogarithms which will be discussed in next chapter. By inspection (compare Eq.~(\ref{eq:anal3}) with $\eps^{-1}$ term in Eq.~(\ref{eq:IBPdec2}) we can see that $H(0,y) \equiv \ln y$. The decomposition in Eq.~(\ref{eq:IBPdec1}) and numerical cross checks are given in \wwwaux{miscelaneous}.
\end{tips}

\section*{Problems}
\addcontentsline{toc}{section}{Problems} 
\begin{problem}
%\begin{enumerate}
%\item 
\label{prob:propag} 
Derive Eq.~(\ref{eq:propag}). \newline \hint{} This is a straightforward calculation by exploring the four-vector notation below Eq.~(\ref{eq:propag}) and the kinematics as in Fig.~\ref{fig:real-emission}.
\end{problem}
%\item 
\begin{problem}
\label{prob:Ftrick1} 
Prove a general formula  for Feynman parameterization of products of denominators
%~\cite{Feynman:1949}:
\ba
 %####################################################################
\label{eq-appx114e}
\frac{1}{A_1^{n_1}A_2^{n_2}\ldots} &=&
\frac{\Gamma(n_1+\ldots + n_m)}{\Gamma(n_1)\ldots\Gamma(n_m)} \\
&&\int_0^1 dx_1\ldots\int_0^1dx_m
\frac{x_1^{n_1-1}\ldots x_m^{n_m-1}\delta(1-x_1\ldots -x_m)}{(x_1A_1+
  \ldots +x_mA_m)^{n_1+\ldots n_m}}  \nonumber 
%------------------------------------------------
\ea
\hint{} This relation can be proved by induction, see e.g.~\cite{Zeidler:qed}.
\end{problem}
%\item 
\begin{problem}
\label{prob:Ftrick1} Using Eq.~(\ref{eq-appx114e}) show in particular  
that
\begin{eqnarray}
% corrected 21 05 2002:
%\frac{1}{AB} &=& \int_0^1 dy \frac{1}{ Ay+B(1-y)   }
 \frac{1}{AB} &=& \int_0^1 dy \frac{1}{[Ay+B(1-y)]^2}
 \label{eq:feynmantrick}
\\
\label{eq-appx114a}
\frac{1}{ABC} &=& \Gamma[3] \int_0^1dx\int_0^1dy\int_0^1dz
\frac{\delta(1-x-y-z)}{(Ax+By+Cz)^3}
\nn \\
&=& \Gamma[3] \int_0^1dx \int_0^{1-x}dy
\frac{1}{[Ax+By+C(1-x-y)]^3}
\\
&=& \Gamma(3) \int_0^1dx \int_0^{1}dy
\frac{x}{[A(1-x)+Bxy+Cx(1-y)]^3}.
\end{eqnarray}
The relation in Eq.~(\ref{eq:feynmantrick}) is known in the literature as `the Feynman integration trick'~\cite{Zeidler:qed}.
\end{problem}
%\item  
\begin{problem}
\label{prob:Ftrick}
Apply relations derived in the previous problem to the definition of the $B_0$ and $C_0$ functions (see Fig.~\ref{fig:1loopmasters})

\ba
B_0(p^2;m_1^2,m_2^2) &=&
\frac{(2\pi\mu)^{4-d}}{i\pi^2}%\frac{\mu^{4-n}}{i\pi^2}
\int
\frac{d^d q}{(q^2 - m_1^2+i\epsilon)[(q+p)^2 - m_2^2+i\epsilon]}.
\nn \\
&& \label{eq-appx115a} \\
\label{eq-114a1}
C_0(p_2^2,p_3^2,p_1^2;m_1^2,m_2^2,0)&=&
\frac{(2\pi\mu)^{4-d}}{i\pi^2} \int \frac{d^dk}{d_1 d_2d_3},
\ea
to prove that that these functions can be written in form of Eqs.~(\ref{eq:specB0})~and~(\ref{eq:specC0}).
\end{problem}

\begin{problem}
%\item 
Check kinematic invariants written in the first line of Eq.~(\ref{eq:kinverb}) and
find kinematic invariants needed to generate 4-PF, e.g. two-loop topology given in Fig.~\ref{fig:b5l2m} in chapter~\ref{chapter-singul}.
\\
\hint{} You can use \texttt{kinematicGen.m}~\cite{ambrewww}.
\end{problem}

\begin{problem}
%\item 
\label{prob:highlyoscil} Using known to you numerical methods and packages, discuss the problem of accuracy for the {\it Part 1} of the integral  in Eq.~(\ref{parts}).\newline
\hint{} You may follow discussions in~\cite{Czakon:2005rk} and~\cite{Dubovyk:2016ocz}.
\end{problem}

\putbib[%
bibs/refs,%
bibs/2loops_LL16,%
bibs/Phd_Dubovyk,%
bibs/LRrefa,%
bibs/2loopsreport]
\end{bibunit}

%\bibliography{refs%
%bibs/LRrefa,%
%bibs/2loopsreport
%} 

%\input{references1}
%%%%%%%%%%%%%%%%%%%%% chapter.tex %%%%%%%%%%%%%%%%%%%%%%%%%%%%%%%%%
\begin{bibunit}[elsarticle-num-ID] % define the bib-style for the unit: elsarticle-num.bst
%  text-1; this is the corresponding section
%\putbib[2loops] % the *.bib
%\end{bibunit}
% go-on
%--- from: bibunits.sty, adapts the font size of ``References'' to section
\let\stdthebibliography\thebibliography
\renewcommand{\thebibliography}{%
\let\section\subsection
\stdthebibliography} 
 
\chapter{Complex Analysis}
\label{chapter:complex}  

\abstract{\mblong{} (\texttt{MB}) integrals  are defined in the complex plane as contour integrals over integrands involving the exponential and gamma functions in their most elementary form. Thus we consider basic features of the complex exponential, logarithm, gamma, polygamma and hypergeometric functions as well as certain generalizations of these functions that appear in particle physics applications. We also discuss general complex integrals and the notion of residues, along with Cauchy's theorem and introduce the notion of \texttt{MB} representations. We recall the basic notions and constructions of complex analysis to the level which allows to study  the structure and solutions of  \texttt{MB} integrals.}

\section{Complex Numbers and Complex Functions}
\label{sec:1complex}

Besides their relevance in most fields of mathematics, complex functions have also become indispensable tools in the natural sciences and engineering. {\it We could say that complex numbers and functions are auxiliary, technical tools there.} However, quantum mechanics is a clear exception since the wave function is complex. This is already at the heart of the Schr\"odinger equation, $i \hbar \frac{\partial \Psi}{\partial t}=\hat{H} \Psi$. If $\Psi$ were real, the Hermitian operator $\hat{H}$ acting on a real $\Psi$ would give a real number, which would contradict the left side of the Sch\"odinger equation where the time derivative acting on a real $\Psi$ would also be real, multiplied by $i$. 
%It is already in the heart of the Schr\"odinger equation, $i \hbar \frac{\partial \Psi}{\partial t}=\hat{H} \Psi$. If $\Psi$ would be real, the Hermitian operator $\hat{H}$ acting on real $\psi$ gives a real number, which would contradict the left side of the Sch\"odinger equation where time derivative acting on real $\Psi$ would be also real, multiplied by $i$. 
In this book, complex analysis is basic for \mb{} integrals studies.

At a very basic level, a complex number is just a number of the form $a+b i$, where $a$ and $b$ are real numbers, while $i$ is simply a quantity satisfying $i^2 = -1$. Obviously $i$ cannot be a real number, as $r^2 \ge 0$ for all $r\in \mathbb{R}$, and it is called the \emph{imaginary unit}. The set of all complex numbers is denoted by $\mathbb{C}$. As a set, we have simply $\mathbb{C} = \{a+b i | a,b \in \mathbb{R}\}$. Complex numbers were initially considered as ``subtle as they are useless'' (Cardano, 1545), an ``amphibian between existence and nonexistence'' (Leibnitz, 1702), ``impossible'' and ``imaginary''. Nevertheless, using just the defining property $i^2=-1$, it is straightforward enough to perform basic arithmetic with such numbers, e.g.,
\begin{align}
    (a + b i) \pm (c + d i) &= (a\pm c) + (b\pm d)i,
    \\
    (a+b i)^2 &= (a^2-b^2) + 2 a b i
    \\
    (a + b i) \cdot (c + d i) &= (a c - b d) + (a d + b c) i,
    \\
     \frac{a+bi}{c+di} &= \frac{(a c  + b d)}{c^2+d^2} + \frac{(b c - a d)}{c^2+d^2}i,
\end{align}
and so on. In fact, the notion of complex numbers arose through the study of cubic equations, whose solution through radicals sometimes contains negative numbers under square roots even when all roots of the cubic are real. These real solutions could be computed using nothing more than the rudimentary understanding of complex numbers above.
For a more historical account on the origin of complex numbers, we refer the reader to the excellent textbook~\cite{needham1997visual}, where interesting details can be found, e.g. an account of the idea just mentioned, due to Bombelli, to solve cubic equations by constructing complex arithmetic.

A breakthrough came with Wessel, Argand and Gauss who interpreted complex numbers $z=a+b i$ geometrically, as points or vectors in the $xy$-plane having Cartesian coordinates $(a,b)$. This plane is denoted by $C$ and is called a complex plane. The coordinates $a$ and $b$ are called the \emph{real part} and \emph{imaginary part} of the complex number $z$ and are denoted as $\Re (z) \equiv a$ and $\Im (z) \equiv b$. However, in this geometric viewpoint, it is equally natural to describe a complex number by its \emph{absolute value} and \emph{argument}, which are nothing but the coordinates of the corresponding vector in two-dimensional polar coordinates, see Fig.~\ref{fig:C-plane}. Clearly the absolute value and argument are related to the real and imaginary parts by $R = |z| = \sqrt{a^2+b^2}$ and $\tan \varphi = b/a$ and conversely $\Re (z) = R\cos\varphi$ while $\Im (z) = R\cos\varphi$. Note that the argument is only defined up to multiples of $2\pi$, a fact that is obvious from the geometric interpretation. We will see a more direct way of representing a complex number with its absolute value and argument shortly.

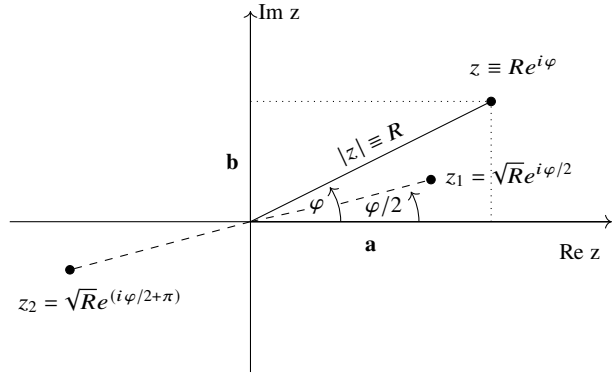
\begin{figure}
\centering
\begin{tikzpicture}[scale=0.8]
\begin{feynman}
 
\vertex at (2,1) (i1);
\vertex at (4,1) (i2);
\vertex at (6,1) (i3);
\vertex at (8,1) (i4);
\vertex at (12,1) (i5);
\vertex at (6,4.5) (f1);
\vertex at (6,-1.5) (f2);

\node[right] at (f1) {$\rm{Im}\; z$};
\node[right] at (11.,0.5) {$\rm{Re}\; z$};

\draw [->]  (i3) -- (f1) ;
\draw [->]  (i3) -- (i5) ;
\filldraw (10,3) circle (2pt);
\draw  (i3) -- (10,3) ;
\filldraw (9,1.7) circle (2pt);
\draw [dashed]  (7.5,1.3) -- (9,1.7) ;
\draw [dashed]  (i3) -- (6.9,1.2) ;
\filldraw (3,.2) circle (2pt);
\draw [dashed]  (i3) -- (3,.2) ;

\node[right] at (2.,-.3) {$z_2=\sqrt{R} e^{(i \varphi/2+\pi)}$};
\node[right] at (9.1,1.8) {$z_1=\sqrt{R} e^{i \varphi/2}$};
\node[right] at (9.5,3.6) {$z \equiv Re^{i \varphi}$};
\node[rotate=25] at (8,2.3) (N) {$|z| \equiv R$};

%\pic [draw, ->, "$\theta$", angle eccentricity=1.5] at (4,2) {angle = mary--origo--bob};

\draw [->] (7.5,1) arc (0:35:1cm);
\node at (7.1,1.25) {$\varphi$};

\draw [->] (8.8,1) arc (0:30:1cm);
\node at (8.2,1.25) {$\varphi/2$};

\diagram*{
 (i1) -- [thin] 
 (i5) --  [thin] 
 (i3) --  [thin] 
 (f1) --  [thin]
 (i3) --  [thin] (f2), 
};

\draw [dotted]  (10,1) -- (10,3) ;
\node at (8,0.6) {$\bf{a}$};
\draw [dotted]  (6,3) -- (10,3) ;
\node at (5.7,2) {$\bf{b}$};

\end{feynman}
\end{tikzpicture}
\caption{\label{fig:C-plane}
Geometrical representation of the complex number with parameters $R$ and $\varphi$, $z=R e^{i \varphi} = |z| (\cos{\varphi} + i \sin{\varphi})$, $|z| \equiv R = \sqrt{\bf{a}^2+\bf{b}^2}, \varphi = \arctan(\bf{a/b})$. Two solutions $z_1,z_2$ of Eq.~(\ref{eq:zargum}) with $n=2$ are shown. }
\label{figcomplex} 
\end{figure}

Now that we have the set of complex numbers at our disposal, we can define complex functions as maps from some subset $X \subseteq \mathbb{C}$ to some other subset $Y \subseteq \mathbb{C}$, where each element of $X$ is mapped to exactly one element of $Y$. As a very basic example, consider squaring a complex number $z$: to each $z\in\mathbb{C}$ of the form $a+b i$, we assign its square, $z^2 \equiv z\cdot z = (a^2-b^2)+2ab i$. Obviously this is a function on all of $\mathbb{C}$, as there is a unique assignment of the square for each complex number. Besides such simple arithmetic constructions, another well-known procedure for defining functions is through power series, the most obvious examples being the exponential and trigonometric functions. Similarly to the real case, let us define
\begin{equation}
    \exp(z) \stackrel{\text{def}}{=} \sum_{k=0}^{\infty}\frac{z^k}{k!}\,,
    \quad\mbox{for all } z\in \mathbb{C}\,.
\end{equation}
It can be shown that this power series is absolutely convergent for all 
$z\in \mathbb{C}$ and defines the \emph{complex exponential} of $z$. The notation $e^z = \exp(z)$ is also commonly used and we will use both. It can be proven that the usual properties of the exponential function continue to hold for complex arguments, so
\begin{equation}
    \begin{split}
        e^{z+w} &= e^z e^w\,,\quad z,w\in \mathbb{C}\,,
        \\
        \big(e^z\big)^n &= e^{nz}\,,\quad n\in\mathbb{N}\,,
        \\
        e^0 &= 1\,,
        \\
        e^z &\ne 0\,,\quad z\in\mathbb{C}\,,z\ne 0\,
        \\
        \frac{d e^z}{dz} &= e^z\,,\quad z\in\mathbb{C}\,.
    \end{split}
\end{equation}
Furthermore, for a real number $t$, the series representation of $e^{i t}$ can be written in the following way (note that the rearrangement of the series is allowed because of absolute convergence):
\begin{equation}
    e^{i t} = \left(1 - \frac{t^2}{2!} + \frac{t^4}{4!} - \ldots\right) 
    + i \left(\frac{t}{1!} - \frac{t^3}{3!} + \frac{t^5}{5!} - \ldots\right)\,, 
\end{equation}
which implies Euler's formula $e^{it} = \cos t + i \sin t$ for real $t$. Then, motivated by this observation, we define the sine and cosine of a complex variable $z$ by the corresponding power series,
\begin{equation}
\begin{split}
    \cos(z) &\stackrel{\text{def}}{=}
    \sum_{k=0}^{\infty}(-1)^k\frac{z^{2k}}{(2k)!}\,,
\\
    \sin(z) &\stackrel{\text{def}}{=}
    \sum_{k=0}^{\infty}(-1)^k\frac{z^{2k+1}}{(2k+1)!}\,,
    \quad\mbox{for all } z\in \mathbb{C}\,.
\end{split}
\end{equation}
These definitions then lead directly to Euler's formula for a general complex number, which connects the exponential and trigonometric functions of a complex variable:
\ba
e^{iz}= \cos{z} + i \sin{z}\,.
\label{eq:moivre}
\ea
This equation implies in particular
\ba
e^{i\pi} =-1
\qquad\mbox{and}\quad
e^{2 i\pi} = 1\,
\ea
and more generally $e^{2n i \pi} =1$ for all $n \in \mathbb{Z}$. This implies that the exponential function is in fact periodic in the complex plane with a period of $2\pi i$,
\ba
e^{z+2\pi i} = e^z e^{2\pi i} = e^z\cdot 1 = e^z\,.
\ea
Incidentally, Euler's relation allows us to express a complex number $z$ directly with its absolute value and argument. Recalling that $\Re (z) = R\cos\varphi$ while $\Im (z) = R\cos\varphi$, we immediately find 
\begin{equation}
    z = R e^{i\varphi}\,,
\end{equation}
a relation that will prove to be very useful in computing fractional powers, so in particular $n^{\mathrm{th}}$ roots of complex numbers.

Another familiar procedure for defining interesting mappings is to consider the inverse of particular functions. Starting with a simple example, let us consider the inverse of the function $f(z)=z^2$. That is, for a complex $z$, we now seek another complex number, denoted $\sqrt{z}$, such that $(\sqrt{z})^2 = z$. In order to find such a complex number, consider the representation of $z$ in terms of its absolute value and argument: $z = R e^{i\varphi}$. Then, it is immediately obvious that the number $\sqrt{z} = \sqrt{R} e^{i\frac{\varphi}{2}}$ has the desired property. Indeed
\begin{equation}
    \big(\sqrt{z}\big)^2 = \big(\sqrt{R} e^{i\frac{\varphi}{2}}\big)^2
    =  \big(\sqrt{R}\big)^2 \big(e^{i\frac{\varphi}{2}}\big)^2
    = R e^{2i\frac{\varphi}{2}} = R e^{i\varphi}\,.
\end{equation}
We note that by definition the absolute value of a complex number is real and non-negative, $R\ge 0$, so its real square root $\sqrt{R}$ always exists and is real. However, notice that the square root is not unique. Clearly for any number $\sqrt {z}$ all other numbers of the form $ \sqrt {z} e^ {i n \pi} $ also have the property that their square is $z$ (recall $e^{2n\pi i}=1$). Since
\begin{equation}
    e^{in\pi} = (-1)^n\,,\quad n\in\mathbb{Z}\,,
\end{equation}
(this is easily seen from the geometric interpretation of complex numbers or Euler's formula), we have precisely two distinct solutions: $+\sqrt{z}$ and $-\sqrt{z}$. This is not unexpected and follows the pattern for non-negative real numbers. We note in particular that
\begin{equation}
\sqrt{-1}=+i 
\quad\mbox{or equally well}\quad
\sqrt{-1}=-i\,.
\end{equation} 

Before moving on, let us note that the generalization of the above considerations to general $n^{\mathrm{th}}$ roots is rather straightforward. Indeed, given some complex number $z$, we are now looking for a complex number $z^{1/n}$ such that $\big(z^{1/n}\big)^n = z$. It is easy to see that the number ${z}^{1/n} = {\sqrt[n]R}e^{i\frac{\varphi}{n}}$ has this property. However, as before, this is not the only solution. In fact if $z^{1/n}$ is an $n^{\mathrm{th}}$ root of $z$, then so is any other number of the form $z^{1/n}e^{2m\pi i/n}$, for all $m\in\mathbb{Z}$. From the geometric interpretation, it is quite clear that the numbers $e^{2m\pi i/n}$ are distinct only for $m=0,1,2,\ldots,n-1$. (Indeed, the associated vectors point to the vertices of a regular $n$-sided polygon inscribed in the unit circle, such that one vertex, corresponding to $m=0$, is on the positive real axis.) Thus we find that in general
\begin{equation}
{z}^{1/n} = {\sqrt[n]R} e^{\left(i\frac{\varphi}{n}+\frac{2\pi m}{n}\right)}\,,
\quad
m=0,1,2,\ldots,n-1\,. 
\label{eq:zargum}
\end{equation}
There are then in fact $n$ separate complex $n^{\mathrm{th}}$ roots of a complex number. In Fig.~\ref{figcomplex} the complex number is presented as a vector in the complex plane. This makes it easy, among other things, to visualize fractional powers, as exemplified on the case of the square root and its two solutions $z_1,z_2$ given in Eq.~(\ref{eq:zargum}) for $n=2$. See~\cite{needham1997visual,wegert2010phase} for more useful concepts of visualizations connected with complex numbers and functions.

%\begin{tips}{Multi-valued maps, branch cuts and Riemann sheets}
As we have seen, the inverse mappings of functions can be multi-valued. Returning to the square function, its inverse is evidently a two-valued relationship, since the equation $w^2=z$ does not have a unique solution. Instead it has two solutions, $\pm\sqrt{z}$. Thus, in order to be able to think of the mapping
\begin{equation}
    w = f(z) = \pm\sqrt{z} = z^{1/2}
\end{equation}
as a function, we must make some restrictions. For example, for a non-negative real numbers $x$, we can simply declare that 
\begin{equation}
y = g(x) = +\sqrt{x} = x^{1/2}
\end{equation}
that is, we \emph{define} the square root to be the non-negative number $y$ for which $y^2 = x$. This restriction then defines a single-valued mapping, i.e., a function, on the non-negative real numbers. Nevertheless, we stress that this is a choice (however natural), and we could have defined the square root of a non-negative real number to be the non-positive number $y$ for which $y^2=x$.
A way of thinking about this restriction that will become useful very shortly is to imagine that we have erected a ``barrier'' at the single point $x=0$. This idea must be refined when working with complex numbers. To see how, let us think about how the value of $f(z)$ changes as the point $z$ moves along the unit circle in the complex plane. For definiteness, let us start at the point $z_0 = -1$ and move in the clockwise direction. We can write
\begin{equation}
    z = e^{i\varphi}
    \quad\mbox{so}\quad
    w = z^{1/2} = e^{i\frac{\varphi}{2}}\,,
    \quad \pi \ge \varphi \ge -\pi\,.
\end{equation}
We have chosen the argument $\varphi$ to run between $\pi$ and $-\pi$ and clearly $\varphi=\pm\pi$ both correspond to the same point $z_0=-1$. However, as $z$ moves all the way around the circle from $\varphi=\pi$ to $\varphi=-\pi$ and we return to our starting position, $w$ traces out only half of a circle. Thus a continuous motion in the complex plane changes the value of the square root from $e^{i\frac{\pi}{2}} = +i$ to $e^{-i\frac{\pi}{2}} = -i$.\footnote{Note that you can compute the value of $e^{i\frac{\pi}{2}}$ and $e^{-i\frac{\pi}{2}}$ unambiguously from Euler's relation.}
 
\begin{svgraybox}
This problem arises because the number $z=0$ is special: it has just one square root, while every other complex number $z\ne 0$ has two distinct square roots. This of course already happens for the reals, but in that case we can circumvent the problem by setting a barrier at the single point $x=0$. In the complex case though, we must make sure that {\it no continuous path} can completely encircle the special point $z=0$. This requires that we use a bigger ``barrier'' and this is commonly done by introducing a \emph{branch cut}. 
\end{svgraybox}
In our case, we can for example have the ``cut'' extend from the branch point\footnote{A branch point of a multi-valued map is a point such that {\it the function is discontinuous} when going around an arbitrarily small circuit around this point.} at $z=0$ along the negative real axis to the point at infinity, so that the argument of the variable $z$ in this cut plane is restricted to the range (notice the sharp inequality) $-\pi < \varphi \le \pi$.  
With this restriction, the square root mapping is single-valued, and we call this the {\it principal value} or the main branch of the argument. Notice though that the price we pay for making the function single-valued is that it is not continuous ``across the cut''. What this means is that we obtain two different values for the function if we approach a point on the cut from the two sides. For example, in our case, the cut is along the negative real axis, so we can approach some negative number, say $-R$ ($R>0$), from the upper or lower complex half-plane. So let us take $z=-R\pm i \eps$ with $\eps>0$ and consider the limit $\eps\to 0$. We find
\begin{eqnarray}
    \sqrt{z} = \sqrt{-R \pm \eps} = \sqrt{R e^{\pm i(\pi-\eps')}}
    &=& \sqrt{R} e^{\pm i(\pi/2 - \eps/2)}   \\ 
    &=& \sqrt{R}\big[\sin(\eps'/2) \pm i \cos(\eps'/2)\big]
    \xrightarrow{\eps'\to 0} \pm i \sqrt{R}\,, \nn
\end{eqnarray}
where we have used that a number of the form $-R\pm i \eps$ with a small but positive $\eps$ can be expressed as $R e^{\pm i(\pi-\eps')}$ with some $\eps'$ which is also small and positive. Notice that in this expression, the arguments are between $-\pi$ and $\pi$ in accordance with our choice above. Thus, the limit is either $+i\sqrt{R}$ or $-i\sqrt{R}$ depending on how we approach the cut and so the function is discontinuous. Incidentally, because of this discontinuity, we also have some freedom to define the value of the function along the cut itself~\cite{Vollinga:2004sn}. E.g. in our prescription, we have chosen to allow $-\pi < \varphi \le \pi$, so negative real numbers must be represented as $x = |x| e^{i\pi}$ and we find $\sqrt{x} = +i\sqrt{|x|}$. But we could also have chosen the same cut along the negative real axis but required that $-\pi \le \varphi < \pi$. This would have meant that the square root of a negative real number $x$ evaluates to $\sqrt{x} = -i \sqrt{|x|}$. All of these conventions are part of the principle value prescription, and it should be clear that choices different from those here can be made both for the position of the cut as well as the precise definition of the function on the cut. In fact, there is no general convention about the definition of the principal value, and another common choice is to take the argument in the interval $\varphi\in[0,2\pi)$, which corresponds to cutting along the positive real axis. As discussed in~\cite{wegertbook}, this ambiguity is a perpetual source of misunderstandings and errors. Indeed we have seen that even the value of the innocent-looking $\sqrt{-1}$ can be different based on the specific convention. Thus, it is very important to fix the notation and we should be aware of possible differences in software, e.g. in \math{} $\sqrt{-1}=+i$. We give an example of how to control numerical evaluation of results in Minkowskian kinematics by putting small imaginary parts to the input variable, see a pitfall note in the next section and  \wwwaux{miscellaneous}. 

Last, let us briefly turn to the notion of Riemann sheets. Staying with our example of the square root, we have seen that having introduced a cut along the negative real axis, we can still choose the value of $\sqrt{-1}$ as either $+i$ or $-i$. Thus, in order to give a complete description of the square root function, we can consider two copies of the cut $z$ plane and on one plane define the square root of $-1$ to be $e^{i\pi/2} = +i$, while on the other we define the square root of $-1$ to be $e^{-i\pi/2} = -i$. We call these two copies of the cut plane \emph{sheets}. Then, we see that the now single-valued function $w = z^{1/2}$ maps the first sheet to the right half of the $w$ plane while the second sheet is mapped to the left half of the $w$ plane. The two disconnected sheets can then be ``glued together'' along the cuts to form a single Riemann surface on which $w = f(z) = z^{1/2}$ can be defined as a continuous (in fact holomorphic, i.e., complex differentiable) function whose image is the entire $w$ plane (except $w=0$). 
In Fig.~\ref{fig:riemann-sheets} we present the Riemann-sheet plots for $w=\sqrt{z}$ and $w = \ln(z)$ generated with \math{}, see also  the source in \wwwaux{miscellaneous}. 
\begin{figure}
    \centering
    \includegraphics[width=0.45\textwidth]{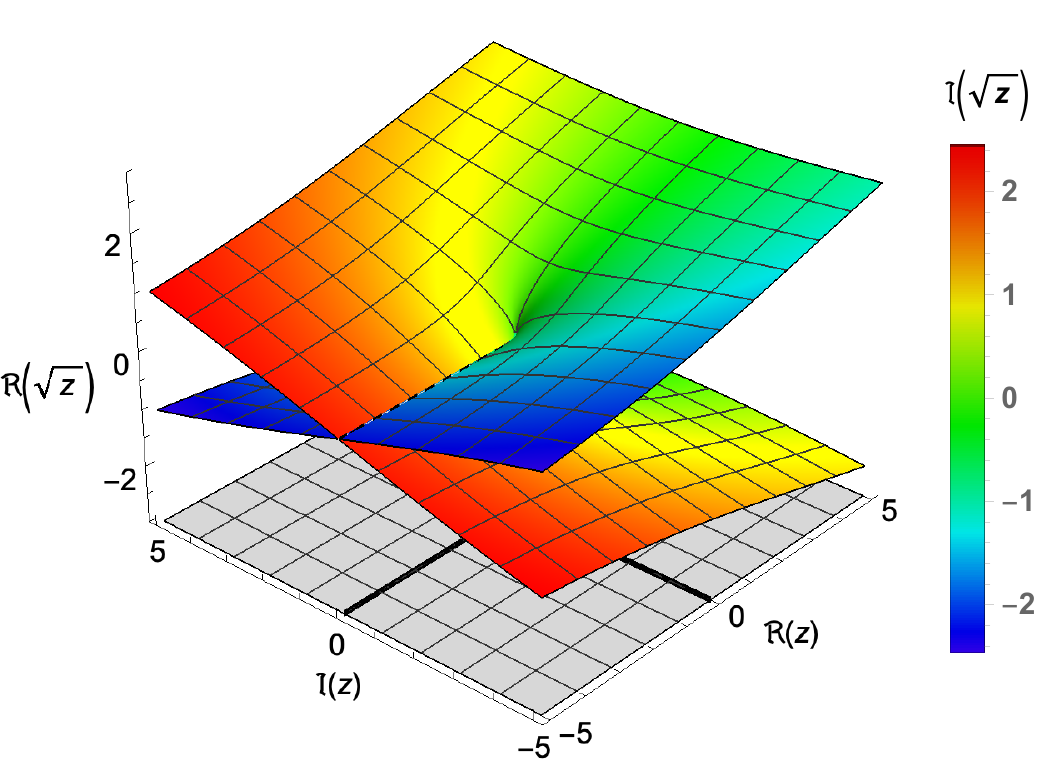}
    \includegraphics[width=0.45\textwidth]{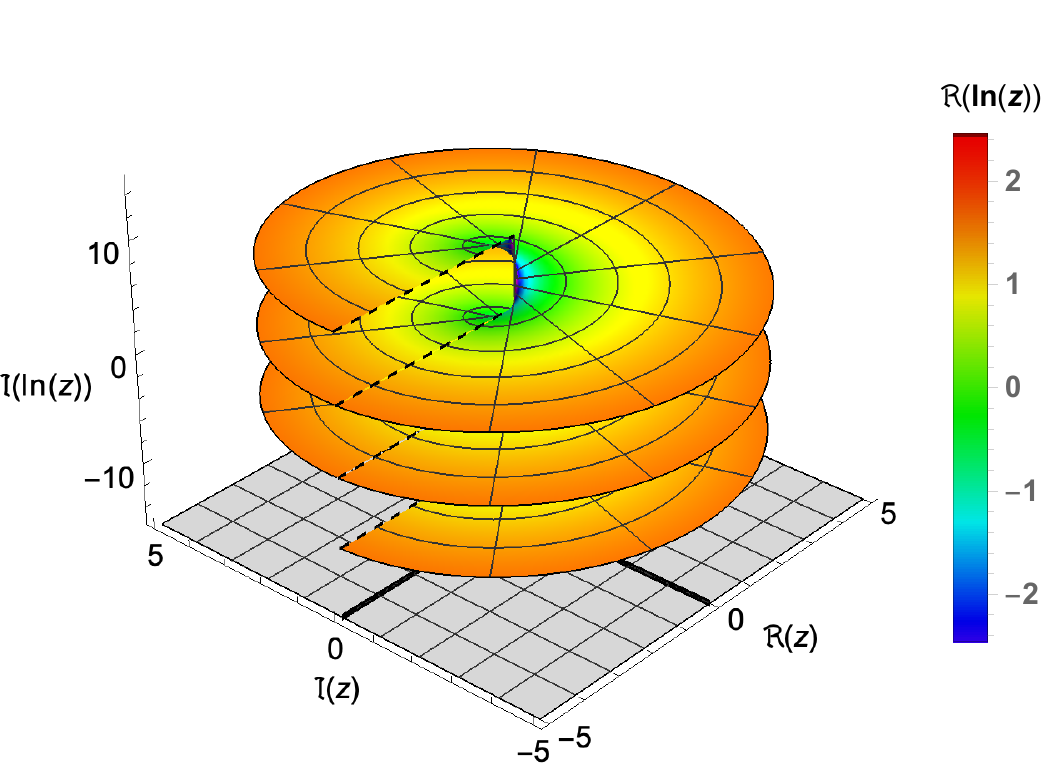}
    \caption{Riemann surfaces for the functions $w=\sqrt{z}$ (left) and $w = \ln(z)$ (right). The gray planes at the bottom represent the complex $z$ plane, while the colored sheets are the Riemann surfaces on which the functions are continuous and one-to-one. Dashed lines show the cuts along which the different branches are glued.}
    \label{fig:riemann-sheets}
\end{figure}

We will not discuss the details of this procedure any further here, but the interested reader will find more examples connected with Riemann sheets and branch cuts with nice visualizations in the textbook~\cite{Roussosimprop}.

\section{The Complex Logarithm \label{sec:3complex} }

The basic \mb{} formula in Eq.~(\ref{mb1}) involves just two types of functions: the power function $A^z$ and the gamma function $\Gamma(z)$. The power function for generic complex base and exponent is defined through the complex logarithm function, $\ln z$,
\begin{equation}
    z^w \equiv e^{w \ln z}\,,\quad z,w\in\mathbb{C}\,.
\end{equation}
But logarithms of generally complex arguments also appear explicitly, e.g. in Taylor expansions of quantities raised to $d=4-2\eps$-dependent powers
\begin{eqnarray}
a^{\epsilon} &\equiv& e^{\epsilon \ln (a)}
 = 1 +  \ln(a)~ \epsilon + \frac{1}{2} \ln^2(a)~\eps^2 +\ldots .
\label{expon}
\end{eqnarray}
Thus we will first study the complex logarithm and its generalizations in this and the next section. Then we will consider the gamma function in detail.

To begin, notice that the logarithm is the inverse of the exponential function $e^z$ that we have studied above. That is, $w=\ln z$ is by definition a complex number such that $z=e^w$. If the complex number $z$ is given in polar form, $z=R e^{i\varphi}$, then one such number is $w = \ln R + i\varphi$. However, this is not the only such number and evidently any number of the form $\ln R + i(\varphi + 2n\pi)$, $n\in\mathbb{Z}$ is also a logarithm of $z$. Thus, we are in the familiar situation that the inverse function is not single-valued. Similarly to the case of the square root function discussed previously, in order to define a single-valued function, we must make a branch cut and impose a principal value prescription. It is easy to check that again $z=0$ is a branch point: the logarithm changes by $2\pi i$ as we move on a circle of arbitrary size around this point. The branch cut will then extend in this case as well from the point $z=0$ to the point at infinity and in our principle value prescription, we will choose the cut along the negative real axis. Thus, we demand that the imaginary part of the logarithm be in the interval between $-\pi$  and $\pi$. Of course, the logarithm is then discontinuous across the cut. Indeed, for $R>0$ and $\eps>0$, we find with a calculation very similar to the one for the square root that in the limit $\eps \to 0$
\begin{align}
    \ln (-R \pm  i\eps) &\xrightarrow{\eps\to 0} \ln R \pm  i \pi\,,
    \\
    \ln (+R \pm  i\eps) &\xrightarrow{\eps\to 0} \ln R\,.
\end{align}
To finish the definition, we must choose the value of the logarithm along the negative real axis, and we employ the widely used prescription that the argument of the logarithm takes values in the interval $(-\pi,+\pi]$. In practice, this amounts to setting the logarithm of a negative real number $-R$, $R>0$ to $\ln(-R) = \ln(R) + i\pi$. We note that with this choice, the function is continuous as the cut is approached coming around the finite endpoint of the cut in the counterclockwise direction.

Summing up the above considerations, in the general case the logarithm can be defined as
\ba
\label{eq:logz1}
\ln z = \ln (a+ib) &\equiv& \ln \left( \rho e^{i\varphi} \right)
= \ln \rho + i \varphi,
\\
\rho &=& \sqrt{a^2+b^2},
\\
\varphi &=& \arctan \frac{b}{a} + \pi \theta(-a) \rm{sign}(b)\,,
\quad \Re(z)\,,\Im(z)\ne 0\,.
\label{appx147}\ea
Here we assume that $a\ne 0$ and $b\ne 0$. Then, the ratio $\frac{b}{a}$ which appears in the argument of the $\arctan$ exists and is real. Moreover, we choose the branch of the $\arctan$ function that takes values in the interval $(-\pi/2, +\pi/2)$. Furthermore $\theta$ is the step function with $\theta(a)=1$ if $a>0$ and $\theta(a)=0$ if $a<0$. Last $\rm{sign}$ is the sign of its argument, i.e.,  $\rm{sign}(b) = 1$ if $b>0$ and $\rm{sign}(b) = -1$ if $b<0$. Turning to the special cases, if $a\ne 0$ and $b=0$, we have simply
\begin{equation}
\ln z = \ln (a) = \ln|a| + i \pi \theta(-a)\,,\qquad \Im(z)=0\,,
\label{eq:logz2}
\end{equation}
while for $a=0$ and $b\ne 0$ we find
\begin{equation}
\label{eq:logz3}
\ln z = \ln (ib) = \ln|b| + i \frac{\pi}{2} \rm{sign}(b)\,,
\qquad \Re(z)=0\,.
\end{equation}
Finally, for $a=b=0$, i.e., at $z=0$, the logarithm function is not defined.
\begin{tips}{The Logarithm Function in \math{} -- a Pitfall}
The implementation of the logarithm function in \math{} follows the principal value prescription given above. In particular, the function is discontinuous across the negative real axis as shown by e.g. the numerical evaluations\footnote{In this book we use `minted' tex style to highlight the \math{} In and Out     %{\color{blue}\rm In[..]} and {\color{blue}\rm Out[..]} 
cells to \LaTeX~\cite{wwwminted}.} 
% 

%\begin{listing}[H]
\begin{minted}[frame=single,breaklines,fontsize=\small]{mathematica}
In[1]:= Log[-2 + 0.001 I]  
Out[1]:= 0.6931473055599296 + 3.14109265363146 I
In[2]:= Log[-2 - 0.001 I]  
Out[2]:= 0.6931473055599296 - 3.14109265363146 I
\end{minted}
%\end{listing}

On the cut itself, \math{} evaluates the logarithm symbolically as
 
\begin{minted}[frame=single,breaklines,fontsize=\small]{mathematica}
In[3]:= Log[-2]  
Out[3]:= I Pi + Log[2]
\end{minted}

The fact that the function is discontinuous across the negative real axis can lead to interesting pitfalls when computing with non-exact (i.e., machine precision) numbers. E.g. consider the variable
\begin{equation}
    x = \frac{\sqrt{1-\frac{4m^2}{M^2}}-1}{\sqrt{1-\frac{4m^2}{M^2}}+1}\,.
\end{equation}
For $m>M/2$, the square root is imaginary and in fact, $x$ is just a complex phase, i.e., $x=e^{i\theta}$ for some real $\theta$. E.g. choosing $m=1$ and $M=1.25$, we find 
 
\begin{minted}[frame=single,breaklines,fontsize=\small]{mathematica}
In[4]:= xval = (Sqrt[1-4 (m^2/M^2)]-1)/(Sqrt[1-4(m^2/M^2)]+ 1) /. {m -> 1, M -> 1.25}  
Out[4]:= 0.21875 + 0.9757809372497497 I
\end{minted}

Now, let us evaluate the logarithm of $(1-x)^2/x$ at this particular point in two ways. First, by substituting this expression directly into the logarithm function and second, by substituting the partial fractioned form of the expression ({\tt xval} is defined above), 

\begin{minted}[frame=single,breaklines,fontsize=\small]{mathematica}
In[5]:= {Log[(1 - x)^2/x], Log[1/x - 2 + x]} /. x -> xval
Out[5]:= {0.44628710262841936 - 3.141592653589793*I, 
  0.44628710262841953 + 3.141592653589793*I}
\end{minted}

{\it Although the two expressions are exactly equivalent mathematically, we obtain different numerical results!} The solution to this conundrum becomes obvious if we evaluate the arguments themselves
 
\begin{minted}[frame=single,breaklines,fontsize=\small]{mathematica}
In[6]:= {(1 - x)^2/x, 1/x - 2 + x} /. {x -> xval}
Out[6]:= {-1.5624999999999998 - 1.1102230246251565*^-16*I,  -1.5625 + 0.*I}
\end{minted}

Apparently $(1-x)^2/x$ is real and negative, since the imaginary parts are zero up to machine precision, but due to the finite resolution of the number representation, the formally equivalent expressions evaluate to numbers whose tiny imaginary parts have the opposite signs! Thus, in one case, due to numerical rounding errors, the logarithm is actually evaluated below the cut, giving the incorrect imaginary part. Such subtleties must be kept in mind when using numerical software.
\end{tips}

Turning to some basic properties of the complex logarithm, first it is immediately obvious that the relation $\ln(x y) = \ln(x) + \ln(y)$, well-known from the real case, cannot hold in general in the complex case. Indeed, already for two negative reals $x,y<0$ the left hand side is real, while the right hand side will have an imaginary part equal to $2\pi$. A similar situation occurs for the relation $\ln(x/y) = \ln(x) - \ln(y)$, where for non-zero real numbers $x$ and $y$ with different signs (i.e., such that $x/y <0$), the left hand side has an imaginary part of $+\pi$, while the imaginary part of the right hand side is either $+\pi$ if $x<0$ and $y>0$ or $-\pi$ if $x>0$ and $y<0$. Indeed, the correct relations read
\begin{align}
    \ln(w z) &= \ln w + \ln z + 2n\pi i\,,\qquad n\in\mathbb{Z}\,, 
\label{eq:ln_wtimesz}
\\
    \ln\left(\frac{w}{z}\right) &= \ln w - \ln z + 2n\pi i\,,\qquad n\in\mathbb{Z}\,. 
\label{eq:ln_woverz}
\end{align}
That is, the logarithm of a product (quotient) of complex numbers is the sum (difference) of the logarithms, \emph{plus a multiple of} $2\pi i$, such that the argument of the result is in the interval $(-\pi,\pi]$. In practice, there are two solutions to this situation. In numerical evaluation, assuming we have in general a product or quotient of two complex numbers, we can evaluate the product or quotient of the two complex numbers first and then use the relations in Eqs.~(\ref{eq:logz1})--(\ref{eq:logz3}) to compute the logarithm. For analytical results, we can split the logarithm of a product (quotient) of two complex numbers to the sum (difference) of two logarithms as in the positive real case, adding a proper relation for an additional phase. Surprisingly, it is not entirely straightforward to write such a relation, though and this question is taken up in Problem~\ref{prob:logphase}.

\begin{tips}{Complex Quadrature under the Logarithm} 
\label{exsquares_sec2}
A commonly occurring situation is that we encounter the logarithm of a quadratic expression, $\ln(ax^2+bx+c)$. Since the equation
\ba
ax^2+bx+c=0
\ea
has two roots, say $x_1$ and $x_2$, we can write
\begin{equation}
    ax^2+bx+c = a(x-x_1)(x-x_2)\,.
\end{equation}
However, as just discussed, when the factors of this product are not all positive real, we must be careful with using the logarithmic identities on this product. In particular, if the roots are real, we must be careful as we vary $x$ as the sign of one of the factors changes as we cross the roots (assume here that $x$ is real).

One way to proceed in this situation is to assign a small imaginary part to the quadratic expression. Then we consider the equation
\begin{equation}
ax^2+bx+c-i\epsilon=0,
\end{equation}
which has the roots (to first order in the small quantity $\eps$)
\begin{equation}
\bar{x}_1 = x_1 + i\frac{\epsilon}{a(x_1-x_2)},\qquad
\bar{x}_2 = x_2 - i\frac{\epsilon}{a(x_1-x_2)}.
\end{equation}
Let us assume for simplicity that the original roots are ordered such that $x_1 > x_2$, so that $x_1-x_2>0$. Then we can simply redefine the small quantity $\eps$ by this positive quantity and write
\begin{equation}
\bar{x}_1 = x_1 + i\frac{\epsilon}{a},\qquad
\bar{x}_2 = x_2 - i\frac{\epsilon}{a}.
\end{equation}
To derive our final expression, we need one more consideration. Namely, let $A$ and $B$ be real numbers. Then the following holds 
\ba
\ln(AB-i\epsilon) = \ln(A-i\epsilon') + \ln(B-i\frac{\epsilon}{A}),
\ea
where the signs of $\epsilon'$ and $\epsilon$ are equal. Using the above, we obtain the following useful relation:
\ba
\ln(ax^2+bx+c-i\epsilon) = 
\ln(a-i\epsilon') + \ln(x-x_1-i\frac{\epsilon}{a})
+ \ln(x-x_2 + i\frac{\epsilon}{a}) ,
\ea
where again $x_{1,2}$ are roots of $x^2+bx+c=0$ and $x_1>x_2$.
\end{tips}

\section{Generalizations of the Logarithm Function: {Classical} Polylogarithms}  
\label{sec:Li-n}

In the previous section, we have introduced the complex logarithm as the inverse of the exponential function. However, there are several other ways to think about the logarithm which generalize naturally to a set of functions that plays a very important role in modern Feynman integral calculations. Here we briefly explore these generalizations.

To start, consider computing the integral of a rational function of a single variable, say $x$,
\begin{eqnarray}
\int dx\, \frac{P(x)}{Q(x)},
\end{eqnarray}
where $P(x)$ and $Q(x)$ are polynomials in $x$. We know form algebra that after performing polynomial division and partial fraction decomposition (over the complex numbers), the integrand can be reduced to the sum of a polynomial plus functions of the form $\frac{1}{(x-a)^n}$, where $n$ is a positive integer and $a$ is a (in general complex) constant. Hence, the problem can be reduced to the computation of the following integrals
\begin{equation}
    \int dx\, x^n
    \quad\mbox{and}\quad
    \int dx\, \frac{1}{(x-a)^n},\qquad n\in\mathbb{N}.
\end{equation}
These integrals can almost all be expressed in terms of just rational functions. However the second integral for $n=1$ cannot. In fact, this integral can be used to define a new function, the logarithm of $x$,
\begin{equation}
    \int \frac{dx}{x-a} = \ln (x-a) + C.
    \label{eq:int-log-def}
\end{equation}

%\begin{tips}{The arcus tangent and the logarithm}
\subsection{The Arcus Tangent and the Logarithm}
In basic courses of calculus, it is shown that the integral of the rational function $R(x) = \frac{1}{1+x^2}$ is just the arcus tangent function, which we already encountered in  Eq.~(\ref{appx147}) above,
\begin{equation}
\int \frac{dx}{1+x^2} = \arctan (x) + C.
\label{eq:arctan-def}
\end{equation}
Usually we choose to work on the  branch of the arcus tangent function where its values fulfills the condition
\bea
-\frac{\pi}{2} < \arctan (x) < +\frac{\pi}{2},
\eea
and we formally set $\arctan (\pm \infty) = \pm \frac{\pi}{2}$.

However, following our above discussion on integrating rational functions, we expect that the integral in Eq.~(\ref{eq:arctan-def}) should be expressible in terms of at most logarithms. This is of course true and it is easy to derive the relation between the arcus tangent and logarithm functions. Indeed, by performing partial fraction decomposition over the complex numbers we have
\begin{equation}
\begin{split}
\int \frac{dx}{1+x^2}
	&= \int dx\left[\frac{1}{2 i (x-i)} - \frac{1}{2 i (x+i)}\right]
\\
	&= \frac{1}{2i}\ln(x-i) - \frac{1}{2i} \ln(x+i)
\\
	&= -\frac{i}{2}\ln \left(\frac{1 + i x}{1 - i x}\right),
\end{split}
\label{eq:arctan-log}
\end{equation}
where we have used $\ln(a) - \ln(b) = \ln(a/b)$ to simplify the result. Incidentally, we note that $\arctan(z) = -i\, \mathrm{arctanh}(iz)$, where the areatangens hyperbolicus function gives the inverse of the hyperbolic tangent of the complex number $z$, i.e., $\mathrm{arctanh}(z) = \mathrm{tanh}^{-1}(z)$. Indeed,
{\tt Mathematica} evaluates

\begin{minted}[frame=single,breaklines,fontsize=\small]{mathematica}
In[7]:= (-I)*ArcTanh[I*z]
Out[7]:= ArcTan[z]
\end{minted}

As a heuristic check of the correctness of Eq.~(\ref{eq:arctan-log}), we may compare the Taylor expansions of $\arctan(x)$ and $-\frac{i}{2}\ln \left(\frac{1 + i x}{1 - i x}\right)$ for small $x$. We find

\begin{minted}[frame=single,breaklines,fontsize=\small]{mathematica}
In[8]:= Normal[Series[ArcTan[x], {x, 0, 5}]]
Out[8]:=  x - x^3/3 + x^5/5
\end{minted}

and

\begin{minted}[frame=single,breaklines,fontsize=\small]{mathematica}
In[9]:= Normal[Series[(-I/2)*Log[(1 + I*x)/(1 - I*x)], {x, 0, 5}]]
Out[9]:=  x - x^3/3 + x^5/5
\end{minted}

The expansion can of course be carried to higher orders and we obtain the same result at any expansion order.

Of course, it turns out that the logarithm function appearing in Eq.~(\ref{eq:int-log-def}) is the same function that we have been studying previously. However, this way of thinking about the logarithm immediately offers a path towards generalization: let us now look at integrating combinations of rational functions and the logarithm of $x$. By once more performing partial fraction decomposition, it is clear that we encounter integrals of the form
\begin{equation}
    \int dx\, x^n\, \ln (x)
    \quad\mbox{and}\quad
    \int dx\, \frac{\ln(x)}{(x-a)^n},\qquad n\in\mathbb{N}.
\end{equation}
Again we find that almost all of these integrals can be expressed in terms of rational functions and the logarithm, e.g.,   
\bqa
\int dx\, x^n\, \ln (x) &=& \frac{x^{n+1}}{(n+1)^2}\left[ (n+1)\ln(x)-1\right] +C ,\qquad n\in\mathbb{Z},\; n\ne -1,
\\
\int dx\frac{\ln (x)}{x}  &=&\frac{1}{2}\ln^2(x) + C,
\\
\int dx \frac{\ln (x)}{(x-a)^2} &=&\frac{\ln(x - a)}{a} - \frac{x \ln(x)}{a (x - a)} + C,
\\
\int dx \frac{\ln (x)}{(x-a)^3} &=&\frac{1}{2a}\left[\frac{1}{a - x} + \frac{(x - 2 a) x \ln(x)}{a (x - a)^2} - \frac{\ln(x - a)}{a}\right] + C.
\eqa
However, the integral
\begin{equation}
    \int dx \frac{\ln x}{x-a},\qquad a\ne 0,
\end{equation}
cannot be expressed in terms of just rational functions and logarithms and leads to a genuinely new function. This new function is called \emph{Spence's function}, or more commonly (Euler's) \emph{dilogarithm} and is customarily defined through the integral
\begin{equation}
\mathrm{Li}_2(x) \equiv -\int_0^x \frac{dt}{t}  \ln(1-t).
\label{eq:li2}
\end{equation}
\begin{svgraybox}
The function $\mathrm{Li}_2(x)$ can not be expressed by logarithms and rational functions. 
\end{svgraybox}
For real values of the argument, the Euler dilogarithm behaves as shown in Fig.~\ref{fig-dilog}. The function has a cut along the positive real axis starting at $x=1$ and for $x>1$ develops an imaginary part, as can be seen in the Figure. Approaching $x=-\infty$, $\mathrm{Li}_2(x)$ behaves like $\ln^2(|x|)$, which explains the name dilogarithm.
\begin{figure}[tbph] 
\begin{center}

\begin{tikzpicture}[scale=0.9]
  \begin{axis}[
      axis lines=middle,
        width=10cm, height=7cm,
        xmin=-3.2,xmax=3.2,
        ymin=-4.5,ymax=4.5,
        ytick={-4,-3,-2,-1,0,1,2,3,4},
        label style={font=\footnotesize},
        ticklabel style={font=\footnotesize},
        xlabel=$x$,
        ylabel=$\mathrm{Li}_2(x)$,
        x label style={at={(axis description cs:1,0.5)},anchor=west},
        y label style={at={(axis description cs:0.5,1)},anchor=south},
        legend style={draw=none},
        legend style={font=\footnotesize},
        minor x tick num = 3,
        minor y tick num = 1
      ]

    \addplot[
      line width=0.8pt]
    coordinates {
      (-3.2,-2.03042)
      (-3.1,-1.98524)
      (-3.,-1.93938)
      (-2.9,-1.89281)
      (-2.8,-1.84551)
      (-2.7,-1.79744)
      (-2.6,-1.74858)
      (-2.5,-1.6989)
      (-2.4,-1.64835)
      (-2.3,-1.5969)
      (-2.2,-1.54452)
      (-2.1,-1.49115)
      (-2.,-1.43675)
      (-1.9,-1.38127)
      (-1.8,-1.32465)
      (-1.7,-1.26684)
      (-1.6,-1.20778)
      (-1.5,-1.14738)
      (-1.4,-1.08558)
      (-1.3,-1.02228)
      (-1.2,-0.957405)
      (-1.1,-0.890838)
      (-1.,-0.822467)
      (-0.9,-0.752163)
      (-0.8,-0.679782)
      (-0.7,-0.605158)
      (-0.6,-0.528107)
      (-0.5,-0.448414)
      (-0.4,-0.365833)
      (-0.3,-0.280074)
      (-0.2,-0.1908)
      (-0.1,-0.0976052)
      (0.,0.)
      (0.1,0.102618)
      (0.2,0.211004)
      (0.3,0.32613)
      (0.4,0.449283)
      (0.5,0.582241)
      (0.6,0.727586)
      (0.7,0.889378)
      (0.8,1.07479)
      (0.9,1.29971)
      (1.,1.64493)
      (1.1,1.962)
      (1.2,2.12917)
      (1.3,2.24089)
      (1.4,2.31907)
      (1.5,2.3744)
      (1.6,2.41313)
      (1.7,2.43935)
      (1.8,2.45588)
      (1.9,2.46472)
      (2.,2.4674)
      (2.1,2.46506)
      (2.2,2.45859)
      (2.3,2.44869)
      (2.4,2.43594)
      (2.5,2.42079)
      (2.6,2.40362)
      (2.7,2.38473)
      (2.8,2.36439)
      (2.9,2.34281)
      (3.,2.32018)
      (3.1,2.29665)
      (3.2,2.27235)
    };
    \addplot[dashed,
      line width=0.8pt]
    coordinates {
      (-3.2,0.)
      (-3.1,0.)
      (-3.,0.)
      (-2.9,0.)
      (-2.8,0.)
      (-2.7,0.)
      (-2.6,0.)
      (-2.5,0.)
      (-2.4,0.)
      (-2.3,0.)
      (-2.2,0.)
      (-2.1,0.)
      (-2.,0.)
      (-1.9,0.)
      (-1.8,0.)
      (-1.7,0.)
      (-1.6,0.)
      (-1.5,0.)
      (-1.4,0.)
      (-1.3,0.)
      (-1.2,0.)
      (-1.1,0.)
      (-1.,0.)
      (-0.9,0.)
      (-0.8,0.)
      (-0.7,0.)
      (-0.6,0.)
      (-0.5,0.)
      (-0.4,0.)
      (-0.3,0.)
      (-0.2,0.)
      (-0.1,0.)
      (0.,0.)
      (0.1,0.)
      (0.2,0.)
      (0.3,0.)
      (0.4,0.)
      (0.5,0.)
      (0.6,0.)
      (0.7,0.)
      (0.8,0.)
      (0.9,0.)
      (1.,0.)
      (1.1,-0.299426)
      (1.2,-0.57278)
      (1.3,-0.824242)
      (1.4,-1.05706)
      (1.5,-1.27381)
      (1.6,-1.47656)
      (1.7,-1.66702)
      (1.8,-1.84659)
      (1.9,-2.01644)
      (2.,-2.17759)
      (2.1,-2.33086)
      (2.2,-2.47701)
      (2.3,-2.61666)
      (2.4,-2.75037)
      (2.5,-2.87861)
      (2.6,-3.00183)
      (2.7,-3.12039)
      (2.8,-3.23464)
      (2.9,-3.34489)
      (3.,-3.45139)
      (3.1,-3.5544)
      (3.2,-3.65415)
    };
%    \legend{$Re\mathrm{Li}_2(x)$,$Im\mathrm{Li}_2(x)$}
    \legend{$\Re$,$\Im$}
  \end{axis}
\end{tikzpicture}
\end{center}
\caption[]{
The real and imaginary part of the dilogarithm $\mathrm{Li}_2(x)$ at real values.}
\label{fig-dilog}
\end{figure}
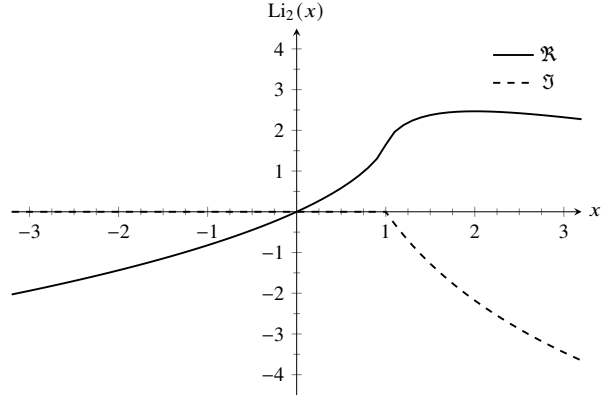
%

%
%
%\begin{tips}{The general dilogarithmic integral}
\subsection{The General Dilogarithmic Integral}
\label{sec:gen-Li2-int}

Having introduced the dilogarithm function, we can now solve more general integrals such as
\begin{equation}
    \int \frac{dx}{m x+b}\ln(n x+a).
    \label{eq:gen-Li2}
\end{equation}
In order to bring this integral to the form were we can apply Eq.~(\ref{eq:li2}), we need to perform some transformations of the integration variables. First, let us set $m x+b=t$ and change the variable of integration from $x$ to $t$. We find
\begin{equation}
    \int \frac{dx}{m x+b}\ln(n x+a)
    = \frac{1}{m}\int \frac{dt}{t} \ln\left(\frac{a m - b n + n t}{m}\right).
\end{equation}
Next, we use $t=-\frac{a m - b n}{n} u$. This transformation is chosen such that the argument of the logarithm factorizes,
\begin{equation}
    \frac{1}{m}\int \frac{dt}{t} \ln\left(\frac{a m - b n + n t}{m}\right)
    = \frac{1}{m}\int \frac{du}{u} \ln\left(\frac{(a m - b n)(1-u)}{m}\right).
\label{eq:ln-func-rel-ex}
\end{equation}
Now we come to an interesting point. In order to proceed, we must relate the logarithm in the integrand to $\ln(1-u)$. This is formally straightforward, since the logarithm of a product is the sum of the logarithms.\footnote{Here we put aside the possible subtleties associated with imaginary parts, see Eq.~(\ref{eq:ln_wtimesz}) and the discussion there.} However, the general point to take away is that when we define new functions, we must also find ways of bringing expressions involving those new functions to standard forms, so that the definitions can actually be applied. We will come back to this point later and also in section~\ref{sec:3anal}. Returning to our integral, let us proceed formally and decompose the logarithm,
\begin{equation}
    \frac{1}{m}\int \frac{du}{u} \ln\left(\frac{(a m - b n)(1-u)}{m}\right)
    =
    \frac{1}{m}\int \frac{du}{u} \left[\ln\left(\frac{a m - b n}{m}\right)
    +\ln(1-u)\right].
\end{equation}
We see that we can now apply Eqs.~(\ref{eq:int-log-def})~and~(\ref{eq:li2}) to perform the integrations and we find the result
\begin{equation}
    \frac{1}{m}\int \frac{du}{u} \left[\ln\left(\frac{a m - b n}{m}\right)
    +\ln(1-u)\right] 
    =
    \frac{1}{m}\left[\ln\left(\frac{a m - b n}{m}\right)\ln u - \mathrm{Li}_2(u)\right] 
    + C.
\end{equation}
To be absolutely clear, here we have used the fact that Eq.~(\ref{eq:li2}) implies that the 
primitive function of $-\frac{\ln(1-u)}{u}$ is the dilogarithm, $\mathrm{Li}_2(u)$. Last, we can replace $u$ with our original variable $x$. Using $u = \frac{n (m x+b)}{-a m + b n}$, we find the final result
\begin{equation}
	\int \frac{dx}{m x+b}\ln(n x+a) 
	=
	\frac{1}{m}
	\left[\ln\left(\frac{a m - b n}{m}\right)\ln \left(\frac{n (m x+b)}{-a m + b n}\right)
	- \mathrm{Li}_2\left(\frac{n (m x+b)}{-a m + b n}\right)\right] + C.
\label{eq:gen-Li2-int}
\end{equation}

Finally, we note that in the \math{} language, the dilogarithm function $\mathrm{Li}_2(x)$ is represented by {\tt PolyLog[2,x]} and we can evaluate our integral immediately,

\begin{minted}[frame=single,breaklines,fontsize=\small]{mathematica}
In[10]:= Integrate[Log[n*x + a]/(m*x + b), x]
Out[10]:= (Log[(n*(b + m*x))/((-a)*m + b*n)]*Log[a + n*x] + PolyLog[2, (m*(a + n*x))/(a*m - b*n)])/m
\end{minted}

Apparently the result is different from what we obtained in Eq.~(\ref{eq:gen-Li2-int})! However there is no contradiction. It can be shown using some properties of the dilogarithm (namely Eq.~(\ref{eq:li2omz}) below) that the difference is a constant, independent of $x$.\footnote{At least for parameters where both expressions are real. Recall we were not careful with the imaginary parts in our derivation. See Problem~\ref{prob:Li2} for a discussion of the general case.}
%\end{tips}
%
%
\subsection{Classical Polylogarithms}
Repeating the arguments that have lead us from the logarithm to the dilogarithm, we arrive at the \emph{classical polylogarithms} $\mathrm{Li}_n(x)$. Indeed, it is easy to check that by a proper application of integration by parts, all integrals of the form
\begin{equation}
\int dx\, x^n \mathrm{Li}_2(x),\qquad\mbox{and}\qquad
\int \frac{dx}{(x-a)^n} \mathrm{Li}_2(x),\qquad n\ne -1,
\end{equation}
can be evaluated in terms of rational functions, logarithms and dilogarithms. However, the integral
\begin{equation}
\int \frac{dx}{x}  \mathrm{Li}_2(x) \equiv \mathrm{Li}_3(x) + C,
\label{eq:li3}
\end{equation}
leads to a function, the trilogarithm $\mathrm{Li}_3(x)$, that cannot be expressed with the functions already introduced. In this way, we are lead to introduce the \emph{classical polylogarithm of weight $n$}, $\mathrm{Li}_n(x)$ by the recursive definition\footnote{In the next section, we will see that Eq.~(\ref{eq:lin}) is just a special case of the iterated integral that defines a class of functions known as multiple polylogarithms. However, $\mathrm{Li}_2(x)$ and its generalizations $\mathrm{Li}_n(x)$ were historically the first functions of this class to be studied, thus in order to distinguish them from the more general cases to be studied below, we refer to them as \emph{classical polylogarithms}.}
\begin{equation}
\boxed{
\mathrm{Li}_n(x) \equiv \int_0^x \frac{dt}{t}  \mathrm{Li}_{n-1}(t).}
\label{eq:lin}
\end{equation}
The above equation holds also for $n=2$ if we define
\begin{equation}
\mathrm{Li}_1(x) \equiv -\ln(1-x).
\label{eq:li1}
\end{equation}

We have defined classical polylogarithms above via the integral in Eq.~(\ref{eq:lin}). However, it turns out that they may also be represented as simple infinite sums. This is quite relevant in connection with the Mellin-Barnes method, since as we have already mentioned, one possible way of solving \MB{} integrals is by converting them to sums using Cauchy's residue theorem. Hence let us briefly turn to this alternative way of defining classical polylogarithms. We begin by noting that
\begin{equation}
-\ln(1-x) = \sum_{k=1}^{\infty} \frac{x^k}{k}.
\end{equation}
The sum on the right hand side is convergent for $|x|<1$ and in fact can be used as a starting point to define the logarithm function in yet one more approach. Now, it can be shown that the simple generalization
\begin{equation}
\mathrm{Li}_n(x) = \sum_{k=1}^{\infty} \frac{x^k}{k^n}
\label{eq:lin-sum}
\end{equation}
leads precisely to the classical polylogarithm of weight $n$, which we have introduced in Eq.~(\ref{eq:lin}). Both representations are important and can be used to good effect when for solving \MB {} integrals analytically, as will be discussed in detail in chapter~\ref{chapter-MBanal}.

Classical polylogarithms obey a number of important functional relations. For example, it is always possible to relate $\mathrm{Li}_n(1/z)$ to $\mathrm{Li}_n(z)$,
\begin{align}
\mathrm{Li}_1\left(\frac{1}{z}\right) 
	&= -\ln\left(1-\frac{1}{z}\right) 
	= -\ln(1-z) + \ln(-z)
	= \mathrm{Li}_1(z) + \ln(-z),
\label{eq:li1z-inv}
\\
\mathrm{Li}_2\left(\frac{1}{z}\right) 
	&=-\mathrm{Li}_2(z) - \frac{1}{2} \ln^2(-z) - \zeta(2),
\label{eq:li2z-inv}
\\
\mathrm{Li}_3\left(\frac{1}{z}\right) 
	&=\mathrm{Li}_3(z) - \frac{1}{6} \ln^3(-z) - \zeta(2)\ln(-z),	
\label{eq:li3z-inv}
\end{align}
and so on. In general $\mathrm{Li}_n(z) + (-1)^n \mathrm{Li}_n(1/z)$ can be written as an $n$-th degree polynomial in $\ln(-z)$. Some other useful relations for classical polylogarithms are
\begin{align}
\litwo(1-z) 
	&=  -\litwo(z) +\zeta(2) - \ln(z)\ln(1-z),
\label{eq:li2omz}
\\
\litwo\left(1-\frac{1}{z}\right) 
	&= -\litwo(1-z) - \frac{1}{2}\ln^2(z),
\label{eq:li2om1z}
\intertext{and}
\litri\left(1-\frac{1}{z}\right) 
	&=  -\litri(z)  - \litri(1-z) + \zeta(3) + \frac{1}{6}\ln^3(z) 
	+ \zeta(2)\ln(z) 
\\	&- 
	\frac{1}{2}\ln^2(z)\ln(1-z).
\label{eq:li3om1z}
\end{align}
The square relation
\begin{equation}
\mathrm{Li}_n(z) + \mathrm{Li}_n(-z) = 2^{1-n} \mathrm{Li}_n(z^2)
\label{eq:lin-sq}
\end{equation}
is also sometimes employed in practical calculations. These relations are often useful for bringing polylogarithmic integrals to the standard form of Eq.~(\ref{eq:lin}) as well as for transforming the arguments to specific regions for numerical evaluation. For example, the convergence of the simple series expansion of the dilogarithm in Eq.~(\ref{eq:lin-sum}) is already quite bad for $|z|>1/2$. Hence an efficient evaluation of the dilogarithm (and more generally of $\mathrm{Li}_n(z)$) transforms the argument $z$ to a region where the modulus and real part are bound: $|z|\le 1$ and $\Re(z) \le 1/2$ using e.g. Eqs.~(\ref{eq:li2z-inv})~and~(\ref{eq:li2omz}).

Turning to specific values of polylogarithms, it is clear from the definition of $\lin(z)$ either as an integral, Eq.~(\ref{eq:lin}), or a sum, Eq.~(\ref{eq:lin-sum}), that $\mathrm{Li}_n(0) = 0$. Moreover, $\mathrm{Li}_n(1) = \zeta(n)$ follows immediately from the representation of $\lin(z)$ as a sum. Some other special values are also easy to determine, e.g. for the dilogarithm we have
\begin{align}
\mathrm{Li}_2\left(  0\right) &= 0,
\label{eq-r3}
\\
\mathrm{Li}_2\left(  1\right) &= \frac{\pi^2}{6} = \zeta(2) \approx 1.64493, \label{eq:zeta2}
\\
\mathrm{Li}_2\left( -1\right) &= - \frac{1}{2}\mathrm{Li}_2\left(   1\right) 
	= -\frac{\pi^2}{12} \approx -0.822467,
\\
\mathrm{Li}_2\left(  \frac{1}{2}\right) &= \frac{\pi^2}{12}-\frac{1}{2}\ln^2(2) 
	\approx 0.582241,
\\
 \mathrm{Li}_2\left(  \pm i\right) &= i G -\frac{\pi^2}{48} 
 	\approx -0.205617 + 0.915965 i. 
\label{eq:catalan}
\end{align}
The constant $G$ in Eq.~(\ref{eq:catalan}) above is the Catalan's constant, $G=\sum\limits_{n=0}^\infty \frac{(-1)^n}{(2n+1)^2}$. It is conjectured that Catalan's number is irrational and transcendental~\cite{scott_2002}.

In addition to the functional relations discussed above, classical polylogarithms admit a number of other representations as integrals and series. For example, the integral
\begin{equation}
%\lin(z) = S_{n-1,1}(z) = 
\frac{(-1)^{n-1}}{(n-2)!}\int_{0}^{1} \frac{dt}{t}\ln^{n-2}(t)\ln(1-zt) \label{eq:nielsen}
\end{equation}
simply evaluates to the classical polylogarithm $\mathrm{Li}_n(z)$. Its straightforward generalization, 
\begin{equation}
S_{n,p}(z) 
 	= \frac{(-1)^{n+p-1}}{(n-1)!p!}\int_{0}^{1} \frac{dt}{t}\ln^{n-1}(t)\ln^p(1-zt) \label{eq:nielsenS}
\end{equation}
defines the so-called \emph{Nielsen generalized polylogarithm} $S_{n,p}(z)$. Clearly we have $S_{n-1,1}(z) = \lin(z)$. We note that in \math{} the Nielsen generalized polylogarithm function $S_{n,p}(z)$ is represented by {\tt PolyLog[n,p,z]}, while {\tt PolyLog[n,z]} gives the classical polylogarithm of weight $n$.

Turning to series expansions, an efficient  method may be found in~\cite{Vollinga:2004sn} and~\cite{'tHooft:1979xw,actis:2008br}. 
 
For $\litwo(z)$ we have~\cite{Vollinga:2004sn}
\begin{eqnarray}
\litwo(z) &=& \sum_{j=0}^{\infty} \frac{B_{j}}{(j+1)!}\left[-\ln(1 - z)\right]^{j+1}
 \\&=&
-\ln(1 - z) - \frac{1}{4}\ln^2(1 - z)
 + 4\pi \sum_{j=1}^{\infty} \zeta(2j)\frac{(-1)^j}{2j+1}\left[\frac{\ln(1 - z)}{2\pi}\right]^{2j+1}. \nonumber
\end{eqnarray}
Here the $B_j$ are Bernoulli numbers, $B_0=1, B_1=-1/2$, etc. In \math{}
$B_n$ are represented as {\tt BernoulliB[n] = BernoulliB[n, 0]} and historical facts about these numbers can be found in the chapter ``Gamma as a Decimal'' in~\cite{havilgamma}. This expansion with Bernoulli numbers ensure rapid convergence. One of many possible ways of defining the Bernoulli numbers $B_n$ is throuth a generating function,
\bea
\frac{t}{e^t - 1} = \sum_{m=0}^{\infty} B_m \frac{t^m}{m!}
\eea
%Table[4 Pi* Zeta[2 m] ((-1)^m /(2 m + 1))*(Log[1 - z]/(2 Pi))^(2 m + 1), {m, 1, 4}]
%
Other useful series expansions for $\lin(z)$ are also given in~\cite{Vollinga:2004sn} and we reproduce one here for the special case of $n=3$,
\begin{equation}
  \litri(z) = 
  \sum_{j=0}^{\infty} \frac{C_3(j)}{(j+1)!}\left[-\ln(1 - z)\right]^{j+1},
\end{equation}
with
\begin{equation}
	C_3(j)= \sum_{k=0}^j \binom{j}{k} \frac{B_{j-k} B_k}{1+k},
\end{equation}
so that $C_3(0)=1$, etc. Using these expansions for $\litwo(z)$ and $\litri(z)$ we observe typically that retaining $n$ terms in the summation gives $n\pm 1$ digit accuracy. We mention that just as for the logarithm, the evaluation of polylogarithms on their cuts (the positive real axis beginning at $z=1$) requires the adoption of a particular convention and hence must be treated with special care. We refer to~\cite{Vollinga:2004sn} for a discussion of these issues.

\begin{comment}
The expansions for $\litwo$ may be derived using an integral representation:
\bea
\mathrm{Li}_2\left( x\right) &=& - \int_0^1 \frac{dt}{t} \ln(1-xt)
~=~  - \int_0^x \frac{dt}{t} \ln(1-t).
\eea
The expansions for $\litwo$ may be derived using an integral representation:
\bea
\mathrm{Li}_2\left( x\right) &=& - \int_0^1 \frac{dt}{t} \ln(1-xt)
~=~  - \int_0^x \frac{dt}{t} \ln(1-t).
\eea
\end{comment}

\begin{tips}{One more Dilogarithmic Integral}
A special integral frequently met in one-loop Feynman integral calculations is (see~\cite{Passarino:1978jh}):
\bea
\int_0^1 \frac{dx}{x-x_0} \left[ \ln \left(x-x_A \right) - \ln \left( x_0-x_A\right) \right]
= \mathrm{Li}_2\left(  \frac{x_0}{x_0-x_A}\right)
- \mathrm{Li}_2\left(  \frac{x_0-1}{x_0-x_A}\right),
\nonumber\\
\eea
which is valid for arbitrary complex $x_A$ and real $x_0$.
The formula demonstrates the need of a subtraction at $x=x_0$ in order to make the integral well-defined for $0\leq x_0 \leq 1$.
\end{tips}

%\commgs{Keep this comment here?}
The functions and some expansions which have been discussed here are important in systematic order by order $\eps$ studies of \texttt{FI}~\cite{Devoto:1983tc}. Nowadays suitable packages exist in \math{} or \texttt{Fortran}~\cite{Vollinga:2004sn,Maitre:2005uu,Naterop:2019xaf,Gehrmann:2001pz,Buehler:2011ev}, see Appendix~\ref{introA} for details.

\section{Multiple Polylogarithms and Beyond\label{sec:hplsmpls}}
\label{sec:MPLs}

The logarithm as well as all of its generalizations discussed so far are just special cases of a set of functions known as multiple polylogarithms. Similarly to the classical polylogarithms, multiple polylogarithms\footnote{MPLs are also known as generalized polylogarithms (GPLs), see appendix~\ref{app:mplsgpls} for details and packages.}  (MPLs) can be defined recursively for $n\ge 1$, via the iterated integral~\cite{Goncharov:1998kja,Goncharov:2001iea}
\begin{equation}
G(a_1,\ldots,a_n;z) = \int_0^z \frac{dt}{t-a_1} G(a_2,\ldots,a_n;t)\,,
\label{eq:G-def}
\end{equation}
with $G(z) = G(;z) = 1$.\footnote{The above definition was already present in the works of Poincar\'e, Kummer and Lappo-Danilevskij~\cite{Lap34} as ``hyperlogarithms'' as well as implicitly in Chen's work on iterated integrals~\cite{Chen77}.} Here the $a_i\in\mathbb{C}$ are constants and $z$ is a complex variable. For the special case when all of the $a_i$'s are zero, we define
\begin{equation}
G(\vec{0}_n;z) = \frac{1}{n!} \ln^n z, 
\label{eq:G-def-0}
\end{equation}
where $\vec{0}_n$ denotes a set of $n$ zeros. The number $n$ is called the {\it weight} of the polylogarithm. It is clear from the definition that MPLs contain the ordinary logarithm as well as the classical polylogarithms as special cases. In particular we have
\begin{align}
G(\vec{a}_n;z) &= \frac{1}{n!}\ln^n \left(1 - \frac{z}{a}\right),
\\
G(\vec{0}_{n-1},1;z) &= -\mathrm{Li}_{n}(z),
\label{eq:G-lnn-Lin}
\end{align}
where $\vec{a}_n=(a,\ldots,a)$ denotes a sequence of $a$'s of length $n$. We note the definition above corresponds to what is commonly used in the physics literature (see e.g.~the usage in the {\tt PolyLogTools} pagkage, ref.~\cite{Duhr:2019tlz}), however the notation for MPLs used in the mathematical literature (e.g.~\cite{Goncharov:2005sla}) differs slightly from this:
\begin{equation}
I(a_0;a_1,\ldots,a_n;a_{n+1}) = \int_{a_0}^{a_{n+1}} \frac{dt}{t-a_n}
	I(a_0;a_1,\ldots,a_{n-1};t)
\label{eq:I-def}
\end{equation}
and $I(a_0;;a_1)=1$. The two definitions are related by (note the reversal of the arguments)
\begin{equation}
G(a_n,\ldots,a_1;a_{n+1}) = I(0,a_1,\ldots,a_n;a_{n+1}).
\end{equation}
The iterated integrals defined by Eq.~(\ref{eq:I-def}) are slightly more general than those defined in Eq.~(\ref{eq:G-def}), since they allow for a generic base point of integration $a_0$. Nevertheless, it is easy to see that every integral with a generic base point can be converted into a combination of integrals with base point zero. Even so, some properties of MPLs are more straightforward to express in the $I$ notation and its use in the mathematical literature justifies its mention here.

In addition to the integral representation of Eq.~(\ref{eq:lin}), classical polylogarithms can also be defined by their series representation as in Eq.~(\ref{eq:lin-sum}). Above we have presented the generalization of the integral definition to MPLs. However, there is also a way to generalize the series definition,
\begin{equation}
\begin{split}
\mathrm{Li}_{m_1,\ldots,m_k}(z_1,\ldots,z_k) &= \sum_{0<i_1<i_2\cdots < i_k} 
	\frac{z_1^{i_1}}{i_1^{m_1}}\frac{z_2^{i_2}}{i_2^{m_2}}\cdots\frac{z_k^{i_k}}{i_k^{m_k}}
\\ &=
	\sum_{i_k=1}^{\infty} \frac{z_k^{i_k}}{i_k^{m_k}} \sum_{i_{k-1}=1}^{i_k-1} \ldots
	\sum_{i_1=1}^{i_2-1}\frac{z_1^{i_1}}{i_1^{m_1}}.
\end{split}
\label{eq:Li-def}
\end{equation} 
This definition makes sense whenever the sums converge (e.g. for $|z_i|<1$). The number $k$ is called the \emph{depth} of the MPL. The $G$ and $\mathrm{Li}$ functions define essentially the same class of functions and are related by 
\begin{equation}
\mathrm{Li}_{m_1,\ldots,m_k}(z_1,\ldots,z_k) = 
	(-1)^k 
	G\left(\vec{0}_{m_k-1},\frac{1}{z_k},\ldots,\vec{0}_{m_1-1},\frac{1}{z_1\ldots z_k}
	;1\right)
\label{eq:Li-to-G}
\end{equation}
or alternatively ($a_i\ne 0$)
\begin{equation}
    G(\vec{0}_{m_1-1},a_1,\ldots,\vec{0}_{m_k-1},a_k;z) 
    = (-1)^k \mathrm{Li}_{m_k,\ldots,m_1}\left(
    \frac{a_{k-1}}{a_k},\ldots,\frac{a_1}{a_2},\frac{z}{a_1}\right).
\label{eq:G-to-Li}
\end{equation}
{We will make use of both the integral and sum representation of MPLs when we discuss various approaches of obtaining analytic solutions to \MB{} integrals in chapter~\ref{chapter-MBanal}.}

Regarding the basic properties of MPLs, we note first of all $G(a_1,\ldots,a_n;z)$ is divergent whenever $z=a_1$, which is easy to check using the integral representation in Eq.~(\ref{eq:G-def}). Similarly, $G(a_1,\ldots,a_n;z)$ is analytic at $z=0$ (i.e., is given by a convergent power series around $z=0$) whenever $a_n\ne 0$, which is consistent with the series representation in Eq.~(\ref{eq:Li-def}). Due to the singularities at $z=a_i$ in the integral representation, MPLs in general have a very complicated branch cut structure. 

Second, if the rightmost index $a_n$ of $G(a_1,\ldots,a_n;z)$ is non-zero, then the function is invariant under the simultaneous rescaling of all of its arguments for any non-zero complex number $k$,
\begin{equation}
G(k a_1,\ldots, k a_n;k z) = G(a_1,\ldots, a_n;z),\qquad a_n\ne 0.
\end{equation}
MPLs also satisfy the so-called \emph{H\"older convolution}. Whenever $a_1\ne 1$ and $a_n\ne 0$, we have for all non-zero complex numbers $p$ that
\begin{equation}
G(a_1,\ldots,a_n;1) = 
	\sum_{k=0}^n (-1)^k G\left(1-a_k,\ldots,1-a_1;1-\frac{1}{p}\right)
	G\left(a_{k+1},\ldots,a_n;\frac{1}{p}\right).
\end{equation}
These examples make it clear that MPLs satisfy many functional relations. These of course include the functional relations for classical polylogarithms. For example, the H\"older convolution for $a_1=0$ and $a_2=1$ reduces simply to Eq.~(\ref{eq:li2omz}) with $z=\frac{1}{p}$.

Third, the product of two MPLs of weight $n_1$ and $n_2$ with the same argument $z$ can be written as a linear combination of MPLs of weight $n_1+n_2$ with argument $z$,
\begin{equation}
G(a_1,\ldots,a_{n_1};z)G(a_{n_1+1},\ldots,a_{n_1+n_2};z) 
	= \sum_{\sigma\in\Sigma(n_1,n_2)} 
	G(a_{\sigma(1)},\ldots,a_{\sigma(n_1+n_2)};z),
\label{eq:G-shuffle}
\end{equation}
where $\Sigma(n_1,n_2)$ denotes the set of \emph{shuffles} on $n_1+n_2$ elements. A shuffle is simply a permutation of the set $(a_1,\ldots,a_{n_1+n_2})$ that leaves the orderings of $(a_1,\ldots,a_{n_1})$ and $(a_{n_1+1},\ldots,a_{n_1+n_2})$ unchanged. Formally we have
\begin{equation}
\begin{split}
\Sigma(n_1,n_2) = \{ \sigma \in S_{n_1+n_2}|& \sigma^{-1}(1) < \ldots < \sigma^{-1}(n_1)
\\
&\mbox{and}\quad \sigma^{-1}(n_1+1) < \ldots < \sigma^{-1}(n_1+n_2)\},
\end{split}
\end{equation}
where $S_{n_1+n_2}$ denotes the group of permutations on $n_1+n_2$ elements. 
\begin{tips}{The Shuffle Product of MPLs}
Let us illustrate the concept of the shuffle product by a few examples
\begin{align*}
G(a;z)G(b;z) &= G(a,b;z) + G(b,a;z),
\\ 
G(a;z)G(b,c;z) &= G(a,b,c;z) + G(b,a,c;z) + G(b,c,a;z),
\\
G(a,b;z)G(c,d;z) &= G(a,b,c,d;z) + G(a,c,b,d;z) + G(a,c,d,b;z) 
\\
	&+ G(c,a,b,d;z) + G(c,a,d,b;z) + G(c,d,a,b;z).
\end{align*}
\end{tips}
This property turns the set of MPLs into a \emph{shuffle algebra}, i.e., a vector space equipped with shuffle multiplication. The vector space structure simply means that we can take finite linear combinations of MPLs with complex coefficients, while the shuffle product is defined in Eq.~(\ref{eq:G-shuffle}) above. The proof of the shuffle algebra relations is relatively straightforward and follows from the definition of MPLs as iterated integrals. In Problem~\ref{prob:G-shuffle}, you are asked to verify the shuffle algebra relation for the product of two MPLs, both of weight one.

Besides evaluating products of MPLs with the same argument, the shuffle algebra is also useful for extracting the singularities of MPLs at $z=0$. Indeed, as noted above, $G(a_1,\ldots,a_n;z)$ is analytic (i.e., can be expanded into a power series) at $z=0$ if $a_n\ne 0$. In case $a_n=0$, we can employ the shuffle algebra to rewrite $G(a_1,\ldots,a_n;z)$ in terms of functions whose rightmost index is non-zero and MPLs of the form $G(\vec{0}_k;z)=\frac{1}{k!}\ln^k(z)$. This procedure thus makes the (logarithmic) singularities around $z$ manifest.
\begin{tips}{Extracting Singularities of MPLs}
To illustrate the extraction of singularities around $z=0$ using the shuffle algebra, consider e.g. the function $G(a,0,0;z)$. Clearly the rightmost index is zero and hence the function has logarithmic singularities at $z=0$. Using the shuffle algebra, we compute
\begin{eqnarray}
&&G(a,0,0;z) = G(0,0;z)G(a;z) - G(0,a,0;z) - G(0,0,a;z)
\nn \\
& =	&  G(0,0;z)G(a;z) - G(0,0,a;z) - [G(0,a;z)G(0;z) - 2G(0,0,a;z)]
\nn \\
&=	&  G(0,0;z)G(a;z) - G(0;z)G(0,a;z) + G(0,0,a;z).
%\end{split}
\end{eqnarray}
\end{tips}

Fourth, there is another algebra structure defined on MPLs. As noted above, the shuffle algebra structure follows from the definition of MPLs as iterated integrals. However, their definitions as sums induces another type of multiplication. We will have more to say about this in chapter~\ref{sec:1anal}, however, to illustrate the basic idea, consider the product of two MPLs of depth one,
\begin{equation}
\begin{split}
\mathrm{Li}_1(z_1)\mathrm{Li}_1(z_2) &=
	\sum_{i_1=1}^{\infty} \frac{z_1^{i_1}}{i_1}
	\sum_{i_2=1}^{\infty} \frac{z_2^{i_2}}{i_2}
	= \sum_{0<i_1,i_2} \frac{z_1^{i_1} z_2^{i_2}}{i_1 i_2}
\\
	&= \sum_{0<i_1<i_2} \frac{z_1^{i_1} z_2^{i_2}}{i_1 i_2}
	+ \sum_{0<i_2<i_1} \frac{z_1^{i_1} z_2^{i_2}}{i_1 i_2}
	+ \sum_{0<i_1=i_2} \frac{z_1^{i_1} z_2^{i_2}}{i_1 i_2}
\\
	&= \mathrm{Li}_{1,1}(z_1,z_2) + \mathrm{Li}_{1,1}(z_2,z_1) 
	+ \mathrm{Li}_{2}(z_1 z_2)
\end{split}
\label{eq:Li-stuffle}
\end{equation}
In the second line, we have simply organized the double sum over $i_1$ and $i_2$ into three sums: in the first, $i_1<i_2$, in the second $i_1>i_2$, while the last sum represents the ``diagonal'' elements of the double sum, where $i_1=i_2$. These sums are then immediately recognized as $\mathrm{Li}$ functions, see Eq.~(\ref{eq:Li-def}).\footnote{To be absolutely clear, we note that the last sum on the second line of Eq.~(\ref{eq:Li-stuffle}) is simply $
{\displaystyle \sum_{0<i_1=i_2}} \frac{z_1^{i_1} z_2^{i_2}}{i_1 i_2} = 
{\displaystyle \sum_{0<i_1}} \frac{(z_1 z_2)^{i_1}}{i_1^2} = \mathrm{Li}_{2}(z_1 z_2)
$.} 
Products of MPLs of higher depth can be handled in a similar fashion. The algebra generated this way is called the \emph{stuffle algebra} (also called the \emph{quasi-shuffle algebra}) on MPLs. We emphasize that the stuffle algebra structure is completely independent of the shuffle algebra. The former is related to the definition of MPLs as iterated integrals, while the latter is connected with the sum representation. We mention that both shuffle and stuffle products preserves the weight. This means that the shuffle or stuffle product of two MLPs of weight $w_1$ and $w_2$ is a linear combination of MPLs of weight $w_1+w_2$, as seen in the examples above. We say that the shuffle and stuffle algebras are \emph{graded} by the weight. However, notice that the stuffle product does not preserve the depth of the sums, as evident from Eq.~(\ref{eq:Li-stuffle}). Instead the depths of terms in the product are bounded by the sum of depths of the factors. In this case we say that the stuffle algebra is \emph{filtered} by the depth.  

Fifth, while the shuffle and stuffle algebras certainly generate many functional relations among MPLs, there are many more complicated relations that do not arise in this way. In particular, relations that change the arguments of the functions, such as he H\"older convolution, cannot be covered. However, such functional equations are very often crucial in evaluating Feynman integrals in terms of MPLs. We have seen a very simple example of this phenomenon already in Eq.~(\ref{eq:ln-func-rel-ex}), where we needed to use the functional relation $\ln(ab)=\ln a + \ln b$ to proceed with the calculation. Indeed, in Feynman integral calculations, the following problem often arises. We would like to evaluate a multi-dimensional integral, but after integration over some variable, say $t_1$, we find MPLs of the form $G(a_1(t_2),\ldots,a_n(t_2);z(t_2))$, where the $a_i$'s as well as $z$ are functions of $t_2$, the next integration variable. In order to proceed, we must first bring this function to ``canonical form'', i.e., one where $t_2$ only appears in the form $G(a'_1,\ldots,a'_n,t_2)$ so that we perform the integration over $t_2$ using the definition of MPLs, Eq.~(\ref{eq:G-def}). As this rewriting will generally involve the change of arguments of the functions, we cannot hope to proceed using only the shuffle and stuffle algebra structures, but rather we need a general and flexible framework for deriving such relations. This framework is given by the \emph{Hopf algebra} structure of MPLs. Even a cursory description of this framework goes beyond the scope of this chapter, however we mention that it can be used to find, and in some sense circumvent, functional equations among MPLs. We refer to~\cite{Duhr:2014woa,Weinzierl:2022eaz} for discussions of this topic.

Finally, as the name and notation imply, classical polylogarithms, Nielsen generalized polylogarithms and the harmonic polylogarithms of Remiddi and Vermaseren~\cite{Remiddi:1999ew} are all special cases of multiple polylogarithms. First of all, classical polylogarithms are clearly just depth one MPLs and from Eq.~(\ref{eq:Li-to-G}), we find
\begin{equation}
    \mathrm{Li}_n(z) = -G(\vec{0}_{n-1},1;z).
\label{eq:Lin-to-G}
\end{equation}
Nielsen generalized polylogarithms can be expressed as
\begin{equation}
    S_{n,p}(z) = (-1)^p G(\vec{0}_{n},\vec{1}_{p};z) 
    = \mathrm{Li}_{1,\ldots,1,n+1}(\underbrace{1,\ldots,1}_{p-1},z).
\label{eq:Snp-to-Li-and-G}
\end{equation}
Finally, harmonic polylogarithms~\cite{Remiddi:1997ny} (HPLs) correspond the case when all $a_i$ are $0$ or $\pm 1$. HPLs are equal to MPLs up to a sign and we have ($a_i\in\{0,\pm 1\}$)
\begin{equation}
    H_{a_1,\ldots,a_k}(z) = (-1)^p  G(a_1,\ldots,a_k;z).
\label{eq:H-to-G}
\end{equation}
Here $p$ is the number of elements of the set of indices $\{a_1,\ldots,a_n\}$ which are equal to $+1$. Thus, we have e.g. $H_{1,1,0}(z) = (-1)^2 G(1,1,0;z) = G(1,1,0;z)$ since two of the indices are $+1$, but 
$H_{1,-1,0}(z) = (-1)^1 G(1,-1,0;z) = -G(1,-1,0;z)$ as now only one index is equal to $+1$. HPLs can also be written in terms of $\mathrm{Li}$ using Eq.~(\ref{eq:G-to-Li}), however, the relation is best expressed by using a different notation for HPLs. The notation we have introduced above, where each index is either $0$ or $\pm 1$ is referred to as the `a'-notation. In contrast, the `m'-notation is given as follows. First, let us assume that the rightmost index in the `a'-notation is non-zero. Then, starting from the vector of indices $(a_1,\ldots,a_k)$, we construct a new vector $(m_1,\ldots,m_l)$
, where we drop all zeros and increase the absolute value of each non-zero element by the number of zeros immedaitely preceeding it. The sign of each index remains unchanged. So in this way e.g. $(0,1,1)$ becomes $(2,1)$ while e.g. $(0,0,1,0,-1)$ becomes $(3,-2)$ and so on. We can then extend this notation to allow for any number of rightmost zeros, so e.g. $(0,-1,0,0,1,0,0)$ goes to $(-2,3,0,0)$. The `m'-notation is introduced here becasue it makes the relation between HPLs and $\mathrm{Li}$ functions particularly simple,
\begin{equation}
    H_{m_1,\ldots,m_k}(z)
    = \mathrm{Li}_{m_k,\ldots,m_1}(1,\ldots,1,z),
\label{eq:H-to-Li}
\end{equation}
Notice that the order of indices is reversed when expressing HPLs in terms of $\mathrm{Li}$. We note that the \math{} package \texttt{HPL}~\cite{Maitre:2005uu} implements both notations for HPLs as well as appropriate functions to transform between the two.

The list of available analytical and numerical packages for evaluation of classical, Nielsen generalized, multiple polylogarithms and series summations is given in appendices~\ref{app:mplsums}~and~\ref{app:mplsgpls}. 

To finish this section, we note that although MPLs are a large class of functions, it is known that not all Feynman integrals can be evaluated in terms of them. In fact these days, Feynman integrals evaluating to HPLs or MPLs are in some sense considered to be the easiest Feynman integrals. However, the seemingly not very complicated graph of Fig.~\ref{fig:higherPL} already cannot be evaluated in terms of MPLs when all masses of the internal lines are different.
\begin{figure}
\centering
\begin{tikzpicture}[scale=1]
\begin{feynman}
 
\vertex at (0,-2) (i1);
\vertex at (0,2,1) (i2);
\vertex at (1,1) (i3);
\vertex at (1,0) (i4);
\vertex at (1,-1) (i5);
\vertex at (2,.5) (i6);
\vertex at (3,0) (f1);
\vertex at (4,0) (f2);

\draw (1,0) circle [radius=1cm];
\draw (-.5,0) -- (2.5,0); 
 
\node[above] at (-.5,0) {{{\bf $p$}}};
\node[above] at (1,0) {{{\bf $m_2$}}};
\node [below] at (1,1.) {{{\bf $m_1$}}};
\node [above] at (1,-1.) {{{\bf $m_3$}}}; 
 
\draw (5,0) circle [radius=1cm];
\draw (3.5,0) -- (4,0); 
\draw (6,0) -- (6.5,0); 
\draw (4,0) arc (120:60:2cm);
\draw (4,0) arc (240:300:2cm);

\node[above] at (3.5,0) {{{\bf $p$}}};
\node[above] at (5,0.2) {{{\bf $m_2$}}};
\node [below] at (5,1.) {{{\bf $m_1$}}};
\node [above] at (5,-.7) {{{\bf $m_{L-1}$}}}; 
\node [above] at (5,-1.) {{{\bf $m_{L-2}$}}}; 
\node [above] at (5,-.3) {{{\bf $\vdots$}}}; 

\end{feynman}
\end{tikzpicture}
    \caption{ \label{fig:higherPL}
        The two-loop banana diagram (also called the sunrise or the sunset diagram, depending on the author's mood) with generic internal masses (on left) and its generalization to the $L-1$ loop integral (on right).
      }
\end{figure}
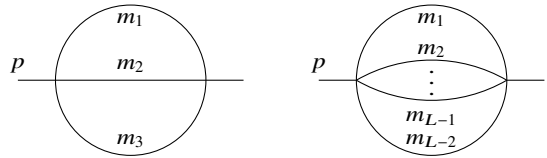

In fact, this integral already at two-loops leads to \emph{elliptic integrals} and \emph{elliptic polylogarithms}. The study of these new classes of functions going beyond MPLs is currently an active area of research. 
%LL2022 talk Munch
In general a space of functions for Feynman integrals is extremely rich, we discussed GPLs, there are elliptic functions, modular forms and integrals over Calabi-Yau varieties~\cite{Feng:2019bdx}. Interestingly,  an upper bound on this complexity exist, 
it has been discussed in~\cite{delaCruz:2019skx}  that Feynman integrals are special cases of an A-hypergeometric function and  the scattering equations are given by the hyperdeterminant of a multidimensional
array, within the theory developed by
Gel'fand, Zelevinskii and Kapranov~\cite{GKZhyp}.  
For the state-of-the-art, see~\cite{Weinzierl:2022eaz,Bourjaily:2022bwx}.

\section{Gamma Function \label{sec:gamma}}

The gamma function is present in the basic \MB{} formula of Eq.~(\ref{mb1}) but it also appears in the definitions of a wide range of special functions, such as the (generalized) hypergeometric function and its particular cases which include Jacobi, Gegenbauer, Chebyschev or Legendre polynomials. These polynomials occur frequently both in classical and quantum physics~\cite{de2019solved}, hence the gamma function is ubiquitous in science and engineering.

Introducing the gamma function we start from factorial $n!$ which stands for the product of consecutive integers starting from 1, $n!=1 \cdot 2 \cdot \cdots n$, $n\in {\mathbb N}^+$. By convention, we set $0!=1$. Obviously we can write a recurrence relation for the factorial function, $(n+1)!=(n+1)\cdot n!$.

Interestingly, the discrete definition of the factorial can be extended to a smooth, real function with the property that
\begin{eqnarray}
\Gamma(x+1) &=& x \Gamma(x), \;\;\; x \in (0,\infty).
\label{eq-gammadiffeq}
\end{eqnarray}
Euler constructed such a function in 1729 with the normalization $\Gamma(1) = 1$ by using the convergent integral
\ba
\label{eq-gammadefreal}
\Gamma(x) &\equiv& \int_0^{\infty} e^{-t} t^{x-1} dt.
\ea
Starting from this integral representation, it is straightforward to show that the functional equation~(\ref{eq-gammadiffeq}) holds. Using partial integration we find
\begin{equation}
\begin{split}
\Gamma(z+1) &= 
    \int_0^{\infty} e^{-x}x^z dx = -x^z e^{-x}\Big|_{0}^{\infty}
    +z \int_0^{\infty} e^{-x}x^{z-1} dx 
\\    
    &=z\int_0^{\infty} e^{-x} x^{z-1} dx = z \Gamma(z).
\end{split}
\end{equation}
Furthermore, for $z \equiv n \in \mathbb{N}$, using $\Gamma(1)=1$ and Eq.~(\ref{eq-gammadiffeq}) it follows that the gamma function takes the values $\Gamma(n)=(n-1)!$. Thus the gamma function is an extension of the factorial function to numbers which are not integers. Fig.~\ref{fig:chap2_gammareim} shows the plot of the gamma function for real and purely imaginary $x$. Notice in particular the singularities at non-positive integers $x=0,-1,-2,\ldots$.
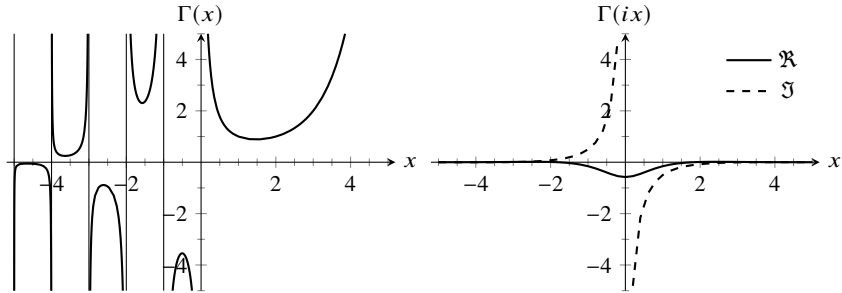
\begin{figure}[h!]
\begin{center}
    \begin{tikzpicture}[scale=1.0]
  \begin{axis}[
      axis lines=middle,
        width=6.75cm, height=5cm,
        xmin=-5.2,xmax=5.2,
        ymin=-5,ymax=5,
        label style={font=\footnotesize},
        ticklabel style={font=\footnotesize},
        xlabel=$x$,
        ylabel=$\Gamma(x)$,
        x label style={at={(axis description cs:1,0.5)},anchor=west},
        y label style={at={(axis description cs:0.5,1)},anchor=south},
        legend style={draw=none},
        legend style={font=\footnotesize},
        minor x tick num = 3,
        minor y tick num = 1
    ]
    
    \addplot [no marks,thin,black]
    coordinates {
      (-1,-5) (-1,5)
    };
        \addplot [no marks,thin,black]
    coordinates {
      (-2,-5) (-2,5)
    };
    \addplot [no marks,thin,black]
    coordinates {
      (-3,-5) (-3,5)
    };
        \addplot [no marks,thin,black]
    coordinates {
      (-4,-5) (-4,5)
    };
    \addplot [no marks,thin,black]
    coordinates {
      (-5,-5) (-5,5)
    };
    
    \addplot[
      line width=0.8pt]
    coordinates {
      (-4.99833,-5.)
      (-4.99814,-4.50528)
      (-4.99791,-4.01056)
      (-4.99762,-3.51583)
      (-4.99723,-3.02111)
      (-4.99668,-2.52639)
      (-4.99587,-2.03167)
      (-4.99453,-1.53695)
      (-4.99189,-1.04222)
      (-4.99154,-1.)
      (-4.99065,-0.905278)
      (-4.98953,-0.810556)
      (-4.98812,-0.715834)
      (-4.98626,-0.621112)
      (-4.98436,-0.547502)
      (-4.98372,-0.52639)
      (-4.98001,-0.431668)
      (-4.97412,-0.336946)
      (-4.9633,-0.242224)
      (-4.93666,-0.147502)
      (-4.85691,-0.0767638)
      (-4.77717,-0.0591614)
      (-4.69742,-0.0534566)
      (-4.61767,-0.0531792)
      (-4.53793,-0.0568571)
      (-4.45818,-0.0647498)
      (-4.37843,-0.0785568)
      (-4.29868,-0.102501)
      (-4.21894,-0.147502)
      (-4.14312,-0.242224)
      (-4.10711,-0.336946)
      (-4.08579,-0.431668)
      (-4.07162,-0.52639)
      (-4.06909,-0.547502)
      (-4.06151,-0.621112)
      (-4.05391,-0.715834)
      (-4.04799,-0.810556)
      (-4.04325,-0.905278)
      (-4.03936,-1.)
      (-4.03785,-1.04222)
      (-4.02609,-1.53695)
      (-4.01991,-2.03167)
      (-4.0161,-2.52639)
      (-4.01352,-3.02111)
      (-4.01165,-3.51583)
      (-4.01023,-4.01056)
      (-4.00912,-4.50528)
      (-4.00823,-5.)
    };
    \addplot[
      line width=0.8pt]
    coordinates {
      (-3.99156,5.)
      (-3.99066,4.52451)
      (-3.98954,4.04903)
      (-3.98813,3.57354)
      (-3.98627,3.09805)
      (-3.98371,2.62256)
      (-3.97999,2.14708)
      (-3.97405,1.67159)
      (-3.9631,1.1961)
      (-3.95529,1.)
      (-3.95133,0.924513)
      (-3.94658,0.849026)
      (-3.94079,0.773538)
      (-3.93592,0.720615)
      (-3.93358,0.698051)
      (-3.92433,0.622564)
      (-3.912,0.547077)
      (-3.89467,0.471589)
      (-3.86823,0.396102)
      (-3.82125,0.320615)
      (-3.77522,0.281388)
      (-3.72919,0.259645)
      (-3.68315,0.248597)
      (-3.63712,0.245132)
      (-3.59109,0.247827)
      (-3.54505,0.256204)
      (-3.49902,0.270457)
      (-3.45299,0.291405)
      (-3.40695,0.320615)
      (-3.32984,0.396102)
      (-3.2807,0.471589)
      (-3.24557,0.547077)
      (-3.21885,0.622564)
      (-3.19771,0.698051)
      (-3.19221,0.720615)
      (-3.18049,0.773538)
      (-3.16616,0.849026)
      (-3.15401,0.924513)
      (-3.14358,1.)
      (-3.12224,1.1961)
      (-3.09013,1.67159)
      (-3.07151,2.14708)
      (-3.0593,2.62256)
      (-3.05068,3.09805)
      (-3.04425,3.57354)
      (-3.03927,4.04903)
      (-3.0353,4.52451)
      (-3.03207,5.)
    };
    \addplot[
      line width=0.8pt]
    coordinates {
      (-2.96511,-5.)
      (-2.96181,-4.58881)
      (-2.95782,-4.17763)
      (-2.9529,-3.76644)
      (-2.94666,-3.35525)
      (-2.93849,-2.94407)
      (-2.92733,-2.53288)
      (-2.91112,-2.1217)
      (-2.88519,-1.71051)
      (-2.83576,-1.29932)
      (-2.74768,-1.)
      (-2.74163,-0.988814)
      (-2.73509,-0.977627)
      (-2.72799,-0.966441)
      (-2.72019,-0.955255)
      (-2.71152,-0.944068)
      (-2.70171,-0.932882)
      (-2.6903,-0.921695)
      (-2.67641,-0.910509)
      (-2.65777,-0.899323)
      (-2.64711,-0.894744)
      (-2.63646,-0.891403)
      (-2.6258,-0.889246)
      (-2.61515,-0.888231)
      (-2.60449,-0.888323)
      (-2.59383,-0.889494)
      (-2.58318,-0.891725)
      (-2.57252,-0.895004)
      (-2.56187,-0.899323)
      (-2.54145,-0.910509)
      (-2.52582,-0.921695)
      (-2.51269,-0.932882)
      (-2.5012,-0.944068)
      (-2.49087,-0.955255)
      (-2.48145,-0.966441)
      (-2.47274,-0.977627)
      (-2.46463,-0.988814)
      (-2.45702,-1.)
      (-2.33507,-1.29932)
      (-2.25455,-1.71051)
      (-2.20737,-2.1217)
      (-2.17565,-2.53288)
      (-2.15264,-2.94407)
      (-2.13511,-3.35525)
      (-2.12127,-3.76644)
      (-2.11005,-4.17763)
      (-2.10076,-4.58881)
      (-2.09294,-5.)
    };
    \addplot[
      line width=0.8pt]
    coordinates {
      (-1.88692,5.)
      (-1.87932,4.73024)
      (-1.87057,4.46048)
      (-1.86038,4.19072)
      (-1.84833,3.92096)
      (-1.8338,3.6512)
      (-1.81584,3.38144)
      (-1.79283,3.11169)
      (-1.78119,3.)
      (-1.77307,2.93024)
      (-1.76417,2.86048)
      (-1.76165,2.84193)
      (-1.75432,2.79072)
      (-1.74331,2.72096)
      (-1.73085,2.6512)
      (-1.71647,2.58144)
      (-1.71437,2.57217)
      (-1.6994,2.51169)
      (-1.67817,2.44193)
      (-1.64904,2.37217)
      (-1.63191,2.34329)
      (-1.61477,2.32248)
      (-1.59764,2.30918)
      (-1.58051,2.30297)
      (-1.56337,2.30358)
      (-1.54624,2.31082)
      (-1.5291,2.32463)
      (-1.51197,2.34504)
      (-1.49484,2.37217)
      (-1.46272,2.44193)
      (-1.43864,2.51169)
      (-1.42129,2.57217)
      (-1.41884,2.58144)
      (-1.40183,2.6512)
      (-1.38684,2.72096)
      (-1.37341,2.79072)
      (-1.36436,2.84193)
      (-1.36123,2.86048)
      (-1.35008,2.93024)
      (-1.3398,3.)
      (-1.32485,3.11169)
      (-1.29458,3.38144)
      (-1.27023,3.6512)
      (-1.25003,3.92096)
      (-1.2329,4.19072)
      (-1.21814,4.46048)
      (-1.20525,4.73024)
      (-1.19389,5.)
    };
    \addplot[
      line width=0.8pt]
    coordinates {
      (-0.761232,-5.)
      (-0.751442,-4.85446)
      (-0.740651,-4.70893)
      (-0.728656,-4.56339)
      (-0.715185,-4.41786)
      (-0.699848,-4.27232)
      (-0.682059,-4.12679)
      (-0.663823,-4.)
      (-0.660847,-3.98125)
      (-0.656448,-3.95446)
      (-0.648525,-3.90893)
      (-0.63995,-3.86339)
      (-0.634362,-3.83571)
      (-0.630584,-3.81786)
      (-0.620227,-3.77232)
      (-0.608573,-3.72679)
      (-0.597932,-3.69018)
      (-0.595115,-3.68125)
      (-0.578866,-3.63571)
      (-0.557299,-3.59018)
      (-0.545454,-3.57202)
      (-0.533609,-3.55853)
      (-0.521765,-3.54961)
      (-0.50992,-3.54518)
      (-0.498075,-3.54522)
      (-0.486231,-3.54969)
      (-0.474386,-3.55864)
      (-0.462541,-3.57211)
      (-0.450696,-3.59018)
      (-0.428964,-3.63571)
      (-0.412554,-3.68125)
      (-0.409706,-3.69018)
      (-0.398941,-3.72679)
      (-0.387137,-3.77232)
      (-0.376634,-3.81786)
      (-0.372799,-3.83571)
      (-0.367126,-3.86339)
      (-0.358413,-3.90893)
      (-0.350356,-3.95446)
      (-0.34588,-3.98125)
      (-0.342852,-4.)
      (-0.324274,-4.12679)
      (-0.306124,-4.27232)
      (-0.290456,-4.41786)
      (-0.27668,-4.56339)
      (-0.264406,-4.70893)
      (-0.253356,-4.85446)
      (-0.243328,-5.)
    };
    \addplot[
      line width=0.8pt]
    coordinates {
      (0.184487,5.)
      (0.200094,4.58856)
      (0.218674,4.17712)
      (0.241195,3.76568)
      (0.269112,3.35424)
      (0.304731,2.9428)
      (0.351966,2.53136)
      (0.41814,2.11992)
      (0.442877,2.)
      (0.468942,1.88856)
      (0.498663,1.77712)
      (0.519173,1.70848)
      (0.532959,1.66568)
      (0.573123,1.55424)
      (0.621055,1.4428)
      (0.679719,1.33136)
      (0.70064,1.29704)
      (0.727002,1.25726)
      (0.754159,1.21992)
      (0.854235,1.10848)
      (1.00517,0.997043)
      (1.01113,0.993699)
      (1.29525,0.898204)
      (1.57938,0.891359)
      (1.8635,0.949782)
      (1.99296,0.997043)
      (2.14763,1.07169)
      (2.21104,1.10848)
      (2.37201,1.21992)
      (2.43175,1.26852)
      (2.46458,1.29704)
      (2.50216,1.33136)
      (2.61217,1.4428)
      (2.70772,1.55424)
      (2.71588,1.56444)
      (2.79229,1.66568)
      (2.82236,1.70848)
      (2.86818,1.77712)
      (2.93701,1.88856)
      (3.,2.)
      (3.06228,2.11992)
      (3.24306,2.53136)
      (3.38785,2.9428)
      (3.50838,3.35424)
      (3.61143,3.76568)
      (3.70133,4.17712)
      (3.78095,4.58856)
      (3.85236,5.)
    };
  \end{axis}
\end{tikzpicture}
\vspace{0.5cm}
\begin{tikzpicture}[scale=1.0]
  \begin{axis}[
      axis lines=middle,
        width=6.75cm, height=5cm,
        xmin=-5.2,xmax=5.2,
        ymin=-5,ymax=5,
        label style={font=\footnotesize},
        ticklabel style={font=\footnotesize},
        xlabel=$x$,
        ylabel=$\Gamma(i x)$,
        x label style={at={(axis description cs:1,0.5)},anchor=west},
        y label style={at={(axis description cs:0.5,1)},anchor=south},
        legend style={draw=none},
        legend style={font=\footnotesize},
        minor x tick num = 3,
        minor y tick num = 1
    ]
        
    \addplot[
      line width=0.8pt]
      coordinates {
      	(-5.,-0.000271704)
		(-4.8,-0.000211807)
		(-4.6,-0.0000386985)
		(-4.4,0.000301578)
		(-4.2,0.000870178)
      	(-4.,0.00173011)
      	(-3.8,0.00293763)
      	(-3.6,0.00452973)
      	(-3.4,0.00650708)
      	(-3.2,0.00881172)
      	(-3.,0.0112987)
      	(-2.8,0.0137001)
      	(-2.6,0.0155803)
      	(-2.4,0.0162784)
      	(-2.2,0.0148358)
      	(-2.,0.00990244)
      	(-1.8,-0.000377672)
      	(-1.6,-0.0184943)
      	(-1.4,-0.0476694)
      	(-1.2,-0.091802)
      	(-1.,-0.15495)
      	(-0.8,-0.239717)
      	(-0.6,-0.343655)
      	(-0.4,-0.453643)
      	(-0.2,-0.542426)
      	(0.,-0.577216)
      	(0.2,-0.542426)
      	(0.4,-0.453643)
      	(0.6,-0.343655)
      	(0.8,-0.239717)
      	(1.,-0.15495)
      	(1.2,-0.091802)
      	(1.4,-0.0476694)
      	(1.6,-0.0184943)
      	(1.8,-0.000377672)
      	(2.,0.00990244)
      	(2.2,0.0148358)
      	(2.4,0.0162784)
      	(2.6,0.0155803)
      	(2.8,0.0137001)
      	(3.,0.0112987)
      	(3.2,0.00881172)
      	(3.4,0.00650708)
      	(3.6,0.00452973)
      	(3.8,0.00293763)
      	(4.,0.00173011)
      	(4.2,0.000870178)
      	(4.4,0.000301578)
      	(4.6,-0.0000386985)
      	(4.8,-0.000211807)
      	(5.,-0.000271704)      
    };
    \addplot[dashed,
      line width=0.8pt]
      coordinates {
      	(-5.,-0.000339933)
      	(-4.8,-0.000570007)
	    (-4.6,-0.000849562)
      	(-4.4,-0.00115169)
      	(-4.2,-0.00142339)
      	(-4.,-0.00157627)
      	(-3.8,-0.00147608)
      	(-3.6,-0.00093122)
      	(-3.4,0.000320042)
      	(-3.2,0.0026242)
      	(-3.,0.00643092)
      	(-2.8,0.0123187)
      	(-2.6,0.021035)
      	(-2.4,0.0335627)
      	(-2.2,0.0512368)
      	(-2.,0.075952)
      	(-1.8,0.110539)
      	(-1.6,0.159454)
      	(-1.4,0.230074)
      	(-1.2,0.335183)
      	(-1.,0.498016)
      	(-0.8,0.763499)
      	(-0.6,1.22859)
      	(-0.4,2.15845)
      	(-0.2,4.80974)
    };
    \addplot[dashed,
      line width=0.8pt]
      coordinates {
      	(0.2,-4.80974)
      	(0.4,-2.15845)
      	(0.6,-1.22859)
      	(0.8,-0.763499)
      	(1.,-0.498016)
      	(1.2,-0.335183)
      	(1.4,-0.230074)
      	(1.6,-0.159454)
      	(1.8,-0.110539)
      	(2.,-0.075952)
      	(2.2,-0.0512368)
      	(2.4,-0.0335627)
      	(2.6,-0.021035)
      	(2.8,-0.0123187)
      	(3.,-0.00643092)
      	(3.2,-0.0026242)
      	(3.4,-0.000320042)
      	(3.6,0.00093122)
      	(3.8,0.00147608)
      	(4.,0.00157627)
      	(4.2,0.00142339)
      	(4.4,0.00115169)
      	(4.6,0.000849562)
      	(4.8,0.000570007)
      	(5.,0.000339933)
    };
    \legend{$\Re$,$\Im$}
  \end{axis}
  
\end{tikzpicture}

\end{center}
\caption{The plot of the gamma function $\Gamma(x)$ for real (left) and purely imaginary (right) $x$. Note that for real arguments the gamma function is real and the thin vertical lines on the left panel denote asymptotes of the function.}
\label{fig:chap2_gammareim}       % Give a unique label
\end{figure}

The gamma function was extended to complex variables by Gauss (1811) using the infinite product
\ba
\Gamma(z) = \lim_{n \to \infty} \frac{n! n^z}{z(z+1)\cdots (z+n)},
\ea
which defines a meromorphic function in $z$. From this representation we can see explicitly that the gamma function exhibits singularities in the complex plane, namely $z \in C \setminus \{0,-1,-2,\ldots \}$. However, there are several ways of defining the complex gamma function besides the above product representation~\cite{Slater:1966, Whittaker:1965}. For example, the integral definition of the gamma function for complex $z$ is of course a generalization of the definition of Eq.~(\ref{eq-gammadefreal}),
\ba
\label{eq-gammazdef}
\Gamma(z) &\equiv& \int_0^{\infty} e^{-x} x^{z-1} dx,
\ea
which is due to Bernoulli. It is clear that in general $\Gamma(z)$ is complex for complex $z$ and that the basic relation
\ba
\label{eq-zgammaz}
\Gamma(z+1)&=& z \Gamma(z),\quad z\in\mathbb{C}.
\ea
holds also for complex arguments.

Let us now recall some classical properties of the gamma function~\cite{Whittaker:1965,remmert:1998} which will be useful later in our study of \MB{} integrals. First, let us study the behaviour of the gamma function for large arguments. It can be shown that for $n \in \mathbb{Z}$ 
\ba 
\Gamma(n+1) = n! \simas^{n\rightarrow \infty} \sqrt{2 \pi n} n^n e^{-n},
\label{AsymptoticnGamma}
\ea
while for $z \in \mathbb{C}$ we have
\ba
\Gamma(z) \simas^{|z|\rightarrow \infty} 
\sqrt{\frac{2\pi}{z}} z^z e^{-z}\,.
\label{AsymptoticGamma}
\ea 
The second equation is the (leading term of the) Stirling formula. From the above two equations we could naively deduce, due to the different square root factors, that for large values Eqs.~(\ref{AsymptoticnGamma})~and~(\ref{AsymptoticGamma}) are in contradiction. However, this is not the case, see Problem~\ref{prob:gammaexpansion}. From Eq.~(\ref{AsymptoticGamma}) it is clear that the gamma function is strictly increasing for positive real arguments, as expected. It is also very easy to see that the rate of increase is super-exponential, i.e., $\lim_{x\to\infty} \Gamma(x)/e^{x} = \infty$. It is less obvious, but nevertheless true that for complex arguments $z=x+i y$, $x,y\in \mathbb{R}$, both the real and imaginary parts of the gamma function fall off faster than exponentially for large imaginary parts, i.e., as $|y| \to \infty$, see Problem~\ref{prob:gammaiasy}. 
In order to get a feeling for the magnitudes of gamma function values, below we give some representative numerical and analytical values obtained with \math{}.
 
\ba
\label{eq-g5}
\Gamma(1) &=&\Gamma(2) =1,
\\
\Gamma(10.1) &=& 4.54761\cdot 10^5,
\\
\Gamma(-10.1) &=& -2.21342\cdot 10^{-6},
\\
\Gamma(100.1 )  &=& 1.47845\cdot 10^{156},
\\
\Gamma(-100.1 ) &=& -6.86951\cdot 10^{-158},
\label{gammapoints} \\
\label{eq-gI}
\Gamma(-1\pm i) &=& -0.171533 + 0.326483\, i
\\
\Gamma(-1 \pm 10\; i)  &=& -4.99740\cdot 10^{-9} \pm 1.07847\cdot 10^{-8}\, i
\\
\Gamma(-1 \pm 100 \;i)  &=& 1.51438\cdot 10^{-71} \pm 1.27644\cdot 10^{-73}\, i
\\
\label{eq-g6}
\Gamma\left(\frac{1}{2}\right) &=& \sqrt{\pi},
\\
\label{eq:inversegamm}
\frac{1}{\Gamma(-1)} &=& \frac{1}{\Gamma(0)} = 0.
\ea
We note that the last equation is to be understood in the sense of a limit.

Next, consider Euler's reflection (or complement) formula, which states that for $0<z<1$
\ba
\label{eq-gsin}
\Gamma(z) \Gamma(1-z)= \frac{\pi}{\sin \pi z}.
\ea
Similarly
\ba
\label{eq-gcos}
\Gamma\left(\frac{1}{2}+z\right) \Gamma\left(\frac{1}{2}-z\right)=
\frac{\pi}{\cos \pi z}. 
\ea 
These so-called complement relations can sometimes be useful for manipulating \MB{} integrals and we will also use them when we discuss the analytic solution of one-dimensional \MB{} integrals in chapter~\ref{chapter-MBanal}. Notice moreover that Eq.~(\ref{eq-gsin}) allows us to determine the behaviour of the gamma function around $z=0$. Indeed, using $\sin(\pi z) = \pi z + \mathcal{O}(z^3)$ and $\Gamma(1)=1$, we find 
\begin{equation}
    \Gamma(z) \simas^{z\rightarrow 0} \frac{1}{z}. 
\label{eq:gamma-res-0}
\end{equation}
In fact, the above calculation can be generalized and the behaviour of the gamma function determined around any non-positive integer. To do so, let us consider $z\to -n+z$ in Eq.~(\ref{eq-gsin}), where $n\in\mathbb{N}$. Then we have
\begin{equation}
    \Gamma(-n+z) = \frac{1}{\Gamma(1+n-z)}
    \frac{\pi}{\sin \pi(-n+z)}.
\label{eq:gamma-invert-sin}
\end{equation}
However $\sin \pi(-n+z) = (-1)^n \sin(\pi z)$ for integers, while $\Gamma(1+n-z) = n!$ for $z\to 0$, hence finally
\begin{equation}
    \Gamma(-n+z) \simas^{z\rightarrow 0} 
    \frac{(-1)^n}{n!}\frac{1}{z}. 
\label{eq:gamma-res-n}
\end{equation}
Of course, the same conclusion can be reached by applying the recursion relation in Eq.~(\ref{eq-zgammaz}) to reduce $\Gamma(-n+z)$ to $\Gamma(z)$ and using Eq.~(\ref{eq:gamma-res-0}), see Problem~\ref{prob:gammares}.

Thus, we see that the gamma function has poles at all non-positive integers, a fact that will become significant shortly, when we discuss residues and Cauchy's theorem. In Fig.~\ref{fig:singul} we show the location of these poles for some gamma functions with specific arguments which will be considered in chapter~\ref{chapter-singul}, see Fig.~\ref{SEpoles}.
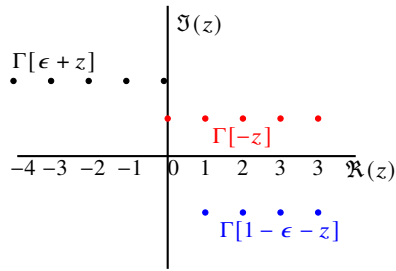
\begin{figure}
\centering
\begin{tikzpicture}[scale=0.5]
\begin{feynman}
 
\vertex at (2,1) (i1);
\vertex at (4,1) (i2);
\vertex at (6,1) (i3);
\vertex at (8,1) (i4);
\vertex at (9,1) (i5);
\vertex at (6,4.5) (f1);
\vertex at (6,-1.5) (f2);

\node[right] at (f1) {{$\Im(z)$}};
\node[right] at (10.5,0.6) {{$\Re(z)$}};

\draw [thick]  (2,1) -- (11,1) ;
\draw [thick]  (6,-2) -- (6,5) ;

\filldraw [red] (6,2) circle (2pt);
\filldraw [red] (7,2) circle (2pt);
\filldraw [red] (8,2) circle (2pt);
\filldraw [red] (9,2) circle (2pt); 
\filldraw [red] (10,2) circle (2pt);
\node [thick,red] at (8,1.5) {\bf{{$\Gamma[-z]$}}};

%\filldraw [red] (6,-1.5) circle (2pt);
\filldraw [blue] (7,-.5) circle (2pt);
\filldraw [blue] (8,-.5) circle (2pt);
\filldraw [blue] (9,-.5) circle (2pt); 
\filldraw [blue] (10,-.5) circle (2pt);
\node [thick,blue] at (9,-1.) {\bf{{$\Gamma[1-\eps-z]$}}};

\filldraw [black] (1.9,3) circle (2pt);
\filldraw [black] (2.9,3) circle (2pt);
\filldraw [black] (3.9,3) circle (2pt); 
\filldraw [black] (4.9,3) circle (2pt);
\filldraw [black] (5.9,3) circle (2pt);
\node [thick,black] at (3,3.5) {\bf{{$\Gamma[\eps+z]$}}};

%\node at (5.55,.7) {{$-\frac 12$}};
\node at (2.15,.7) {{$-4$}};
\node at (3.,.7) {{$-3$}};
\node at (4.,.7) {{$-2$}};
\node at (5.,.7) {{$-1$}};

\node at (6.15,.7) {{$0$}};
\node at (7.,.7) {{$1$}};
\node at (8.,.7) {{$2$}};
\node at (9.,.7) {{$3$}};
\node at (10.,.7) {{$3$}};  
 
\end{feynman}
\end{tikzpicture}
\caption{Positions of the singularities on the real axis (horizontal) for three gamma functions. To avoid visual overlapping of dots in the Figure, poles of single gammas are drawn at different vertical levels. The $\eps$ parameter is assumed to be real and positive, $\eps >0$. The singular behaviour of the gamma function at the poles can be seen in Fig.~\ref{fig:chap2_gammareim}.}
\label{fig:singul}
\end{figure}

Going beyond the above results, the expansion of the gamma function around integer numbers to higher orders is often needed in explicit computations. It is common that the small parameter of the expansion is the $\epsilon$ variable that appears in dimensional regularization where the space-time dimension is set to $d=4-2\epsilon$. For example the expansions of $\Gamma(z)$ around $z=0$ and $z=1$ in the small parameter $\epsilon$ read
\ba
% 
%\\
\label{eq-g7}
\Gamma(\epsilon) &=& \frac{1}{\epsilon} - \gamma + \frac{1}{2} \epsilon \left[
\zeta(2)+ \gamma^2 \right] 
+ \frac{1}{6} 
\left[-2\zeta(3)- 3  \gamma  \zeta(2)- \gamma^3 \right]
\epsilon^2 
+ {\cal O}\left(\epsilon^3\right)
\nn \\&\equiv & 
\frac{1}{\epsilon} - \gamma +g_1\epsilon+g_2\epsilon^2+ {\cal
 O}\left(\epsilon^3\right) ,
\\
\label{eq-g8}
\Gamma(1+\epsilon) &=& 1-  \gamma\epsilon
+ \frac{1}{2} \epsilon^2 \left[
\zeta(2)+ \gamma^2 \right]   + \ldots
\equiv 1- \gamma\epsilon  +g_1\epsilon^2 + \ldots.
\ea
 
We see that in the expansions a couple of irrational constants have appeared. These constants are either related to the harmonic series or are values of the zeta function at positive integers,
\ba
\label{eq-g9} 
\gamma &=& 0.577\,215\,664\,901\,532   \ldots  \quad\mbox{Euler's constant},
\\
\label{eq-g10}
\zeta_2 ~~\equiv~~ \zeta(2)&=& \mathrm{Li}_2(1) = \frac{\pi^2}{6} = 1.644\, 93 \ldots ,
\\
\label{eq-g11}
\zeta(3)&=& \mathrm{Li}_3(1) = 1.202\,056\,903\,159\,594 \ldots.
\ea
In order to explain these relations, let us define the \emph{generalized harmonic number of order $a$ of $N$} as the finite sum
\begin{equation}
    H_{N,a} \equiv \sum_{k=1}^{N} \frac{1}{k^a}.
\end{equation}
This sum is also often denoted as $S_a(N)$. With this definition, the zeta function\footnote{For a history of the zeta function and logarithm see~\cite{havilnapier}. Physical applications of spectral zeta function is discussed in Lecture Notes~\cite{elizaldezeta}.} at $a$ is simply the harmonic number of order $a$ of infinity,
\begin{equation}
    \zeta(a) = \sum_{k=1}^{\infty} \frac{1}{k^a} = H_{\infty,a}.
\end{equation}
Then Euler's constant is the limit at infinity of the difference of the harmonic number $H_{N,1}$ and the natural logarithm of $N$,
\beq
\gamma = \lim_{N\to \infty} \left[ \sum_{k=1}^{N} \frac{1}{k} - \ln(N)\right],
\eeq
while $\zeta(2)$, $\zeta(3)$ and so on are simply the values of the convergent sums
\begin{equation}
    \zeta(2) = \sum_{k=1}^{\infty}\frac{1}{k^2},\qquad
    \zeta(3) = \sum_{k=1}^{\infty}\frac{1}{k^3}.
\end{equation}
Incidentally, using the notion of harmonic numbers, it is possible to write a compact formula for the all-order expansion of the gamma function around positive integers $n$ in terms of 
$\Gamma(1+\eps)$,
\bea
\label{harmsum}
\frac{\Gamma(n+\eps)}{\Gamma(n)} &=& \Gamma(1+\eps)
\exp \left[ -\sum_{k=1}^{\infty} \frac{(-\eps)^k}{k} S_k(n-1)\right] .
\eea

Another useful relation which will be used in chapter~\ref{chapter-MBanal} is the Legendre duplication formula,
\ba\label{eq-gamdupl}
\Gamma(2z)&=&\frac{2^{2z-1}}{\sqrt{\pi}}
\Gamma(z)\Gamma\left(z+\frac{1}{2}\right), 
\ea
which is a special case of the so-called product theorem for gamma functions,
\begin{equation}
    \Gamma(nz) = (2\pi)^{\frac{1-n}{2}} n^{nz-\frac{1}{2}}
    \prod_{k=0}^{n-1} \Gamma\left(z+\frac{k}{n}\right).
\end{equation}

As we will see, one also often encounters the derivatives of the gamma function in actual calculations, see e.g. Eq.~(\ref{eq:residpolygamma}) or \wwwaux{miscellaneous} 
(polygammas can be seen in examples of $\eps$ expanded \mb{} integrals).
Hence, we recall the definition of the \emph{digamma} function $\psi(z)$ of a (in general complex) variable $z$,
\begin{equation}
    \psi(z) \equiv \frac{d}{dz} \ln \Gamma(z) = 
    \frac{1}{\Gamma(z)} \frac{d}{dz} \Gamma(z).
\label{eq:psi0-def}
\end{equation}
The polygamma function of order $m$ is then the $m$-th derivative of the digamma function
\begin{equation}
    \psi^{(m)}(z) \equiv \frac{d^m}{dz^m} \psi(z).
\label{eq:psin-def}
\end{equation}
Thus, the digamma function is simply the polygamma function of order zero, $\psi(z) = \psi^{(0)}(z)$. {\it It can be shown that polygamma functions have poles of order $m+1$ at all non-positive integers}, see Problem~\ref{prob:polygammapoles}.

For arguments whose real part is positive, $\Re(z)>0$, the digamma function can be represented also as a one-dimensional integral over a real integration variable,
\begin{equation}
    \psi(z) = \int_0^\infty [(1+t)^{-1}-(1+t)^{-z}]\frac{dt}{t} - \gamma,\qquad \Re(z)>0,
    \label{eq:polygamma-int-inf}
\end{equation}
or alternatively
\begin{equation}
    \psi(z) = \int_0^1 \frac{t^{z-1}-1}{t-1}dt - \gamma,\qquad \Re(z)>0.
    \label{eq:polygamma-int-one}
\end{equation}
{\it This relation is employed when solving \MB{} integrals via reduction to Euler-type integrals}, a method that will be explored in chapter~\ref{sec:3anal}. We note that we can easily derive similar integral representations for polygamma functions by simply taking derivatives of the above formulae with respect to $z$. (For $\Re(z)>0$ the order of differentiation and integration can be interchanged.) 

Among the many relations that hold for polygamma functions, we recall here only those that are the most relevant for manipulating \MB{} integrals. First, using Eq.~(\ref{eq-gammazdef}), it is quite straightforward to show that
\begin{equation}
    \psi(z+1) = \psi(z) + \frac{1}{z}
\end{equation}
and in the general case
\begin{equation}
    \psi^{(m)}(z+1) = \psi^{(m)}(x) + (-1)^m m! z^{-m-1}.
\end{equation}
These relations allow us to reduce the values of the polygamma functions at non-negative integers to values at one, where we have
\begin{equation}
    \psi(1) = -\gamma
    \qquad\mbox{and}\qquad
    \psi^{(m)}(1) = -(-1)^{m} m! \zeta(m + 1),\quad m\in \mathbb{N}^+.
\end{equation}
Thus, in terms of the generalized harmonic numbers introduced above, we find for $N\in \mathbb{N}^+$
\begin{align}
    \psi(N+1) &= \sum_{k=1}^{N} \frac{1}{k}-\gamma = 
    S_1(N) - \gamma,
\label{eq:psi0N}
\intertext{and}
\psi^{(m)}(N+1) &= (-1)^m m! 
    \sum_{k=1}^{m} \frac{1}{k^{m+1}}-(-1)^{m} m! \zeta(m + 1) 
\\ &=
    (-1)^m m!\left[S_{m+1}(N) - \zeta(m + 1)\right].
\label{eq:psimN}
\end{align}
Thus for example $\psi(1) = -\gamma$, $\psi(2) = 1-\gamma$, $\psi(3) = 3/2-\gamma$ and so on.

One more relation that we will make use of in chapter~\ref{chapter-MBanal} is the relation
\begin{equation}
    \psi^{(m_q)}(1-z) = (-1)^{m_q}
    \left[\psi^{(m_q)}(z) + \pi \frac{\partial^{m_q}}{\partial z^{m_q}}
    \cot(\pi z)\right],
\label{eq:psi-reflect}
\end{equation}
which can be used to convert all $\psi^{(m)}(1-z)$ functions to $\psi^{(m)}(z)$ functions.

We note finally that expanding polygamma functions around integer parameters, we again encounter the Euler $\gamma$ constant and zeta values, e.g.
\begin{align}
\psi(1+\eps) &= -\gamma + \zeta(2) \eps - \zeta(3) \eps^2 + \mathcal{O}(\eps^3)\,,
\\
\psi^{(1)}(1+\eps) &= \zeta(2) - 2\zeta(3) \eps + 3\zeta(4) \eps^2 + \mathcal{O}(\eps^3)\,,
\\
\psi^{(2)}(1+\eps) &= -2\zeta(3) + 6\zeta(4) \eps - 12\zeta(5) \eps^2 + \mathcal{O}(\eps^3)\,.
\end{align}

\section{Residues and Cauchy's Residue Theorem}
\label{sec:2complex}

{The Cauchy integral theorem (also known as the Cauchy–Goursat theorem) states that if a function is analytic in a domain (without singularities), then for any  closed contour in the domain that contour integral is zero.} 
If singularities exist, the integral over an anti-clockwise\footnote{The positive direction of the curve is defined as anti-clockwise for which the winding number is 1, see Problem~\ref{prob:curvedirection}.}  directed closed path $C_+$ is (the Cauchy's Residue Theorem):
\ba
\oint_{C_+} F(z) dz &=& 2\pi i \sum_{z=z_i} \texttt{Res}[F(z)] 
\label{eq:cauchy}
\ea
where the residues $\texttt{Res}[F(z)]|_{z=z_i}$  are coefficients $a_{-1}^i$ of the Laurent series of $F(z)$ around $z_i$:
\begin{equation}
F(z) = \sum_{n=-N}^{\infty} a_n^i (z-z_i)^n = \frac{{a_{-N}^i}}{(z-z_i)^N}+\cdots +
\frac{a_{-1}^i}{(z-z_i)}+a_0^i+\cdots
\label{eq:residue0}
\end{equation}

\begin{equation}
\boxed{
  \texttt{Res}[F(z)]|_{z=z_i} = {a_{-1}^i}
}
\label{eq:residue1}
\end{equation}

If $G(z)$ has a Taylor expansion around $z_0$ , then it is:
%\bqa 
 %\texttt{Res} [G(z)~F(z)]|_{z=z_i} = \sum_{n=1}^{N}  \frac{a_{-n}^i}{k!} \frac{d^n}{dz^n} G(z)|_{z=z_i} 
 %\eqa

\begin{equation}
\boxed{
   \texttt{Res} [G(z)~F(z)]|_{z=z_i} = \sum\limits_{n=1}^{N}  \frac{a_{-n}^i}{k!} \frac{d^n}{dz^n} G(z)|_{z=z_i} 
}
\label{eq:residue}
\end{equation}

{\it Due to this property, we need for  applications not only $\Gamma(z)$, but also its derivatives. }

\begin{svgraybox}
{The basic equation~(\ref{eq:cauchy}) says that calculation of integrals in the complex plane can be done by summing residues enclosed by the contour of integration. }
\end{svgraybox}

This is a simple trick which can help to solve many otherwise difficult integrals in a simple way. %To do that, we can use directly residue function in \math{}, which goes like this

\newpage

\begin{tips}{Residues for Integrals Over Real and Complex Variables \label{ex1_sec2}}
Using the Cauchy's theorem evaluate 
\ba
I_1 = \oint\limits_\Omega \frac{dz}{z^4+1},
\label{eq:cauchyI1}
\ea
where $\Omega$ is any closed contour\footnote{Contours of this kind are known as the Hankel contours~\cite{de2019solved}.} around $z=0$ with radius $R>1$, see Fig.~\ref{fig:cont1}, 
and the integral 
\ba
I_2 = \int\limits_{-\infty}^\infty \frac{dx}{x^4+1},
\label{eq:cauchyI2}
\ea
\\
for which the contour $\Omega_1$ in Fig.~\ref{fig:cont1} can be applied.

We have four solutions for $z^4 + 1 = 0$, $z_1 = e^{\pi i/4}, z_2=e^{3\pi i/4}, z_3=e^{5\pi i/4}, z_4=e^{7\pi i/4}$. All solutions 
$z_1, \ldots, z_4$ lie inside $\Omega$, see Fig.~\ref{fig:cont1}.

\begin{figure}
\centering
\begin{tikzpicture}[scale=0.5]
\begin{feynman}
 
\vertex at (2,1) (i1);
\vertex at (4,1) (i2);
\vertex at (6,1) (i3);
\vertex at (8,1) (i4);
\vertex at (11,1) (i5);
\vertex at (6,5) (f1);
\vertex at (6,-1.5) (f2);

\node[right] at (f1) {{$\rm{Im}\; z$}};
\node[right] at (10.,0.7) {{$\rm{Re}\; z$}};

\draw [->]  (i3) -- (f1) ;
\draw [->]  (i3) -- (i5) ;
\draw (i3) -- (f2) ;
\draw (2,1) -- (3,1);
\draw (9,1) -- (11,1);

\filldraw (6.7,1.73) circle (3pt);
\node at (6.7,2.1) {\large{$z_1$}};
\filldraw (5.3,1.73) circle (3pt);
\node at (5.3,2.1) {\large{$z_2$}};
\filldraw (6.7,0.25) circle (3pt);
\node at (6.7,-0.15) {\large{$z_4$}};
\filldraw (5.3,0.25) circle (3pt);
\node at (5.3,-0.15) {\large{$z_3$}};
\node at (7.15,.7) {\large{$1$}};
\node at (8.15,.7) {\large{$2$}};

\draw (i3) circle (1cm);
\draw [dotted] (i3) circle (2cm);
\draw [->, very thick,black] (9,1) arc (0:45:3cm);
\draw [very thick,black] (8.11,3.12) arc (45:90:3cm);
\draw [->,very thick,black] (6,4) arc (90:135:3cm);
\draw [very thick,black] (3.9,3.12) arc (135:180:3cm);
\draw [->, very thick,black] (3,1) -- (4.5,1);
\draw [->, very thick,black] (4.5,1) -- (7.5,1);
\draw [very thick,black] (7.5,1) -- (9,1);
%\diagram*{
% (i1) -- [thin] 
% (i5) --  [thin] 
% (i3) --  [thin] (f2), 
%};

\node [thick, black] at (3,3) {\bf{{$\Omega_1$}}};
\node [thick, black] at (2.8,0.6) {\bf{{$-R$}}};
\node [thick, black] at (9,0.6) {\bf{{$R$}}};

\node [thick] at (4,-0.2) {\bf{{$\Omega$}}};

\end{feynman}
\end{tikzpicture}
\caption{Contours $\Omega$ and $\Omega_1$ applied in Eqs.~(\ref{eq:cauchyI1})~and~(\ref{eq:cauchyI2}), respectively.}
\label{fig:cont1}   
\end{figure}
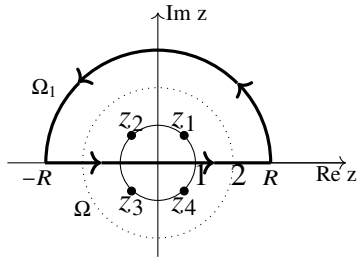

Thus we have
\begin{eqnarray}
\texttt{\footnotesize{Res}}[I_1]|_{z=z_1} &=&-\footnotesize{\frac{1}{4 \sqrt{2}}} (1+i), \;\;\; \label{eq:resz1}
\texttt{Res}[I_1]|_{z=z_2} =-\footnotesize{\frac{1}{4 \sqrt{2}}} (-1+i), \label{eq:resz2} \\
\texttt{\footnotesize{Res}}[I_1]|_{z=z_3} &=&-\footnotesize{\frac{1}{4 \sqrt{2}}} (-1-i),\;\;\; \label{eq:resz3}
\texttt{Res}[I_1]|_{z=z_4} =-\footnotesize{\frac{1}{4 \sqrt{2}}} (1-i), \label{eq:resz4} 
\end{eqnarray}
and the sum
\bq
I_1 = \sum_{i=1}^4 \texttt{\footnotesize{Res}}[I_1]|_{z=z_i} = 0. 
\eq

In \math{}, a residue calculated in Eq.~(\ref{eq:resz1}) would have the following form
\begin{minted}[frame=single,breaklines,fontsize=\small]{mathematica}
In[11]:= Residue[1/(1+z^4),{z,Exp[I*Pi/4]}]
\end{minted}
\begin{equation}
    \label{eq:exmathresid}
\end{equation} 
In contrast to the $I_1$  integral,  $I_2$ is a real integrand over real contour of integration. Nonetheless, we can apply Cauchy's theorem by extending the path of integration to the complex plane (the thick $\Omega_1$ contour in Fig.~\ref{fig:cont1}).
This is possible since the integral along the semicircle in the upper half-plane vanishes in the $R\to\infty$ limit, and so we are not changing the value of the integral by including this contribution. Then inside $\Omega_1$ there are only two singularities at $z_1$ and $z_2$, solved already in Eqs.~(\ref{eq:resz1})~and~(\ref{eq:resz2}), then the result is 
\bq
I_2 = 2 \pi i \cdot \left(-\frac{1}{4\sqrt{2}} \cdot 2 i \right) = \frac{\pi}{2\sqrt{2}}. 
\eq

These are typical examples how integrals can be solved using the Cauchy theorem. For more examples, see for instance~\cite{Roussosimprop}.

\end{tips}

\begin{tips}{Residues of the Gamma Function \label{ex3_sec2}}
What are residues for $\Gamma(z)$ and $\Gamma^2(z)$ function at points $z=i,1+i,-1,0,1?$
\\

Singularities of the gamma function sit on the real axis and are negative integers, see Fig.~\ref{fig:chap2_gammareim} and Eq.~(\ref{eq:inversegamm}). Series at points $z=i,i+1$
Laurent expanded as in Eq.~(\ref{eq:residue0}) have no $a_{-1}^i$ coefficients, e.g.:

\begin{minted}[frame=single,breaklines,fontsize=\small]{mathematica}
In[12]:= Series[Gamma[z],{z,I,1}] 
Out[12]:= Gamma[i]+Gamma[i]PolyGamma[0,i](z-i)+O[z-i]^2
\end{minted}

In general, 
\begin{eqnarray}
%\textcolor{myred}
%\texttt{Res}[I_1]|_{z=z_1}
\textrm{\footnotesize{Res}}[\Gamma[z]]_{z=-n} &=& \frac{(-1)^n}{n!},
\label{eq:basicresgamma}
\\
\textrm{\footnotesize{Res}}[G[z]\Gamma[z]]_{z=-n} &=& \frac{(-1)^n}{n!} G[-n], \label{eq:generalresidgamma}
\\
\textrm{\footnotesize{Res}}[G[z]\Gamma^2[z]]_{z=-n} &=& \frac{2\; %\textrm{PolyGamma}[n+1] 
\psi^{(0)}[n+1] G[-n]+G^{\prime}[-n]}{(n!)^2},\nn \\
\label{eq:residpolygamma}
\end{eqnarray}
where $G[z]$ is an analytic function of $z$.
Equation~(\ref{eq:basicresgamma}) was proved in Eq.~(\ref{eq:gamma-res-n}), while the relation in Eq.~(\ref{eq:generalresidgamma}) is a straightforward consequence of Eq.~(\ref{eq:basicresgamma}). Note the appearance of the polygamma function in Eq.~(\ref{eq:residpolygamma}), see Problem~\ref{prob:residpolygamma}. 

\end{tips}

\newpage

\begin{tips}{Summing Gamma Poles and Integration Over the Bromwich Contour \label{ex4_sec2}}

Let us calculate the sum of residues of $\Gamma[-z]$ when the contour of integration goes through $\Re[z]= \pm 1/2$ for the first 10 or 100 poles, see Fig.~\ref{fig:chap2_residuegammaz}.

In \math{}, a residue calculated for $\Gamma[-z]$ would have the following form

\begin{minted}[frame=single,breaklines,fontsize=\small]{mathematica}
In[13]:= n1=Sum[-Residue[Gamma[-z],{z,n}],{n,0,100}];
In[14]:= n2=Sum[-Residue[Gamma[-z],{z,n}],{n,1,100}];
In[15]:= n1-n2;;
\end{minted}
 
{The minus signs in the first and second In commands are due to the direction of integration, which is clockwise, see Fig.~\ref{fig:chap2_residuegammaz}. The result of the sum in the first command which starts from $n=0$ is 0.367879. In fact, taking only ten terms gives the same accuracy. The result of the sum in the second command which starts from $n=1$ is -0.632121. The difference is $n1-n2=1$, which is the value of the residue for $\Gamma[-z]$ at $z=0$.} 

In chapter~\ref{chapter-MBnum} we will discuss how to calculate \mb{} integrals numerically. 
For the above case we can make it directly in \math. The contour is over a straight line parallel to the vertical axis. Such a contour is called the Bromwich contour
%\footnote{Actually the Bromwich integral is the inverse of the Laplace transform \newline \colbref{https://mathworld.wolfram.com/BromwichIntegral.html}.}
~\cite{de2019solved}

\begin{minted}[frame=single,breaklines,fontsize=\small]{mathematica}
In[16]:= NIntegrate[1/(2 Pi) Gamma[-z]/.z->-1/2+I y,{y,-10,10}];
\end{minted}

\begin{equation}
\label{eq:bromwich}
\end{equation}
The integral gives the same value as the sum of residues \texttt{In[13]} above. Please note a change of variables in Eq.~(\ref{eq:bromwich}) and that to get the same accuracy as for sums, integration can be restricted to the region $(-10,10)$, as the gamma function falls down quickly with increasing $|z|$ (Fig.~\ref{fig:chap2_gammareim}).
 
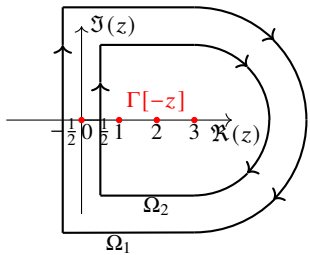
\begin{figure}
\centering
\begin{tikzpicture}[scale=0.5]
\begin{feynman}
 
\vertex at (2,1) (i1);
\vertex at (4,1) (i2);
\vertex at (6,1) (i3);
\vertex at (8,1) (i4);
\vertex at (9,1) (i5);
\vertex at (6,3.5) (f1);
\vertex at (6,-1.5) (f2);

\node[right] at (f1) {{${\Im}(z)$}};
\node[right] at (9.2,0.6) {{$\Re(z)$}};

\draw [thick]  (5.5,-2) -- (5.5,1) ;
\draw [->, thick]  (5.5,1) -- (5.5,3) ;
\draw [thick]  (5.5,3) -- (5.5,4) ;
\draw [->, thick]  (6.5,-1) -- (6.5,2) ;
\draw [thick] (6.5,2) -- (6.5,3) ;
\draw [->]  (i3) -- (10,1) ;
\draw [->] (i3) -- (f1) ;
\draw (i3) -- (f2) ;
\draw (4,1) -- (6.5,1);
 
\filldraw [red] (6,1) circle (2pt);
%\filldraw (6.5,1) circle (2pt);
\filldraw [red] (7,1) circle (2pt);
\filldraw [red] (8,1) circle (2pt);
\filldraw [red] (9,1) circle (2pt); 
\node at (5.55,.7) {{$-\frac 12$}};
\node at (6.15,.7) {{$0$}};
\node at (6.6,.7) {{$\frac 12$}};
\node at (7.,.7) {{$1$}};
\node at (8.,.7) {{$2$}};
\node at (9.,.7) {{$3$}};
 
\draw [thick] (6.5,3) -- (9,3);
\draw [->, thick] (9,3) arc (90:45:2cm);
\draw [thick] (10.41,2.41) arc (45:0:2cm);
\draw [->, thick] (10.99,1) arc (0:-45:2cm);
\draw [thick] (10.41,-0.42) arc (-45:-90:2cm);
\draw [thick] (6.5,-1.01) -- (9,-1.01);

\draw [thick] (5.5,4) -- (9,4);
\draw [->, thick] (9,4) arc (90:45:3cm);
\draw [thick] (11.1,3.13) arc (45:0:3cm);
\draw [->, thick] (11.98,1.) arc (0:-45:3cm);
\draw [thick] (11.13,-1.11) arc (-45:-90:3cm);
\draw [thick] (5.5,-2) -- (9.01,-1.99);

\node [thick,red] at (8,1.5) {\bf{{$\Gamma[-z]$}}}; 

\node [thick] at (7,-2.29) {\bf{{$\Omega_1$}}}; 
\node [thick] at (8,-1.29) {\bf{{$\Omega_2$}}};

\end{feynman}
\end{tikzpicture}
\caption{Integration contours $\Omega_1$ and $\Omega_2$. Contours of integration are at fixed $\Re(z)=- \frac 12, \frac 12$ which correspond to the sums starting at $n=0,1$ at \texttt{In[13]} and \texttt{In[14]}, 
respectively.}
\label{fig:chap2_residuegammaz}
\end{figure}
 
\end{tips}

\begin{tips}{Choosing Right Direction of the Integration Contour}
Calculate the integral of $\Gamma[z]$ along a line parallel to the imaginary axis at $\Re(z)=-1/2$.

The idea is to approximate an integral over a closed path by an integral over a
 part of the path only, as shown in Eq.~(\ref{eq:bromwich}). We can write
\beq \label{eq-oint}
\oint_{-1/3-i\infty}^{-1/3+i\infty} dz \Gamma[z]
~ \approx ~
\oint_{-1/3-9i}^{-1/3+9i} dz \Gamma[z] = (-i)~3.97173.
\eeq

This integral is assumed to be part of a closed path, closing to the left or to the right,
resulting in different sums of residues representing the integral.

Closing to the left (anti-clockwise)

\beq
\oint_{-1/2-i}^{-1/2+i} dz \Gamma[z] = 
2\pi i \sum_{n=1}^{\infty} \frac{(-1)^n}{n!} = 2\pi i \; \frac{1-e}{e} =  - 3.97173 \; i \label{eq-oint1}
\eeq
while closing contour on right
\beq 
(-1)\; 2\pi i \sum_{n=0}^{0} \frac{(-1)^n}{n!} = - 2\pi i  \neq  -  3.97173 \; i. \label{eq-oint2}
\eeq

While both the approximation Eq.~(\ref{eq-oint}) and the first sum in Eq.~(\ref{eq-oint1}) represent the integral, the second one, eq.Z~(\ref{eq-oint2}) does not. The reason is simple: from the section on the gamma function we know that the integrand in Eq.~(\ref{eq-oint}) becomes small when closing the path to the left. But when we close the integration path to the right, there is a non-negligible contribution from the positive real axis at infinity, and the sum of residues fails to represent the integral in Eq.~(\ref{eq-oint}).

\end{tips}

\begin{tips}{
QFT Example: \texttt{FI} as the \mb{} Integral and the Series \label{exQFT_sec2}}

Let us consider the diagram with two internal propagators as drawn below.

\begin{center}
\begin{tikzpicture}[scale=0.6]
\begin{feynman}
 
%tadpole 

\vertex at (0.5,1) (i1);
\vertex at (2,1) (a);
\vertex at (4,1) (b);
\vertex at (5.5,1) (f1);

\diagram*{
(i1) -- [dashed,thick,momentum=\(p\),/tikzfeynman/momentum/arrow distance=2mm] (a) -- [half left, momentum=\(k\),/tikzfeynman/momentum/arrow shorten=0.25,/tikzfeynman/momentum/arrow distance=2mm] (b) -- [thick,half left, momentum=\(k-p\),/tikzfeynman/momentum/arrow shorten=0.25,/tikzfeynman/momentum/arrow distance=2mm] (a),
(b) -- [dashed,thick] (f1),
};

\end{feynman}
\end{tikzpicture}
\end{center} 

The corresponding integral will be considered in section~\ref{sec:3MBrepr}, its \mb{} form is (see Eq.~(\ref{MB-SE1l2m-lemmas}))

 \begin{eqnarray}
G(1)_{\text{SE2l2m}} &=& \int_{-i \infty-1/8}^{+i \infty-1/8} 
dz_1 \; (-s)^{-\epsilon} {\left(-\frac{m^2}{s} \right)}^{z_1}  \frac{\Gamma[1 - \epsilon - z_1]^2 
\Gamma(-z_1) \Gamma(\epsilon + z_1)}{\Gamma(2 - 2 \epsilon - z_1)},  
\nonumber \\ 
\end{eqnarray}
    Its singular structure due to $\eps$ poles of gamma functions involved will be shown in the chapter on the resolution of singularities in Fig.~\ref{SEpoles}. 
    Expanding to order $\epsilon^{-1}$ and summing residues we get
\begin{equation} 
\sum_{n=0}^\infty \frac{s^{n}}{n!} \frac{\Gamma^3(n + 1)}{ \Gamma(2 + 2n)} 
= \frac{4 \arcsin{[\sqrt{s}/2]}}{\sqrt{4 - s}\sqrt{s}} .
\label{eq:SEsum} \end{equation}
The relation in Eq.~(\ref{eq:SEsum}) is derived in \wwwaux{SE2l2m}. 
\end{tips}

 \section{Gamma and Hypergeometric Functions, Hypergeometric Integrals}\label{sec:gamma_hyperg}

Due to their sophistication, the hypergeometric functions are more often than we think present in solutions to the exact physical problems. For instance, in classical mechanics an exact solution for the pendulum period is 
\begin{equation}
    T = 2 \pi \sqrt{\frac lg}\; \times \; _2F_1 \left[ \frac 12 , \frac 12 ; 1 ; \sin^2{\theta} \right],
\end{equation}
and in the limit of small $\theta$, the classical formula emerges $T = 2 \pi \sqrt{\frac lg} $.
For more examples, see~\cite{de2019solved}.

To prove Barnes lemmas which give  useful relations among gamma functions applied in the next chapters, relations between the   \mb{} integrals and hypergeometric functions are worth studying.   
Hypergeometric series or (ordinary) Gauss functions comes from the extension of the ordinary geometric series 
\begin{equation}
    1+x +x^2+x^3 + \ldots
\end{equation}
to the series of the form

\begin{equation}
    1+\frac{ab}{c} \frac{z}{1!}+ \frac{a{\color{black}{\bf (a+1)}}b{\color{black}{\bf(b+1)}}}{c{\color{black}{\bf(c+1)}}}\frac{z^{\color{black}{\bf 2}}}{{\color{black}{\bf 2}}!} +\ldots
    \label{eq:hyper1}
\end{equation}
\noindent
where in each next term of the sum  we increase by one the number of terms in which parameters $a,b,c$ are involved, simultanously each added parameter is increased by one comparing to the previously added factor (boldface in Eq.~(\ref{eq:hyper1})). Similarly with each next term in the sum the power of variable $z$ is increased by one, normalized by the factorial  increasing accordingly by one (denoted in boldface in Eq.~(\ref{eq:hyper1})).    

Usually the series in Eq.~(\ref{eq:hyper1}) is written in the following way 
\begin{equation}
    _2F_1[a, b; c; z].
    \label{eq:2F1}
\end{equation}
This is the Gauss function.
The two semicolons in Eq.~(\ref{eq:2F1}) separate parameters in numerators, denominators and variable (here $z$), respectively. Most of the elementary functions which appear in mathematical physics can be expressed in terms of the Gauss function, for instance

\begin{eqnarray}
e^z &=& \sum_{n=0}^{\infty} \frac{z^n}{n!} = \lim_{b \to 0} \{_2F_1[1,b;;1;z/b]\},\\
\log(1+z) &=& \sum_{n=0}^{\infty} \frac{(-z)^n}{1+n} = 
{_2F_1[1,1;2;-z]}.
\end{eqnarray}

Other functions worth mentioning which can be derived from the general function in Eq.~(\ref{eq:2F1}) are Bessel and Airy functions or the Hermite polynomials.

The definition of the Gauss function in Eq.~(\ref{eq:2F1}) can be extended to the generalized Gauss (hypergeometric) function
written in a compact way as follows

\begin{equation}
\begin{split}
&
1+\frac{a_1 a_2 \ldots a_A}{b_1 b_2 \ldots b_B}\frac{z}{1!}+
\frac{a_1 (a_1+1) a_2 (a_2+1) \ldots a_A (a_A+1)}{b_1 (b_1+1) b_2 (b_2+1) \ldots b_B (b_B+1) }\frac{z^2}{2!} +\cdots 
\\
\equiv & \sum_{n=0}^{\infty} \frac{(a_1)_n (a_2)_n \ldots (a_A)_n}{(b_1)_n(b_2)_n \ldots (b_B)_n} \frac{z^n}{n!}.
\end{split}
\end{equation}
Here $(a)_n \equiv a(a+1)\cdots (a+n-1) = \frac{\Gamma[a+n]}{\Gamma[a]}$ is  Pochhammer's symbol.
If the sum of this series with $A+B$ parameters exist, the function can be denoted as~\cite{barnes1907b,Slater:1966}
\begin{equation}
_AF_B[a_1,a_2,\ldots,a_A; b_1,b_2,\ldots,b_B;z].    
\end{equation}
Some shorter notation is also used
\begin{equation}
\sum_{n=0}^{\infty} \frac{((a)_A)_n}{((b)_B)_n} \frac{z^n}{n!} = {_AF_B[(a); (b); z]}.  
\end{equation}
Truly, it can hardly be shorter. Also, the products of several gamma functions as they appear in the \mb{} integrals can be written as

\begin{eqnarray}
\frac{\Gamma(a_1)\Gamma(a_2)\ldots \Gamma(a_A)}{\Gamma(b_1) \Gamma(b_1) \ldots \Gamma(b_1)} = \Gamma 
\left[
\begin{array}{c}
     a_1,a_2,\ldots,a_A  \\
     b_1,b_2, \ldots, b_B 
\end{array}
\right] \equiv \Gamma[(a); (b)].
\end{eqnarray}

Using the above notation, the so-called first Barnes lemma (\texttt{1BL})   can be written symbolically in the following way 

\begin{eqnarray}
\frac{1}{2 \pi i}\int_{-i\infty}^{+i\infty} \Gamma 
\left(  a+z, b+z, c-z, d-z \right) dz  &=&   \Gamma 
\left[ \begin{array}{l} a+c, a+d, b+c, b+d \\ a+b+c+d \end{array} \right] \nonumber \\
\label{eq:1stBL}
\end{eqnarray}
\noindent
with the condition $\Re (a+b+c+d)<1$. We will write this equation in a manifest form in Eq.~(\ref{barnes_lemma_1}) when considering construction of \mb{} representations for \texttt{FI}.  

\begin{tips}{\mb{} Integrals and Hypergeometric Functions}
\noindent
Using the relation in Eq.~(\ref{eq:generalresidgamma})
we can cast the \mb{} contour formula in Eq.~(\ref{eq:barnes-type}) in the following form for $|z|<1$ (by closing the contour to the right)
%Tord, recapp 2009
\begin{equation}
\begin{split}
\int_C \frac{dz}{2\pi i} 
    \frac{\Gamma(a+z)\Gamma(b+z)\Gamma(-z)}{\Gamma(c+z)} (-s)^z
        &= \sum\limits_{n=0}^\infty \frac{\Gamma(a+n) \Gamma(b+n)}{\Gamma(c+n)} \frac{s^n}{n !}    
\\
        &= \frac{\Gamma(a) \Gamma(b)}{\Gamma(c)}\;_2F_1(a,b;c;s).
\end{split}
 \label{eq:ressum2F1}  
\end{equation}
The proof for Eq.~(\ref{eq:ressum2F1}) is given in~\cite{Whittaker:1965}, see Problem~\ref{prob:mb2F1}.
Putting $b = c$, we can show that from $_2F_1(a,b;b;s)$  the \mb{} master formula in Eq.~(\ref{mb1}) follows.

The continuation of the hypergeometric series for $|s| > 1$ is performed using the intermediate formula 
% /home/gluza/papers/GIT_rep/ambre/cpc_paper/HUB-Berlin-SS10-MB-lectures-naked_after.tex 
\bea
F(s)&=&
\sum_{n=0}^{\infty}
\frac{\Gamma(a+n)\Gamma(1-c+a+n)\sin[(c-a-n)\pi]} {\Gamma(1+n)\Gamma(1-a+b+n)\cos(n\pi)\sin[(b-a-n)\pi]}
(-s)^{-a-n}
\\
&+& \sum_{n=0}^{\infty}
\frac{\Gamma(b+n)\Gamma(1-c+b+n)\sin[(c-b-n)\pi]} {\Gamma(1+n)\Gamma(1-a+b+n)\cos(n\pi)\sin[(a-b-n)\pi]}
(-s)^{-b-n} \nn
\eea
and yields
\begin{equation}
\begin{split}
\frac{\Gamma(a)\Gamma(b)} {\Gamma(c)} ~{}_2F_1(a,b;c;s)
&=
\frac{\Gamma(a)\Gamma(b-a)} {\Gamma(c-a)} (-s)^{-a}~{}_2F_1\left(a,1-c+a;1-b+a;\frac 1s\right)
\\
&+ \frac{\Gamma(b)\Gamma(a-b)} {\Gamma(c-b)} (-s)^{-b}~{}_2F_1\left(b,1-c+b;1-a+b;\frac 1s\right)\\
\end{split}
\label{eq:2F1inv}
\end{equation}

\end{tips}

\begin{tips}{Proof of \texttt{1BL}}

Armed with complex analysis, we can now understand how the above condition $\Re (a+b+c+d)<1$ in Eq.~(\ref{eq:1stBL}) emerges. In Fig.~\ref{fig:cond1BL} poles of gamma functions of the integrand in Eq.~(\ref{eq:1stBL}) are given.

\begin{figure}
\centering
\begin{tikzpicture}[scale=0.4]
\begin{feynman}
 
\vertex at (2,1) (i1);
\vertex at (4,1) (i2);
\vertex at (6,1) (i3);
\vertex at (8,1) (i4);
\vertex at (9,1) (i5);
\vertex at (6,3.5) (f1);
\vertex at (6,-1.5) (f2);

%\node[right] at (5.5,6) {{$\rm{Im}\; z$}};
%\node[right] at (13,0.7) {{$\rm{Re}\; z$}};

\node[right] at (5.5,6) {{$\Im (z)$}};
\node[right] at (13,0.7) {{$\Re(z)$}};

%a
\path (5.7,3.8) pic[rotate = 0] {cross=3pt};
\path (5.2,3.8) pic[rotate = 0] {cross=3pt};
\path (4.7,3.8) pic[rotate = 0] {cross=3pt};
\path (4.2,3.8) pic[rotate = 0] {cross=3pt};
\path (6.2,3.8) pic[rotate = 0] {cross=3pt}; 
\path (6.7,3.8) pic[rotate = 0] {cross=3pt};
\path (7.2,3.8) pic[rotate = 0] {cross=3pt};
\path (7.7,3.8) pic[rotate = 0] {cross=3pt};

%b
\path (5.7,-.7) pic[rotate = 0] {cross=3pt};
\path (5.2,-.7) pic[rotate = 0] {cross=3pt};
\path (4.7,-.7) pic[rotate = 0] {cross=3pt};
\path (4.2,-.7) pic[rotate = 0] {cross=3pt};
\path (3.7,-.7) pic[rotate = 0] {cross=3pt};
\path (3.2,-.7) pic[rotate = 0] {cross=3pt};
\path (6.2,-.7) pic[rotate = 0] {cross=3pt}; 
\path (6.7,-.7) pic[rotate = 0] {cross=3pt};
\path (7.2,-.7) pic[rotate = 0] {cross=3pt};
\path (7.7,-.7) pic[rotate = 0] {cross=3pt};

%c
\path (4.7,2.3) pic[rotate = 0] {cross=3pt};
\path (5.2,2.3) pic[rotate = 0] {cross=3pt}; 
\path (5.7,2.3) pic[rotate = 0] {cross=3pt}; 
\path (6.2,2.3) pic[rotate = 0] {cross=3pt}; 
\path (6.7,2.3) pic[rotate = 0] {cross=3pt}; 
\path (7.2,2.3) pic[rotate = 0] {cross=3pt}; 
\path (7.7,2.3) pic[rotate = 0] {cross=3pt}; 
\path (8.2,2.3) pic[rotate = 0] {cross=3pt};
\path (8.7,2.3) pic[rotate = 0] {cross=3pt};
\path (9.2,2.3) pic[rotate = 0] {cross=3pt};
\path (9.7,2.3) pic[rotate = 0] {cross=3pt};
\path (10.2,2.3) pic[rotate = 0] {cross=3pt};
\path (10.7,2.3) pic[rotate = 0] {cross=3pt};
\path (11.2,2.3) pic[rotate = 0] {cross=3pt};

%d
\path (4.7,-2.3) pic[rotate = 0] {cross=3pt};
\path (5.2,-2.3) pic[rotate = 0] {cross=3pt}; 
\path (5.7,-2.3) pic[rotate = 0] {cross=3pt}; 
\path (6.2,-2.3) pic[rotate = 0] {cross=3pt}; 
\path (6.7,-2.3) pic[rotate = 0] {cross=3pt}; 
\path (7.2,-2.3) pic[rotate = 0] {cross=3pt}; 
\path (7.7,-2.3) pic[rotate = 0] {cross=3pt}; 
\path (8.2,-2.3) pic[rotate = 0] {cross=3pt};
\path (8.7,-2.3) pic[rotate = 0] {cross=3pt};
\path (9.2,-2.3) pic[rotate = 0] {cross=3pt};
\path (9.7,-2.3) pic[rotate = 0] {cross=3pt};
\path (10.2,-2.3) pic[rotate = 0] {cross=3pt};
\path (10.7,-2.3) pic[rotate = 0] {cross=3pt}; 

%\draw [thick]  (5.5,-5) -- (5.5,1) ;
%\draw [->, thick]  (5.5,1) -- (5.5,3) ;
\draw [thick]  (5.5,3) -- (5.5,4) ;
\draw [->]  (i3) -- (13,1) ;
%\draw [->] (5.5,1) -- (5.5,6) ;
\draw (3,1) -- (6.5,1);
\draw [very thick] (5.5,-4) -- (5.5,-2.7);
\draw [very thick] (5.5,-2.7) -- (4.5,-2.7);
\draw [very thick] (5.5,-2.) -- (4.5,-2.);
\draw [very thick] (4.5,-2) arc (90:270:0.35cm);

\draw [->] (5.5,-4.3) -- (5.5,6);

%\draw [very thick] (5.5,-2) -- (5.5,-1.);
\draw [->, very thick] (5.5,-2) -- (5.5,-1.5);
\draw [very thick] (5.5,-1.5) -- (5.5,-1.1);
\draw [very thick] (5.5,-1.1) -- (8.,-1.1);
\draw [very thick] (8,-1.1) arc (-90:90:0.35cm);
\draw [very thick] (8,-0.4) -- (5.5,-0.4);
\draw [very thick] (5.5,-0.4) -- (5.5,1.9);
\draw [very thick] (5.5,1.9) -- (4.5,1.9);
\draw [very thick] (4.5,1.9) arc (270:90:0.35cm);
\draw [very thick] (4.5,2.6) -- (5.5,2.6);
\draw [->,very thick] (5.5,2.6) -- (5.5,3.1);
\draw [very thick] (5.5,3.1) -- (5.5,3.5);
\draw [very thick] (5.5,3.5) -- (8,3.5);
\draw [very thick] (8,3.5) arc (-90:90:0.35cm);
\draw [very thick] (8,4.2) -- (5.5,4.2);
\draw [very thick] (5.5,4.2) -- (5.5,5.5);

%\draw [thick] (5.5,4) -- (9,4);
\draw [->, very thick] (5.5,5.5) arc (90:0:5cm);
\draw [very thick] (10.5,1) arc (0:-90:5cm);

\node [thick,left] at (5.5,5.5) {\bf{\large{$i R$}}};  
\node [thick,left] at (5.5,-4) {\bf{\large{$-i R$}}};
\node [thick,right,below] at (5.78,0.95) {{{$0$}}};
\node [thick,right,above] at (11.2,0.0) {\bf{\large{$R$}}};
\node [thick,left] at (9.3,3.3) {\bf{\large{$a$}}};
\node [thick,left] at (4.2,1.9) {\bf{\large{$c$}}};
\node [thick,left] at (9.3,-1.23) {\bf{\large{$b$}}};
\node [thick,left] at (4.2,-2.8) {\bf{\large{$d$}}};

\end{feynman}
\end{tikzpicture}
\caption{Integration contour and singularities in the complex $z$ plane connected with Eq.~(\ref{eq:1stBL}).}
\label{fig:cond1BL}
\end{figure}
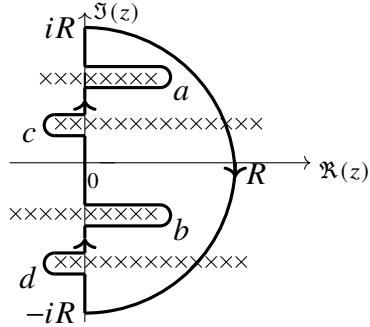

Due to the Cauchy's residue theorem discussed in section~\ref{sec:2complex} we are interested in the situation of vanishing integral round the semi-circle when approaching infinity, $R \to \infty$. 
Such integral can be parametrized with $z=R e^{it}$ as

\begin{eqnarray}
I_{\rm semi-circle}&=&\frac{-1}{2 \pi i} \int_{-\pi/2}^{\pi/2}\Gamma(a+R e^{it},b+R e^{it},c-R e^{it},d-R e^{it}) R e^{it} i dt.\nonumber \\
\end{eqnarray}
Then, taking asymptotic behaviour of the gamma function (\ref{AsymptoticGamma}), we have  $|I_{\rm semi-circle}|< {\rm const}\times  R^{-a-b+1-c+1-d} $, which vanishes with $R \to \infty$ when $\Re [a+b+c+d]<1$. 

To show that the integral over the remaining part of the closed contour in Fig.~\ref{fig:cond1BL} which is parallel to the imaginary axis from $-iR$ to $+iR$ equals to the right hand side of Eq.~(\ref{eq:1stBL}), we need to sum the residues at $z=c+n$ and $z=d+n$, $n=0,1,2,\ldots \in \mathbb{Z}$

\begin{eqnarray}
I_A &=& \sum_{n=0}^{[R]} \Gamma(a+c+n,b+c+n,d-c-n)\frac{(-1)^{n-1}}{n!}\nonumber \\
&+&\sum_{n=0}^{[R]} \Gamma(a+d+n,b+d+n,c-d-n)\frac{(-1)^{n-1}}{n!}.
\label{eq:residuesums1BL}
\end{eqnarray}

In the limit of $[R] \to \infty$ we get the two sums which involve the hypergeometric function $_2F_1$,
\begin{eqnarray}
I_A &=&  \Gamma(a+c,b+c,d-c)\; _2F_1[a+c,b+c;1+c-d;1]\nonumber \\
&+&\Gamma(a+d,b+d,c-d)\; _2F_1[a+d,b+d;1+d-c;1]. \label{eq:sum2F1}
\end{eqnarray}

If we use the Gauss's theorem~\cite{Slater:1966}
\begin{equation}
_2F_1[a,b;c;1]    = \left[ \begin{array}{l} c, c-a-b \\ c-a, c-b \end{array} \right], \label{eq:gausstheorem}
\end{equation}
we can merge the two terms into the right-hand side of Eq.~\ref{eq:1stBL}), q.e.d.

\end{tips}

The so-called second Barnes lemma reads  
\begin{eqnarray}
%\begin{equation}
&&\frac{1}{2 \pi i}\int_{-i\infty}^{+i\infty} \Gamma 
\left[ \begin{array}{l} a+z, b+z, b+c, c+z,1-d-z,-z \\ e+z \end{array} \right]  dz  \nonumber \\ 
&=&   \Gamma 
\left[ \begin{array}{l}  a,b,c,1-d+a,1-d+b,1-d+c \\ e-a,e-b,e-c \end{array} \right],  \label{eq:2ndBL}
\end{eqnarray}
with Saalschutzian condition $1+a+b+c=d+e$.

 We will write this equation in a manifest form when considering physics applications in Eq.~(\ref{barnes_lemma_2}).
 
Barnes Lemmas Eqs.~(\ref{eq:1stBL})~and~(\ref{eq:2ndBL}) can be generalized  with the help of Gauss summation theorem for the hypergeometric function $_2 F_1$ into a general theorem which includes all the Barnes-type integrals~\cite{Slater:1966} 

\begin{eqnarray}
&&\frac{1}{2 \pi i}\int_{-i\infty}^{+i\infty} \Gamma 
\left[ \begin{array}{l} (a)+z, (b)-z\\ (c)+z, (d)-z \end{array} \right] dz \label{genhyp}\\
&&\nonumber \\
&=& \sum\limits_{\mu=1}^{A} \Gamma 
\left[ \begin{array}{l} (a)-a_\mu, (b)+a_\mu\\ (c)-a_\mu, (d)+a_\mu \end{array} \right] 
 {_{B+C}F_{A+D-1}} \left[ \begin{array}{l} (b)+a_\mu, 1+a_\mu-(c);\\ 1+a_\mu-(a)', (d)+a_\mu; \end{array} (-1)^{A+C} \right] \nonumber  \\
 &&\nonumber \\
&=& {\small{\sum\limits_{\nu=1}^{B} \Gamma 
\left[ \begin{array}{l} (a)+b_\nu, (b)-b_\nu\\ (c)+b_\nu, (d)-b_\nu \end{array} \right] 
 {_{A+D}F_{B+C-1}} \left[ \begin{array}{l} (a)+b_\nu, 1+b_\nu-(d);\\ (c)+b_\nu-(a), 1+b_\nu-(b)'; \end{array} (-1)^{B+D} \right]. }}
 \nonumber
\end{eqnarray} 

 where
  \begin{eqnarray}
  \Re (c+d-a-b)&>&0,\;  A-C=B-D \geq 0.
\label{genhypcond}  \end{eqnarray}

Here prime in the notation $(\ldots)'$ indicates that denominators in Eq.~(\ref{genhyp}) give factorial elements in each term of the series.

In principle, using Eq.~(\ref{genhyp}) one can perform the integration over any free $z$ variable.
However, utility of such integration is limited due to the complexity of an emerging representation for generalized hypergeometric functions $_A F_B(;\pm 1)$. So far in practice only the first and second Barnes lemmas are in use, see next chapters.
 
More efficient way to decrease dimensionality of \mb{} integrals is to perform a transformation  of variables in such a way that  Barnes lemmas can be applied. 
Such an approach has been  implemented in the package \texttt{barnesroutines.m}~\cite{mbtools-kosower} 
for \mb-integrals with fixed contours which appear after expansion of \mb-reprepresntation in $\epsilon$. 
It performs a search for suitable  transformations with
restrictions to absolute values of matrix elements for this transformation. See section~\ref{sec:BLeff} for details. 

To finish this section, let us mention some multivariate generalizations of hypergeometric functions. This is a vast topic and here we are only scratching the surface~\cite{Abreu:2019wzk}. Let us begin with generalizations to two variables. Perhaps the most well-known of these are the Appell functions $F_1$, $F_2$, $F_3$ and $F_4$. These can be defined by the following double sums
\begin{align}
    F_1(\alpha,\beta,\beta',\gamma,x,y) &= 
    \sum_{m=0}^{\infty}\sum_{n=0}^{\infty}
    \frac{(\alpha)_{m+n} (\beta)_m (\beta')_n}
    {(\gamma)_{m+n} m! n!} x^m y^n, \qquad
    |x| < 1,\;\; |y|<1,
\label{eq:Appel-F1-def}
\\
    F_2(\alpha,\beta,\beta',\gamma,\gamma',x,y) &= 
    \sum_{m=0}^{\infty}\sum_{n=0}^{\infty}
    \frac{(\alpha)_{m+n} (\beta)_m (\beta')_n}
    {(\gamma)_{m} (\gamma')_{n} m! n!} x^m y^n, \qquad
    |x| + |y|<1,
\label{eq:Appel-F2-def}
\\
    F_3(\alpha,\alpha',\beta,\beta',\gamma,x,y) &= 
    \sum_{m=0}^{\infty}\sum_{n=0}^{\infty}
    \frac{(\alpha)_{m} (\alpha')_{n} (\beta)_m (\beta')_n}
    {(\gamma)_{m+n} m! n!} x^m y^n, \qquad
    |x| < 1,\;\; |y|<1,
\label{eq:Appel-F3-def}
\\
    F_4(\alpha,\beta,\gamma,\gamma',x,y) &= 
    \sum_{m=0}^{\infty}\sum_{n=0}^{\infty}
    \frac{(\alpha)_{m+n} (\beta)_{m+n}}
    {(\gamma)_{m} (\gamma')_{n} m! n!} x^m y^n, \qquad
    |\sqrt{x}| + |\sqrt{y}|<1,
\label{eq:Appel-F4-def}
\end{align}
where we have indicated the region of convergence for each sum. These double sums are obviously generalizations of the hypergeometric series for the Gauss function, but extended to two variables. Other generalizations can be defined along the same lines and in fact, the Appell functions are just the first four in the set of 34 so-called Horn functions. More elaborate generalizations, still of two variables, such as the Kamp\'e de F\'eriet function have also been considered. Indeed, all four Appell functions are just special cases of the more general  Kamp\'e de F\'eriet function.

Returning to the Appell functions, we mention that they also admit representations as two-dimensional real Euler-type integrals, as well as \MB{} integrals. Here we limit ourselves to presenting these for the first Appell function $F_1$ only, which we will encounter later in section~\ref{sec:expansion-special-fcns}. We have
\begin{equation}
\begin{split}
&
    F_1(\alpha,\beta,\beta',\gamma,x,y) =
    \frac{\Gamma(\gamma)}{\Gamma(\beta)\Gamma(\beta')\Gamma(\gamma-\beta-\beta')}
\\ &\qquad\times
    \int_0^1 du\, \int_0^{1-u} dv\,
    u^{\beta-1} v^{\beta'-1}
    (1-u-v)^{\gamma-\beta-\beta'-1}
    (1 - u x - v y)^{-\alpha},
\end{split}
\end{equation}
but the $F_1$ function (though not the other Appell functions) actually admits a simple one-dimensional real integral representation as well,
\begin{equation}
    F_1(\alpha,\beta,\beta',\gamma,x,y) =
    \frac{\Gamma(\gamma)}{\Gamma(\alpha)\Gamma(\gamma-\alpha)}
    \int_0^1 du\,
    u^{\alpha-1} (1-u)^{\gamma-\alpha-1}
    (1-u x)^{-\beta}
    (1-u y)^{-\beta'}.
\label{eq:F1-1d-real-int}
\end{equation}
The $F_1$ function can also be written as a two-dimensional \MB{} integral,
\begin{equation}
    \begin{split}&
    F_1(\alpha,\beta,\beta',\gamma,x,y) =
    \frac{\Gamma(\gamma)}{\Gamma(\alpha)\Gamma(\beta)\Gamma(\beta')}
    \int_{-i\infty}^{+i\infty} \frac{dz_1}{2\pi i}
    \int_{-i\infty}^{+i\infty} \frac{dz_2}{2\pi i}
\\ &\qquad\times
    \frac{\Gamma(\alpha+z_1+z_2)\Gamma(\beta+z_1)\Gamma(\beta'+z_2)}{\Gamma(\gamma+z_1+z_2)}
    \Gamma(-z_1)\Gamma(-z_2)(-x)^{z_1}(-y)^{z_2}.
    \end{split}
\end{equation}

Turning to multi-variable generalizations, we mention here the $H$-function of several variables. This function has been discussed in various forms by several authors in the literature and here we adopt the definition of~\cite{hai1995convergence}\footnote{Our definition given in Eq.~(\ref{eq:H-func-def}) is different form the $H$-function considered by~\cite{hai1995convergence} only in the replacement of $\boldsymbol{x}^{\boldsymbol{-s}}$ by $\boldsymbol{x}^{\boldsymbol{s}}$. We have made this replacement in order to unify this definition with previous expressions for special functions represented as \MB{} integrals.}. In the most general case, the $H$-function of $N$ variables is defined as follows:
\begin{equation}
    H[{\boldsymbol{x}},(\boldsymbol{\alpha},\boldsymbol{A}),(\boldsymbol{\beta},\boldsymbol{B});\boldsymbol{L}_{\boldsymbol{s}}] =
    (2\pi i)^N \int_{\boldsymbol{L}_{\boldsymbol{s}}}
    \Theta(\boldsymbol{s}) \boldsymbol{x}^{\boldsymbol{s}} 
    \boldsymbol{d}\boldsymbol{s},
\label{eq:H-func-def}
\end{equation}
where
\begin{equation}
    \Theta(\boldsymbol{s}) = 
    \frac{\prod_{j=1}^{m} 
    \Gamma\left(\alpha_j + \sum_{k=1}^{N} a_{j,k} s_k\right)}
    {\prod_{j=1}^{n} 
    \Gamma\left(\beta_j + \sum_{k=1}^{N} b_{j,k} s_k\right)}.
\end{equation}
Here $\boldsymbol{s}=(s_1,\ldots,s_N)$, $\boldsymbol{x}=(x_1,\ldots,s_N)$, $\boldsymbol{\alpha}=(\alpha_1,\ldots,\alpha_m)$ and  $\boldsymbol{\beta}=(\beta_1,\ldots,\beta_n)$ denote vectors of complex numbers, while
\begin{equation}
    \boldsymbol{A} = (a_{j,k})_{m\times N}
    \qquad\mbox{and}\qquad
    \boldsymbol{B} = (b_{j,k})_{n\times N}
\end{equation}
are matrices of real numbers. Finally
\begin{equation}
    \boldsymbol{x}^{\boldsymbol{s}} =
    \prod_{k=1}^{N} (x_k)^{s_k};\qquad
    \boldsymbol{d}\boldsymbol{s} =
    \prod_{k=1}^{N} ds_k;\qquad
    \boldsymbol{L}_{\boldsymbol{s}} = 
    L_{s_1}\times\cdots\times L_{s_N},
\end{equation}
where $L_{s_k}$ is an infinite contour in the complex $s_k$-plane running from $-i\infty$ to $+i\infty$ such that $\Theta(\boldsymbol{s})$ has no singularities for $\boldsymbol{s}\in \boldsymbol{L}_{\boldsymbol{s}}$.

The $H$-function defined in Eq.~(\ref{eq:H-func-def}) above generalizes nearly all known special functions of $N$ variables, e.g.  Lauricella functions $F^{(N)}_A$, $F^{(N)}_B$, $F^{(N)}_C$ and $F^{(N)}_D$; the $G$-function of $N$ variables; the special $H$-function of $N$ variables, etc. For the specific cases of $N = 1$ and $2$, it essentially reduces to the known Fox's $H$-function of one variable and the $H$-function of two variables defined by various authors scattered in the literature. 

%We see that the $H$-function of $N$ variables is basically just a \commjg{not finished}

\section*{Problems} 
\addcontentsline{toc}{section}{Problems} 
 
\begin{problem}
\label{prob:logphase}
As we have discussed, the logarithm of the product of two complex numbers $w$ and $z$ can be written as the sum of the logarithms plus an additional phase.
\begin{equation}
    \ln(w z) = \ln w + \ln z + \eta(w,z).
    \label{eq:lnwz-prob}
\end{equation}
Show that for \emph{generic} complex numbers $w$ and $z$ (i.e., assuming that both the real and imaginary parts of $w$ and $z$ are non-zero), this phase can be written in the following form:
\begin{equation}
\eta(w,z) = 2\pi i \Bigl[ \theta(- Im\, w)\theta(- Im\, z)\theta(+ Im\, wz)
- \theta(+ Im\, w)\theta(+ Im\, z)\theta(- Im\, wz)\Bigr]\,.
%\label{appx148}
\nonumber
\end{equation}
\hint{} The argument of the left hand side of Eq.~(\ref{eq:lnwz-prob}), $\ln(w z)$, lies in the interval $(-\pi,\pi]$ by definition. Consider the circumstances under which the argument of the right hand side falls outside this interval, and how it can be mapped to this interval by adding an integer multiple of $2\pi i$.

Is the expression for $\eta(w,z)$ valid also for non-generic complex numbers? If not, why not? 
\\ \noindent \hint{} Consider the case when $w$ or $z$ (or both) is a negative real number and the case when both $w$ and $z$ are purely imaginary.

\end{problem}

\begin{problem}
\label{prob:Li2}
Consider the general dilogarithmic integral discussed in section~\ref{sec:gen-Li2-int},
\begin{equation}
    \int \frac{dx}{mx+b}\ln(nx+a).    
\end{equation}
We gave two solutions for this integral, one in Eq.~(\ref{eq:gen-Li2-int}) and one as the direct output of integration with \math{}. Show that the real parts of the two solutions always coincide up to a constant that is independent of $x$, while the imaginary parts may differ by a function of the form $C\cdot\ln(mx+b)$.
\\ \noindent \hint{} Use Eq.~(\ref{eq:li2omz}) to relate the dilogarithms in the two expressions then examine their difference. Be careful to keep in mind the proper functional identities for logarithms of complex variables such as Eqs.~(\ref{eq:ln_wtimesz})~and~(\ref{eq:ln_woverz}).
\end{problem}

\begin{problem}
\label{prob:G-shuffle}
Derive the shuffle product of $G(a;z)G(b;z)$ starting from the representations of these functions as integrals.
\\ \noindent \hint{} Consider the integral representation
\begin{equation}
G(a;z)G(b;z) = \int_0^{z} \frac{dt_1}{t_1-a}\int_0^{z} \frac{dt_2}{t_2-b}
\end{equation}
and split the two-dimensional domain of integration (the the square with corners $(0, 0)$, $(0, z)$, $(z, 0)$ and $(z, z)$) along the diagonal. See e.g.~\cite{Duhr:2014woa} for details.
\end{problem}

\begin{problem}
%\item 
\label{prob:gammaexpansion} Prove the asymptotic behavior of the gamma function for integer and complex variables, Eqs.~(\ref{AsymptoticnGamma})~and~(\ref{AsymptoticGamma}). Try to understand the origin of different behavior of the square root factors for $n \to \infty $ and  $z \to \infty$ in two formulas. 
\\ \noindent \hint{} Relation in  Eq.~(\ref{AsymptoticnGamma}) is the so-called Stirling's formula, the proof can be found among others in many statistics textbooks. More terms in expansion are given in Eq.~(\ref{gammaAsy}).
\end{problem}

\begin{problem}
\label{prob:gammaiasy} Consider the behaviour of the gamma function along the imaginary axis, i.e., $\Gamma(iy)$, with $y\in \mathbb{R}$. Show that for $|y|\to\infty$, both the real and imaginary parts of $\Gamma(iy)$ fall off as $|y|^{-1/2} e^{-i \pi |y|/2}$.
\\
\hint{} Use $z^z = e^{z\ln z}$ and the properties of the exponential function to manipulate the asymptotic formula in Eq.~(\ref{AsymptoticGamma}) to the desired form. 
\end{problem}

\begin{problem}
\label{prob:gammares}
Prove Eq.~(\ref{eq:gamma-res-n}).
\\ \noindent \hint{}  Use the recursive property of the gamma function and Eq.~(\ref{eq:gamma-res-0}).
\end{problem}

\begin{problem}
\label{prob:polygammapoles}
Argue that polygamma functions have poles of order $m+1$ at all non-positive integers.
\\ \noindent \hint{} Consider the behaviour of the gamma function around non-positive integers as given in Eq.~(\ref{eq:gamma-res-n}). Then use the definition of the polygamma functions, Eqs.~(\ref{eq:psi0-def})~and~(\ref{eq:psin-def}) to deduce the behaviour of these functions around non-positive integers.
\end{problem}

\begin{problem}
\label{prob:curvedirection}
This is a problem bridging mathematical and physics intuition. The integral in Eq.~(\ref{eq:cauchy}) in section~\ref{sec:2complex} depends on the direction of curve $C$. In the mirror world, the direction of the curve is opposite (left- and right-handedness are reversed). Changing $C_+$ to $C_-$ in Eq.~(\ref{eq:cauchy}) changes the sign of the result. Does this mean that Cauchy's integral can be determined only up to the sign and it is not well-defined if parity is conserved? How is the problem fixed in mathematics?
\\
\hint{} Consider the Jordan curve, notions of the interior and exterior, see~\cite{Zeidler:qed}.
\end{problem}

\begin{problem}
 \label{prob:residpolygamma} Prove equation~(\ref{eq:residpolygamma}). Assume that $G(z)$ is non-singular at $z=-n$, $n\in\mathbb{N}$.
 \\ \noindent \hint{} By definition, the residue is the coefficient of $(z-(-n))^{-1}$ in the Laurent-expansion of the function. One way of computing this expansion is to use Eq.~(\ref{eq:gamma-invert-sin}) to express $\Gamma(z)$ in terms of $\Gamma(1-z)$ and $\frac{\pi}{\sin\pi z}$. Then, $G(z)$ and $\Gamma(1-z)$ are regular at $z=-n$ and can be simply Taylor-expanded (note that the derivative of the $\Gamma$ function, $\psi^{(0)}$, will appear). On the other hand, $\left(\frac{\pi}{\sin\pi z}\right)^2$ is obviously singular as $\frac{1}{(z+n)^2}$ around $z=-n$ but the the Laurent-expansion is easy to construct e.g. starting form the series representation of the sine. Alternatively, you can consider the Taylor-expansion of the function $\left[\frac{(z+n)\pi}{\sin\pi z}\right]^2$. Then, it is straightforward to obtain the coefficient of $(z+n)^{-1}$.

\end{problem}
 
\begin{problem}
 \label{prob:mb2F1}  Show that from $_2F_1(a, b; c; z)$ the basic 
\mb{} formula in Eq.~(\ref{mb1}) follows: \newline  
$$\frac{1}{(A + B)^\lambda} =
        \frac{1}{\Gamma(\lambda)} \frac{1}{2 \pi i}
        \int\limits_{c-i \infty}^{c+i \infty}
        dz \Gamma(\lambda + z) \Gamma(-z)
        A^z B^{-\lambda - z}.
$$
\\ \noindent \hint{} Use a series in Eq.~(\ref{eq:ressum2F1}) with $b=c$ and apply relations given just below Eq.~(\ref{mb1}).
\end{problem}

\putbib[%
bibs/refs,%
bibs/2loops_LL16,%
bibs/Phd_Dubovyk,%
bibs/LRrefa,%
bibs/2loopsreport]
\end{bibunit}
%%%%%%%%%%%%%%%%%%%%% chapter.tex
%%%%%%%%%%%%%%%%%%%%%%%%%%%%%%%%%
%%%%%%%%%%%%%%%%%%%%%%%%%%%%%%%%%
\begin{bibunit}[elsarticle-num-ID] % define the bib-style for the unit: elsarticle-num.bst
%  text-1; this is the corresponding section
%\putbib[2loops] % the *.bib
%\end{bibunit}
% go-on
%--- from: bibunits.sty, adapts the font size of ``References'' to section
\let\stdthebibliography\thebibliography
\renewcommand{\thebibliography}{%
\let\section\subsection
\stdthebibliography} 

\chapter{Mellin-Barnes Representations for Feynman Integrals}
\label{chapter-MBrepr}  

\abstract{Starting from Feynman integrals defined in momentum space, Feynman and Schwinger parametrizations are introduced and their equivalence is shown. Further, the properties of Feynman graphs and the corresponding $F$ and $U$ Symanzik polynomials are discussed. The \mb{} master formula and a suite of useful programs for constructing and evaluating \mb{} integrals are introduced. Peculiarities of the construction of an \mb{} representation due to the planarity of a given diagram are examined and the notion of the loop-by-loop (\texttt{LA}) and global (\texttt{GA}) approaches are introduced. The Cheng-Wu theorem is proved and applied to the derivation of \mb{} representations for non-planar diagrams. The first and second Barnes lemmas are used in order to maximally simplify the structure and a number of complex variables in \mb{} representations. Then, specific 1-loop to 3-loop examples of \mb{} integral constructions are shown. An alternative `Method of Brackets' for constructing \mb{} representations is also introduced. Finally, the application of the \mb{} method to the computation of real phase space integrals is discussed.}

\section{Representations of Feynman Integrals \label{sec:FIrepr}}

As discussed in section~\ref{sec:singQFT_general}, our starting point is the integral in Eq.~(\ref{eq-bha}). Omitting the prefactors $e^{\eps\gamma L}$ and $(2\pi\mu)^{(4-d)L}$, let us focus on scalar integrals with $T(k) =1$,

\begin{equation}\label{eq-bha_simpl}
%\boxed{
G_L[1]
=
\frac{1}{(i\pi^{d/2})^L} \int \frac{d^dk_1 \ldots d^dk_L~~}
     {(q_1^2-m_1^2)^{n_1} \ldots (q_i^2-m_i^2)^{n_j} \ldots
       (q_N^2-m_N^2)^{n_N}  }  .
%}
\end{equation}

A single Feynman propagator $D_i$ is of the form
\begin{equation}
 D_i = q_i^2 - m_i^2 + i \delta = \left[
 \sum \limits_{l=1}^{L} c_i^l k_l + \sum \limits_{e=1}^{E} d_i^e p_e
 \right]^2 - m_i^2  + i \delta,
\end{equation}
where $k_l$ and $p_e$ are internal and external momenta respectively. The $c_i^l, \, d_i^e \in [-1,1]$ are integer
coefficients and depend on a particular topology. Here we also added a small imaginary part $i \delta$ to the propagator as was discussed in section~\ref{sec:singQFT_general}.

To proceed further we introduce a generalized Feynman parameter representation
\begin{eqnarray}
 \frac{1}{D_1^{n_1} D_2^{n_2} \ldots D_N^{n_N}} &=& \frac{\Gamma(n_1+ \ldots + n_N)}{\Gamma(n_1) \ldots \Gamma(n_N)} \\
 &&
 \int \limits_{0}^{1} d x_1 \ldots \int \limits_{0}^{1} d x_N 
 \frac{x_1^{n_1-1} \ldots x_N^{n_N-1} \delta(1 - x_1 - \ldots - x_m)}
      {(x_1 D_1 + \ldots + x_N D_N)^{N_{\nu}}}  \nonumber 
\label{feynparametrization1}      
\end{eqnarray}
with $N_{\nu} = n_1+ \ldots + n_N$.

An alternative is the so-called Schwinger or $\alpha$-parameter representation, which comes from the generalized identity (Problem~\ref{prob:expIdent})
\begin{eqnarray}
 \frac{1}{D_1^{n_1} D_2^{n_2} \ldots D_N^{n_N}} &=& \frac{i^{-N_{\nu}}}{\Gamma(n_1) \ldots \Gamma(n_N)} \label{eq:expIdent}\\
&& \int \limits_{0}^{\infty} d \alpha_1 \ldots \int \limits_{0}^{\infty} d \alpha_N 
 \alpha_1^{n_1-1} \ldots \alpha_N^{n_N-1} e^{i[\alpha_1 D_1 + \ldots + \alpha_N D_N]}. \nonumber 
\end{eqnarray}

Here it is useful to show the connection between these two representations; we
will use it later in section~\ref{sec:CWtheorem} discussing the Cheng--Wu theorem. Using the identity
\begin{equation}
 1 = \int \limits_{0}^{\infty} \frac{d \lambda}{\lambda} 
 \delta \left( 1 - \frac{1}{\lambda} \sum \limits_{i=1}^{N} \alpha_i  \right) \label{eq:deltaIdent}
\end{equation}
and changing variables from $\alpha_i$ to $\alpha_i = \lambda x_i$ , one can find
\begin{eqnarray}
 \frac{1}{D_1^{n_1} D_2^{n_2} \ldots D_N^{n_N}} = \frac{i^{-N_{\nu}}}{\Gamma(n_1) \ldots \Gamma(n_N)} && 
 \int \limits_{0}^{\infty} d x_1 \ldots \int \limits_{0}^{\infty} d x_N \, x_1^{n_1-1} \ldots x_N^{n_N-1}  \label{eq:Sexp}  \\
 \times &&\int \limits_{0}^{\infty} d \lambda \lambda^{N_{\nu}-1} 
 \delta \left( 1 - \sum \limits_{i=1}^{N} x_i  \right)
 e^{i \lambda \sum \limits_{i=1}^{N} x_i D_i}. \nonumber
\end{eqnarray}
Integrating over $\lambda$ we come to the Feynman parameter representation of Eq.~(\ref{feynparametrization1}).
Note that all $x_i$ are positive while the sum of $x_i$ must be unity. 
Therefore the integration region can be limited:
\begin{equation*}
 0 < x_i < 1  \,\,\, \Leftrightarrow \,\,\, 0 < x_i < \infty .
\end{equation*}

Let us now consider the momentum dependent function
\begin{equation}
 m^2(\vec x) = x_1 D_1 + \ldots + x_i D_i + \ldots + x_N D_N
    = k_i M_{ij} k_j - 2 Q_j k_j + J
\label{m2vecx}    
\end{equation}
where $M$ is an $(L \times L)$-matrix, $Q=Q(x_i, p_e)$ -- an $L$-vector and $J=J(x_i x_j, m_i^2, p_{e_i} \cdot  p_{e_j})$. 
%\vskip 15mm

Before integration over loop momenta one has to perform several preparatory steps:
\begin{itemize}

 \item Shift momenta in order to remove linear terms in $k$,
 \begin{equation}
 \label{m_shifts}
  k \rightarrow k + M^{-1} Q \Rightarrow m^2 = k M k - Q M^{-1} Q + J.
 \end{equation}
 Shifts {over internal momenta} leave the integrals unchanged.

 \item Wick rotations -- transform Minkowskian space into the Euclidean for all loop momenta:
 \begin{equation*}
  k_0 \rightarrow i k_0; \,\, k_j \rightarrow k_j (1 \leqslant j \leqslant d-1) \Rightarrow
  k^2 \rightarrow - k^2; \,\,\, d^d k \rightarrow i d^d k.
 \end{equation*}

 \item Diagonalization of the matrix $M$:
\begin{equation*}
 k^{\dag} M k = (V(x)k)^{\dag} V(x) M {V(x)}^{-1} V(x)k; \,\,\, k(x) = V(x)k;
\end{equation*}
\begin{equation*}
 V M V^{-1} = M_{\textrm{diag}}; \,\,\, (V^{\dag} = V^{-1})    
\end{equation*}{}
\begin{equation*}
  k M k \Rightarrow k(x) M_{\textrm{diag}} k(x) = \sum \limits_{i=1}^{L} \alpha_i k_i^2(x).
\end{equation*}
 The operation leaves integrals unchanged.
 %M_{diag} is useless for factorization
 After such manipulations the function $m^2$ has the following form:
  \begin{equation*}
    m^2 = - \sum \limits_{i=1}^{L} \alpha_i k_i^2  - Q M^{-1} Q + J.
  \end{equation*}

\item Rescale $k_i$:
 \begin{equation*}
   k_i \rightarrow \sqrt{\alpha_i} k_i \Rightarrow d^d k_i \rightarrow (\alpha_i)^{-d/2} d^d k_i \,\,\,\,
   {\textrm{and}} \,\,\,\, \prod \limits_{i=1}^{L} \alpha_i = \det M .
 \end{equation*}
 
\end{itemize}

Finally, we obtain
\begin{equation}
 G_L[1] = (-1)^{N_{\nu}}(i)^L (\det M)^{-d/2} \frac{\Gamma(N_{\nu})}{\prod \limits_{i=1}^{N} \Gamma(n_i)} 
 \int dx_1 \ldots dx_N
 \int \frac{Dk_1 \ldots Dk_L}
 {\left(\sum \limits_{i=1}^{L} k_i^2 + Q M^{-1} Q - J\right)^{N_{\nu}}}
\label{eq:Fpar0} \end{equation}
or 
\begin{equation}
 G_L[1] = \frac{(i)^{L-N_{\nu}} (\det M)^{-d/2}}{\prod \limits_{i=1}^{N} \Gamma(n_i)}
 \int d\alpha_1 \ldots d\alpha_N
 \int Dk_1 \ldots Dk_L \,\, e^{ - i \left( \sum \limits_{i=1}^{L} k_i^2 + Q M^{-1} Q - J \right) },
\label{eq:Spar0}
\end{equation}
with $Dk = \dfrac{d^dk}{i \pi^{d/2}}$.

Now the integration over loop momenta can be done in a simple way (see Problem~\ref{prob:dangles})
\begin{equation}
\label{k_integration1}
 i^L \int \frac{Dk_1 \ldots Dk_L}
 {\left(\sum \limits_{i=1}^{L} k_i^2 + \mu^2(x)\right)^{N_{\nu}}} = \frac{\Gamma\left( N_{\nu} - \frac{d}{2} L \right)}{\Gamma(N_{\nu})}
 \frac{1}{(\mu^2(x))^{N_{\nu} - \frac{dL}{2}}},
\end{equation}
\begin{equation}
 \int Dk_1 \ldots Dk_L e^{ -i \left( \sum \limits_{i=1}^{L} k_i^2 + \mu^2(\alpha) \right) } = (- i)^{-Ld/2} e^{- i \mu^2(\alpha)}, \label{k_integration2}
\end{equation}
with $\mu^2(x) = Q M^{-1} Q - J$. 
The final result (Feynman parametrization) is
\begin{equation}
\boxed{ G_L[1]  =  
 \frac{(-1)^{N_{\nu}} \Gamma\left(N_{\nu}-\frac{d}{2}L\right)}
 {\prod \limits_{i=1}^{N}\Gamma(n_i)}
 \int \prod \limits_{j=1}^N dx_j ~ x_j^{n_j-1}
 \delta(1-\sum \limits_{i=1}^N x_i)
 \frac{U(\vec x)^{N_{\nu}-d(L+1)/2}}{F(\vec x)^{N_{\nu}-dL/2}}
 }
\label{FeynSgen}
\end{equation}
where we introduced two Feynman graph polynomials $U$ and $F$
\begin{align}
%\label{uf_algdef}
 m^2 = k M k - 2 Q k + J \Leftrightarrow & \,\,\, U = \det M,  \label{uf_algdefU} \\
                                         & \,\,\, F = - \det M~J + Q M^{T} Q. \label{uf_algdefF}
\end{align}

In Schwinger or alpha representation we have 
\begin{equation}
%\boxed
{ G_L[1]  =  
 \frac{(i)^{L-N_{\nu}} (det M)^{-d/2}}{\prod \limits_{i=1}^{N} \Gamma(n_i)}
 \int \prod \limits_{j=1}^N d\alpha_j ~ \alpha_j^{n_j-1}
 e^{- i \frac{F(\alpha)}{U(\alpha)}},
 }
 \label{eq:schwinger}
\end{equation}

The relations for $U$ and $F$ in Eqs.~(\ref{uf_algdefU})~and~(\ref{uf_algdefF}) are crucial for further studies. They are called Symanzik polynomials\footnote{The three independent proofs of the Feynman diagram parametrization due to Symanzik (not given in the original work) were present by Nakanishi, Shimamoto, and Kinoshita. For references, see~\cite{Nakanishi:1971}.}~\cite{Symanzik:1958}. The Schwinger representation has been used  
in the context of the Method of Brackets (see section~\ref{sec:prausa}) and  multi-fold sums~\cite{Panzer:2015ida}.
Here we framed the Feynman parametrization of Eq.~(\ref{FeynSgen}) as we will explore it further in the context of \mb{} integrals.

\begin{tips}{Other Useful \texttt{FI} Representations}

Other parametrizations used in Feynman integral studies in connection with expansions by regions,  unitarity or differential approaches are~\cite{Heinrich:2020ybq}

\begin{enumerate}
    \item 
 Lee-Pomeransky representation~\cite{Lee:2013hzt}:
\begin{equation}
 I(\vec{\nu})  =\frac{(-1)^{N_\nu}\Gamma\left(D/2\right)}{\Gamma\left(\left(L+1\right)D/2-N_\nu\right)\prod_{j}\Gamma\left(\nu_{j}\right)}\int\limits _{0}^{\infty}\left(\prod_{j=1}^Ndz_{j}\,z_{j}^{\nu_{j}-1}\right)({ U}+{ F})^{-D/2}\,.\label{eq:LeePom}
\end{equation}
\item Baikov representation~\cite{Baikov:1996iu}
\begin{align}
G(\nu_1\ldots\nu_N) & =\frac{\pi^{\left(L-N\right)/2}S_{E}^{(E+1-D)/2}}{\left[\Gamma((D-E-L+1)/2)\right]_{L}}\nn\\
 & \times\int\left(\prod_{i=1}^{L}\prod_{j=i}^{L+E}ds_{ij}\right)S^{(D-E-L-1)/2}\prod_{j=1}^{N}D_{j}^{-\nu_{j}},
\end{align}
where $$\left[\Gamma(x)\right]_{L}\equiv \Gamma(x)\,\Gamma(x-1)\ldots \Gamma(x-L).$$ %and  $\nu_{(j>M)}<0$.
The quantities $S$ and $S_{E}$ come from the Jacobian of the variable transformation and have the form 
\begin{align*}
S & =\det\left\{ \left.s_{ij}\right|_{i,j=1\ldots L+E}\right\} ,\quad S_{E}=\det\left\{ \left.s_{ij}\right|_{i,j=L+1\ldots L+E}\right\} \,.
\end{align*}

Here we have loop momenta $k_i$  and external momenta $p_{1},\ldots,p_{E}$ with~\cite{Lee:2010wea}  
\begin{equation}
s_{ij}=s_{ji}=k_{i}\cdot q_{j}\,;\quad i=1,\ldots,L;\quad j=1,\ldots,K,
\end{equation}
where $q_{1},\ldots,q_{L}=k_{1},\ldots,k_{L}$, $q_{L+1},\ldots,q_{L+E}=p_{1},\ldots,p_{E}$, and
$K=L+E$.

The functions $D_{j}$ are linear functions of the variables $s_{ij}$, so that\\ $\prod_{i=1}^{L}\prod_{j=i}^{L+E}ds_{ij}\propto dD_{1}\ldots dD_{N}$.
Thus, we have 
\begin{align*}
G\left(\vec{\nu}\right) & \propto\int\left(\prod_{j=1}^{N}D_{j}^{-\nu_{j}}dD_{j}\right)\,P^{(D-E-L-1)/2},
\end{align*}
where $P\left(D_{1},\ldots D_{N}\right)$ is {\it Baikov polynomial} obtained from $S$ by expressing $s_{ij}$ via $D_{1},\ldots D_{N}$.

\end{enumerate} 
\end{tips}

\section{Topological Structure of Feynman Diagrams, Graph Polynomials}
\label{sec:1MBrepr}
 
 The functions $U$ and $F$ are called graph polynomials. 
They are polynomials in the Feynman parameters and have the following properties:
\begin{itemize}
 \item They are homogeneous in the Feynman parameters, $U$ is of degree $L$, $F$ is of degree $L + 1$.
 \item $U$ is linear in each Feynman parameter. If all internal masses are zero, then also $F$ is linear
 in each Feynman parameter.
 \item In expanded form each monomial of $U$ has a coefficient $+1$.
\end{itemize}
$U$ and $F$ are the first and the second Symanzik polynomials
of the graph, respectively.
These polynomials can be also derived from the topology of the underlying graph.

To explain the graphical method of graph polynomial construction one needs to introduce some basic definitions first:
\begin{itemize}
 \item Spanning tree $T$ of the graph $G$ is a  
  sub-graph with the following properties:
  \begin{itemize}
   \item $T$ contains all the vertices of $G$
   \item the number of loops in $T$ is zero
   \item $T$ is connected 
  \end{itemize}
  $T$ can be obtained from $G$ by deleting $L$ edges ($L$ -- number of loops in $G$).

  \item Spanning $k$--forest ${\cal T}_k$ for the graph $G$ 
   has the same properties as $T$ but it is not required that a spanning forest is connected. Instead we require that it should have exactly $k$ connected components.\\
   $F$ can be obtained from $G$ by deleting $L+k-1$ edges.
\end{itemize}
If ${\cal T}$ is the set of all spanning forests of $G$ and ${\cal T}_k$ is the set of all  spanning $k$-forests of $G$
then
\begin{equation*}
 {\cal T}  =  \bigcup \limits_{k=1}^{r} {\cal T}_k \,\,\,\, (r -{\textrm{number of vertices}}).
\end{equation*}
Each element of ${\cal T}_k$ has $k$ connected components $(T_1, \ldots, T_k)$. 
With $P_{T_i}$ we denote a set of external momenta attached to $T_i$ for a given $k$--forest.
Depending on the ``direction'' of external momenta (whether they are incoming or outgoing) they enter the $P_{T_i}$ with a different relative sign.

The graph polynomials $U$ and $F$ can be obtained from the spanning trees and the spanning $2$--forests of a graph $G$ as follows:
\begin{equation}
 U = \sum\limits_{T\in {\mathcal T}_1} \prod\limits_{e_i\notin T} x_i,
\label{eq:spanningtree}
\end{equation}
\begin{eqnarray}
 F &=& - \sum\limits_{(T_1,T_2)\in {\mathcal T}_2}
     \left( \prod\limits_{e_i\notin (T_1,T_2)} x_i \right) 
     \left( \sum\limits_{p_i\in P_{T_1}} p_i \right) \left( \sum\limits_{p_j\in P_{T_2}} p_j \right)
 + U \sum\limits_{i=1}^n x_i m_i^2 \\
 & \equiv & F_0 + U \sum\limits_{i=1}^n x_i m_i^2. \label{eq:spanning2forest}
\end{eqnarray}
A simple one-loop example how to find Symanzik polynomials is given in Fig.~\ref{fig:FUbox}.

\begin{figure} 
\hspace*{-.5cm}

\begin{tikzpicture}[scale=0.8]
%\begin{feynman}
%\vertex at (0,0) (i1);
%\vertex at (2,0) (a);
%\vertex at (4,1) (b);
%\vertex at (4,-1) (c); 

\draw [thick] (3,-4.5) -- (6.5,-4.5);
\draw [thick] (3,-6) -- (6.5,-6);  
\draw [thick] (4,-4.5) -- (4,-6); 
\draw [thick] (5.5,-4.5) -- (5.5,-6); 
\node at (3,-4.8) {{\small{$p_1$}}};
\node at (3,-5.8) {{\small{$p_2$}}};
\node at (6.5,-4.8) {{\small{$p_3$}}};
\node at (6.5,-5.8) {{\small{$p_4$}}};
\node at (0.5,-5.2) {{\small{Basic 1-loop box diagram}}};

\end{tikzpicture}

\vspace*{.5cm}

\begin{tikzpicture}
 
\draw [thick] (0,0) -- (0,1);
\draw [thick] (0,0) -- (1,0);  
\draw [thick] (1,0) -- (1,1);  

\node at (0.5,1.2) {\scriptsize{$x_1$}};
\node at (-0.3,0.5) {\scriptsize{$x_2$}};
\node at (1.2,0.5) {\scriptsize{$x_4$}};
\node at (0.5,-0.2) {\scriptsize{$x_3$}};
\node at (4.5,-1) {{\small{Trees contributing to the $U$ polynomial for the 1-loop box diagram are drawn above.}}};

\draw [thick] (2,1) -- (3,1);
\draw [thick] (3,1) -- (3,0);  
\draw [thick] (3,0) -- (2,0); 

\draw [thick] (4,1) -- (5,1);
\draw [thick] (5,1) -- (5,0);  
\draw [thick] (4,0) -- (4,1); 

\draw [thick] (6,1) -- (7,1);
\draw [thick] (6,1) -- (6,0);  
\draw [thick] (7,0) -- (7,1); 

\node at (9,0.5) {{\bf \small{$U=x_1+x_2+x_3+x_4$}}};

\draw [thick] (0,-3) -- (0,-2);
\draw [thick] (1,-3) -- (1,-2);  

\draw [thick] (2,-2) -- (3,-2);
\draw [thick] (2,-3) -- (3,-3);  

\node at (6,-2.5) {{\bf \small{$F=t \cdot x_1 x_3+s \cdot x_2 x_4$}}};

\node at (4.5,-4) {{\small{2 - trees contributing to the $F$ polynomial for the 1-loop box diagram are drawn above.}}};

\end{tikzpicture}
 
 \caption{Graphical construction of $F$ and $U$ Symanzik polynomials. Kinematic variables $t$ and $s$ are defined as $t=(p_1-p_3)^2$ and $s=(p_1+p_2)^2$. External particles are considered massless. \label{fig:FUbox}}
% \end{center}
 \end{figure}

Cuts of internal lines (lines removed in Fig.~\ref{fig:FUbox}) are made according to Eqs.~(\ref{eq:spanningtree})~and~(\ref{eq:spanning2forest}) such that:
% \vspace*{-.45cm} 
 \begin{itemize}
 \item $U$: (i) every vertex is still connected to every other vertex
 by a sequence of uncut lines; (ii) no further cuts are made without violating (i).
 \item $F$: (iii) 
 the cuts divide the graph into two disjointed parts such that within each part
 (i) and (ii) are valid and at least one external momentum line
 is connected to each part.
 \end{itemize}

\begin{svgraybox}
Regarding Fig.~\ref{fig:FUbox}, let us note that the delta function  $ \delta(1-\sum \limits_{i=1}^N x_i)$ in Eq.~(\ref{FeynSgen}) in any 1-loop diagram goes over all variables $x_i$, so $U=1$. This feature will be used in section~\ref{sec:LAapproach}.
\end{svgraybox}

As a next example, for a general vertex in Fig.~\ref{fig:1lvert} which can also be a part of a multiloop diagram, we have

\begin{eqnarray}
U &=& x_1 + x_2 + x_3 \equiv 1,  \\
F_0 &=& -(q_2 + q_3)^2 x_1 x_2 - q_2^2 x_1 x_3 - q_3^2 x_2 x_3, \\
F &=& F_0 + U (x_1 m_1^2 + x_2 m_2^2 + x_3 m_3^2). \label{eq:F1lvert}
\end{eqnarray}

\begin{figure}
 \centering
\begin{tikzpicture}
%\begin{feynman}
%\vertex at (0,0) (i1);
 
\draw [thick] (0,0) -- (1,0);
\draw  (1,0) -- (3,1);  
\draw  (1,0) -- (3,-1); 
\draw  (3,1) -- (3,-1);
\draw [dashed] (3,1) -- (4,1.5);
\draw [dashed] (3,-1) -- (4,-1.5);

\node at (0.2,.2) {\scriptsize{$q_1=q_2+q_3$}};
\node at (1.6,0.7) {\scriptsize{$x_1,m_1$}};
\node at (1.6,-0.8) {\scriptsize{$x_2,m_2$}};
\node at (3.6,-1) {\scriptsize{$q_3$}};
\node at (3.6,1.) {\scriptsize{$q_2$}};
\node at (3.4,0.) {\scriptsize{$x_3,m_3$}}; 
\end{tikzpicture}
 \caption{A general one-loop vertex, which can be a part of a multiloop diagram. \label{fig:1lvert}}
% \end{center}
 \end{figure}
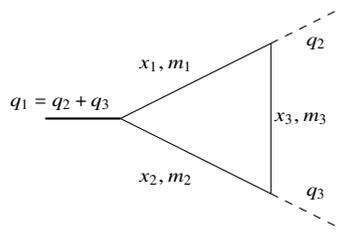

A more complicated example for a non-planar two-loop vertex is shown in Fig.~(\ref{fig:2looptrees}) where single terms for trees and forests are shown, 
see Problem~\ref{problemFU} and further examples are used in next sections.

\begin{figure}
 \centering
\begin{tikzpicture}[scale=1.1]
 
\draw [thick] (0,0) -- (1,0);
\draw  [thick] (1,0) -- (4,1);  
\draw  [thick] (1,0) -- (4,-1); 
\draw  [thick] (2,.31) -- (2.35,0.05);
\draw  [thick] (2.55,-0.08) -- (3.5,-.85);

\draw [thick] (2,-.36) -- (3.5,0.85);

\draw [dotted,red,very thick] (1.5,.4) -- (1.5,0);
\draw [dotted,red,very thick] (2.5,.7) -- (2.5,0.2);

\node at (1.2,0.4) {\scriptsize{$1$}}; 
\node at (1.2,-0.3) {\scriptsize{$6$}}; 

\node at (2.7,0.8) {\scriptsize{$2$}}; 
\node at (2.7,-0.8) {\scriptsize{$5$}}; 

\node at (3.,0.3) {\scriptsize{$3$}}; 
\node at (3.,-0.3) {\scriptsize{$4$}}; 

\node at (4.,0) {\scriptsize{{\bf{ $U \sim x_1 x_2$}}}}; 

%%%%F

\draw [thick] (5,0) -- (6,0);
\draw  [thick] (6,0) -- (9,1);  
\draw  [thick] (6,0) -- (9,-1); 
\draw  [thick] (7,.31) -- (7.35,0.05);
\draw  [thick] (7.55,-0.08) -- (8.5,-.85);

\draw [thick] (7,-.36) -- (8.5,0.85);

\draw [dotted,red,very thick] (6.5,.4) -- (6.5,-.4);
\draw [dotted,red,very thick] (7.5,.7) -- (7.5,0.2);

\node at (6.2,0.4) {\scriptsize{$1$}}; 
\node at (6.2,-0.3) {\scriptsize{$6$}}; 

\node at (7.7,0.8) {\scriptsize{$2$}}; 
\node at (7.7,-0.8) {\scriptsize{$5$}}; 

\node at (8.,0.3) {\scriptsize{$3$}}; 
\node at (8.,-0.3) {\scriptsize{$4$}}; 

\node at (9.,0) {\scriptsize{{\bf{ $F \sim -s x_1 x_2 x_6$}}}}; 

\end{tikzpicture}
  \caption{ \label{fig:2looptrees}
        Construction of the $U$ and $F$ polynomials for the non-planar massless vertex {\rm V6l0m}, single terms are shown.
See Problem~\ref{problemFU} for a complete solution.}
% \end{center}
 \end{figure}
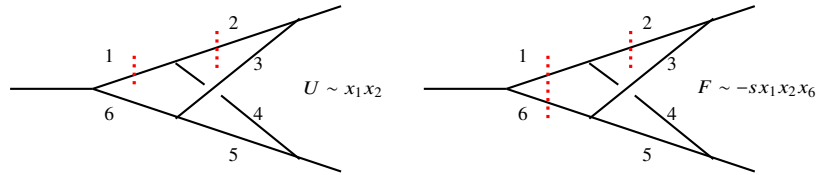
For practical purposes, $F$ and $U$ polynomials can be determined algebraically using the definitions of Eqs.~(\ref{uf_algdefU})~and~(\ref{uf_algdefF}). This is encoded for instance in the package \texttt{MB.m}~\cite{Czakon:2005rk}, see Problem~\ref{prob:FUalgebraic}.  
  
\section{ Master Mellin-Barnes Formula: Prescription for the Contour, Proof  \label{sec:2MBrepr}}
 
The backbone of the procedure to build up Mellin-Barnes representations is the following relation:
\begin{equation}
        \frac{1}{(A + B)^\lambda} =
        \frac{1}{\Gamma(\lambda)} \frac{1}{2 \pi i}
        \int_{c-i \infty}^{c+i \infty}
        dz \Gamma(\lambda + z) \Gamma(-z)
        A^z B^{-\lambda - z},
 \label{Mellin-Barnes}  
\end{equation}
where 
\begin{itemize}
        \item the integration contour separates the poles of $\Gamma(-z)$ from those of $\Gamma(\lambda + z)$,
        \item $A$ and $B$ are complex numbers such that $| arg(A) - arg(B) | < \pi$.
\end{itemize}   
\begin{tips}{Proof of the Generic \mb{} Formula}
The proof based on the series and summing residues can be found in~\cite{Whittaker:1927}. Here we will briefly outline it. We start from a relation
\begin{equation} \frac{1}{(A+B)^\lambda}=
  \frac{1}{A^{\lambda}}\frac{1}{(1+B/A)^{\lambda}}
  \equiv
  \frac{1}{A^{\lambda}}\frac{1}{(1+\widetilde{B})^{\lambda}}
\end{equation}
which we expand as Taylor series
\begin{equation}
    \texttt{LHS}=
  \frac{1}{(1+\widetilde{B})^{\lambda}}=
  \sum_{n=0}^{\infty} (-1)^{n}\frac{\lambda \dots(\lambda+n-1)}{n!}\widetilde{B}^{n}.
\end{equation}
On the other hand, the right hand side of Eq.~(\ref{Mellin-Barnes}) is:
\begin{equation}
    \texttt{RHS} =
  \frac{1}{(1+\widetilde{B})^{\lambda}}=
  \frac{1}{\Gamma (\nu)}\frac{1}{2 \pi i}
  \int_{-i \infty}^{+i \infty} dz \; \widetilde{B}^{z} \Gamma (\lambda+z)\Gamma (-z).\label{eq:mbRHS1}
\end{equation}
According to the Cauchy's residue theorem, Eq.~(\ref{eq:cauchy}),  $\int_{C} f(z)dz=2 \pi i \sum_{i} Res_{z_{i}}f$, by closing the integration contour to the right, see Fig.~\ref{fig:MBrhs}, and taking 
a series of residues (with a minus sign) at points $z=0,1,2, \dots$ (Problem~\ref{prob:mbproof}) we obtain:
\begin{equation}
  \texttt{RHS}=\frac{1}{\Gamma (\lambda)}\frac{1}{2 \pi i}
  \int_{-i \infty}^{+i \infty}dz \; 
  2 \pi i \sum_{n=0}^{\infty}
  \frac{(-1)^{n}}{ n!}\Gamma (\lambda +n)\widetilde{B}^{n}
\label{eq:mbRHS2} \end{equation}
By putting 
$\Gamma (\lambda +n)=\lambda \dots (\lambda +n-1)\Gamma (\lambda)$,
 we can see that $\texttt{LHS}=\texttt{RHS}$.

\begin{figure}[h!]
\centering
\begin{tikzpicture}[scale=0.5]
\begin{feynman}
  
\vertex at (6,4.5) (f1);
\vertex at (6,-1.5) (f2);

\node[right] at (f1) {{$\Im(z)$}};
\node[right] at (10.5,0.65) {{$\Re(z)$}};

\draw [thick]  (2,1) -- (11,1) ;
\draw [thick]  (6,-1) -- (6,5) ;
 
\draw [->,dashed]  (5.75,-.5) -- (5.75,1.5) ;
\draw [dashed]  (5.75,1.5) -- (5.75,3.5) ;
\draw [->,dashed]  (5.75,3.5) -- (7.75,3.5) ;
\draw [dashed]  (7.75,3.5) -- (10,3.5) ;
\draw [->,dashed]  (10,3.5) -- (10,1.5) ;
\draw [dashed]  (10,1.5) -- (10,-.5) ;
\draw [->,dashed]  (10,-.5) -- (7.75,-.5) ;
\draw [dashed]  (7.75,-.5) -- (5.75,-.5) ;

\filldraw [black] (6,2) circle (2pt);
\filldraw [black] (7,2) circle (2pt);
\filldraw [black] (8,2) circle (2pt);
\filldraw [black] (9,2) circle (2pt); 
\filldraw [black] (10,2) circle (2pt);
\node [thick,black] at (8,1.5) {\bf{{$\Gamma[-z]$}}};
 
\filldraw [black] (1.5,3) circle (2pt);
\filldraw [black] (2.5,3) circle (2pt);
\filldraw [black] (3.5,3) circle (2pt); 
\filldraw [black] (4.5,3) circle (2pt);
\filldraw [black] (5.5,3) circle (2pt);
\node [thick,black] at (3,3.5) {\bf{{$\Gamma[\lambda+z]$}}};

%\node at (5.55,.7) {{$-\frac 12$}};
\node at (2.15,.7) {{$-4$}};
\node at (3.,.7) {{$-3$}};
\node at (4.,.7) {{$-2$}};
\node at (5.,.7) {{$-1$}};

\node at (6.15,.7) {{$0$}};
\node at (7.,.7) {{$1$}};
\node at (8.,.7) {{$2$}};
\node at (9.,.7) {{$3$}};
\node at (10.,.7) {{$3$}};  
 
\end{feynman}
\end{tikzpicture}
\caption{Integration contour for Eq.~(\ref{eq:mbRHS1}).}
\label{fig:MBrhs}
\end{figure}
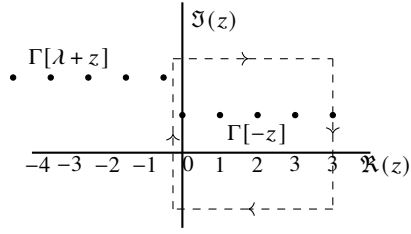

\end{tips}

The Mellin-Barnes relation in Eq.~(\ref{Mellin-Barnes}) can be iterated and easily extended as a sum of several terms:
\begin{eqnarray}
  \frac{1}{(A_{1}+ \ldots +A_{n})^{\lambda}}&=&
  \frac{1}{\Gamma (\lambda)}\frac{1}{(2 \pi i)^{n-1}}
  \int_{c-i \infty}^{c+i \infty} dz_{2} \dots 
  \int_{c-i \infty}^{c+i \infty} dz_{n}
  \prod_{i=2}^{n} A_{i}^{z_{i}}
%===================
  \nonumber\\
  &\times & A_{1}^{-\lambda -z_{2}- \ldots -z_{n}}
  \Gamma (\lambda +z_{2}+ \ldots +z_{n})
  \prod_{i=2}^{n} \Gamma (-z_{i}).
        \label{MBformula}
\end{eqnarray}  
 
This formula can be applied to the Feynman parameter representation in Eq.~(\ref{FeynSgen}).
Different strategies for this procedure will be discussed in the next sections, but in the end the graph polynomials
are split into pieces and the integration over Feynman parameters can be done using the relation

\begin{equation}
 \int_0^1 \prod_{i=1}^N d x_i ~ x_i^{n_i-1}
 ~ \delta(1 - x_1 - \ldots - x_N)
 =
 \frac{\Gamma(n_1) \ldots \Gamma(n_N)}
 {\Gamma\left(n_1 + \ldots + n_N \right)},
\label{XintGen} 
\end{equation}
which is a generalization of the Euler formula
\begin{equation}
 \int_0^1 d x \, x^{\alpha_1 - 1} (1 - x)^{\alpha_2 - 1} = \frac{\Gamma(\alpha_1)\Gamma(\alpha_2)}{\Gamma(\alpha_1 + \alpha_2)}.
\end{equation}

In the most general form an \mb{} representation can be written as the $I_{\mb{}}$ integral of the form

\begin{equation}
\boxed{I_{\mb{}} = \frac{1}{(2\pi i)^r} \int\limits_{-i \infty}^{+i \infty} \dots \int\limits_{-i \infty}^{+i \infty} 
 \underset{i}{\overset{r}{\Pi}} dz_i \;
 {\bf{F}}(Z,S,\epsilon)  
 \frac{\prod \limits_{j=1}^{N_n} \; \Gamma(\varLambda_j)}{\prod \limits_{k=1}^{N_d} \; \Gamma(\varLambda_k)}.
} \label{MBgenForm}
\end{equation}

\begin{equation}
\begin{array}{rl}
&\\
{\bf{F}} \;{\mbox{\rm depends on:}} 
& Z - {\mbox{\rm  set of integration variables whose length is usually smaller than $r$,}} \\
& S - {\mbox{\rm set of kinematic parameters and masses;}} \\
&\\
\varLambda_i: &  {\mbox{\rm  is a  linear combination of $z_i$ and $\epsilon$, e.g.  }} \varLambda_i = \sum \limits_l \alpha_{i l} z_l + \gamma_i + \delta_i \epsilon . \\
\end{array} \label{alphas}
\end{equation}

In practice $\bf{F}$ is a product of  $S$ elements raised to some power as shown below,
\begin{equation}
\mathbf{F} \sim \underset{k}{\Pi} \; X_k^{\sum \limits_i ( \alpha_{k i} z_i + \gamma_k + \delta_k \epsilon) },
\label{NonFreeZ}
\end{equation}
where $ \alpha_{i j},\gamma_i, \delta_i \in \mathbb{Z}$ and $X_k$ are the ratios of kinematical invariants and masses, e.g., 
\begin{equation}
    X=\left\{ - \dfrac{s}{m_1^2}, \dfrac{m_1^2}{m_2^2}, \dfrac{s}{t}, \dots \right\}
    \label{eq:X}
\end{equation}.

 The procedure where the sum of terms in the Symanzik polynomials of Eqs.~(\ref{eq:spanningtree})~and~(\ref{eq:spanning2forest}) is transformed into the \mb{} integrals is performed automatically in the \ar{} project~\cite{Gluza:2007rt,Gluza:2010rn,Dubovyk:2016ocz,ambrewww}, see also the Appendix and the webpage~\cite{mbtools} where  auxiliary packages with examples related to \mb{} calculations can be found. In the next section we show the first simple example of the construction of the \mb{} representation.

\section{Construction of \mb{} Representations for Feynman Integrals: An Example with Basic Steps} 
\label{sec:3MBrepr}  
   
Let us start from the 1-loop virtual self-energy case, Fig.~\ref{self-energy}.

\begin{figure}
\begin{center}
\begin{tikzpicture}[scale=0.6]
\begin{feynman}
 
%tadpole 

\vertex at (0.5,1) (i1);
\vertex at (2,1) (a);
\vertex at (4,1) (b);
\vertex at (5.5,1) (f1);

\diagram*{
(i1) -- [photon,thick,momentum=\(p\),/tikzfeynman/momentum/arrow distance=2mm] (a) -- [half left, momentum=\(k+p\),/tikzfeynman/momentum/arrow shorten=0.25,/tikzfeynman/momentum/arrow distance=2mm] (b) -- [thick,half left, momentum=\(k\),/tikzfeynman/momentum/arrow shorten=0.25,/tikzfeynman/momentum/arrow distance=2mm] (a),
(b) -- [photon,thick] (f1),
};
 
\end{feynman}
\end{tikzpicture}
   \caption{Two-point (self-energy) diagram. }
  \label{self-energy}
\end{center}
\end{figure}
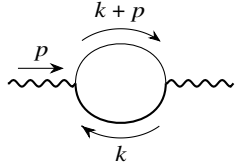
 
The scalar part of the diagram is
\begin{equation}
        G(1)_{\text{SE2l2m}} = \int \frac{d^d k}{(k^2 - m^2 + i \delta)^{\nu_1}((k+p)^2-m^2 + i \delta)^{\nu2}},
 \label{integralSE1l2m} 
\end{equation}
where all the typical constant factors were omitted. Index SE2l2m stands for self-energy (SE) one-loop with two massive (2l2m) internal lines. We will use analogous nomenclature later on. After calculating $U$ and $F$ polynomials (see the previous section~\ref{sec:1MBrepr}) we have:
\begin{equation}
        U = x_1 + x_2, \qquad F = m^2 (x_1 + x_2)^2-s x_1 x_2-i \delta. \label{eq:FUse1l2m}
\end{equation}
Feynman parametrisation for this diagram reads:
\begin{eqnarray}
        G(1)_{\text{SE2l2m}} 
        &=&      
        \frac{\Gamma(\nu_1 + \nu_2 - \frac{d}{2})}{\Gamma(\nu_1) \Gamma(\nu_2)}
        \int_{0}^{1}
    \prod_{j=1}^{2}dx_j x_{j}^{\nu_{j}-1}
    \delta\left(1-\sum_{i=1}^2 x_i\right)
\nonumber\\ 
        &\times&   
        \frac{(x_1 + x_2)^{\nu_1 + \nu_2 - d}}
                 {(m^2 (x_1 + x_2)^2-s x_1 x_2-i \delta)^{\nu_1 + \nu_2-d/2}}. 
  \label{feynman-SE}             
\end{eqnarray}

We see that if we apply the Dirac $\delta$ function to the $U$ polynomial, which is simply a sum of Feynman parameters, we get 1. In general, every one-loop $n$-point diagram has the $U$ polynomial of the form $x_1+ \ldots +x_n$, so $U=1$ for all one-loop cases.

At this point Eq.~(\ref{Mellin-Barnes}) can be used to start constructing a Mellin-Barnes representation. We use it to replace a sum of Feynman parameters in the $F$ polynomial into its product with additional integration over the complex space: 
\begin{eqnarray}
        \frac{1}{F^\lambda}
        &=&
        \frac{1}
                 {(m^2 (x_1 + x_2)^2-s x_1 x_2-i \delta)^{\lambda}}
\nonumber\\
    &=&                   
        \frac{1}{\Gamma(\lambda)}
        \frac{1}{2 \pi i}
        \int_{c-i \infty}^{c+i \infty} d z_1
        \Gamma(\lambda + z_1) \Gamma(-z_1)      
        (m^2-i \delta)^{z_1} (-s-i \delta)^{-\lambda - z_1}
\nonumber\\
    &\times&  
        (x_1 x_2)^{-\lambda - z_1}
        [x_1 + x_2]^{2 z_1},    
\end{eqnarray} 

where $\lambda = \nu_1 + \nu_2-d/2$. The term $[x_1 + x_2]^{2 z_1}$ in this case can be dropped since we already have the $\delta$ function. But here we show this term explicitly for generality. The term again can be changed according to Eq.~(\ref{Mellin-Barnes}) resulting in:
\begin{eqnarray}
        \frac{1}{F^\lambda}
    &=&                   
        \frac{1}{\Gamma(\lambda)}
        \frac{1}{(2 \pi i)^2}
        \int_{c-i \infty}^{c+i \infty} d z_1
        \frac{1}{\Gamma(-2 z_1)}
        \int_{c-i \infty}^{c+i \infty} d z_2
        \Gamma(\lambda + z_1) \Gamma(-z_1) \Gamma(-2 z_1 + z_2) 
\nonumber\\     
        &\times&        
        \Gamma(-z_2)
        (m^2-i \delta)^{z_1} (-s-i \delta)^{-\lambda - z_1}
        x_1^{-\lambda - z_1 + z_2} x_2^{-\lambda + z_1 - z_2}.  
  \label{MBonly_SE}     
\end{eqnarray} 
Next step is to insert Eq.~(\ref{MBonly_SE}) back into Eq.~(\ref{feynman-SE}) and collect powers of Feynman parameters, which in our case are:
\begin{eqnarray}
        x_1^{a_1-1}&=&x_1^{(-\lambda - z_1 + z_2+\nu_1)-1},
\nonumber\\     
        x_2^{a_2-1}&=&x_2^{(-\lambda + z_1 - z_2+\nu_2)-1}.
\end{eqnarray}
The integration over Feynman parameters is performed using the following formula:
\begin{equation}
  \int_{0}^{1} \prod_{i=1}^{n}
  dx_{j}x_{j}^{a_{j}-1}
  \delta 
  \Big(
  1-\sum_{i=1}^{n}x_{i}
  \Big)=
  \frac{
    \Gamma (a_{1}) \dots \Gamma (a_{n})
    }
  {
    \Gamma (a_{1}+\ldots+a_{n})
    },
    \label{xintegration}
\end{equation}
which in practice is restricted to collecting relevant powers of Feynman parameters. The final Mellin-Barnes representation for the self-energy diagram of Eq.~(\ref{integralSE1l2m}) is:
\begin{eqnarray}
        G(1)_{\text{SE2l2m}} 
        &=&      
        \frac{(-1)^{\nu_1 + \nu_2}}{\Gamma(\nu_1) \Gamma(\nu_2)}
        \int_{c-i \infty}^{c+i \infty} \frac{d z_1}{2 \pi i}  
        \int_{c-i \infty}^{c+i \infty} \frac{d z_2}{2 \pi i}
        (m^2 - i \delta)^{z_1} (-s - i \delta)^{d/2 - \nu_1 - \nu_2 -z_1}
\nonumber\\
        &\times&        
        \Gamma(-d/2 + \nu_1 + \nu_2 + z_1) \Gamma(-z_1)
        \Gamma(-2 z_1 + z_2) \Gamma(-z_2)       
\nonumber\\
        &\times&        
        \frac{\Gamma(d/2 - \nu_1 + z_1 - z_2) \Gamma(d/2 - \nu_2 - z_1 + z_2)}
        {\Gamma(-2 z_1) \Gamma(d -\nu_1 - \nu_2)}.
  \label{eq:MB-SE1l2m2dim}     
\end{eqnarray}
For more complicated cases, the $F$ polynomial often contains more than two terms. In such cases it is straightforward to use the general formula in Eq.~(\ref{MBformula}). 

\begin{svgraybox}Usually in Mellin-Barnes representations infinitesimal complex part $i \delta$ is omitted. This doesn't mean that $i \delta$ is irrelevant. Even if $i \delta$ is omitted at the beginning, it must be restored later to make the analytic continuation to the physical domain possible. 
\end{svgraybox} 
{\it{In general, the description of how to restore $i0$ from the propagator structure in Eq.~(\ref{integralSE1l2m}) is: 
one needs to add $-i \delta$ to each negative invariant, e.g. $(-s)^{z_1}$ should be replaced by $(-s-i \delta)^{z_1}$, if $s>0$. 
This will choose the correct branch of the complex functions.}} 

The small imaginary part is not needed for positive invariants since the rising of positive numbers to a complex power doesn't lead to any ambiguities. This recipe  clearly follows from Feynman parameterization steps shown in section~\ref{sec:FIrepr}.  

Analytic continuation can be summarized nicely using the notion of $F$ and $U$ polynomials, as discussed in~\cite{Hidding:2020ytt}.

The structure of the $F$ polynomial affects the final \MB{} representation form. Obviously, this is due to the fact that Eq.~(\ref{MBformula}) changes the sum of $n$ terms raised to a certain power to a $(n-1)$-dimensional integral over the complex space. This observation is helpful when one wants to estimate the dimensionality of the final \MB{} representation only by looking at the $F$ polynomial.

At this stage, we can further simplify the 2-dimensional MB integral in Eq.~(\ref{eq:MB-SE1l2m2dim}) by applying the following Barnes lemmas  

\begin{itemize}
        \item First Barnes lemma (\texttt{1BL}) 
      \begin{equation}
      \boxed{
      \begin{aligned}
     &
            \int_{z_0-i \infty}^{z_0+i \infty} dz\,  \Gamma(a+z) \Gamma(b+z)
            \Gamma(c-z) \Gamma(d-z) 
    \\[0.5em] &=
            \frac{\Gamma(a+c)\Gamma(a+d)\Gamma(b+c)\Gamma(b+d)}{
            \Gamma(a+b+c+d)},
     \label{barnes_lemma_1}
    \end{aligned}
    }
    \end{equation}
     where 
     \begin{equation}
   a+b+c+d <1, \quad a,b,c,d \in \mathbb{R}.\label{cond1stBL}
\end{equation}
%%%%%%%%%%%%%%%%%%%%%%%%%%%%%%%
        \item Second Barnes lemma (\texttt{2BL}) 
        \begin{equation}
      \boxed{
  \begin{aligned}
    &
  \int_{z_0-i \infty}^{z_0+i \infty} dz\, \frac{\Gamma(a+z) \Gamma(b+z)
        \Gamma(c+z) \Gamma(d-z) \Gamma(e-z)}{\Gamma(f+z)}
    \\[0.5em] &=
        \frac{\Gamma(a+d)\Gamma(a+e)\Gamma(b+d)\Gamma(b+e)\Gamma(c+d)\Gamma(c+e)}{
        \Gamma(a+b+d+e)\Gamma(a+c+d+e)\Gamma(b+c+d+e)},
     \label{barnes_lemma_2}
        \end{aligned}
    }
        \end{equation}
        where 
\begin{equation}
   a+b+c+d+e = f. \label{cond2ndBL}
\end{equation}
\end{itemize}
The condition in Eq.~(\ref{cond1stBL}) is fulfilled automatically as long as $\epsilon$ is not fixed.
The condition in Eq.~(\ref{cond1stBL}) is fulfilled automatically as arguments of gamma functions in \mb{} representations depend on $\epsilon$, which is not fixed.

These lemmas can be proved using a notion of hypergeometric functions as it was made for a case of \texttt{1BL} in the previous chapter, section~\ref{sec:gamma_hyperg}, for an alternative proof, see Problem~\ref{prob:1BLalternat}.

We can now apply \texttt{1BL} to Eq.~(\ref{eq:MB-SE1l2m2dim}) and, simultaneously substituting powers of propagators equal one $\nu_1=\nu_2=1$ and $d=4-2 \epsilon$, we get the final, compact result:
\begin{equation}
        G_1(1)_{\text{SE2l2m}} 
        = 
        \int_{c-i \infty}^{c+i \infty} \frac{d z_1}{2 \pi i}
        (m^2)^{z_1} (-s)^{-\epsilon - z_1}
        \frac{\Gamma (1 - \epsilon - z_1)^2 \Gamma (-z_1) \Gamma (\epsilon + z_1)}
         {\Gamma (2 - 2 \epsilon - 2 z_1)}.
   \label{MB-SE1l2m-lemmas}      
\end{equation}

The above steps leading to Eq.~(\ref{MB-SE1l2m-lemmas}) can be found in \wwwaux{SE2l2m}.
We will discuss this integral, solving it analytically and numerically in the following sections.
 
\section{Simplifying MB Representations}
\label{sec:5MBrepr}

In the previous section we have shown how to get the \mb{} representation for the 1-loop self-energy integral (with the same mass for the two  propagators).
If we go beyond the one loop level, the situation is more involved, especially when we consider multi-leg multi-loop scalar and tensor virtual integrals. 
Beyond one-loop:
\begin{itemize}
    \item $U(\vec x) \neq 1$,
    \item dimensionality of the \mb{} representations starts to depend on the $U(\vec x)$ structure,
    \item nontrivial simplifications of the graph polynomials are necessary.
\end{itemize}
It appears that two different strategies can be developed, which depend on the planarity of Feynman diagrams. Usually for planar diagrams a so called loop-by-loop approach (\la) is the most efficient. In this approach
the basic Mellin-Barnes relation  in Eq.~(\ref{MBformula}) and Feynman parameter integrations change sums of terms in the Symanzik polynomials into integrals over complex space. These transformations are successively made for each internal momenta of the Feynman integrals.
In a planar case, there is no mismatch, and after each step an effective (planar) diagram appears. 
In addition, the \la{} has an advantage because at each step we consider one-loop subloops, so $U=1$ follows
automatically, and we have to care only about $F$ polynomial, which should be factorized using the most efficient method in order to achieve the smallest dimension of the constructed \mb{} representation.

In non-planar cases a well-defined connected diagram cannot always be drawn after each step. As a consequence, not all vertices conserve momenta~\cite{Bielas:2013rja}, see Fig.~\ref{fig:planarNP} and Problem~\ref{problem:PLNPmomenta}.
 
\begin{figure}
\begin{center}
\includegraphics[scale=0.6]{FIGS/LA}
\includegraphics[scale=0.4]{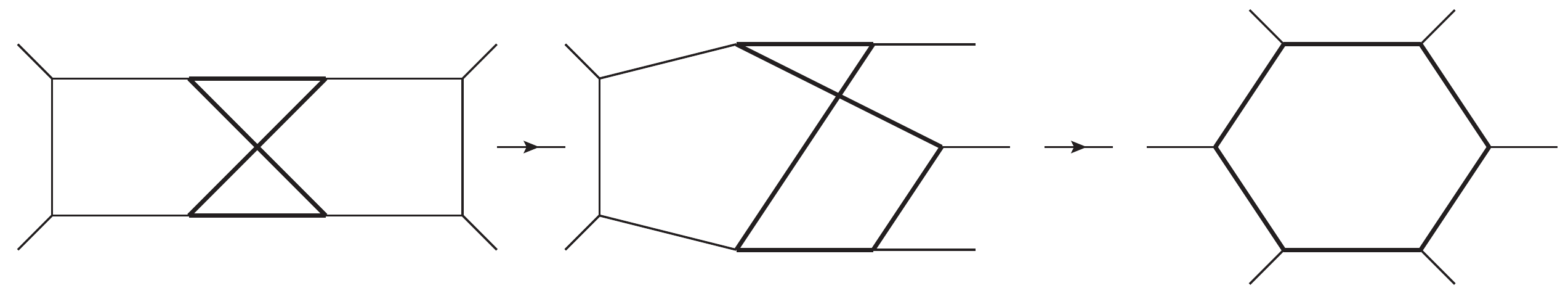}
\end{center}
\caption{Limitations of the \la{}  approach. In the planar case (the first row) momentum flow is conserved after each step in all vertices, which is not true in the non-planar case (the second row).  \label{fig:planarNP}}
\end{figure}  

That is why in this case we undertake a more natural approach where $F$ and $U$ polynomials are transformed to the \mb{} representation using Eq.~(\ref{MBformula}) in one step. However, this global approach (\ga) has the price that $F$ and $U$ polynomials may result in more complicated \mb{} representations. 
Still, the 
\ga{} approach to $F$ and $U$ polynomials has an important property: {\it The polynomials are homogeneous functions of the Feynman parameters of
some given degree, depending on the underlying topology.} So, we can systematically rescale Feynman variables and in turn,
change Dirac's delta function under the integral, which defines the region of integration. In this case, the Dirac delta function can
be used on a restricted subset of Feynman parameters~\cite{Blumlein:2014maa}. This trick is in fact the Cheng--Wu theorem~\cite{Cheng:1987a}. We will discuss it in Sect.~\ref{sec:CWtheorem}.

 \begin{svgraybox}
The general problem is: For which choice of Feynman variables (e.g., rescaling) the factorization of the polynomials will be the most efficient, leading to the smallest number of terms and, accordingly, to the lowest dimensionality of the \mb{} representations.
 \end{svgraybox}
%%%%%%%%%%%%%%%%%

In practice,  a choice between the \la{} and \ga{} strategies for a given integral aiming at the lowest \mb{} dimensionality of integrals seems not to be unique.
For instance, in a massless case of the non-planar two-loop diagram, a minimal four-fold \mb{} representation was derived starting from the
global Feynman parameter representation~\cite{Tausk:1999vh}. 
On the other hand, in the massive case, it appears that the \la{} is more efficient, an eightfold \mb{} representation can be obtained~\cite{Heinrich:2004iq}. 
%In this case none less than a tenfold \mb{} representation is known for \ga{}~\cite{Heinrich:2004iq}.
There is no known  \mb{} representation smaller than tenfold for \ga{}~\cite{Heinrich:2004iq}.
Moreover,  a structure of \mb{} integrals may depend also on the kinematic point and mass thresholds, which will be discussed in section~\ref{sec:2num}. Thus, it is not easy to get a general and efficient program for the construction of a broad class of \mb{} representations. 

Fig.~\ref{scheme1} shows the \mb-suite  which comprises of several tools. See  \cite{Bielas:2013v12,Gluza:2010v22,Dubovyk:201509v30x,Czakon:2005rk,mbtools-smirnov,mbtools-kosower,Usovitsch:201606}  for details on the dedicated packages.

\begin{figure}[!h]

\centering

\begin{tikzpicture}
 
\node[rectangle, inner sep=10pt, draw=black!50, fill=black!10] (A)
{
Input integral
};

\node[rectangle, inner sep=5pt, align=left, draw=black!50, fill=black!10] (B) [right=of A]
{
Diagram's analysis:\\ 
PlanarityTestv1.3.4 \cite{Bielas:2013v12}
};

\node[rectangle, inner sep=5pt, align=left, draw=black!50, fill=black!10] (C) [right=of B]
{
\MB{} construction: \\
AMBREv1.3.1 \\ 
AMBREv2.1.1 \cite{Gluza:2010v22} \\ 
AMBREv3.1.1 \cite{Dubovyk:201509v30x}
};

\node[rectangle, inner sep=5pt, align=left, draw=black!50, fill=black!10] (D) [below=of C]
{
$\epsilon$ continuation: \\
MB.m \cite{Czakon:2005rk} \\
MBresolve.m \cite{mbtools-smirnov}
};

\node[rectangle, inner sep=5pt, align=left, draw=black!50, fill=black!10] (E) [left=of D]
{
Optimization of output: \\
barnesroutines.m \cite{mbtools-kosower}
};

\node[rectangle, inner sep=5pt, align=left, draw=black!50, fill=black!10] (F) [left=of E]
{
Numerical integration: \\
MB.m \\ 
MBnumerics.m \cite{Usovitsch:201606} \\
or analitycal approaches
%summation: \\
%MBsums.m \cite{Ochman:2015fho}
};

\draw [->] (A) to (B);
\draw [->] (B) to (C);
\draw [->] (C) to (D);
\draw [->] (D) to (E);
\draw [->] (E) to (F);

\end{tikzpicture}

\caption{ \label{scheme1}
        The operational sequence of the \MB{}-suite. This flowchart shows the main steps and the correponding software to produce a Mellin-Barnes representation of a Feynman integral and perform its numerical or analytical solution.}
\end{figure}


The calculational procedure goes for dimensionally regulated Feynman integrals in the momentum space like this:
\begin{itemize}
\item[(i)] Start from integrals expressed by
  Feynman parameters, see Eq.~(\ref{FeynSgen}) and the example in Eq.~(\ref{feynman-SE}). 
 \item[(ii)]\;  Transform Feynman integrals into the \mb{} representations, see section~\ref{sec:3MBrepr} with the \texttt{AMBRE} package~\cite{Gluza:2007rt,Gluza:2010rn,Dubovyk:2016ocz,ambrewww} controlled for automation procedures by the \texttt{PlanarityTest.m} package (choice between \la{} and \ga{})~\cite{Bielas:2013v11,Bielas:2013rja}.
 \item[(iii)]\;\;   Perform an analytical continuation in $\epsilon$ with \texttt{MB.m} or \texttt{MBresolve.m}~\cite{Czakon:2005rk,Smirnov:2009up} for original integrals of dimension $d$,  $d=4-2\epsilon$,  with 
  integration paths parallel to the imaginary axis. 
 \item[(iv)]\;  Expand the Mellin-Barnes integrals as series in small $\epsilon$ with \texttt{MB.m} or \texttt{MBresolve.m}.
  \item[(v)]\;  Perform simplifications using Barnes lemmas and \texttt{barnesroutines.m}~\cite{mbtools}.
\end{itemize}

After step (v) the original representation of the Feynman integral in terms of \MB{} integrals expanded in $\epsilon$ is formulated.
One may now start to calculate them, either analytically or numerically, or in a mixed approach -- See next sections and chapters.

\section{{Using Barnes Lemmas Efficiently} \label{sec:BLeff}} 

As already shown in section~\ref{sec:3MBrepr} the dimension of the representation can be decreased using the Barnes lemmas by analytical integration over $z$-variables which do not appear in powers of kinematic variables, so Barnes lemmas are an essential part of the construction of MB representations.

In Tab.~\ref{tab:Simpl1}, we show different ways to decrease the number of MB integrations for a simple polynomial which nevertheless can be a part of more complex $U$ or $F$ functions.
%sub-expression
\begin{table}
\setlength{\tabcolsep}{1em}
\renewcommand{\arraystretch}{1.5}
\centering
%\captionlistentry{factorization}
\begin{tabular}{lll}
\hline
(i) & 
$x_1 x_3 + x_1 x_4 + x_2 x_3  + x_2 x_4 $ & 3-dim representation 
\\ \hline
(ii) &
$(x_1 + x_2)(x_3 + x_4)$                  & 2-dim representation 
\\ \hline
(iii) &
$(x_1 + x_2)(x_3 + x_4)$ + BL   & 
0-dim representation 
\\ \hline
\multirow{2}{*}{(iv)} &
$(x_1 + x_2)(x_3 + x_4) \rightarrow$   
$\big[x_1 \rightarrow v_1 \xi_{11}, x_2 \rightarrow v_1 \xi_{12}$, & 
\multirow{2}{*}{0-dim representation}
\\
& $\delta(1 - \xi_{11} - \xi_{12});$ 
 $x_3 \rightarrow v_2 \xi_{21}, \ldots \big] \rightarrow v_1 v_2$ & 
\\ \hline
\end{tabular}
\caption{Simplification of graph polynomials by factorization, %, common subexpression 
Barnes lemmas and rescaling. Corresponding manipulations can be found at \wwwaux{Simpl}.}
\label{tab:Simpl1}
\end{table}
(i) Direct application of the \MB{} formula of Eq.~(\ref{MBformula}) with integration using
Eq.~(\ref{XintGen}) gives a three dimensional \MB{} integral. (ii) From the second row of the table, one can see that
a factorization and application of \MB{} relation to each factorized term reduces the number of \MB{} integrations to two.
(iii) Now let us consider the simplest polynomial containing only two linear terms.
As shown in the following equation, after obvious integrations steps and omitting coefficients, we
get a combination of gamma functions that exactly fits the \texttt{1BL}
\begin{multline}
(x_1 + x_2)^p \rightarrow \int d z_1 x_1^{z_1} x_2^{p-z_1} \Gamma(-z_1) \Gamma(-p+z_1) \delta(1-x_1-x_2) \\
\rightarrow \int d z_1 \Gamma(-z_1) \Gamma(-p+z_1) \Gamma(z_1+1) \Gamma(p-z_1+1) / \Gamma(p+2).
\end{multline}
In this way, factorization together with \texttt{1BL} give no additional \MB{} integrations
automatically. The same situation we have had in the section~\ref{sec:3MBrepr} where we expressly
did not simplify the term $(x_1+x_2)^2$. One can recursively prove this property for a linear
combination of any length.
(iv) 
As an alternative, one can perform a rescaling of integration variables with an additional delta function, as
briefly shown in the fourth row of the Tab.~\ref{tab:Simpl1}. Such procedure leads to a very efficient
simplification that also gives no extra \MB{} integrations. This approach is used intensively for the construction
of \MB{} representations within the \texttt{GA} approach, see section~\ref{section:GA}.

Another way to achieve the same result without factorizing linear terms is to remove
as many as possible $z$-variables out of exponents of invariants and masses and then to check for \texttt{1BL}.
Such a procedure can be done by a simple linear shift of one of the $z$ variables. Usually, we have a combination
$\sum_{j=i_1}^{i_N} \alpha_j z_j$ in the exponent of one of the invariants and   
the same combination in the exponent of another one, and such combination doesn't appear in other exponents. In this situation one can make a shift $z_{i_1} \rightarrow z_{i_1} - \sum_{j=i_2}^{i_N} \alpha_j z_j$
and check Barnes' first lemma for all variables $z_{i_2} \dots z_{i_N}$. 
This algorithm is implemented in \ar~ and
works very well for the \texttt{LA} approach where the $F$ polynomial has only degree 2.
In other words, it allows to catch all linear combinations for one-loop sub-diagrams due to a
relative simplicity of the $F$ polynomial, for details see the next section.

Going beyond one-loop, graph polynomials become more and more complicated and factorization of linear
sub-expressions becomes not so obvious. In this case based on experience with shifts of variables described
above we can go further and try find some suitable linear transformation of integration variables which will
allow us to apply one of Barnes lemmas.
Such search have been
implemented in the package \texttt{barnesroutines.m}~\cite{mbtools-kosower} 
for \mb-integrals with fixed contours which appear after the expansion of the \mb-representation in $\epsilon$.

However, we can also apply the Barnes lemmas to the original \mb{} integral before $\epsilon$ expansion with a fixed contour of integration. This approach is interesting for two reasons. % this approach has also some disadvantages when comparing to the general \mb-representation without expansion in $\epsilon$.
First, the application of Barnes lemmas to \mb-integrals with fixed contours may lead to a large number of \mb-integrals. 
Second, starting with the $\epsilon$ expansion from a high dimensional \mb-representation we get in general a bigger cascade of integrals than starting from a representation where dimensionality was already decreased by Barnes lemmas. 

Let us discuss then a general strategy how to determine suitable $z$ transformations for applying the Barnes lemmas to the \mb{} integrals before $\eps$ expansion. 
This strategy gives the minimal number of dimensions for \mb-integrals.

We start by encoding the $z$-dependence of gamma functions of the \mb{} integral in Eq.~(\ref{MBgenForm}) in a matrix form  
\begin{equation}
 M_{\Gamma} Z = \begin{bmatrix} 
   & \alpha_{ij} (\mbox{numerator}) & \\
    \hdotsfor{3} \\
   & \alpha_{ij} (\mbox{denominator}) & 
\end{bmatrix} 
\begin{pmatrix}
z_1 \\
\vdots \\
z_r
\end{pmatrix}.
\label{eq:mgamma}
\end{equation}
$M_{\Gamma}$ is a rectangular $(N_n + N_d) \times r$ matrix whose upper part contains $\alpha$-coefficients in Eq.~(\ref{alphas}) 
from the numerator in Eq.~(\ref{MBgenForm}) and whose bottom part contains the analogous $\alpha$-coefficients from the denominator in Eq.~(\ref{MBgenForm}).
$Z$ is an $r$-vector of integration variables $z_i$. Now, any linear variable transformation can be represented as
\begin{equation}
 M_{\Gamma} Z = M_{\Gamma} U U^{-1} Z = M_{\Gamma} U Z^{\prime} , \,\,\, Z^{\prime} = U^{-1} Z,
\label{eq:Utransform} 
\end{equation}
with a non-singular $r \times r$ transformation matrix $U$. $M_{\Gamma}$ encodes a new $z$ structure of gamma functions.
Barnes lemmas can be applied if columns in $M_{\Gamma} U$ have specific structure. 
\begin{itemize}
    \item 
For the first Barnes lemma elements
in a column from $N_n + 1$ to $N_n + N_d$ must be equal to $0$ and elements from $1$ to $N_n$ must contain the set $\{1,1,-1,-1\}$
while all others must be also equal to $0$. This can be formulated in terms of overdetermined systems of linear equations
\begin{equation}
 M_{\Gamma} X = \{B_1\}.
\end{equation}
$X$ is an unknown $r$-vector representing a column in $U$ and $\{B_1\}$ is a set of all possible right hand sides.
For general $N_n$ one has $\dfrac{3 N_n !}{4! (N_n - 4)!}$ different r.h.s.

Because systems are overdetermined they may have no solutions at all. This means that there are no transformations leading to the Barnes' first lemma. If some number of solutions
$n_s$ is found, one can proceed further with the construction of the matrix 
$U$. Each solution represents a column in the matrix and can be placed on any position starting from the diagonal $U$.
The procedure of replacing columns in a diagonal matrix by our solutions continues until the matrix becomes singular.
After all these transformations, the maximal amount of added columns before the matrix becomes singular gives the number of integrations which can be done with the help 
of Barnes' first lemma.
\item 
For a discussion concerning efficient application of the second Barnes lemma to the construction of \mb{} integrals we refer the reader to~\cite{Dubovyk:2019ivv}. For \texttt{2BL}
the condition on arguments of gamma functions in Eq.~(\ref{cond2ndBL}) is important.
\end{itemize}
The search for transformations for both lemmas can be performed recursively until no solution can be found.
The resulting dimensionality may also depend on the order in which the search is performed. Searching for a transformation for the second lemma can be successful after a search and application of the first lemma and vice-versa. The efficiency of this procedure will be shown in the next sections.

\section{Loop--by--loop  (\la{}) Approach
\label{sec:LAapproach}}

In the \la{} approach each subloop of a multiloop diagram is treated separately 
in an iterative way. Within this approach, the overall delta function applied to 
any 1-loop diagram results in $U=1$ and we have to care  only about the $F$ polynomial. 
Aspects of deriving \mb{} representations, such as the order of integration, 
simplifications of a result via Barnes lemmas, etc., using the \la{} are discussed in~\cite{Gluza:2007rt, Gluza:2010rn} and a lot of instructive examples can be found in~\cite{ambrewww}. 
For planar cases the automatic construction of \mb{} representations 
by the \la{} %(\ar ~v1.3.1 and v2.1.1) 
seems to be optimal, see, for example QED ladder diagrams shown in Fig.~\ref{ladder} and Tab.~\ref{planartable}.

\begin{figure}[!t]
 \begin{center}
 \includegraphics[scale = 0.5]{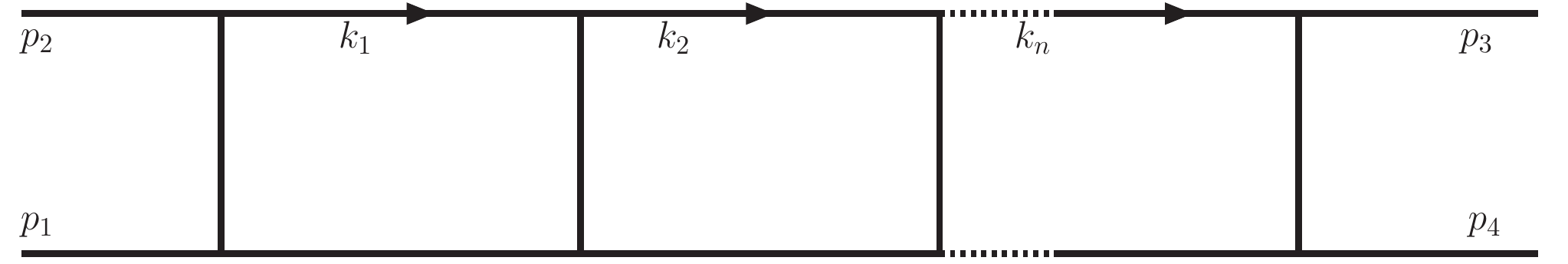}
 \end{center}
 \caption{$n$-loop ladder diagram with $k_1,...,k_n$ internal and $p_1,...,p_4$ external momenta.}
 \label{ladder}
\end{figure}

\begin{table}[!t]
\scalebox{0.96}{\parbox{\linewidth}{%
\centering
\begin{tabular}{|l|cccc|cccc|}
\hline \hline
Dimensions of planar          & \multicolumn{4}{|c|}{Massless~ cases} & \multicolumn{4}{|c|}{Massive~(QED) cases}  \\
ladder \mb{} representations  & \multicolumn{4}{|c|}{ }               & \multicolumn{4}{|c|}{ }                    \\
\hline 
Number of loops ($L$)         & 1 & 2 & 3 & 4& {1}          &  {2}          &  {3}           &  {4}                \\
\hline 
No Barnes first lemma (BL1) & 1 & 4 & 7 & 10 &  3           &   8           &   13           &   18                \\
With BL1                    &{1}&{4}&{7}&{10}&  2 ({1}+{1}) &   6 ({4}+{2}) &   10 ({7}+{3}) &   14 ({10}+{4})     \\
\hline \hline
\end{tabular}
}}
\caption[]{ Optimal results for ladder diagrams defined in Fig.~\ref{ladder}. It has been found in~\cite{Blumlein:2014maa} that $\rm{Dim(massive~ case)} = \rm{Dim(massless~ 
case)}+\#loops$.}
\label{planartable}
\end{table}
 
\subsection{\texttt{LA} Approach, Planar Example}

Let us see how it works for the 2-loop vertex in Fig.~\ref{fig:2lvertPLB}. 
The steps given below can be found at \wwwaux{V6l3m1M}. 
Such a vertex appears in the two-loop $Z$-boson decay calculations~\cite{Dubovyk:2016ocz,Dubovyk:2018rlg}. 
We consider the integral with three massive $Z$-boson and one Higgs boson propagators,  
$m_1=m_3=m_6=M_Z$, $m_2=M_H$ and $m_4=m_5=0$. The integral representation of this diagram is the following:
\begin{align}
 I_{\rm V6l3m1M} = \int d^d k_1 d^d k_2 & \frac{1}{(k_1^2 - M_H^2)((k_1-k_2)^2 - M_Z^2)(k_2^2 - M_Z^2)} \nonumber \\ 
              & \times \frac{1}{(k_1+p_1)^2(k_2+p_1)^2((k_1+p_1+p_2)^2 - M_Z^2)}
\label{xh0w_3}
\end{align}
with $p_1^2 = p_2^2 = 0$ and $(p_1 + p_2)^2 = s$.

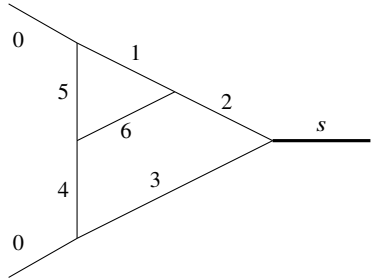
\begin{figure}
\centering
\begin{tikzpicture}[scale=1.3]
\begin{feynman}
 
\vertex at (0,-2) (i1);
\vertex at (0,2,1) (i2);
\vertex at (1,1) (i3);
\vertex at (1,0) (i4);
\vertex at (1,-1) (i5);
\vertex at (2,.5) (i6);
\vertex at (3,0) (f1);
\vertex at (4,0) (f2);
 
\draw    (0.3,1.4) -- (1,1) ; 
\draw    (1,1) -- (1,0) ;
\draw    (1,0) -- (1,-1) ;
\draw    (0.3,-1.4) -- (1,-1) ;
\draw    (1,0) -- (2,.5) ;
\draw    (2,.5) -- (1,1) ;
\draw   (2,.5) -- (3,0) ;
\draw    (1,-1) -- (3,0) ;
\draw [very thick]  (3,0) -- (4,0) ;
 
\node[below] at (0.4,1.2) {{{\bf $0$}}};
\node[above] at (0.4,-1.2) {{{\bf $0$}}};
\node at (3.5,0.15) {{{\bf $s$}}};
\node[left] at (1.,.5) {{$5$}};
\node[left] at (1.,-.5) {{$4$}};
\node at (1.5,0.1) {{$6$}};  
\node at (1.6,0.9) {{$1$}};
\node at (2.53,.4) {{$2$}};
\node at (1.8,-.4) {{$3$}};

\end{feynman}
\end{tikzpicture}
  \caption{\label{fig:2lvertPLB}
Two-loop vertex diagram for which \la{} is applied. Different mass configurations 
for propagators 1-6 lead to different dimensions of \mb{} integrals, 
see example $I^{\rm MB}_{\rm V6l3m1M}$ and Eq.~(\ref{Ex1MBrepr}).}
\end{figure}
 
The minimal \mb{} dimensionality for this integral can be obtained if one first integrates over 
the subloop triangle marked by dashed lines on Fig.~\ref{LAex1fig}, and then over remaining propagators,
so the order of integration is $k_2 \rightarrow k_1$. 

For the first iteration we have
\begin{equation}
 I^{(1)}_{\rm V6l3m1M} = \int d^d k_2 \frac{1}{((k_1-k_2)^2 - M_Z^2)(k_2^2 - M_Z^2)(k_2+p_1)^2}.
\end{equation}
The $F$ polynomial corresponding to this integral is
\begin{equation}
 F^{(1)}(\vec x) = M_Z^2 x_1 + M_Z^2 x_2 - k_1^2 x_1 x_2 - (k_1 + p_1)^2 x_1 x_3.
\end{equation}
The linear dependence on Feynman parameters for $M_Z^2$ terms takes place due to the fact that $U = 1$.
The \mb{} representation for this subloop can be written as
\begin{eqnarray}
I^{\rm MB (1)}_{\rm V6l3m1M} &= & ... (-1)^{2 - \epsilon - z_1 - z_2} (M_Z^2)^{z_1 + z_2} 
(k_1^2)^{-z_3} ((k_1 + p_1)^2)^{-1 - \epsilon - z_1 - z_2 - z_3} \nonumber \\
&\times&    \Gamma(-z_1) \Gamma(-\epsilon - z_2) \Gamma(-z_2)  \Gamma(-\epsilon - z_1 - z_2 - z_3) \Gamma(-z_3) \nonumber \\
& \times & \Gamma(1 + z_2 + z_3) \Gamma(1 + \epsilon + z_1 + z_2 + z_3)/ \Gamma(1 - 2 \epsilon - z_1 - z_2).
\end{eqnarray}
Integrations over $z$-variables in a complex plane are assumed.

In the next stage we integrate over the remaining loop momentum $k_1$ 
\begin{equation}
 I_{\rm V6l3m1M}^{(2)} = \int \frac{ d^d k_1}{(k_1^2)^{z_3}(k_1^2 - M_H^2)((k_1+p_1)^2)^{2 + \epsilon + z_1 + z_2 + z_3}((k_1+p_1+p_2)^2 - M_Z^2)}.
\label{ex1vertex} 
\end{equation}
This integral has four propagators. Note that now the powers of $k_1^2$ and $(k_1+p_1)^2$ 
are complex and depends on $z_i$ variables. One notices
that during the first step it is possible to modify $F$ by the term $\pm M_H^2 x_1 x_2$ and get in the next iteration
only one propagator $k_1^2 - M_H^2$ with the complex power. However, in this case the \mb{} representation
for that modified $F$ function will have a term $(-M_H^2)^z$ which forces the integration to be always of the `Minkowskian' type, details will be explained in chapter~\ref{chapter-MBnum}. 
%In section ... we will discuss threshold effects and how to avoid such situation.
Nonetheless, such a construction is according to the \la{} strategy  where the propagators 
in Eq.~(\ref{ex1vertex}) can be represented as a vertex diagram where propagators $k_1^2$ and $k_1^2-M_H^2$ are
combined into one line as shown in Fig.~\ref{LAex1fig} and this doesn't bring any complication for further construction.

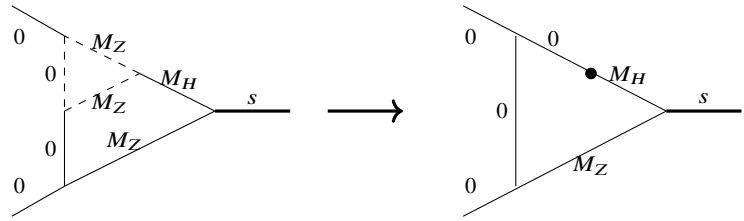
\begin{figure}
\centering
\begin{tikzpicture}[scale=1.]
\begin{feynman}
 
\vertex at (0,-2) (i1);
\vertex at (0,2,1) (i2);
\vertex at (1,1) (i3);
\vertex at (1,0) (i4);
\vertex at (1,-1) (i5);
\vertex at (2,.5) (i6);
\vertex at (3,0) (f1);
\vertex at (4,0) (f2);
 
\draw    (0.3,1.4) -- (1,1) ; 
\draw [dashed]    (1,1) -- (1,0) ;
\draw    (1,0) -- (1,-1) ;
\draw    (0.3,-1.4) -- (1,-1) ;
\draw  [dashed]  (1,0) -- (2,.5) ;
\draw  [dashed]  (2,.5) -- (1,1) ;
\draw   (2,.5) -- (3,0) ;
\draw    (1,-1) -- (3,0) ;
\draw [very thick]  (3,0) -- (4,0) ;
 
\node[below] at (0.4,1.2) {{{\bf $0$}}};
\node[above] at (0.4,-1.2) {{{\bf $0$}}};
\node at (3.5,0.15) {{{\bf $s$}}};
\node[left] at (1.,.5) {{$0$}};
\node[left] at (1.,-.5) {{$0$}};
\node at (1.6,0.1) {{$M_Z$}};  
\node at (1.6,0.9) {{$M_Z$}};
\node at (2.53,.42) {{$M_H$}};
\node at (1.8,-.4) {{$M_Z$}};

\draw [->,very thick] (4.5,0) -- (5.5,0);

\draw    (6.3,1.4) -- (9,0); ; 
\draw    (7,1) -- (7,-1) ;
\draw    (6.3,-1.4) -- (9,0) ;
\draw [very thick]  (9,0) -- (10,0) ;
\draw [fill] (8,0.5) circle [radius=2pt];
\node[below] at (6.4,1.2) {{{\bf $0$}}};
\node[above] at (6.4,-1.2) {{{\bf $0$}}};
\node[left] at (7,0) {{{\bf $0$}}};
\node[above] at (9.5,0) {{{\bf $s$}}};
\node[above] at (7.5,0.75) {{{\bf $0$}}};
\node[above] at (8.5,0.25) {{{\bf $M_H$}}}; 
\node[below] at (8,-0.5) {{{\bf $M_Z$}}}; 
  
\end{feynman}
\end{tikzpicture}
 \caption{Example of the \la{}, first iteration -- triangle subloop (dashed lines), second -- four propagators effectively combined into a vertex.}
 \label{LAex1fig}
\end{figure}

The $F$ polynomial for the final iteration is
\begin{equation}
 F^{(2)}(\vec x) = M_H^2 x_2 + M_Z^2 x_4 - s x_1 x_4 - s x_2 x_4,
\label{F2_1} 
\end{equation}
and we end up with a six dimensional \mb~ representation:
\begin{eqnarray}
 I^{\rm MB}_{\rm V6l3m1M}&=& (...) (M_H^2)^{z_4} (M_Z^2)^{z_{125}} (-s)^{-2 - 2 \epsilon - z_{1245}}\\
& \times&    \Gamma(-z_1) \Gamma(-\epsilon - z_2) \Gamma(-z_2) \Gamma(-\epsilon - z_{123}) 
    \Gamma(1 + z_{23}) \nonumber \\ 
    & \times& \Gamma(1 + \epsilon + z_{123}) \Gamma(-1 - 2 \epsilon - z_{124}) 
    \Gamma(-z_4) \Gamma(-z_5) \Gamma(-1 - 2 \epsilon - z_{1256}) \nonumber \\ 
    & \times& \Gamma(-z_6) \Gamma(-z_3 + z_6) 
    \Gamma(2 + 2 \epsilon + z_{12456}) / \Gamma(1 - 2 \epsilon - z_{12}) \Gamma(-3 \epsilon - z_{1245}). \nonumber 
\label{Ex1MBrepr}    
\end{eqnarray}
where we abbreviated $z_{123} = z_1 + z_2 + z_3$, etc.

We can further simplify this representation. Already at the stage 
of $F$ polynomials one can see that, for example, $F^{(1)}(\vec x)$ can be rewritten in the form
\begin{equation}
 F^{(1)}(\vec x) = M_Z^2 (x_1 +  x_2) - k_1^2 x_1 x_2 - (k_1 + p_1)^2 x_1 x_3 
\label{eq:F1V6l3mZ1mH} \end{equation}
and $x_1 + x_2$ can be considered as one term and the \mb{} formula can be applied recursively,
first to the $F$ polynomial as a whole, and then to the term $x_1 + x_2$. This always allows
to apply \texttt{BL1} to a $z$-variable used in the \mb{} transformation of the sum 
$x_1 + x_2$ as explained in section~\ref{sec:BLeff}. The similar trick can be also done 
for the term $-s x_4 (x_1 + x_2)$ in $F^{(2)}(\vec x)$.

However, for automation in \ar~ package another strategy was chosen (see previous section). After the shift
$z_1 \rightarrow z_1-z_2-z_5$, Barnes' first lemma is applied for variables $z_2$ and $z_6$ 
giving a four-dimensional final representation.

\subsection{\texttt{LA} Approach, Non-Planar Example}

Next example is a non-planar two-loop vertex shown in Fig.~\ref{fig:2lvertPLBc}.

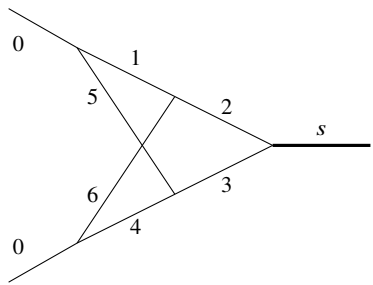
\begin{figure}
\centering
\begin{tikzpicture}[scale=1.3]
\begin{feynman}
 
\vertex at (0,-2) (i1);
\vertex at (0,2,1) (i2);
\vertex at (1,1) (i3);
\vertex at (1,0) (i4);
\vertex at (1,-1) (i5);
\vertex at (2,.5) (i6);
\vertex at (3,0) (f1);
\vertex at (4,0) (f2);
 
\draw    (0.3,1.4) -- (1,1) ; 
\draw    (1,1) -- (2,-.5) ;
%\draw    (1,0) -- (1.8,-.5) ;
\draw    (0.3,-1.4) -- (1,-1) ;
\draw    (1,-1) -- (2,.5) ;
\draw    (2,.5) -- (1,1) ;
\draw   (2,.5) -- (3,0) ;
\draw    (1,-1) -- (3,0) ;
\draw [very thick]  (3,0) -- (4,0) ;
 
\node[below] at (0.4,1.2) {{{\bf $0$}}};
\node[above] at (0.4,-1.2) {{{\bf $0$}}};
\node at (3.5,0.15) {{{\bf $s$}}};
\node[left] at (1.3,.5) {{$5$}};
\node[left] at (1.3,-.5) {{$6$}};
%\node at (1.5,0.1) {{$6$}};  
\node at (1.6,0.9) {{$1$}};
\node at (2.53,.4) {{$2$}};
\node at (2.53,-.4) {{$3$}};
\node at (1.6,-0.83) {{$4$}};
\end{feynman}
\end{tikzpicture}
        \caption{\label{fig:2lvertPLBc}
The two-loop non-planar vertex diagram for which \la{} is also applied.}
\end{figure}
 
There is only one massive propagator, $m_1=m_2=m_3=m_4=m_5=0$ and $m_6=M_Z$. This integral is especially demanding for the \texttt{SD} method, for a discussion see~\cite{Dubovyk:2016aqv}~and~\cite{Jahn:2018zsh}.
The integral representation in this case is (see \wwwaux{V6l1MZ}) 
\begin{align}
 I^{(1)}_{\rm V6l1m} = \int \int  & \frac{d^d k_1 d^d k_2}{k_1^2 (k_1-k_2)^2 k_2^2
  ((k_1 - k_2 + p_1)^2-M_Z^2) (k_2+p_2)^2 (k_1+p_1+p_2)^2}.
\label{0h0w_15}
\end{align}
The main purpose of the example is to show difficulties related to the application of the \la{} method to non-planar diagrams.

We start with a one-loop subdiagram and the order of integrations $k_2, k_1$.
There are four propagators which include $k_2$ (box integral)
\begin{equation}
 I^{(1)}_{\rm V6l1m}= \int d^d k_2 \, \frac{1}{(k_1 - k_2)^2 (k_2)^2 ((k_1 - k_2 + p_1)^2 - M_Z^2) (k_2 + p_2)^2},
\end{equation}
which has a more complicated $F^{(1)}(\vec x)$ function:
\begin{align}
 F^{(1)}(\vec x) = & - k_1^2 x_1 x_2 - ((k_1 + p_1)^2 - M_Z^2) x_2 x_3 -(k_1 + p_2)^2 x_1 x_4 \nonumber \\ & - (k_1 + p_1 + p_2)^2 x_3 x_4 
+ M_Z^2 x_3 (x_1 + x_3 + x_4).
\end{align}

More problems appear when we go to the integration over loop momenta $k_1$
\begin{align}
 I^{(2)}_{\rm V6l1m} = \int d^d k_1 \, & \frac{1}{((k_1)^2)^{1 - z_1} ((k_1 + p_1)^2 - M_Z^2)^{-z_3} ((k_1 +  p_2)^2)^{-z_5}} \nonumber \\ 
 & \times \frac{1}{((k_1 + p_1 + p_2)^2)^{3 + \epsilon + z_{123456}}}.
\label{0h0w_15_2}  
\end{align}
In Eq.~(\ref{0h0w_15_2}) the propagators do not form a diagram with conserved momentum flow. Propagators
$(k_1)^2 ((k_1 + p_1)^2 - M_Z^2) (k_1 + p_1 + p_2)^2$ correspond to a 1-loop vertex diagram but
$(k_1 +  p_2)^2$ must be considered as an artificial numerator,  in non-planar case the exponent of such numerator is not an integer but a complex number.  

After the $k_1$ integration $F^{(2)}(\vec x)$ takes the form
\begin{equation}
 F^{(2)}(\vec x) = M_Z^2 x_2 + 2 s x_2 x_3 - 2 s x_1 x_4, 
\end{equation}
and the final 6-dimensional representation can be written in the form\footnote{After $\epsilon$ expansion 
%via \mbm~ 
only maximally 5-dimensional \mb{} integrals remain, up to the constant in $\eps$.}
\begin{align}
  I^{\rm MB}_{\rm V6l1m} = & \frac{1}{(-2s)^{2 + 2 \epsilon}}  \int\limits_{-i\infty}^{i\infty} \frac{dz_1}{2 \pi i} \ldots \frac{dz_8}{2 \pi i} (-1)^{z_8} 
  \left( \frac{M_Z^2}{-2s} \right)^{z_2} \Gamma(2 + \epsilon + z_{1235} - z_7) \Gamma(-z_7)
  \nonumber \\ & \times \Gamma(-z_1) \Gamma(-1 - 2 \epsilon - z_{13}) \Gamma(1 + z_{13})  
 \Gamma(-1 - \epsilon - z_{15})
  \Gamma(1 + z_{15}) 
 \nonumber \\
 & \times \Gamma(-1 - 2 \epsilon - z_{128}) \Gamma(1 - \epsilon + z_{135} - z_{78}) \Gamma(-z_8) \Gamma(2 + 2 \epsilon + z_{28}) \Gamma(-z_5 + z_8)
 \nonumber \\
 & \times \Gamma(-z_3 + z_{78})  \Gamma(-z_2 + z_7) \Gamma(-1 - \epsilon - z_{123} + z_7) / \Gamma(-2 \epsilon) \Gamma(1 - z_1)
 \nonumber \\
 & \times \Gamma(-1 - 2 \epsilon - z_{123} + z_7)  \Gamma(-3 \epsilon - z_2) \Gamma(3 + \epsilon + z_{1235} - z_7)  \label{eq:MBnplanar_bad}.
\end{align}
The main disadvantage of this representation is the factor $(-1)^{z_8}$ with a complex variable $z_8$ which forces 
the integral to be always of the Minkowskian type.

By construction this representation is correct and in principle can be used for  the evaluation
of the integral but due to its Minkowskian form it cannot be evaluate numerically by \mbm{} which works for Euclidean kinematic. In the next sections we will describe another approach to construct \mb{} representations for all types of non-planar diagrams without $(-1)^z$-type factors just discussed. To achieve it, we consider first the powerful Cheng-Wu theorem.

\section{Cheng-Wu Theorem \label{sec:CWtheorem}}

The theorem (\texttt{CW}) has been considered by Cheng and Wu in~\cite{Cheng:1987a}.
\begin{theorem} \label{CWtheorem0}
For the Feynman parameter representation in Eq.~(\ref{FeynSgen})
the Cheng--Wu (CW) theorem states that the same formula
holds with the delta function $ \delta(1-\sum \limits_{i=1}^N x_i)$ replaced by
\begin{equation}
 \delta\left( \sum \limits_{i \in \Omega} x_i -1 \right),
 \label{CWtheorem}
\end{equation}
\noindent
where $\Omega$ is an arbitrary subset of the lines $1, \ldots, L$, when the integration over
the rest of the variables, i.e. for $i \notin \Omega$, is extended to the integration from
zero to infinity. 
\end{theorem}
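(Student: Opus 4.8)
The plan is to exploit the projective (homogeneity) structure of the Feynman-parametric integrand in Eq.~(\ref{FeynSgen}), combined with exactly the insertion-of-unity trick already used to pass between the Schwinger and Feynman representations in Eqs.~(\ref{eq:deltaIdent})--(\ref{eq:Sexp}). The underlying idea is that, once the $\delta$-function is stripped away, the remaining integrand is homogeneous of a fixed degree in the Feynman parameters, so that any constraint of the form $\delta(1-\eta(\vec x))$ with $\eta$ homogeneous of degree one merely fixes the overall scale of the $x_i$. Changing the choice of $\eta$ from $\sum_{i=1}^N x_i$ to $\sum_{i\in\Omega} x_i$ therefore cannot alter the value of the integral, which is precisely the content of the theorem.

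First I would record the scaling weights. Writing the integrand of Eq.~(\ref{FeynSgen}) without the delta function as
\[
 f(\vec x) = \prod_{j=1}^N x_j^{n_j-1}\,
 \frac{U(\vec x)^{N_{\nu}-d(L+1)/2}}{F(\vec x)^{N_{\nu}-dL/2}},
\]
and using the homogeneity properties established in section~\ref{sec:1MBrepr} (namely $U$ of degree $L$ and $F$ of degree $L+1$), a direct count of weights under $x_i\to\lambda x_i$ gives $N_\nu-N$ from the monomial prefactor, $L\,(N_\nu-d(L+1)/2)$ from the $U$-factor and $-(L+1)(N_\nu-dL/2)$ from the $F$-factor. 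The $d$-dependent pieces cancel and the total weight collapses to $-N$, so that $f(\lambda\vec x)=\lambda^{-N}f(\vec x)$. Equivalently, the combination $f(\vec x)\prod_j dx_j$ is scale invariant, and this is the single property I will use.

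Next I would start from the standard representation and insert the identity of Eq.~(\ref{eq:deltaIdent}) written with the $\Omega$-sum, $1=\int_0^\infty d\lambda\,\delta\bigl(\lambda-\sum_{i\in\Omega}x_i\bigr)$, and then perform the rescaling $x_i=\lambda u_i$. The measure contributes $\lambda^N$, the integrand contributes $\lambda^{-N}$, the inserted constraint becomes $\lambda^{-1}\delta\bigl(1-\sum_{i\in\Omega}u_i\bigr)$ since $\sum_{i\in\Omega}x_i$ is homogeneous of degree one, and the original delta turns into $\delta\bigl(1-\lambda\sum_{i=1}^N u_i\bigr)$. Collecting the powers of $\lambda$ one is left with
\[
 \int_0^\infty \prod_j du_j\, f(\vec u)\,
 \delta\Bigl(1-\sum_{i\in\Omega}u_i\Bigr)
 \int_0^\infty \frac{d\lambda}{\lambda}\,
 \delta\Bigl(1-\lambda\sum_{i=1}^N u_i\Bigr),
\]
and the inner $\lambda$-integral equals $1$ whenever $\sum_i u_i>0$, exactly as in the derivation of Eq.~(\ref{eq:Sexp}). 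What remains is the Cheng--Wu form, with the $u_i$ for $i\notin\Omega$ now ranging from $0$ to $\infty$, which is the statement of the theorem.

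I expect the substantive point to be bookkeeping rather than any deep step: the manipulation presupposes that the gauge functions $\sum_{i\in\Omega}x_i$ and $\sum_{i=1}^N x_i$ are strictly positive on the support of the integral (so that the $\delta$-insertion really integrates to unity and the rescaling is a diffeomorphism of the positive orthant), and it requires the interchange of the $\lambda$-integration with the remaining integrations to be legitimate. The cleanest way to secure both is to work in the region of the regulator $\epsilon$ where the parametric integral converges absolutely, carry out the above change of variables there, and appeal to analytic continuation in $\epsilon$ for the general case. I would flag this convergence/positivity caveat explicitly, since it is the only place where the formal argument could fail.
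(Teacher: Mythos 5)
Your proof is correct and is essentially the paper's own argument: both rest on the scale-fixing insertion of unity of Eq.~(\ref{eq:deltaIdent}) restricted to the subset $\Omega$, combined with the observation that the combination $\prod_j dx_j\, x_j^{n_j-1}\, U(\vec x)^{N_\nu-d(L+1)/2}/F(\vec x)^{N_\nu-dL/2}$ is invariant under a common rescaling of all Feynman parameters (your weight count $N_\nu-N+L(N_\nu-d(L+1)/2)-(L+1)(N_\nu-dL/2)=-N$ is the same homogeneity bookkeeping the paper expresses as $A+B+C=0$). The only cosmetic difference is that you perform the rescaling directly on the Feynman-parametric integral carrying the full delta function and then collapse the leftover $\int_0^\infty \frac{d\lambda}{\lambda}\,\delta\bigl(1-\lambda\sum_i u_i\bigr)=1$, whereas the paper re-runs the Schwinger-to-Feynman derivation of Eq.~(\ref{eq:Sexp}) with the $\Omega$-restricted delta inserted from the start; the substance is identical.
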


\begin{proof}
% appendix C.10, Cheng:1987a
For the proof we use the identity in Eq.~(\ref{eq:deltaIdent}) and restrict it to the subset $\Omega$, namely  
\begin{equation}
 1 = \int \limits_{0}^{\infty} \frac{d \lambda}{\lambda} 
 \delta \left( 1 - \frac{1}{\lambda} \sum \limits_{i=1}^{N} \alpha_i  \right) \Leftrightarrow
1 = \int \limits_{0}^{\infty} \frac{d \lambda}{\lambda} 
 \delta \left( 1 - \frac{1}{\lambda} \sum \limits_{i \in \Omega} \alpha_i  \right). \label{eq:xomega}
\end{equation}
Then we change variables from $\alpha_i$ to $\alpha_i = \lambda x_i$,  as applied for obtaining the representation in Eq.~(\ref{eq:Sexp}).
The key point in the calculation is that, as noticed in~\cite{Cheng:1987a}, the terms $\prod \limits_{j=1}^N dx_j ~ x_j^{n_j-1}$, $U(x)^{N_\nu-d(L+1)/2}$ and $F(x)^{N_\nu-d L/2}$ in Eq.~(\ref{FeynSgen}) are homogeneous in $x$ of order 0.
Indeed, $\prod \limits_{j=1}^N dx_j ~ x_j^{n_j-1} \sim N_\nu \equiv A$, $U(x)^{N_\nu-d(L+1)/2} \sim (N_\nu-d(L+1)/2) L \equiv B$, $F(x)^{N_\nu-d L/2} \sim (N_\nu-d L/2) (L+1) \equiv C$, as $U$ and $F$ are of order $L$ and $L+1$ respectively, so $A+B+C=0.$
%discussion after eq C66 in cheng-wu 
%Nnu+(Nnu-d(L+1)/2)*L-(Nnu-d L/2)*(L+1)//Simplify
So, we can freely rescale any subset of $x$ parameters in (\ref{eq:xomega}).
\end{proof}

In~\cite{Heinrich:2021dbf} an alternative proof of the \texttt{CW} theorem is given using a notion of sector decomposition~\cite{Binoth:2000ps} where independently of the choice of the coefficients $a_i$ in the general integral $\int_0^\infty d{\bf x}\,f({\bf x})\delta\left(1-  \sum_{i} a_ix_i\right)$,  the considered integrand does not change, $f({\bf x})$ must be the homogenous function in variables $\bf x$. We propose this proof as the Problem~\ref{problem_sd_cw}. Another proof of \texttt{CW} is given in~\cite{Jantzen:2012mw}.

\section{Global (\ga) Approach \label{section:GA}}

The second possibility to construct an \mb{} representation for a given Feynman integral is to integrate 
simultaneously over all loop momenta. In this case $U(\vec x)$ is not equal to $1$ anymore but we can avoid highly oscillating factors of the form $(-1)^z$ as given in the previous example, see Eq.~(\ref{eq:MBnplanar_bad}). A naive way to construct a representation
is to apply the \mb{} master formula of Eq.~(\ref{MBformula}) to both $U(\vec x)$ and $F(\vec x)$ polynomials and then try to simplify the result using
Barnes lemmas. However, in this way, one faces several problems. First, after integration over Feynman parameters,
one always gets $\Gamma(0)$ in the denominator. This happens precisely because of the homogeneity of the original Feynman parameters
representation. We can see it in the following way.

\begin{tips}{\ga{} and the $\Gamma(0)$ Problem}
Let us consider the sunset diagram in Fig.~\ref{fig:sunrisefirst}.
\begin{figure}
\centering
\begin{tikzpicture}[scale=1]
\begin{feynman}
 
\vertex at (0,-2) (i1);
\vertex at (0,2,1) (i2);
\vertex at (1,1) (i3);
\vertex at (1,0) (i4);
\vertex at (1,-1) (i5);
\vertex at (2,.5) (i6);
\vertex at (3,0) (f1);
\vertex at (4,0) (f2);

\draw (1,0) circle [radius=1cm];
\draw (-.5,0) -- (2.5,0); 
 
\node[above] at (-.5,0) {{{\bf $p$}}};
\node[above] at (1,0) {{{\bf $x_2, 0$}}};
\node [below] at (1,1.) {{{\bf $x_1,m$}}};
\node [above] at (1,-1.) {{{\bf $x_3, 0$}}}; 
 
\end{feynman}
\end{tikzpicture}
    \caption{ \label{fig:sunrisefirst}
        The 2-loop sunset diagram with one non-zero mass and Feynman parameters $x_1$, $x_2$, $x_3$. 
      }
\end{figure}
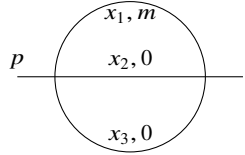

The Symanzik polynomials for the corresponding \texttt{FI} are 
\begin{eqnarray} 
U&=&x_1 x_2+x_2 x_3 +x_1 x_3, \\
F&=& -p^2 x_1 x_2 x_3 + m^2 x_1 U, \label{eq:UFsunset}
\end{eqnarray}
\noindent
for which, see Eq.~(\ref{FeynSgen}), the relevant part of the integration with Feynman parameters is

\begin{equation}
I \propto \int dx_1 dx_2 dx_3 \delta(...) \frac{U^{N_\nu-d(L+1)/2}}{F^{N_\nu-dL/2}} \label{eq:FIsunset}.
\end{equation}
  Substituting Eq.~(\ref{eq:UFsunset}) into Eq.~(\ref{eq:FIsunset}) and using the \MB{} relation of Eq.~(\ref{Mellin-Barnes}) iteratively, we have
\begin{eqnarray} I &\propto & (-p^2 x_1 x_2 x_3)^{z_1} ( m^2 x_1 U)^{-(N_\nu-dL/2)-z_1} U^{N_\nu-d(L+1)/2} \\
 \label{eq:gamma0_part1}
&\propto & (x_1 x_2 x_3)^{z_1} x_1^{-N_\nu+dL/2-z_1} U^{-d/2-z_1}\\
&\propto & (x_1 x_2 x_3)^{z_1} x_1^{-N_\nu+dL/2-z_1} (x_1 x_2)^{z_2} (x_2 x_3)^{z_3} (x_1 x_3)^{-d/2-z_2-z_3-z_1}.
\end{eqnarray}
Now we perform the $x$-integration where the sum of exponents of Feynman parameters enters, see the denominator of Eq.~(\ref{xintegration}). Gathering Feynman parameter factors together we get
\begin{equation}
x_1^{-N_\nu+dL/2-d/2-z_3-z_1} x_2^{z_1+z_2+z_3} 
x_3^{-d/2-z_2}.
\label{eq:gamma0}
\end{equation}
Hence, keeping in mind the overall prefactor $\prod_i x_i^{n_i-1}$, the sum of exponents gives $d(L-2)/2$
which for $L=2$ is $0$.
According to Eq.~(\ref{xintegration}) the result translates to the zero argument of the gamma function in the denominator.
%and vanishing expression, which is bad.
From another side, we know that \MB{} representation for this integral does not vanish, so the gamma function in the denominator should be canceled after suitable integration with the help of \texttt{BL}. We propose this as the Problem~\ref{gammazero}.

To overcome this problem we apply \texttt{CW} theorem by starting from the expression in Eq.~(\ref{eq:gamma0_part1}) 
\begin{equation} \label{eq:gamma0_cw1}
(x_1 x_2 x_3)^z_1 x_1^{-N_\nu+dL/2-z_1} (x_1 x_2 + x_3 (x_1+x_2))^{-d/2-z_1}.
\end{equation}
Now we use \texttt{CW} to the $\Omega$ subset in Eq.~(\ref{eq:gamma0_cw1})
\begin{equation} \label{eq:gamma0_cw2}
    \int_0^1 dx_1 dx_2 dx_3 \delta(1-x_1-x_2-x_3) \to \int_0^\infty dx_3 \int_0^1 dx_1 dx_2 \delta(1-x_1-x_2)
\end{equation}
and we get 
\begin{equation} \label{eq:gamma0_cw3}
x_1^{-N_\nu+dL/2} x_2^{z_1} x_3^{z_1} (x_3+ x_1 x_2)^{-d/2-z_1}.
\end{equation}
We can evaluate the integral over $x_3$
\begin{eqnarray}
    \int\limits_0^\infty dx_2 x_3^{z_1} (x_3+ x_1 x_2)^{-d/2-z_1} = 
    x_1^{1- \frac d2} x_2^{1-\frac d2} \frac{\Gamma[1+z_1]\Gamma[-1+d/2]}{\Gamma[d/2+z_1]},
\end{eqnarray}
and adding integrations over $x_2$ and $x_3$ we get 
\begin{eqnarray}
&&    \int\limits_0^1 dx_1 \int\limits_0^1 dx_2 \delta(1-x_1-x_2) 
(x_1 x_2)^{-N_\nu+dL/2 +1- \frac d2} x_2^{z_1+1-\frac d2} \\
&=& \frac{\Gamma[-N_\nu +1 +d(L-1)/2]\Gamma[z_1+1 -d/2]}{\Gamma[-N_\nu+2+z_1+d(L-2)/2]}.
\end{eqnarray}

This time, coming back to Eq.~(\ref{xintegration}) the argument of gamma in the denominator is non-zero and we can construct \mb{} representation. It can be shown that in general the result as given in Eq.~(\ref{eq:gamma0}) gives `zero' exponents, independently of the number of loops considered.

\end{tips}

An option to regulate this singularity is to shift the exponent of one of the Feynman parameters by arbitrary $\delta$
and then take the limit $\delta \rightarrow 0$, before doing the analytic continuation in the dimensional parameter $\epsilon \rightarrow 0$.
One should stress that shifting the original exponents of propagators $n_i \rightarrow n_i + \eta$,
similar to what we will do in example~\ref{tips_trick_eta}, does not help because $\Gamma(0)$ appears automatically and it does not depend on powers of propagators $n_i$. Second, even at the two-loop level, the amount of terms in $U(\vec x)$ and $F(\vec x)$ is quite large;
the number of \mb{} integrations is equal to the number of terms in a polynomial, minus one, see Eq.~(\ref{MBformula}). It leads to a very high-dimensional \mb{} representation
for which it is difficult to catch all possible simplifications by Barnes lemmas.

\subsection{\texttt{GA} Approach and \texttt{CW} Theorem, the Non-Planar Double Box}

To show how to construct the \mb{} representations properly we start  with a detailed explanation of one of
the first successful implementations of the global approach to a non-trivial Feynman integral,    
namely, the non-planar massless double box diagram~\cite{Tausk:1999vh} given in Fig.~\ref{TauskNPdbox}, with on-shell external legs ($p_i^2 = 0$).

\begin{figure}[h!]
\begin{center}
\includegraphics[scale = 0.6]{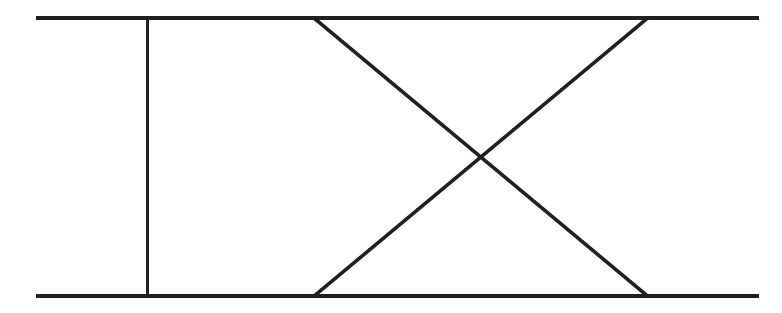}
\end{center}
\caption{The non-planar double box topology.}
\label{TauskNPdbox}
\end{figure}

We define an integral for this diagram as
(all external momenta $p_i$ are incoming in cyclic notation):
\begin{multline}
 B_7^{NP} = \iint d^d k_1 d^d k_2  
 \frac{1}{
 [(k_1+k_2+p_1+p_2)^2]^{n_1} 
 [(k_1+k_2+p_2)^2]^{n_2} 
 [(k_1+k_2)^2]^{n_3} } \\
 \frac{1}{[(k_1-p_3)^2]^{n_4} 
 [(k_1)^2]^{n_5} 
 [(k_2-p_4)^2]^{n_6} 
 [(k_2)^2]^{n_7}}.
 \label{B7NP}
\end{multline}
An explicit form of the Symanzik polynomials for this integral is, see Problem~\ref{prob:FUalgebraic}
\begin{eqnarray}
 U(x) &=& x_1 x_2 + x_1 x_4 + x_2 x_4 + x_1 x_5 + x_2 x_5 + 
 x_2 x_6 + x_4 x_6 + x_5 x_6  \label{umassless} \\ 
 &+& x_1 x_7 + x_4 x_7 + 
 x_5 x_7 + x_6 x_7,\nonumber \\
 F(x) &=& -s~ x_1 x_2 x_5 - s ~x_1 x_3 x_5 - s~ x_2 x_3 x_5 - 
 u ~x_2 x_4 x_6 \nonumber \\ 
 &-& s~ x_3 x_5 x_6 - t~ x_1 x_4 x_7 - 
 s~ x_3 x_5 x_7 - s~ x_3 x_6 x_7.
 \label{fmassless}
\end{eqnarray}
Changing sums of monomials in $x$ into products using the \texttt{MB} master formula, Eq.~(\ref{MBformula}),
leads to an 18-fold \mb-integral (11 and 7 complex variables come from $U$ and $F$, respectively). 
Certainly, it can be factorized in a better way. %, if $U$ and $F$ are factorized properly.
 The factorization proposed in~\cite{Tausk:1999vh} looks as follows
\begin{eqnarray}
 U(x) &=& (x_1+x_6)(x_2+x_7) + (x_3+x_4+x_5)(x_1+x_2+x_6+x_7),
 \label{umassless1}\\
 F(x) &=&  - t~ x_1 x_4 x_7 - u ~x_2 x_4 x_6 - s~ x_1 x_2 x_5 
  - s~ x_3 x_6 x_7  - s ~ x_3 x_5 (x_1+x_2+x_6+x_7),\nonumber \\&&
  \label{FTausk2}
\end{eqnarray}
where the longest factorized term has four Feynman parameters $x_1+x_2+x_6+x_7$.
The factorizations in $U$ and $F$ are connected with a proper collection of Feynman parameters in 
front of the variables $k_1,k_2$, as shown schematically in Fig.~\ref{schemeF}.
\begin{figure}[h!]
 \begin{center} 
\scriptsize
\begin{tikzpicture}%[transform canvas={scale=0.9}]
\matrix (m1) [matrix of math nodes] at (0,2)
{
\greentt{k_1^2} x_1 & \violett{k_2^2} x_2 & (\bluett{k_1+k_2})^2 x_3 &
(\bluett{k_1+k_2}+p_2)^2 x_4 & (\bluett{k_1+k_2}+p_1+p_2)^2 x_5 &
(\greentt{k_1}-p_3)^2 x_6 & (\violett{k_2}-p_4)^2 x_7 \\
};

\matrix (m2) [matrix of math nodes] at (0,0)
{
(\greentt{x_1} & \greentt{+} & \greentt{x_6}) & 
(\violett{x_2} & \violett{+} & \violett{x_7}) & + & 
(\bluett{x_3} & \bluett{+} & \bluett{x_4} & \bluett{+} & \bluett{x_5}) & 
(\redtt{x_1} & \redtt{+} & \redtt{x_2} & \redtt{+} & \redtt{x_6} & \redtt{+} & \redtt{x_7}) \\
};

\draw[->,spgreen] (m1-1-1.south) -- (m2-1-1.north);
\draw[->,spred] (m1-1-1.south) -- (m2-1-13.north);

\draw[->,spviolet] (m1-1-2.south) -- (m2-1-4.north);
\draw[->,spred] (m1-1-2.south) -- (m2-1-15.north);

\draw[->,spblue] (m1-1-3.south) -- (m2-1-8.north);
\draw[->,spblue] (m1-1-4.south) -- (m2-1-10.north);
\draw[->,spblue] (m1-1-5.south) -- (m2-1-12.north);

\draw[->,spgreen] (m1-1-6.south) -- (m2-1-3.north);
\draw[->,spred] (m1-1-6.south) -- (m2-1-17.north);

\draw[->,spviolet] (m1-1-7.south) -- (m2-1-6.north);
\draw[->,spred] (m1-1-7.south) -- (m2-1-19.north);

\end{tikzpicture}

\end{center}
 \caption{Efficient factorization scheme for the $U$ polynomial in Eq.~(\ref{umassless1}). }
 \label{schemeF} 
\end{figure}
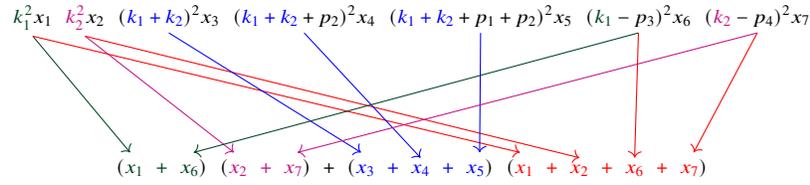

Now we can apply the CW theorem~\ref{CWtheorem0} with Eq.~(\ref{CWtheorem}), keeping in the $\delta$-function
the longest factorized subset of Feynman parameters, so it can 
be dropped out from the Symanzik polynomials, and the integral becomes
\begin{multline}
 B_7^{NP} = 
 \frac{(-1)^{N_{\nu}} \Gamma\left(N_{\nu}-d\right)}{\Gamma(n_1) \ldots \Gamma(n_N)}
 \int \limits_0^{1} dx_1 dx_2 dx_6 dx_7 \int \limits_0^{\infty} dx_3 dx_4 dx_5 \delta(1-(x_1+x_2+x_6+x_7)) \\
 \frac{((x_1 + x_6)(x_2 + x_7) + x_3 + x_4 + x_5)^{N_{\nu} - \frac{3d}{2}}}
 {(- t~x_1 x_4 x_7 - u~x_2 x_4 x_6 - s~x_1 x_2 x_5 - s~x_3 x_6 x_7  - s~x_3 x_5)^{N_{\nu} - d}}.
 \label{eq:B7NPa}
\end{multline}
In the next step we apply the \mb{} relation to $U$ and $F$ not expanding the term $(x_1+x_6)(x_2+x_7)$:
\begin{multline}
B_7^{NP} =  
\frac{(-1)^{N_{\nu}}}{\Gamma(n_1) \ldots \Gamma(n_N)} \int \limits_{-i \infty}^{i \infty} \frac{dz_1}{2 \pi i} \ldots \frac{dz_4}{2 \pi i} 
 \int dx_1 \ldots dx_7 ~
 (-s)^{ - N_{\nu} + d - z_2 - z_3} (-t)^{z_2} (-u)^{z_3} \\
 \shoveright{\times \Gamma (-z_1) \Gamma (-z_2) \Gamma (-z_3) \Gamma (-z_4) 
 \Gamma(N_{\nu} - d + z_1 + z_2 + z_3 + z_4)} \\
 \shoveleft{\times x_1^{ - N_{\nu} + d - z_1 - z_2 - z_3}
 x_2^{z_2 + z_3}
 \textcolor{black}{x_3}^{ - N_{\nu} + d - z_2 - z_3 - z_4}
 \textcolor{black}{x_4}^{z_1 + z_3}
 \textcolor{black}{x_5}^{z_2 + z_4}   
 x_6^{z_1 + z_2}    
 x_7^{z_3 + z_4} }\\
 \times (\textcolor{black}{ x_3 + x_4 + x_5} + (x_1 + x_6)(x_2 + x_7))^{N_{\nu} - \frac{3d}{2}}.\label{eq:B7NPb}
\end{multline}
To perform the integration over \texttt{CW} variables we can use iteratively the following integration formula
\begin{equation}
 \int \limits_0^{\infty} dx ~ x^{z_1}(x + y)^{z_2} = \frac{ y^{1 + z_1 + z_2} \Gamma(1 + z_1) \Gamma(-1 - z_1 - z_2)}{\Gamma(-z_2)}.
 \label{CWintRule}
\end{equation}
In the last step we apply the \texttt{MB} master relation of Eq.~(\ref{MBformula}) to the terms $(x_1+x_6)$ and $(x_2+x_7)$. As we know
from the previous section, the corresponding $z$-variables can be removed using first Barnes lemma.
Finally, we end up with a 4-dimensional \mb{} representation,
\begin{multline}
 B_7^{NP} = \frac{(-1)^{N_{\nu}}}
 {\Gamma(n_1) \ldots \Gamma(n_7)} \int \limits_{-i \infty }^{i \infty} \frac{dz_1}{2 \pi i} \ldots \frac{dz_4}{2 \pi i} 
 (-s)^{4 - 2 \epsilon - N_{\nu} -z_{23}} (-t)^{z_3} (-u)^{z_2} \\
 \frac{\Gamma(- z_1) \Gamma(- z_2) \Gamma(- z_3) \Gamma(- z_4) \Gamma(2 - \epsilon - n_{45}) 
 \Gamma(2 - \epsilon - n_{67})}
 {\Gamma(4 - 2 \epsilon -  n_{4567}) \Gamma(n_{45} +  z_{1234})\Gamma(n_{67} +  z_{1234}) \Gamma(6 -3 \epsilon -  N_{\nu})} \\
 \Gamma(n_2 + z_{23}) \Gamma(n_4 + z_{24}) 
 \Gamma(n_5 + z_{13} ) \Gamma(n_6 + z_{34}) \Gamma(n_7 + z_{12})
 \Gamma^3(- 2 + \epsilon + n_{4567} + z_{1234})\\ 
 \Gamma(4 - 2 \epsilon - n_{124567} - z_{123}) 
 \Gamma(4 -2 \epsilon -  n_{234567} -  z_{234}) 
 \Gamma(- 4 + 2 \epsilon +  N_{\nu} +  z_{1234}),
\end{multline}
with notations $z_{i\ldots j \ldots k} = z_i + \ldots + z_j + \ldots + z_k$ 
and
$n_{i\ldots j \ldots k} = n_i + \ldots + n_j + \ldots + n_k$.

\subsection{General 2-loop Skeleton Diagrams \label{sec:2loopMBgeneral}}

As discussed in~\cite{Cvitanovic:1974uf}, 
{\it to get any two-loop diagram it is sufficient to attach external lines/legs
to the lines and/or vertices of the skeleton diagram given in} Fig.~\ref{fig:2lsceleton}. 
Lines $\alpha$, $\beta$ and $\gamma$ in the skeleton diagram  are also called the chains.

\begin{tips}{Non-planar Vertex from the Skeleton Diagram}
To get a non-planar two-loop 
vertex diagram in Fig.~\ref{fig:2lsceleton} (b) one has to attach one external leg to each line on the skeleton diagram (a) while
for the two-loop planar box we would have to attach two external legs to the  line $\alpha$ and two external legs to line $\gamma$. For the non-planar double-box we would have to attach two legs to line $\alpha$ and one leg to each of lines $\beta$ and $\gamma$.
\end{tips}

\begin{figure}
\centering
\begin{tikzpicture}[scale=1]
\begin{feynman}
 
\vertex at (0,-2) (i1);
\vertex at (0,2,1) (i2);
\vertex at (1,1) (i3);
\vertex at (1,0) (i4);
\vertex at (1,-1) (i5);
\vertex at (2,.5) (i6);
\vertex at (3,0) (f1);
\vertex at (4,0) (f2);

\draw (1,0) circle [radius=1cm];
\draw (1,1) -- (1,-1); 
 
\node[left] at (0.,0) {{{\bf $\alpha$}}};
\node[left] at (1,0) {{{\bf $\beta$}}};
\node [left] at (2,0.) {{{\bf $\gamma$}}};

\draw [->,very thick] (2.5,0) -- (3.5,0);

\draw    (4.,1.) -- (6,0) ; 
\draw    (4.,-1.) -- (6,0) ;
\draw   (6,0) -- (6.5,0) ;
\draw    (4.5,.75) -- (5.,-.5) ;
\draw    (4.5,-.75) -- (5.,.5) ;
%\draw    (1,0) -- (1.8,-.5) ;
 
\draw (9,0) circle [radius=1cm];
\draw (9,1) -- (9,-1);
\draw (7.5,0) -- (8,0);
\draw (9,0) -- (9.5,0);
\draw (10,0) -- (10.5,0);
 
 \node[below] at (1.,-1.2) {{{ (a)}}}; 
  \node[below] at (7.,-1.2) {{{ (b)}}}; 
  
\end{feynman}
\end{tikzpicture}
    \caption{ \label{fig:2lsceleton}
             The  two-loop skeleton diagram (a) and creation of the two-loop non-planar vertex diagram (b). In (b) both diagrams are topologically equivalent.}
\end{figure}
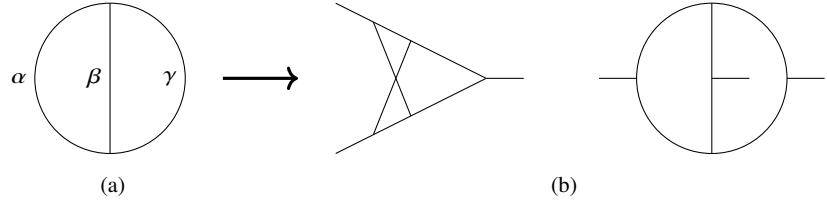

Let us consider the two-loop skeleton diagram in Fig.~\ref{fig:2lsceleton}.
A possible choice of  the loop momenta $k_1$ and $k_2$ in the chains 
contain the following combinations of loop momenta: $k_1$, $k_2$ and $k_1 + k_2$ or $k_1 - k_2$ (depending on the orientation of momentum flow in the chains), plus corresponding 
external momenta, depending on the topology.
Let now introduce the following transformation of the Feynman parameters belonging to each chain:
\begin{equation}
 \{ \vec x \}_i: \,\,\, x_k \rightarrow v_i \xi_{ik} \,\,\, \times \delta \left( 1 - \sum^{\eta_i}_{k=1} \xi_{ik} \right),
 \label{TransformRule}
\end{equation}
where $i$ denotes the chain index and $k \in [1, \eta_i]$, with $\eta_i$ -  the number of propagators in the chain.
The $\delta$-function keeps the number of variables unchanged.

Now the integration over Feynman parameters looks as follow:
\begin{multline}
 \int^1_0 \,\, \prod^{N}_{i=1} dx_i \,\, \delta \left( 1 - \sum^{N}_{i=1} x_i \right) \rightarrow \\
 \int^1_0 \,\, \prod^{3}_{i=1} dv_i v_i^{\eta_i - 1}  \,\, \delta \left( 1 - \sum^{3}_{i=1} v_i \right)
 \int^1_0 \,\, \prod^{3}_{i=1} \prod^{\eta_i}_{k=1} d\xi_{ik}  \,\, \prod^{3}_{i=1} \delta \left( 1 - \sum^{\eta_i}_{k=1} \xi_{ik} \right).
 \label{eq:vixi}
\end{multline}

This transformation dramatically simplifies graph polynomials.

\begin{tips}{Change of Variables for \texttt{CW} Theorem  \label{ex_masslessvertNP}}

Let us consider a two-loop non-planar completely massless vertex, whose $U$ and $F$ are represented in Fig.~\ref{fig:2looptrees}, see \wwwaux{V6l0m}  
for detailed derivations. We can see that applying the \MB{} formula to the constructed $U$ and $F$ terms in Fig.~\ref{fig:2looptrees} would result in multi-dimensional \mb{} representations, 11- and 5-dimensional, respectively. Let us see how the transformation of Eq.~(\ref{TransformRule}) will change the situation.
The starting Feynman integral is
\begin{equation}
  \int  \frac{d^d k_1 d^d k_2}{
  [k_1^2]^{n_1}
  [(p_1 - k_1)^2]^{n_2}
  [(p_1 - k_1 - k_2)^2]^{n_3}
  [(p_2 + k_1 + k_2)^2]^{n_4}
  [(p_2 + k_2)^2]^{n_5}
  [k_2^2]^{n_6}}.
\label{npvertex}
\end{equation}
The momentum dependent function $m^2(\vec x)$, see Eq.~(\ref{m2vecx}), and the parameter
transformations for the integral are the following
\begin{equation}
\begin{array}{cclcc}
 m^2(\vec x)        & = & x_1 (p_1 - k_1 - k_2)^2     & ~~~~~~~~~ & x_1 \rightarrow v_1 \xi_{11}  \\
                        & + & x_2 (p_2 + k_1 + k_2)^2 &           & x_2 \rightarrow v_1 \xi_{12}  \\
                        & + & x_3 (k_1)^2             &           & x_3 \rightarrow v_2 \xi_{21}  \\
                        & + & x_4 (p_1 - k_1)^2       &           & x_4 \rightarrow v_2 \xi_{22}  \\
                        & + & x_5 (p_2 + k_2)^2       &           & x_5 \rightarrow v_3 \xi_{31}  \\
                        & + & x_6 (k_2)^2             &           & x_6 \rightarrow v_3 \xi_{32}.  
\end{array} \label{eq:transfexnplvertex}
\end{equation}
\\
Utilizing the $\delta$-functions $\prod^{3}_{i=1} \delta \left( 1 - \sum^{\eta_i}_{k=1} \xi_{ik} \right)$ in Eq.~(\ref{eq:vixi}),
the first Symanzik polynomial $U$ for any two-loop diagram becomes
\begin{equation}
 U(\vec x) \rightarrow U(\vec v) = v_1 v_2 + v_2 v_3 + v_1 v_3. 
\label{UCW2loop} 
\end{equation}
Due to dependence on kinematic variables, the second Symanzik polynomial $F$ has more complicated structure and depends on a definite topology. 
%, as we will discuss now.
In our case, after the transformation in Eq.~(\ref{eq:transfexnplvertex}) we get
\begin{equation}
 F = - s \xi_{11} \xi_{22} \xi_{31} v_1 v_2 v_3 - s \xi_{12} \xi_{21} \xi_{32} v_1 v_2 v_3
     - s \xi_{31} \xi_{32} v_1 v_3^2 - s \xi_{31} \xi_{32} v_2 v_3^2.
\label{FNp2lV1}     
\end{equation}

{In summary, using Eq.~(\ref{MBformula}) after rescaling in Eq.~(\ref{UCW2loop}), 
for {\it any} two-loop diagram the $U$ polynomial can generate maximally a two-dimensional \MB{} integral.} Let us see now how applying the \texttt{CW} theorem~\ref{CWtheorem0} we can simplify further $U$ and $F$ structures, and so decrease the dimensionality of \mb{} representations.  
\end{tips}

\begin{tips}{Applying the \texttt{CW} Theorem \label{tip:banana}}

Looking on the homogeneity of variables $v$ and $\xi$ in Eq.~(\ref{FNp2lV1}) we can see that the CW Theorem~\ref{CWtheorem0} can be applied only to the $v$ variables. Thus, according to~\cite{Cvitanovic:1974uf},  we can generalize this fact  of the homogeneity of  the $v$ variables: {\it For any 2-loop diagram the graph polynomials
$U$ and $F$ can be represented formally in the form corresponding 
to the so-called sunrise diagram} in Fig.~\ref{fig:sunrise}, where $p^2$ and $m^2_i$
depend on $\xi_{ik}$ and the kinematic invariants (S) of the initial diagram:
\begin{multline}
 G(X) \sim \int \prod d \xi_{ik} \delta \left( 1 - \sum_k \xi_{ik} \right) \\
 \int 
 \frac{d^d k_1 d^d k_2 }{[k_1^2 - m_1^2(S,\xi_{ik})]^{n_1} [k_2^2 - m_2^2(S,\xi_{ik})]^{n_2} [(p + k_1 + k_2)^2 - m_3^2(S,\xi_{ik})]^{n_3}}. 
\label{CWsunrise} 
\end{multline}

\begin{figure}
\centering
\begin{tikzpicture}[scale=1.2]
\begin{feynman}
 
\vertex at (0,-2) (i1);
\vertex at (0,2,1) (i2);
\vertex at (1,1) (i3);
\vertex at (1,0) (i4);
\vertex at (1,-1) (i5);
\vertex at (2,.5) (i6);
\vertex at (3,0) (f1);
\vertex at (4,0) (f2);

\draw (1,0) circle [radius=1cm];
\draw (-.5,0) -- (2.5,0); 
 
\node[above] at (-.5,0) {{{\bf $p$}}};
\node[above] at (1,0) {{{\bf $v_2,m_2$}}};
\node [below] at (1,1.) {{{\bf $v_1,m_1$}}};
\node [above] at (1,-1.) {{{\bf $v_3,m_3$}}}; 
 
\end{feynman}
\end{tikzpicture}
    \caption{ \label{fig:sunrise}
        The general sunrise diagram.  
      }
\end{figure}
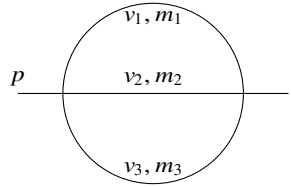
 
In case of the considered massless non-planar vertex in Eq.~(\ref{npvertex}) we have specifically
\begin{eqnarray}
 p^2  &=& - s ( \xi_{12} \xi_{22} \xi_{31} + \xi_{12} \xi_{21} \xi_{32} - \xi_{31} \xi_{32}  ), \\
 m_1^2 &=& - s \xi_{31} \xi_{32}, \,\,\,\, m_2^2 = m_3^2 = 0.
\end{eqnarray}

In order to use the CW Theorem~\ref{CWtheorem0} there are only three possibilities to choose a subset in $\delta \left( 1 - \sum^{3}_{i=1} v_i \right)$ during the transformation of Eq.~(\ref{TransformRule}).
This choice depends on the structure of $F$, from 
$v_1 + v_2$, $v_2 + v_3$ or $v_1 + v_3$ one should take the combination which makes F as simple as possible.
In case of Eq.~(\ref{FNp2lV1}) it is $v_1 + v_2$, so we have
\begin{eqnarray}
 U &=& v_3 + v_1 v_2, \label{eq:afterCW2loop} \\ F &=& - s \xi_{11} \xi_{22} \xi_{31} v_1 v_2 v_3 - s \xi_{12} \xi_{21} \xi_{32} v_1 v_2 v_3     -  s \xi_{31} \xi_{32} v_1 v_3^2. \label{eq:bananaF}
\end{eqnarray}
Applying the \texttt{MB}-master formula to $F$ and integrating over $v_3$ we get a two-fold \mb-integral 
 \begin{multline}
  I^{\rm MB}_{\rm V6l0m} = (-1)^{N_{\nu}} (-s)^{4 - 2 \epsilon - N_{\nu}} \int\limits_{-i\infty}^{i\infty} \int\limits_{-i\infty}^{i\infty}  \frac{dz_1}{2 \pi i} \frac{dz_2}{2 \pi i}   \\
  \times \frac{\Gamma(2 - \epsilon - n_{12}) \Gamma(2 - \epsilon - n_{56}) 
        \Gamma(4 - 2 \epsilon - n_{12356} - z_1) \Gamma(-z_1) \Gamma(n_1 + z_1)}
   {\Gamma(n_1) \Gamma(n_2) \Gamma(n_3) \Gamma(n_4) \Gamma(n_5) \Gamma(4 - 2 \epsilon - n_{1256}) \Gamma(n_6)
     \Gamma(n_5 + z_1) } \\
  \times  \frac{\Gamma(-z_2) \Gamma(n_2 + z_2) \Gamma(n_6 + z_2) \Gamma(-4 + 2 \epsilon + N_{\nu} + z_{12})}
        {\Gamma(8 - 4 \epsilon - 2 n_{1256} - n_{34} - z_{12}) \Gamma(n_{12} + z_{12}) \Gamma(n_{56} + z_{12})
         \Gamma(4 - 2 \epsilon - n_{12456} - z_2)}. \label{eq:V6l0m2dim}
 \end{multline}
 
\end{tips}

\begin{svgraybox}
In case of massive diagrams, namely when some of the propagators have masses $m^2_i$, the best
option is to not expand in the first place the term $U\sum x_i m^2_i$, instead to apply Eq.~(\ref{MBformula}) to $F$ {\it considering $U$
as an independent variable}. 
\end{svgraybox}

Adding a mass parameter to one of the propagators in the non-planar vertex results only in one additional \mb{} integration.
However, this is not a general way, for example in case of the non-planar QED double box in Fig.~\ref{TauskNPdbox} with $p_i^2 = m^2$ and masses $m^2$ in propagators with powers 
$n_1, n_3, n_4$ and $n_7$ in Eq.~(\ref{B7NP}), the above procedure leads to the appearance of a `pseudo' Minkowskian factor $(-m^2)^{z_i}$ in the \mb-representation, 
similarly to the case discussed already in section~\ref{sec:LAapproach}, see Eq.~(\ref{eq:MBnplanar_bad}), which 
restricts this approach.

On the other hand, using \la{}, one gets an 8-dimensional representation~\cite{Heinrich:2004iq} 
which points to the applicability of \la{} also to some non-planar diagrams. 

\subsection{Generalization to 3-loop Integrals}

\begin{figure}
\centering
\begin{tikzpicture}[scale=1]
\begin{feynman}

\draw (1,0) circle [radius=1cm];
\draw (.5,.85) -- (.5,-.85); 
\draw (1.5,.85) -- (1.5,-.85); 

\node[below] at (1.,1) {{{\bf $\alpha$}}};
\node[left] at (0,0) {{{\bf $\beta$}}};
\node[left] at (0.5,0) {{{\bf $\eta$}}};
\node[left] at (1.5,0) {{{\bf $\tau$}}};
\node[left] at (2,0) {{{\bf $\delta$}}};
\node [above] at (1,-1.) {{{\bf $\gamma$}}};
  
\draw (4,0) circle [radius=1cm];
\draw (4,0) -- (4.8,-.63);
\draw (4,0) -- (3.2,-.63);
%\draw (4,0) -- (5.,1);
\draw (4,0) -- (4,1);
  
\node[left] at (4.,.5) {{{\bf $\alpha$}}};
\node[left] at (3.4,0.3) {{{\bf $\beta$}}};
\node[left] at (3.7,-.2) {{{\bf $\eta$}}};
\node[left] at (4.7,-.2) {{{\bf $\tau$}}};
\node[left] at (4.9,0.3) {{{\bf $\delta$}}};
\node [above] at (4,-1.) {{{\bf $\gamma$}}};

\end{feynman}
\end{tikzpicture}
   \caption{Basic skeleton generating diagrams with propagators $\alpha , \ldots,  \tau$ for the 3-loop topologies discussed in~\cite{Cvitanovic:1974uf}, so-called {\texttt{Ladder}} (on left) and {\texttt{Mercedes}} (on right) topologies. \label{fig:skel}}
\end{figure}
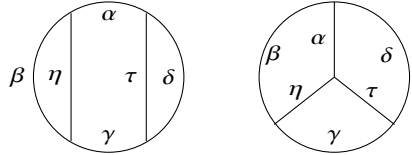 

At the three-loop level there are two generating topologies indicated in  Fig.~\ref{fig:skel} with different properties~\cite{Cvitanovic:1974uf}. The diagram on the left side of Fig.~\ref{fig:skel}
and all its derivatives can be cut to two disconnected one-loop pieces. Propagators corresponding to the 3-loop diagrams can have two combinations of two loop momentum (if three different loop momenta in one propagator are present, one loop momenta can be always eliminated). %Because chains (a) and (b) carry the same internal momenta they can be considered as one chain,
In the 3-loop case the number of $v$-variables in the general transformation rule in Eq.~(\ref{TransformRule}) is five. 
Typical examples of these types of diagrams for the box topologies are depicted in Fig.~\ref{fig:skel1ex}. 
\begin{figure}[!h]
\begin{center}
\includegraphics[scale=0.5]{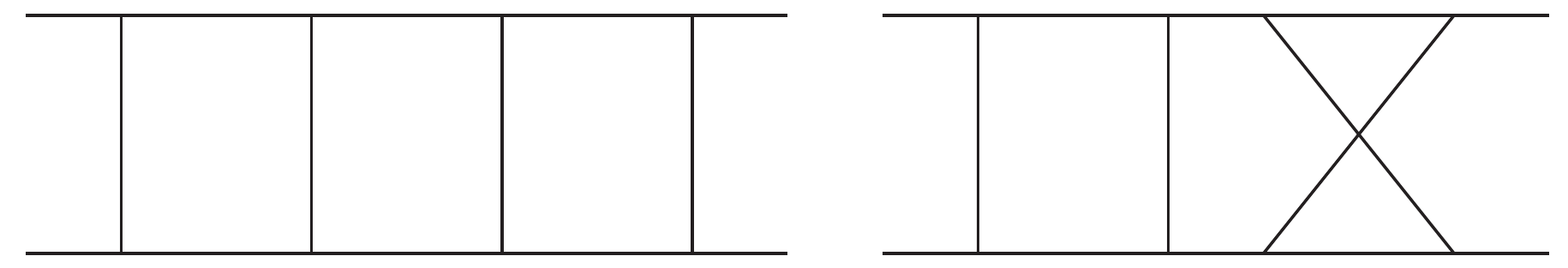}
\end{center}
\caption{The 3-loop box topologies obtained from the {\texttt{Ladder}} skeleton diagram in Fig.~\ref{fig:skel}. Diagrams originated from this skeleton  have the same $U$ polynomial structure. \label{fig:skel1ex} }
\end{figure} 

For the left skeleton diagram in Fig.~\ref{fig:skel}, after a transformation of variables,
all diagrams up to replacing of indices have a $U$ polynomial in the following form (Problem~\ref{prob:3loopskeletonA}).
\begin{equation}
 U = v_1 v_2 v_3+v_1 v_2 v_4+v_2 v_3 v_4+v_1 v_2 v_5+v_1 v_3 v_5+v_2 v_3 v_5+v_1 v_4 v_5+v_3 v_4 v_5. 
 \label{skel1exU3loop}
\end{equation}
Many diagrams of this type have planar sub-loops and can be treated in a different way, see the next section.

The diagram on the right side in Fig.~\ref{fig:skel} can not be divided into two one-loop pieces. Its propagators have all three different combinations of two loop momenta and number of $v$-variables is six. This skeleton generates the most complicated non-planar topologies, see for example Fig.~\ref{fig:skel2ex}
and Problem~\ref{prob:vertexSkeleton}. 

\begin{figure}[!h]
\begin{center}
\includegraphics[scale=0.4]{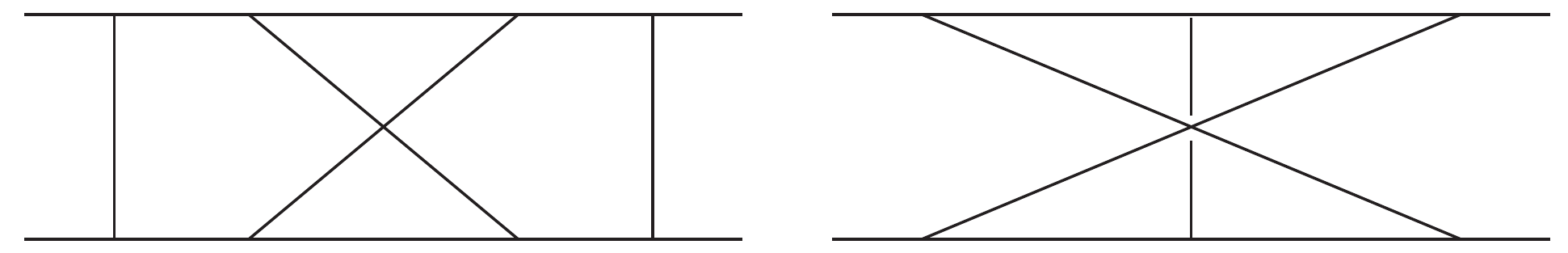}
\end{center}
\caption{The 3-loop box topologies obtained from the right skeleton diagram in Fig.~\ref{fig:skel} and have another $U$ polynomial structure. \label{fig:skel2ex}}
\end{figure}

All child diagrams have now again the same $U$ function in the form
\begin{eqnarray}
U &=& v_1 v_2 v_3+v_1 v_2 v_4+v_1 v_3 v_4+v_1 v_2 v_5+v_1 v_3 v_5+v_2 v_3 v_5+v_2 v_4 v_5+v_3 v_4 v_5 \nonumber \\
    &+& v_1 v_2 v_6 +v_2 v_3 v_6+v_1 v_4 v_6+v_2 v_4 v_6+v_3 v_4 v_6+v_1 v_5 v_6+v_3 v_5 v_6+v_4 v_5 v_6.\nonumber \\ 
\label{skel2exU}
\end{eqnarray}

As in the two-loop case, further simplification can be done with the help of the CW Theorem~\ref{CWtheorem0}.
The shortest form of $U$ can be obtained if we choose three CW variables. An analysis of the form of $U$ shows that that there are four different
possibilities to choose CW variables. For Eq.~(\ref{skel2exU}) one of them is
\begin{equation}
 \int^{\infty}_0 d v_2 d v_3 d v_4 \int_0^1 d v_1 d v_5 d v_6 \delta (1 - v_1 - v_5 - v_6 ),
\end{equation}
which gives
\begin{equation}
 U_{CW} = v_2 v_3 + v_2 v_4 + v_3 v_4 + v_1 v_2 v_5 + v_1 v_3 v_5 + 
           v_1 v_2 v_6 + v_1 v_4 v_6 + v_1 v_5 v_6 + v_3 v_5 v_6 + v_4 v_5 v_6
           \label{eq:ucw_3l}
\end{equation} 
and reduces the number of terms in $U$ from 16 to 10.

One of possible ways to make the integration over suitable CW variables is to use the following
factorization trick:
\begin{equation}
 U_{CW} = v_2 (v_3 + v_4 + v_1 v_5) + v_3 (v_4 + v_1 v_5) + 
           v_1 v_6 (v_2 + v_5) + v_4 v_6 (v_1 + v_5) + v_3 v_5 v_6.
\label{skel2exUfactor}           
\end{equation}
There are six different possibilities to get four terms in $U$ and 
altogether we have 24 variants to choose \texttt{CW} variables and factorize $U$.
A final choice is based, first, on a minimization of the amount of terms in $F$ and, second,
on the presence in $F$ of the same factorization patterns as in Eq.~(\ref{skel2exUfactor}) for $U$.

Now, to construct the \mb{} representation as in the two-loop case we do not have to necessarily expand mass terms in $F$
(see the previous section). Similarly, it is not necessary to expand any factorized combination
which corresponds to the pattern in Eq.~(\ref{skel2exUfactor}). After applying  Eq.~(\ref{MBformula}) to $U$ and $F$
we can integrate recursively the polynomials over $v_3$ and then over $v_4$ using Eq.~(\ref{CWintRule}). The integration over $v_2$
can be also done using Eq.~(\ref{CWintRule}). Finally, we apply again Eq.~(\ref{MBformula}) to the term $v_1+v_5$. In the end this \mb{} integration
can be removed using \texttt{1BL}.
As one can see within this algorithm the $U$ polynomial
gives already 4 \mb{} integrations.
$F$ is usually more complex and the final \mb{} representation can be very high dimensional.
This is a natural limitation of the \mb{} method for massive multileg \texttt{FI}.

{\it As a rule of thumb, the \ga{} usually gives optimal representations if from the beginning $\mbox{Length[U]} \leq \mbox{Length[F]}$.}

\newpage

\begin{tips}{\ga{} Approach for a Planar Diagram: An Example}

This example shows an application of \ga{} and \texttt{1BL}  to
a planar 3-loop self-energy massless diagram, resulting in the same \mb{} integral dimension as by \la{}.
The integral is 
\begin{equation}
\label{eq:SE6l0m}
    I_{\rm SE6l0m} = \int \frac{ dk_1 dk_2 dk_3}{[k_1^2]^{n_1} [k_2^2]^{n_2}  [k_3^2]^{n_3} [(k_1+k_3)^2]^{n_4} [(k_2+k_3)^2]^{n_5}   [(k_2-k_1+q)^2]^{n_6}.} 
%PR[k1 + k3, 0, n4] PR[k2 + k3, 0, n5] PR[k2 - k1 + q, 0, n6]}}
\end{equation}
 
The diagram is given in  Fig.~(\ref{fig:SE6l0m}) where the corresponding \mb{} representation can be found in \wwwaux{SE6l0m}.

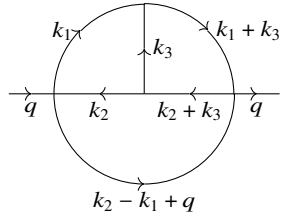
\begin{figure}
\centering
\begin{tikzpicture}[scale=1.2]
\begin{feynman}

%\draw (1,0) circle [radius=1cm];
\draw [->] (0,0) arc (180:135:1cm);
\draw (.3,.7) arc (135:90:1cm);
\draw [->] (1,1) arc (90:45:1cm);
\draw (1.7,.7) arc (45:0:1cm);
\draw [->] (0,0) arc (180:270:1cm);
\draw (1,-1) arc (270:360:1cm);

\draw [->] (-.5,0) -- (-.25,0);
\draw  (-.25,0) -- (0,0);
\draw [->] (2,0) -- (2.25,0);
\draw  (2.25,0) -- (2.5,0);

\draw [->] (1,0) -- (1,0.5); 
\draw (1,0.5) -- (1,1); 
\draw [->] (2,0) -- (1.5,0); 
\draw (1.5,0.) -- (1,0); 
\draw [->] (1,0) -- (0.5,0); 
\draw (0.5,0) -- (0,0); 

\node[below] at (1.5,0) {{{\bf $k_2+k_3$}}};
\node[below] at (0.5,0) {{{\bf $k_2$}}};
\node [below] at (-.25,0.) {{{\bf $q$}}};
\node [below] at (2.25,0.) {{{\bf $q$}}};
\node [below] at (1,-1.) {{{\bf $k_2-k_1+q$}}}; 
\node [above,left] at (.3,0.7) {{{\bf $k_1$}}};
\node [above,right] at (1.7,.7) {{{\bf $k_1+k_3$}}}; 
\node [right] at (1,.5) {{{\bf $k_3$}}}; 
\end{feynman}
\end{tikzpicture}
  \caption{The 3-loop planar self-energy diagram  $I_{\rm SE6l0m}$ defined in Eq.~(\ref{eq:SE6l0m}).     }
\label{fig:SE6l0m}
\end{figure}

The \la{} method gives here four-dimensional \mb{} representation, see section E4 in~\cite{Blondel:2018mad} and  Problem~\ref{prob:3loopmerc}. 
This diagram corresponds to the right skeleton diagram in Fig.~\ref{fig:skel} with two external legs attached to vertices on the circle, so variable transformation of the type in Eq.~(\ref{TransformRule}) is not needed. 
The $U$ polynomial has the same structure as in Eq.~(\ref{skel2exUfactor}), corresponding $F$ polynomial can be simplified to the following
form:
\begin{equation}
 F = - s v_3 v_4 v_6 (v_1 + v_5) - s v_2 v_3 (v_4 + v_1 v_5) - s v_1 v_3 v_6 (v_2 + v_5).
\end{equation}
This means three additional \mb{} integrations plus 4 integrations which come from the $U$ polynomial.
The resulting seven-dimensional output from \texttt{AMBREv3.2} is given in \wwwaux{SE6l0m} (see also the next page for the \mb{} representation in the framed Eq.~(\ref{eq:framedz7})). The integration over \texttt{z7} corresponds to the term $v_1+v_5$ and can be immediately removed by applying \texttt{1BL} and making the shift \texttt{z6 -> z6 - z3 - z2}, one can apply \texttt{1BL} to the variable \texttt{z2} and \texttt{2BL} to the variable \texttt{z3}, 
obtaining a 4-dimensional representation as in case of the \la{}. The  shift in momenta was found using the algorithm described in section~\ref{sec:BLeff}.

To complete this section and the discussion of the \ga{} method for 3-loop cases we present a strategy of choosing CW variables and a factorization
for the first type of 3-loop diagrams in Fig.~(\ref{fig:skel1ex}). Because here we have only five $v$-variables, the optimal choice is to consider two CW variables.
For Eq.~(\ref{skel1exU}) this leads to
\begin{equation}
 \int^{\infty}_0 d v_1 d v_2 \int_0^1 d v_3 d v_4 d v_5 \delta (1 - v_3 - v_4 - v_5 )
\end{equation}
and
\begin{equation}
 U_{CW} = v_1 v_2 + v_2 v_3 v_4 + v_1 v_3 v_5 + v_2 v_3 v_5 + v_1 v_4 v_5 + 
           v_3 v_4 v_5. 
\end{equation}

\begin{minted}[frame=single,breaklines,fontsize=\small]{mathematica}
{((-s)^(-3 eps))
   Gamma[-z1] Gamma[-z2] Gamma[3 eps + z1 + z2] 
   Gamma[-z3] Gamma[-z4] Gamma[1 + z1 + z4] 
   Gamma[-z5] Gamma[1 - 3 eps + z3 + z5] 
   Gamma[1 - eps + z1 + z3 + z4 + z5] 
   Gamma[-2 + 3 eps - z1 - z2 - z3 - z4 - 2 z5 - z6] 
   Gamma[-1 + eps - z2 - z5 - z6] 
   Gamma[-1 + 3 eps - z3 - z5 - z6] 
   Gamma[1 + z2 + z6] Gamma[2 - 4 eps + z3 + z4 + z5 + z6] 
   Gamma[2 - 2 eps + z1 + z2 + z3 + z4 + z5 + z6 - z7] 
   Gamma[-z7] Gamma[-z1 - z4 + z7] Gamma[1 - 2 eps + z5 + z7])/
   (Gamma[2 - 4 eps] Gamma[-z1 - z4] 
   Gamma[2 - 3 eps + z1 + z3 + z4 + z5] 
   Gamma[-1 + 3 eps - z2 - z3 - 2 z5 - z6] 
   Gamma[2 - eps + z1 + z2 + z3 + z4 + z5 + z6])}
\end{minted}
\begin{equation}
    \label{eq:framedz7}
\end{equation}

This operation reduces the number of terms in U from 8 to 6. As in the previous case, there are also 4 possibilities to choose CW variables.
A factorization scheme to integrate over CW variables can be the following:
\begin{equation}
 U_{CW} = v_2(v_1 + v_3 v_4) + v_3 v_5 (v_2 + v_4) + v_1 v_5 (v_3 + v_4).
 \label{eq:Ucw_2dim}
\end{equation}
Due to the smaller number of variables and terms, there are now 
only 4 possibilities to factorize $U$.
All other steps go the same way, as for Eq.~(\ref{skel2exUfactor}).
After applying Eq.~(\ref{MBformula}) to $U$ and $F$
we integrate over $v_1$ and $v_2$ using Eq.~(\ref{CWintRule}). The combination $v_3+v_4$ again can be removed using \texttt{1BL}.
In this case, $U$ polynomial results in two additional \mb{} integrations.

\end{tips}

\section{\label{sec:hybrid} Hybrid (\ha{}) Approach }
 
At the three-loop, we can construct \mb{} representations by combining the \la{} and \ga{} approaches. This depends on whether 
or not a given topology includes a planar sub-topology which can be disconnected or not.  In the first case we start with the \la{} and integrate over one of the loop momenta. After that the obtained effective two-loop diagram is treated by the \ga{}
In Fig.~\ref{fig:hyb1to2}, propagators connected with a planar sub-loop in the form 
of the one-loop box are  transformed into an \mb{} form in a first step, defining a new effective propagator. 
In this way an effective two-loop diagram is created. 
%\vskip 20mm
\begin{figure}[!h]
\begin{center}
\includegraphics[scale=0.5]{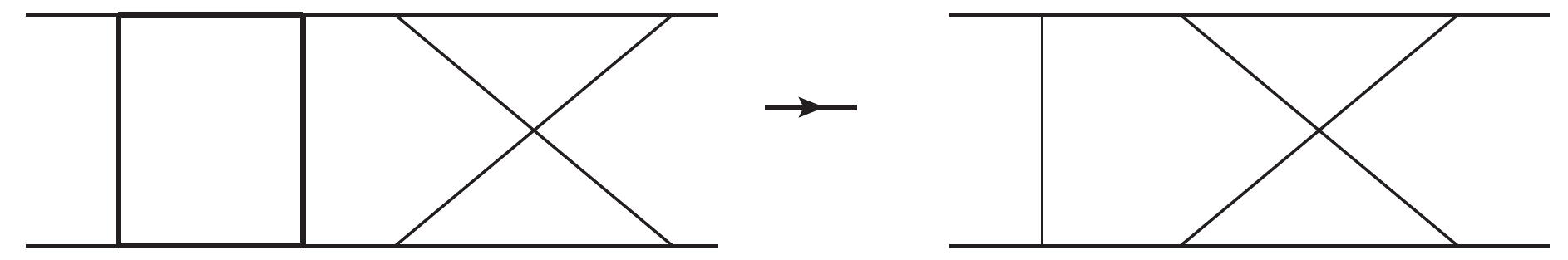}
\end{center}
\caption{Hybrid approach. \la{} is applied first to the 1-loop subloop, \h{} ($1 \rightarrow 2$).}
\label{fig:hyb1to2}
\end{figure}
 
Another example of this kind is a non-planar vertex shown in Fig.~\ref{fig:3loopNP1t02_1}.  A derivation of the corresponding representation is given in the first section of \wwwaux{AMBREnew}.
Typically, the combination of three different loop momenta in one propagator usually does not take place; they appear in Fig.~\ref{fig:3loopNP1t02_1} due to
the procedure of generation of diagrams using some automatic codes like \texttt{FeynArts} or \texttt{qgraph}. A simpler form can be obtained by a simple shift, $k_3 \rightarrow k_3 + k_2$.
In this particular case that plays no role because in the first iteration the momentum $k_1$ circulates only in the one-loop sub-diagram shown in bold, and
for  \ga{}, in the second iteration the momentum flow is not important. 
Nonetheless, 
%At this point one should stress that 
the momentum flow is important in general  
%plays a significant role for the hybrid approach and  \la{} 
-- each loop momentum should go through a minimal possible topological construction. For example, due to some specific construction procedure, $k_1$ could appear in more than three propagators. In this case a preliminary shift of loop momenta is needed for a successful/efficient application of \ha{}.

%\begin{figure}[!t]
%\begin{center}
%\includegraphics[scale=.4]{figs/E19.pdf}
%\end{center}
%\caption{Hybrid method, an example.
%The bold lines show the choice of the one-loop sub-diagram in the %first step of \h{} ($1 \rightarrow 2$).
%\label{fig:3loopNP1t02_1}}
%\end{figure}  

\begin{figure}
\centering
\begin{tikzpicture}[scale=1.2]
\begin{feynman}

\draw (-.5,0.5) -- (0,0);
\draw  (0,0) -- (2,0);
\draw (2,0) -- (2.5,0.5);

\draw  (0,0) -- (1,-2);
\draw  (1,-2) -- (2.,0);
\draw  (1,-2) -- (1,-2.5);
\draw [very thick] (.41,-.8) -- (1.63,-.8);
\draw [very thick] (.41,-.8) -- (1.,-2);
\draw [very thick] (1.63,-.8) -- (1.,-2);

\draw (1,0.) -- (1,-2);  

\node [below] at (-.5,.5) {{{\bf $q_1$}}};
\node [below] at (2.8,.5) {{{\bf $q_1+q_2$}}};
\node [right] at (1,-2.25) {{{\bf $q_2$}}};
\node [above,small] at (.5,0.)  {$k_2+q_1$};
\node [above] at (1.5,0.) {{{\bf $k_3+q_2$}}};
\node[below] at (0,-.3) {{{\bf $k_2$}}};
\node[below] at (0.5,-1.2) {{{\bf $k_1$}}};
\node [right] at (1.75,-.4) {{{\bf $k_3-q_1$}}};
\node at (1.02,-.2) {{{\bf $k_3-k_2-q_1+q_2$}}};
\node at (1.,-.9) {{{\bf $k_2-k_1$}}};
\node[below] at (2.3,-1.2) {{{\bf $k_1-k_2+k_3-q_1$}}};

\end{feynman}
\end{tikzpicture}
\caption{Hybrid method, an example.
The bold lines show the choice of the one-loop sub-diagram in the first step of \h{} ($1 \rightarrow 2$).
\label{fig:3loopNP1t02_1}}
\end{figure}
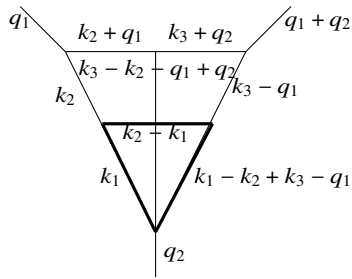

%\end{tips}

In Fig.~\ref{fig:hyb2to1}, a non-planar 
disconnected sub-graph can be identified and  we go in the opposite direction, first, we apply the \ga{} to a two-loop sub-diagram, basically a non-planar one,
and the remaining one-loop integral is processed in a simple way by the \la.

\begin{figure}[!h]
\begin{center}
\includegraphics[scale=0.5]{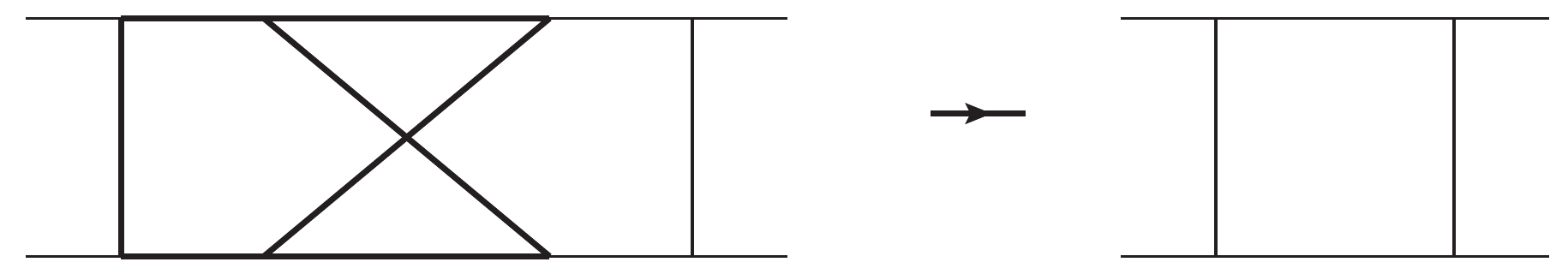}
\end{center}
\caption{Hybrid approach. The \ga{} is applied first to the 2-loop sub-diagram,  \h{} ($2 \rightarrow 1$).}
\label{fig:hyb2to1}
\end{figure}

Yet another example for \ha{} ($2 \to 1$)  is shown in Fig.~\ref{fig:3loopNP2to1_1}. In the first iteration the two-loop non-planar vertex effectively has two off-shell external legs, so after transformation of Eq.~(\ref{TransformRule}), in contrast to the non-planar vertex from example in section~\ref{sec:2loopMBgeneral}, one gets a 6-dimensional representation. The remaining three propagators form one-loop vertex and give no additional
\mb{} integration.

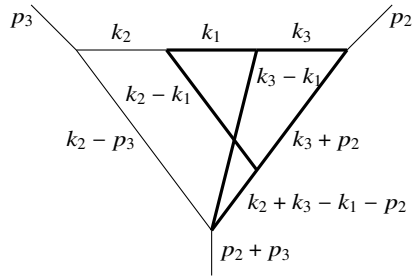
\begin{figure}[h!]
\centering
\begin{tikzpicture}[scale=1.2]
\begin{feynman}

\draw (-1,0.5) -- (-0.5,0); %ex1 
\draw  (-0.5,0) -- (0.5,0);
\draw [very thick] (0.5,0) -- (1.5,0);
\draw [very thick] (1.5,0) -- (2.5,0);
\draw (2.5,0) -- (3,0.5); %ex2
\draw  (1,-2) -- (1,-2.5); %ex3
\draw (-0.5,0) -- (1.,-2);
\draw [very thick] (2.5,0) -- (1.,-2);
\draw [very thick] (1.5,0) -- (1.,-2);
\draw [very thick] (1.5,-1.33) -- (0.5,0);

\node [below] at (-1.1,.5) {{{\bf $p_3$}}};
\node [below] at (3.1,.5) {{{\bf $p_2$}}};
\node [right] at (1,-2.25) {{{\bf $p_2+p_3$}}};
\node [above] at (0,0) {{{\bf $k_2$}}};
\node [above] at (1,0) {{{\bf $k_1$}}};
\node [above] at (2,0) {{{\bf $k_3$}}};
\node [left] at (0.25,-1) {{{\bf $k_2-p_3$}}};
\node [right] at (1.3,-1.7) {{{\bf $k_2+k_3-k_1-p_2$}}};
\node [right] at (1.8,-1) {{{\bf $k_3+p_2$}}};
\node [right] at (1.4,-0.3) {{{\bf $k_3-k_1$}}};
\node [right] at (-0.05,-0.5) {{{\bf $k_2-k_1$}}};

\end{feynman}
\end{tikzpicture}

\caption{ \label{fig:3loopNP2to1_1}
Hybrid approach.
The bold lines show the choice of the two-loop sub-diagram in the first step of the \h{} ($2 \rightarrow 1$) procedure.}
\end{figure}   
 
\section{Minimal Dimensions of Symanzik Polynomials: Summary}
 
For the 2-loop skeleton diagram in Fig.~\ref{fig:2lsceleton}, rescaling of $U$ gives a two-dimensional \mb{} integral, see Eq.~(\ref{UCW2loop})
\begin{equation*}
U_{\mbox{\tiny 2-loop}} = v_1 v_2 + v_2 v_3 + v_1 v_3. 
\end{equation*}
For the 3-loop \texttt{Ladder} skeleton diagram in Fig.~\ref{fig:skel}, rescaling of $U$ gives a 7-dim \mb{} integral, see Eq.~(\ref{skel1exU3loop})
\begin{equation*}
U_{\mbox{\tiny 3-loop(I)}} = v_1 v_2 v_3+v_1 v_2 v_4+v_2 v_3 v_4+v_1 v_2 v_5+v_1 v_3 v_5+v_2 v_3 v_5+v_1 v_4 v_5+v_3 v_4 v_5. 
 \label{skel1exU}
\end{equation*}
For the 3-loop \texttt{Mercedes} skeleton diagram in Fig.~\ref{fig:skel}, rescaling of $U$ gives a 15-dim \mb{} integral, see Eq.~(\ref{skel2exU})
\begin{multline*}
U_{\mbox{\tiny 3-loop(II)}} = v_1 v_2 v_3+v_1 v_2 v_4+v_1 v_3 v_4+v_1 v_2 v_5+v_1 v_3 v_5+v_2 v_3 v_5+v_2 v_4 v_5+v_3 v_4 v_5 \\
    +  v_1 v_2 v_6 +v_2 v_3 v_6+v_1 v_4 v_6+v_2 v_4 v_6+v_3 v_4 v_6+v_1 v_5 v_6+v_3 v_5 v_6+v_4 v_5 v_6. 
\label{skel2exU}
\end{multline*}
Finally, using CW Theorem~\ref{CWtheorem0}, we get further simplifications:

\begin{svgraybox}
\begin{itemize}
    \item For any 2-loop diagram: $\delta(1 - v_1 - v_2)$, $U(\vec v) = v_3 + v_1 v_2$. \\
    There is no additional term in $U$, the situation is as simple as  in the one-loop case, see the discussion around Eq.~(\ref{eq:afterCW2loop}).
    \item For any 3-loop diagram: $\delta(1 - v_1 - v_2 - v_3)$
    \begin{itemize}
        \item \texttt{Ladder} - $U$ generates {\color{black}{2 additional \mb{} integrations}}, see a discussion around Eq.~(\ref{eq:Ucw_2dim})
        \item \texttt{Mercedes} - $U$ generates {\color{black}{4 additional \MB{} integrations}}, see a discussion around Eq.~(\ref{skel2exUfactor}).
    \end{itemize}
\end{itemize}
\end{svgraybox}

For the $F$ polynomial situation is more complicated and additional dimensions of  \mb{} integrals depends on kinematics and internal masses. As discussed, the expression $F = F_0 + U \sum\limits x_i m_i^2$ must be treated individually for each \texttt{FI}, without or with suitable expansions of $F_0$ and $U$ terms.

%\begin{overview}{Minimal dimensions of Symanzik polynomials: Summary}
%\end{overview}

\section{Beyond 3-loops}

At 4-loops planar diagrams can still be treated with \la{}, the complexity of $U$ and $F$ leaves not much space for constructing good \mb{} representations using the \ga{}. Interestingly, the hybrid method opens a possibility to build \texttt{MB} representations beyond three loops. For one of the possible implementations, see~\cite{Boels:2015yna}. However, these issues go beyond the content of this textbook.

\section{\mb{} and the Method of Brackets \label{sec:prausa} }

In~\cite{Prausa:2017frh}  a very interesting approach to the construction of \mbr{} representations  was introduced.  The approach is based on the `Method of Brackets'  defined in~\cite{Gonzalez:2007ry,Gonzalez:2008xm,Gonzalez:2010uz}, which is a multi-fold construction of sums starting directly from a Schwinger-parameterized Feynman integral presented in chapter~\ref{chapter-MBrepr}.
The technique of brackets transforms the parameter integral into a series-like
structure called: `brackets expansion' with four basic rules. 
\begin{enumerate}
    \item {\underline{Exponential function expansion}}
     
\begin{equation}
\exp \left( -xA\right) =\sum\limits_{n}\frac{\left( -1\right) ^{n}}{\Gamma
\left( n+1\right) }x^{n}A^{n}.
\end{equation}%
If the argument of exponential function is $\exp (xA)$

\begin{equation}
\exp \left( xA\right) =\sum\limits_{n}\frac{\left( -1\right) ^{n}}{\Gamma
\left( n+1\right) }x^{n}\left( -A\right) ^{n}.
\end{equation}%
The reason for this is to associate to each expansion the factor $\phi _{n}=%
\frac{\left( -1\right) ^{n}}{\Gamma \left( n+1\right) }$ as a simple
convention.
\item {\underline{Integration symbol and its equivalent bracket}}

This rule corresponds to the definition of the bracket symbol. The structure
$\int x^{a_{1}+a_{2}+...+a_{n}-1}$ $dx$ is replaced by its respective
bracket representation

\begin{equation}
\int x^{a_{1}+a_{2}+...+a_{n}-1}dx=\left\langle
a_{1}+a_{2}+...+a_{n}\right\rangle .
\end{equation}

\item \underline{Polynomials expansion}

For polynomials the following representation in terms of series of
brackets is used

\begin{equation}
\begin{array}{l}
\left( A_{1}+...+A_{r}\right) ^{\pm \mu }= \\
\\
\sum\limits_{n_{1}}...\sum\limits_{n_{r}}\phi _{n_{1}}...\phi
_{n_{r}}\;\left( A_{1}\right) ^{n_{1}}...\left( A_{r}\right) ^{n_{r}} 
\times \frac{\left\langle \mp \mu +n_{1}+...+n_{r}\right\rangle }{\Gamma
\left( \mp \mu \right) }.%
\end{array}%
\end{equation}%
This rule is derived using rule 1. and 2. 
after applying the Schwinger's parametrization to this polynomial.  

\item \underline{Finding the solution}

For the case of a generic series of brackets $J$

\begin{equation}
\begin{array}{ll}
J= & \sum\limits_{n_{1}}...\sum\limits_{n_{r}}\phi _{n_{1}}...\phi
_{n_{r}}\;\digamma (n_{1},...,n_{r}) \\
&  \\
& \times \;\left\langle a_{11}n_{1}+...+a_{1r}n_{r}+c_{1}\right\rangle \ldots \left\langle a_{r1}n_{1}+...+a_{rr}n_{r}+c_{r}\right\rangle ,%
\end{array}%
\end{equation}%
the solution is obtained using the general formula

\begin{equation}
%\begin{array}{c}
J=\frac{1}{\left\vert \det \left( \mathbf{A}\right) \right\vert }\Gamma
\left( -n_{1}^{\ast }\right) ...\Gamma \left( -n_{r}^{\ast }\right) \digamma
(n_{1}^{\ast },...,n_{r}^{\ast })%
%\end{array}
\label{Brackets}
\end{equation}%
where $\det \left( \mathbf{A}\right) $ is evaluated by the following
expression

\begin{equation}
\det \left( \mathbf{A}\right) =\left\vert
\begin{array}{ccc}
a_{11} & \ldots & a_{1r} \\
\vdots & \ddots & \vdots \\
a_{r1} & \cdots & a_{rr}%
\end{array}%
\right\vert ,
\end{equation}%
and $\left\{ n_{i}^{\ast }\right\} $ $\;\left( i=1,...,r\right) $ is the
solution of the linear system obtained by the vanishing of the brackets

\begin{equation}
\left\{
\begin{array}{cc}
a_{11}n_{1}+...+a_{1r}n_{r}= & -c_{1} \\
\vdots & \vdots \\
a_{r1}n_{1}+...+a_{r}n_{r}= & -c_{r}.%
\end{array}%
\right.
\end{equation}%
The value of $J$ is not defined if the matrix $\mathbf{A}$ is not invertible.
\end{enumerate}
 
The relation in Eq.~(\ref{Brackets}) is the Generalized Ramanujan's Master Theorem (\texttt{GRMT}) derived in~\cite{Gonzalez:2008xm}. For simple application of the above formalism to a construction of \mb{} integrals, based on \texttt{GRMT}, see Problem~\ref{problem_prausa1}
and original discussion in~\cite{Gonzalez:2008xm}, Problem~\ref{problem_prausa2}.

Here we compare some basic results from this method of brackets  with results by \ar{} packages, in which the above discussed 3-loop non-planar cases were optimized.

  \begin{figure}
      \centering
      \begin{subfigure}[b]{.19\textwidth}
        \centering
        \includegraphics[scale=.8]{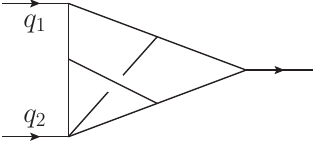}
        \caption{}
        \label{fig:int1}
      \end{subfigure}%
      \begin{subfigure}[b]{.19\textwidth}
        \centering
        \includegraphics[scale=.8]{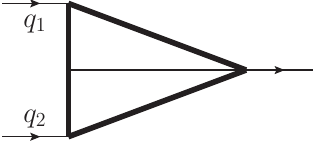}
        \caption{}
        \label{fig:int2}
      \end{subfigure}%
      \begin{subfigure}[b]{.19\textwidth}
        \centering
        \includegraphics[scale=.8]{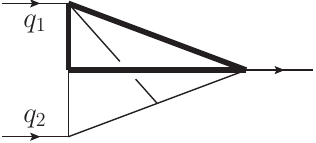}
        \caption{}
        \label{fig:int3}
      \end{subfigure}%
      \begin{subfigure}[b]{.19\textwidth}
        \centering
        \includegraphics[scale=.8]{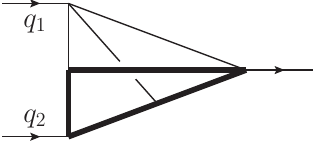}
    \caption{}
        \label{fig:int4}
      \end{subfigure}%
      \begin{subfigure}[b]{.19\textwidth}
        \centering
        \includegraphics[scale=.8]{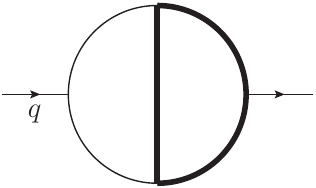}
        \caption{}
        \label{fig:int5}
      \end{subfigure}%
      \caption{ \label{fig:prausa} Examples of two- and three-loop diagrams. 
      Bold (thin) lines represent
massive (massless) propagators. Taken from \cite{Prausa:2017frh}, results discussed in Tab.~\ref{tab:comp}.}
\end{figure}

\begin{table}[h!]
      \centering
      \begin{tabular}{lcccc} \hline
        diagram & Method of Brackets & \texttt{AMBRE} & planarity & \texttt{AMBRE} $^*$ \\ \hline
        (a)  & \textbf{7} & 13  & NP & \textbf{4}, $\h{} (1 \to 2)$\\
        (b)  & \textbf{1} & 2 & P & \textbf{1} \\
        (c)  & \textbf{7} & 9  & NP & \textbf{5}, $\h{} (1 \to 2)$\\
        (d)  & \textbf{7} & 8  & NP & 8, $\h{} (1 \to 2)$ \\
        (e)  & 5 & \textbf{3}  & P & \textbf{3} \\
        \hline
      \end{tabular}        
\caption{The number of MB integrations of the representation fconstructed by the Method of Brackets and \texttt{AMBRE} for cases (a)-(e) given in Fig.~\ref{fig:prausa}. P (NP) stands for planarity (non-planarity) of the diagram. The last column gives improvements discussed in the text when using the \ha{} method.\label{tab:comp}}
\end{table}

By \texttt{AMBRE}$^*$ in Tab.~\ref{tab:comp} we denote \texttt{AMBREv1} for planar cases and manual combination of \texttt{AMBREv1} and \texttt{AMBREv3} for the hybrid method.  For the \texttt{AMBRE} branches \texttt{v1} and \texttt{v2} the latest versions of the packages from \cite{ambrewww} are assumed (see Appendix \ref{app:analsoftware}).
The \texttt{AMBRE}$^*$ results from Tab.~\ref{tab:comp} are given in \wwwaux{AMBREnew}.

\begin{comment}
\begin{table}[h!]
      \centering
      \begin{tabular}{lcccc} \hline
        diagram & Method of Brackets & \texttt{AMBRE} & planarity & AMBRE 4$^*$/method \\ \hline
        f(a)  & \textbf{7} & 13  & NP & \textbf{4} (1 \to 2)\\
        fig.5.1(b)  & \textbf{1} & 2 & P & \textbf{1} \\
        fig.5.1(c)  & \textbf{7} & 9  & NP & \textbf{5} \\
        fig.5.1(d)  & \textbf{7} & 8  & NP & 8 \\
        fig.5.1(e)  & 5 & \textbf{3}  & P & \textbf{3} \\
        fig.5.1(f)  & 9 & \textbf{4}  & P & \textbf{4} \\
        fig.5.1(g)  & 7 & \textbf{4} & P & \textbf{4} \\
        fig.5.1(h)  & 5 & \textbf{4} & P & \textbf{4}\\
        fig.5.1(i)  & \textbf{2} & \textbf{2}  & P & \textbf{2} \\
        fig.5.1(j) & \textbf{2} & \textbf{2}  & P & \textbf{2} \\
        \hline
      \end{tabular}        
\caption{The number of MB integrations of the representation constructed by the Method of Brackets 
compared to the previous \ar{} versions~\cite{Gluza:2007rt,Gluza:2010rn,Blumlein:2014maa,Dubovyk:2016ocz} 
and the last version AMBRE 4.  P (NP) stands for planarity (non-planarity) of the diagram. \label{tab:comp}}
\end{table}
\end{comment}

\section{{Phase Space \MB{} Integrals}
%{Phase Space Integrals with MB Integrals}
}
\label{sec:4MBrepr}

When computing predictions for physical observables  
such as cross sections or decay rates in perturbative 
quantum field theory, in addition to the Feynman integrals 
discussed above, we also we also encounter integrations over 
the momenta of outgoing external particles in a scattering 
or decay process. We will refer to such integrals as phase 
space integrals, since the associated integration measure 
is commonly referred to as the phase space measure. In 
particular, the measure for $n$ outgoing particles of total 
momentum $Q$ in $d$ space-time dimensions reads
\begin{equation}
    d\phi_n(p_1,\ldots,p_n;Q) = 
    \prod_{k=1}^{n}
    \frac{d^d p_k}{(2\pi)^{d-1}}\delta_+(p_k^2-m_k^2)
    (2\pi)^d \delta^{(d)}(p_1+\ldots+p_n-Q),
\end{equation}
where $\delta^{(d)}$ is the $d$-dimensional Dirac delta 
function while
\begin{equation}
\delta_+(p_k^2-m_k^2) = \delta(p_k^2-m_k^2)\Theta(p_k^0)
\end{equation}
and the various factors of $2\pi$ are conventional. We 
see that in contrast to loop integrals, here the integrations 
over the momenta are constrained by both the mass-shell 
conditions (i.e., $p_k^2=m_k^2$) as well as overall momentum 
conservation. Thus, it is less straightforward to develop 
general representations of phase space integrals than for 
loop integrals. Thus, in practical applications one often 
simply chooses some particular explicit parametrization of 
the components of the outgoing momenta, resolves the 
constraints, and attempts to evaluate the resulting parametric 
integral. We note in passing that the resolution of the 
constraints can lead to very elaborate expressions and so 
it can be crucial to correctly tailor the chosen 
parametrization to the problem at hand.

\begin{tips}{A Three-Particle Phase Space Integral}
As an example, consider the three-particle phase space where all three outgoing momenta are massless, $p_1^2=p_2^2=p_3^2=0$. A typical situation is that one must compute the integral of some function of the invariants $s_{ij} = 2p_i\cdot p_j$, $i,j=1,2,3$ over the full phase space. E.g. consider the simple example
\begin{equation}
\begin{split}
    I &= \int d\phi_3(p_1,p_2,p_3;Q) \frac{1}{s_{13} s_{23}}
\\ &=
    \int \frac{d^d p_1}{(2\pi)^{d-1}} \delta_{+}(p_1^2)
    \frac{d^d p_2}{(2\pi)^{d-1}} \delta_{+}(p_2^2)
    \frac{d^d p_3}{(2\pi)^{d-1}} \delta_{+}(p_3^2)
\\ &\times
    (2\pi)^d \delta^{(d)}(Q-p_1-p_2-p_3)
    \frac{1}{s_{13} s_{23}}.
\end{split}
\end{equation}
In order to compute this integral, let us first use the Dirac delta function expressing overall momentum conservation to integrate out e.g. the momentum $p_3$,
\begin{equation}
\begin{split}
    I &= (2\pi)^{3-2d} \int d^d p_1\, \delta_{+}(p_1^2)
    d^d p_2\, \delta_{+}(p_2) \delta_{+}(p_3^2)
    \delta_{+}[(Q-p_1-p_2)^2]
    \frac{1}{s_{13} s_{23}}.
\end{split}
\label{eq:PS3-ex-I}
\end{equation}
Now, since both the integration measure and the integrand are rotationally invariant, we can choose to work in the following convenient Lorentz-frame:
\begin{equation}
    \begin{split}
        Q^\mu &= \sqrt{s}(1,\vec{0}_{d-1}),
        \\
        p_1^\mu &= E_1(1,\vec{0}_{d-2}, 1),
        \\
        p_2^\mu &= E_2(1,\vec{0}_{d-3},
        \sin\theta,
        \cos\theta),
        \\
        p_3^\mu &= Q-p_1-p_2.  
    \end{split}
    \label{eq:L-frame-3}
\end{equation}
Above $\vec{0}_j$ denotes a vector of $j$ zeros. It is not difficult to show that in this frame we may write the various pieces of the integration measure as follows,
\begin{equation}
\begin{split}
    d^d p_1 \delta_{+}(p_1^2) &= 
    \frac{E_1^{d-3}}{2} dE_1\,d\Omega_{d-1}\Theta(E_1),
\\
    d^d p_2 \delta_{+}(p_2^2) &= 
    \frac{E_2^{d-3}}{2} dE_2\,d(\cos\theta)\,(\sin\theta)^{d-4} 
    d\Omega_{d-2}\Theta(E_2),
\end{split}
\end{equation}
where $d\Omega_j$ is the measure for the integration over the $j$ angular variables on which the integrand does not depend (i.e., the angles of rotations that are necessary to bring a general configuration to the form of Eq.~(\ref{eq:L-frame-3})). On the other hand, the invariants in this frame take the form
\begin{equation}
    \begin{split}
        s_{12} &= 2E_1 E_2(1-\cos\theta),
\\
        s_{13} &= 2E_1\sqrt{s} -  2E_1 E_2(1-\cos\theta),
\\
        s_{23} &= 2E_2\sqrt{s} - 2E_1 E_2(1-\cos\theta).
    \end{split}
\label{eq:PS3-ex-inv}
\end{equation}
Then, our integral in Eq.~(\ref{eq:PS3-ex-I}) takes the form
\begin{equation}
    \begin{split}
        I &= (2\pi)^{3-2d} 2^{-2} \Omega_{d-1} \Omega_{d-2}
        \int d E_1\, d E_2\,(E_1 E_2)^{d-3} \int_{-1}^{1} d(\cos\theta)
        (\sin\theta)^{d-4}
\\ &\times   
        \delta[s - 2 E_1 \sqrt{s} - 2 E_2 \sqrt{s} 
        + 2E_1 E_2 (1-\cos\theta)]\Theta(\sqrt{s}-E_1-E_2)
        \Theta(E_1)\Theta(E_2)
\\ &\times   
    \frac{1}{[2E_1\sqrt{s} -  2E_1 E_2(1-\cos\theta)]
    [2E_2\sqrt{s} -  2E_1 E_2(1-\cos\theta)]}.
    \end{split}
\label{eq:PSint-ex-int}
\end{equation}
Here we have used that $\delta_+(p^2) = \delta(p^2)\Theta(p^0)$ and we have been careful to indicate all constraints on the energy integrals. It is possible to evaluate this integral by solving the delta function for one of the energies, which leads to an integration over the angle $\theta$ and the other energy, $E_1$. Being careful to determine the correct limits of integration on the remaining variables, we find that the result can be expressed in terms of just gamma functions, see problem~\ref{problem_PSint}. 
Below we will have much more to say about how the \MB{} method can be used to compute the angular parts of such phase space integrals.

However, in order to highlight the simplifications that can result from a well-chosen parametrization, let us not follow this path, but instead choose a new set of integration variables, namely the invariants $s_{12}$, $s_{13}$ and $s_{23}$. Obviously we can use Eq.~(\ref{eq:PS3-ex-inv}) to solve for $E_1$, $E_2$ and $\cos\theta$ in terms of these invariants. Performing the change of variables, we find simply that
\begin{equation}
    \begin{split}
        I &= (2\pi)^{3-2d} 2^{-1-d} s^{\frac{2-d}{2}}
        \Omega_{d-1} \Omega_{d-2} \int ds_{12}\,ds_{13}\,ds_{23}\,
        (s_{12} s_{13} s_{23})^{\frac{d-4}{2}} 
\\ &\times   
        \delta(s-s_{12}-s_{13}-s_{23})
        \frac{1}{s_{13} s_{23}},
    \end{split}
\end{equation}
where all integrations run between 0 and $s$. This last integral is very easy to evaluate by resolving the delta function for $s_{12}$ and we find simply
\begin{equation}
        I = 
        \frac{2^{-8+6\epsilon} \pi^{-\frac{5}{2}+2\epsilon} 
        \Gamma^2(-\epsilon)}{\Gamma(1-3\epsilon)\Gamma\left(\frac{3}{2}-\epsilon\right)}s^{-1-2\epsilon}.
\label{eq:PSint-ex-res}
\end{equation}
Here we have set $d=4-2\epsilon$ and used that the the total angular volume in $p$ dimensions is $\Omega_p = \frac{2\pi^{p/2}}{\Gamma(p/2)}$ (see Eq.~(\ref{eq:Omp}) below).

As a final word of caution, we mention that beyond three-particle integrals, the parametrization in terms of invariants is often times not as useful as our example might lead one to believe. The reason behind this is that although the measure of integration can be written in a very nice and compact form also for phase spaces of higher multiplicity (see e.g.~\cite{Gehrmann-DeRidder:2003pne,Heinrich:2006sw}), the limits of integration are actually quite non-trivial to express in a way that is convenient for actual calculations.
\end{tips}

In practical applications the parametrizations many times 
involve angles between three-momenta, defined in some suitable 
Lorentz frame, and so the evaluation of phase space integrals 
will involve the computation of integrals over these angles~\cite{Anastasiou:2013srw}. 
Such angular integrals are particularly well suited to 
treatment by the Mellin-Barnes method, as we will describe 
below.

To start, let us define what we will call an angular integral 
with $n$ denominators:
\begin{equation}
    \Omega_{j_1,\ldots,j_n} \equiv
    \int d\Omega_{d-1}(q)
    \frac{1}{(p_1\cdot q)^{j_1}\ldots (p_n\cdot q)^{j_n}}.
    \label{eq:Oj1jn-def}
\end{equation}
Here $p_1^\mu,\ldots,p_n^\mu$ are fixed $d$-dimensional 
vectors, and we are integrating over angular variables 
of the massless $d$-dimensional vector $q^\mu$. Thus 
$d\Omega_{d-1}(q)$ is the rotationally invariant angular 
measure in $d$ dimensions, whose explicit expression will 
be given below in Eq.~(\ref{eq:Omp}). It turns out that the 
general integral in Eq.~(\ref{eq:Oj1jn-def}) admits a 
very nice representation in terms of Mellin-Barnes 
integrals which can be a suitable starting point for 
both analytic and numeric evaluation.

To derive this representation, we begin by noting that 
the overall normalization of the $p_i^\mu$ and $q^\mu$ 
clearly does not play an essential role and so without 
loss of generality, we can simply choose to normalize 
these vectors in whatever way is most convenient. Thus, 
we choose a Lorentz-frame where
\begin{equation}
    \begin{split}
        p_1^\mu &= (1,\vec{0}_{d-2},\beta_1),
        \\
        p_2^\mu &= (1,\vec{0}_{d-3},
        \beta_2\sin\chi_2^{(1)},
        \beta_2\cos\chi_2^{(1)}),
        \\
        p_3^\mu &= (1,\vec{0}_{d-4},
        \beta_3\sin\chi_3^{(2)}\sin\chi_3^{(1)},
        \beta_3\cos\chi_3^{(2)}\sin\chi_3^{(1)},
        \beta_3\cos\chi_3^{(1)}),
        \\ &\vdots \\
        p_n^\mu &= (1,\vec{0}_{d-1-n},
        \beta_n \prod_{k=1}^{n-1}\sin\chi_n^{(k)}, \\&
\hspace{.5cm}        \beta_n \cos\chi_n^{(n-1)} \prod_{k=1}^{n-2}
        \sin\chi_n^{(k)},\ldots,
        \beta_n \cos\chi_n^{(2)}\sin\chi_n^{(1)},
        \beta_n \cos\chi_n^{(1)}).
    \end{split}
    \label{eq:pmu-def}  
\end{equation}
%\commgs{TODO FIGURE}
Again, $\vec{0}_j$ denotes a vector of $j$ zeros, and thus 
in words, we have chosen a frame where the direction of 
$p_1^\mu$ fixes the $d$-th axis, then $p_2^\mu$ fixes the 
plane of the $d$-th and $(d-1)$-st axis and so on. Notice 
that we have written all vectors in $d$-dimensional polar 
coordinates and used the freedom to fix the normalizations 
to fix each zeroth component to be one, $p_j^0=1$. In this 
frame, $q^\mu$ can be written in the following form, again 
using $d$-dimensional polar coordinates,
\begin{equation}
    \begin{split}
        q^\mu = (1,..\mbox{``angles''}..,
        \cos\vartheta_n\prod_{k=1}^{n-1}\sin\vartheta_k,
        \cos\vartheta_{n-1}\prod_{k=1}^{n-2}
        \sin\vartheta_k,\ldots,
        \cos\vartheta_2\sin\vartheta_1,
        \cos\vartheta_1).
    \end{split}
    \label{eq:qmu-def}
\end{equation}
Here $..\mbox{``angles''}..$ denotes those angular variables 
on which the integral does not depend and that can thus be 
integrated trivially. Also, as before, we have used the 
freedom to fix the normalization to set $q^0=1$. It can 
be shown that the angular measure appearing in Eq.~(\ref{eq:Oj1jn-def}) above can be written as follows
\begin{equation}
    d\Omega_{d-1}(q) = \prod_{k=1}^{n}d(\cos\vartheta_k)
    (\sin\vartheta_k)^{-k+1-2\epsilon}
    d\Omega_{d-1-n}(q)
    \label{eq:dOmq-def}
\end{equation}
where we have now set $d=4-2\epsilon$. Notice that 
$d\Omega_{d-1-n}(q)$ corresponds to the angular measure 
of the variables denoted simply as $..\mbox{``angles''}..$ 
in Eq.~(\ref{eq:qmu-def}) above. 
For later use, we note that the total angular volume in 
$p$ dimensions is simply (Problem~\ref{prob:dangles})
\begin{equation}
    \Omega_p = \int d\Omega_p(q) = 
    \frac{2\pi^{\frac{p}{2}}}{\Gamma\left(\frac{p}{2}\right)}.
    \label{eq:Omp}
\end{equation}
For a sanity check, notice that the angular volume 
$\Omega_p$ is just the integral of the angular part 
of the total volume measure in polar coordinates in 
$p$ dimensions, i.e., the surface of the $p$-dimensional 
unit sphere. Hence in $p=2$ we expect $\Omega_2 = 2\pi$, 
which is clearly in agreement with Eq.~(\ref{eq:Omp}). 
For $p=3$ Eq.~(\ref{eq:Omp}) gives 
$\Omega_3 = 2\pi^{\frac{3}{2}}/\Gamma(\frac{3}{2})$, but 
$\Gamma(\frac32) = \sqrt{\pi}/2$, so in fact we obtain 
the expected result of $\Omega_3 = 4\pi$.

Then, inserting 
Eq.~(\ref{eq:dOmq-def}) into Eq.~(\ref{eq:Oj1jn-def}), we 
find the explicit integral representation
\begin{equation}
    \begin{split}
    &
        \Omega_{j_1,\ldots,j_n} = 
        \int d\Omega_{d-1-n}(q)
        \int_{-1}^{1} \prod_{k=1}^{n}
        \left[d(\cos\vartheta_k)
        (\sin\vartheta_k)^{-k+1-2\epsilon}\right]
    \\&\quad\times
        \prod_{k=1}^{n}\left\{
        1 - \beta_k\sum_{l=1}^k\left[
        \left(\delta_{lk}+(1-\delta_{lk})
        \cos\chi_k^{(l)}\right)\cos\vartheta_l
        \prod_{m=1}^{l-1}\left(
        \sin\chi_k^{(m)}\sin\vartheta_m
        \right)
        \right]
        \right\}^{-j_k}.
    \end{split}
    \label{eq:Oj1jn-1}
\end{equation}
As usual, $\delta_{kl}$ is equal to one if $k$ and $l$ 
coincide, otherwise it is zero.
{We will spell out this formula for small values of $n$ (in particular $n=1$ and $n=2$) explicitly when we discuss examples later, see Eqs.~(\ref{eq:Ojm0}), (\ref{eq:Ojm}), (\ref{eq:Ojkm0}) and (\ref{eq:Ojkm}).} 

In general, the parametric integral in Eq.~(\ref{eq:Oj1jn-1}) 
is quite elaborate. E.g. for $j_k=1$ and $\beta_k=1$, 
already for $n=2$ the integrand has a line singularity inside 
the (two real dimensional) integration domain. Thus even 
in this simple case, already the resolution of singularities 
as $\epsilon\to 0$ is non-trivial. However, the Mellin-Barnes 
method provides a particularly nice way to approach the 
evaluation of Eq.~(\ref{eq:Oj1jn-1}). We proceed as follows. 
First, let us go back to the original definition, 
Eq.~(\ref{eq:Oj1jn-def}) and use Feynman parametrization 
to combine all denominators:
\begin{equation}
    \begin{split}
        \Omega_{j_1,\ldots,j_n} &= 
        \int d\Omega_{d-1}(q) 
        \frac{\Gamma(j)}{\prod_{k=1}^{n}\Gamma(j_k)}
    \\&\times
        \int_0^1 \left[\prod_{k=1}^{n} 
        d x_k (x_k)^{j_k-1}\right]
        \delta\left(\sum_{k=1}^{n}x_k - 1\right)
        \left[
        \left(\sum_{k=1}^{n} x_k p_k\right)\cdot q
        \right]^{-j},
        \end{split}
    \label{eq:Oj1jn-2}
\end{equation}
where $j=j_1+\ldots+j_n$ is the sum of the exponents. 
Now we make an important observation: we can exploit the 
rotational invariance of the original expression and 
evaluate the integral in a frame where the direction of the 
weighted sum of momenta $x_1 p_1^\mu + \ldots + x_n p_n^\mu$ 
points along the $d$-th direction. In this frame, we have 
simply
\begin{equation}
    \sum_{k=1}^{n} x_k p_k^\mu = (1,\vec{0}_{d-2},\beta)
    \qquad\mbox{and}\qquad
    q^\mu = (1,..\mbox{``angles''}..,
        \sin\vartheta,
        \cos\vartheta).
\end{equation}
A quick computation shows that the variable $\beta$ 
introduced here can be expressed as the solution of the 
following equation
\begin{equation}
    1 - \beta^2 = \left(\sum_{k=1}^{n} x_k p_k^\mu\right)^2 
    = \sum_{k=1}^{n} \sum_{l=k+1}^{n} 2x_k x_l (p_k\cdot p_l)
    + \sum_{k=1}^{n} x_k^2 p_k^2
    \label{eq:beta}
\end{equation}

Thus, Eq.~(\ref{eq:Oj1jn-2}) takes the form
\begin{equation}
    \begin{split}
        \Omega_{j_1,\ldots,j_n} &= 
        \frac{\Gamma(j)}{\prod_{k=1}^{n}\Gamma(j_k)}
        \int_0^1 \left[\prod_{k=1}^{n} 
        d x_k (x_k)^{j_k-1}\right]
        \delta\left(\sum_{k=1}^{n}x_k - 1\right)
    \\&\times
        \int d\Omega_{d-2}(q)
        \int_{-1}^{1}d(\cos\vartheta)
        (\sin\vartheta)^{-2\epsilon}
        \left[
        1 - \beta\cos\vartheta
        \right]^{-j}.
    \end{split}
    \label{eq:Oj1jn-3}
\end{equation}
The only non-trivial angular integration above can be 
performed in terms of a Gauss ${}_2F_1$ hypergeometric 
function via the substitution $\cos\vartheta \to 2s-1$, 
while $\Omega_{d-2}$ is given in Eq.~(\ref{eq:Omp}):
\begin{equation}
    \begin{split}
        &\int d\Omega_{d-2}(q)
        \int_{-1}^{1}d(\cos\vartheta)
        (\sin\vartheta)^{-2\epsilon}
        \left[
        1 - \beta\cos\vartheta
        \right]^{-j}
    \\&=
    2^{2-2\epsilon}\pi^{1-\epsilon}(1+\beta)^{-j} 
    \frac{\Gamma(1-\epsilon)}{\Gamma(2-2\epsilon)}
    {}_2F_1\left(j,1-\epsilon,2-2\epsilon,
    \frac{2\beta}{1+\beta}\right).
    \end{split}
\end{equation}
This result can be put in a more convenient form by 
using the quadratic hypergeometric identity
\begin{equation}
    {}_2F_1(a,b,2b,z) = \left(1-\frac{z}{2}\right)^{-a}
    {}_2F_1\left[\frac{a}{2},\frac{a+1}{2},b+\frac{1}{2},
    \left(\frac{z}{2-z}\right)^2\right]
\end{equation}
which leads to
\begin{equation}
    \begin{split}
        &\int d\Omega_{d-2}(q)
        \int_{-1}^{1}d(\cos\vartheta)
        (\sin\vartheta)^{-2\epsilon}
        \left[
        1 - \beta\cos\vartheta
        \right]^{-j}
    \\&=
    2^{2-2\epsilon}\pi^{1-\epsilon}
    \frac{\Gamma(1-\epsilon)}{\Gamma(2-2\epsilon)}
    {}_2F_1\left(\frac{j}{2},\frac{j+1}{2},
    \frac{3}{2}-\epsilon,
    \beta^2\right).
    \end{split}
    \label{eq:Om-j-m}
\end{equation}
The utility of this later form lies in the fact that it 
is $\beta^2$ and not $\beta$ itself that has a simple 
expression in terms of the dot-products of the $p_i^\mu$ 
as evidenced by Eq.~(\ref{eq:beta}). In fact, it turns 
out that it is convenient to write Eq.~(\ref{eq:beta}) in 
a more compact form by introducing the variables $v_{kl}$ 
such that
\begin{equation}
    v_{kl} \equiv 
    \begin{cases}
    \frac{p_k\cdot p_l}{2}, & k\ne l
    \\
    \frac{p_k^2}{4}, & k=l
    \end{cases}.
    \label{eq:vkl-def}
\end{equation}
Then Eq.~(\ref{eq:beta}) takes the simple form
\begin{equation}
    1-\beta^2 = 4\sum_{k=1}^{n}\sum_{l=k}^{n} x_k x_l v_{kl}.
\label{eq:1-beta2}
\end{equation}
The reason for this particular choice of normalization 
will become clear shortly.

To continue, we substitute Eq.~(\ref{eq:Om-j-m}) into 
Eq.~(\ref{eq:Oj1jn-3}) and obtain
\begin{equation}
    \begin{split}
        \Omega_{j_1,\ldots,j_n} &= 
        \frac{\Gamma(j)}{\prod_{k=1}^{n}\Gamma(j_k)}
        \int_0^1 \left[\prod_{k=1}^{n} 
        d x_k (x_k)^{j_k-1}\right]
        \delta\left(\sum_{k=1}^{n}x_k - 1\right)
    \\&\times
        2^{2-2\epsilon}\pi^{1-\epsilon}
        \frac{\Gamma(1-\epsilon)}{\Gamma(2-2\epsilon)}
        {}_2F_1\left(\frac{j}{2},\frac{j+1}{2},
        \frac{3}{2}-\epsilon,
        1-4\sum_{k=1}^{n}\sum_{l=k}^{n} x_k x_l v_{kl}\right).
    \end{split}
    \label{eq:Oj1jn-4}
\end{equation}
At this point, we have traded all angular integrations for 
integrals over Feynman parameters, $x_j$. However, the 
dependence of the integrand on the $x_j$ is quite complicated. 
This is where the Mellin-Barnes representation comes to 
our help. To further manipulate Eq.~(\ref{eq:Oj1jn-4}), we 
proceed as follows. First, we will represent the 
hypergeometric function as a one-dimensional Mellin-Barnes 
integral. Then, we will transform the double sum in the 
argument of the hypergeometric function to a product of 
factors by using the basic Mellin-Barnes formula. This 
will then allow us to perform all integrations over the 
Feyman parameters, leading to our final result: a pure 
Mellin-Barnes representation of the general $n$ denominator 
angular integral.

To implement the steps outlined above, first use the 
following well-known Mellin-Barnes representations of 
the ${}_2F_1$ hypergeometric function,
\begin{equation}
\begin{split}
    {}_2F_1(a,b,c,x) &= \frac{\Gamma(c)}{\Gamma(a)\Gamma(b)
    \Gamma(c-a)\Gamma(c-b)}
    \\ &\times
    \int_{-i\infty}^{+i\infty} \frac{dz_0}{2\pi i}
    \Gamma(a+z_0)\Gamma(b+z_0)\Gamma(c-a-b-z_0)\Gamma(-z_0)
    (1-x)^{z_0},
\end{split}
\label{eq:2F1-MB}
\end{equation}
to write
\begin{equation}
\begin{split}
    &
    {}_2F_1\left(\frac{j}{2},\frac{j+1}{2},
        \frac{3}{2}-\epsilon,
        1-4\sum_{k=1}^{n}\sum_{l=k}^{n} x_k x_l v_{kl}\right)
\\&\quad=
    \frac{\Gamma\left(\frac{3}{2}-\epsilon\right)}
    {\Gamma\left(\frac{j}{2}\right)
    \Gamma\left(\frac{j+1}{2}\right)
    \Gamma\left(\frac{3-j}{2}-\epsilon\right)
    \Gamma\left(\frac{2-j}{2}-\epsilon\right)}
    \int_{-i\infty}^{+i\infty} \frac{dz_0}{2\pi i}
\\&\qquad\times
    \Gamma\left(\frac{j}{2}+z_0\right)
    \Gamma\left(\frac{j+1}{2}+z_0\right)
    \Gamma(1-j-\epsilon-z_0)\Gamma(-z_0)
    \left(4\sum_{k=1}^{n}\sum_{l=k}^{n} 
    x_k x_l v_{kl}\right)^{z_0}
\\ &\quad=
    2^{-j} 
    \frac{\Gamma(2 - 2 \epsilon)}
    {\Gamma(1 - \epsilon) \Gamma(2 - j - 2 \epsilon) 
    \Gamma(j)}
    \int_{-i\infty}^{+i\infty} \frac{dz_0}{2\pi i}
\\&\qquad \times 
    \Gamma(j + 2 z_0)
    \Gamma(1-j-\epsilon-z_0)\Gamma(-z_0) 
    \left(\sum_{k=1}^{n}\sum_{l=k}^{n} 
    x_k x_l v_{kl}\right)^{z_0}.
\end{split}
\label{eq:ang-2F1-MB}
\end{equation}
To write the last equality, we used the doubling relation for the 
gamma function,
\begin{equation}
    \Gamma(2x) = \frac{2^{2x-1}}{\sqrt{\pi}}
    \Gamma(x)\Gamma\left(x+\frac{1}{2}\right),
\label{eq:doubling-rel}
\end{equation}
to simplify the arguments of the gamma functions. Notice 
in particular, that the factor of $4^{z_0}$ in the second 
line is cancelled, which explains our normalizaion of the 
variables in Eq.~(\ref{eq:vkl-def}). Then, substituting 
Eq.~(\ref{eq:ang-2F1-MB}) into Eq.~(\ref{eq:Oj1jn-4}), 
we find
\begin{equation}
    \begin{split}
        \Omega_{j_1,\ldots,j_n} &= 
        \frac{2^{2-j-2\epsilon}\pi^{1-\epsilon}}
        {\prod_{k=1}^{n}\Gamma(j_k)
        \Gamma(2 - j - 2 \epsilon)}
        \int_0^1 \left[\prod_{k=1}^{n} 
        d x_k (x_k)^{j_k-1}\right]
        \delta\left(\sum_{k=1}^{n}x_k - 1\right)
    \\&\times
        \int_{-i\infty}^{+i\infty} \frac{dz_0}{2\pi i}
        \Gamma(j + 2 z_0)
        \Gamma(1-j-\epsilon-z_0)\Gamma(-z_0)
        \left(\sum_{k=1}^{n}\sum_{l=k}^{n} 
        x_k x_l v_{kl}\right)^{z_0}.
    \end{split}
    \label{eq:Oj1jn-5}
\end{equation}
Next, we write the double sum to the $z_0$-th power in a 
factorized form. Using the basic Mellin-Barnes formula, 
it is not difficult to show that (see Problem~\ref{problem_OMB})
\begin{equation}
    \begin{split}
        \left(\sum_{k=1}^{n}\sum_{l=k}^{n} 
        x_k x_l v_{kl}\right)^{z_0} &=
        \frac{1}{\Gamma(-z_0)}\int_{-i\infty}^{+\infty}
        \left[\prod_{k=1}^{n-1}\prod_{l=k}^{n}
        \frac{dz_{kl}}{2\pi i} \Gamma(-z_{kl})
        (x_k x_l v_{kl})^{z_{kl}}\right]
        \\ &\times
        \Gamma\left(-z_0 + \sum_{k=1}^{n-1}\sum_{l=k}^{n}
        z_{kl}\right) (x_n^2 v_{nn})^{z_0 - 
        \sum_{k=1}^{n-1}\sum_{l=k}^{n}z_{kl}}
    \end{split}
    \label{eq:vklsum-MB}
\end{equation}
%\deleted{\commgs{EXERCISE: prove this equation!}}
Substituting Eq.~(\ref{eq:vklsum-MB}) into 
Eq.~(\ref{eq:Oj1jn-5}), we obtain
\begin{equation}
    \begin{split}
        \Omega_{j_1,\ldots,j_n} &= 
        \frac{2^{2-j-2\epsilon}\pi^{1-\epsilon}}
        {\prod_{k=1}^{n}\Gamma(j_k)
        \Gamma(2 - j - 2 \epsilon)}
        \int_0^1 \left[\prod_{k=1}^{n} 
        d x_k (x_k)^{j_k-1}\right]
        \delta\left(\sum_{k=1}^{n}x_k - 1\right)
    \\&\times
        \int_{-i\infty}^{+i\infty} \frac{dz_0}{2\pi i}
        \Gamma(j + 2 z_0)
        \Gamma(1-j-\epsilon-z_0)
    \\&\times
        \int_{-i\infty}^{+\infty}
        \left[\prod_{k=1}^{n-1}\prod_{l=k}^{n}
        \frac{dz_{kl}}{2\pi i} \Gamma(-z_{kl})
        (x_k x_l v_{kl})^{z_{kl}}\right]
        \\ &\times
        \Gamma\left(-z_0 + \sum_{k=1}^{n-1}\sum_{l=k}^{n}
        z_{kl}\right) (x_n^2 v_{nn})^{z_0 - 
        \sum_{k=1}^{n-1}\sum_{l=k}^{n}z_{kl}}.
    \end{split}
    \label{eq:Oj1jn-6}
\end{equation}
Now let us change the variable of integration from $z_0$ 
to $z_{nn} \equiv z_0 - \sum_{k=1}^{n-1}\sum_{l=k}^{n}z_{kl}$. 
Then $z_0 = \sum_{k=1}^{n-1}\sum_{l=k}^{n}z_{kl} + z_{nn} = 
\sum_{k=1}^{n}\sum_{l=k}^{n}z_{kl}$, i.e., $z_0$ is simply 
the sum of all $\frac{n(n+1)}{2}$ integration variables. To 
avoid any confusion, let us denote this sum simply as $z$ 
in the following,
\begin{equation}
    z=\sum_{k=1}^{n}\sum_{l=k}^{n}z_{kl}\,. 
\label{eq:z-def}
\end{equation}
Then we find
\begin{equation}
    \begin{split}
        \Omega_{j_1,\ldots,j_n} &= 
        \frac{2^{2-j-2\epsilon}\pi^{1-\epsilon}}
        {\prod_{k=1}^{n}\Gamma(j_k)
        \Gamma(2 - j - 2 \epsilon)}
        \int_0^1 \left[\prod_{k=1}^{n} 
        d x_k (x_k)^{j_k-1}\right]
        \delta\left(\sum_{k=1}^{n}x_k - 1\right)
    \\&\times
        \left[\prod_{k=1}^{n}\prod_{l=k}^{n}
        \frac{dz_{kl}}{2\pi i} \Gamma(-z_{kl})
        (x_k x_l v_{kl})^{z_{kl}}\right]
        \Gamma(j + 2 z)
        \Gamma(1-j-\epsilon-z).
    \end{split}
    \label{eq:Oj1jn-7}
\end{equation}
Collecting all factors of the $x$'s we find
\begin{equation}
    \begin{split}
        \Omega_{j_1,\ldots,j_n} &= 
        \frac{2^{2-j-2\epsilon}\pi^{1-\epsilon}}
        {\prod_{k=1}^{n}\Gamma(j_k)
        \Gamma(2 - j - 2 \epsilon)}
        \int_0^1 \left[\prod_{k=1}^{n} 
        d x_k (x_k)^{j_k-1+z_k}\right]
        \delta\left(\sum_{k=1}^{n}x_k - 1\right)
    \\&\times
        \left[\prod_{k=1}^{n}\prod_{l=k}^{n}
        \frac{dz_{kl}}{2\pi i} \Gamma(-z_{kl})
        (v_{kl})^{z_{kl}}\right]
        \Gamma(j + 2 z)
        \Gamma(1-j-\epsilon-z),
    \end{split}
    \label{eq:Oj1jn-8}
\end{equation}
where we have introduced
\begin{equation}
    z_k = \sum_{l_1}^k z_{lk} + \sum_{l=k}^n z_{kl}.
\label{eq:zk-def}
\end{equation}
In words, $z_k$ is the sum of all variables that involve $k$ 
as one of their indices, such that $z_{kk}$ itself is counted twice, i.e., $z_k = z_{1k} + \ldots + z_{k-1k} + 2z_{kk} + 
z_{kk+1} + \ldots + z_{kn}$. To finish the computation, realize that the integration over the Feynman parameters 
can now be performed using the formula
\begin{equation}
    \int_0^1 \left[\prod_{k=1}^{N} 
        d x_k (x_k)^{p_k-1}\right]
        \delta\left(\sum_{k=1}^{N}x_k - 1\right)
        = \frac{\prod_{k=1}^{N} \Gamma(p_k)}
        {\Gamma\left(\sum_{k=1}^{N} p_k\right)}.
\end{equation}
Applying the above formula with $p_k = j_k+z_k$ and noting 
that $\sum_{k=1}^n (j_k + z_k) = j + 2z$ (recall $j$ is the 
sum of all exponents $j_k$), we arrive at our final result:
\begin{equation}
    \begin{split}
        \Omega_{j_1,\ldots,j_n} &= 
        \frac{2^{2-j-2\epsilon}\pi^{1-\epsilon}}
        {\prod_{k=1}^{n}\Gamma(j_k)
        \Gamma(2 - j - 2 \epsilon)}
        \left[\prod_{k=1}^{n}\prod_{l=k}^{n}
        \int_{-i\infty}^{+i\infty}
        \frac{dz_{kl}}{2\pi i} \Gamma(-z_{kl})
        (v_{kl})^{z_{kl}}\right]
        \\&\times
        \left[\prod_{k=1}^{n}\Gamma(j_k+z_k)\right]
        \Gamma(1-j-\epsilon-z).
    \end{split}
    \label{eq:Oj1jn-fin}
\end{equation}
We have thus derived an $\frac{n(n+1)}{2}$-fold Mellin-Barnes 
integral representation for the general angular integral with 
$n$ denominators.

Before presenting some examples, let us make some comments. First, 
notice that the derivation of Eq.~(\ref{eq:Oj1jn-fin}) implicitly 
assumes that the exponents $j_k$ are not zero or negative integers, 
and indeed, Eq.~(\ref{eq:Oj1jn-fin}) is clearly ill-defined if any 
exponent is a non-positive integer. The case of an exponent being 
zero is obviously uninteresting from a practical point of view: it 
simply signals that the given denominator is not actually present 
in the integrand. On the other hand, negative integer powers, i.e., 
dot-products $p_j\cdot q$ in the \emph{numerator} do sometimes appear 
in practical applications. In such cases, when say 
$-j_k \in \mathbb{N}^+$, we can attempt to analytically continue 
Eq.~(\ref{eq:Oj1jn-fin}) to the required value of $j_k$, e.g. by 
setting $j_k \to j_k + \delta$ and performing the analytic 
continuation $\delta \to 0$. In practice, this analytic continuation 
can be performed using the same methods and tools that we employ to 
analytically continue Mellin-Barnes integrals in the parameter of 
dimensional regularization $\epsilon$ to zero and that will be 
explained in depth in section~\ref{sec:2poles}.

Second, the derivation of Eq.~(\ref{eq:Oj1jn-fin}) also assumes that 
all variables $v_{kl}$ are non-zero (in fact, positive). However, 
it may well happen that some $v_{kl}$ is zero, say when some momentum 
$p_i$ in Eq.~(\ref{eq:pmu-def}) is massless, which implies $v_{ii} = 0$. 
In such cases Eq.~(\ref{eq:Oj1jn-fin}) clearly cannot be used as it 
stands. However, it is straightforward to adapt the derivation to 
such cases, since if some $v_{ij}$ is identically zero, the only 
modification is that the corresponding term is missing from the sum 
in Eq.~(\ref{eq:1-beta2}). Then, the integration over $z_{ij}$ 
in Eq.~(\ref{eq:vklsum-MB}) is absent, but the rest of the derivation 
goes through as before. The final result is that we must drop all 
integrations that correspond to variables which are identically zero. 
Practically this amounts to the following simple changes in 
Eq.~(\ref{eq:Oj1jn-fin}): first, the double product over $k$ and $l$ 
in the first bracket in Eq.~(\ref{eq:vklsum-MB}) is restricted to those 
values of $k$ and $l$ for which $v_{kl} \ne 0$ and second, the sums 
defining $z$ and $z_k$ in Eqs.~(\ref{eq:z-def})~and~(\ref{eq:zk-def}) 
are similarly restricted to those $z_{kl}$ for which $v_{kl} \ne 0$.

Finally, we mention that the general angular integral $\Omega_{j_1,\ldots,j_n}$ can be expressed in a compact way with the $H$-function of several variables introduced in Eq.~(\ref{eq:H-func-def}). The details of this representation are given in~\cite{Somogyi:2011ir}.
%\deleted{\commgs{QUESTION: mention $H$ function of several variables?}}

Let us now turn to some examples. First, we consider the angular 
integral with a single massless denominator,
\begin{equation}
    \Omega_j(0,\epsilon) = 
    \int d\Omega_{d-2} \int_{-1}^{1} d\cos(\theta_1)
    (\sin\theta_1)^{-2\epsilon}(1 - \cos\theta_1)^{-j}\,.
\label{eq:Ojm0}
\end{equation}
The single momentum $p_1$ is massless and so $v_{11} = 0$, hence 
the discussion above regarding variables that are identically zero 
applies. Noting that $z_1=z=0$, we obtain the zero dimensional Mellin-Barnes representation,
\begin{equation}
    \Omega_j(0,\epsilon) = 2^{2-j-2\epsilon}\pi^{1-\epsilon}
    \frac{\Gamma(1-j-\epsilon)}{\Gamma(2-j-2\epsilon)}\,.
\label{eq:Om-j-0}
\end{equation}
This result is simple to verify, since the integral can be 
performed in terms of gamma functions after setting 
$\cos(\theta_1) \to 2s-1$. Then using Eq.~(\ref{eq:Omp}) for the 
angular volume $\Omega_{d-2}$, we obtain Eq.~(\ref{eq:Om-j-0}) 
immediately.

Although we have already derived an expression for the angular 
integral with one massive denominator, 
\begin{equation}
    \Omega_j(v_{11},\epsilon) = 
    \int d\Omega_{d-2} \int_{-1}^{1} d\cos(\theta_1)
    (\sin\theta_1)^{-2\epsilon}(1 - \beta_1\cos\theta_1)^{-j}\,,
\label{eq:Ojm}
\end{equation}
in Eq.~(\ref{eq:Om-j-m}), for the sake of completeness, let 
us nevertheless discuss how this result may be derived from our 
master formula Eq.~(\ref{eq:Oj1jn-fin}). Since now $v_{11} \ne 0$ 
and furthermore $z_1 = 2 z_{11}$, $z=z_{11}$ (see Eqs.~(\ref{eq:zk-def})~and~(\ref{eq:z-def})), we find the one-dimensional Mellin-Barnes integral representation
\begin{equation}
\begin{split}
    \Omega_j(v_{11},\epsilon) &= 
    \frac{2^{2-j-2\epsilon} \pi^{1-\epsilon}}
    {\Gamma(j)\Gamma(2-j-2\epsilon)}
    \int_{-i\infty}^{+i\infty} \frac{dz_{11}}{2\pi i}
\\&\times
    \Gamma(-z_{11}) \Gamma(j+2z_{11}) \Gamma(1-j-\epsilon-z_{11})
    (v_{11})^{z_{11}}\,.
\end{split}
\label{eq:Om-1-m-MB}
\end{equation}
This Mellin-Barnes integral can be evaluated by repeating the 
steps that lead from Eq.~(\ref{eq:Oj1jn-4}) to Eq.~(\ref{eq:Oj1jn-5}): 
after writing $\Gamma(j+2z_{11})$ as a product of the gamma 
functions $\Gamma(\frac{j}{2}+z_{11})$ and $\Gamma(\frac{j+1}{2}+z_{11})$ 
using the relation in Eq.~(\ref{eq:doubling-rel}), the resulting 
integral is of the form given in Eq.~(\ref{eq:2F1-MB}) and can 
be evaluated in terms of a ${}_2F_1$ hypergeometric function. In 
fact, we may read off the final result simply by comparing our 
integral in Eq.~(\ref{eq:Om-1-m-MB}) to the right hand side of Eq.~(\ref{eq:ang-2F1-MB}),
\begin{equation}
    \Omega_j(v_{11},\epsilon) = 
    2^{2-2\epsilon}\pi^{1-\epsilon}
    \frac{\Gamma(1-\epsilon)}{\Gamma(2-2\epsilon)}
    {}_2F_1\left(\frac{j}{2},\frac{j+1}{2},
    \frac{3}{2}-\epsilon,
    1-4v_{11}\right)\,.
\end{equation}
This result clearly agrees with our earlier one in 
Eq.~(\ref{eq:Om-j-m}) since $v_{11} = (1-\beta_1^2)/4$ 
(see Eq.~(\ref{eq:vkl-def})).

Turning to a less trivial example, consider now the angular 
integral with two massless denominators,
\begin{equation}
    \begin{split}
    \Omega_{j,k}(v_{12},\epsilon) &= 
        \int d\Omega_{d-3} 
        \int_{-1}^{1} d(\cos\theta_1)(\sin\theta_1)^{-2\eps}
        \int_{-1}^{1} d(\cos\theta_2)(\sin\theta_2)^{-1-2\eps}
\\& \times
        (1-\cos\theta_1)^{-j}
        (1-\cos\chi_2^{(1)}\cos\theta_1
        -\sin\chi_2^{(1)}\sin\theta_1\cos\theta_2)^{-k}\,.
    \end{split}
\label{eq:Ojkm0}
\end{equation}
Since $p_1^2=p_2^2=0$, only $v_{12}$ is non-zero and we must drop 
from Eq.~(\ref{eq:Oj1jn-fin}) the integrations corresponding to 
$v_{11}$ and $v_{22}$ as discussed above. Hence, we obtain a 
one-dimensional Mellin-Barnes representation. Using 
Eqs.~(\ref{eq:zk-def})~and~(\ref{eq:z-def}), we see that
$z_1 = z_2 = z = z_{12}$, so
\begin{equation}
    \begin{split}
    \Omega_{j,k}(v_{12},\epsilon) &= 
        \frac{2^{2-j-k-2\epsilon} \pi^{1-\epsilon}}
        {\Gamma(j)\Gamma(k)\Gamma(2-j-k-2\epsilon)}
        \int_{-i\infty}^{+i\infty} \frac{dz_{12}}{2\pi i}
\\& \times
        \Gamma(-z_{12})\Gamma(j+z_{12})\Gamma(k+z_{12})
        \Gamma(1-j-k-\epsilon-z_{12})(v_{12})^{z_{12}}\,.
    \end{split}
\label{eq:Om-2-0m-MB}
\end{equation}
This Mellin-Barnes integral can be evaluated immediately in 
terms of a ${}_2F_1$ hypergeometric function using 
Eq.~(\ref{eq:2F1-MB}) and we find
\begin{equation}
    \Omega_{j,k}(v_{12},\epsilon) = 
    2^{2-j-k-2\epsilon} \pi^{1-\epsilon}
    \frac{\Gamma(1-j-\epsilon) \Gamma(1-k-\epsilon)}
    {\Gamma(1-\epsilon) \Gamma(2-j-k-2\epsilon)}
    {}_2F_1(j,k,1-\epsilon,1-v_{12})\,.
\label{eq:Om-2-0m-fin}
\end{equation}
Before moving on to the next example, let us make some comments.
First, the argument of the hypergeometric function is simply 
related to the angle enclosed by the spatial parts of the 
vectors $p_1$ and $p_2$. Using Eq.~(\ref{eq:vkl-def}), we see 
immediately that $v_{12} = (1+\cos\chi_2^{(1)})/2$. Second, 
although Eq.~(\ref{eq:Om-2-0m-MB}) was derived under the assumption 
that both $j$ and $k$ are not zero or negative integers. Nevertheless, 
the final result in Eq.~(\ref{eq:Om-2-0m-fin}) applies in such cases 
as well. Finally, we note that in practical applications we often 
require the expansion of this result in $\epsilon$ around zero for 
specific integers $j$ and $k$. We may arrive at such an expansion by 
resolving the poles of the Mellin-Barnes integral in 
Eq.~(\ref{eq:Om-2-0m-MB}) using the methods of section~\ref{sec:1poles} 
and expanding the integrand. Then the expansion coefficients will 
generically be given by (sums of) one-dimensional finite Mellin-Barnes 
integrals that no longer depend on $\epsilon$. In chapter~\ref{chapter-MBanal} 
we will present several tools that can be used to analytically 
compute these integrals.

As our last example, let us consider the generalization of the 
previous integral to the case when the momentum $p_1$ is massive, 
\begin{equation}
    \begin{split}
    \Omega_{j,k}(v_{11},v_{12},\epsilon) &= 
        \int d\Omega_{d-3} 
        \int_{-1}^{1} d(\cos\theta_1)(\sin\theta_1)^{-2\eps}
        \int_{-1}^{1} d(\cos\theta_2)(\sin\theta_2)^{-1-2\eps}
\\& \times
        (1-\beta_1 \cos\theta_1)^{-j}
        (1-\cos\chi_2^{(1)}\cos\theta_1
        -\sin\chi_2^{(1)}\sin\theta_1\cos\theta_2)^{-k}\,.
    \end{split}
\label{eq:Ojkm}
\end{equation}
Since now only $v_{22}$ is identically zero, we obtain a two-dimensional 
Mellin-Barnes representation. Using $z_1 = 2z_{11}+z_{12}$, 
$z_2 = z_{12}$ and $z = z_{11}+z_{12}$ (see Eqs.~(\ref{eq:zk-def})~and~(\ref{eq:z-def})), we find
\begin{equation}
    \begin{split}
    \Omega_{j,k}(v_{11},v_{12},\epsilon) &= 
        \frac{2^{2-j-k-2\epsilon}\pi^{1-\epsilon}}
        {\Gamma(j)\Gamma(k)\Gamma(2-j-k-2\epsilon)}
        \int_{-i\infty}^{+i\infty} 
        \frac{dz_{11}}{2\pi i} \frac{dz_{12}}{2\pi i}
\\& \times
        \Gamma(-z_{11})\Gamma(-z_{12})
        \Gamma(j+2z_{11}+z_{12})\Gamma(k+z_{12})
\\& \times
        \Gamma(1-j-k-\epsilon-z_{11}-z_{12})
        (v_{11})^{z_{11}} (v_{12})^{z_{12}}\,.
    \end{split}
\label{eq:Om-2-1m-MB}
\end{equation}
We will show in section~\ref{sec:MBtoEulertoInt} that this integral can be evaluated 
in a closed form in terms of the Appell function of the first kind,
\begin{equation}
    \begin{split}
    &
    \Omega_{j,k}(v_{11},v_{12},\epsilon) =
        2^{2-j-k-2\epsilon}\pi^{1-\epsilon} 
        \frac{\Gamma(1-k-\epsilon)}{\Gamma(2-k-2\epsilon)}
        v_{12}^{-j}
   F_1\bigg(
    j, 1-k-\epsilon, 
    \\&\quad
    1-k-\epsilon, 
    2-k-2\epsilon,
    \frac{2v_{12}-1-\sqrt{1-4v_{11}}}{2v_{12}},
    \frac{2v_{12}-1+\sqrt{1-4v_{11}}}{2v_{12}}
    \bigg)\,.
    \end{split}
\label{eq:Om-2-1m-fin}
\end{equation}
As in the case of the massless two-denominator angular integral, 
it is interesting to note that although Eq.~(\ref{eq:Om-2-1m-MB}) 
was derived under the assumptions that $j$ and $k$ are not zero or 
negative integers, nevertheless, the solution in 
Eq.~(\ref{eq:Om-2-1m-fin}) is valid even for these cases. For practical applications we are often interested in the $\epsilon$ expansion of the final result in Eq.~(\ref{eq:Om-2-1m-fin}). In section~\ref{sec:expansion-special-fcns}, we will present methods that allow to perform this expansion for integer values of the parameters $j$ and $k$, based on the double sum representation of the Appell $F_1$ function given in Eq.~(\ref{eq:Appel-F1-def}). Further details and some more examples of angular integrals can be found in~\cite{Somogyi:2011ir}.

\section*{Problems}
\addcontentsline{toc}{section}{Problems}
\begin{problem}
Prove the identity in Eq.~(\ref{eq:expIdent}). \label{prob:expIdent}
\\ \noindent \hint{} For proofs of relations including also Symanzik polynomials, see~\cite{Nakanishi:1971,Nakanishi:1961}.
\end{problem}
\begin{problem}
Prove the equivalence of the Lee-Pomeransky representation of Eq.~(\ref{eq:LeePom}) with the Feynman parametrization of Eq.~(\ref{FeynSgen}).
\\
\hint{} Insert $1=\int_0^\infty d\eta\,\delta(\eta-\sum_{j=1}^{N}z_j)$, substitute
$z_j=\eta\,x_j$ for $ j=1,\ldots, N$ and identify the corresponding gamma functions. %heinrich - collider
\end{problem}
\begin{problem}  It is hard by a naked eye to see how cutting lines in Fig.~\ref{fig:2looptrees}
we get appropriate spanning tree $T$ and $k-$forest ${\cal{T}}_k$ for a given diagram (here $k=2$).  Rearrange cut diagrams in Fig.~\ref{fig:2looptrees} to see it explicitly, in a similar way as  shown in Fig.~\ref{fig:FUbox}. \label{problemFU}
\\ \noindent \hint{} For a complete solution, see~\cite{Dubovyk:2019ivv}, section 4.2.2.
\end{problem}

\begin{problem}
Generate $F$ and $U$ polynomials for integral in Fig.~\ref{fig:2looptrees}.
Do the same to find Symanzik polynomials as in Eqs.~(\ref{umassless})~and~(\ref{fmassless}). 
\\ \hint{}
 Use \mbm{} or the \texttt{Mathematica} module which can be found in~\cite{wwwFU}. 
 \label{prob:FUalgebraic}
\end{problem}

\begin{problem} 
\label{prob:mbproof} Show that Eq.~(\ref{eq:mbRHS1}) gives Eq.~(\ref{eq:mbRHS2}).
\\ \hint{} Use the  Cauchy's residues theorem.
\end{problem} 
\begin{problem} 
 Following  section~\ref{sec:3MBrepr}, show that for (\ref{eq:F1lvert}) 
the \mb{} master formula in Eq.~(\ref{MBformula}) gives
\begin{eqnarray}
    I_{\rm MB} &\sim &\int dz_1 dz_2 dz_3 \, (- s x_1 x_2)^{z_1} (- q_2^2 x_1 x_3)^{z_2}
    (- q_3^2 x_2 x_3)^{z_3} \nonumber \\
    &&\times \left(x_1 m_1^2 + x_2 m_2^2 + x_3 m_3^2\right)^{-z_1-z_2-z_3 - N_{\nu} + d/2}.
\end{eqnarray}
\end{problem}

\begin{problem} 
Assign external and internal momenta to the diagrams in Fig.~\ref{fig:planarNP}. Show that after integrating one internal momenta (removing corresponding lines where the integrated internal momenta is the only internal momenta), not all vertices in the non-planar case conserve momenta. \label{problem:PLNPmomenta}
\end{problem}

\begin{problem}
Built up the matrix $M_{\Gamma}$ of Eq.~(\ref{eq:mgamma}) for the transformation of $z$ variables in the \mb{} integrals given in Eqs.~(\ref{eq:MB-SE1l2m2dim})~and~(\ref{eq:B7NPb}).  
\\ \noindent \hint{} See the solution in \wwwaux{Zmatrix}.
\end{problem}

\begin{problem} 
Analyze a proof of the \texttt{CW} theorem  using  a notion of sector decomposition~\cite{Binoth:2000ps} as given in the appendix of~\cite{Heinrich:2021dbf}.  \label{problem_sd_cw}
\end{problem}

\begin{problem}
For the diagrams in Fig.~\ref{fig:skel1ex} find a suitable set of transformations of variables analogously as in Eq.~(\ref{eq:transfexnplvertex}) and derive Eq.~(\ref{skel1exU3loop}). 
\label{prob:3loopskeletonA}
\\ \noindent \hint{} See the solution in \wwwaux{3L}.
\end{problem}

\begin{problem}
 \label{gammazero} Consider the 2-loop sunset diagram in Fig.~\ref{fig:sunrise}. Derive corresponding \mb{} integral and find relations among gamma functions in the numerator which cancel against $\Gamma[0]$ in the denominator. \\ \noindent
 \hint{} Find suitable transformation of integration variables in the \mb{} integral and use \texttt{1BL}. See the solution in \wwwaux{Gamma0}.
\end{problem}

\begin{problem} \label{prob:beyondMPL}
Consider the banana diagram given in chapter~\ref{chapter:complex},  Fig.~\ref{fig:higherPL}. The partial fractioning for the banana's $F$ polynomial is discussed in the note~\ref{tip:banana} `Applying the CW theorem' and Eq.~(\ref{eq:bananaF}). It eventually leads to square roots, which cannot be integrated in an iterative way using HPL and MPL formalism (so solutions go beyond MPLs).
For pedagogical example in $d=2$, see~\cite{Bourjaily:2022bwx} and Sect.~\ref{sec:moregenints}.
\end{problem}

\begin{problem}
 \label{prob:vertexSkeleton}
Starting from the skeleton diagram on right of Fig.~\ref{fig:skel} generate diagrams in Fig.~4.22 in~\cite{Dubovyk:2019ivv}. At this level it is already more complicated to recognize different topologies by eye,  the two diagrams in Fig.~\ref{fig:int11} %have been treated as an independent diagram though they 
are actually the same, for topologies recognition, see e.g.~\cite{Bielas:2013rja,Gerlach:2022qnc}.
\begin{figure}[h!]
\centering 
\includegraphics[scale=.3]{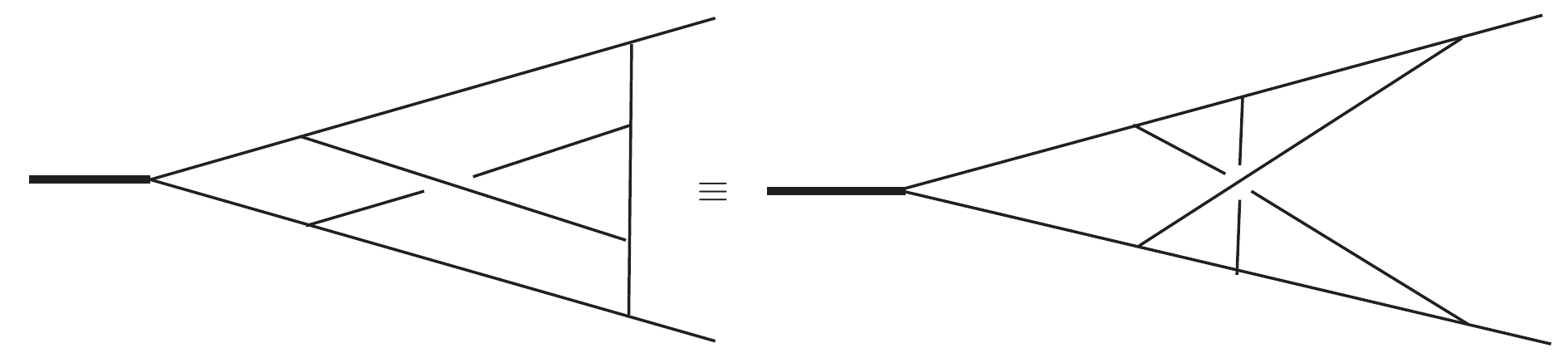}
\caption{Apparently two the same topologies.}
\label{fig:int11}
%\end{subfigure}%
\end{figure}

\end{problem}

\begin{problem} Apply the \la{} method to get a 4-dimensional \mb{} representation for the 3-loop integral $I_{\rm SE6l0m}$ of Eq.~(\ref{eq:SE6l0m}).  \label{prob:3loopmerc}
\\ \noindent \hint{} For a solution, see \wwwaux{SE6l0m}. 
\end{problem}

\begin{problem}
\label{problem_prausa1} The Method of Brackets consists of a small set of simple rules on how to rewrite a Schwinger-parameterized integral into a form to which this master formula can be applied, as discussed in section~\ref{sec:prausa}.
To see all the steps in practice, follow a non-planar two-loop didactic example discussed in section E.5 in~\cite{Blondel:2018mad}. 
\end{problem}

\begin{problem}
\label{problem_prausa2} For applying the method of brackets and \texttt{GRMT} to solve some one-loop cases, follow examples discussed in~\cite{Gonzalez:2008xm}. 
\end{problem}

\begin{problem}
\label{problem_PSint}
Evaluate the integrals in Eq.~(\ref{eq:PSint-ex-int}) and verify the result in Eq.~(\ref{eq:PSint-ex-res}). You will have to pay careful attention to the limits of integration imposed by the positivity of energies as well as the range of $\cos\theta$.
\\ \noindent \hint{} You may find it useful to partial fraction the integrand. See also \wwwaux{PSint}

\end{problem}

\begin{problem}
 \label{prob:dangles} Using the basic $d$-dimensional Gaussian integral $\int d^d k e^{-\frac{k^2}{2}}=(2\pi)^{\frac d2}$  prove Eq.~(\ref{eq:dOmq-def}) for integer $p$. 
 \\ \noindent \hint{} This is rather an elementary derivation, see for instance~\cite{Zeidler:qed}, however, often applied, see e.g. Eq.~(\ref{k_integration2}).  
\end{problem}

\begin{problem}
 \label{problem_OMB}
 Prove Eq.~(\ref{eq:vklsum-MB}).
  \\ \noindent \hint{} The result is actually just an application of the general formula in Eq.~(\ref{MBformula}), the only non-trivial part is keeping track of the indexing. In this regard, it may help to write out the double sum explicitly, 
  $\sum_{k=1}^n\sum_{l=k}^n x_k x_l v_{kl} = x_1 x_1 v_{11} + x_1 x_2 v_{12} + \ldots + x_{n-1} x_n v_{n-1n} + x_n x_n v_{nn}$.
  \end{problem}

\putbib[%
bibs/refs,%
bibs/2loops_LL16,%
bibs/Phd_Dubovyk,%
bibs/LRrefa,%
bibs/2loopsreport]
\end{bibunit}
%%%%%%%%%%%%%%%%%%%%% chapter.tex %%%%%%%%%%%%%%%%%%%%%%%%%%%%%%%%%
\begin{bibunit}[elsarticle-num-ID] % define the bib-style for the unit: elsarticle-num.bst
%  text-1; this is the corresponding section
%\putbib[2loops] % the *.bib
%\end{bibunit}
% go-on
%--- from: bibunits.sty, adapts the font size of ``References'' to section
\let\stdthebibliography\thebibliography
\renewcommand{\thebibliography}{%
\let\section\subsection
\stdthebibliography} 
%
% sample chapter
%
% Use this file as a template for your own input.
%
%%%%%%%%%%%%%%%%%%%%%%%% Springer-Verlag %%%%%%%%%%%%%%%%%%%%%%%%%%
%\motto{Use the template \emph{chapter.tex} to style the various elements of your chapter content.}
\chapter{Resolution of Singularities}
\label{chapter-singul}  

\abstract{We discuss strategies for extracting singularities of Feynman integrals connected with dimensional regularization ($\eps$ poles) using Mellin-Barnes representations. Based on a concrete example, we outline an algorithmic procedure for the Laurent expansion of \mb{} integrals in the parameter $\eps$. Some peculiarities in the analytic continuation for $\eps \to 0$ are discussed.}

\section{Where do the Poles Come From?}
\label{sec:1poles}
 
In chapter~\ref{chapter:intro} we discussed nature of singularities which appear in \texttt{FI}.  
In the context of \mb{} integrals we are especially interested in singularities connected with the dimensional parameter $d=n-2 \epsilon$  ($n$ must not necessarily be four~\cite{Tarasov:1996br}, see Problem~\ref{prob:dimnot4}).  
The $\epsilon$ regulator is transformed from Feynman parametrization of integrals to the arguments of gamma functions, as was discussed on the occasion of construction of \mb{} integrals in chapter~\ref{chapter-MBrepr}.
We will discuss now how these singularities can be resolved.
\mb{} representations allow in a systematic way to treat this class of IR and UV singularities. 

\section{Resolving Poles: Straight Line and Deformed Contours}
\label{sec:2poles} 

After defining the Mellin-Barnes representation for the Feynman integral, we are interested in further processing, aiming at analytical and numerical solutions.  Let us then evaluate further Eq.~(\ref{MB-SE1l2m-lemmas}). This integrals depends on the dimensional parameter $\epsilon$, which in the four-dimensional physical limit tends to zero. Naturally, we need the Laurent expansion of the constructed MB integral representations.  Then we can use the Cauchy theorem to get a representation in terms of a sum of contour integrals, valid at $\epsilon=0$. As discussed in section~\ref{sec:2MBrepr}, it is important that the Mellin-Barnes representation is only well defined, if the integration contour separates the left poles $\Gamma (\ldots + z)$ from the right poles $\Gamma (\ldots - z)$ and in general for the combination of gamma functions with $\epsilon =0$ that will not be the case (if the contour is chosen to be a straight line parallel to the imaginary axis). In practice two solutions to this problem have been proposed:
\begin{itemize}
        \item The ''Tausk method'' - which fixes the contours parallel to the imaginary axis and accounts for the poles crossing in the analytic continuation~\cite{Tausk:1999vh}. The idea of it is presented in Fig.~\ref{SEpoles}. This method was implemented in the \texttt{MB} program~\cite{Czakon:2005rk} written in \math{} and implemented for numerical calculations in~\cite{Anastasiou:2005cb} (unpublic), see appendix~\ref{app:MB.m}.
        \item The ''Smirnov method'' - which identifies gamma functions responsible for the generation of poles in $\epsilon$. The general observation is that $\Gamma[a+z_1]\Gamma[b-z_1]$ with $a$ and $b$ which can depend on other variables $z_i$, generates the pole $\Gamma[a+b]$. So, for instance, $\Gamma[1+z_1]\Gamma[-1-z_1-\epsilon]$ generates directly a pole $\Gamma[-\eps]$.  The algorithm for identifying all poles in $\eps$ for a given set of gamma functions is implemented in the \texttt{MBresolve} program~\cite{Smirnov:2009up},
        see appendices~\ref{app:MB.resolve}~and~\ref{appB} (Smirnov's lecture).
\end{itemize}

\begin{figure}[h!] 
\begin{center}
        \includegraphics[scale=0.6]{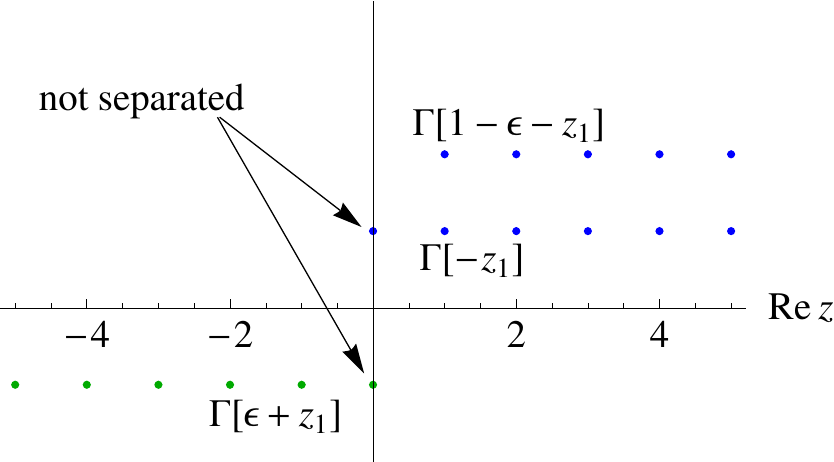}
        \includegraphics[scale=0.6]{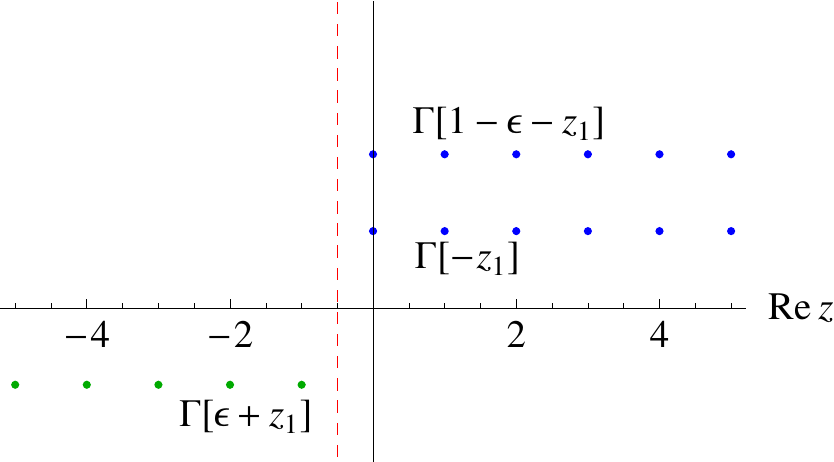}    
        \caption{The poles of the \mb{} representation in Eq.~(\ref{MB-SE1l2m-lemmas}) for negative (positive) values of ${\rm \Re}[z]$ denoted in red (blue). Left plot shows situation when $\epsilon \rightarrow 0$ which leads to non-separated left and right poles. The right plot presents a proper shift which separates the poles ($\epsilon \rightarrow 1$, $\Re(z) \rightarrow -1/2$).}
  \label{SEpoles}
\end{center}
\end{figure}

Graphically we can see what happens if we put $\epsilon \rightarrow 0$ in the derived \mb{} representation, the left plot of 
Fig.~\ref{SEpoles}. We note that left and right poles are not separated from each other. To avoid that, we take $\epsilon \neq 0$, $\epsilon \rightarrow 1$ is a good option, see  the right plot of Fig.~\ref{SEpoles}. Using this scheme according to the Tausk method, the final result ($\epsilon = 0$) will be obtained from the case where the poles are separated ($\epsilon \neq 0$). Depending whether the poles crossed the contours from left or right (when $\epsilon \rightarrow 0$), one should add or subtract the residue of the integrand on that pole.

\subsection{Bromwich Contours and Se\-pa\-ra\-tion of Poles \label{ex_contour_paral}}

To discuss a pole structure of Eq.~(\ref{MB-SE1l2m-lemmas}), let us write this integral in a form which helps to identify individual gamma functions 
(what matters are gamma functions in numerators, which we assign as $G_1,G_2,G_3$; possible poles of gamma functions in the denominator makes the integral to vanish)
 
\begin{eqnarray} 
&&aux= 
        \int_{c-i \infty}^{c+i \infty} \frac{d z_1}{2 \pi i}
        (ms)^{z_1} (-s)^{-\epsilon}
        \frac{G_1 (1 - \epsilon - z_1)^2 G_2(-z_1) G_3 (\epsilon + z_1)}
         {\Gamma (2 - 2 \epsilon - 2 z_1)}, \; ms = -\frac{m^2}{s}.\nn
         \\
         &&
         \label{eq:aux}
\end{eqnarray}

According to Fig.~\ref{SEpoles}, the following gamma functions contribute to the left and right poles in Eq.~(\ref{MB-SE1l2m-lemmas}):
\begin{itemize}
        \item Left poles: $G_3(\epsilon+z_1)$,
        \item Right poles: $G_1(1-\epsilon-z_1)$,$G_2(-z_1)$. 
%$\Gamma(2-2\epsilon-2z_1)$,
\end{itemize}

As discussed in chapter 3, positive arguments of gamma function makes the function regular, see Fig.~\ref{fig:chap2_gammareim}. However, it appears, that for $\epsilon =0$, the set of inequalities build from arguments of $G_1,G_2,G_3$

\begin{eqnarray}
1-\epsilon-z_1 & > & 0, \label{ineq1}\\
-z_1 & > & 0, \label{ineq2}\\
\epsilon+z_1 &> & 0, \label{ineq3}
\end{eqnarray}
\noindent 
results in no solution in $z_1$. This can be immediately seen comparing Eqs.~(\ref{ineq2})~and~(\ref{ineq3}).

At this stage, to define contours of integration for \mb{} integrals, we follow the Tausk approach. We fix a contour of integration for \mb{} integrals parallel to the imaginary axis (Bromwich contours, see chapter~\ref{chapter:complex}). 
\begin{svgraybox}
To separate poles of gamma functions % at $\pm i \infty$, 
we start with $\eps \neq 0$ and with real parts of integration variables $z_i$ for which \mb{} integrands are well defined (positive arguments of gamma functions in a numerator of the integrand). Then we consider poles which appear when a limit  $\eps \to 0$ is taken.  
 \end{svgraybox}

Taking non-zero $\epsilon$, e.g. $\eps=1/4$, we can already find $z_1$ which satisfies Eqs.~(\ref{ineq1})--(\ref{ineq3}), e.g. $z_1=-1/4$.

So, in Step 1, taking $\epsilon = 1/4, \Re(z_1)=-1/4$, we get 
\begin{equation}
\rm{subst0} = G_1(3/4), G_2(1/4), G_3(1/4). \label{eq:step1}    
\end{equation}

We can see that with such a choice of $\eps, z_1$, the arguments of all gamma functions in the numerator of the integrand are positive (the integral is well defined). The very next step in the calculation of Eq.~(\ref{MB-SE1l2m-lemmas}) is the analytic continuation i.e. we have to finally go down to the case where $\epsilon \rightarrow 0$.

So in Step 2, taking $\epsilon = 0, \Re(z_1)=-1/4$, we get 
\begin{equation}
\rm{subst1} = G_1(5/4), G_2(1/4), G_3(-1/4).\label{eq:step2}    
\end{equation}
   
As we can see, the analytic continuation in $\epsilon \to 0$ hits the pole at $\Re(z_1)=-1/4$  for gamma $G_3[\eps+z_1]$.
So, the original integral in Eq.~(\ref{eq:aux}) can be decomposed in the following way

\begin{eqnarray}
    aux(\epsilon =-1/4, \Re(z_1)=-1/4) &=& aux(\epsilon = 0, \Re(z_1)=-1/4) \nonumber \\&+& \mathrm{Res}_{z\to -\eps} aux(\Re(z_1)=-1/4).
    \label{eq:auxbasic}
\end{eqnarray}

As $aux$ is a 1-dimensional function in $z_1$, the residue of the $aux$ function at $z =\eps$ returns a function 
   \begin{equation}
   \mathrm{Res}_{z\to -\eps} aux(\Re(z_1)=-1/4) = ms^{-\eps} \Gamma[\eps].
   \label{eq-se2l2m-aux1}
   \end{equation}

We will solve Eq.~(\ref{eq:auxbasic}) analytically using the Cauchy's residue theorem in chapter~\ref{chapter-MBanal} and purely numerically in chapter~\ref{chapter-MBnum}, taking the integral $aux(\epsilon = 0, \Re(z_1)=-1/4)$ in the complex plane.

\subsection{Analytic Continuation in $\eps$ \label{ex_anal_eps}}
 
Let us discuss more complicated, two-loop example, with some massive propagators, see Fig.~\ref{fig:b5l2m}, which has been generated in \math{} using two following commands. To get it, in addition kinematic variables must be defined for the 4PF, which is a subject of the Problem~2 in chapter~\ref{chapter-MBrepr}.

\begin{minted}[frame=single,breaklines,fontsize=\small]{mathematica}
In[1]:= int = PR[k1, 0, n1]*PR[k2 - p2, 0, n2]*PR[k1 + k2, m, n3]*PR[k2 + p3, 0, n4]*PR[k2 - p1 - p2, m, n5];
In[2]:= PlanarityTest[{int}, {k1, k2}, DrawGraph -> True]; 
\end{minted}

\begin{equation}
    \label{eq:propags}
\end{equation}

\begin{figure}[h!] 
\begin{center}
        \includegraphics[scale=0.4]{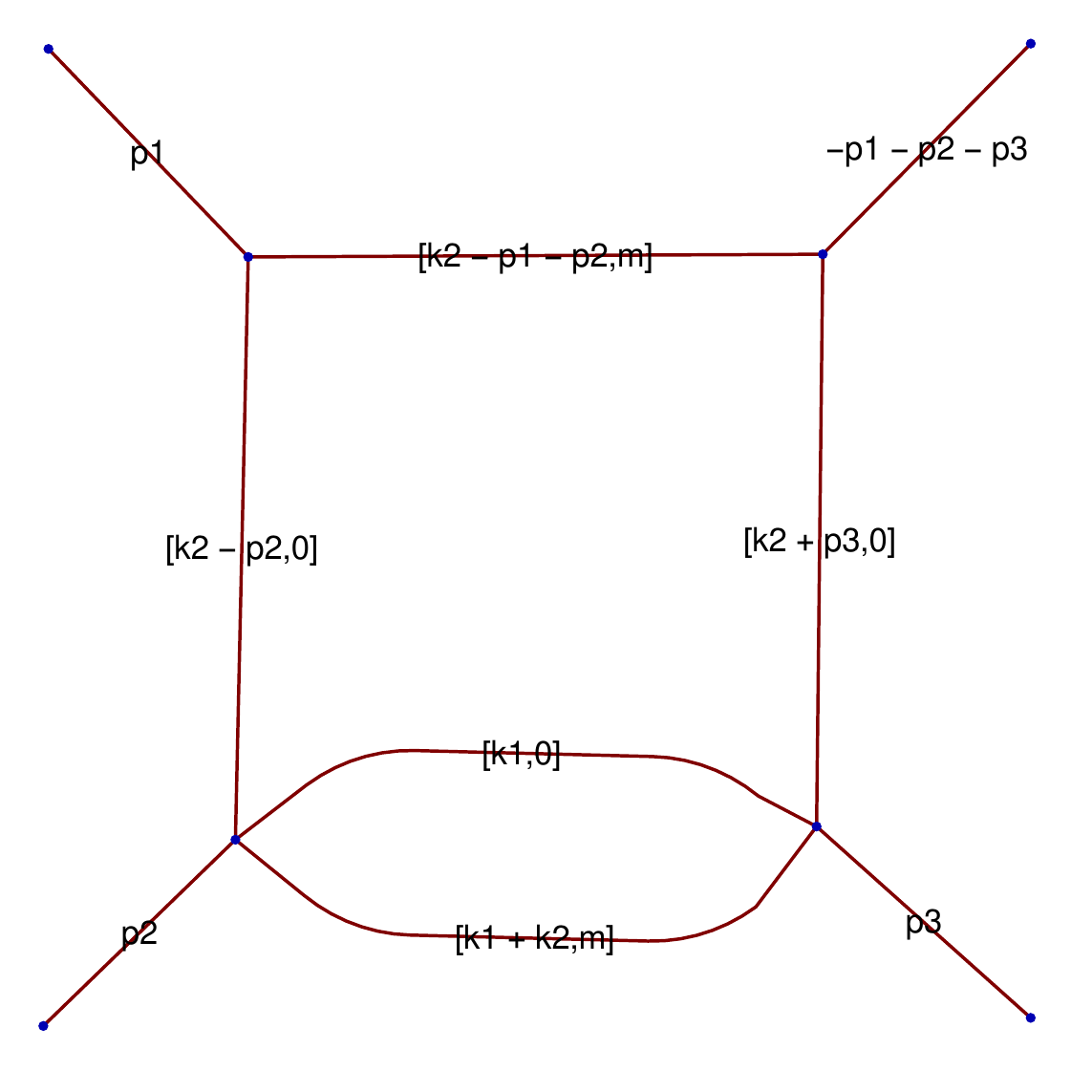}    
        \caption{Two-loop topology for Feynman integral with five internal lines (propagators, two massive internal lines). The figure generated by \pltest{}~\cite{Bielas:2013rja,ambrewww}, see appendix~\ref{app:pltest}. We will call this topology \texttt{B5l2m}, as this is a 4PF (Box) with 5 internal lines, of which 2 are massive.}
  \label{fig:b5l2m}
\end{center}
\end{figure}

\mb{} representation for this case is (see chapter~\ref{chapter-MBrepr} for details of its construction)

%B5l2m2_Springer.nb
\begin{eqnarray}
&&G(1)_{\text{B5l2m}} = -(-m^2)^{-\eps - z_1 + z_2} (-s)^{z_3}
(-t)^{-1 - \eps + z_1 - z_2 - z_3}\ \label{eq:mb_b5l2m} \\
&&\nn \\
&&\int_{c_1-i \infty}^{c_1+i \infty} \frac{d z_1}{2 \pi i} \int_{c_2-i \infty}^{c_2+i \infty} \frac{d z_2}{2 \pi i} \int_{c_3-i \infty}^{c_3+i \infty} \frac{d z_3}{2 \pi i} \nn \\
&&\nn \\
&  & \bigg( 
 G_1[1 - 2 \eps - z_1] G_2[1 + z_1] G_3[
  \eps + z_1] G_4[-z_2] G_5[-\eps + z_1 - z_2 - z_3] G_6[-z_3]
  \nn  \\
      &&\nn \\
&\times &
    \frac{  G_7[
  1 + z_3] G_8[-z_1 + z_3] G_9[1 + \eps - z_1 + z_2 + z_3] G_{10}[
  1 - z_1 + 2 z_2 + 2 z_3]}{\Gamma[2 - 2 \eps] \Gamma[1 - 2 \eps + z_1] \Gamma[1 - z_1 + 2 z_3]} \bigg). \nn
\end{eqnarray}     

Notation $G(1)_{\text{B5l2m}}$ means that \mb{} representation is for a two-loop scalar integral, see Eq.~(\ref{eq-bha}). As in Eq.~(\ref{eq:aux}), to track a discussion of identified poles in $\eps$, we numbered gamma functions in the numerator. 
This integral is three dimensional. Similarly to the previous case, we are looking for singular resolution in $\epsilon$, i.e. we want to regularize the integral in the limit $\epsilon \to 0$. Taking $\eps =0$ in arguments of all gamma functions in the numerator of Eq.~(\ref{eq:mb_b5l2m}), we can seek a solution for a set of inequalities analogous to those in Eqs.~(\ref{ineq1})--(\ref{ineq3}).
There are cases where we get no solution, see Problem~\ref{prob:chap4_nosol}. Incidentally, to find that some system of inequalities has no solutions can be quite tricky, fortunately systems like \math{} are able to solve such problems with eye blick speed using command \texttt{FindInstance}\footnote{The shortest \math{} description of the function: \\{\rm FindInstance[expr,vars] finds an instance of vars that makes the statement expr be True.}}. 

It is useful to see how it works in details for $G(1)_{\text{B5l2m}}$ as in a case of $aux$ function of Eq.~(\ref{eq:aux}) we had one-dimensional integral which terminated in a number Eq.~(\ref{eq-se2l2m-aux1}) after taking a single residue. Learning by doing, we will be able to understand the way how a general algorithm for analytic continuation in $\eps$ works. 

Taking $\epsilon = 1/10$, we find that the arguments of the $G_1$--$G_{10}$ functions in Eq.~(\ref{eq:mb_b5l2m}) are positive

\begin{eqnarray}
  \rm{rule0} =  
   \{z_1 \to  - \frac{1}{20}, z_2 \to -\frac{5}{16}, z_3 \to -\frac{1}{40} \}    \label{eq:b5l2m_step1} 
\end{eqnarray}
 
For $z_1,z_2,z_3$ values as in Eq.~(\ref{eq:b5l2m_step1}), and $\eps =0$ (which we aim at), we get following values of arguments for $G_1$--$G_{10}$

\begin{eqnarray}
&& G_1[7/6]  G_2[5/6] G_3[-(1/6)] G_4[11/24] G_5[3/8]^2 G_6[1/12] G_7[
  11/12] \nn \\
  &&G_8[1/12] G_9[5/8] G_{10}[1/12].
\end{eqnarray}

Coming from $\eps=1/10$ to $\eps =0$, function $G_3[\eps+z_1]$ changes sign, it goes through the pole. We pick up residue at $z_1=-\eps$

\begin{eqnarray}
&& Res[G(1)_{\text{B5l2m}}]|_{z_1=-\eps} =
-(m^2)^{z_2} (-s)^{z_3}
(-t)^{-2\eps - z_2 - z_3}  \int_{c_2-i \infty}^{c_2+i \infty} d z_2 \int_{c_3-i \infty}^{c_3+i \infty} d z_3 \nn \\
&&\nn \\
&  &  G_1[1 - \eps]^2
  G_2[-z_2] G_3[-2 \eps - z_2 - z_3]^2 G_4[-z_3] G_5[1 + z_3] G_6[\eps +z_3]
  \nn \\
  &&\nn \\
&  &    \frac{   G_7[1 + 2 \eps + z_2 + z_3] G_8[1 + \eps + 2 z_2 + 2 z_3]}{G[1 - 3 \eps] \Gamma[2 - 2 \eps] \Gamma[1 + \eps + 2 z_3]}.  \label{eq:mb_b5l2m_res1}
\end{eqnarray}

Now, we are left with two variables $z_2$, $z_3$ and a new value of $\epsilon$, which is, according to Eqs.~(\ref{eq:b5l2m_step1})~and~(\ref{eq:mb_b5l2m_res1}) equal to $\eps=-z_1=1/20$.
With this new value of $\eps$, we continue analytic continuation with $\eps \to 0$.
 
Taking $z_2$, $z_3$ as in Eq.~(\ref{eq:b5l2m_step1}), and $\eps=0$,
we get for arguments of $G_1$--$G_8$ functions in Eq.~(\ref{eq:mb_b5l2m_res1})

\begin{eqnarray}
&&G_1[1] G_2[5/16] G_3[27/80] G_4[1/40] G_5[39/40] G_6[-(1/40)] \nn \\
&&G_7[53/
  80] G_8[13/40].    
\end{eqnarray}

This time, coming from $\eps=1/20$ to $\eps =0$, function $G_6[\eps+z_3]$ changes sign, it goes through the pole. We pick up residue at $z_3=-\eps$

\begin{eqnarray}
&& Res[G(1)_{\text{B5l2m}}]|_{z_1=-\eps, z_3=-\eps} =
-(m^2)^{z_2} (-s)^{-\eps}
(-t)^{-\eps - z_2}  \int_{c_2-i \infty}^{c_2+i \infty} d z_2 \nn \\
&&\nn \\
&  &  \frac{G_1[1 - \eps]^2 G_2[\eps] G_3[-\eps - z_2]^2 G_4[-z_2] G_5[1 + \eps + z_2] G_6[
  1 - \eps + 2 z_2]}{\Gamma[1 - 3 \eps] \Gamma[2 - 2 \eps]}. \nn\\&& \label{eq:mb_b5l2m_res2}
\end{eqnarray}

Please note that if we took instead of $1/10$ in Eq.~(\ref{eq:b5l2m_step1}) a larger value, $1/3$ say, in the last step also $G_8$ would result in negative value of the argument, and we would have to take additional residue at $1 + \eps + 2 z_2 + 2 z_3 =0$, adding one more integral of lower dimension.  

Now, we are left with one variable $z_2$, and a new value of $\epsilon$, which is, according to Eq.~(\ref{eq:b5l2m_step1}) $\eps=-z_3=1/12$.
With this new value of $\eps$, we continue analytic continuation with $\eps \to 0$.

Taking $z_2$ as in Eq.~(\ref{eq:b5l2m_step1}), and $\eps=0$,
we get for arguments of $G_1$--$G_6$ functions in Eq.~(\ref{eq:mb_b5l2m_res2})

\begin{equation}
G_1[1] G_2[0] G_3[5/16]^2 G_4[5/16] G_5[11/16] G_6[(3/8)].
\end{equation}

We can see that there is no negative arguments anymore, and we ended the procedure\footnote{Function $G_2[0]$ is actually $\Gamma[\eps]$, which can be Taylor expanded, adding to the singularity at $\eps$ expansion of final functions in Eq.~(\ref{eq:B5l2m2fin}).}. The final result is
\begin{equation}
\begin{split}
&
G(1)_{\text{B5l2m}}{}_{\{\eps \to \frac{1}{10},\, z_1 \to  - \frac{1}{20},\, z_2 \to -\frac{5}{16},\, z_3 \to -\frac{1}{12} \} }
\\ &\quad =
G_2(1)_{\text{B5l2m}}{}_{\{\eps \to 0,\, z_1 \to  - \frac{1}{20},\, z_2 \to -\frac{5}{16},\, z_3 \to -\frac{1}{12} \} } 
\\ &\qquad + 
Res[G(1)_{\text{B5l2m}}{}_{\{\eps \to 0,\, z_1 \to  - \frac{1}{20},\, z_2 \to -\frac{5}{16},\, z_3 \to -\frac{1}{12} \} } ]\big|_{z_1=-\eps}
\\ &\qquad +
Res[G(1)_{\text{B5l2m}}{}_{\{\eps \to 0,\, z_1 \to  - \frac{1}{20},\, z_2 \to -\frac{5}{16},\, z_3 \to -\frac{1}{12} \} } ]\big|_{z_1=-\eps,\, z_3=-\eps} \label{eq:B5l2m2fin}
\end{split}
\end{equation}

The whole procedure of analytic continuation in $\eps$ is automatized in the \mbm{} package~\cite{Czakon:2005rk}, see \texttt{MBoptimezedRules} command there. 
Using this command, we can easily find a solution for a set of inequalities (Problem~\ref{prob:dimnot4}), which written in terms of substitutions can take a form

\begin{minted}[frame=single,breaklines,fontsize=\small]{mathematica}
In[3]:= rules = MBoptimizedRules[fin, eps -> 0, {}, {eps}]}
Out[3]:= {{eps -> 11/64, {z1 -> -3/64, z2 -> -7/32, z3 -> -1/32}}
\end{minted}

The notebook file with a complete solution can be found in  \wwwaux{B5l2m2}. Algorithm 1 summarizes the main steps we discussed above for analytic continuation in $\eps$. %Generalization of Example~1. 
More elaborated algorithms can be found in~\cite{Czakon:2005rk}~and~\cite{Anastasiou:2005cb}.

\begin{algorithm}
\label{algor1}
	 \SetAlgoNoLine
	\LinesNumbered
	\SetKw{KwGoTo}{go to}
	\caption{Pseudocode for analytic continuation in $\eps$, basic steps.}
	\label{alg:alg1}
	%\SetAlgoLined
	Let $F(\{z_i\},\eps)$ be a \mb{} integral over $z_i$ variables and $\eps$ is the dimensional regulator to be analytically continued to 0  \;
	\For{\label{marker} a set of  $\{\Gamma_j \}$ in the numerator of $F$} 
	{Solve[$Arg(\Gamma_j)>0,\{\{z_i\},\eps\} \subseteq X$] \;} 
	\For{$\{z_i\}$ Looking for change of sign in arguments of gammas}{Find  $\{\Gamma_k \} \in \Omega_0$; $k<j$: Solve[$Arg(\Gamma_j)<0,\{\{z_i\},\eps \} \subseteq X, \eps=0$] \;}
	{\If{$\Omega_0 = \emptyset$}{$F(\{z_i\},\eps)$ is a solution \;
		\Else{	Take residue of $F$ for each $\Gamma_i \in \Omega_0$, Res[F]$|_{Arg[\Gamma_i]=0} \equiv R_i$ \;
For each $\{\Gamma_k \} \in \Omega_0$ and $\{z_i\} \in X$, 
Solve[$Arg(\Gamma_k)==0,\{\eps_k\},  \{X, \eps_k \} \equiv  X_k$] \; 
For $\eps =\eps_k $  and $\{z_i, \eps \} \subseteq X$ and $F(\{z_i\},\eps) = R_i$ \;
\KwGoTo~\ref{marker}}}
}

\end{algorithm}
 
\subsection{ Analytic Continuation in Auxiliary Parameters \label{tips_trick_eta}}

In order to find proper integration paths for the MB-integrations, i.e.  the condition that all arguments of gamma functions are positive,  it may happen that starting with $\eps \neq 0$ no solutions for the inequalities like those in Eqs.~(\ref{ineq1})--(\ref{ineq3}) can be found.
Fortunately,  we have a freedom to increase the number of manipulated variables and add to the set of $\eps$ and $z_i$ parameters the powers of propagators $n_k$, see e.g. Eq.~(\ref{eq:propags}). We do not fix power of propagators, instead we take $n_k \to n_k +\eta_k$ ($\eta \in \mathbb{R}$). Then, of course, after finding
a solution for $\eps, z_i, \eta_k$, 
analytic continuation must be done $\eps \to 0$ {\it and} $\eta \to 0$.  
This can be automatized by a modification of Algorithm~\ref{alg:alg1}. Namely, instead starting with $F(\{z_i\}, \eps )$, we take $F(\{z_i\}, \eps , \eta_1 )$ where $\eta_1$ is  the power of the first propagator. It is better to start with one by one propagator as analytical continuation of each $\eta_i$ may proliferate additional integrals. 
If we cannot find a valid rule, $\nu_2$ is treated analogously, and the procedure can be continued until we find a set  of $\{z_i, \eps , \eta_j \}$ which is successful. {If it happens that an analytical continuation in only one additional parameter $\eta$ is not sufficient, the program will stop with a proper remark.}
We introduced the auxiliary file \texttt{MBnum.m} \cite{ambrewww} to \mbm{} which realizes this procedure
and the subsequent automatized analytic continuation, $\epsilon$-expansion,
and numerics. The complete solution is in \wwwaux{B5nf}. 

The integral under consideration is depicted in Fig.~\ref{fig:b5nf}. 

\begin{figure}[h!] 
\begin{center}
        \includegraphics[scale=0.4]{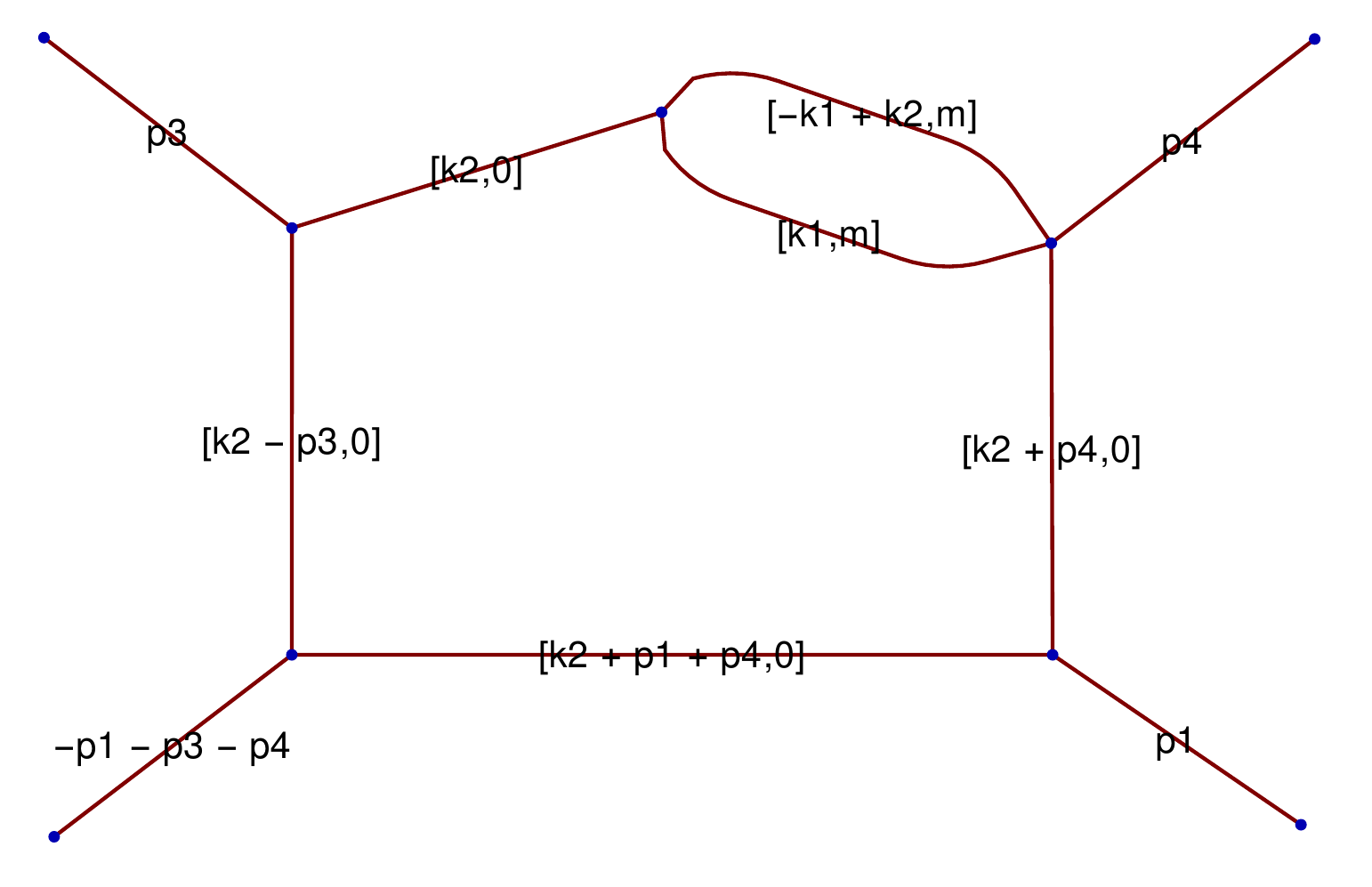}    
        \caption{Two-loop topology for Feynman integral with five internal lines (propagators, two massive internal lines). The figure generated by \pltest{}~\cite{Bielas:2013rja,ambrewww}, see appendix~\ref{app:pltest}. We will call this topology \texttt{B5nf},  with 4 internal lines and one SE insertion with two massive propagators.}
  \label{fig:b5nf}
\end{center}
\end{figure}

This is an interesting integral as \texttt{MBRules} or \texttt{MBoptimezedRules} gives solutions only for $n_5$ (and $n_6$).
There are the following additional steps (see the frame below) connected with $n_5$ in analytic continution ($n_5 \to n_5 + \eta$). Please note that in the last line, \texttt{MBexpand} is used with factor one, for $\eps$ expansion at two loops the factor is \texttt{Exp[2*eps*EulerGamma]}. 

\begin{minted}[frame=single,breaklines,fontsize=\small]{mathematica}
In[4]:= rules = MBoptimizedRules[fin,eta->0,{},{eps,eta}]}
Out[4]:= {{eps->-9/8, eta->25/16}, {z1->-3/8, z2->-1/2, z3->-31/32}}
In[5]:= Step1cont = MBcontinue[fin,eta->0,rules]};
Out[5]:= 10 integrals found ...
In[6]:= after = MBexpand[Step1cont,1,{eta,0,0}]} 
\end{minted}

\begin{equation}
    \label{eq:etas}
\end{equation}

%\end{tips}

In the examples for resolution of singularities, we explored the `Tausk method', used in \mbm{}. It is an intuitively easy approach to follow and apply. However, we should acknowledge that the `Smirnov method' is a good alternative. In fact, it has been used in~\cite{Gluza:2010ws}~or~\cite{Dubovyk:2019szj} for a simple reason that it does not need additional regulators as in subsection~\ref{tips_trick_eta} (so no proliferation of additional \mb{} integrals during analytic continuation).
 The file \texttt{MB\_B5nf\_Springer.nb} in \cite{www_aux_springer}   
 shows a difference between applying \mbm{} and \texttt{MBresolve.m}.

\section*{Problems}
\addcontentsline{toc}{section}{Problems}
\begin{problem}
 \label{prob:dimnot4} \mb{} integrals must not necessarily be considered for $n=4$ where in general the dimensional parameter is $d=n-2 \eps$. See for instance~\cite{Tarasov:1996br}~or~\cite{Dubovyk:2022frj} for application in recent studies.  
Make analytic continuation for some \mb{} representations for general $n$. 
\\ \noindent \hint{} For examples, see~\cite{ambrewww}. 
\end{problem}

\begin{problem} \label{prob:chap4_nosol}
 Show that for $\eps \to 0$, the set of solutions for inequalities constructed from arguments of gamma functions being positive in the numerator of the \mb{} representation  represented by Fig.~\ref{fig:b5nf} is empty. Find a solution, this time with $\eps \neq 0$.  \\
\hint{} The solution can be found using semidefinite programming~\cite{matouek07understanding}, e.g. in Python. Alternatively, it can be coded in \math{}, using the aforementioned function \texttt{FindInstance}. See \wwwaux{B5nf}. 
\end{problem}
\begin{problem}
Analyze analytic continuation in $\eps$ and $\eta$ parameters and identify poles of gamma functions, finding corresponding residues in \newline  \wwwaux{B5l2m2}.  
\end{problem}

\putbib[%
bibs/refs,%
bibs/2loops_LL16,%
bibs/Phd_Dubovyk,%
bibs/LRrefa,%
bibs/2loopsreport]
\end{bibunit}
%%%%%%%%%%%%%%%%%%%%% chapter.tex %%%%%%%%%%%%%%%%%%%%%%%%%%%%%%%%%
\begin{bibunit}[elsarticle-num-ID] % define the bib-style for the unit: elsarticle-num.bst
%  text-1; this is the corresponding section
%\putbib[2loops] % the *.bib
%\end{bibunit}
% go-on
%--- from: bibunits.sty, adapts the font size of ``References'' to section
\let\stdthebibliography\thebibliography
\renewcommand{\thebibliography}{%
\let\section\subsection
\stdthebibliography}  

\chapter{Analytic Solutions}
\label{chapter-MBanal}  

\abstract{We discuss analytic solutions of \mb{} representations for virtual loop and real phase space integrals. We present examples of how \mb{} loop integrals can be transformed to nested sums over residues and derive analytic solutions for some one- and two-dimensional sums. The evaluation of real phase space angular integrals using \mb{} representations and corresponding sums is also examined. We show how higher dimensional \mb{} integrals can be decoupled by changing \mb{} variables and how \mb{} integrals can be expressed via real Euler-type integrals. We also present a brief discussion on the symbolic evaluation of Euler-type integrals as a way of obtaining analytic solutions. Finally, we consider approximations of \mb{} integrals in ratios of both kinematic parameters and masses present in propagators of \texttt{FI}, as well as by the method of expansion by regions.}

\section{Residues and Symbolic Summations}
\label{sec:1anal}

 \begin{svgraybox}
  Summing over residues and performing the limit $\eps \to 0$ may be exchanged. In any case: When crossing a pole by letting ${\eps \to 0}$, take the residue and add it to the
expression. 
Generates lower dimensional sums. 
Then look at the resulting Mellin-Barnes integrals and for the series of remaining poles.
Take the sum over residues when closing the contour to the right or to the left.
\end{svgraybox}

This is the only recipe, doesn't it look simple? Well, at first sight, yes. Indeed, some sums can be done immediately, see e.g. section~\ref{sec:simpleinvitation} and evaluation of a binomial sum in \wwwaux{miscellaneous}.

Mathematically, Mellin-Barnes integrals are simply complex contour 
integrals whose integrands have a special structure: they involve 
products of Euler gamma functions whose arguments depend on polynomials 
of the integration variables, as well as fixed quantities raised to 
powers that again can depend on the variables of integration. Thus, 
it is quite natural to try to apply Cauchy's residue theorem to perform the 
integraions over the Mellin-Barnes variables. In order to do so, 
we must first find a suitable closed contour to which we can apply 
the theorem. As the Mellin-Barnes integration path itself runs from $-i\infty$ to $+i\infty$, the obvious (and it turns out correct) guess is to add to this path a semi-circle ``at infinity'' and obtain a closed contour in this way. We are allowed to close the contour as long as the integrand vanishes at complex infinity. 

\subsection{Choosing {the} Contour \label{sec:choosingcontour}}
In practical applications, we can usually fulfill this requirement if 
we close the contour in the correct direction. In order to get a 
feeling for why this is true and what complications might arise, let 
us consider the following simple one-dimensional Mellin-Barnes integral
\begin{equation}
    \int_{-i\infty}^{+i\infty} \frac{dz}{2\pi i}
    \Gamma(a+z)\Gamma(b-z) x^z\,.
\end{equation}
Let us consider how the integrand behaves at complex infinity. First, 
using the asymptotic formula for $\Gamma(z)$ as $|z|\to\infty$, Eq.~(\ref{AsymptoticGamma}), 
we have
\begin{equation}
    \Gamma(a+z) \simas^{|z|\rightarrow \infty} 
\sqrt{\frac{2\pi}{z}} z^a e^{z(\ln z-1)}\,.
\end{equation}
Furthermore, using
\begin{equation}
    \Gamma(1-b+z)\Gamma(b-z) = \frac{\pi}{\sin[\pi(b-z)]}
\end{equation}
we obtain
\begin{equation}
    \Gamma(b-z) \simas^{|z|\rightarrow \infty} 
\frac{1}{\sqrt{2\pi z}} z^b e^{-z(\ln z-1)} \frac{\pi}{\sin[\pi(b-z)]}\,.
\end{equation}
Then, the full integrand behaves as
\begin{equation}
\begin{split}
    \Gamma(a+z)\Gamma(b-z) x^z  \simas^{|z|\rightarrow \infty} &
    \sqrt{\frac{2\pi}{z}} z^a e^{z(\ln z-1)}
    \frac{1}{\sqrt{2\pi z}} z^b e^{-z(\ln z-1)} \frac{\pi}{\sin[\pi(b-z)]}
    e^{z \ln x}
\\ &=
    z^{a+b-1} \frac{\pi}{\sin[\pi(b-z)]} e^{z \ln x}\,.
\end{split}
\label{eq:exinf1}
\end{equation}
This clearly goes to zero for $|z|\to \infty$, provided we choose 
$\Re(z) > 0$ (i.e., we close the contour to the right) if $0<x<1$ 
and $\Re(z) < 0$ (i.e., we close the contour to the left) if $1<x$, see Problem~\ref{prob:inf1}. 
(Obviously we must avoid the poles at $z=b+n$, $n\in \mathbb{N}$.) 
Hence, as promised, we can arrange to have the integrand go to zero 
along the semi-circle at infinity if we choose the correct contour.

\begin{tips}{Specific Contours}
This example is not entirely generic, e.g. consider
\begin{equation}
    \int_{-i\infty}^{+i\infty} \frac{dz}{2\pi i}
    \Gamma(z) A^z
\end{equation}
where for $z\to -\infty$ the $\Gamma(z)$ goes to zero faster 
than $A^z$ goes to infinity for any $A$, so the contour must always 
be closed to the left. {Notice, however that this particular integral does not come from applying the basic \MB{} identity Eq.~(\ref{mb1}) to some expression, since the gamma function of the form $\Gamma(\ldots + z)$ is not accompanied by another gamma function with $z$-dependence $\Gamma(\ldots - z)$.}
\end{tips}
 
{For multi-dimensional integrals, a similar analysis can be performed for each integration variable separately, taking into account all gamma functions whose argument involves the particular integration variable. In order to make this precise,} 
let us consider the generic multi-dimensional Mellin-Barnes integral {in Eq.~(\ref{eq:mbmulti}),} 
\begin{equation}
    \int_{-i\infty}^{+i\infty}\ldots \int_{-i\infty}^{+i\infty}
    \prod_{j=1}^{n}\frac{dz_j}{2\pi i}
    f(z_1,\ldots,z_n,x_1,\ldots,x_p,a_1,\ldots,a_q) 
    \frac{\prod_k \Gamma(A_k + V_k)}{\prod_l \Gamma(B_l + W_l)}.
\end{equation}
Recall that in this expression the $x_j$ are fixed parameters while $A_k$ and $B_l$ are linear combinations of the parameters $a_i$. The latter take the form $a_i = n_i + b_i \epsilon$, with $n_i\in \mathbb{N}$ and $b_i\in \mathbb{R}$. Last, $V_k$ and $W_l$ are linear combinations of the integration variables $z_i$ and $f$ in practice is a product of powers of the $x_i$ with exponents that are linear combinations of $a_i$ and $z_i$ (more generally it can be an analytic function). Let us analyze the behavior of the integrand at complex infinity in the variable $z_n$. Clearly this is no loss of generality, since we may simply relabel the integration variables such that the variable of interest becomes $z_n$. Then our integral takes the form
\begin{equation}
\begin{split}
    \int_{-i\infty}^{+i\infty}\ldots &\int_{-i\infty}^{+i\infty}
    \prod_{j=1}^{n-1}\frac{dz_j}{2\pi i}
    g(z_2,\ldots,z_n,x_1,\ldots,x_p,a_1,\ldots,a_q)
    \frac{\prod'_k \Gamma(A_k + V_k)}{\prod'_l \Gamma(B_l + W_l)}
\\ \times
    &\int_{-i\infty}^{+i\infty}\frac{dz_n}{2\pi i}
    \prod_p \Gamma(C_p + z_n)^{\alpha_p}
    \prod_q \Gamma(D_q - z_n)^{\beta_q} X^{z_n}\,.
\end{split}
\label{eq:MB-gen}
\end{equation}
Above, the primes on the products $\prod'_k$ and $\prod'_l$ mean that 
only those factors for which $V_k$ and $W_l$ do not involve $z_n$ 
are included, while gamma functions whose arguments depend on $z_n$ 
are made explicit in the second line. Notice that we have separated 
the gamma functions that depend on $z_n$ as $\Gamma(\ldots+z_n)$ and 
those whose argument involves $-z_n$ as $\Gamma(\ldots-z_n)$ and $C_p$, 
$D_q$ are the corresponding linear combinations of parameters {\em and} 
other integration variables. We have moreover made explicit the 
(integer) powers $\alpha_p, \beta_q\in \mathbb{Z}$ for the gamma 
functions involving $z_n$ as well as the 
piece of the $f$ function which involves $z_n$. Thus $X$ is a 
product of the parameters $x_j$, perhaps raised to powers whose exponents 
do not depend on any other integration variable $z_j$. We note that 
technically this is not the most general expression possible, as linear 
combinations of the integration variables in the arguments of gamma 
functions can also appear where the integration variables appear with 
coefficients different from $\pm 1$ (e.g. $\Gamma(1+2z_n)$, etc.). 
However, the analysis of the behavior of the integrand at complex infinity 
that follows is easy to repeat for such a more general integrand which is considered for instance in~\cite{Gluza:2016fwh}. 

Turning finally to examining the behavior of the integrand in 
Eq.~(\ref{eq:MB-gen}) at complex infinity in $z_n$, we can use the 
following asymptotic fomulae for gamma functions, discussed in 
section~\ref{sec:gamma},
\begin{equation}
    \Gamma(a+z) \simas^{|z|\rightarrow \infty} 
\sqrt{\frac{2\pi}{z}} z^a e^{z(\ln z-1)}\,,
\end{equation}
and 
\begin{equation}
    \Gamma(b-z) \simas^{|z|\rightarrow \infty} 
\frac{1}{\sqrt{2\pi z}} z^b e^{-z(\ln z-1)} \frac{\pi}{\sin[\pi(b-z)]}\,.
\end{equation}
to obtain
\begin{equation}
\begin{split}
&
    \prod_p \Gamma(C_p + z_n)^{\alpha_p}
    \prod_q \Gamma(D_q - z_n)^{\beta_q} X^{z_n} 
    \simas^{|z_n|\rightarrow \infty}
    (2\pi)^{\frac{1}{2}\sum_p \alpha_p - \frac{1}{2}\sum_q \beta_q}
\\ &\qquad\times
    z_n^{z_n(\sum_p \alpha_p - \sum_q \beta_q) 
    + \sum_p(C_p-\frac{1}{2})\alpha_p + \sum_q(D_q-\frac{1}{2})\beta_q}
    e^{-(\sum_p\alpha_p - \sum_q\beta_q)z_n}
\\ &\qquad\times
    \prod_q \left(\frac{\pi}{\sin[\pi(D_q-z_n)]}\right)^{\beta_q} 
    X^{z_n}\,.
\end{split}
\end{equation}
The behavior of this expression at complex infinity is controlled by 
the expression on the second line. In particular, we see that if 
the difference $\sum_p \alpha_p - \sum_q \beta_q$ is negative (recall 
$\alpha_p$ and $\beta_q$ are integers), then for large and positive 
$\Re(z)$ the expression goes to zero because of the presence of the 
factor $z_n^{-c \cdot z_n}$ with $c>0$. So in this case, we must close 
the contour ``to the right'', i.e., such that $\Re(z) \to +\infty$. 
On the other hand, if $\sum_p \alpha_p - \sum_q \beta_q$ is positive 
the contour must be closed ``to the left'' such that $\Re(z) \to -\infty$. 
However, it may happen that $\sum_p \alpha_p - \sum_q \beta_q = 0$ and 
in fact, this is the typical situation in practical applications. 
\begin{tips}{Balanced Mellin-Barnes Integrals}
The Mellin-Barnes integral
\begin{equation}
    \int_{-i\infty}^{+i\infty} \frac{dz_j}{2\pi i} 
    \prod_{k=1}^{n_+} \Gamma(a_{k} + z_j)^{\alpha_{k}}
    \prod_{l=1}^{n_-} \Gamma(a_{l} - z_j)^{\beta_{l}}\,,
    \qquad
    \alpha_k\,,\beta_l \in {\mathbb Z}
\end{equation}
is \emph{balanced} in the variable $z_j$ if 
$\sum_{k=1}^{n_+} \alpha_{k} = \sum_{l=1}^{n_-} \beta_{l}$. 
\begin{svgraybox}
Informally, an integral is balanced if for each function of the 
form $\Gamma(\ldots+z_j)$ the integrand contains also a function 
of the form $\Gamma(\ldots-z_j)$, with gamma functions in the 
denominator being counted as appearing a negative number of times. 
\end{svgraybox}
A multi-dimensional Mellin-Barnes integral is balanced if it is balanced 
in each integration variable. The Mellin-Barnes representations we encounter in actual applications to Feynman integrals are typically balanced by construction. This is because the basic identity expressing the sum $(A+B)^{-\lambda}$ as a Mellin-Barnes integral,
\begin{equation}
    \frac{1}{(A+B)^\lambda} = \frac{1}{\Gamma(\lambda)}\int_{-i\infty}^{+i\infty} 
    \frac{dz}{2\pi i}
    \Gamma(\lambda+z)\Gamma(-z) A^{-\lambda-z} B^z
    \label{eq:MBformula}
\end{equation}
is obviously balanced, and it is not hard to see that repeated 
applications of this formula will also lead to balanced integrands, 
e.g.
\begin{equation}
\begin{split}
    \frac{1}{(A+B+C)^\lambda} &= \frac{1}{\Gamma(\lambda)}\int_{-i\infty}^{+i\infty} 
    \frac{dz_1}{2\pi i}
    \Gamma(\lambda+z_1)\Gamma(-z_1) A^{-\lambda-z_1} (B+C)^{z_1}
\\ &=
    \frac{1}{\Gamma(\lambda)}
    \int_{-i\infty}^{+i\infty} \frac{dz_1}{2\pi i} \frac{dz_2}{2\pi i}
    \Gamma(\lambda+z_1)\Gamma(-z_1+z_2)\Gamma(-z_2)
    A^{-\lambda-z_1} B^{z_1-z_2} C^{z_2}\,.
\end{split}
\end{equation}
The integrand above is seen to be balanced both in $z_1$ and $z_2$ 
and we call attention to the fact that one specific gamma function 
may play a role for determining if the integrand is balanced for 
more than one variable, as is the case with $\Gamma(-z_1+z_2)$ 
above. This particular gamma function must be included when 
counting both $\Gamma(\ldots-z_1)$ functions as well as
$\Gamma(\ldots+z_2)$ functions. Hence, the Mellin-Barnes integrals 
we encounter when evaluating Feynman integrals are essentially 
balanced by construction, and so typically 
$\sum_p \alpha_p - \sum_q \beta_q = 0$ in Eq.~(\ref{eq:MB-gen}) in 
practical applications. Finally, we note that polygamma functions
\begin{equation}
    \psi(z) = \frac{\Gamma'(z)}{\Gamma(z)}\,,
    \qquad
    \psi^{(n)}(z) = \frac{d^n \psi(z)}{dz^n}\,,
\end{equation}
do not enter the balance counting, so an integral like
\begin{equation}
    \int_{-i\infty}^{+i\infty} \frac{dz}{2\pi i} 
    \Gamma(1+z)\Gamma(-z) \psi(2+z)
\end{equation}
is considered balanced.  
\end{tips}
For balanced integrals, i.e., when $\sum_p \alpha_p - \sum_q \beta_q = 0$ in Eq.~(\ref{eq:MB-gen}), the asymptotic expression simplifies and we find
\begin{equation}
\begin{split}
&
    \prod_p \Gamma(C_p + z_n)^{\alpha_p}
    \prod_q \Gamma(D_q - z_n)^{\beta_q} X^{z_n} 
    \simas^{|z_n|\rightarrow \infty}
    z_n^{\sum_p(C_p-\frac{1}{2})\alpha_p + \sum_q(D_q-\frac{1}{2})\beta_q}
\\ &\qquad\times
    \prod_q \left(\frac{\pi}{\sin[\pi(D_q-z_n)]}\right)^{\beta_q} 
    X^{z_n}\,.
\end{split}
\label{eq:MB-balanced-asy}
\end{equation}
Away from the poles at $z=b+n$, $n\in \mathbb{N}$, this expression 
essentially has the form $\sim z_n^c X^{z_n}$ with some constant $c$. 
Then we find that it is the magnitude of $X$ which determines if the 
integrand vanishes at complex infinity in a particular direction. 
Assuming that $X$ is real and positive, the integrand clearly goes 
zero for $|z_n|\to \infty$, provided we choose $\Re(z) > 0$ and close 
the contour to the right if $0<X<1$, while for $1<X$ we must demand 
$\Re(z) < 0$ and close the contour to the left. (Obviously the poles 
at $z=b+n$, $n\in \mathbb{N}$ need to be avoided.) Hence, as promised, 
we can arrange to have the integrand go to zero along the semi-circle 
at infinity if we choose the correct contour. 

Before moving on, let 
us note one further complication which can arise. If both 
$\sum_p \alpha_p - \sum_q \beta_q = 0$ {\em and} $X=1$, then the integrand 
simply behaves as a power of $z_n$ at infinity. Evidently the exponent 
is given by $\sum_p(C_p-\frac{1}{2})\alpha_p + \sum_q(D_q-\frac{1}{2})\beta_q$ which may be negative, positive or 
zero (e.g. at $\epsilon=0$) and the integrand is not guaranteed to 
vanish at complex infinity. In this case one option is to introduce 
an auxiliary variable $0<X<1$, obtain the solution with this $X\ne 1$ 
and finally, take the $X\to 1$ limit. We will come back briefly to this point 
in the next section. 

\subsection{From \MB{} Integrals to Sums}
\label{ssec:MBtoSums}
Let us then assume that we have identified the proper closed contour 
for some Mellin-Barnes integration variable $z$. It is now in principle 
a simple matter to apply Cauchy's residue theorem to perform the integration 
in $z$ and obtain the result in the form of a sum over residues. 
As an example, consider the simple one-dimensional Mellin-Barnes 
integral 
\begin{equation}
    I(x) = \int_{\bar{z}-i\infty}^{\bar{z}+i\infty} \frac{dz}{2\pi i}
    \Gamma(1+z)\Gamma(-z) x^z\,.
\label{eq:Ix-example}
\end{equation}
For definiteness, let us assume that the contour runs parallel to the 
imaginary axis with constant real part that is between zero and one, 
$0< \bar{z} <1$ and that $x$ is real and positive. By our previous 
analysis, we must close the contour to the right or the left depending 
on the magnitude of $x$. For $x<1$, we must have $\Re(z) \to +\infty$ 
and so our closed contour encircles the poles of $\Gamma(-z)$ at 
$z=1,2,\ldots$. Then, using Eq.~(\ref{eq:basicresgamma})
\begin{equation}
    \mathrm{Res}_{z\to n} \Gamma(-z) = - \frac{(-1)^n}{n!}\,,
    \qquad n\in \mathbb{N}
\end{equation}
we immediately find (note that our contour runs clockwise)
\begin{equation}
    I(x) = -2\pi i \sum_{n=1}^{\infty}\frac{1}{2\pi i} 
    \left[- \Gamma(1+n)\frac{(-1)^n}{n!} x^n\right]
    =
    \sum_{n=1}^{\infty} (-x)^n\,.
\label{eq:Ix-example-right}
\end{equation}
The sum is convergent for $0<x<1$ and it is of course elementary. 
We obtain
\begin{equation}
    I(x) = -\frac{x}{1+x}\,,\qquad 0<x<1\,.
\end{equation}
\begin{tips}{Elementary Sums}
We recall some elementary sums that we will make use of in the following. We begin with the simple geometric series and discuss some simple generalizations. To start recall that
\begin{equation}
    \sum_{n=0}^{\infty} x^n = \frac{1}{1-x},
\label{eq:geometric-sum}
\end{equation} 
where the sum is convergent for $|x|<1$. One simple derivation of this result proceeds to show that the $N$-th partial sum $s_N$ is simply $S_N = \frac{1+x^{N+1}}{1-x}$. Then for $|x|<1$ we find Eq.~(\ref{eq:geometric-sum}) by taking the $N\to \infty$ limit. However, Eq.~(\ref{eq:geometric-sum}) can be established also by considering the Taylor expansion of $f(x)=(1-x)^{-1}$ around $x=0$. Clearly $f(0) = 1$, $f'(0)=1\cdot(1-x)^{-2}|_{x=0}=1$, $f''(0)=1\cdot2\cdot(1-x)^{-2}|_{x=0}=2!$ and so on, hence $f^{(n)}(0) = 1\cdot 2 \cdot \ldots \cdot n(1-x)^{-n-1}|_{x=0} = n!$. Thus
\begin{equation}
    (1-x)^{-1} = 1 + x + \frac{2!}{2!}x^2 + \ldots = \sum_{n=0}^{\infty} x^n.
\end{equation}

The advantage of this second method is that it allows to consider some immediate generalizations. In particular, consider first $f(x)=(1-x)^{-m}$ for some positive integer $m$. In this case $f(0)=1$, $f'(0) = m(1-x)^{-m-1}|_{x=0} = m$, $f''(0) = m(m+1)(1-x)^{-m-2}|_{x=0} = m(m+1)$ and in general $f^{(n)}(0) = m(m+1)\cdots(m+n-1)(1-x)^{-m-n}|_{x=0} = m(m+1)\cdots(m+n-1)$. Using the ratio test, one can show that the Taylor series converges for $|x|<1$ and thus
\begin{equation}
    (1-x)^{-m} = 1 + m x + \frac{m(m+1)}{2!} x^2 + \ldots 
    = \sum_{n=0}^{\infty} \frac{m(m+1)\cdots(m+n-1)}{n!} x^n
\end{equation}
so we find the summation formula
\begin{equation}
    \sum_{n=0}^{\infty} \frac{(m+n-1)!}{n!(m-1)!} x^n = \frac{1}{(1-x)^m}.
\label{eq:powersum1}
\end{equation}

We may also apply the same arguments for fractional powers. So let us set $\alpha = \frac{p}{q}$ where $p$ and $q$ are positive integers and examine the Taylor expansion of the function $f(x)=(1+x)^\alpha$ around $x=0$. We see that $f(0)=1$, while $f'(0) = \alpha (1+x)^{\alpha-1}|_{x=0} = \alpha$, $f''(0) = \alpha(\alpha-1) (1+x)^{\alpha-2}|_{x=0} = \alpha(\alpha-1)$ and in general $f^{(n)}(0) = \alpha(\alpha-1)\cdots(\alpha-n+1) (1+x)^{\alpha-n}|_{x=0} = \alpha(\alpha-1)\cdots(\alpha-n+1)$. The convergence of the Taylor series for $|x|<1$ can be established with the ratio test. Thus
\begin{equation}
    (1+x)^\alpha = 1 + \alpha x + \frac{\alpha(\alpha-1)}{2!} x^2 + \dots 
    = \sum_{n=0}^{\infty} \frac{\alpha\cdot(\alpha-1)\cdot\ldots\cdot(\alpha-n+1)}{n!} x^n.
\label{eq:frac-binom-thrm}
\end{equation}
Introducing the \emph{generalized binomial coefficient} $\binom{\alpha}{n}$,
\begin{equation}
    \binom{\alpha}{n} = \frac{\alpha\cdot(\alpha-1)\cdot\ldots\cdot(\alpha-n+1)}{n!},
\end{equation}
we can write
\begin{equation}
    \sum_{n=0}^{\infty}  \binom{\alpha}{n} x^n = (1+x)^\alpha.
\label{eq:powersum2}
\end{equation}
This result is known as the \emph{binomial theorem for fractional exponents}.

We note finally that all of the sums above are special cases of the Gauss hypergeometric sum introduced in section~\ref{eq:hyper1}. In particular, they are all just realizations of the general formula
\begin{equation}
    {}_2F_1(-\alpha,\beta,\beta,-z) = (1+z)^\alpha\qquad
    \beta\mbox{ arbitrary}.
\end{equation}

Finally, let us recall one more well-known but useful trick for evaluating sums where some positive integer power of the index of summation appears in the summand. Consider e.g.,
\begin{equation}
    \sum_{n=0}^{\infty} n^k x^n,\qquad k\in \mathbb{N}^+.
\label{eq:nk-geometric-sum}
\end{equation}
One approach to dealing with this sum is to notice that formally
\begin{equation}
    x \frac{d}{dx}\frac{1}{1-x} = x \frac{d}{dx} \sum_{n=0}^{\infty} x^n
    =
    x \sum_{n=0}^{\infty} n x^{n-1},
\end{equation}
and so 
\begin{equation}
    \sum_{n=0}^{\infty} n x^n = \frac{x}{(1-x)^2}.
\label{eq:n1-geometric-sum}
\end{equation}
Obviously this operation can be iterated $k$ times once the specific value of $k$ is known to compute the sum in Eq.~(\ref{eq:nk-geometric-sum}). The same trick can be useful also for the generalized sums discussed above.
\end{tips}
Continuing with our example, for $1<x$ we must close the contour to the left. 
Then the contour encircles the poles of $\Gamma(1+z)$ at $z=-1,-2,\ldots$ 
{\em and} the pole of $\Gamma(-z)$ at $z=0$
Using
\begin{equation}
    \mathrm{Res}_{z\to -n} \Gamma(1+z) = \frac{(-1)^{n-1}}{(n-1)!}\,,
    \qquad n\in \mathbb{N}^+\,,
\end{equation}
we obtain (notice that the contour now runs counter-clockwise)
\begin{equation}
    I(x) = 2\pi i \left\{\sum_{n=1}^{\infty}\frac{1}{2\pi i}
    \left[\frac{(-1)^{n-1}}{(n-1)!} \Gamma(n) x^{-n}\right]
    -\Gamma(1+0)x^0\right\}
    =
    -\sum_{n=1}^{\infty} (-x)^{-n} - 1\,.
\end{equation}
The sum is convergent now for $1<x$ and is again elementary. We find
\begin{equation}
    I(x) = -\frac{x}{1+x}\,,\qquad 1<x\,.
\end{equation}
So in this case, we indeed obtain the same functional form for 
both $0<x<1$ and $1<x$. On the one hand, this should not come as 
a surprise. Indeed, our starting Mellin-Barnes integral only differed 
form the basic Mellin-Barnes formula of Eq.~(\ref{mb1}) by the position 
of the contour: it did not separate all poles of $\Gamma(1+z)$ and 
$\Gamma(-z)$ as the pole of $\Gamma(-z)$ at $z=0$ was on the wrong 
side of the contour. So we could have evaluated the integral simply 
by shifting the contour to some position between $-1$ and $0$ and 
accounting for the residue of the pole at $z=0$. So letting 
$\bar{z}' \in (-1,0)$, we compute for all $0<x$
\begin{equation}
\begin{split}
    I(x) &= \int_{\bar{z}-i\infty}^{\bar{z}+i\infty} \frac{dz}{2\pi i}
    \Gamma(1+z)\Gamma(-z) x^z
\\ &=
    \int_{\bar{z}'-i\infty}^{\bar{z}'+i\infty} \frac{dz}{2\pi i}
    \Gamma(1+z)\Gamma(-z) x^z - 1
\\ &=
    \Gamma(1)(1+x)^{-1} - 1 = -\frac{x}{1+x}\,,\qquad 0<x\,.
\end{split}\label{int1dim1}
\end{equation}
On the other hand, this phenomena of obtaining the same functional 
form of the solution for both $x<1$ and $x>1$ is not generic as the next 
example demonstrates.

%%%% PITFALS example!
\begin{tips}{Regions of Validity of the Solution}
To 
illustrate this, we consider the Mellin-Barnes integral
\begin{equation}
    I(x) = \int_{\bar{z}-i\infty}^{\bar{z}+i\infty} \frac{dz}{2\pi i}
    \frac{\Gamma(-z)}{\Gamma(3-z)} x^z\,,
\end{equation}
where again let us assume that the contour is parallel to the 
imaginary axis and $\bar{z} $ is between zero and one. Applying our 
analysis regarding the choice of contours, we see that once more we 
must close the contour to the right if $0<x<1$ and to the left if 
$1<x$. In the former case, the closed contour, running clockwise, 
encircles the poles of the integrand at $z=1$ and $z=2$. Notice 
that in this case there are in fact no other poles inside the contour, 
since 
\begin{equation}
    \frac{\Gamma(-z)}{\Gamma(3-z)} = \frac{1}{(2-z)(1-z)(-z)} 
    = -\frac{1}{2(z-2)} + \frac{1}{z-1} - \frac{1}{2z}\,.
\end{equation}
It is also trivial to read off the residues at the poles,
\begin{equation}
    \mathrm{Res}_{z\to 1}\frac{\Gamma(-z)}{\Gamma(3-z)} = 1\,,\qquad
    \mathrm{Res}_{z\to 2}\frac{\Gamma(-z)}{\Gamma(3-z)} = -\frac{1}{2}
\end{equation}
hence, the sum over poles is finite and we simply have
\begin{equation}
    I(x) = - \sum_{n=1}^{2} 1 \cdot x^n = -x + \frac{x^2}{2}\,,\qquad 0<x<1\,.
\end{equation}    
On the other hand, when $1<x$, the contour runs counter-clockwise 
and encircles the single pole at $z=0$ with residue
\begin{equation}
    \mathrm{Res}_{z\to 0}\frac{\Gamma(-z)}{\Gamma(2-z)} = -\frac{1}{2}\,.
\end{equation}
Then the sum over poles involves just a single term and we find
\begin{equation}
    I(x) = \sum_{n=0}^{0} \left(-\frac{1}{2}\right) \cdot x^n 
    = -\frac{1}{2}\,,\qquad 1<x\,.
\end{equation}
So finally we have
\begin{equation}
    I(x) = \begin{cases}
        -x + \frac{x^2}{2} & \text{if $0<x<1$} \\ &\\
        -\frac{1}{2} & \text{if $1 \le x$}
    \end{cases}\,.
\end{equation}
The lesson to take away form these simple examples is that one must be 
mindful of both the direction in which the contour is closed as well as 
the range over which the summation must be taken. 

In the case of multi-dimensional Mellin-Barnes integrals, closing the countours for each integration variable in different directions can lead to different multiple sum representations of the result, each convergent in different ranges of the variables, see ref.~\cite{Ananthanarayan:2020fhl} for further details.
\end{tips}
Finally, let us briefly comment on the case of $x=1$. First, notice that if $x=1$ in 
Eq.~(\ref{eq:Ix-example}), then we are precisely in the situation discussed below 
Eq.~(\ref{eq:MB-balanced-asy}): $\sum_p(C_p-\frac{1}{2})\alpha_p + \sum_q(D_q-\frac{1}{2})\beta_q = (1-\frac{1}{2})^1 + (0-\frac{1}{2})^1 = 0$ and $X=1$. So it is not evident 
how the contour should be closed. In fact, we can also see that this case requires 
special care since if we proceed naively and close the contour, say, to the right, we 
obtain (just set $x=1$ in Eq.~(\ref{eq:Ix-example-right}))
\begin{equation}
I(1) = \sum_{n=1}^{\infty} (-1)^n.
\end{equation}
Clearly this sum does not converge! Had we closed the contour to the left, the same 
divergent sum would have appeared. In such situations, one option is in fact to compute 
$I(x)$ instead of $I(1)$ directly and take the $x\to 1$ limit. In our case, we find 
simply
\begin{equation}
I(1) = \lim_{x\to 1} I(x) = \lim_{x\to 1} -\frac{x}{1+x} = -\frac{1}{2}.
\end{equation}
%\begin{svgraybox}
The correctness of this result can be checked simply by evaluating the integral 
numerically. This is easily done in \math{}
\begin{minted}[frame=single,breaklines,fontsize=\small]{mathematica}
In[1]:= NIntegrate[I*1/(2 Pi I)*Gamma[1 + z] Gamma[-z] /. {z -> 1/2 + I t}, {t, -Infinity, Infinity}]
Out[1]:= -0.5 + 0. I
\end{minted}
The extra factor of $i$ is the Jacobian associated with the change of integration 
variable $z \to 1/2 + i t$ which we have used to integrate along the correct contour.
%\end{svgraybox}

In order to highlight how these considerations apply in the 
multi-dimensional case, let us look at the following example. Consider 
\begin{equation}
    I(x,y) = 
    \int_{\bar{z}_1-i\infty}^{\bar{z}_1+i\infty}
    \int_{\bar{z}_2-i\infty}^{\bar{z}_2+i\infty}
    \frac{dz_1}{2\pi i}\frac{dz_2}{2\pi i}
    \Gamma(1+z_1)\Gamma(1+z_2)\Gamma(-z_1-z_2) x^{z_1} y^{z_2}
\end{equation}
Let us assume that both $\bar{z}_1$ and $\bar{z}_2$ are between 
zero and one and that $\bar{z}_1 + \bar{z}_2 < 1$. Consider the 
situation when both $x$ and $y$ are positive but smaller than one, 
$0<x,y<1$. Then, starting with the $z_1$ integration, we have that the 
contour in this variable must be closed to the right and will thus 
encircle the poles of $\Gamma(-z_1-z_2)$, which lie at $z_1 = n-z_2$, 
$n\in\mathbb{N}$. However, notice that not all of these poles are inside 
the contour! In fact, the pole at $z_1 = -z_2$, corresponding to $n=0$ 
has a negative real part (since $\bar{z}_2 > 0$) and hence is not inside 
the closed contour. The next pole at $z_1 = 1-z_2$ (corresponding to $n=1$) 
is inside the contour only if $\Re(z_1) < 1-\Re(z_2)$ and we see now 
the significance of the condition $\bar{z}_1 + \bar{z}_2 < 1$ above. 
The rest of the poles at $z_1 = n - z_2$ for $n\ge 2$ are all inside the 
contour since $\Re(z_1) < n-\Re(z_2)$ is clearly satisfied for 
$0<\bar{z}_1,\bar{z}_2<1$ and $n\ge 2$. Then, using
\begin{equation}
\begin{split}
    \mathrm{Res}_{z_1 \to n - z_2}&
    \left[\Gamma(1+z_1)\Gamma(1+z_2)\Gamma(-z_1-z_2) x^{z_1} y^{z_2}\right]
    = 
\\&=
\Gamma(1+n-z_2)\Gamma(1+z_2)\frac{-(-1)^n}{n!}x^{n-z_2} y^{z_2}
\end{split}
\end{equation}
we can perform the integration for $z_1$ and obtain
\begin{equation}
\begin{split}
I(x,y) &= \int_{\bar{z}_2-i\infty}^{\bar{z}_2+i\infty}
    \frac{dz_2}{2\pi i} (-2\pi i) \frac{1}{2\pi i}
    \sum_{n_1=1}^{\infty}\Gamma(1+n_1-z_2)\Gamma(1+z_2)
    \frac{-(-1)^{n_1}}{n_1!}x^{n_1-z_2} y^{z_2}
\\ &=
    \int_{\bar{z}_2-i\infty}^{\bar{z}_2+i\infty}
    \frac{dz_2}{2\pi i}
    \sum_{n_1=1}^{\infty}\Gamma(1+n_1-z_2)\Gamma(1+z_2)
    \frac{(-x)^{n_1}}{n_1!} \left(\frac{y}{x}\right)^{z_2}\,.
\end{split}
\end{equation}
Notice how the summation over $n_1$ runs from one to infinity, as 
explained above. Next, we would like to perform the integration over 
$z_2$ by using the residue theorem. Examining the behaviour of the 
integrand at infinity, we see that we must close the integration 
contour to the right if $0<y/x<1$ or to the left if $1< y/x$. Let 
us assume the former for the sake of example, so that the closed 
contour encircles the residues of $\Gamma(1+n_1-z_2)$ at 
$z_2 = 1+n_1+n_2$ with $n_2 \in \mathbb{N}$. In order to determine 
the range of summation in $n_2$, we note that $1+n_1\ge 2$, since 
$n_1\ge 1$ and so each pole of $\Gamma(1+n_1-z_2)$ lies inside the 
closed contour since $Re(z_2)<1$. The residues at $z_2 = 1+n_1+n_2$ 
are easy to compute and we find
\begin{equation}
\begin{split}
    \mathrm{Res}_{z_2 \to 1+n_1+n_2}&
    \left[
    \Gamma(1+n_1-z_2)\Gamma(1+z_2)
    \frac{(-x)^{n_1}}{n_1!} \left(\frac{y}{x}\right)^{z_2}
    \right]
\\ &=
    \frac{-(-1)^{n_2}}{n_2!} \Gamma(2+n_1+n_2) \frac{(-x)^{n_1}}{n_1!}
    \left(\frac{y}{x}\right)^{1+n_1+n_2}\,.
\end{split}
\end{equation}
Then the integration over $z_2$ gives the result
\begin{equation}
\begin{split}
    I(x,y) &= -(2\pi i) \frac{1}{2\pi i} \sum_{n_2=0}^{\infty}
    \sum_{n_1=1}^{\infty}  
    \frac{-(-1)^{n_2}}{n_2!} \Gamma(2+n_1+n_2) \frac{(-x)^{n_1}}{n_1!}
    \left(\frac{y}{x}\right)^{1+n_1+n_2}
\\ &=
    \frac{y}{x}\sum_{n_1=1}^{\infty} \sum_{n_2=0}^{\infty}
    \frac{\Gamma(n_1+n_2+2)}{\Gamma(n_1+1)\Gamma(n_2+1)}(-y)^{n_1}
    \left(-\frac{y}{x}\right)^{n_2}\,,
    \qquad 0<y<x<1\,.
\end{split}
\label{eq:simple-double-sum}
\end{equation}
%\commjg{can this be example with summing up or in miscellanous file?}
In the last step, we have exchanged the order of the summations 
which we are allowed to do if the sums are absolutely convergent. 
We will not pursue this example any further, as hopefully it is clear 
how one would derive similar multiple sum representations of for the 
cases we have not considered explicitly, e.g. for $0<x<y<1$ and so on.

\begin{tips}{Evaluating Sums with \math{}}
The double sum in Eq.~(\ref{eq:simple-double-sum}) is actually elementary and can be computed in \math{} immediately, see \wwwaux{basicsums}

\newpage

\begin{minted}[frame=single,breaklines,fontsize=\small]{mathematica}
In[2]:= summand = y/x Gamma[n1+n2+2]/Gamma[n1+1]/Gamma[n2+1] (-y)^n1 (-y/x)^n2;
In[3]:= sum = Sum[summand, {n2, 0, Infinity}]
Out[3]:= (x (-y)^n1 y ((x+y)/x)^-n1 Gamma[2+n1])/((x+y)^2 Gamma[1+n1])
In[4]:= sum = Sum[sum, {n1, 1, Infinity}]
Out[4]:= -((x^2 y^2 (2 x+2 y+x y))/((x+y)^2 (x+y+x y)^2))
\end{minted}
It is simple to understand both steps of the computation. First, consider the inner summation over $n_2$ for some fixed $n_1 \in \mathbb{N}^+$. This sum can easily be brought to the form of the elementary sum in Eq.~(\ref{eq:powersum1}),
\begin{equation}
\begin{split}
    \sum_{n_2=0}^{\infty}
    \frac{\Gamma(n_1+n_2+2)}{\Gamma(n_1+1)\Gamma(n_2+1)} \left(-\frac{y}{x}\right)^{n_2} 
    &=
    \sum_{n_2=0}^{\infty}
    \frac{(n_1+n_2+1)!}{n_1! n_2!} \left(-\frac{y}{x}\right)^{n_2}    
\\&=
    (n_1+1)\sum_{n_2=0}^{\infty}
    \frac{(n_1+2+n_2-1)!}{(n_1+2-1)! n_2!} \left(-\frac{y}{x}\right)^{n_2} 
\\&=
    (n_1+1)\left(1+\frac{y}{x}\right)^{-n_1-2},
\end{split}
\end{equation}
where in the last line we used Eq.~(\ref{eq:powersum1}) with $m=n_1+2$ and $n=n_2$. The final summation in $n_1$ is then essentially a geometric sum, with the generalization that $n_1$ also appears in the numerator of the summand, as in Eq.~(\ref{eq:nk-geometric-sum}). We have seen that such sums are easy to evaluate and we can use Eqs.~(\ref{eq:geometric-sum})~and~(\ref{eq:n1-geometric-sum}) to find the final result (notice though, that the $n_1$ summation runs from $n_1=1$, not $n_1=0$).
\end{tips}

The previous examples make it clear that computing Mellin-Barnes 
integrals by use of the residue theorem generally leads to multiple sums 
involving gamma functions as well as powers of parameters. Thus, our next 
task is to develop tools for the evaluation of such sums. However, before 
we turn to this, let us examine one further example which will serve 
also to motivate the class of sums we will examine. Thus, consider the 
following Mellin-Barnes integral,
\begin{equation}
    I(x) = \int_{\bar{z}-i\infty}^{\bar{z}+i\infty}\frac{dz}{2\pi i}
    \Gamma(z)^2 \Gamma(-z)^2 x^z\,,
\end{equation}
with $0<\bar{z}<1$ and $0<x<1$. We can perform the integration using 
Cauchy's theorem by closing the contour to the right, which thus 
encircles the poles of the integrand at $z = n$ for $n=1,2,\ldots$. 
Since $0<\bar{z}<1$, the pole at $z=0$ is not inside the contour. 
We can compute the residue and we find
\begin{equation}
    \mathrm{Res}_{z \to n}
    \left[\Gamma(z)^2 \Gamma(-z)^2 x^z\right]
    =
    \frac{x^n}{n^2}\left[
    \ln x + 2 \psi^{(0)}(n) - 2 \psi^{(0)}(n+1)
    \right]\,.
\end{equation}
We will explain one approach to computing such symbolic residues in detail in section~\ref{sect:1dmb-sol-with-sums} below, however in many situations they are simple to obtain also with \math{},
\begin{minted}[frame=single,breaklines,fontsize=\small]{mathematica}
In[5]:= Assuming[n > 0 && Element[n, Integers], 
 Residue[Gamma[z]^2 Gamma[-z]^2 x^z, {z, n}]]
Out[5]:= ((-1)^(-2 n) x^n Gamma[n]^2 (Log[x]+2 PolyGamma[0,n] - 2 PolyGamma[0,1+n]))/(n!)^2
\end{minted}
Since the contour runs clockwise we find
\begin{equation}
    I(x) = -(2\pi i) \frac{1}{2\pi i}\sum_{n=1}^{\infty}
    \frac{x^n}{n^2}\left[
    \ln x + 2 \psi^{(0)}(n) - 2 \psi^{(0)}(n+1)
    \right]\,.
\end{equation}
From Eq.~(\ref{eq:psi0N}) we see that the polygamma function can be written as
\begin{equation}
    \psi^{(0)}(n) = -\gamma + \sum_{k=1}^{n-1}\frac{1}{k}\,,
\label{eq:psi0_sumrep}
\end{equation}
and so we obtain the following double sum representation of the integral
\begin{equation}
    I(x) = -(2\pi i) \frac{1}{2\pi i}\sum_{n=1}^{\infty}
    \frac{x^n}{n^2}\left[
    \ln x + 2\sum_{k=1}^{n}\frac{1}{k} 
    - 2\sum_{k=1}^{n-1}\frac{1}{k}
    \right]\,.
\end{equation}
Notice that this representation involves {\em nested sums}: the upper 
limit of the inner summation over $k$ depends on the index of the outer 
summation over $n$. For completeness, we note that in this particular 
case, the nesting of the sums is illusory, since using 
$\psi^{(0)}(n+1) = \psi^{(0)}(n) + \frac{1}{n}$, which follows immediately 
from Eq.~(\ref{eq:psi0_sumrep}), and we obtain simply
\begin{equation}
    I(x) = -\sum_{n=1}^{\infty}
    \frac{x^n}{n^2}\left(
    \ln x - \frac{2}{n}
    \right)\,.
\label{eq:Iex-Zsums}
\end{equation}
The sums that appear here are no longer elementary in the sense that 
the results can no longer be expressed with elementary functions such 
as rational functions and logarithms. Hence, we must develop tools and 
techniques to evaluate such sums. Furthermore, we emphasize that 
in more elaborate applications, evaluating Mellin-Barnes integrals via 
the residue theorem produces genuine multiple nested sums. Thus, next 
we turn our attention to a class of such sums.

\subsection{Z- and S- Nested Sums}
\label{ssec:nestedsums}

Motivated by the previous examples, we define a class of functions 
called {\em Z-sums} by the following recursive formula~\cite{Moch:2001zr}
\begin{equation}
\begin{split}
Z(n) &= \begin{cases}
        1 & \text{if $n\ge 0$}\,, \\
        0 & \text{if $n < 0$}\,,
    \end{cases}      
\\
Z(n;m_1,\ldots,m_k;x_1,\ldots,x_k) &= 
    \sum_{i=1}^{n} \frac{x_1^i}{i^{m_1}} 
    Z(i-1;m_2,\ldots,m_k;x_2,\ldots,x_k)\,.
\end{split}
\label{eq:Z-sum-def}
\end{equation}
Here $k$ is called the {\em depth} of the sum, while 
$w = m_1 + \ldots + m_k$ is the {\em weight} of the sum. Unfolding 
the recursion, one obtains the equivalent definition
\begin{equation}
    Z(n;m_1,\ldots,m_k;x_1,\ldots,x_k) =
        \sum_{n\ge i_1 > i_2 > \ldots > i_k > 0}
        \frac{x_1^{i_1}}{i_1^{m_1}} \ldots \frac{x_k^{i_k}}{i_k^{m_k}}\,.
\label{eq:Z-sum-def2}
\end{equation}
We note that the upper limit of summation, $n$, can be infinity.
In much the same way, we also define {\em S-sums} by
\begin{equation}
\begin{split}
S(n) &= \begin{cases}
        1 & \text{if $n > 0$}\,, \\
        0 & \text{if $n \le 0$}\,,
    \end{cases}      
\\
S(n;m_1,\ldots,m_k;x_1,\ldots,x_k) &= 
    \sum_{i=1}^{n} \frac{x_1^i}{i^{m_1}} 
    S(i;m_2,\ldots,m_k;x_2,\ldots,x_k)\,.
\end{split}
\label{eq:S-sum-def}
\end{equation}
Once again, we can unfold the recursion to arrive at the equivalent 
definition
\begin{equation}
    S(n;m_1,\ldots,m_k;x_1,\ldots,x_k) =
        \sum_{n\ge i_1 \ge i_2 \ge \ldots \ge i_k \ge 0}
        \frac{x_1^{i_1}}{i_1^{m_1}} \ldots \frac{x_k^{i_k}}{i_k^{m_k}}\,.
\label{eq:S-sum-def2}
\end{equation}
Again, $n=\infty$ is allowed. The difference between Eqs.~(\ref{eq:Z-sum-def2})~and~(\ref{eq:S-sum-def2}) is only in the upper summation boundary, see the discussion of the property {\bf S-Z1} further in this section. 

$Z$- and $S$-sums define a rather general class of functions which 
includes as special cases classical polylogarithms $\mathrm{Li}_m(x)$, 
Nielsen's polylogarithms $S_{n,p}(x)$, the harmonic polylogarithms of 
Remiddi and Vermaseren $H_{m_1,\ldots,m_k}(x)$, as well as Goncharov's 
multiple polylogarithms $G(a_1,\ldots,a_w;z)$ discussed in chapter~\ref{chapter:complex} (see in particular sections~\ref{sec:Li-n}~and~\ref{sec:MPLs}). Objects such as multiple 
$\zeta$-values $\zeta(m_1,\ldots,m_k)$, Euler-Zagier sums 
$Z_{m_1,\ldots,m_k}(n)$ and harmonic sums $S_{m_1,\ldots,m_k}(n)$, 
which are perhaps less than well-known are also special cases of $Z$- and 
$S$-sums~\cite{Moch:2001zr}.  

For $n=\infty$ $Z$-sums are by definition the multiple polylogarithms 
of Goncharov, introduced in Eq.~(\ref{eq:Li-def})
\begin{eqnarray}
    Z(\infty;m_1,\ldots,m_k;x_1,\ldots,x_k) = 
    \mathrm{Li}_{m_k,\ldots,m_1}(x_k,\ldots,x_1)\,.
    \label{eq:Z-sum-to-Li}
\end{eqnarray}
Notice that the indices and arguments on the right hand side are 
reversed with respect to the left hand side. 
In particular the classical polylogarithms of order $n$ are simply the 
depth-one $Z$-sums to infinity,
\begin{equation}
     Z(\infty;m,x) = \sum_{i=1}^{\infty}\frac{x^i}{i^m} =
     \mathrm{Li}_m(x).
\label{eq:Z1-to-LIm}
\end{equation}

We next turn to developing the basic tools that are useful for performing 
the summation of residues when evaluating Mellin-Barnes integrals, but 
first let us remark that already at this stage we can write a symbolic 
solution for the example in Eq.~(\ref{eq:Iex-Zsums}). Indeed, we have simply
\begin{equation}
    I(x) = -\sum_{n=1}^{\infty}
    \frac{x^n}{n^2}\left(
    \ln x - \frac{2}{n}
    \right)
    =
    -\ln x\, Z(\infty;2;x) + 2 Z(\infty;3,x)\,.
\end{equation}
Since depth-ones sums to infinity simply correspond to the 
classical polylogarithms, we have
\begin{equation}
    I(x) = -\sum_{n=1}^{\infty}
    \frac{x^n}{n^2}\left(
    \ln x - \frac{2}{n}
    \right)
    =
    -\ln x\, \mathrm{Li}_2(x) + 2 \mathrm{Li}_3(x)\,.
\end{equation}

Of course, in most real applications, we will not simply obtain sums 
that can be instantly recognised to be a $Z$-sum or an $S$-sum. Hence, 
we must study the properties of these functions and work out algorithms 
for expressing sums over residues in terms of them, when possible.
To do so, we begin by describing some basic operations on $Z$- and 
$S$-sums. 
\begin{svgraybox}
The four essential properties we will need are the 
following:
\begin{enumerate}
    \item[{\bf S-Z1}] $Z$- and $S$-sums define the same class of functions 
    and can be converted into each other. Therefore, we will 
    mostly work with $Z$-sums in the following.
    \item[{\bf S-Z2}] $Z$- and $S$-sums form an algebra. This means that 
    (finite) products of $Z$- and $S$-sums can be written as 
    linear combinations of $Z$- and $S$-sums of increased 
    depth.
    \item[{\bf S-Z3}] Sums with (positive) offsets can be algebraically reduced 
    to regular $Z$- and $S$-sums. Offsets here refer either to 
    sums with an upper limit of summation of the form $n+c$, where $c>0$ 
    is a fixed integer, or to denominators of the type $(i+c)^m$ in 
    sums.
    \item[{\bf S-Z4}] $Z$- and $S$-sums appear as expansion coefficients when 
    expanding gamma functions around integer values.
\end{enumerate}
\end{svgraybox}

First, the fact that $Z$- and $S$-sums can be converted into one 
another is quite straightforward, since the two definitions differ 
only in the upper summation boundary of the nested sums: $(i-1)$ 
for $Z$-sums and $i$ for $S$-sums. Hence sums of depth one ($k=1$) 
simply coincide,
\begin{equation}
    S(n;m_1;x_1) = Z(n;m_1;x_1)\,,
\label{eq:SZ-conv-depth1}
\end{equation}
both being equal to $\sum_{i_1} \frac{x_1^{i_1}}{i_1^{m_1}}$. For depths 
greater than one, the conversion is in principle a straightforward, if 
somewhat tedious computation, e.g.,
\begin{equation}
\begin{split}
    S(n;m_1,m_2;x_1,x_2) &= 
    \sum_{i_1=1}^{n} \frac{x_1^{i_1}}{i_1^{m_1}} S(i_1;m_2;x_2) =
    \sum_{i_1=1}^{n} \frac{x_1^{i_1}}{i_1^{m_1}}
    \sum_{i_2=1}^{i_1} \frac{x_2^{i_2}}{i_2^{m_2}}
\\ &=
    \sum_{i_1=1}^{n} \frac{x_1^{i_1}}{i_1^{m_1}}
    \left[
    \sum_{i_2=1}^{i_1-1} \frac{x_2^{i_2}}{i_2^{m_2}}
    + \frac{x_2^{i_1}}{i_1^{m_2}}
    \right]
\\ &=
    \sum_{i_1=1}^{n} \frac{x_1^{i_1}}{i_1^{m_1}} Z(i_1-1;m_2;x_2) + 
    \sum_{i_1=1}^{n} \frac{(x_1 x_2)^{i_1}}{i_1^{m_1+m_2}}
\\ &=
    Z(n;m_1,m_2;x_1,x_2) + Z(n;m_1+m_2;x_1 x_2)\,.
\end{split}
\label{eq:SZ-conv-depth2}
\end{equation}
In general, the conversion between $Z$- and $S$-sums can be performed 
in a recursive fashion using an easy generalization of the computation 
above. The details of this algorithm are discussed in~\cite{Moch:2001zr}. 
Thus one can easily convert $Z$-sums to $S$-sums and vice versa. {\it However, 
some properties are more naturally expressed in terms of $Z$-sums while 
others in terms of $S$-sums. Thus, we prefer to keep both definitions 
and convert between them when and if convenient.}

Second, $Z$- and $S$-sums form an algebra. This simply means that the 
product of two $Z$-sums (or $S$-sums) with the same upper limit of 
summation $n$ can be expressed as a linear combination of single $Z$-sums 
(or $S$-sums) of the same upper limit of summation. In order to get a 
feeling for why this is true, let us try and express the product of 
$Z(n;m_1;x_1)$ and $Z(n;m_2;x_2)$ as a single $Z$-sum. We begin by 
using the definition of $Z$-sums to write
\begin{equation}
Z(n;m_1;x_1) Z(n;m_2;x_2) =
    \sum_{i=1}^{n} \frac{x_1^{i}}{i^{m_1}} Z(n)
    \sum_{j=1}^{n} \frac{x_2^{j}}{j^{m_2}} Z(n)\,.
\end{equation}
Now, let us rewrite the double summation over $i$ and $j$ as three 
terms using the relation
\begin{equation}
    \sum_{i=1}^{n}\sum_{j=1}^{n} a_{ij} = 
    \sum_{i=1}^{n}\sum_{j=1}^{i-1} a_{ij}
    + \sum_{j=1}^{n}\sum_{i=1}^{j-1} a_{ij}
    + \sum_{i=1}^{n} a_{ii}\,,
\end{equation}
which leads to
\begin{equation}
\begin{split}
Z(n;m_1;x_1) Z(n;m_2;x_2) =&
    \sum_{i=1}^{n}\sum_{j=1}^{i-1} 
    \frac{x_1^{i}}{i^{m_1}} \frac{x_2^{j}}{j^{m_2}} [Z(n)]^2
    + \sum_{j=1}^{n}\sum_{i=1}^{j-1} 
    \frac{x_1^{i}}{i^{m_1}} \frac{x_2^{j}}{j^{m_2}} [Z(n)]^2
\\ &
    + \sum_{i=1}^{n}
    \frac{(x_1 x_2)^{i}}{i^{m_1+m_2}} [Z(n)]^2\,.
\end{split}
\end{equation}
Next, we use $[Z(n)]^2=Z(n)$, since $Z(n)=1$ for $n\ge 0$, 
and the definition of $Z$-sums in Eq.~(\ref{eq:Z-sum-def}) 
to obtain
\begin{equation}
\begin{split}
Z(n;m_1;x_1) Z(n;m_2;x_2) =&
    \sum_{i=1}^{n} \frac{x_1^{i}}{i^{m_1}} 
    \sum_{j=1}^{i-1} \frac{x_2^{j}}{j^{m_2}} Z(n)
    + \sum_{j=1}^{n} \frac{x_2^{j}}{j^{m_2}} 
    \sum_{i=1}^{j-1} \frac{x_1^{i}}{i^{m_1}} Z(n)
\\ &
    + \sum_{i=1}^{n}
    \frac{(x_1 x_2)^{i}}{i^{m_1+m_2}} Z(n)
\\ =&
    Z(n;m_1,m_2;x_1,x_2) 
    + Z(n;m_2,m_1;x_2,x_1)
\\ &    
    + Z(n;m_1+m_2;x_1 x_2)\,.
\end{split}
\end{equation}
As promised, we have expressed the product of two $Z$-sums as a 
linear combination of single $Z$-sums. The decomposition of a product 
of two general $Z$-sums into single $Z$-sums can be performed using a 
recursion on the depth of the sums. This is discussed further in~\cite{Moch:2001zr}.
 
Third, for any fixed positive integer $c$, the $Z$-sum $Z(n+c-1,\ldots)$ 
and the $S$-sum $S(n+c,\ldots)$ can be related to $Z(n-1,\ldots)$ and 
$S(n,\ldots)$. To see this, let us consider e.g. $Z(n+2-1;m_1,m_2;x_1,x_2)$. 
Start by resolving the outermost summation and splitting it as follows:
\begin{equation}
\begin{split}
    Z(n+2-1;m_1,m_2;x_1,x_2) &= \sum_{i=1}^{n-1} \frac{x_1^i}{i^{m_1}} 
    Z(i-1;m_2;x_2) 
    + \frac{x_1^n}{n^{m_1}} Z(n-1;m_2;x_2)
\\ &
    + \frac{x_1^{n+1}}{(n+1)^{m_1}} Z(n+1-1;m_2;x_2)\,.
\end{split}
\label{eq:Z-shift}
\end{equation}
The first term is simply $Z(n-1;m_1,m_2;x_1,x_2)$ by definition, while 
the second term is already expressed in term of a $Z$-sum to $n-1$. 
Moreover, in the third term the depth of the remaining $Z$-sum is reduced. 
So repeating the above idea, we find
\begin{equation}
    Z(n+1-1;m_2;x_2) = \sum_{j=1}^{n-1}\frac{x_2^j}{j^{m_2}} 
    + \frac{x_2^n}{n^{m_2}} 
    = Z(n-1;m_2;x_2) + \frac{x_2^n}{n^{m_2}}\,. 
\label{eq:Z-shift-1}
\end{equation}
Substituting Eq.~(\ref{eq:Z-shift-1}) into Eq.~(\ref{eq:Z-shift}), we obtain 
\begin{equation}
\begin{split}
    Z(n+2-1;m_1,m_2;x_1,x_2) &= Z(n-1;m_1,m_2;x_1,x_2) 
    + \left(\frac{x_1^n}{n^{m_1}} + \frac{x_1^{n+1}}{(n+1)^{m_1}}\right)
\\ & \times
    Z(n-1;m_2;x_2)
    + \frac{x_2^n}{n^{m_2}}\frac{x_1^{n+1}}{(n+1)^{m_1}}\,.
\end{split}
\end{equation}
{\it Hence, we have rewritten the original expression in terms of only $Z$-sums 
to $n-1$.} Generally, offsets in $Z$-sums of higher depth, as well as $S$-sums, 
can be reduced in a recursive way, see~\cite{Moch:2001zr} for a discussion. This procedure is called {\it the synchronization of sums}.

In practical calculations, we are also often faced with sums of the form
\begin{equation}
    \sum_{i=1}^{n}\frac{x^i}{(i+c)^m} Z(i-1,\ldots)\,,
\label{eq:c-shifted-sum}
\end{equation}
with $c$ a positive integer. We would like to relate this sum to sums 
where the offset $c$ is zero. The basic idea is simple: if the depth of 
the subsum $Z(i-1,\ldots)$ is zero, we can shift $c$ down by one using 
the identity
\begin{equation}
    \sum_{i=1}^{n}\frac{x^i}{(i+c)^m} = 
    \frac{1}{x} \sum_{i=1}^{n}\frac{x^i}{(i+c-1)^m} - \frac{1}{c^m} 
    + \frac{x^n}{(n+c)^m}\,.
\label{eq:c-shift}
\end{equation}
This relation is easily derived by replacing the index of summation as 
$i \to i-1$ and accounting for the changed limits of summation. The 
derivation also makes it obvious that the last term on the right 
hand side only contributes if $n$ is not equal to infinity. Repeated 
applications of Eq.~(\ref{eq:c-shift}) then reduce the offset to zero.
In order to see what happens if the depth of the subsum is not zero, 
let us consider e.g.,
\begin{equation}
    \sum_{i=1}^n \frac{x_1^i}{(i+1)^{m_1}} Z(i-1;m_2;x_2)\,.
\end{equation}
We start by resolving the first (and in this case only) summation in the 
definition of the subsum, and shifting the index of summation $i \to i-1$, 
as before
\begin{equation}
\begin{split}
    \sum_{i=1}^n \frac{x_1^i}{(i+1)^{m_1}} Z(i-1;m_2;x_2)
    &= \sum_{i=1}^n \frac{x_1^i}{(i+1)^{m_1}} 
    \sum_{j=1}^{i-1} \frac{x_2^j}{j^{m_2}}Z(j-1)
\\
    &= \sum_{i=2}^{n+1} \frac{x_1^{i-1}}{(i-1+1)^{m_1}} 
    \sum_{j=1}^{i-1-1} \frac{x_2^j}{j^{m_2}}Z(j-1)\,.
    \end{split}
\end{equation}
Next, separate the term with $i=n+1$ to obtain a sum in $i$ whose 
upper limit is $n$
\begin{equation}
\begin{split}
    \sum_{i=1}^n \frac{x_1^i}{(i+1)^{m_1}} Z(i-1;m_2;x_2)
    &=
    \sum_{i=2}^{n} \frac{x_1^{i-1}}{(i-1+1)^{m_1}} 
    \sum_{j=1}^{i-1-1} \frac{x_2^j}{j^{m_2}}Z(j-1)
\\ &+ 
    \frac{x_1^{n}}{(n+1)^{m_1}} 
    \sum_{j=1}^{n-1} \frac{x_2^j}{j^{m_2}}Z(j-1)\,.
    \end{split}
\end{equation}
Obviously, this step is only necessary if $n$ is finite, thus the last 
term is not present if the limit of summation if infinity. Now, rewrite 
the sum over $j$ on the first line such that its upper limit of summation 
is $i-1$
\begin{equation}
\begin{split}
&
\sum_{i=1}^n \frac{x_1^i}{(i+1)^{m_1}} Z(i-1;m_2;x_2)
    =
\\ &\qquad=
    \sum_{i=2}^{n} \frac{x_1^{i-1}}{(i-1+1)^{m_1}} 
    \bigg[\sum_{j=1}^{i-1} \frac{x_2^j}{j^{m_2}}Z(j-1)-
    \frac{x_2^{i-1}}{(i-1)^{m_2}}Z(i-1-1)\bigg]
\\ &\qquad+ 
    \frac{x_1^{n}}{(n+1)^{m_1}} 
    \sum_{j=1}^{n-1} \frac{x_2^j}{j^{m_2}}Z(j-1)\,.
    \end{split}
\end{equation}
The two sums over $j$ on the second and third lines can be evaluated as 
$Z$-sums to $i-1$ and $n-1$, while the contribution coming from the second 
term in the square brackets does not depend on $j$. In this contribution 
change the index of summation as $i \to i+1$
\begin{equation}
\begin{split}
&
\sum_{i=1}^n \frac{x_1^i}{(i+1)^{m_1}} Z(i-1;m_2;x_2)
    =
\\ &\qquad=
    \frac{1}{x_1}\sum_{i=2}^{n} \frac{x_1^{i}}{(i-1+1)^{m_1}} Z(i-1;m_2;x_2) 
    - \sum_{i=1}^{n-1} \frac{x_1^{i}}{(i+1)^{m_1}}
    \frac{x_2^{i}}{(i)^{m_2}}Z(i-1)
\\ &\qquad+ 
    \frac{x_1^{n}}{(n+1)^{m_1}} Z(n-1;m_2;x_2)\,.
    \end{split}
\label{eq:c-shift-2}
\end{equation}
To finish, we note the following: in the first term, we have lowered the 
offset by one, which was our goal and in this particular case, the offset 
is already zero at this stage. Moreover, the lower limit of summation in 
this term can formally be put to $i=1$, since the term with $i=1$ would be 
zero because $Z(n,\ldots) = 0$ if $n<0$ (the outermost sum is empty). 
In the second term, we have reduced the depth of the inner sum by one, 
and in this particular case, the inner sum is now depth zero. Finally, 
the last term is only present if $n$ is not infinity. The only contribution 
that is not yet in the form of a $Z$-sum is the second term on the right 
hand side, as products of $(i+1)$ and $i$ appear in the denominator. 
However, for specific integers $m_1$ and $m_2$, partial fractioning can 
be used to reduce the product in the second term to a sum of terms that 
contain either $(i+1)$ or $i$ in their denominator, but not both. In 
general, iterated applications of the formula
\begin{equation}
    \frac{x_1^i}{(i+c_1)^{m_1}}\frac{x_2^i}{(i+c_2)^{m_2}}
    = \frac{1}{c_1 - c_2}
    \left[\frac{x_1^i}{(i+c_1)^{m_1}}\frac{x_2^i}{(i+c_2)^{m_2-1}}
     - \frac{x_1^i}{(i+c_1)^{m_1-1}}\frac{x_2^i}{(i+c_2)^{m_2}}\right]
\end{equation}
will achieve the desired form. At this stage, all remaining sums will be 
of the form
\begin{equation}
    \sum_{i=1}^{n}\frac{x^i}{i^m}Z(i-1,\ldots)\,,
\end{equation}
which are simply $Z$-sums. Finally, for offsets greater than one and subsums 
of general depth, we must apply the procedure outlined above iteratively, 
until the offset has been completely reduced to zero. The general case is discussed in~\cite{Moch:2001zr}.

%\begin{equation}
%    \sum_{i=1}^{n}\frac{x^i}{(i+c)^m} = 
%    \sum_{i=2}^{n+1}\frac{x^{i-1}}{(i-1+c)^m} =
%    \frac{1}{x}\sum_{i=1}^{n}\frac{x^i}{(i-1+c)^m} - \frac{1}{c^m}
%    + \frac{x^n}{(n+c)^m}\,.
%\end{equation}

Finally, the fourth property {\bf S-Z4} that we will discuss is that 
$Z$-sums and $S$-sums appear in expansions of the gamma function around integer values. 
This property makes them particularly useful for computing 
$\epsilon$-expansions of sums obtained from Mellin-Barnes integrals using 
the residue theorem. Indeed, one can show that for positive integers $n$ 
we have
\begin{equation}
\begin{split}
    &\Gamma(n+\epsilon) = \Gamma(1+\epsilon) \Gamma(n)
\\ &\qquad\times
    \left[
    1 + \epsilon Z_1(n-1) + \epsilon^2 Z_{11}(n-1) + \epsilon^3 Z_{111}(n-1)
    + \ldots + \epsilon^{n-1} Z_{11\ldots 1}(n-1)
    \right]\,,
\end{split}
\label{eq:Gamma-exp-pos}
\end{equation}
where
\begin{equation}
    Z_{\underbrace{11\ldots 1}_{k}}(n-1) = Z(n-1;\underbrace{1,1,\ldots,1}_k;
    \underbrace{1,1,\ldots,1}_k)\,.
\end{equation}
Notice that in Eq.~(\ref{eq:Gamma-exp-pos}), since we are keeping the 
factor of $\Gamma(1+\epsilon)$ unexpanded on the right hand side, the sum 
in the square brackets is finite. The expansion around negative integers 
takes the form (again $n>0$)
\begin{equation}
\begin{split}
    &\Gamma(-n+1+\epsilon) = \frac{\Gamma(1+\epsilon)}{\epsilon} 
    \frac{(-1)^{n-1}}{\Gamma(n)}
\\ &\qquad\times
    \left[
    1 + \epsilon S_1(n-1) + \epsilon^2 S_{11}(n-1) + \epsilon^3 S_{111}(n-1)
    + \ldots \right]\,,
\end{split}
\label{eq:Gamma-exp-neg}
\end{equation}
with
\begin{equation}
    S_{\underbrace{11\ldots 1}_{k}}(n-1) = S(n-1;\underbrace{1,1,\ldots,1}_k;
    \underbrace{1,1,\ldots,1}_k)\,.
\end{equation}
In contrast to the expansion around positive integers, in this case the 
sum in the square brackets is infinite. Last, we note the inverse formula
\begin{equation}
\begin{split}
&    \left[
    1 + \epsilon Z_1(n-1) + \epsilon^2 Z_{11}(n-1) + \epsilon^3 Z_{111}(n-1)
    + \ldots + \epsilon^{n-1} Z_{11\ldots 1}(n-1)
    \right]^{-1} =
\\ &\qquad = 
    1 - \epsilon S_1(n-1) + \epsilon^2 S_{11}(n-1) - \epsilon^3 S_{111}(n-1)
    + \ldots\,,
\end{split}
\label{eq:Zsum-invert}
\end{equation}
which can be useful in practical applications.

\subsection{Solving {1-dim}ensional Mellin-Barnes Integrals with Nested Sums}
\label{sect:1dmb-sol-with-sums}

The tools developed in the previous section allow us to derive analytic 
solutions for a large and important class of one-dimensional Mellin-Barnes 
integrals. Thus, let us consider the integral
\begin{equation}
    \int_{z_0-i\infty}^{z_0+i\infty} \frac{dz}{2\pi i} 
    \frac{\prod_{i=1}^{m} \Gamma(a_{i} + z)
    \prod_{j=1}^{n} \Gamma(b_{j} - z)}
    {\prod_{k=1}^{\bar{m}} \Gamma(c_{k} + z)
    \prod_{l=1}^{\bar{n}} \Gamma(d_{l} - z)}
    \prod_{p=1}^{r} \psi^{(h_p)}(e_{p}+z)
    \prod_{q=1}^{s} \psi^{(g_q)}(f_{q}-z)
    A^z\,,
\label{eq:1dMB-gen}
\end{equation}
where $a_i$, $b_j$, $c_k$ , $d_l$, $e_p$ and $f_q$ are assumed to be 
integers, not necessarily distinct. (Hence, e.g. $a_{i_1} = a_{i_2}$, etc. 
is allowed.) {\it Such expressions arise from dimensionally-regularized 
Mellin-Barnes integrals after one has resolved the poles and expanded in 
$\epsilon$, see chapter~\ref{chapter-singul}.} It may also be possible to construct first the sum representation of the Mellin-Barnes integral and perform the $\epsilon$-expansion on this 
sum and we will come back briefly to this approach below. However, let us 
first concentrate on Eq.~(\ref{eq:1dMB-gen}). To start, let us observe that 
because $a_i,\ldots,f_q \in \mathbb{Z}$, the integrand only has poles at 
integer values of $z$. %Then 
\begin{svgraybox}
Our aim will essentially be to show that one 
can write the residue of the integrand in Eq.~(\ref{eq:1dMB-gen}) at a generic integer $n$ explicitly and that the 
summation over the residues can be performed in terms of $S$- or $Z$-sums. 
\end{svgraybox}
In order to be specific, let us assume that $0<\Re (A)<1$, so that we must close 
the contour of integration to the right, but the procedure can be adapted 
also to the case of $1<\Re (A)$ as well.

To show the above, we first perform a change of integration variables, 
whose utility will become clear in a moment. Thus, set
\begin{equation}
\bar{a} \equiv \min\{0,a_i\}\,,\quad
\bar{d} \equiv \max\{0,d_l-1\}\,,\quad
\bar{e} \equiv \min\{0,e_p\}\,,\quad
\bar{f} \equiv \max\{0,f_q-1\}\,.
\end{equation}
Then define
\begin{equation}
r \equiv \max\{|\bar{a}|,\bar{d},|\bar{e}|,\bar{f}\}
\label{eq:r-def}
\end{equation}
and let $z\to z+r$ in Eq.~(\ref{eq:1dMB-gen}). Introducing the notation
\begin{equation}
    \bar{a}_i \equiv a_i+r\,,\quad
    \bar{b}_j \equiv b_j-r\,,\quad
    \bar{c}_k \equiv c_k+r\,,\quad
    \bar{d}_l \equiv d_l-r\,,\quad
    \bar{e}_p \equiv e_p+r\,,\quad
    \bar{f}_q \equiv f_q-r\,,\quad
\end{equation}
we find
\begin{equation}
    A^r \int_{z_0-r-i\infty}^{z_0-r+i\infty} \frac{dz}{2\pi i} 
    \frac{\prod_{i=1}^{m} \Gamma(\bar{a}_{i} + z)
    \prod_{j=1}^{n} \Gamma(\bar{b}_{j} - z)}
    {\prod_{k=1}^{p} \Gamma(\bar{c}_{k} + z)
    \prod_{l=1}^{q} \Gamma(\bar{d}_{l} - z)}
    \prod_{p=1}^{r} \psi^{(h_p)}(\bar{e}_{p}+z)
    \prod_{q=1}^{s} \psi^{(g_q)}(\bar{f}_{q}-z)
    A^{z}\,.
\label{eq:1dMB-gen-1}
\end{equation}
The point of this transformation is that by construction $\bar{a}_i\ge 0$, 
$\bar{d}_l\le 1$, $\bar{e}_p\ge 0$ and $\bar{f}_q\le 1$, which will be 
convenient in what follows. Second, we shift the contour of integration to 
some standard contour. Let us choose e.g. the straight line contour 
parallel to the imaginary axis running form $\bar{z}_0-i\infty$ to 
$\bar{z}_0+i\infty$ with $\bar{z}_0 \in (0,1)$. As $\bar{a}_i,\ldots,\bar{f}_q$ 
are all integers, such a contour avoids all poles of the integrand. Clearly, 
when shifting the contour from $z_0-r$ to $\bar{z}_0$, we must account for 
the residues of the poles that are crossed, so our integral becomes
\begin{eqnarray}
%\begin{split}
    A^r && \int_{\bar{z}_0-i\infty}^{\bar{z}_0+i\infty} \frac{dz}{2\pi i} 
    \frac{\prod_{i=1}^{m} \Gamma(\bar{a}_{i} + z)
    \prod_{j=1}^{n} \Gamma(\bar{b}_{j} - z)}
    {\prod_{k=1}^{p} \Gamma(\bar{c}_{k} + z)
    \prod_{l=1}^{q} \Gamma(\bar{d}_{l} - z)} \nn \\
&\times& \prod_{p=1}^{r} \psi^{(h_p)}(\bar{e}_{p}+z)
    \prod_{q=1}^{s} \psi^{(g_q)}(\bar{f}_{q}-z)
    A^{z} 
    + \sum \mbox{Residues}\,.
%\end{split}
\label{eq:1dMB-gen-2}
\end{eqnarray}
Notice that as we shift the contour, we cross only finitely many poles 
at specific known integers, thus the sum of residues can be computed 
explicitly once the values of $a_i,\ldots,f_p$ as well as $h_p,\,g_q$ 
and $z_0,\, \bar{z}_0$ are fixed.

Thus, we concentrate on the left-over integral. To proceed, we perform 
the synchronization of the gamma and $\psi$ functions. This simply 
means that we apply the identities $\Gamma(1+x) = x \Gamma(x)$ and 
$\psi^{(n)}(1+x) = \psi^{(n)}(x) + (-1)^n n! z^{-n-1}$ repeatedly, 
until all gamma functions have been reduced to just $\Gamma(z)$ or 
$\Gamma(1-z)$ and all $\psi$ functions to $\psi^{(n)}(z)$ or $\psi^{(m)}(1-z)$. 
Since $\bar{a}_i,\ldots,\bar{f}_q$ are all integers, we can achieve this in 
a finite number of steps. In fact, clearly
\begin{align}
    \Gamma(\bar{a}_i+z) = (\bar{a}_i-1+z)(\bar{a}_i-2+z)\ldots z\Gamma(z)\,,
    \qquad \bar{a}_i > 0\,,
\label{eq:Gamma-sync-1}
\\
\Gamma(\bar{b}_j - z) = 
    \begin{cases}
        (\bar{b}_j-1-z)(\bar{b}_j-2-z)\ldots (1-z) \Gamma(1-z) 
        & \text{if $\bar{b}_j > 1$}\,, \\
        \frac{\Gamma(1-z)}{(-|\bar{b}_j|-z)(-|\bar{b}_j|+1-z)\ldots (-z)} 
        & \text{if $\bar{b}_j < 1$}\,,
    \end{cases} 
\label{eq:Gamma-sync-2}
\\
\Gamma(\bar{c}_k + z) = 
    \begin{cases}
        (\bar{c}_k-1+z)(\bar{c}_k-2+z)\ldots z\Gamma(z) 
        & \text{if $\bar{c}_k > 0$}\,, \\
        \frac{\Gamma(z)}{(-|\bar{c}_k|+z)(-|\bar{c}_k|+1+z)\ldots (-1+z)} 
        & \text{if $\bar{c}_k < 0$}\,,
    \end{cases} 
\label{eq:Gamma-sync-3}    
\\
    \Gamma(\bar{d}_l-z) = \frac{\Gamma(1-z)}{(-|\bar{d}_l|-z)(-|\bar{d}_l|+1-z)\ldots (-z)}\,,
    \qquad \bar{d}_l < 1\,,
\label{eq:Gamma-sync-4}
\end{align}
and furthermore
\begin{align}
\psi^{(n_p)}(\bar{e}_p+z) &= \psi^{(n_p)}(z)
    + (-1)^n n! \left(\frac{1}{(\bar{e}_p-1+z)^{n_p+1}} + \ldots + 
    \frac{1}{z^{n_p+1}}\right),\;\;
% \bar{e}_p > 0\,,
\label{eq:psi-sync-1}
\\
\psi^{(m_q)}(\bar{f}_q-z) &= \psi^{(m_q)}(1-z)
    - (-1)^n n! \left(\frac{1}{(-|\bar{f}_q|-z)^{m_q+1}} + \ldots + 
    \frac{1}{(-z)^{m_q+1}}\right). 
\label{eq:psi-sync-2}
\end{align}
Eq.~(\ref{eq:psi-sync-1}) holds for $\bar{e}_p > 0$ while Eq.~(\ref{eq:psi-sync-2}) is true for $\bar{f}_q < 1$.
\\
Then, we use the relation
\begin{equation}
    \psi^{(m_q)}(1-z) = (-1)^{m_q}
    \left[\psi^{(m_q)}(z) + \pi \frac{\partial^{m_q}}{\partial z^{m_q}}
    \cot(\pi z)\right]
\label{eq:psi-reflect}
\end{equation}
to convert all $\psi^{(m_q)}(1-z)$ functions to $\psi^{(m_q)}(z)$ functions.

After synchronizing the gamma and $\psi$ functions in this way, the 
remaining one-dimensional integral in Eq.~(\ref{eq:1dMB-gen-2}) takes the 
form
\begin{equation}
\begin{split}
&
    \int_{\bar{z}_0-i\infty}^{\bar{z}_0+i\infty} \frac{dz}{2\pi i} 
    R(\bar{a}_i,\bar{b}_j,\bar{c}_k,\bar{d}_l,z)
    \frac{\Gamma^m(z)\Gamma^n(1-z)}
    {\Gamma^p(z)\Gamma^q(1-z)} 
    \prod_{p=1}^{r}\bigg[\psi^{(n_p)}(z) + R_{n_p}(\bar{e}_p,z)\bigg]
\\ &\times
    \prod_{q=1}^{s}\bigg[(-1)^{m_q}\psi^{(m_q)}(z) 
    +(-1)^{m_q}\pi \frac{\partial^{m_q}}{\partial z^{m_q}}\cot(\pi z)
    + R_{m_q}(\bar{f}_q,z)\bigg] A^z\,,
\end{split}
\end{equation}
where $R(\bar{a}_i,\bar{b}_j,\bar{c}_k,\bar{d}_l,z)$, $R_{n_p}(\bar{e}_p,z)$ 
and $R_{m_q}(\bar{f}_q,z)$ are rational functions in $z$, such that their 
denominators only contains factors of $(n_i + z)$ with $n_i \in {\mathbb N}$. 
This last property is quite convenient, since it ensures that the poles of 
the integrand come from the gamma functions only and not from the rational 
functions $R$. However, it is easy to see that this property only holds if 
$\bar{a}_i\ge 0$, $\bar{d}_l \le 1$ as well as $\bar{e}_p\ge 0$ and 
$\bar{f}_1 \le 1$. This explains our initial transformation of the integration 
variable. Thus, after expanding the products, we find that our integral can 
be decomposed into a sum of integrals with the following general structure
\begin{equation}
    \begin{split}
        \sum 
        \int_{\bar{z}_0-i\infty}^{\bar{z}_0+i\infty} \frac{dz}{2\pi i}
        \Gamma^{N}(z)\Gamma^{M}(1-z) 
        R'(\bar{a}_i,\ldots,\bar{f}_q,z) A^z
        \prod_{p=1}^{P'} \psi^{n'_p}(z)
        \prod_{q}^{Q'}
        \pi \frac{\partial^{m'_q}}{\partial z^{m'_q}}\cot(\pi z)
    \end{split}
\label{eq:1dMB-gen-basic}
\end{equation}
Here $R'(\bar{a}_i,\ldots,\bar{f}_q,z)$ is again a rational function, 
generally a product of $R(\bar{a}_i,\bar{b}_j$, $\bar{c}_k,\bar{d}_l,z)$ and 
factors of $R_{n_p}(\bar{e}_p,z)$ and $R_{m_q}(\bar{f}_q,z)$. Then it is clear
that just as for $R$, $R_{n_p}$ and $R_{m_q}$, the denominator of $R'$ also 
factorizes into factors of $(n_i + z)$ with $n_i \in {\mathbb N}$. Thus, after partial fractioning in $z$ we have simply
\begin{equation}
    R'(\bar{a}_i,\ldots,\bar{f}_q,z) =
    \sum_{i} \frac{\alpha_i}{(n_i+z)^{p_i}} + \sum_j \beta_j z^{j}\,,
\label{eq:R-prime}
\end{equation}
where $n_i$ and $p_i$ are non-negative integers. The fact that all $n_i$ above 
are non-negative is significant, because then all poles of the integrand to 
the right of the imaginary axis only come from $\Gamma(1-z)$ or the derivatives 
of $\cot(\pi z)$, but not from the rational function $R'$. However, since we have 
chosen $\bar{z}_0 \in (0,1)$ (and we assumed that $0<Re A<1$, so that we must 
close the contour to the right), it is precisely these poles, i.e., the ones 
at $z=1,2,\ldots$, whose residues contribute to the integral.

After this preliminary work, we now want to show that the residue of the 
pole at some generic $z\to n$, $n=1,2,\ldots$ can be expressed in terms of 
objects such as factors of $(c+n)^k$, $Z(n-1,\ldots)$ and so forth, such that we will be able to perform the summation over $n$ form one to infinity in 
terms of $Z$-sums. To do this, we will assume that our integrals are 
\emph{balanced}, see section~\ref{sec:choosingcontour}.
%\begin{svgraybox}
%Recall that informally, a balanced integral is one where for each function 
%of the form $\Gamma(\ldots+z_j)$, a function of the form $\Gamma(\ldots-z_j)$ also %appears in the integrand, with gamma functions in the denominator 
%being counted a negative number of times. 
%\end{svgraybox}

Evidently, the integrals in Eq.~(\ref{eq:1dMB-gen-basic}) are 
balanced if $N=M$, and we will also require that $N=M > 0$. Now, 
computing the residue of the integrand at some generic positive integer 
$n$ simply amounts to expanding the integrand in $z$ around $n$, with the 
coefficient of $(z-n)^{-1}$ essentially giving the residue (up to a factor of $2\pi i$). Thus, the basic question becomes: {\it Can we perform this expansion 
symbolically around a generic $n$ and what kind of objects appear in this 
expansion?} Remember that $N$, $R'$ as well as the $n'_p$ and $m'_q$ are all 
assumed to be explicitly known, however the integer $n$ must be kept symbolic. 
In order to perform the expansion and the computation of the residue, let us 
set $z = n + \delta$ so that the expansion is around $\delta = 0$, and recall 
the following facts.
\begin{enumerate}
    \item The gamma functions appearing in the integrand in 
    Eq.~(\ref{eq:1dMB-gen-basic}), $\Gamma(n+\delta)$ and $\Gamma(1-n-\delta)$, 
    may be expanded in $\delta$ using Eqs.~(\ref{eq:Gamma-exp-pos})~and~(\ref{eq:Gamma-exp-neg}). These expressions will generally yield $S$- and 
    $Z$-sums as well as their products, which can be converted to an expression 
    involving only the linear combination of, say, $Z$-sums as explained above. 
    However, for the balanced integrals we are considering here, this procedure
    can be circumvented altogether by noting that
    \begin{equation}
        \Gamma(z)\Gamma(1-z) = \frac{\pi}{\sin(\pi z)}\,.
    \label{eq:gg2csc}
    \end{equation}
    Computing the expansion of this expression around integers is made quite 
    simple by observing that for $z=n+\delta$, $n\in\mathbb{N}$ we have
    \begin{equation}
        \frac{\pi}{\sin[\pi (n+\delta)]} = \frac{(-1)^n \pi}{\sin(\pi \delta)}
        = (-1)^n\bigg[\frac{1}{\delta} + \frac{\pi^2}{6} \delta 
        + \frac{7\pi^4}{360} \delta^3 + \mathcal{O}(\delta^5)\bigg]\,.
    \label{eq:csc-exp}
    \end{equation}
    Thus the expansion of the product of gamma functions will simply yield 
    a power of $[(-1)^n]^N$ times a Laurent-series in $\delta$ starting at order 
    $1/\delta^N$ with explicitly known numeric coefficients independent of $n$.
    
    \item The expansion of the rational function $R'$ in Eq.~(\ref{eq:R-prime}) 
    is trivial. Recalling that by construction $R'$ has no poles at positive 
    integers, we may simply compute the Taylor-series of $R'$ around $z=n$,
    \begin{equation}
    \begin{split}
        R'(\bar{a}_i,\ldots,\bar{f}_q,n+\delta) &=
        R'(\bar{a}_i,\ldots,\bar{f}_q,n) + 
        \frac{d R'(\bar{a}_i,\ldots,\bar{f}_q,z)}{dz}\bigg|_{z=n} \delta 
    \\ &        
        +
        \frac{d^2 R'(\bar{a}_i,\ldots,\bar{f}_q,z)}{dz^2}\bigg|_{z=n} 
        \frac{\delta^2}{2} + \mathcal{O}(\delta^3)\,.
    \end{split}
    \end{equation}
    Obviously the coefficients in this expansion can be computed 
    explicitly for general $n$ given any specific $R'$ and they are simply 
    rational functions of $n$. It is also clear that the denominator of 
    each expansion coefficient continues to factorize into linear factors 
    of $(n_i+n)$ and so can be partial fractioned to yield a form
    \begin{equation}
        \sum_i \frac{\rho_i}{(n_i+n)^{k_i}} + \sum_j \sigma_j n^j\,.
    \end{equation}
    
    \item Clearly the expansion of the factor of $A^z$ is trivial and we have 
    simply
    \begin{equation}
        A^{n+\delta} = A^n \bigg[1 + \ln A \delta 
        + \ln^2 A \frac{\delta^2}{2} + \mathcal{O}(\delta^3)
        \bigg]\,.
    \end{equation}
    Thus the result involves a factor of $A^n$ as well as powers of 
    logarithms of $A$.
    
    \item Turning to the product of $\psi$ functions, we note that again 
    these do not have poles for positive integers, so we may simply Taylor-expand 
    around $n$. Using the definition $\psi^{(m+1)}(z) = d\psi^{(m)}(z)/dz$, we 
    have simply
    \begin{equation}
        \psi^{(m)}(n+\delta) = \psi^{(m)}(n) + \psi^{(m+1)}(n) \delta
        + \psi^{(m+2)}(n) \frac{\delta^2}{2} + \mathcal{O}(\delta^3)\,,
    \end{equation}
    and thus the expansion of a product of $\psi$ functions will involve 
    products of $\psi$ functions at the integer value $n$. We now make the 
    important observation that polygamma functions at positive integer 
    values can be written in terms of $Z$-sums as 
    \begin{align}
        \psi^{(0)}(n) &= -\gamma + Z(n-1;1;1)\,,
        \\
        \psi^{(m)}(n) &= (-1)^{m+1} m! \left[\zeta_{m+1} - Z(n-1;m+1;1)\right]\,,
        \qquad m > 0\,.
    \end{align}
    Thus, the expansion coefficients will in general involve products of 
    $Z$-sums which can be reduced to linear combinations of such sums using 
    the algebra of $Z$-sums as explained above.
    
    \item Last, let us consider the product of derivatives of $\cot(\pi z)$ 
    that appears in Eq.~(\ref{eq:1dMB-gen-basic}). First, it is trivial to 
    show that 
    \begin{equation}
        \frac{\partial \cot(\pi z)}{\partial z} = -
        \pi \left[1 + \cot^2(\pi z)\right]\,,
    \end{equation}
    which implies that the $m$-th derivative of $\cot(\pi z)$ can be expressed 
    as an $m+1$-st degree polynomial in $\cot(\pi z)$. Hence, the product 
    of derivatives will evaluate to a polynomial in $\cot(\pi z)$. 
    Computing the expansion of this polynomial around an integer $n$ is 
    simplified by noticing that
    \begin{equation}
        \cot[\pi(n+\delta)] = \cot(\pi\delta) 
        = \frac{1}{\pi}\bigg[\frac{1}{\delta} - \frac{\pi^2}{3} \delta 
        - \frac{\pi^4}{45} \delta^3 + \mathcal{O}(\delta^5)\bigg]\,.
    \end{equation}
    Thus, the expansion of the product of derivatives of $\cot(\pi z)$ 
    yields a Laurent-series in $\delta$ whose coefficients are explicitly 
    known numbers independent of $n$.
\end{enumerate}
From the above, it is clear that the residue of the integrand in 
Eq.~(\ref{eq:1dMB-gen-basic}) can indeed be computed for a generic positive 
integer $n$ and will involve at most the product of factors of $(n_i+n)^{k_i}$, 
$A^n$, $Z$-sums $Z(n-1,\ldots)$ (these are only present if the integrand involved 
polygamma functions) as well as numeric factors independent 
of $n$ and logarithms of $A$ raised to $n$-independent powers. Moreover, the 
summation over residues runs from $n=1$ to $n=\infty$ by construction (recall 
our choice of $0<\bar{z}_0<1$), so at most we encounter sums of the form
\begin{equation}
    \sum_{n=1}^{\infty} \frac{A^n}{(n_i + n)^{k_i}} Z(n-1,\ldots)\,.
\end{equation}
These sums are then straightforward to evaluate using the summation techniques 
presented above. The result will involve $Z$-sums to infinity which we recall 
are nothing but multiple polylogarithms, see Eq.~(\ref{eq:Z-sum-to-Li}) and chapter~\ref{chapter:complex}.
\begin{overview}{Solving 1-dimensional Mellin-Barnes Integrals with Nested Sums}

The above approach to solving 1d Mellin-Barnes integrals with nested sums involves 
the following steps.
\begin{enumerate}
\item Transform the variable of integration $z\to z+r$, where the constant $r$ is 
given in Eq.~(\ref{eq:r-def}).
\item Shift the contour of integration to the standard contour that runs parallel 
to the imaginary axis form $\bar{z}_0-i\infty$ to $\bar{z}_0+i\infty$ with 
$\bar{z}_0 \in (0,1)$. If any poles of the integrand are crossed, compute their 
contribution using Cauchy's residue theorem.
\item Transform all gamma functions in the integrand to the form 
$\Gamma(z)$ and $\Gamma(1-z)$ using Eqs.~(\ref{eq:Gamma-sync-1})--(\ref{eq:Gamma-sync-4}).
\item Bring all polygamma functions to the form $\psi^{(n)}(z)$ using Eqs.~(\ref{eq:psi-sync-1})--(\ref{eq:psi-reflect}).
\item The first two steps generally introduce a rational function $R(z)$ of the 
integration variable $z$ into the integrand. Apply partial fractioning in $z$ to 
this rational function.
\item Compute the residue of the integrand symbolically at $z=n$, $n\in \mathbb{N}^+$. 
We have shown how to do this in terms of at most $Z$-sums $Z(n-1,\ldots)$, as well as 
factors of the form $(c + n)^k$, with $k\in\mathbb{Z}$ and $A^n$.
\item Perform the summation using the algorithms for $Z$-sums discussed above.
{Note that many of these algorithms, along with several others, have been implemented 
in publicly available packages such as {\tt XSummer} and {\tt Sigma}, see appendix~\ref{app:mplsums}.}
\item The result will involve $Z$-sums to infinity which can be expressed in terms of 
multiple polylogarithms if desired.
\end{enumerate}
\end{overview}

\subsection{Application of Sums to Angular Integrations}
In order to illustrate the procedure outlined above, we consider the 
angular integral with two massless denominators, discussed in section~\ref{sec:4MBrepr}. 
There we showed that this integral has a one dimensional Mellin-Barnes 
representation,
\begin{equation}
    \begin{split}
        \Omega_{j,k} &= \frac{2^{1-2\epsilon} \pi ^{-\epsilon} 
        \Gamma (1-\epsilon)}{\Gamma (1-2 \epsilon)}
        \int_{-1}^{1} d(\cos\theta)(\sin\theta)^{-2\eps}
        \int_{-1}^{1} d(\cos\phi)(\sin\phi)^{-1-2\eps}
\\& \times
        (1-\cos\theta)^{-j}
        (1-\cos\chi\cos\theta-\sin\chi\sin\theta\cos\phi)^{-k}
\\& =
        2^{2-j-k-2\epsilon} \pi^{1-\epsilon}
        \frac{1}{\Gamma(j)\Gamma(k)\Gamma(2-j-k-2\epsilon)}
\\& \times
        \int_{z_0-i\infty}^{z_0+i\infty} \frac{dz}{2\pi i}
        \Gamma(-z)\Gamma(j+z)\Gamma(k+z)\Gamma(1-j-k-\epsilon-z)
        v^z\,,
    \end{split}
\end{equation}
where $v=(1-\cos\chi)/2$. Let us now consider the specific case of 
$j=k=1$ and extract the overall factor in the first line by defining 
\begin{equation}
    \begin{split}
        I_{1,1}(v,\epsilon) &\equiv \frac{2^{-1+2\epsilon} \pi ^{\epsilon} 
        \Gamma (1-2\epsilon)}{\Gamma (1-\epsilon)}\Omega_{j,k}
\\& =
        \frac{\pi}{\Gamma(-\epsilon)}
        \int_{z_0-i\infty}^{z_0+i\infty} \frac{dz}{2\pi i}
        \Gamma(-z)\Gamma^2(1+z)\Gamma(-1-\epsilon-z) v^z\,.
    \end{split}
\end{equation}
This choice of normalization keeps the expanded expressions simpler and avoids 
the appearance of constants such as $\gamma$ and $\ln 2$ in the final result. 
For $\epsilon=0$ it is clearly impossible to find a straight line contour 
parallel to the imaginary axis that separates the poles of $\Gamma(1+z)$ 
and $\Gamma(-1-z)$. Put differently, there is no $z_0$ such that the real 
parts of the arguments of all gamma functions are positive. However for a 
proper choice of $\epsilon\ne 0$, such $z_0$ does exist. Hence we choose 
e.g. $z_0 = -\frac{1}{2}$ and $\epsilon = -1$ and analytically continue the 
integral to $\epsilon \to 0$, picking up the residues of the poles which 
are crossed as explained in chapter~\ref{chapter-singul}. We find
\begin{equation}
\begin{split}
    I_{1,1}(v,\epsilon) &= \pi \Gamma(-\epsilon)\Gamma(1+\epsilon) 
    v^{-1-\epsilon}
\\ &
    + \frac{\pi}{\Gamma(-\epsilon)}
    \int_{-\frac{1}{2}-i\infty}^{-\frac{1}{2}+i\infty}\frac{dz}{2\pi i}
    \Gamma(-z)\Gamma^2(1+z)\Gamma(-1-\epsilon-z) v^z\,.
\end{split}
\end{equation}
The $\epsilon$ expansion can now be performed under the integral sign. Keeping 
terms up to order $\epsilon^2$, we obtain
\begin{equation}
\begin{split}
    I_{1,1}(v,\epsilon) &= 
    \frac{\pi}{v}\bigg[
        -\frac{1}{\epsilon} 
        + \ln v
        - \bigg(\frac{\ln^2 v}{2} + \frac{\pi^2}{6}\bigg) \epsilon
        + \bigg(\frac{\ln^3 v}{6} + \frac{\pi^2 \ln v}{6}\bigg) \epsilon^2
        \bigg]
\\ & +
    \pi(-\epsilon + \gamma \epsilon^2)
    \int_{-\frac{1}{2}-i\infty}^{-\frac{1}{2}+i\infty}\frac{dz}{2\pi i}
    \Gamma(-z)\Gamma^2(1+z)\Gamma(-1-z) v^z
\\ & +
    \pi \epsilon^2
    \int_{-\frac{1}{2}-i\infty}^{-\frac{1}{2}+i\infty}\frac{dz}{2\pi i}
    \Gamma(-z)\Gamma^2(1+z)\Gamma(-1-z) \psi^{(0)}(-1-z) v^z    
    + \mathcal{O}(\epsilon^3)\,.
\end{split}
\end{equation}
Let us now evaluate the two remaining Mellin-Barnes integrals using our 
summation tools. Starting with
\begin{equation}
    I_1 = \int_{-\frac{1}{2}-i\infty}^{-\frac{1}{2}+i\infty}\frac{dz}{2\pi i}
    \Gamma(-z)\Gamma^2(1+z)\Gamma(-1-z) v^z\,,
\end{equation}
we first recognise that in this particular case $r$ (defined in 
Eq.~(\ref{eq:r-def})) is simply zero, $r=0$. So we do not need to shift 
the integration variable. Next, we choose a straight line contour running 
between zero and one, e.g. $\bar{z}_0 = +\frac{1}{2}$. As our original 
contour was at $z_0 = -\frac{1}{2}$, we must pick up the residue of the 
integrand at $z=0$ as we move to the new contour. This residue is just 
$1-\ln v$, however we are crossing the pole from left to right which means 
we encircle the pole clockwise, hence
\begin{equation}
    I_1 = -1 + \ln v +
    \int_{+\frac{1}{2}-i\infty}^{+\frac{1}{2}+i\infty}\frac{dz}{2\pi i}
    \Gamma(-z)\Gamma^2(1+z)\Gamma(-1-z) v^z\,.
\label{eq:1dMB-I1}
\end{equation}
Concentrating on the left-over integral, our next step is the synchronization 
of gamma functions. Using Eqs.~(\ref{eq:Gamma-sync-1})--(\ref{eq:Gamma-sync-4}), 
we find
\begin{equation}
\begin{split}
    I'_1 &= 
    \int_{+\frac{1}{2}-i\infty}^{+\frac{1}{2}+i\infty}\frac{dz}{2\pi i}
    \Gamma(-z)\Gamma^2(1+z)\Gamma(-1-z) v^z
\\& =
    \int_{+\frac{1}{2}-i\infty}^{+\frac{1}{2}+i\infty}\frac{dz}{2\pi i}
    \Gamma^2(1-z)\Gamma^2(z)\frac{1}{-1-z} v^z
    \,.
\end{split}
\end{equation}
As expected, the rational function in the integrand on the second line 
does not have poles for positive $z$. Now we are ready to compute the residue 
of the integrand at a generic positive integer $n$. Using Eqs.~(\ref{eq:gg2csc})~and~(\ref{eq:csc-exp}), we find
\begin{equation}
    \mathrm{Res}_{z\to n} \Gamma(1-z)^2\Gamma^2(z)\frac{1}{-1-z} v^z 
    =
    \frac{v^n}{(1+n)^2} - \frac{a^n}{(1+n)} \ln v\,,
    \quad n \in \mathbb{N}^+\,,
\end{equation}
where we have used that $(-1)^{2n} = 1$ for all positive integers $n$. 
Recalling that $v = (1-\cos\chi)/2$, it is clear that {$0\le v \le 1$}, thus 
we must close the contour of integration to the right (clockwise) and so
\begin{equation}
    I'_1 = -\sum_{n=1}^{\infty}\bigg[
    \frac{v^n}{(1+n)^2} - \frac{a^n}{(1+n)} \ln v
    \bigg]\,.
\end{equation}
The sums that appear are of the form discussed in Eq.~(\ref{eq:c-shift}) with 
$c=1$. It is then trivial to reduce the shift to zero and we are left with
$Z$-sums of depth one to infinity,
\begin{equation}
    I'_1 = -\frac{1}{v}\sum_{n=1}^{\infty}\frac{v^n}{(n)^2} + 1 
    + \frac{1}{v}\sum_{n=1}^{\infty}\frac{a^n}{(n)} \ln v - \ln v 
    = -\frac{Z(\infty,2,v)}{v} + 1 + \frac{Z(\infty,1,v)}{v} \ln v - \ln v\,.
\end{equation}
Recalling that depth one $Z$-sums to infinity simply correspond to 
classical polylogarithms, see Eq.~(\ref{eq:Z1-to-LIm}), thus we have
\begin{equation}
    I'_1 = -\frac{\mathrm{Li}_2(v)}{v} - \frac{\ln(1-v)\ln(v)}{v} + 1 - \ln v\,,
\end{equation}
where we have used that $\mathrm{Li}_1(v) = -\ln(1-v)$. Thus the complete 
integral $I_1$ in eq.(\ref{eq:1dMB-I1}) evaluates to
\begin{equation}
    I_1 = -\frac{\mathrm{Li}_2(v)}{v} - \frac{\ln(1-v)\ln(v)}{v}\,.
\end{equation}

Turning to the last integral, 
\begin{equation}
    I_2 = \int_{-\frac{1}{2}-i\infty}^{-\frac{1}{2}+i\infty}\frac{dz}{2\pi i}
    \Gamma(-z)\Gamma^2(1+z)\Gamma(-1-z) \psi^{(0)}(-1-z) v^z\,,
\end{equation}
once more we find that $r$ (defined in Eq.~(\ref{eq:r-def}) is zero, so we 
can move on straight away to shifting the contour to one running between zero 
and one and once more we choose $\bar{z}_0 = +\frac{1}{2}$. As before, we 
must account for the residue of the pole at $z=0$ which we cross from left 
to right, and we find
\begin{equation}
\begin{split}
    I_2 &= 
    - 1 + \gamma + \frac{\pi^2}{6} - \gamma \ln v + \frac{\ln^2 v}{2}
\\& +
    \int_{+\frac{1}{2}-i\infty}^{+\frac{1}{2}+i\infty}\frac{dz}{2\pi i}
    \Gamma(-z)\Gamma^2(1+z)\Gamma(-1-z) \psi^{(0)}(-1-z) v^z\,.
\end{split}
\label{eq:1dMB-I2}
\end{equation}
In order to evaluate the remaining Mellin-Barnes integral, we first synchronize 
the gamma and $\psi$ functions using
Eqs.~(\ref{eq:Gamma-sync-1})--(\ref{eq:Gamma-sync-4}) as well as 
Eqs.~(\ref{eq:psi-sync-2})~and~(\ref{eq:psi-reflect}), which leads to
\begin{equation}
    \begin{split}
        I'_2 &= \int_{+\frac{1}{2}-i\infty}^{+\frac{1}{2}+i\infty}\frac{dz}{2\pi i}
    \Gamma(-z)\Gamma^2(1+z)\Gamma(-1-z) \psi^{(0)}(-1-z) v^z
\\& =
    \int_{+\frac{1}{2}-i\infty}^{+\frac{1}{2}+i\infty}\frac{dz}{2\pi i}
    \Gamma^2(1-z)\Gamma^2(z)\frac{1}{-1-z}
    \bigg[-\frac{1}{-1-z} - \frac{1}{-z} + \psi^{(0)}(z) + \pi \cot(\pi z)\bigg] v^z \,.
    \end{split}
\end{equation}
Notice that as before, the rational functions which appear do not have poles 
for positive $z$. Computing the residues of the integrand at a generic 
positive integer $n$ is straightforward using the methods outlined above and 
the result reads
\begin{equation}
    \begin{split}
\mathrm{Res}_{z \to n} &
        \Gamma^2(1-z)\Gamma^2(z)\frac{1}{-1-z}
    \bigg[-\frac{1}{-1-z} - \frac{1}{-z} + \psi^{(0)}(z) + \pi \cot(\pi z)\bigg] v^z
\\ 
   \qquad =& 
    \frac{v^n}{(n+1)^2}Z(n-1,1,1)
    +\frac{v^n}{n+1} Z(n-1,2,1)
    -\ln v \frac{v^n}{n+1} Z(n-1,1,1)
\\ &
    +\frac{v^n}{(n+1)^3}
    -(1+\gamma )\frac{v^n}{(n+1)^2}
    -\left(\frac{\ln^2 v}{2} - \gamma  \ln v - \ln v + \frac{\pi^2}{6}\right)
    \frac{v^n}{n+1}
\\ &
    +\frac{v^n}{n^2}
    -\ln v\frac{v^n}{n}\,,\quad n \in \mathbb{N}^+\,.
    \end{split}
\label{eq:I2-res}
\end{equation}
Once more we close the contour to the right, picking up the residues at
$n=1,2,\ldots$. Thus, we must sum the right hand side of Eq.~(\ref{eq:I2-res}) 
from $n=1$ to $n=\infty$. Clearly the sums can be evaluated in term of $Z$-sums, 
after reducing the offset to zero where needed. By way of illustration, 
let us consider the summation of the first term on the right hand side. 
Using Eq.~(\ref{eq:c-shift-2}), we obtain
\begin{equation}
\begin{split}
&
    \sum_{n=1}^{\infty} \frac{v^n}{(n+1)^2}Z(n-1,1,1) = 
    \frac{1}{v}\sum_{n=1}^{\infty} \frac{v^n}{n^2}Z(n-1,1,1)
    -\sum_{n=1}^{\infty} \frac{v^n}{n(1+n)^2}Z(n-1)
\\& \qquad=
    \frac{1}{v}\sum_{n=1}^{\infty} \frac{v^n}{n^2}Z(n-1,1,1)
    + \sum_{n=1}^{\infty} \frac{v^n}{(1+n)^2}
    + \sum_{n=1}^{\infty} \frac{v^n}{1+n}
    - \sum_{n=1}^{\infty} \frac{v^n}{n}
\end{split}
\end{equation}
where we used $Z(n-1)=1$ for $n \in \mathbb{N}^+$ and perfomred partial fractioning of the expression with respect to $n$. The second and thrid 
sums on the second line still have a non-zero offset, so we apply
Eq.~(\ref{eq:c-shift}) 
and obtain
\begin{equation}
\begin{split}
&
    \sum_{n=1}^{\infty} \frac{v^n}{(n+1)^2}Z(n-1,1,1) 
\\& \qquad= 
    \frac{1}{v}\sum_{n=1}^{\infty} \frac{v^n}{n^2}Z(n-1,1,1)
    + \frac{1}{v}\sum_{n=1}^{\infty} \frac{v^n}{n^2} - 1
    + \frac{1}{v}\sum_{n=1}^{\infty} \frac{v^n}{n} - 1
    - \sum_{n=1}^{\infty} \frac{v^n}{n}
\\& \qquad=
    \frac{1}{v}Z(\infty,2,1,v,1) + \frac{1}{v}Z(\infty,2,v)
    + \frac{1}{v}Z(\infty,1,v) - Z(\infty,1,v) - 2\,,
\end{split}
\end{equation}
where in the last line we used the definition of $Z$-sums to evaluate the 
summations. We can evaluate the rest of the sums coming from Eq.~(\ref{eq:I2-res}) 
in a similar fashion to arrive at the final result (recall that the contour is 
closed in the clockwise direction)
\begin{equation}
    \begin{split}
        I'_2 &= 
        -\frac{Z(\infty,1,2,v,1)}{v}
        -\frac{Z(\infty,2,1,v,1)}{v}
        +\ln v \frac{Z(\infty ,1,1,v,1)}{v}
        -\frac{Z(\infty,3,v)}{v}
\\ &
        +\gamma \frac{Z(\infty ,2,v)}{v}
        +\bigg(
        \frac{\ln^2 v}{2}
        -\gamma \ln v
        +\frac{\pi^2}{6}
        \bigg)\frac{Z(\infty,1,v)}{v}
        + 1 - \gamma - \frac{\pi^2}{6} + \gamma  \ln v 
\\ &        
        - \frac{\ln^2 v }{2}\,.
    \end{split}
\end{equation}
Then, the complete $I_2$ integral of Eq.~(\ref{eq:1dMB-I2}) becomes
\begin{equation}
    \begin{split}
        I_2 &= 
        -\frac{Z(\infty,1,2,v,1)}{v}
        -\frac{Z(\infty,2,1,v,1)}{v}
        +\ln v \frac{Z(\infty ,1,1,v,1)}{v}
        -\frac{Z(\infty,3,v)}{v}
\\ &
        +\gamma \frac{Z(\infty ,2,v)}{v}
        +\bigg(
        \frac{\ln^2 v}{2}
        -\gamma \ln v
        +\frac{\pi^2}{6}
        \bigg)\frac{Z(\infty,1,v)}{v}\,.
    \end{split}
\end{equation}
This result can be written in terms of more familiar functions, since
the depth one sums are simply classical polylogarithms, see
Eq.~(\ref{eq:Z1-to-LIm}), while the depth two sums can be expressed 
in terms of harmonic polylogarithms using 
Eqs.~(\ref{eq:Z-sum-to-Li})~and~(\ref{eq:H-to-Li}),
\begin{align}
    Z(\infty,1,2,v,1) &= \mathrm{Li}_{2,1}(1,v) = H_{1,2}(v) = H_{1,0,1}(v)\,,
    \\
    Z(\infty,2,1,v,1) &= \mathrm{Li}_{1,2}(1,v) = H_{2,1}(v) = H_{0,1,1}(v)\,,
    \\
    Z(\infty,1,1,v,1) &= \mathrm{Li}_{1,1}(1,v) = H_{1,1}(v),
\end{align}
where on the right hand sides of the first two equations above, we have given the HPLs both in `m'- and `a'-notation ($H_{1,1}(v)$ is clearly the same in both notations).
Finally, MPLs (and thus also HPLs) of weight three or less can always be written in terms of classical polylogarithms~\cite{Lewin:1981,Kellerhals1995,Goncharov:1996dce}. For the specific harmonic polylogarithms appearing above, we find
\begin{align}
    H_{1,2}(v) &= 2 \mathrm{Li}_3(1-v) 
    + \mathrm{Li}_2(v) \ln(1-v)
    + \ln v \ln^2(1-v)
    -\frac{\pi^2}{3} \ln(1-v)
    -2 \zeta_3
    \,,
\\
    H_{2,1}(v) &= -\mathrm{Li}_3(1-v)
    -\mathrm{Li}_2(v) \ln(1-v)
    -\frac{1}{2} \ln v \ln^2(1-v)
    +\frac{\pi^2}{6} \ln(1-v)
    +\zeta_3\,
\\ 
    H_{1,1}(v) &= \frac{1}{2} \ln^2(1-v)\,.
\end{align}
Thus, $I_2$ can indeed be expressed with just classical polylograithms,
\begin{equation}
\begin{split}
        I_2 &= 
        -\frac{\mathrm{Li}_3(v)}{v}
        -\frac{\mathrm{Li}_3(1-v)}{v}
        +\gamma \bigg(\frac{\mathrm{Li}_2(v)}{v} + \frac{\ln(1-v) \ln v}{v}\bigg)
        -\frac{\ln(1-v) \ln^2 v}{2 v}
        +\frac{\zeta_3}{v}\,.
\end{split}
\end{equation}
Combining our previous results, the complete solution for the angular 
integral $I_{1,1}(v,\epsilon)$ up to and including terms at order $\epsilon^2$ 
may be written as (see also \wwwaux{I11\_massless})
\begin{equation}
\begin{split}
    I_{1,1}(v,\epsilon) = \frac{\pi}{v}\Bigg\{
    &-\frac{1}{\epsilon}
    +\ln v 
    +\bigg[\mathrm{Li}_2(v) 
        - \frac{1}{2} \ln^2v 
        + \ln(1-v) \ln v 
        - \frac{\pi^2}{6}
    \bigg]\epsilon
\\ &
    -\bigg[\mathrm{Li}_3(v)
        + \mathrm{Li}_3(1-v)
        - \frac{1}{6}\ln^3(v) 
        + \frac{1}{2}\ln(1-v) \ln^2 v 
\\ &
        - \frac{\pi^2}{6} \ln v
        - \zeta_3
   \bigg]\epsilon ^2 
   +\mathcal{O}(\epsilon^3)
    \bigg\}\,.
\end{split}
\end{equation}

\subsection{Expansions of Special Functions}
\label{sec:expansion-special-fcns}

In section~\ref{sec:gamma_hyperg}, we have shown that the generalized Gauss hypergeometric function ${}_A F_B$ is represented as a one-dimensional sum. There exist several multi-variable generalizations of these hypergeometric functions that can be defined through sums of depths greater than one. Some of the better known generalizations to the two-variable case are the Appell functions $F_1$, $F_2$, $F_3$ and $F_4$, which also have simple representations as two-dimensional Mellin-Barnes integrals. In this section, we briefly comment on the fact that the summation technology presented above can be employed to expand such functions around integer values of their parameters. 

We start by considering the Gauss hypergeometric function ${}_2F_1$, given by the following sum
\begin{equation}
    {}_2F_1(\alpha_1,\alpha_2;\beta_1;z) = 
    \frac{\Gamma(\beta_1)}{\Gamma(\alpha_1)\Gamma(\alpha_2)}
    \sum_{n=0}^{\infty}
    \frac{\Gamma(n+\alpha_1)\Gamma(n+\alpha_2)}
    {\Gamma(n+\beta_1)\Gamma(n+1)}
    z^n.
\end{equation}
If the parameters $\alpha_1$, $\alpha_2$ and $\beta_1$ are of the form $n+m\epsilon$, $n\in\mathbb{Z}$, we can compute the $\epsilon$ expansion of ${}_2F_1(\alpha_1,\alpha_2;\beta_1;z)$ as follows. Let us write $\alpha_i = a_i+c_i \epsilon$ and $
\beta_1 = b_1 + d_1 \epsilon$, where we assume that $a_i$ and $b_1$ are integers. Let us concentrate on the following sum
\begin{equation}
    \sum_{n=1}^{\infty}
    \frac{\Gamma(n+a_1+c_1\eps)\Gamma(n+a_2+c_2\eps)}
    {\Gamma(n+b_1+d_1\eps)\Gamma(n+1)}
    z^n.
\end{equation}
First, we expand the gamma functions using Eq.~(\ref{eq:Gamma-exp-pos}). This expansion will introduce products of $Z$-sums of the form $Z_{11\ldots 1}(n+a_1-1)$, $Z_{11\ldots 1}(n+a_2-1)$, $Z_{11\ldots 1}(n+b_1-1)$ and $Z_{11\ldots 1}(n)$. Note that the expansion of the gamma functions in the denominator can be brought to this form using Eq.~(\ref{eq:Zsum-invert}), followed by a conversion of $S$-sums to $Z$-sums. Next, we can synchronize the sums and use the algebra of $Z$-sums to write the products in terms of just single sums. After performing these manipulations, our problem reduces to the computation of sums of the form
\begin{equation}
    \sum_{n=1}^{\infty} \frac{z^n}{(n+c)^m}Z(n-1,\ldots),
\end{equation}
where $c$ is a non-negative integer. However, these are just the sums we have already encountered in Eq.~(\ref{eq:c-shifted-sum})! We have seen there that they can be systematically reduced to $Z$-sums by recursively shifting $c\to c-1$ until this offset is reduced to zero, at which point the sums trivially evaluate to $Z$-sums.

In fact, it is clear that the same algorithm (referred to as algorithm A in~\cite{Moch:2001zr}) can be used to solve any sum of the form
\begin{equation}
\begin{split}
&
    \sum_{n=1}^N \frac{z^n}{(n+c)^m}
    \frac{\Gamma(n+a_1+c_1\eps)}{\Gamma(n+b_1+d_1\eps)}
    \cdots
    \frac{\Gamma(n+a_k+c_k\eps)}{\Gamma(n+b_k+d_k\eps)}
\\ &\qquad\times
    Z(n+o-1;m_1,\ldots,m_l;z_1,\ldots,z_l)
\end{split}
\label{eq:A-type-sum}
\end{equation}
in terms of $Z$-sums. The upper limit of summation is allowed to be infinity. This type of sum includes in particular the generalized Gauss hypergeometric function ${}_A F_B$.

Similar algorithms exist for other types of sums~\cite{Moch:2001zr}, which allow to perform the expansions of more general special functions. As an example, we briefly discuss one specific two-variable generalization of the Gauss hypergeometric function, the Appell function of the first kind, $F_1$. This function was introduced in chapter~\ref{chapter:complex}, see Eq.~(\ref{eq:Appel-F1-def}). There it was defined by a double sum,
\begin{equation}
F_1(a,b_1,b_2;c,z_1,z_2) = 
    \sum_{m=0}^{\infty} \sum_{n=0}^{\infty}
    \frac{(a)_{m+n} (b_1)_m (b_2)_n }{(c)_{m+n} m! n!}
    z_1^m z_2^n.
\end{equation}
This double series is absolutely convergent for $|z_1|<1$ and $|z_2|<1$. In order to make contact with the language of nested sums, we rewrite the double sum in the following way. First, we separate the terms where $m$ or $n$ are zero, then we make the change of summation variable $n \to k=m+n$. Clearly the summation in $k$ runs from 2 to infinity, while the positivity of $n = k-m$ implies that the summation in $m$ runs from 1 to $k-1$. Then we find
\begin{equation}
    \begin{split}
    &
    F_1(a,b_1,b_2;c,z_1,z_2) = 
    1
\\ &\qquad
    + \frac{\Gamma(c)}{\Gamma(a)\Gamma(b_1)}
    \sum_{m=1}^{\infty} z_1^m 
    \frac{\Gamma(m+a)\Gamma(m+b_1)}{\Gamma(m+c)\Gamma(m+1)}
\\ &\qquad
    + \frac{\Gamma(c)}{\Gamma(a)\Gamma(b_2)}
    \sum_{n=1}^{\infty} z_2^n 
    \frac{\Gamma(n+a)\Gamma(n+b_2)}{\Gamma(n+c)\Gamma(n+1)}
\\ &\qquad
    +\frac{\Gamma(c)}{\Gamma(a)\Gamma(b_1)\Gamma(b_2)}
    \sum_{k=1}^{\infty} 
    \frac{\Gamma(k+a)}{\Gamma(k+c)}
    \sum_{m=1}^{k-1} 
    z_1^m \frac{\Gamma(m+b_1)}{\Gamma(m+1)}
    z_2^{k-m} \frac{\Gamma(k-m+b_2)}{\Gamma(k-m+1)}.
    \end{split}
\label{eq:F1-nested-sum}
\end{equation}
Let us assume that all parameters $a$, $b_1$, $b_2$ and $c$ are of the form $i+j\eps$, where $i$ is an integer. Then the expansion of the sums on the second and third lines of the above equation can be performed with the algorithm just discussed. However, the nested sum on the fourth line is new. In order to perform the expansion of this sum, let us first concentrate on the inner sum over $m$. We again use Eqs.~(\ref{eq:Gamma-exp-pos})~and~(\ref{eq:Zsum-invert}), convert the $S$-sums to $Z$-sums, syncronize the subsums and use the algebra of $Z$-sums to write products of sums with the same upper limit of summation in terms of just single sums. Then we obtain sums of the form
\begin{equation}
    \sum_{m=1}^{k-1} \frac{z_1^m}{(m+o)^p}Z(m-1;m_1,\ldots)
    \frac{z_2^{k-m}}{(k-m+o')^{p'}} Z(k-m-1,m'_1,\ldots).
\end{equation}
After performing the partial fraction decomposition of the denominators (recall $p$ and $p'$ are fixed integers), the resulting expressions will involve either the denominator $(m+o)$ or $(k-m+o')$, but not both at the same time. Thus by changing the index of summation $m\to k-m$ in those terms that involve the denominator $(k-m+o')$, we can further reduce the above sum to sums of the type
\begin{equation}
    \sum_{m=1}^{k-1} \frac{z^m}{(m+o)^p}Z(m-1;m_1,\ldots)
    Z(k-m-1;m'_1,\ldots).
\label{eq:B-type-inter}
\end{equation}
If the depth of $Z(k-m-1;m'_1,\ldots)$ is zero, we simply recover a sum of the form
\begin{equation}
    \sum_{m=1}^{k-1} \frac{z^m}{(m+o)^p}Z(m-1;m_1,\ldots),
\end{equation}
with upper summation index $k-1$. We have already discussed that this sum can be evaluated in terms of $Z$-sums of the form $Z(k-1,\ldots)$. Thus, the double sum on the last line of Eq.~(\ref{eq:F1-nested-sum}) is reduced to a sum of the form of Eq.~(\ref{eq:A-type-sum}) which we can solve with algorithm A presented above. If the depth of $Z(k-m-1;m'_1,\ldots)$ is greater than one, we can write Eq.~(\ref{eq:B-type-inter}) as 
\begin{equation}
    \sum_{n=1}^{k-1}\left[\sum_{m=1}^{n-1} 
    \frac{z^m}{(m+o)^p}Z(m-1;m_1,\ldots)
    \frac{(x'_1)^{n-m}}{(n-m)^{m'_1}}
    Z(n-m-1;m'_2,\ldots)\right].
\end{equation}
But this last sum is just of the form that we had started with, but the depth of the second $Z$-sum is reduced. Thus, we can use recursion to lower this depth to zero, when the recursion terminates. More generally, the algorithm just described (called algorithm B in~\cite{Moch:2001zr}) can be used to evaluate sums of the form
\begin{equation}
    \begin{split}
    &
        \sum_{m=1}^{k-1} \frac{x^m}{(m+o)^p}
        \frac{\Gamma(m+a_1+c_1\eps)}{\Gamma(m+b_1+d_1\eps)}
        \cdots
        \frac{\Gamma(m+a_k+c_k\eps)}{\Gamma(m+b_k+d_k\eps)} 
\\ &\qquad\times
        Z(m+r-1;m_1,\ldots,m_l;z_1,\ldots,z_l)
\\ &\qquad\times
        \frac{y^{k-m}}{(k-m+o')^{p'}}
        \frac{\Gamma(k-m+a'_1+c'_1\eps)}{\Gamma(k-m+b'_1+d'_1\eps)}
        \cdots
        \frac{\Gamma(k-m+a'_k+c'_k\eps)}{\Gamma(k-m+b'_k+d'_k\eps)}
\\ &\qquad\times
        Z(k-m+r'-1;m'_1,\ldots,m'_{l'};z'_1,\ldots,z'_{l'}).
    \end{split}
\end{equation}
Thus, the expansion of the first Appell function can be computed with the help of algorithms A and B. To finish, we note that further algorithms exist to handle also certain sums in which binomial coefficients appear. We refer the interested reader to the original literature for further details~\cite{Moch:2001zr}. Moreover, these algorithms have been implemented in the {\tt XSummer} package~\cite{Moch:2005uc} for the computer algebra system FORM. An adaptation of algorithm A described above, useful for the expansion of Gauss generalized hypergeometric functions ${}_A F_B$, has also been implemented in the {\tt HypExp} package for \math{}~\cite{Huber:2005yg}, see appendix~\ref{app:mplsgpls}. 

To finish, we reiterate that throughout we have assumed that the expansions are performed around integer values of the parameters of the hypergeometric functions and their generalizations. However, in practical calculations, sometimes special functions with half-integer parameters also appear. In this case, the algorithms described above cannot be applied in a straightforward manner and new algorithms are needed. In particular, the expansion of generalized hypergeometric functions of the form ${}_PF_{P-1}$ around half-integer values can be performed with the methods described in~\cite{Huber:2007dx}, which have been implemented in the {\tt HypExp~2} package for \math{}.

\begin{tips}{The Angular Integral with Two Denominators and One Mass}
Consider the angular integral with two denominators and one mass, Eq.~(\ref{eq:Om-2-1m-fin}), with $j=k=1$. This integral appears e.g. when integrating over real soft radiation in a process with massive external legs. We can apply the techniques and tools described above to obtain the expansion of this integral around $\eps=0$. Using the {\tt XSummer} package, up to ${\mathcal O}(\eps^2)$ accuracy we find (see also \wwwaux{O11\_onemass})
\begin{equation}
    \begin{split}
    \Omega_{1,1}(v_{11},v_{12},\eps) &=
    -\Omega_{1-2\eps}
    \frac{\pi}{2v_{12}}
    \bigg\{
    \frac{1}{\eps}
    - \mathrm{Li}_1(V_{-}) 
    - \mathrm{Li}_1(V_{+})
\\ &    
    - \bigg[
    2 \mathrm{Li}_2(V_{-}) 
    + 2 \mathrm{Li}_2(V_{+}) 
    + \mathrm{Li}_{1,1}(1,V_{-}) 
    + \mathrm{Li}_{1,1}(1,V_{+})     
\\ &
    - \mathrm{Li}_{1,1}\bigg(\frac{V_{-}}{V_{+}},V_{+}\bigg)      
    - \mathrm{Li}_{1,1}\bigg(\frac{V_{+}}{V_{-}},V_{-}\bigg)      
    \bigg]\eps + {\mathcal O}(\eps^2)
    \bigg\},
    \end{split}
\label{eq:O2d1m-11-exp}
\end{equation}
%\commjg{aux file????}
where we have set $V_{\pm} = \frac{2v_{12}-1\pm \sqrt{1-4v_{11}}}{2v_{12}}$.
Furthermore, we have extracted a factor of $\Omega_{1-2\eps} = 2^{1 - 2 \eps} \pi^{-\eps} \frac{\Gamma(1 - \eps)}{\Gamma(1 - 2 \eps)}$, whose expansion would introduce irrelevant constants like $\ln(4\pi)$ and $\gamma$ into the final expression. In order to apply {\tt XSummer}, we have used the nested sums representation of the Appell $F_1$ function as given in Eq.~(\ref{eq:F1-nested-sum}).
We note in passing that the result in Eq.~(\ref{eq:O2d1m-11-exp}) can be expressed with just the logarithm and dilogarithm functions as follows,
\begin{equation}
    \begin{split}
    &
    \Omega_{1,1}(v_{11},v_{12},\eps) =
    -\Omega_{1-2\eps}
    \frac{\pi}{2v_{12}}
    \bigg\{
    \frac{1}{\eps}
    + \ln\frac{v_{11}}{v_{12}^2}
    - \bigg[
    \frac{1}{2} \ln^2\bigg(\frac{1-\sqrt{1-4v_{11}}}{1+\sqrt{1-4v_{11}}}\bigg)
\\ &\qquad   
    + 2 \mathrm{Li}_2\bigg(\frac{2v_{12}-1 - \sqrt{1-4v_{11}}}{2v_{12}}\bigg)
    + 2 \mathrm{Li}_2\bigg(\frac{2v_{12}-1 + \sqrt{1-4v_{11}}}{2v_{12}}\bigg)
    \bigg]\eps 
%\\ &   
    + {\mathcal O}(\eps^2)
    \bigg\}.
    \end{split}
\label{eq:O2d1m-11-exp-simp}
\end{equation}

\end{tips}

\subsection{More General Sums, {Massive Propagators} 
\label{sec:moregensums}}

Up to this point, we have been dealing with \MB{} integrals where the integration variables enter the gamma functions only in the form $\Gamma(\ldots\pm a z), \; a=1$, but not with more general $a$ coefficients. Indeed, as alluded to at the beginning of this Chapter, most of the tools that we have discussed are most useful in this case. However, more general \MB{} integrals certainly arise in practical calculations ($a \neq 1$ appears in case of massive propagators and as we will see leads to series with binomial sums~\cite{Davydychev:2003mv}). In order to illustrate the complications that arise in this case, consider the self-energy diagram of Fig.~\ref{self-energy} in chapter~\ref{chapter-MBrepr}, where the internal propagators are massive, with mass $m$. The corresponding \MB{} representation is given in Eq.~(\ref{MB-SE1l2m-lemmas}). Let us consider the first (constant) term in the $\epsilon$ expansion of this integral, %\added{Is $1/(2\pi i)$ missing???}
\bq
 G(1)_{\text{SE2l2m}} \equiv I_1 =  \int_{-1/4-i \infty}^{-1/4+i \infty} \frac{d z_1}{2 \pi i} \left(-\frac{m^2}{s}\right)^{z}  \frac{\Gamma(1 - z)^2 \Gamma(-z) \Gamma(z)}{ \Gamma(2 - 2 z)}. \label{eq:se1l2m-again}
\eq
(This is just the function $aux(\epsilon = 0, \Re(z)=-1/4)$ of Eq.~(\ref{eq:auxbasic}).) Notice that the gamma function in the denominator has the argument $2-2z$, i.e., the coefficient of the integration variable in the argument is $-2$.

Let us now apply Cauchy's residue theorem. For the sake of being explicit, we assume that $|m^2/s| < 1$ and close the contour to the right. The first few residues of the integrand read
\begin{eqnarray}
 \mathrm{Res}_{z=0}\; \left(-\frac{m^2}{s}\right)^{z}  \frac{\Gamma(1 - z)^2 \Gamma(-z) \Gamma(z)}{ \Gamma(2 - 2 z)} &=& 
 -2 - \ln\left(-\frac{m^2}{s}\right),
 \\
 \mathrm{Res}_{z=1}\; \left(-\frac{m^2}{s}\right)^{z}  \frac{\Gamma(1 - z)^2 \Gamma(-z) \Gamma(z)}{ \Gamma(2 - 2 z)} &=& \frac{2m^2}{s}
 \left[-1 + \ln\left(-\frac{m^2}{s}\right)\right], 
 \\
 \mathrm{Res}_{z=2}\; \left(-\frac{m^2}{s}\right)^{z}  \frac{\Gamma(1 - z)^2 \Gamma(-z) \Gamma(z)}{ \Gamma(2 - 2 z)} &=& \frac{m^4}{s^2}
 \left[1 + 2\ln\left(-\frac{m^2}{s}\right)\right].
 \end{eqnarray}
The residue for a general non-negative integer can be written as follows,\footnote{For $n=0$, the correct residue, $-2 - \ln\left(-\frac{m^2}{s}\right)$, is obtained by computing the limit as $n\to 0$ of the expression in Eq.~(\ref{eq:SE212m-res-n}).}
\begin{equation}
    \begin{split}
    &
        \mathrm{Res}_{z=n} \left(-\frac{m^2}{s}\right)^{z}  \frac{\Gamma(1 - z)^2 \Gamma(-z) \Gamma(z)}{ \Gamma(2 - 2 z)}
        = 
        2 \bigg(\frac{m^2}{s}\bigg)^n
        \frac{\Gamma(2n-1)}{\Gamma(n)\Gamma(n+1)}
    \\&\qquad\times
        \left[\ln\left(-\frac{m^2}{s}\right) - \psi^{(0)}(n) 
        - \psi^{(0)}(n+1) + 2 \psi^{(0)}(2n-1)\right],
        \qquad n\in \mathbb{N}.
    \end{split}
    \label{eq:SE212m-res-n}
\end{equation}
{\it However, upon attempting to perform the summation over residues, we immediately face a problem: the summand cannot be reduced to a rational expression in $n$, and so cannot be brought to the form of a $Z$-sum.} 

This is evident from appearance of the overall factor of $\frac{\Gamma(2n-1)}{\Gamma(n)\Gamma(n+1)}$. Using Eq.~(\ref{AsymptoticnGamma}), it is easy to show that this factor grows exponentially as $n\to \infty$ and so it cannot correspond to a rational function of $n$. Thus, the solution cannot be expressed as a combination of polylogarithms of argument $(m^2/s)$.

In order to get an idea of the form of the solution, let us examine the sum proportional to $\ln(-m^2/s)$ in Eq.~(\ref{eq:SE212m-res-n}). This sum can be evaluated in closed form:
\begin{equation}
    \sum_{n=0}^{\infty} 2 \bigg(\frac{m^2}{s}\bigg)^n \frac{\Gamma(2n-1)}{\Gamma(n)\Gamma(n+1)}\ln\left(-\frac{m^2}{s}\right)  =
    -\sqrt{1 - \frac{4m^2}{s}} \ln\left(-\frac{m^2}{s}\right).
\label{eq:gamma2n-sum}
\end{equation}
\begin{tips}{Fractional and Inverse Binomial Sums}
Let us evaluate the sum
\begin{equation}
    \sum_{n=1}^{\infty} \frac{\Gamma(2n-1)}{\Gamma(n)\Gamma(n+1)} x^n.
\label{eq:binom-sum}
\end{equation}
We begin by writing
\begin{equation}
\begin{split}
    \sum_{n=1}^{\infty} \frac{\Gamma(2n-1)}{\Gamma(n)\Gamma(n-1)}  x^n &=
    \sum_{n=1}^{\infty} \frac{(2n-2)!}{(n-1)! n!} x^n
\\&=
    \sum_{n=1}^{\infty} \frac{1\cdot 2\cdot 3\cdot 4\cdot\ldots\cdot(2n-3)\cdot(2n-2)}{(n-1)! n!} x^n
\\&=
    \sum_{n=1}^{\infty} 2^{2n-2}\frac{\frac{1}{2}\cdot 1\cdot \frac{3}{2}\cdot 2\cdot\ldots\cdot\frac{2n-3}{2}\cdot(n-1)}{(n-1)! n!} x^n.
\end{split}
\label{eq:binom-sum-1}
\end{equation}
Written in this way, it is clear that we can cancel a factor of $(n-1)!$ between the numerator and denominator. Then we have
\begin{equation}
\begin{split}
    &
    \sum_{n=1}^{\infty} 2^{2n-2}\frac{\frac{1}{2}\cdot 1\cdot \frac{3}{2}\cdot 2\cdot\ldots\cdot\frac{2n-3}{2}\cdot(n-1)}{(n-1)! n!} x^n 
\\ &\qquad=
    \frac{1}{4}\sum_{n=1}^{\infty} (-1)^{n-1}\frac{\left(-\frac{1}{2}\right)\cdot \left(-\frac{3}{2}\right)\cdot\ldots\cdot\left(-\frac{2n-3}{2}\right)}{n!} (4x)^n
\\ &\qquad=
    -\frac{1}{2}\sum_{n=1}^{\infty} \frac{\frac{1}{2}\cdot\left(-\frac{1}{2}\right)\cdot \left(-\frac{3}{2}\right)\cdot\ldots\cdot\left(-\frac{2n-3}{2}\right)}{n!} (-4x)^n    
\\ &\qquad=
    -\frac{1}{2}\sum_{n=1}^{\infty} \frac{\frac{1}{2}\cdot\left(\frac{1}{2}-1\right)\cdot \left(\frac{1}{2}-2\right)\cdot\ldots\cdot\left(\frac{1}{2}-n+1\right)}{n!} (-4x)^n.    
\end{split}
\label{eq:binom-sum-2}
\end{equation}
The sum on the last line can now be computed using the binomial theorem for fractional exponents, see Eq.~(\ref{eq:powersum2}). Examining the sum in Eq.~(\ref{eq:binom-sum-2}), we see that apart from missing the $n=0$ term (which is equal to 1), it is just the generalized binomial sum of Eq.~(\ref{eq:powersum2}) with $\alpha=\frac{1}{2}$. Thus
\begin{equation}
\begin{split}
    &
    -\frac{1}{2}\sum_{n=1}^{\infty} \frac{\frac{1}{2}\cdot\left(\frac{1}{2}-1\right)\cdot \left(\frac{1}{2}-2\right)\cdot\ldots\cdot\left(\frac{1}{2}-n+1\right)}{n!} (-4x)^n =
    -\frac{1}{2}\left(\sqrt{1-4x}-1\right).
\end{split}
\label{eq:binom-sum-3}
\end{equation}
Hence, we have established that
\begin{equation}
    \sum_{n=1}^{\infty} \frac{\Gamma(2n-1)}{\Gamma(n)\Gamma(n+1)} x^n 
    =
    \frac{1}{2} - \frac{1}{2}\sqrt{1-4x}.
\label{eq:binom-sum-res}
\end{equation}
A straightforward application of this result then leads immediately to 
Eq.~(\ref{eq:gamma2n-sum}).

We note in passing that if we choose to close the contour of integration to the left in Eq.~(\ref{eq:se1l2m-again}), the residue of the integrand at a generic negative integer $z=-n$, $n\in\mathbb{N}^+$ reads
\begin{equation}
    \begin{split}
        \mathrm{Res}_{z=-n} \left(-\frac{m^2}{s}\right)^{z}  \frac{\Gamma(1 - z)^2 \Gamma(-z) \Gamma(z)}{ \Gamma(2 - 2 z)}
        &= 
       	\bigg(\frac{m^2}{s}\bigg)^{-n}
        \frac{\Gamma(n)\Gamma(n+1)}{\Gamma(2n+2)}
    \\&=
       	\bigg(\frac{m^2}{s}\bigg)^{-n}
        \frac{1}{(2n+1)n\binom{2n}{n}}    	
	,
        \qquad -n\in \mathbb{N}^+.
    \end{split}
    \label{eq:SE212m-res-mn}
\end{equation}
After performing partial fractioning in $n$, we see that the solution involves \emph{inverse binomial sums} such as
\begin{equation}
\sum_{n=1}^{\infty} \frac{1}{n\binom{2n}{n}} x^n
\qquad\mbox{and}\qquad
\sum_{n=1}^{\infty} \frac{1}{(2n+1)\binom{2n}{n}} x^n.
\label{eq:inv-binom-sums}
\end{equation}
Again, these sums are not of a form that we can directly treat with the algorithms presented above for nested sums.

In order to illustrate the one typical way of dealing with such sums, consider e.g. the sum
\begin{equation}
f(x) = \sum_{n=0}^{\infty} \frac{1}{(2n+1)\binom{2n}{n}} x^n = \sum_{n=0}^{\infty} c_n x^n,
\label{eq:binom-sum-ex}
\end{equation}
which appeared already in Eq.~(\ref{eq:anal3}). Let us denote the $n$-th expansion coefficient as $c_n = \frac{1}{\binom{2n}{n}}$ and examine the $n+1$-st coefficient, $c_{n+1}$. It is easy to show using the definition of the binomial coefficient that $c_{n+1} = \frac{n+1}{2(2n+3)}c_n$, which implies the recursion relation
\begin{equation}
4(n+1)c_{n+1} + 2c_{n+1} = n c_n + c_n.
\end{equation}
But then, multiplying this equation by $x^n$ and summing over $n$ form zero to infinity leads to the differential equation
\begin{equation}
4f'(x) + \frac{2}{x}[f(x)-1] = x f'(x) + f(x).
\label{eq:inv-binom-de}
\end{equation}
Here we have used that sums involving the terms $(n+1)c_{n+1}$ and $n c_n$ can be related to the derivative function $f'(x)$ 
and also that $c_0=1$. Eq.~(\ref{eq:inv-binom-de}) is an ordinary first order inhomogeneous differential equation, and hence it can be reduced to integrations using the method of variation of constants. The integrals that appear are elementary and we find
\begin{equation}
f(x) = \frac{2}{\sqrt{x(4-x)}}\left[\pi
	- 2 \arctan\left(\frac{\sqrt{4-x}}{\sqrt{x}}\right)\right].
\end{equation}
Interestingly, the constant of integration is fixed by requiring that the expansion of the solution around $x=0$ should only involve integer powers of $x$. The other sum in Eq.~(\ref{eq:inv-binom-sums}) can be handled in a similar fashion. We see that again square roots appear in the solution, that can be simplified by using the conformal variable of Eq.~(\ref{eq:conformal}).

Another approach to evaluating inverse binomial sums starts from the observation that the general inverse binomial coefficient admits the integral representation
\begin{equation}
{\binom{n}{k}}^{-1} = (n+1)\int_0^1 dt\,t^k(1-t)^{n-k}. 
\end{equation}
Then e.g. for the particular sum in Eq.~(\ref{eq:binom-sum-ex}) we find
\begin{equation}
\begin{split}
\sum_{n=0}^{\infty} \frac{1}{(2n+1)\binom{2n}{n}} x^n &= 
\sum_{n=0}^{\infty} \frac{1}{(2n+1)} (2n+1)\int_0^1 dt\,t^n (1-t)^n x^n
\\ &=
 \int_0^1 \frac{dt}{1-t(1-t)x},
\end{split}
\end{equation}
where we have assumed that we can interchange the order of summation and integration. We can now try to perform the integration in $t$ to obtain the solution. We will not pursue this example any further here, but we will have more to say on how such parametric integrals can be evaluated in section~\ref{ssec:symbolic-int}.
\end{tips}
Thus, we generally expect the expression $\sqrt{1-\frac{4m^2}{s}}$ to appear in the solution. Let us then introduce (with some foresight) the so-called conformal variable
\begin{equation}
    y = \frac{\sqrt{1-4 m^2/s}-1}{\sqrt{1-4 m^2/s}+1}.
\label{eq:conformal}
\end{equation}
Then $\left(-\frac{m^2}{s}\right)$ becomes $\frac{y}{(1-y)^2}$. We can get an idea for the structure of the final result by computing the sum of the first few residues, expressed as a series in $y$ around zero. It is straightforward to perform this computation in \math{} (recall we are closing the contour in the clockwise direction),
\begin{minted}[frame=single,breaklines,fontsize=\small]{mathematica}
In[6]:= I1 = (y/(1-y)^2)^z Gamma[1-z]^2 Gamma[-z] Gamma[z]/Gamma[2-2z];
In[7]:= sum = Sum[-Residue[I1, {z, i}], {i, 0, 10}];  
In[8]:= series = Normal[Series[sum, {y, 0, 10}]] // Simplify
Out[8]:= 2 + (1 + 2 y + 2 y^2 + 2 y^3 + 2 y^4 + 2 y^5 + 2 y^6 + 2 y^7 + 2 y^8 + 2 y^9 + 2 y^10) Log[y]
\end{minted}
Evidently part of the result builds a geometric series, which for $|y|<1$ gives a final result
\begin{equation}
    G(1)_{\text{SE2l2m}} = I_1 = 2 + \frac{1+y}{1-y} \ln y.
\label{eq:G1SE2l2m-sol}
\end{equation}
The condition $|y|<1$ is satisfied for any values of $m,s$, see Fig.~\ref{fig:conf}. This kind of change of variable is also very useful in calculation of Feynman integrals using the differential equations approach~\cite{Kotikov:1990kg,Kotikov:1991hm,Kotikov:1991pm,Remiddi:1997ny,Henn:2013pwa,Czakon:2004wm}.

\begin{figure}[tbph]
\begin{center}
\begin{tikzpicture}[scale=0.9]
  \begin{axis}[
      axis lines=middle,
        width=10cm, height=7cm,
        xmin=-105,xmax=105,
        ymin=-0.1205,ymax=0.1205,
        label style={font=\footnotesize},
        ticklabel style={font=\footnotesize},
        xlabel=$s$,
        ylabel=$y$,
        x label style={at={(axis description cs:1,0.5)},anchor=west},
        y label style={at={(axis description cs:0.5,1)},anchor=south},
        legend style={draw=none},
        legend style={font=\footnotesize},
        ytick={-0.1,-0.05,0.05,0.1},
        yticklabels={$-0.10$, $-0.05$, $0.05$, $0.10$},
      ]

    \addplot[line width=0.8pt, domain=1:100,samples=100] {(sqrt((x-4)/x)-1)/(sqrt((x-4)/x)+1)};
    \addplot[line width=0.8pt, domain=-100:-1,samples=100] {(sqrt((x-4)/x)-1)/(sqrt((x-4)/x)+1)};

    \legend{$y(s)$}
  \end{axis}
\end{tikzpicture}
\end{center}
\caption[]{
The conformal transformation of Eq.~(\ref{eq:conformal}). Since $y$ only depens on the ratio of $m^2/s$, we have set $m=1$ in the plot. The conformal variable $y$ approaches $\pm 1$ as $s \to 0^{\pm}$ and $0$ as $s \to \infty$.}
\label{fig:conf}
\end{figure}
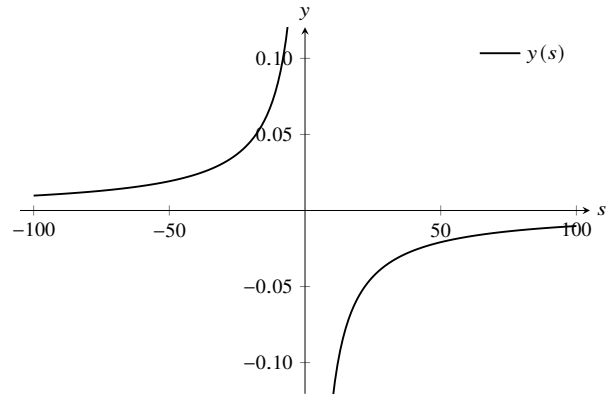

The ideas discussed above can be applied to other cases as well. For example, the the vertex function of Eq.~(\ref{parts}),
\begin{eqnarray}
V^{\epsilon^{-1}}_{V3l2m}(s) &=& 
-
% typos corrected 2015-01-21 (Johann Usovitsch)
 ~ \frac{1}{2s} 
\int\limits_{-\frac{1}{2}-i \infty}^{-\frac{1}{2}+i \infty} 
\frac{dz}{2\pi i}
{(-s)^{-z}}
{\frac{\Gamma^3(-z)\Gamma(1+z)} {\Gamma(-2z)}}, \label{eq:V3l2m-again}
\end{eqnarray}
considered in section~\ref{sec:simpleinvitation}, can be computed in a very similar way. Setting $s = -\frac{(1-y)^2}{y}$, which corresponds to the conformal transformation of Eq.~(\ref{eq:conformal}) with $m=1$, we again compute the first several residues and examine the expansion in $y$ around zero. Once more, we are able to identify part of this expansion as a geometric series and we find the result
\begin{equation}
V^{\epsilon^{-1}}_{V3l2m}(s) = -\frac{y \ln y}{1-y^2}.
\label{eq:V3l2m-res}
\end{equation}
 
\begin{tips}{Analytic Solutions With a Matching Procedure}
In our examples above, we have relied in the last step on identifying parts of expansions with geometic series. There is however a more general way of approaching this problem, which can also be used in more complicated cases. This relies on building a proper ansatz for the solution. Then, we can `sum up' such series in an easy way by constructing a general solution, expanding it and matching the coefficients of the specific sum with the coefficients of the  general solution.

Choosing a proper ansatz requires some understanding of the structure of the result. For example, in the case of the self-energy integral which we discussed above, we must realize that the basic variable entering the ansatz should be the conformal variable of Eq.~(\ref{eq:conformal}). Furthermore, we expect that a logarithmic function will play a pivotal role, since as a general rule, we know that such functions appear in one-loop computations. However, if we do not know the precise form of the solution, we should include more general functions as well, trying to guess the possible complete set of functions and coefficients, which can be monomials, polynomials or rational functions in the conformal variables. Even if we choose a wide base of functions, the redundant ones should disappear when matching the series to the unknown sum, as we will see below.

So for the self-energy integral of Eq.~(\ref{eq:se1l2m-again}), let us consider  the minimal set $S_1$ of functions 
\bq 
S_1 = \{ 1, \ln(y), \ln(1-y), \ln(1+y)\}, \label{eq:set1}
\eq 
with $y$ being the conformal variable of Eq.~(\ref{eq:conformal}). We also define the minimal set of monomials
\bq 
S_2 = \{ 1, y, \frac{1}{1-y}, \frac{1}{1+y} \}. \label{eq:set2}
\eq 
Taking the Cartesian product of $S_1$ and $S_2$, we obtain the following {\it{basis}}
\begin{equation}
    \begin{split}
    S_B = S_1 \times S_2 = 
    \bigg\{&1, y, \frac{1}{1-y}, \frac{1}{1+y},
    \ln(y), y \ln(y), \frac{\ln(y)}{1-y}, \frac{\ln(y)}{1+y},
\\ &    
    \ln(1-y), y \ln(1-y), \frac{\ln(1-y)}{1-y}, \frac{\ln(1-y)}{1+y},
\\\ &    
    \ln(1+y), y \ln(1+y), \frac{\ln(1+y)}{1-y}, \frac{\ln(1+y)}{1+y}
    \bigg\},
    \end{split}
\label{eq:setB}
\end{equation}
which contains 16 terms, i=1,\ldots,16. Our ansatz is then simply a linear combination of these 16 terms with undetermined coefficients $c[i]$,
\begin{equation}
{\rm Ansatz}=\sum_{i=1}^{16} c[i] S_B[i].
\label{eq:ansatz1}
\end{equation}
In order to determine the $c[i]$, we first Taylor expand our ansatz in $y$ around zero. Since we must determine 16 unknowns, this expansion should include at least the first 16 terms. Second, we take the sum of residues\footnote{The sum of residues must include enough terms such that all terms in the Taylor expansion to the required order are generated in the following step.} of the unknown function $G(1)_{\text{SE2l2m}}$ in Eq.~(\ref{eq:se1l2m-again}), and Taylor expand this too in $y$ around zero. Next, we compare terms of the same order $y$ in the two expansions. Since the function $\ln(y)$ does not have a Taylor expansion around zero, terms of the form $y^k \ln(y)$ will also appear in addition to just powers like $y^k$. The coefficients of terms of the same order in $y$ that are also proportional to $\ln(y)$ can be matched separately from those terms that are free of the logarithm. In this way, we solve for the unknown coefficients $c[i]$. If the constructed basis $S_B$ covers properly the space of functions which are necessary to build the solution of $G(1)_{\text{SE2l2m}}$, then the $c[i]$ will have unique solutions. Thus finally, we obtain a solution for the ansatz in Eq.~(\ref{eq:ansatz1}) and so for $G(1)_{\text{SE2l2m}}$. 

The above actions can be written in \math{} as follows

\begin{minted}[frame=single,breaklines,fontsize=\small]{mathematica}
In[9]:= eq = Array[c, Length[basis]].basis; 
In[10]:= serEq1 = Normal[Series[eq, {y, 0, 16}]] // Simplify; 
In[11]:= sumresid = Sum[-Residue[fun, {z, i}], {i, 0, 16}];
In[12]:= serEq2 = Normal[Series[sumresid, {y, 0, 16}]] // Simplify;  
In[13]:= nologs = serEq1 - serEq2; /. a_. Log[y]^_. -> 0;   
In[14]:= logs = serEq1 - serEq2 - nologs // Simplify;  
In[15]:= tab1 = Table[0 == Coefficient[nologs,y,i-1],{i,16}]; 
In[16]:= coeff1 = Solve[tab1, Array[c, 16]]  
Out[16]:= {{c[1] -> 2, c[2] -> 0, c[3] -> 0, c[4] -> 0, c[9] -> 0, c[10] -> 0, c[11] -> 0, c[12] -> 0, c[13] -> 0, c[14] -> 0, c[15] -> 0, c[16] -> 0}}
In[17]:= rule1 = coeff1[[1]]; 
In[18]:= tab2 = Table[0 == Coefficient[logs,y,i-1],{i,16}]; Solve[tab2, Array[c, 16]] 
Out[18]:= {{c[5] -> -1, c[6] -> 0, c[7] -> 2, c[8] -> 0}} 
In[19]:= rule2 = coeff2[[1]]; 
\end{minted}
 {In the code above the `basis' refers to Eq.~(\ref{eq:setB}), `fun' is (the integrand of) our unknown function $G(1)_{\text{SE2l2m}}$ in Eq.~(\ref{eq:se1l2m-again}). The complete \math{} file is available in \wwwaux{SE2l2m}.}

Finally, we can apply the rules derived above, {\tt rule1} and {\tt rule2}, to the ansatz in Eq.~(\ref{eq:ansatz1}). Doing so, we obtain the solution
\begin{minted}[frame=single,breaklines,fontsize=\small]{mathematica}
In[20]:= eq /. Join[rule1, rule2] // Simplify
Out[20]:= 2 - ((1 + y) Log[y])/(-1 + y)
\end{minted}
Clearly the solution obtained agrees with the one we found in Eq.~(\ref{eq:G1SE2l2m-sol}).

To finish, let us make two final comments. First, after obtaining the solution in the way described above, it is of course possible to Taylor expand both the ansatz and the unknown sum to some order in $y$ that is higher than what was used to fix the solution. Verifying that the two expansions continue to match also at higher orders serves as a useful check of the result. Second, it can also happen that we do not find a solution for the $c[i]$ in Eq.~(\ref{eq:ansatz1}). This indicates that our ansatz was incorrect or incomplete. In this case, we may try to modify or enlarge our basis and search for a solution with the new ansatz. {Or we are unlucky and the solution goes beyond the known classes of functions.}

\end{tips}

\section{Decoupling Integrals Through a Change of Variables}
\label{sec:decoupl} 

In our case decoupling of \mb{} representations or integrals means their decomposition into a product of two or more integrals with lower dimensionality.

One of the simplest examples can be found for example in Eq.~(10) of~\cite{Czakon:2005rk}. It corresponds to an $1 / \epsilon^2$ pole of a planar QED double-box. The integral is only two-dimensional so decoupling is visible by eyes. One also should point out here that the decoupled 1-dimensional integrals up to a coefficient have the same structure as the integral in Eq.~(\ref{parts}). In its turn Eq.~(\ref{parts}) corresponds to the $1 / \epsilon$ pole of the one-loop QED vertex.

A more complicated example is given in Eq.~(\ref{eq:decoupl_1}) and corresponds to a diagram in Fig.~\ref{fig:decoupl_1}.
\begin{figure}[!h]
\begin{center}
    \includegraphics[scale=0.45]{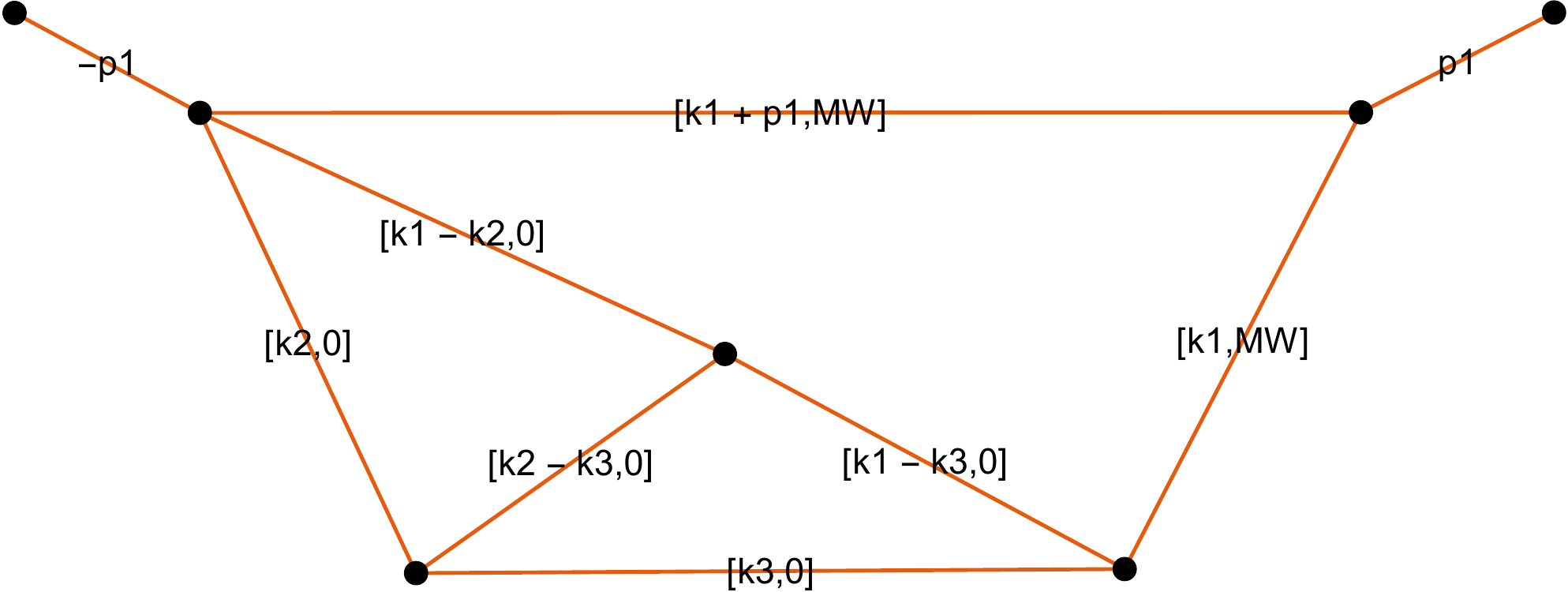}
\end{center}
    \caption{A diagram corresponding to the integral in Eq.~(\ref{eq:decoupl_1}).}
    \label{fig:decoupl_1}
\end{figure}

\begin{multline}
    I = -\frac{1}{(-s)^{1+3\epsilon}} \int_{- i \infty}^{+ i \infty} \prod_{i=1}^4 \frac{d z_i}{2 \pi i} \,\, \left(- \frac{M_W^2}{s} \right)^{z_3}
    \frac{\Gamma(-\epsilon-z_1)\Gamma(-z_1)\Gamma(1+2\epsilon+z_1)}
    {\Gamma(1-2\epsilon)\Gamma(1-3\epsilon-z_1)} \\
    \times \frac{\Gamma(-2\epsilon-z_{12})\Gamma(1-\epsilon+z_2)\Gamma(1+z_{12})\Gamma(1+\epsilon+z_{12})\Gamma(1+3\epsilon+z_3)
    \Gamma(1-\epsilon-z_4)}
    {\Gamma(1-z_2)\Gamma(2+\epsilon+z_{12})} \\
    \times \frac{\Gamma(-\epsilon-z_2)\Gamma(-z_2)\Gamma(1+z_3-z_4)\Gamma(-z_4)\Gamma(-z_3+z_4)\Gamma(-3\epsilon-z_3+z_4)}
    {\Gamma(1-4\epsilon-z_3)\Gamma(2+2\epsilon+z_3-z_4)}.
\label{eq:decoupl_1}    
\end{multline}

It is not clear from the diagram, but decoupling happens due to a presence of a massless two-loop self-energy subdiagram corresponding to the loop momenta $k_1$ and $k_2$. In this particular case decoupling was obtained automatically by a proper choice of the order of integration in the \la{} approach. 

In the general case, decoupled parts can be mixed up and one needs to establish a procedure to check, if the decoupling is possible in principle. To answer this question, let us again consider the matrix representation used in section~\ref{sec:BLeff} to study the possibility of \mb{} integrals decoupling. 
Decoupling means that the matrix representation should have a block form
\begin{equation}
    M_{\Gamma} \longrightarrow 
    \begin{pmatrix}
        A & 0 \\
        0 & B
\end{pmatrix}
\end{equation}
Transition to a block form is possible if it is possible to find a suitable transformation of the integration variables. For a faster search of the transformation, one can consider a smaller matrix $M_{\Gamma} \rightarrow \bar M_{\Gamma}$ where we deleted all duplicated lines including lines that differ by overall sign. Now, elements of the transformation matrix $U = \{x_{ij}\}$, see eq. (\ref{eq:Utransform}), should fulfill the following conditions 
\begin{equation}
\bar M_{\Gamma} =  \begin{pmatrix} \bar M^{(1)}_{\Gamma} \\ \bar M^{(2)}_{\Gamma} \end{pmatrix}, \,\,\,\,\, 
\begin{matrix} \bar M^{(1)}_{\Gamma} X^{(1)} = 0 \\ M^{(2)}_{\Gamma} X^{(2)} = 0 \end{matrix},
\label{eq:DecEq}
\end{equation}
where we assume that the first system of linear equations corresponds to the upper 0 in the block form, and the second system - to the bottom 0. Solutions will form columns of the transformation
matrix $U = \begin{pmatrix} X^{(2)} & X^{(1)} \end{pmatrix}$. 

Since the matrix $U$ must be invertible, only linearly independent solutions should be taken. To have the total number of independent solutions equal to the dimensionality of matrix $U$,  one can formulate the following condition
\begin{equation}
    \mbox{rank}(M^{(1)}_{\Gamma}) + \mbox{rank}(M^{(2)}_{\Gamma}) = \mbox{dim}
    \label{DecoupCond}
\end{equation}
where dim is the dimensionality of the corresponding \mb{} integral.

Now a test of decoupling possibility can be formulated in a simple algorithmic way: i) construct matrix $\bar M_{\Gamma}$; ii) divide it into two submatrices in all possible ways; iii) check the condition in Eq.~(\ref{DecoupCond}). An instructive example for Eq.~(\ref{eq:decoupl_1}) is given in \wwwaux{Decoupling}.

\section{Solving via Integration} 
\label{sec:3anal}  
One can obtain analytic solutions for certain types of Mellin-Barnes 
integrals by transforming them to Euler-type integrals over real 
variables and evaluating the latter. In the present subsection, we 
explore this method.

\subsection{From Mellin-Barnes to Euler Integrals \label{subsec:EulerInt}}

In this section, we will assume that the Mellin-Barnes integral we are evaluating is balanced, see section~section~\ref{sec:choosingcontour}. As we have discussed above, this is not a real limitation for practical applications. Moreover it is straightforward to derive a real Euler-type integral representation for balanced Mellin-Barnes integrals as we will see in a moment. However, this representation will only be convergent if the arguments of all gamma functions have positive real parts. So in the following, let us assume that the integration contours are all straight lines parallel to the imaginary axis such that the real parts of the arguments of all gamma functions are positive. (We note that in dimensional regularization such contours may not exist for $\ep=0$.) In this case, one can derive an Euler-type integral representation by noticing first of all that since the integral is balanced, its integrand can be expressed as a product of beta functions
\begin{equation}
    B(a,b) = \frac{\Gamma(a)\Gamma(b)}{\Gamma(a+b)}\,.
\label{eq:Beta-def}
\end{equation}
For example, consider the integral
\begin{equation}
    I = \int_{-i\infty-\frac12}^{+i\infty-\frac12} \frac{dz_1}{2\pi i}
    \int_{-i\infty-\frac14}^{+i\infty-\frac14} \frac{dz_2}{2\pi i}  
    \Gamma(-z_1)\Gamma(-z_2) \Gamma(1+z_1+z_2) A^{z_1} B^{z_2}\,.
\end{equation}
Clearly, the contours are such that the argument of each gamma 
function has a positive real part. Now, notice that, e.g.
\begin{equation}
    \Gamma(-z_2) \Gamma(1+z_1+z_2) = \Gamma(1+z_1) B(-z_2,1+z_1+z_2)\,,
\end{equation}
so we have
\begin{equation}
    I = \int_{-i\infty-\frac12}^{+i\infty-\frac12} \frac{dz_1}{2\pi i}
    \int_{-i\infty-\frac14}^{+i\infty-\frac14} \frac{dz_2}{2\pi i}  
    \Gamma(-z_1)\Gamma(1+z_1) B(-z_2,1+z_1+z_2) A^{z_1} B^{z_2}\,.
\end{equation}
Similarly, using
\begin{equation}
    \Gamma(-z_1)\Gamma(1+z_1) = \Gamma(1) B(-z_1,1+z_1)
\end{equation}
we find
\begin{equation}
    I = \int_{-i\infty-\frac12}^{+i\infty-\frac12} \frac{dz_1}{2\pi i}
    \int_{-i\infty-\frac14}^{+i\infty-\frac14} \frac{dz_2}{2\pi i}  
    B(-z_1,1+z_1) B(-z_2,1+z_1+z_2) A^{z_1} B^{z_2}\,.
\end{equation}
The point is to ``pair'' gamma functions such that at least one 
integration variable cancels in the sum of arguments and replace 
such a pair by a corresponding beta function,
\begin{equation}
    \Gamma(a \pm z)\Gamma(b \mp z) = \Gamma(a+b)B(a \pm z,b \mp z)\,.
\end{equation}
Because the integral is balanced by assumption, this pairing can 
be continued until all gamma functions whose arguments contain 
integration variables have been exhausted. Notice that ratios 
of gamma functions can be replaced using
\begin{equation}
    \frac{\Gamma(a \pm z)}{\Gamma(b \pm z)} = 
    \frac{1}{\Gamma(b-a)}B(b-a,a \pm z)\,.
\end{equation}
In this case, we are looking to pair a gamma function in the 
numerator with a gamma function in the denominator such that the 
difference of their arguments is free of at least one integration 
variable.

Next, we recall the real integral representation of $B(a,b)$,\footnote{We note that in practice one can very often also use the alternative integral representation of the beta function,
\begin{equation}
    B(a,b) = \int_0^1 dx\, x^{a - 1}(1-x)^{b - 1},
    \qquad 
    \Re(a)\,,\Re(b) > 0
    \label{eq:beta-int-one}
\end{equation}
with limits of integration zero and one. Most of the discussion that follows goes through without change in this case, but the appearance of boundary conditions can lead to certain technical subtleties.}
\begin{equation}
    B(a,b) = \int_0^{\infty} dx\, x^{a - 1}(1+x)^{-a-b}\,,
    \qquad 
    \Re(a)\,,\Re(b) > 0
    \label{eq:beta-int-inf}
\end{equation}
and write all beta functions in the integrand of the Mellin-Barnes 
integral using this representation. The integral 
converges when $\Re(a)\,,\Re(b)>0$ and it is easy to check that 
these conditions are satisfied whenever the real parts of all 
arguments of gamma functions are positive in the original 
Mellin-Barnes integral. Because all integrals are convergent, 
we can exchange the Mellin-Barnes integrations and the real 
integrations coming form the beta functions. Returning to our 
example above, we obtain
\begin{equation}
\begin{split}
    I &= \int_{-i\infty-\frac12}^{+i\infty-\frac12} \frac{dz_1}{2\pi i}
    \int_{-i\infty-\frac14}^{+i\infty-\frac14} \frac{dz_2}{2\pi i}  
    B(-z_1,1+z_1) B(-z_2,1+z_1+z_2) A^{z_1} B^{z_2}
\\ &=
    \int_{-i\infty-\frac12}^{+i\infty-\frac12} \frac{dz_1}{2\pi i}
    \int_{-i\infty-\frac14}^{+i\infty-\frac14} \frac{dz_2}{2\pi i} 
    \int_0^1 dx_1\, x_1^{-z_1-1}(1-x_1)^{z_1}
\\ &\qquad\times
    \int_0^1 dx_2\, x_2^{-z_2-1}(1-x_2)^{z_1+z_2} A^{z_1} B^{z_2}
\\ &=
    \int_0^1 dx_1\, \int_0^1 dx_2\, \frac{1}{x_1 x_2}
    \int_{-i\infty-\frac12}^{+i\infty-\frac12} \frac{dz_1}{2\pi i}
    \int_{-i\infty-\frac14}^{+i\infty-\frac14} \frac{dz_2}{2\pi i}
    \left[\frac{(1-x_1)(1-x_2)A}{x_1}\right]^{z_1}
\\ &\qquad\times
    \left[\frac{(1-x_2)B}{x_2}\right]^{z_2}\,.
\end{split}
\label{eq:MB-to-int-ex}
\end{equation}
In general, after this step, one finds an integral of the form
\begin{equation}
    \int_0^1 \left(\prod_{k=1}^{K} dx_k\right)
    R_0(\mathbf{x},\mathbf{v})
    \left[R_{\epsilon}(\mathbf{x},\mathbf{v})\right]^{\epsilon}
    \int_{-i\infty}^{+i\infty}
    \prod_{l=1}^{L} \frac{dz_l}{2\pi i}
    \left[R_l(\mathbf{x},\mathbf{v})\right]^{z_l}\,,
\label{eq:Euler-0}
\end{equation}
where $\mathbf{x} = (x_1,\ldots,x_K)$ are $K$ real integration 
variables, $\mathbf{v} = (v_1,\ldots,v_N)$ are the variables of 
the original Mellin-Barnes integral and the $R_l$ are ratios of 
products of $x_i$, $(1+x_i)$ and $v_j$. Although obvious, we note 
that $R_{\epsilon}$ is only present if the original Mellin-Barnes
integral depends on $\epsilon$. Hence, if we have already expanded 
the Mellin-Barnes integral in $\epsilon$ and are considering only 
the expansion coefficients, $R_{\epsilon}$ is absent.

Next, we need to perform the Mellin-Barnes integrations. This can 
be done using the formula
\begin{equation}
    \int_{-i\infty+z_0}^{+i\infty+z_0} 
    \frac{dz}{2\pi i} A^z = \delta(1-A)\,,
    \qquad A>0\,.
\label{eq:MBtoDelta}
\end{equation}
Importantly, in the Euclidean regime, where all variables $v_j$ 
are positive by definition, each $R_l$ is clearly positive for 
$0<x_i<\infty$, $i=1,\ldots,K$. Thus, Eq.~(\ref{eq:Euler-0}) takes the form
\begin{equation}
    \int_0^\infty \left(\prod_{k=1}^{K} dx_k\right)
    R_0(\mathbf{x},\mathbf{v})
    \left[R_{\epsilon}(\mathbf{x},\mathbf{v})\right]^{\epsilon}
    \prod_{l=1}^{L}
    \delta\left[1-R_l(\mathbf{x},\mathbf{v})\right]\,.
\label{eq:Euler-1}
\end{equation}
By solving the constraints implied by the $\delta$ functions, 
we arrive at the desired parametric integral representation. 
Continuing with our example in Eq.~(\ref{eq:MB-to-int-ex}), we find
\begin{equation}
    I = 
    \int_0^1 dx_1\, \int_0^1 dx_2\, \frac{1}{x_1 x_2}
    \delta\left[1-\frac{(1-x_1)(1-x_2)A}{x_1}\right]
    \delta\left[1-\frac{(1-x_2)B}{x_2}\right]\,.
\end{equation}
The second $\delta$ function implies $x_2 = B/(1+B)$ and so 
performing the integration over $x_2$ we obtain
\begin{equation}
    I = 
    \int_0^1 dx_1\, \frac{1}{(1+B)x_1}
    \delta\left[1-\frac{(1-x_1)A}{(1+B)x_1}\right]\,,
\end{equation}
where we have used
\begin{equation}
    \delta\left[1-\frac{(1-x_2)B}{x_2}\right] =
    \frac{B}{(1+B)^2}\delta\left(x_2 - \frac{B}{1+B}\right)\,.
\end{equation}
The remaining $\delta$ function now fixes $x_1 = A/(1+A+B)$ and 
making use of
\begin{equation}
    \delta\left[1-\frac{(1-x_1)A}{(1+B)x_1}\right] =
    \frac{A(1+B)}{(1+A+B)^2}\delta\left(x_1 - \frac{A}{1+A+B}\right)\,,
\end{equation}
we finally obtain a zero-dimensional integral representation
\begin{equation}
    I = 
    \int_0^1 dx_1\, \frac{A}{(1+A+B)^2 x_1}
    \delta\left(x_1 - \frac{A}{1+A+B}\right)
    =
    \frac{1}{1+A+B}\,.
\end{equation}
Of course, in general, one obtains a finite dimensional real
integral representation, which must be solved to obtain the desired result. One way of implementing the above manipulations in \math{} is presented in \wwwaux{Euler}.

Finally, we mention that if the \MB{} integral involves polygamma functions (this will typically happen if we expand in $\eps$ first, and apply the above construction to the \MB{} integrals that appear in the expansion coefficients), we can use Eq.~(\ref{eq:polygamma-int-inf}),
\begin{equation}
    \psi(z) = \int_0^\infty [(1+x)^{-1}-(1+x)^{-z}]\frac{dx}{x} - \gamma,\qquad \Re(z)>0,
    \label{eq:polygamma-int-inf-again}
\end{equation}
to express also the polygamma functions as real integrals.\footnote{As with the beta function, the integral representation over the finite interval $[0,1]$ of Eq.~(\ref{eq:polygamma-int-one}),
\begin{equation}
    \psi(z) = \int_0^1 \frac{t^{z-1}-1}{t-1}dt - \gamma,\qquad \Re(z)>0,
    \label{eq:polygamma-int-one}
\end{equation}
can also very often be used in practice, with the same caveat as for the beta function.} The integral representation converges if the real part of the argument of the polygamma function is positive.

\begin{overview}{Mellin-Barnes Integrals to Euler-Type Integrals: an Overview}

Given a balanced Mellin-Barnes integral such that the 
real parts of the arguments of all gamma functions are 
positive, one can construct an Euler-type integral representation 
for the Mellin-Barnes integral as follows.
\begin{enumerate}
    \item Express the integrand as a product of beta functions. 
    To do so, pair products (ratios) of two gamma functions 
    such that at least one Mellin-Barnes integration variable 
    cancels in the sum (difference) of their arguments. Continue 
    this process until no gamma function remains whose argument 
    contains an integration variable. Since the original Mellin-Barnes 
    integral is balanced by assumption, this can always be achieved.
    
    \item Replace all beta and polygamma functions by their integral 
    representations as given in 
    Eqs.~(\ref{eq:beta-int-inf})~and~(\ref{eq:polygamma-int-inf-again}). 
    The real integrations introduced in this step are all convergent if 
    the real parts of the arguments of all gamma and polygamma functions 
    in the original Mellin-Barnes integral are positive.
    
    \item Perform all Mellin-Barnes integrations using the result 
    of Eq.~(\ref{eq:MBtoDelta}). In the Euclidean region, the necessary
    condition is always fulfilled.
    
    \item Finally, solve the constraints implied by the $\delta$ 
    functions to obtain the desired real parametric representation.
\end{enumerate}
\end{overview}

\subsection{Symbolic Integration of Euler Integrals}
\label{ssec:symbolic-int}

Let us now turn to briefly describing how the parametric integrals that we obtain can be computed. More precisely, we consider an integral of the following form,
\begin{equation}
    I(\mathbf{v};\epsilon) = \int_0^\infty \left(\prod_{k=1}^{n} dx_k\right)
    \prod_{l=1}^{m} P_l(\mathbf{x},\mathbf{v})^{a_l+b_l \epsilon}    
\label{eq:Euler-gen}
\end{equation}
where again, $\mathbf{x} = (x_1,\ldots,x_n)$ are $n$ real integration variables and $\mathbf{v} = (v_1,\ldots,v_m)$ are $m$ parameters fixed during the integration. We assume that $a_l$ and $b_l$ are integers and the $P_l(\mathbf{x},\mathbf{v})$ are polynomials that cannot be factorized further into products of lower degree polynomials. We assume that the limits of integration are zero and infinity. It is not very difficult to see that Eq.~(\ref{eq:Euler-1}) can generally be written as a sum of such integrals. We assume that the integral is convergent for $\epsilon=0$, so that we can expand the integrand in $\epsilon$ before integration. This is not always true, as in practical calculations the integrand sometimes develops non-integrable singularities when $\epsilon=0$, typically when some of the $x_i$ are zero (however, more elaborate structure of singularities can also show up). In such cases, we must first resolve the poles of the integral in $\epsilon$, e.g. by subtracting the divergences. The detailed discussion of how this can be achieved for general integrals is beyond the scope of our discussion and we point the reader to the literature~\cite{Heinrich:2008si,Panzer:2015ida}.

After performing the Taylor-expansion of the integrand in Eq.~(\ref{eq:Euler-gen}) in $\epsilon$, the expansion coefficients will involve rational functions (recall that the $a_l$ can also be negative) and (powers of) logarithms of polynomials in the integration variables. Our goal is then to integrate out the $x_i$ one after the other. This involves iterated integrations of rational functions and logarithms and hence we expect that multiple polylogarithms, introduced in chapter~\ref{chapter:complex}, will show up in the solution. We emphasize that of course not every integral can be solved in this way in terms of multiple polylogarithms, however this is still a powerful method for obtaining solutions in many situations. In fact, one can state a sufficient condition, first derived in~\cite{Brown:2008um}, on the integrand, which if satisfied, allows one to express the integral in terms of multiple polylogarithms. To do so, let us start by defining $S$ as the set of all polynomials that are not monomials which appear in the integrand in Eq.~(\ref{eq:Euler-gen}),
\begin{equation}
    S = \left\{P_l(\mathbf{x},\mathbf{v})\right\}.
\end{equation}
Recalling the definition of multiple polylogarithms as iterated integrals in Eq.~(\ref{eq:G-def}),
\begin{equation}
G(a_1,\ldots,a_n;z) = \int_0^z \frac{dt}{t-a_1} G(a_2,\ldots,a_n;t)\,,
\label{eq:G-def-again}
\end{equation}
we see that in order to begin the integration, we have to assume that that there exists some integration variable, say $x_a$, such that all elements of $S$ are linear in this variable. In this case, let us write each of the $P_l$'s as
\begin{equation}
    P_l(\mathbf{x},\mathbf{v}) = Q_l^a(\mathbf{x}',\mathbf{v}) x_a 
    + R_l^a(\mathbf{x}',\mathbf{v}),
\end{equation}
where $\mathbf{x}' = \{x_1,\ldots,x_{a-1},x_{a+1},\ldots,x_{n}\}$ denotes the set of integration variables \emph{without} $x_a$. Of course, after the $\epsilon$ expansion, logarithms of $P_l$ will also appear. These can be written in terms of multiple polylogarithms as
\begin{equation}
\begin{split}
    \ln P_l &= 
    \ln \left(Q_l^a x_a + R_l^a\right)
    =
    \ln R_l^a + \ln  \left(1 + \frac{Q_l^a}{R_l^a} x_a\right)
\\ &=
    \ln R_l^a + G\left(-\frac{R_l^a}{Q_l^a};x_a\right).
\end{split}
\end{equation}
Moreover, products of multiple polylogarithms can be expressed in terms of functions of the form $G(a_1(\mathbf{x}',\mathbf{v}),\ldots,a_l(\mathbf{x}',\mathbf{v});x_a)$ using the shuffle algebra of multiple polylogarithms, e.g.,
\begin{equation}
    G\left(-\frac{R_k^a}{Q_k^a};x_a\right) G\left(-\frac{R_l^a}{Q_l^a};x_a\right)
    =
    G\left(-\frac{R_k^a}{Q_k^a},-\frac{R_l^a}{Q_l^a};x_a\right)
+
    G\left(-\frac{R_l^a}{Q_l^a},-\frac{R_k^a}{Q_k^a};x_a\right).
\end{equation}
Thus, the integration over $x_a$ involves (sums of) integrals of the form
\begin{equation}
    \int_0^\infty 
    \frac{dx_a}{(Q_1^a x_a + R_1^a)^{-a_1}\ldots(Q_m^a x_a + R_m^a)^{-a_m}}
    G(\vec{a};x_a).
\end{equation}
In order to compute the integral, we partial fraction in $x_a$, e.g.,
\begin{equation}
    \frac{1}{(Q_k^a x_a + R_k^a)(Q_l^a x_a + R_l^a)} =
    \frac{1}{Q_k^a R_l^a - Q_l^a R_k^a}
    \left(\frac{1}{x_a + R_k^a/Q_k^a} - \frac{1}{x_a + R_l^a/Q_kl^a}\right),
\end{equation}
then simply apply the recursive definition of multiple polylogarithems in Eq.~(\ref{eq:G-def-again}) to obtain a primitive. Finally, we must compute the limits of this primitive as $x_a\to 0$ and $x_a\to\infty$.

We would like to iterate this process and integrate over the rest of the variables one by one. However, to do this we must once again find an integration variable, say $x_b$ in which all polynomials in the new integrand are linear. These polynomials are nothing but the $Q_l^a$ and $R_l^a$ introduced above, \emph{as well as} the combinations $Q_k^a R_l^a - Q_l^a R_k^a$ that are introduced by partial fractioning. This last polynomial is not necessarily linear, even if the $Q_k^a$ and $R_l^a$ are. Thus, in order to proceed, it is necessary that the $Q_k^a R_l^a - Q_l^a R_k^a$ factor into polynomials that are linear in some integration variable. This analysis must be repeated after each step of the integration. In order to formalize this procedure, we can define $S_{(x_a)}$ as the set of all irreducible factors that appear inside the polynomials $Q_l^a$, $R_l^a$ and $Q_k^a R_l^a - Q_l^a R_k^a$. Now, if we can find an integration variable, say $x_b$, such that all elements of $S_{(x_a)}$ are linear in $x_b$, we can restart the above procedure and integrate over $x_b$. If we can iterate this procedure and construct a sequence of sets of polynomials,
\begin{equation}
    S_{(x_a)},\,S_{(x_a,x_b)},\,S_{(x_a,x_b,x_c)},\,\ldots
    \label{eq:S-sets-seq}
\end{equation}
such that all polynomials in each set are linear in at least one integration variable, then we can perform the integrations one by one and express the result in terms of multiple polylogarithms. {\it We stress three important facts.} 
\begin{enumerate}
    \item 
First, the existence of such a sequence of sets in general depends on the ordering of the integration variables. Of course, it is enough to find one ordering where all polynomials in each set are linear in at least one integration variable. 
\item Second, it can be shown that this condition, called \emph{linear reducibility}, is sufficient but not necessary for the integral to be expressible in terms of multiple polylogarithms. In fact, even if we fail to find a suitable sequence of sets, the result may still be given in terms of multiple polylogarithms, e.g. after a suitable change of variables. In addition, it is not necessary that the factors are linear for the last integration step, since in this case we can factor the polynomials into linear factors whose roots involve algebraic expressions of the variables $\mathbf{v}$.
\item Third, notice that we can check if such a sequence of sets exists without actually performing any integrations. We remark though that in practical applications, it is rarely necessary to actually construct this sequence of sets: many times several orderings of variables can be chosen, some more convenient than others (e.g. leading to smaller intermediate expressions) and it is often quite simple to proceed by inspection.
\end{enumerate}
We now have a criterion for determining if a given integral can be evaluated in terms of multiple polylogarithms, however in order to actually proceed and compute the result, we must address the following points. Let us assume that we have found a proper ordering of integration variables and have performed the (indefinite) integration over $x_a$. Then in order to continue and perform the next integration, we must
\begin{enumerate}
    \item take the limits of the primitive with respect to $x_a$ as $x_a\to 0$ and $x_a \to \infty$,
    \item find a way to write all multiple polylogarithms in the form $G(\vec{a};x_b)$, where $\vec{a}$ is \emph{independent} of $x_b$. This is non-trivial, as the arguments of multiple polylogarithms of the form $G\left(\ldots,-R_l^a/Q_l^a,\ldots\right)$ may well depend on $x_b$. 
\end{enumerate}

Regarding the first point, the limit at $x_a\to 0$ can easily be taken by using the shuffle algebra to extract pure logarithms in $x_a$ (see the discussion in chapter~\ref{chapter:complex}), then using the sum representation of multiple polylogarithms, Eq.~(\ref{eq:Li-def}). In fact, in many situations, it is enough to consider that
\begin{equation}
    \lim_{x_a\to 0} G(\vec{a};x_a) = 0,\qquad \vec{a} \ne \vec{0}.
\label{eq:MPL-lim-0}
\end{equation}
Turning to the limit at $x_a\to\infty$, we note that by introducing $\bar{x}_a = 1/x_a$, this problem can be reduced to taking the limit $\bar{x}_a \to  0$,
\begin{equation}
   \lim_{x_a\to\infty} G(\vec{a};x_a) 
   =
   \lim_{\bar{x}_a\to 0} G\left(\vec{a};\frac{1}{\bar{x}_a}\right). 
 \end{equation}
If we can find a way of writing the multiple polylogarithms on the right hand side as linear combinations of multiple polylogarithms of the form $G(\ldots;\bar{x}_a)$, then taking the $\bar{x}_a \to  0$ limit can be performed as discussed above. At this point, it is worth noticing that the problem of finding the appropriate inversion relations is formally equivalent to the second issue that we have not yet addressed, namely, how to bring the multiple polylogarithms entering the integrand to a form where $x_b$ only appears in the last position, as $G(\vec{a};x_b)$. In both cases we are looking for functional equations that bring the multiple polylogarithms into ``standard form'', where a certain variable only enters as the explicit argument but not in the parameters.

{\it The framework necessary to address this problem involves the Hopf algebra structure of multiple polylogarithms.} The description of this topic is beyond the scope of this book, however, we mention one main result. If the sufficient condition described above regarding the existence of the sequence of sets in Eq.~(\ref{eq:S-sets-seq}) is met and \emph{furthermore} the integration ragnes are $[0,\infty]$, it can be shown that the multiple polylogarithms that appear can always be brought to a form
\begin{equation}
    \sum_i c_i G(\vec{a}_i;x),
\end{equation}
for some variable $x$ such that $\vec{a}_i$ is independent of $x$ and the coefficidents $c_i$ involve only multiple polylogarithms that are independent of $x$. Furthermore, this rewriting can be performed in a constructive algorithmic way. The appropriate algorithms are implemented e.g. in the \math{} package {\tt PolyLogToos} and the {\tt Maple} program {\tt HyperInt}, see appendix~\ref{app:mplsgpls}.

We end our discussion with a brief comment about generic limits of integration. Although many statements described above remain true for generic integration boundaries, nevertheless in certain cases, some algorithms do break down due to the appearance of boundary terms. However, a generic region of integration $x\in[a,b]$ can always be mapped to $y\in[0,\infty]$ by the change of variables
\begin{equation}
    y = \frac{x-b}{x-a}.
\end{equation}
Thus from a purely formal point of view, when constricting Euler integral representations of \MB{} integrals, we should use the representation of the beta function as an integral between zero and infinity. However, from a practical standpoint, we can very often proceed with the representation based on the integration over $[0,1]$ as well.

\subsection{Merging \mb{} Integrals with Euler Integrals and
Symbolic Integration}
\label{sec:MBtoEulertoInt}

Finally, as an illustration of the above procedure, let us discuss a simple example. Consider the phase space integral
\begin{equation}
    I = \int d\Omega_{d-1}(k)\,d\Omega_{d-1}(l)\,
    \frac{1}{p_1\cdot (k + l)},
\label{eq:I-ex-mb-int}
\end{equation}
where $p_1$ is fixed massless four-vectors, while $k$ and $l$ are massless vectors whose directions we integrte over with the measures $d\Omega_{d-1}(k)$ and $d\Omega_{d-1}(k)$ given in Eq.~(\ref{eq:dOmq-def}). Such integrands appear for example in factorization formulae describing the double soft limits of QCD real-emission matrix elements. We can easily relate this integral to the phase space integrals studied in chapter~\ref{chapter:complex} by applying the basic \MB{} formula to the denominator,
\begin{equation}
    I = \int d\Omega_{d-1}(k)\,d\Omega_{d-1}(l)\,
    \int_{-i\infty}^{+i\infty} \frac{dz}{2\pi i}
    \Gamma(-z)\Gamma(1+z) (p_1\cdot k)^{z} (p_1\cdot l)^{-1-z}.
\end{equation}
Now, let us exchange the order of integrations and then apply the Eq.~(\ref{eq:Om-j-0}),
\begin{equation}
    \int d\Omega_{d-1}(q) \frac{1}{(p_1\cdot q)^j} = \Omega_j(0,\epsilon) = 2^{2-j-2\epsilon}\pi^{1-\epsilon}
    \frac{\Gamma(1-j-\epsilon)}{\Gamma(2-j-2\epsilon)},
\label{eq:Om-j-0-again}
\end{equation}
to perform the angular integrations.\footnote{Eq.~(\ref{eq:Om-j-0-again}) is valid provided that all vectors in the integrand are normalized such that their zeroth components are one. It is obviously trivial to rescale the integrand in Eq.~(\ref{eq:I-ex-mb-int}) to achieve this and we assume that this has been done.} We obtain
\begin{align}
    I &= \int_{-i\infty}^{+i\infty} \frac{dz}{2\pi i}
    \Gamma(-z)\Gamma(1+z)
    \int d\Omega_{d-1}(k)\frac{1}{(p_1\cdot k)^{-z}}\,
    d\Omega_{d-1}(l)\frac{1}{(p_1\cdot l)^{1+z}}\,
\notag \\ &=
    2^{3-4\epsilon}\pi^{2-2\epsilon}
    \int_{-i\infty}^{+i\infty} \frac{dz}{2\pi i}
    \Gamma(-z)\Gamma(1+z)
    \frac{\Gamma(1+z-\epsilon)}{\Gamma(2+z-2\epsilon)}
    \frac{\Gamma(-z-\epsilon)}{\Gamma(1-z-2\epsilon)}.
\label{eq:I-ex-mb-int-2}
\end{align}
Now we must find a contour for the $z$ integration, such that the poles of the gamma functions with $\Gamma(\ldots+z)$ and those with $\Gamma(\ldots-z)$ are separated. In this case, it is quite straightforward to find a suitable straight-line contour parallel to the imaginary axis. By simple inspection, we see that for $\epsilon=0$, we may choose any $z_0 \in (-1,0)$ and the contour running from $z_0 - i\infty$ to $z_0 + i\infty$ will separate the poles as required. As we have found a suitable contour directly at $\epsilon=0$, there is no need to analytically continue to $\epsilon \to 0$.

Having derived the \MB{} representation, let us now turn to obtaining an analytic solution by converting it into an Euler-type integral and integrating that. To do this, we can combine the factors of $\Gamma(-z)$ and $\Gamma(1+z)$ in the numerator using\footnote{The choice of contour is such that the real parts of the arguments of all gamma functions in Eq.~(\ref{eq:I-ex-mb-int-2}) are positive, thus all integral representations we write will automatically converge.}
\begin{equation}
    \Gamma(-z)\Gamma(1+z) = 
    \Gamma(1) \int_0^\infty dx_1\, x_1^{-1 - z} (1 + x_1)^{-1},
\end{equation}
while the two ratios of gamma functions can be written as
\begin{align}
    \frac{\Gamma(1+z-\epsilon)}{\Gamma(2+z-2\epsilon)} &=
    \frac{1}{\Gamma(1 - \epsilon)}
    \int_0^\infty dx_2\, x_2^{z-\epsilon} (1 + x_2)^{-2 + 2 \epsilon - z},
\\
    \frac{\Gamma(-z-\epsilon)}{\Gamma(1-z-2\epsilon)} &=
    \frac{1}{\Gamma(1 - \epsilon)}
    \int_0^\infty dx_3\, x_3^{-1 - \epsilon - z} (1 + x_3)^{-1 + 2 \epsilon + z}.
\end{align}
Then we obtain
\begin{equation}
    \begin{split}
    I &= \frac{2^{3-4\epsilon}\pi^{2-2\epsilon}}
    {\Gamma^2(1-\epsilon)}
    \int_{-i\infty}^{+i\infty} \frac{dz}{2\pi i}
    \int_0^\infty dx_1\, \int_0^\infty dx_2\, \int_0^\infty dx_3 
\\ &\times
    \frac{x_2^{-\epsilon } (1+x_2)^{2
   \epsilon -2} x_3^{-\epsilon -1} (1+x_3)^{2 \epsilon -1}}
   {x_1 (1+x_1)}
   \left(\frac{x_2 (1+x_3)}{x_1 (1+x_2) x_3}\right)^{z}.
    \end{split}
\end{equation}
Next, we perform the \MB{} integration using Eq.~(\ref{eq:MBtoDelta}). Notice that the quantity being raised to the power $z$, $\frac{x_2 (1+x_3)}{x_1 (1+x_2) x_3}$, is positive in the integration region, essentially by construction. Then we find
\begin{equation}
    \begin{split}
    I &= \frac{2^{3-4\epsilon}\pi^{2-2\epsilon}}
    {\Gamma^2(1-\epsilon)}
    \int_0^\infty dx_1\, \int_0^\infty dx_2\, \int_0^\infty dx_3\,
    \frac{x_2^{-\epsilon } (1+x_2)^{2
   \epsilon -2} x_3^{-\epsilon -1} (1+x_3)^{2 \epsilon -1}}
   {x_1 (1+x_1)}
\\ &\times
   \delta\left(1-\frac{x_2 (1+x_3)}{x_1 (1+x_2) x_3}\right).
    \end{split}
\end{equation}
Finally, we must resolve the Dirac $\delta$ function by performing one integration. Solving the constraint for $x_1$ yields
\begin{equation}
    x_1 = \frac{x_2 (1 + x_3)}{(1 + x_2) x_3}
%    \quad\Rightarrow\quad
%    \delta\left(1-\frac{x_2 (1+x_3)}{x_1 (1+x_2) x_3}\right) =
%    \frac{x_2 (1+x_3)}{(1+x_2) x_3}
%    \delta\left(x_1-\frac{x_2 (1 + x_3)}{(1 + x_2) x_3}\right)
\end{equation}
and we find the Euler-type integral representation
\begin{equation}
    I = \frac{2^{3-4 \epsilon} \pi^{2-2 \epsilon}}
    {\Gamma^2(1-\epsilon)}
    \int_0^\infty dx_2\, \int_0^\infty dx_3\,
    \frac{x_2^{-\epsilon}(1+x_2)^{-1+2 \epsilon} x_3^{-\epsilon } (1+x_3)^{-1+2 \epsilon}}{x_2 + x_3+ 2 x_2 x_3}.
\end{equation}
Obviously this representation is not unique: it depends on the particular way in which we combine the gamma functions into beta functions as well as on the choice of which variable we solve the Dirac $\delta$ constraint for.
\begin{svgraybox}
Of course, we may choose to solve the Dirac $\delta$ constraint for any of the variables. However, some choices are less ideal than others. To see why, consider solving for $x_2$ in our example instead of $x_1$. We find
\begin{equation}
    x_2 = \frac{x_1 x_3}{1 + x_3 - x_1 x_3}.
\end{equation}
But $x_2$ must be non-negative in the integration region, which leads to the constraint $1 + x_3 - x_1 x_3 \ge 0$. If $0\le x_1 \le 1$, this is clearly satisfied for any non-negative $x_3$, however for $x_1 > 1$ it implies $x_3 \le \frac{1}{x_1 - 1}$. We must then be careful to implement this constraint in the subsequent integrations, in this case e.g. by splitting the integration in $x_1$ at one. Notice that the choice of solving for $x_1$ instead completely circumvents such issues.
\end{svgraybox}

Now we are ready to solve the remaining real integrations. To do so, we first note that the result is finite as $\epsilon\to 0$, we we may simply expand in $\epsilon$ under the integral sign. Let us consider the leading term of this expansion, i.e., the finite part,
\begin{equation}
    I = 8\pi^2
    \int_0^\infty dx_2\, \int_0^\infty dx_3\,
    \frac{1}{(1+x_2)(1+x_3)(x_2 + x_3+ 2 x_2 x_3)}
    + {\mathcal O}(\epsilon).
\label{eq:I-ex-mb-int-3}
\end{equation}
It can be checked that this integral is linearly reducible, and hence expressible with multiple polylogarithms (at some specific arguments). The order of integrations clearly cannot matter in this example, since the integrand is symmetric under $x_2 \leftrightarrow x_3$ exchange. Thus, let us partial fraction the integrand in $x_2$,
\begin{equation}
    \frac{1}{(1+x_2)(1+x_3)(x_2 + x_3+ 2 x_2 x_3)} = 
    -\frac{1}{(1+x_3)^2 (1+x_2)} 
    +\frac{1}{(1+x_3)^2 [x_3/(1+2x_3) + x_2]}.
\end{equation}
It is straightforward to compute a primitive with respect $x_2$ now and we find
\begin{equation}
\begin{split}
    &
    \int dx_2\,\left[
    -\frac{1}{(1+x_3)^2 (1+x_2)} 
    +\frac{1}{(1+x_3)^2 [x_3/(1+2x_3) + x_2]}
    \right]
\\ &= 
    -\frac{1}{(1+x_3)^2} G(-1;x_2)
    +\frac{1}{(1+x_3)^2} G\left(-\frac{x_3}{1+2x_3};x_2\right).
\end{split}
\end{equation}
Although the integration is of course elementary, we have expressed the result with multiple polylogarithms in order to illustrate the general case. Next, we must evaluate this primitive in the limits $x_2 \to 0$ and $x_2 \to \infty$. The limit at $x_2=0$ is trivial: using Eq.~(\ref{eq:MPL-lim-0}) we see that in the limit, the primitive is simply zero. In order to find the limit at $x_2 \to \infty$, let us set $x_2 = 1/\bar{x}_2$ and consider $\bar{x}_2 \to 0$. This requires writing the multiple polylogarithms $G\left(-1;\frac{1}{\bar{x}_2}\right)$ 
and $G\left(-\frac{x_3}{1+2x_3};\frac{1}{\bar{x}_2}\right)$ in terms of multiple polylogarithms of the form $G(\ldots;\bar{x}_2)$. As discussed above, the procedures for how this can be done in general go beyond the scope of this book, however for the specific case at hand, we may proceed simply by recalling that a weight one multiple polylogarithm is simply an ordinary logarithm in disguise, $G(a;z) = \ln\left(1-\frac{z}{a}\right)$, if $a\ne 0$. So we have
\begin{equation}
    G\left(-1;\frac{1}{\bar{x}_2}\right) = 
    \ln\left(1+\frac{1}{\bar{x}_2}\right) 
    =
    \ln(1+\bar{x}_2) - \ln(\bar{x}_2)
    =
    G(-1;\bar{x}_2) - G(0;\bar{x}_2)
\end{equation}
and similarly
\begin{equation}
\begin{split}
    &
    G\left(-\frac{x_3}{1+2x_3};\frac{1}{\bar{x}_2}\right) = 
    \ln\left(1+\frac{1+2x_3}{\bar{x}_2 t3}\right) 
\\ &=
    \ln(1 + 2 x_3) 
    + \ln\left(1 + \frac{x_3}{1 + 2 x_3}\bar{x}_2\right) 
    - \ln(x_3) 
    - \ln(\bar{x}_2)
\\ &=
    G\left(-\frac{1}{2};x_3\right)
    + G\left(-\frac{1+2x_3}{x_3};\bar{x}_2\right)
    -G(0;x_3)
    -G(0;\bar{x}_2).
\end{split}
\end{equation}
Using the above, it is straightforward to compute the limit of the primitive at $\bar{x}_2 \to 0$. Notice in particular that terms involving $G(0;\bar{x}_2) = \ln (\bar{x}_2)$ cancel and the limit is finite. Thus we obtain
\begin{equation}
\begin{split}
    &
    \int_0^\infty dx_2\,\left[
    -\frac{1}{(1+x_3)^2 (1+x_2)} 
    +\frac{1}{(1+x_3)^2 [x_3/(1+2x_3) + x_2]}
    \right]
\\ &= 
    \frac{G\left(-\frac{1}{2};x_3\right)-G(0;x_3)}{(1+x_3)^2}. 
\end{split}
\end{equation}
Finally, we must evaluate
\begin{equation}
    \int_0^\infty dx_3\, 
    \frac{G\left(-\frac{1}{2};x_3\right)-G(0;x_3)}{(1+x_3)^2}.
\end{equation}
Notice that the denominator involves $(1+x_3)$ to the second power, so the primitive can be computed using integration by parts and will not not involve mulitple polylogarithms of weight higher than one. Computing the limits then proceeds in the same way as above, using the functional identities for the logarithm and we find the the above integral is simply equal to $2 G(0;2)=2\ln 2$. Recalling the overall factor of $8\pi^2$ in Eq.~(\ref{eq:I-ex-mb-int-3}), we have finally (see \wwwaux{symbolicint}
\begin{equation}
    I = 16 \pi^2 \ln 2 + {\mathrm O}(\epsilon).
\end{equation}
In a similar fashion it is possible to compute the result at higher orders in the $\epsilon$ expansion and the linear reducibility of the integrand guarantees that the expansion coefficients at higher orders are still expressible in terms of mulitple polylogarithms (at specific arguments). However, already at ${\mathrm O}(\epsilon)$, the full machinery of the Hopf-algebra of mulitple polylogarithms comes into play. The method discussed here has been used in the literature to compute real radiation integrals, e.g. in Higgs boson production at N${}^3$LO accuracy~\cite{Anastasiou:2013srw}.

\begin{tips}{Choosing Proper Methods of \mb{} Integration}

Sometimes it is beneficial to combine the methods presented above. 
For example, it may happen that after replacing only some of the 
gamma functions in the integrand of a Mellin-Barnes integral by 
beta functions, the Mellin-Barnes integrations simplify. One way
this may happen is that multi-dimensional Mellin-Barnes integrals 
decouple into products of one-dimensional integrals, but this is 
not the only option as we will see shortly. In such situations, it is 
not necessary to then continue to replace further gamma functions by 
beta functions until we arrive at a proper Euler integral representation, 
but one may go ahead and evaluate the simplified Mellin-Barnes 
integrals directly, e.g. using the methods of symbolic summation. 
This way of proceeding can sometimes be more efficient than if we 
had derived a full Euler integral representation. As an application 
of these ideas, here we derive the solution of the two-denominator 
angular integral with one mass, Eq.~(\ref{eq:Om-2-1m-fin}). 

To start, recall from Eq.~(\ref{eq:Om-2-1m-MB}) that the Mellin-Barnes representation of the angular integral $\Omega_{j,k}(v_{11},v_{12},\epsilon)$ involves the two-dimensional integral
\begin{equation}
    \begin{split}
    I &= 
        \int_{-i\infty}^{+i\infty} 
        \frac{dz_1}{2\pi i} \frac{dz_2}{2\pi i}
        \Gamma(-z_1)\Gamma(-z_2)
        \Gamma(a+2z_1+z_2)\Gamma(b+z_2)
\\& \times
        \Gamma(c-z_1-z_2) x^{z_1} y^{z_2}\,.
    \end{split}
\label{eq:Om-2-1m-MB-I-1}
\end{equation}
Notice that this integral involves a gamma function of the form $\Gamma(\dots + 2z_1)$ 
and hence it cannot easily be computed using the summation methods presented above.
Throughout we tacitly assume that all parameters lie in a strip 
of the convex plane such that each integral we write is convergent. 
Then let us begin by writing the product of the second and fourth 
gamma functions in the integrand as a beta function. Here we 
choose the representation of the beta function where the domain of 
integration is $[0,1]$ and write
\begin{equation}
    \Gamma(-z_2)\Gamma(b+z_2) = \Gamma(b) \int_0^1 dt\, 
    t^{-1-z_2}(1-t)^{b-1+z_2}\,.
\end{equation}
Then, Eq.~(\ref{eq:Om-2-1m-MB-I-1}) becomes
\begin{equation}
    \begin{split}
    I &= \Gamma(b)
        \int_{-i\infty}^{+i\infty} 
        \frac{dz_1}{2\pi i} \frac{dz_2}{2\pi i}
        \int_0^1 dt\, t^{-1-z_2}(1-t)^{b-1+z_2}
\\& \times
        \Gamma(-z_1)\Gamma(a+2z_1+z_2)
        \Gamma(c-z_1-z_2) x^{z_1} y^{z_2}\,.
    \end{split}
\label{eq:Om-2-1m-MB-I-2}
\end{equation}
Next, we perform the change of variables $z_2 \to -a - 2 z_1 - z_2$, which makes the following manipulations more straightforward to follow. Then we find
\begin{equation}
    \begin{split}
    I &= \Gamma(b)
        \int_{-i\infty}^{+i\infty} 
        \frac{dz_1}{2\pi i} \frac{dz_2}{2\pi i}
        \int_0^1 dt\, t^{a-1+2z_1+z_2}(1-t)^{b-a-1-2z_1-z_2}
\\& \times
        \Gamma(-z_1)\Gamma(-z_2)
        \Gamma(a+c+z_1+z_2) x^{z_1} y^{-a-2z_1-z_2}\,.
    \end{split}
\label{eq:Om-2-1m-MB-I-3}
\end{equation}
After exchanging the order of integrations and rearranging some factors, $I$ can be written in the following form,
\begin{equation}
    \begin{split}
    I &= y^{-a} \Gamma(b)
        \int_0^1 dt\, t^{a-1}(1-t)^{a+b+2c-1}
        \int_{-i\infty}^{+i\infty} 
        \frac{dz_1}{2\pi i} \frac{dz_2}{2\pi i}
\\& \times
        \Gamma(-z_1)\Gamma(-z_2)
        \Gamma(a+c+z_1+z_2) \left(\frac{x t^2}{y^2}\right)^{z_1}
        \left[\frac{t(1-t)}{y}\right]^{z_2}
        [(1-t^2)]^{-a-c-z_1-z_2}\,.
    \end{split}
\label{eq:Om-2-1m-MB-I-4}
\end{equation}
At this point we could continue to express products of gamma functions as real Euler-type integrals. However, a more clever way to proceed presents itself. Indeed, the Mellin-Barnes integrations are now formally easy to perform, since the integral we are left with is just a special case of the general formula
\begin{equation}
    \frac{1}{(A+B+C)^\nu} = \frac{1}{\Gamma(\nu)}
    \int_{-i \infty}^{+i \infty} \frac{dz_1\,dz_2}{(2\pi i)^2}
    \Gamma(-z_1)\Gamma(-z_2)\Gamma(\nu+z_1+z_2) A^{z_1} B^{z_1} 
    C^{-\nu-z_1-z_2},
\end{equation}
where of course the appropriate choice of contour is understood. Thus, Eq.~(\ref{eq:Om-2-1m-MB-I-4}) becomes
\begin{equation}
    I = y^{-a} \Gamma(b) \Gamma(a+c) \int_0^1 dt\,
    t^{a-1}(1-t)^{a+b+2c-1}
    \left[(1-t)^2+\frac{t(1-t)}{y} + \frac{x t^2}{y^2}\right]^{-a-c}.
\end{equation}
The last factor of the integrand can be factored in the integration variable $t$ as follows,
\begin{equation}
    (1-t)^2+\frac{t(1-t)}{y} + \frac{x t^2}{y^2} =
    \left(1-\frac{2y-1-\sqrt{1-4x}}{2y}t\right)
    \left(1-\frac{2y-1+\sqrt{1-4x}}{2y}t\right).
\end{equation}
Note that the appearance of square roots is not an issue, since only the parameter $x$ appears under the root, the variable of integration $t$ does not. (Recall also the discussion below Eq.~(\ref{eq:S-sets-seq}) regarding the last, and in the present case only, integration step). Finally, we obtain the one-dimensional real integral representation
\begin{equation}
\begin{split}
    I &= y^{-a} \Gamma(b) \Gamma(a+c) \int_0^1 dt\,
    t^{a-1}(1-t)^{a+b+2c-1}
\\ &\times
     \left(1-\frac{2y-1-\sqrt{1-4x}}{2y}t\right)^{-a-c}
    \left(1-\frac{2y-1+\sqrt{1-4x}}{2y}t\right)^{-a-c}.
\end{split}
\end{equation}
The last integral can now be performed in terms of the Appell function of the first kind, see Eq.~(\ref{eq:F1-1d-real-int}), 
and we find the final result
\begin{equation}
\begin{split}
    I &= y^{-a} \frac{\Gamma(b)\Gamma(b)\Gamma(a+c)\Gamma(a+b+2c)}
    {\Gamma(2a+b+2c)}
\\ &\times
    F_1\left(a,a+c,a+c,2a+b+2c,\frac{2y-1-\sqrt{1-4x}}{2y},
    \frac{2y-1+\sqrt{1-4x}}{2y}\right).
\end{split}
\end{equation}
We note that if the parameters $a$, $b$ and $c$ are all of the form $n+m\epsilon$ with $n\in\mathbb{Z}$, the integral representation can serve as a starting point for the computation of the $\epsilon$ expansion of the result. Alternatively, the Appell function can be represented as nested sum of depth two and the techniques of symbolic summation discussed above may also be applied to evaluate the $\epsilon$ expansion.

\end{tips}

\subsection{More General Integrals}
\label{sec:moregenints}

We have seen in section~\ref{sec:moregensums} that for \MB{} integrals that contain the integration variable also in the form $\Gamma(\ldots \pm az)$ with $a\ne 1$ (as is typical for example for problems with massive propagators), obtaining analytic solutions by applying Cauchy's theorem and summing over residues becomes problematic even for one-dimensional integrals. Indeed, as demonstrated there, already in such simple cases we can encounter sums that we are no longer able to write as $S$- or $Z$-sums and thus the systematic way of obtaining the solution given in section~\ref{sect:1dmb-sol-with-sums} is no longer applicable.

Here we want to give a brief description of how these issues manifest themselves in the integration approach to obtaining analytic solutions. To do so, let us consider the vertex function of Eq.~(\ref{parts}), which we have also seen in Eq.~(\ref{eq:V3l2m-again2}),
\begin{eqnarray}
V^{\epsilon^{-1}}_{V3l2m}(s) &=& 
- ~ \frac{1}{2s} 
\int\limits_{-\frac{1}{2}-i \infty}^{-\frac{1}{2}+i \infty} 
\frac{dz}{2\pi i}
{(-s)^{-z}}
{\frac{\Gamma^3(-z)\Gamma(1+z)} {\Gamma(-2z)}}. \label{eq:V3l2m-again2}
\end{eqnarray}
Noting that the contour runs parallel to the imaginary axis with real part $Re(z)=-\frac{1}{2}$, we see immediately that the real parts of the arguments of all gamma functions are positive. Thus, we can express the integral as a convergent Euler-type integral as described in section~\ref{subsec:EulerInt}. E.g. using
\begin{equation}
    \frac{\Gamma^2(-z)}{\Gamma(-2z)} = 
    \int_0^\infty dx_1\, x_2^{-1 - z} (1 + x_1)^{2z}
\end{equation}
and
\begin{equation}
    \Gamma(-z)\Gamma(1+z) = 
    \int_0^\infty dx_2\, x_2^{-1 - z} (1 + x_2)^{-1},
\end{equation}
we find the integral representation
\begin{equation}
V^{\epsilon^{-1}}_{V3l2m}(s) = -\frac{1}{2s} 
    \int_0^\infty dx_1\,\int_0^\infty dx_2\,
    \frac{1}{x_1 x_2 (1+x_2)}
    \delta\left(1 - \frac{(1 + x_1)^2}{(-s) x_1 x_2}\right).
\label{eq:V3l2m-DD}
\end{equation}
To proceed, we must solve the Dirac delta constraint for one of the variables. Solving for $x_2$, we find
\begin{equation}
    x_2 = \frac{(1 + x_1)^2}{(-s) x_1}
\end{equation}
and the following integral representation for the vertex function,
\begin{equation}
    V^{\epsilon^{-1}}_{V3l2m}(s) = \frac{1}{2} 
    \int_0^\infty dx_1\, \frac{1}{1 + (2 - s) x_1 + x_1^2}.
\label{eq:V3l2m-1}
\end{equation}
For the sake of simplicity, let us assume that we are working in the Euclidean region where $s<0$ and thus there are no singularities in the integration domain. Examining the above integral we see the following issue: the denominator is not a product of factors that are linear in the integration variable $x_1$, so we cannot perform the integration in a straightforward manner in terms of MPLs. However, in this particular case, it is easy to see how to proceed. Let us simply factor the denominator by solving for the roots of the quadratic,
\begin{equation}
    1 + (2 - s) x_1 + x_1^2 = 
    \left(x_1 - \frac{s - 2 + \sqrt{(s - 4) s}}{2}\right) 
    \left(x_1 - \frac{s - 2 - \sqrt{(s - 4) s}}{2}\right).
\end{equation}
Using this factorization, the integral in Eq.~(\ref{eq:V3l2m-1}) can be computed easily after performing the partial fraction decomposition. {\it Evidently, logarithms of the roots $\frac{s - 2 \pm \sqrt{(s - 4) s}}{2}$
will appear in the result.} The final expression can be simplified by introducing once again the conformal variable $y$, given in Eq.~(\ref{eq:conformal}) (with $m=1$). In fact, setting $s = -\frac{(1 - y)^2}{y}$, the quadratic expression in the denominator of Eq.~(\ref{eq:V3l2m-1}) factors immediately
\begin{equation}
    1 + (2 - s) x_1 + x_1^2 = \frac{(x_1 + y) (1 + x_1 y)}{y},
\end{equation}
and the integration yields the same result that we found in Eq.~(\ref{eq:V3l2m-res}),
\begin{equation}
V^{\epsilon^{-1}}_{V3l2m}(s) = -\frac{y \ln y}{1-y^2}.
\label{eq:V3l2m-res-again}
\end{equation}
Thus, once more we find that a suitable change of variable allows to compute the solution in terms of MPLs.

This is, however, not true generally. In order to get a feeling for why this is the case, let us return to Eq.~(\ref{eq:V3l2m-DD}) and solve the Dirac delta constraint for $x_1$. In this case we find
\begin{equation}
    x_1 = \frac{-2 - s x_2 \pm \sqrt{s x_2 (4 + s x_2)}}{2}.
\label{eq:V3l2m-x1sol}
\end{equation}
Recalling that 
\begin{equation}
    \delta(f(x)) = \sum_{i} \frac{1}{|f'(x_{0,i})|} \delta(x-x_{0,i}),
\end{equation}
where $x_{0,i}$ are the zeros of $f(x)$, we find the following integral representation in this case
\begin{equation}
    V^{\epsilon^{-1}}_{V3l2m}(s) = \frac{1}{2} 
    \int_{-\frac{4}{s}}^\infty dx_2\, 
    \frac{1}{(1 + x_2) \sqrt{s x_2 (4 + s x_2)}}.
\label{eq:V3l2m-1b}
\end{equation}
The non-trivial lower limit of integration appears because the solutions for $x_1$ in Eq.~(\ref{eq:V3l2m-x1sol}) must be non-negative and hence real. But recall that we are working in the Euclidean region where $s<0$. Thus, the expression under the square root in Eq.~(\ref{eq:V3l2m-x1sol}) must be non-negative, which implies $-\frac{4}{s} \le x_2$ (recall that $x_2>0$). Now we are faced with the following problem: the integrand includes a polynomial of the integration variable under a square root. Such integrals are genuine generalizations of multiple polylogarithms and in general cannot be evaluated in terms MPLs. Notice too, that introducing the conformal variable $y$ does not solve this problem. Setting $x_2 \to x_2 - \frac{4}{s}$ to transform the limits of integration to zero and infinity and using $s= -\frac{(1-y)^2}{y}$ together with the fact that in the Euclidean region $0<y<1$, we find
\begin{equation}
    V^{\epsilon^{-1}}_{V3l2m}(s) = \frac{1}{2} 
    \int_{0}^\infty dx_2\, 
    \frac{y(1 - y)}
    {[(1 + y)^2 + (1 - y)^2 x_2]\sqrt{x_2(4 y + (1 - y)^2 x_2)}}.
\label{eq:V3l2m-2}
\end{equation}
Evidently, the integration variable $x_2$ still appears under the square root. In order to proceed, we must find some transformation of the integration variable which \emph{rationalizes the root}, i.e., a transformation which converts the integrand into a rational function of $x_2$. There are some systematic procedures known which can be used to search for such transformations, see e.g. the algorithm of~\cite{Besier:2019kco}, which has also been implemented in the \math{} package {\tt RationalizeRoots}. However, in general such transformations are not guaranteed to exist. In our particular case, the transformation
\begin{equation}
    x_2 \to -\frac{4 y}{(1 - y)^2 - 16 y^2 t_2^2}
\end{equation}
does the trick. Note that the lower limit of integration after this change of variables becomes $\frac{1-y}{4y}$, while the upper limit is still infinity. Finally, we shift the integration variable, $t_2 \to t_2 + \frac{1-y}{4y}$, in order set the lower limit of integration to zero and obtain
\begin{equation}
    V^{\epsilon^{-1}}_{V3l2m}(s) = \frac{1}{2} 
    \int_{0}^\infty dt_2\, 
    \frac{2 y (1 - y)}{[1 - y + 2 (1 + y) t_2][1 - y + 2 y (1 + y) t_2]}.
\label{eq:V3l2m-3}
\end{equation}
This final integral is easy to perform and once more gives the result of Eq.~(\ref{eq:V3l2m-res-again}).

However, the fact that we were able to find a transformation which rationalized the integrand is not generic. In fact, we emphasize that in the general case, one can encounter square roots that cannot be rationalized (with a rational transformation). Such integrands lead to genuinely new functions, such as the complete elliptic integrals
\begin{equation}
    K(z) = \int_0^1 dt\, \frac{1}{\sqrt{(1-t^2)(1-z t^2)}}
    \qquad\mbox{and}\qquad
    E(z) = \int_0^1 dt\, \frac{\sqrt{1-z t^2}}{\sqrt{(1-t^2)}}.
\end{equation}
Moreover, in complicated examples, one typically encounters several different square roots in the integrand. In such cases, the variable transformations must rationalize all square roots that appear together simultaneously. The complete elliptic integrals already showcase this scenario. In fact, we can easily find transformations that rationalize either $\sqrt{1-t^2}$ or $\sqrt{1-z t^2}$, but not both roots together. Hence, in the general case we are left with some square roots and cannot proceed to compute the integral in terms of MPLs. Instead, the set of functions must be enlarged to include these more general integrals. The study of the appropriate generalizations, such as elliptic polylogarithms, is an active area of research and their discussion is beyond the scope of this book. For a summary of the current state-of-the-art, we refer the reader to~\cite{Bourjaily:2022bwx}.

\section{Approximations \label{section:approx}}
 
\subsection{Expansions in the Ratios of Kinematic Parameters} 
\label{sec:1exp}  
 
Expansions of the \mb{} integrals depends on the kinematic variables and physical regimes. As an example we start from the integrals where two variables, say $m$ and $s$ are present, as in Eq.~(\ref{MB-SE1l2m-lemmas}), which can be cast in the form $ms=-m^2/s$
\bq
I_1 =  \int_{-1/4-i \infty}^{-1/4+i \infty} \frac{d z_1}{2 \pi i}\; (ms)^{z_1}  \frac{\Gamma(1 - z_1)^2 \Gamma(-z_1) \Gamma(z_1)}{ \Gamma(2 - 2 z_1)}. \label{eq:se1l2mchap6}
\eq  
We can approximate the integral and expand it in the $ms$ parameter by closing the contour on right-half in a complex plane, which implies $s \gg m^2$. This has been done in section~\ref{ssec:MBtoSums} and the \math{} file with derivations can be found in \wwwaux{SE2l2m}. There you can also find how the result can be obtained directly with the package \texttt{MBasymptotics.m}, see appendix~\ref{app:analsoftware}. 
If another kinematical variables exist, we can work our approximations systematically in regions where the ratios of them are small. For instance, in case of the scattering processes like Bhabha, we have $s$ and $t$ Mandelstam variables, we can expand in the limit $s,t \gg m^2$ by considering ratios 
\begin{equation}
\ln\left(-\frac{m^2}{s}\right) \longrightarrow
\ln\left(-\frac{m^2}{t}\right) = \ln\left(-\frac{m^2}{s}\right)
-\ln\left(\frac{t}{s}\right),
\end{equation}
and
\begin{equation}
\frac{t}{s} \longrightarrow \frac{s}{t} = 1/\left(\frac{t}{s}\right).
\end{equation}
The above transformations may imply some algebra in transforming the arguments of the polylogarithms to a unique form or basis, see \texttt{HPLs} section~\ref{sec:hplsmpls}.

In general, integrals of the form  

\begin{equation} I = (m^2)^{-2\epsilon}\int_{-i \infty}^{i \infty} \frac{d z}{2 \pi i} \;
\left(-\frac{m^2}{s}\right)^z f\left(\frac{t}{s},z\right), 
\label{eq:mbnpbbhabha1}
\end{equation} 
have been considered in kinematic expansions in~\cite{Czakon:2006pa}.
Here the $f$ function contains, amongst others, a product of gamma, or
possibly polygamma functions, which have poles in $z$.  The $f$ function is given in general by a
multidimensionl \mb{} integral. To expand such integrals we take a trick since it is difficult to directly take
residues in this form, and we change the order of integration and close
the $z$ contour to the right. This procedure is subsequently applied
recursively, until no further poles at the required order of expansion
occur. We can check resulting approximations numerically with existing \mb{} software (appendix~\ref{app:mbnum}) and other methods and packages (appendix~\ref{app:othernum}).  

For more complicated cases, we consider \texttt{4PF} integrals with three kinematic Mandelstam variables $s,t,u$, we follow~\cite{Czakon:2006pa}.
 
To summarize, we would like to evaluate \mb{} integrals in Eq.~(\ref{eq:mbnpbbhabha1}) using the following series variables
\bq
x=\frac{t}{s},\;\;\;\; L = \ln\left(-\frac{m^2}{s}\right).
\label{eq:xvar}
\eq
The advantage of such choice is that the  results are explicitly real in the
Euclidean domain, $s,t < 0$. 
We should note that in the limit $s,t \gg m^2$ we have $s+t+u=0$, which simplifies possible analytic solutions of \mb{} integrals involved. In fact, we know already that without this approximation the analytic solutions goes beyond \texttt{HPLs}~\cite{Henn:2013woa}. 

\begin{svgraybox}
That approximations of \mb{} integrals leads to analytic solutions with simpler classes of functions is a typical observation. This fact can be certainly explored systematically in frontier studies of evaluation of multiloop and multileg \texttt{FI} integrals.  
\end{svgraybox}

\begin{figure}
  \begin{center}
      \includegraphics[scale=0.5]{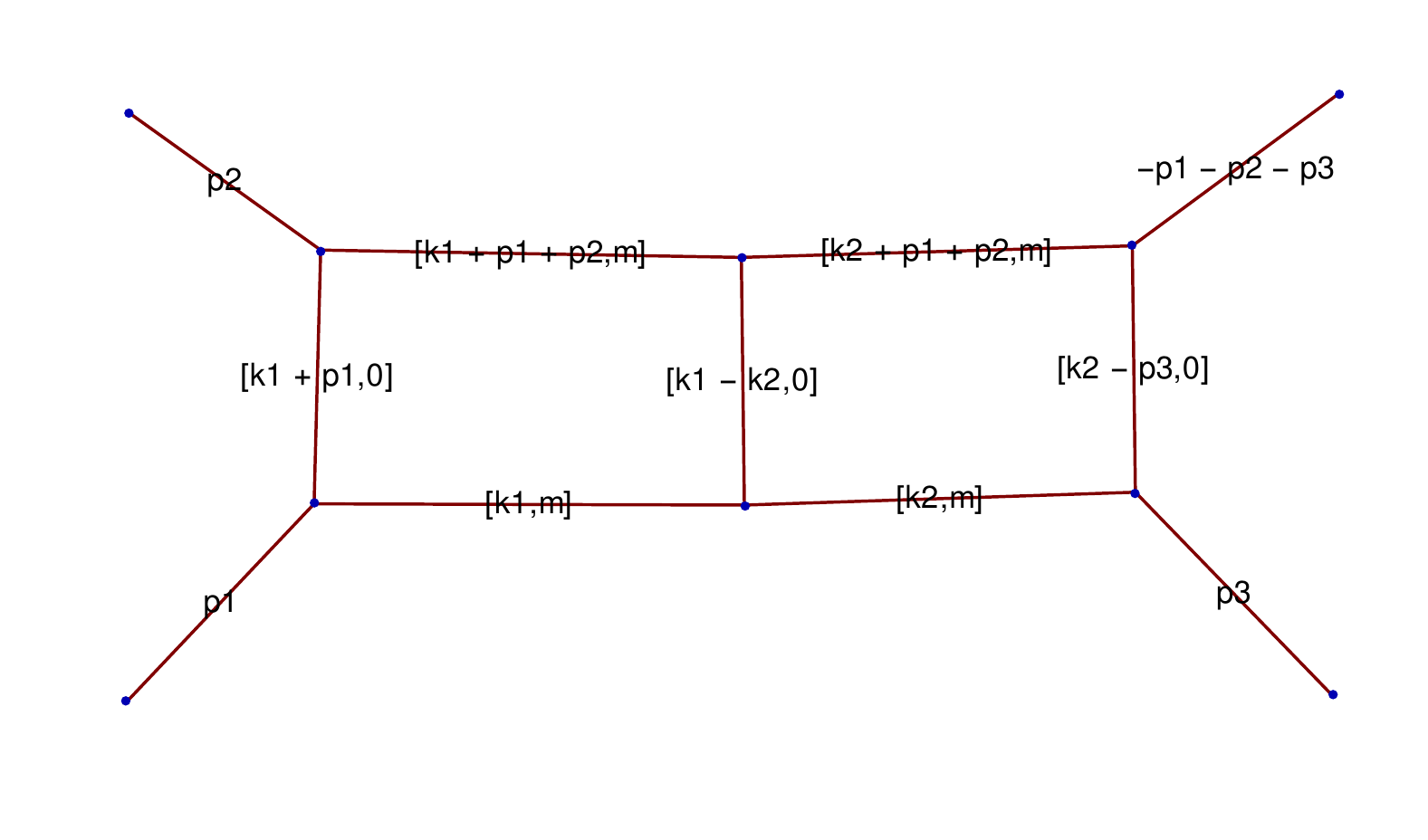}
  \end{center}
  \caption{\label{fig:7lin}
The 7-propagators planar two-loop topology \texttt{B7l4m1} with the momentum distribution defined in the file \texttt{MB\_B7l4m1\_Springer.nb}, see the text for details.
}
\end{figure}

Saying that, in \wwwaux{B74m1} we derive six dimensional \mb{} representation for the scalar integral given in Fig.~\ref{fig:7lin}, the integral considered first time in~\cite{Smirnov:2001cm}. Also, \texttt{MBasymptotic.m} package is used there to expand the integral in the $ms$ variable. 
After expansion and additional manipulations  we are left with a bunch of one dimensional \mb{} integrals which can be sum up by one of summation packages described in appendix~\ref{app:mplsums}.
We leave it as the Problem~\ref{prob:expandxsummer}. 

After summation over the $x$ variable in Eq.~(\ref{eq:xvar}), we get~\cite{Czakon:2006pa}
\begin{eqnarray}
%checked
{\tt B7l4m1} =
%tord: & - & \frac{1}{\epsilon^2}\; \frac{2 m_s^3 {\rm L}^2}{x}
       & + & \frac{1}{\epsilon^2}\; \frac{2 {\rm L}^2}{s^2t}
\nonumber \\
%tord:  &+& \frac{1}{\epsilon}\; \frac{m_s^3}{3 x}
        &-& \frac{1}{\epsilon}\; \frac{1}{3 s^2t}
 \left[-10\;{\rm L}^3 + 6\;{\rm L}\;\zeta_2 + 6\;\zeta_3 + 12\;{\rm L}^2\;\ln(x) \right] \nonumber \\
%tord:  &+& \frac{m_s^3}{6 x} \biggl\{
        &-& \frac{1}{6s^2t} \biggl\{
-12\;{\rm L}^4 + 28\;{\rm L}^3\;\ln(x) - 4\;{\rm L}^2\;(-30\;\zeta_2 +
3\;\ln^2(x)) \nonumber \\
&-& 4\;{\rm L}\;(9\;\zeta_3 + 30\;\zeta_2\;\ln(x) + \ln^3(x) -
18\;\zeta_2\;\ln(1 + x) \nonumber \\
&-&
3\;\ln^2(x)\;\ln(1 + x) - 6\;\ln(x)\;{\litwo}( -x) + 6\;{\litr}( -x))
\nonumber \\
&-&15 \zeta_4-24 \zeta_3 \ln(x) \biggr\}
\label{B7l4m1ms}
\end{eqnarray}

\subsection{Taylor and Region Expansions with \mb{} \label{sec:taylorregionMB}}

{Another interesting idea to evaluate difficult multi-dimensional \mb{} integrals, which is different from direct kinematic expansions just discussed in the previous section is to decrease the number of \mb{} variables by direct propagators expansion in the first step in some ratios of the propagator masses. 
A simple way to do that is to Taylor-expand, as discussed and applied in more general contexts in~\cite{Smirnov:1994tg, Misiak:2004ew}.
In~\cite{Dubovyk:2021lqe} an example has been shown.
The idea is simple. Imagine we have two propagators with different masses, $M_W$ and $MZ$, say. Then integrand can be expanded at e.g. $M_W = M_Z$ and the expanded propagator takes the following form
 
\begin{equation}    
    \frac{1}{k^{2} - M^{2}_{W}} = \frac{1}{k^{2} - M^{2}_{Z}} + 
     \frac{(M^{2}_{W} - M^{2}_{Z})}{(k^{2} - M^{2}_{Z})^{2}} +
     \frac{(M^{2}_{W} - M^{2}_{Z})^{2}}{(k^{2} - M^{2}_{Z})^{3}} +
     \frac{(M^{2}_{W} - M^{2}_{Z})^{3}}{(k^{2} - M^{2}_{Z})^{4}} + \dots ~.
     \label{eq:Taylor_exp}
\end{equation}

As the result, one gets to calculate more terms (depending on aiming accuracy) but simpler integrals (propagators with $M_W$ mass are eliminated in Eq.~(\ref{eq:Taylor_exp})). For more masses involved, it can be applied recursively in order to reduce the number of scales in the integral.  

In~\cite{Mishima:2018olh} the method of expansions by regions~\cite{Beneke:1997zp} has been merged with the \mb{} method.  
The contribution from each region
is expressed in terms of \mb{} integrals, and the integrals have been solved systematically for the Higgs pair production cross-section at two-loop order,
in the high energy limit. However, the method can have wider applications. In work, the generalized Barnes lemmas
have been considered. In general, using \texttt{1BL}, \texttt{2BL} and generalized Barnes lemmas four-dimensional \mb{} integrals were reduced up to two-dimensional cases which were solved analytically in terms of generalized hypergeometric functions discussed in section~\ref{sec:gamma_hyperg} or summed (in the case of one dimensional \mb{} integrals). For essential points of the \mb{} techniques used, see~\cite{Davies:2018ood}.

 }
  
\subsection{\mb{}: Other Directions \label{sec:otherdir}}

Another strategy useful for getting analytical results for \texttt{FI} expanded in large or small limits has been developed in~\cite{Friot:2005cu}.
The starting point is the Feynman parametrization of Eq.~(\ref{FeynSgen}) rewritten to the form
\begin{equation}
        \cF (\rho_j)=\int_0^1 d\alpha_1 \int_0^1 d\alpha_2 \cdots\int_0^1 d\alpha_F \frac{N(\alpha_i)}{\left[\sum_j D_{j}(\alpha_i)\rho_j\right]^{n+ (p/q)
\epsilon}}\,,
\end{equation}
where the $\rho_j$ denote scalar products of the external momenta and squared masses normalized to a fixed mass scale in the given Feynman diagram, so that $\rho_0 =1$, and $n+(p/q)\epsilon >0$ with $n$, $p$ and $q$ some positive integers, in general. For instance, in the simplest case there can be only one $\rho$--parameter
$
        \sum_j D_{j}(\alpha_i)\rho_j=D_{0}(\alpha_i)+D_{1}(\alpha_i)\rho\,,
$
and the behaviour of the integral $\cF (\rho)$ for $\rho\ll 1$ can be examined, for details see~\cite{Friot:2005cu} and related works by the authors.
Using the basic \mb{} relation, we get the relation between the $\rho$ real space and the complex Mellin s-plane
\begin{eqnarray}\label{feyparMB}
          \cF (\rho)&=&\frac{1}{2\pi i}\int\limits_{c-i\infty}^{c+i\infty} ds
          \  \rho^{-s}\  \cM[\cF](s),  \\
    \cM[\cF](s) &=& \int_0^1 d\alpha_1 \int_0^1 d\alpha_2 \cdots\int_0^1 d\alpha_F \ \frac{N(\alpha_i)}{\left[D_{0}(\alpha_i)\right]^{\nu}}\left(\frac{D_0(\alpha_i)}{D_1(\alpha_i)}\right)^{s}\frac{\Gamma(s)\Gamma(\nu-s)}{\Gamma(\nu).} \nn
\end{eqnarray}
Relation in Eq.~(\ref{feyparMB}) constitutes the so-called Mellin transform.
{\it{the asymptotic behaviours of $\cF (\rho)$, both, for $\rho\ll 1$ and $\rho\gg 1$, are  encoded in the so-called {\it converse mapping theorem}~\cite{flajolet1995mellin}  which establishes a relation between the singularities in the Mellin $s$--plane and the asymptotic behaviour(s) one is looking for.} }
The method has been applied to the calculation of the vacuum polarization contributions to the muon anomaly in~\cite{Friot:2005cu}.

The problem of the convergence of \mb{} sums is difficult, either in asymptotic or full analytic form~\cite{Friot:2011ic,Blumlein:2014maa}. The public programs are gathered in appendix~\ref{app:mplsums}.  

In this chapter we discussed possible ways in which \mb{} representations are used in analytic studies of \texttt{FI}.
%\label{sec:3poles}  
%from Gudrun
Certainly the subject is not `out-of-print'.
For instance, as shown in~\cite{Kalmykov:2012rr}, Mellin-Barnes
representations can be used to derive linear systems of homogeneous differential equations for the original
Feynman integrals with arbitrary powers of propagators,
without the need for IBP relations.
This method in addition can be used to deduce extra relations between master integrals, beyond the
IBP reduction~\cite{Kalmykov:2011yy,Kalmykov:2016lxx,Bitoun:2017nre}. 

\section*{Problems}
\addcontentsline{toc}{section}{Problems}
\begin{problem} 
 \label{prob:inf1} Discuss convergence of the term $\Gamma(a+z)\Gamma(b-z) x^z$ in Eq.~(\ref{eq:exinf1}) for  ${z\rightarrow \infty}$  
\\ \noindent \hint{} Notice that the large-$z$ behaviour is controlled by the exponential function $e^{z \ln x}$, since this grows or vanishes faster than any power-law function of $z$. Thus, the limit at ${z\rightarrow \infty}$ is zero or infinity, depending on the sign of the real part of the product $z \ln x$.
\end{problem}

\begin{problem} \label{prob:1BLalternat}
Derive Barnes' 1st lemma in Eq.~(\ref{eq:1stBL}).  \\ \noindent
\hint{} Consider
    \begin{equation}
        I = \int_{-i\infty}^{+i\infty} \frac{dz}{2\pi i}
        \Gamma(a+z)\Gamma(b+z)\Gamma(c-z)\Gamma(d-z)
    \end{equation}
    and use the methods discussed above to derive a real integral 
    representation for this one-fold Mellin-Barnes integral. You 
    should end up with a one-dimensional real integral that can be 
    immediately performed in terms of a beta function. 
%    \\ \noindent \hint{} 
    See also discussion in~\cite{Jantzen:2012cb}.
\end{problem}
    
\begin{problem}
Show that
    \begin{equation}
        I = \int_{-i\infty}^{+i\infty} \frac{dz}{2\pi i} 
        \frac{\Gamma(a+z)\Gamma(b+z)\Gamma(-z) }{\Gamma(c+z)}
        x^z
    \end{equation}
    can be evaluated in terms of Gauss' hypergeometric function 
    ${}_2F_1$. Note that ${}_2F_1(a,b;c;x)$ has the following 
    standard integral representation
    \begin{equation}
        {}_2F_1(a,b;c;x) = \frac{\Gamma(c)}{\Gamma(b)\Gamma(b-c)}
        \int_0^1 dt\, t^{b-1} (1-t)^{c-b-1} (1-t x)^{-a}\,.
    \end{equation}
    \\ \noindent \hint{} See discussion in section~\ref{sec:gamma_hyperg}, Eq.~(\ref{eq:ressum2F1}) and Problem \ref{prob:mb2F1}.
\end{problem}
\begin{problem}
\label{prob:expandms} Analyze construction and expansion of the \mb{} integral given in Fig.~\ref{fig:7lin}, which is given in \wwwaux{B7l4m1}. Note that the original \mb{} integral is six dimensional, after analytic expansion in $\eps$ up to the $\eps^0$ term it is four dimensional, and after $ms$ expansion (\texttt{2BL} must be applied afterwards) the final result is a set of 0- and 1-dimensional \mb{} integrals.
\end{problem}
\begin{problem}
\label{prob:expandxsummer} Taking the set of \mb{} integrals obtained in Problem~\ref{prob:expandms}, which are expanded in the $ms$ parameter, derive the analytic result and check it against Eq.~(\ref{B7l4m1ms}).
\\ \noindent
\hint{} Use e.g. \texttt{MBsums.m} and one of the packages for summing series, e.g. \texttt{Xsummer}, see appendix~\ref{app:mplsums}.
\end{problem}

\putbib[%
bibs/refs,%
bibs/2loops_LL16,%
bibs/Phd_Dubovyk,%
bibs/LRrefa,%
bibs/2loopsreport]
\end{bibunit}
%\input{references5}
%%%%%%%%%%%%%%%%%%%%% chapter.tex %%%%%%%%%%%%%%%%%%%%%%%%%%%%%%%%%
%
% sample chapter
%
% Use this file as a template for your own input.
%
\begin{bibunit}[elsarticle-num-ID] % define the bib-style for the unit: elsarticle-num.bst
%  text-1; this is the corresponding section
%\putbib[2loops] % the *.bib
%\end{bibunit}
% go-on
%--- from: bibunits.sty, adapts the font size of ``References'' to section
\let\stdthebibliography\thebibliography
\renewcommand{\thebibliography}{%
\let\section\subsection
\stdthebibliography} 
 
\chapter{\mb{} Numerical Methods}
\label{chapter-MBnum}  
 
 \abstract{We discuss the main issues which lead to numerical instabilities of multiloop \MB{} integrals in the Minkowskian region. We then present practical procedures for overcoming the obstacles such as the transformation of variables to finite intervals, the shifting and deformation of integration contours, and the construction of MB representations that take kinematic thresholds into account. Steepest descent and Lefschetz thimbles are applied to find optimal integration contours. The numerical evaluation of phase space \MB{} integrals is also briefly discussed.}

\section{Introduction}

Since the very beginning, the computer language {\tt FORTRAN} (1957, John Backus, IBM) has been used for numeric scientific computing. For purpose of algebraic evaluations a special software  {\tt Reduce} (started in 1963 by Anthony Hearn),  Schoonschip (started in 1967 by Martinus J.~G.~Veltman), and its descendant, the {\tt Form} package (initially released in 1989 by Jos Vermaseren) has been also developed. Nowadays {\tt Form}, {\tt C++}, \math{} and python are the main environments used in particle physics. More on the history of numerical and algebra systems and software development can be read in~\cite{Weinzierl:2002cg}. 

In the 1980s, many multiloop methods to calculate higher order corrections in particle physics have been developed, based on generalized unitarity, tree-duality, simultaneous numerical integration of amplitudes over the phase space and the loop momentum, reductions of the integrals at the integrand level, improved  diagrammatic approach and recursion relations applied to higher-rank tensor integrals, contour deformations, expansions by regions,  sector-decomposition, dispersion relations, differential equations, summation of series, Mellin-Barnes representations. Many methods aim at
direct Feynman integral calculations. For 
general reviews see~\cite{Smirnov:2004ym,Smirnov:2006rya,Anastasiou:2005cb,Gluza:2014jxa,Freitas:2016sty}. Due to experimental requirements defined by precision physics at present and future colliders, notably HL-LHC and Tera-Z physics at FCC-ee, there is a large activity in the field and the methods are in permanent development~\cite{Blondel:2018mad,Heinrich:2020ybq}. See also a recent workshop at CERN on the subject (June 2022)~\cite{cern2022}.

To solve Feynman integrals,  analytical methods can be used, though they exhibit natural limitations when sophisticated integrals with many parameters appear. Some analytical methods connected with \mb{} integrals have been discussed in previous sections, and some useful software is summarized in the Appendix. Thus, with going to higher and higher loop levels and growing complexity (number of parameters, dimensionality of integrals), it is natural that numerical methods become more and more relevant.  
The trade-off between analytical and numerical or semi-numerical methods has been nicely summarized in~\cite{Heinrich:2020ybq}, see Tab.~\ref{tab:anaversusnum}. 

\begin{table}
  \begin{center}
%  \hspace*{-1cm}
  \begin{tabular}{|l|c|c|}
    \hline
  &  analytic & numerical\\
    \hline
    pole cancellation& exact & with numerical uncertainty\\
    control of integrable singularities & analytic continuation
              & less straightforward\\
    fast and stable evaluation & yes (mostly)& depends\\
    extension to more scales/loops & difficult & promising\\
    automation & difficult & less difficult\\
    \hline    
  \end{tabular} 
  \caption{Strong and weak points of analytic versus numerical evaluation
    of loop integrals as discussed in~\cite{Heinrich:2020ybq}.}
  \label{tab:anaversusnum}
  \end{center}
\end{table}

So far a lot has been done for precise analytical and numerical calculations at the one-loop order (which is also called `next-to-leading', \texttt{NLO}). Many \texttt{NLO} programs allow to consider processes automatically and in a numerical way, see for instance~\cite{Dubovyk:2017cqw,Heinrich:2020ybq}.
Going beyond the one-loop calculations in Minkowskian regions, the situation is quite different.   
There are technical obstructions in the calculation of \texttt{FI} in this setting due to threshold effects, singularities, on-shellness or several  mass parameters involved.  
 With increasing number of loops and increasing number of mass and momentum scales, it becomes more difficult to compute these corrections analytically, or even semi-analytically.  
In 2014  the only  advanced  automatic numerical  two-loop method in Minkowskian kinematics was \texttt{SD}~\cite{Hepp:1966eg,Binoth:2000ps}, where  a complex contour deformation of the Feynman parameter integrals is implemented in the publicly available numerical  packages {\tt FIESTA 3}~\cite{Smirnov:2013eza} (since 2013) and {\tt SecDec 2}~\cite{Borowka:2012yc}  (since 2012), followed by pySecDec~\cite{Borowka:2017idc}. 
 
 Evaluation of the multiloop, multiscale integrals is in general very challenging. For instance, for the 2-loop Z-decay (representative Feynman diagram is shown in Fig.~\ref{fig:npSDMB}), the typical evaluation problems which we meet~\cite{Dubovyk:2016aqv} are connected with (i) up to four dimensionless scales at $s=M_Z^2$  with a variance of masses
   $M_Z, M_W, m_t, M_H$ involved and  (ii) intricate threshold and on-shell effects.
 To tackle these problems, the semi-numerical approach to the calculation of Feynman integrals based on  Mellin-Barnes representations has been developed. 
In calculations the \MB{} method is used with many suitable packages gathered in what we call the \MB{}-suite. 
An important part of the \MB{}-suite is the {\tt AMBRE} project, see chapter~\ref{chapter-MBrepr}.  It includes among others the construction of \MB{} representations~\cite{Gluza:2007rt,Gluza:2010rn,Dubovyk:2016ocz,ambrewww}) and the recognition of planarity of Feynman diagrams~\cite{Bielas:2013v11,Bielas:2013rja} with the corresponding packages  \arm{} and \pltestm{}.  
For the extraction of $\epsilon$-singularities in dimensional regularization of multiloop integrals, the \mbm{}~\cite{Czakon:2005rk} and \mbresolve{}~\cite{Smirnov:2009up} packages are used, see chapter~\ref{chapter-singul}. As they offer the possibility of numerical integrations in Euclidean kinematics, they are used also as a numerical cross check of analytical results for multiloop integrals. Working on the completion of two-loop corrections to the $Z$-boson decay, it has been observed that serious convergence problems for some classes of integrals both with {\tt SecDec} and {\tt FIESTA} appear and the new numerical Mellin-Barnes (MB) approach to the calculation of multiloop and massive Feynman integrals is used, with the \MB{} package \mbn{}~\cite{Usovitsch:201606,Usovitsch:2018shx}, see section E6 in~\cite{Blondel:2018mad} and~\cite{Usovitsch:2018pea}. This package is able to evaluate \mb{} integrals directly in Minkowskian kinematics, solving so far untouchable regions of precision physics. Most of the mentioned  packages can be found at web pages~\cite{mbtools,ambrewww}, see also appendix~\ref{introA}.

 %warsaw 2021 talk
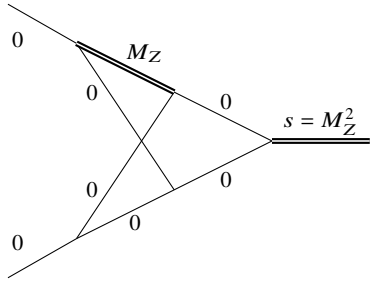
\begin{figure}[h!]
\centering
\begin{tikzpicture}[scale=1.3]
\begin{feynman}
 
\vertex at (0,-2) (i1);
\vertex at (0,2,1) (i2);
\vertex at (1,1) (i3);
\vertex at (1,0) (i4);
\vertex at (1,-1) (i5);
\vertex at (2,.5) (i6);
\vertex at (3,0) (f1);
\vertex at (4,0) (f2);
 
\draw    (0.3,1.4) -- (1,1) ; 
\draw    (1,1) -- (2,-.5) ;
%\draw    (1,0) -- (1.8,-.5) ;
\draw    (0.3,-1.4) -- (1,-1) ;
\draw    (1,-1) -- (2,.5) ;
\draw  [thick, double]  (2,.5) -- (1,1) ;
\draw   (2,.5) -- (3,0) ;
\draw    (1,-1) -- (3,0) ;
\draw [thick, double]  (3,0) -- (4,0) ;
 
\node[below] at (0.4,1.2) {{{\bf $0$}}};
\node[above] at (0.4,-1.2) {{{\bf $0$}}};
\node at (3.5,0.2) {{{\bf $s=M_Z^2$}}};
\node[left] at (1.3,.5) {{$0$}};
\node[left] at (1.3,-.5) {{$0$}};
%\node at (1.5,0.1) {{$6$}};  
\node at (1.7,0.9) {{$M_Z$}};
\node at (2.53,.4) {{$0$}};
\node at (2.53,-.4) {{$0$}};
\node at (1.6,-0.84) {{$0$}};
\end{feynman}
\end{tikzpicture}
\caption{The non-planar vertex diagram \texttt{V6l1m} which is presented for instance in a calculation of the Z-decay process to two massless fermions. It is a difficult case for the \texttt{SD} method due to a single massive propagator, which mass $M_Z$ coincides with external invariant energy. On the other hand, it can be nicely integrated with high accuracy with the \MB{} method, see, Tab.~\ref{tab:SDmink}. This \mb{} integral has been discussed initially in~\cite{Dubovyk:2016zok}.}
\label{fig:npSDMB}
\end{figure}
 
In Tab.~\ref{tab:SDmink} we show the problem that meets the \texttt{SD} method and the laborious progress made in recent years to overcome it. It appears that for such cases where most of the propagators are massless, the \mb{} method may serve as a complementary tool. In this way, using \texttt{SD} and \mb{} approaches the calculation of the two-loop \texttt{SM} \texttt{FI} has been completed~\cite{Dubovyk:2016aqv,Dubovyk:2018rlg,Dubovyk:2019szj}.

 \begin{table}[]
     \centering
     $
        \begin{array}{|l|l|}
        \hline  \hline
{\rm Analytical}:  & -{\color{black}{\bf 0.77859960897968}} - {\color{black}{\bf 4.12351259339631}} \cdot i \\
 \hline
{\rm MBnumerics}:  & -{\color{black}{\bf 0.778599608}}32476 - {\color{black}{\bf 4.123512}}60051601 \cdot i\\
{\rm MB+thresholds}:  & -{\color{black}{\bf 0.7785}}4282426410 - {\color{black}{\bf 4.123}}49826423109 \cdot i\\
 \hline 
{\rm SecDec}:      &      {\rm big\; error\; {\color{gray}{[2016]}}}\\ &{\color{black}{\bf -0.77}} - i \cdot {\color{black}{\bf 4.1}}\;
 {\color{gray}{[2017]}}\\& -{\color{black}{\bf 0.778}} - i \cdot {\color{black}{\bf 4.123}} \;{\color{gray}{[2019]}} \\%using splitting 
 {\rm pySecDec+rescaling  }:      & -{\color{black}{\bf 0.77859}}8 - i \cdot {\color{black}{\bf 4.123512}} \;{\color{gray}{[2020]}}\\
  \hline  \hline 
\end{array}
$
\caption{Minkowskian results for the diagram given in Fig.~\ref{fig:npSDMB}, for the constant part in $\epsilon$. As discussed in~\cite{Dubovyk:2016aqv}, see also section~\ref{sec:3complex}, the integral \texttt{V6l1m} which corresponds to the Feynman diagram in Fig.~\ref{fig:npSDMB} is evaluated at $s=M_Z^2\equiv 1+i \delta'$ where $\delta'$ is a small parameter. In numerical evaluations $\delta' = i \cdot 10^{-7}$.  To get the correct sign of the imaginary part of \texttt{FI} in calculations made in this chapter, the same $i \delta'$ prescription is assumed everywhere.
}
 \label{tab:SDmink}
 \end{table}

The analytical result can be found in~\cite{Fleischer:1998nb}.
The \mb{} running file \texttt{MB\_V6l1m\_Sprin\-ger.sh} for the scalar integral which corresponds to Fig.~\ref{fig:npSDMB}, evaluated with \mbn{} is available at~\cite{www_aux_springer}.  Similarly configuration files \texttt{SD\_V6l1m\_Springer\_generate.py} and  \texttt{SD\_V6l1m\_Springer\_integrate.py} for the \texttt{pySecDec} evaluation are given in~\cite{www_aux_springer}.  
The  discussion of the integral by the SecDec team (2017) can be found
in~\cite{Borowka:2017idc}.  In Tab.~\ref{tab:SDmink} the {\rm pySecDec} new result obtained with the so-called rescaling is also given.  

\begin{tips}{The Idea of Rescaling}

Let us briefly discuss the rescaling of Feynman parameters in the context of numerical integration with the \texttt{SD} approach, though the method can be used elsewhere.
The basic idea is that we can rescale each Feynman parameter by an arbitrary factor $x_i \rightarrow \kappa_i x_i, \kappa > 0$
resulting in a new integral $I = \prod \kappa_i^{n_i} I'$ where each term in $U$ and $F$ polynomials will get new coefficients.
The transformation is possible due to the Cheng-Wu theorem discussed in chapter~\ref{chapter-MBrepr}. One can sequentially extract Feynman parameters from the common delta function
and integrate them from $0$ to $\infty$. Rescaling, in this case, doesn't change integration boundaries and doesn't affect
the delta function. After this manipulation, one can reverse the theorem and put the corresponding parameter back into the delta function, making changes only
in coefficients of graph polynomials. Starting from Eq.~(\ref{FeynSgen}) and omitting the coefficient in front of the integral, it works as follows
\begin{multline}
 \int \limits_0^1 \prod \limits_{j=1}^N dx_j ~ x_j^{n_j-1}
 \delta(1-\sum \limits_{i=1}^N x_i)
 \frac{U(\vec x)^{N_{\nu}-d(L+1)/2}}{F(\vec x)^{N_{\nu}-dL/2}} \\
 = 
\kappa_{i_1}^{n_{i_1}} \int\limits_0^{\infty} dx_{i_1} \int \limits_0^1 
 \prod \limits_{j \neq i_1}^N d x_j
 \prod \limits_{j=1}^N ~ x_j^{n_j-1}
 \delta(1-\sum \limits_{i \neq i_1}^N x_i)
 \frac{U(\dots, \kappa_{i_1} x_{i_1}, \dots)^{N_{\nu}-d(L+1)/2}}{F(\dots, \kappa_{i_1} x_{i_1}, \dots)^{N_{\nu}-dL/2}} \\
 =
\prod \limits_{j=1}^N \kappa_i^{n_i} \int \limits_0^1 \prod \limits_{j=1}^N dx_j ~ x_j^{n_j-1}
 \delta(1-\sum \limits_{i=1}^N x_i)
 \frac{U(\vec x, \vec \kappa)^{N_{\nu}-d(L+1)/2}}{F(\vec x, \vec \kappa)^{N_{\nu}-dL/2}}. 
\end{multline}
After rescaling, one can proceed further in the usual way with the \texttt{SD} algorithm.

In the SD method, a deformation of integration contours is performed to avoid a problem with the so-called Landau singularity for physical kinematic points. Integrations start at 0 and end at one but go in a complex plane. In the case shown in Tab.~\ref{tab:SDmink} we do the calculation at a specific point for which coefficients in the $F$ polynomial
are equal to $1$ or $-1$. In such a situation, the Landau singularity is located at some corner of the integration domain and cannot be fixed by the contour's transformation. The main purpose of rescaling, in this case, is to move the singularity to the interior of the integration domain. In principle, rescaling may also improve the behavior of the integrand and increase the accuracy of results.
\end{tips}
 
At the end of this introductory section, we should acknowledge that very recently, numerical calculations have been pushed forward independently by \texttt{DEs}~\cite{Dubovyk:2022frj,Liu:2022chg,Cordero:2022gsh,Armadillo:2022ugh}. For other exploratory methods in \texttt{FI} computation, see a review~\cite{Heinrich:2020ybq}.  Now we will consider in more detail the main problems and features of the \mb{} numerical approach to evaluating the \texttt{FI}.

\section{\MB{} Numerical Evaluation Using Bromwich Contours}

We begin from a discussion of numerical integrations over the Bromwich contour introduced in section~\ref{ex4_sec2}.

\subsection{Straight Line Contours and Their Limitations
}
\label{sec:1num}

As shown in section~\ref{chapter-singul} the general form of the \mb{} representation in Eq.~(\ref{MBgenForm}) is well defined if real parts of all gamma functions are positive (equivalent to the separation of all right poles of gamma functions from all left ones). That is accomplished by the appropriate choice of points where integration contours cross the real axis. The final form of \mb{} integrals suited to numerical integration  is following:
\begin{equation}
I = \frac{1}{(2\pi i)^r} \int\limits_{-i \infty + z_{10}}^{+i \infty + z_{10}} \dots \int\limits_{-i \infty + z_{r0}}^{+i \infty + z_{r0}}  \underset{i}{\overset{r}{\Pi}} dz_i \;
 {\bf{F}}(Z,S)  
 \frac{\prod \limits_{j=1}^{N_n} \; \Gamma(\varLambda_j)}{\prod \limits_{k=1}^{N_d} \; \Gamma(\varLambda_k)} f_{\psi}(Z) .
 \label{MBIntgenForm}%MBgenForm
\end{equation}
Notations are the same as in Eq.~(\ref{MBgenForm}), but now it doesn't depend on $\epsilon$. The positions of contours are fixed by
$z_{i0}$. The part $f_{\psi}(Z)$ may depend on polygamma functions and constants like Euler's constant $\gamma$ or it is equal to 1 if the
corresponding Feynman integral has no $\epsilon$ poles. 

To understand the problems appearing during numerical integrations of \mb{} integrals, one has first to study the asymptotic behavior of integrands. The main building blocks of \mb{} integrals are gamma and polygamma functions which have the following asymptotics for large arguments $\lvert z \rvert \rightarrow \infty$
\begin{align}
 \Gamma(z)_{|z| \rightarrow \infty} & = \sqrt{2 \pi} e^{-z}z^{z-\frac12} \left[ 1 + \frac{1}{12z} + \frac{1}{288z^2} + \ldots \right], \label{gammaAsy} \\
 \psi(z)_{|z| \rightarrow \infty}                & =  \ln z - \frac{1}{2z} - \frac{1}{12z^2} + \ldots         \nonumber  \\ 
 \psi^\prime(z)_{|z| \rightarrow \infty}         & =  \frac{1}{z} + \frac{1}{2z^2} + \frac{1}{6z^3} + \ldots  \nonumber  \\ 
 \psi^{\prime\prime}(z)_{|z| \rightarrow \infty} & =  - \frac{1}{z^2} - \frac{1}{z^3} + \ldots                           \\
 \psi^{(3)}(z)_{|z| \rightarrow \infty}          & =  \frac{2}{z^3} + \ldots                                  \nonumber  \\
 \ldots                                          & \,\,\,\,\, .                                               \nonumber
\end{align}

We are interested only in the leading term of the expansion for large $|z|$. It is evident that the polygamma functions, as well as the
linear fractional part of the gamma function asymptotics, do not contribute, and we are looking only at the 
$e^{-z}z^{z-\frac12}$ part of the Stirling formula in Eq.~(\ref{gammaAsy}). It is easy to see that $e^{-z}$ parts from different 
functions cancel each other. That is a general property of \mb{} representations and extends to any multi-dimensional
integral. Based on that the asymptotic behavior of \mb{} integrals is determined by the $z^{z-\frac12}=e^{(z-\frac12)\ln z}$ part of Eq.~(\ref{gammaAsy}).

Let's now consider a simple one-dimensional example
\begin{equation}
 I_{5, \epsilon^{-2}}^{0h0w} = \frac{1}{2s} \frac{1}{2 \pi i} \int\limits_{-i \infty - \frac12}^{+i \infty - \frac12} dz \left( \frac{M^2_Z}{-s} \right)^z 
 \frac{\Gamma^3(-z) \Gamma(1+z)}{\Gamma^2(1-z)}. %\frac{1}{\epsilon^2}
 \label{1dimNumEx1}
\end{equation}
where the ratio of gamma functions, which we will call a {\it{core}} of the \mb{} representation, in the limit $\lvert z \rvert \rightarrow \infty$ is 
\begin{equation}
 \frac{\Gamma^3(-z) \Gamma(1+z)}{\Gamma^2(1-z)} \xrightarrow{|z| \rightarrow \infty} e^{z (\ln z - \ln(-z)) + \frac12 \ln z 
 - \frac52 \ln(-z)}. 
\end{equation}
Here one should notice that independently of the contour of integration, the following relation holds: $\ln z - \ln(-z) = i \pi \, \mbox{sign}(\Im (z))$, see discussion in section~\ref{sec:3complex} and  Eq.~(\ref{appx147}).  

In case of practical applications the integration contour is the Bromwich contour
(a straight line parallel to the imaginary axis),
so $z = z_0 + i \, t, \,\,\, t \in (-\infty, \infty)$ and the {\it{core}} of the \mb{} integral in Eq.~(\ref{1dimNumEx1}) in the limit 
$|z| \rightarrow \infty \Leftrightarrow t \rightarrow \pm \infty$ is
\begin{equation}
 \frac{\Gamma^3(-z) \Gamma(1+z)}{\Gamma^2(1-z)} \longrightarrow e^{- \pi |t|} \frac{1}{\lvert t \rvert^2}.
 \label{Ex1limit}
\end{equation}
It is a well-behaving,  non-oscillating function.

We can extend our reasoning to any multi-dimensional integral. The asymptotic of the representation core in generalized spherical coordinates has the following form
\begin{equation}
  \frac{\prod_j \; \Gamma(\varLambda_j)}{\prod_k \; \Gamma(\varLambda_k)}
  \xlongrightarrow[|z_i| \rightarrow \infty]{r \rightarrow \infty}
  \frac{e^{- \beta r}}{r^{\alpha}}, \,\,\, \beta = \beta(\vec \theta) \geq \pi,
  \,\,\, \alpha = \alpha(z_{i0})
 \label{eq:assymp-spher} 
\end{equation}
where the coefficient $\beta$ depends on direction and the expression is valid only for contours parallel to the imaginary axis.

Now lets look at the kinematic term $\left( \frac{M^2_Z}{-s} \right)^z$. In the Euclidean case $s < 0$,
this factor gives oscillations which are well damped by the factor $e^{- \pi |t|}$. In the Minkowskian case $s \rightarrow s + i
\delta \, (s > 0)$, where $i \delta$ comes from the propagator definition and is needed to choose the proper branch of the
logarithm,
\begin{equation}
 \left( \frac{M^2_Z}{-s} \right)^z = e^{z \ln (- \frac{M_Z^2}{s} + i \, \delta )}  \longrightarrow e^{i \, t \ln \frac{M_Z^2}{s}} e^{- \pi t}, s>0.
\label{eq:MZ2overs} 
\end{equation}
As one can see, the {\it{core}} factor $e^{- \pi |t|}$ cancels with $e^{- \pi t}$ only  when $t \rightarrow - \infty$ and oscillations are not 
damped in general. Moreover, it may happen that an exponent $\alpha$ in the fractional part
$\frac{1}{|t|^{\alpha}}$ 
of the limit in Eq.~(\ref{Ex1limit}) (in our example $\alpha=2$) is not large enough
to guarantee the convergence of the integral (see e.g. Sec.~(3.4) in~\cite{Czakon:2005rk}). 
That is the main problem of numerical integration of \mb{} integrals for physical kinematics.

One of the ways to avoid the described problem is to perform a deformation of the integration path in a way that
the overall exponential damping factor is restored. This will be discussed in section~\ref{sec:shift}.

\subsection{Transforming Variables to the Finite Integration Range \label{sec:finite}} 

In practice, the integration over infinite intervals requires their transformation into finite ones (for example, in \texttt{CUBA} numerical library~\cite{Hahn:2004fe} it is the interval $[0,1]$).
In the package \texttt{MB.m} this transformation is done in the following way
\begin{equation}
 t_i \rightarrow \ln \left( \frac{x_i}{1 - x_i} \right), \;\;\;
       dt_i \rightarrow \frac{d x_i}{x_i(1 - x_i)}.
\label{lntransform}       
\end{equation}
In case of the example in Eq.~(\ref{1dimNumEx1}), the limit $t \rightarrow - \infty$ is equivalent to $x \rightarrow 0$ and in this limit
the integrand behaves like
\begin{equation}
 \frac{1}{x \; \ln^2 x} \xrightarrow{x \rightarrow 0} \infty.
\end{equation}
This singularity is integrable but prevents reaching a high accuracy result.
As an alternative one can transform the integration interval $(- \infty, \infty)$ into $[0,1]$ differently:
\begin{equation}
t_i \rightarrow \tan \left( \pi(x_i - \frac12) \right), \;\;\; 
d t_i \rightarrow \frac{\pi d x_i}{\cos^2\left(\pi(x_i - \frac12)\right)}.
\label{tantransform}
\end{equation}
The corresponding limit now is 
\begin{equation}
 \frac{1}{\sin^2 \left(\pi(x_i - \frac12)\right)} \xrightarrow{x \rightarrow 0} 1,
\end{equation}
and the integration can be easily performed. One should stress that 
with the new type of transformation imaginary parts of arguments of gamma functions grow
much faster than with $\ln$-type transformation. At some moment gamma functions in the denominator 
become equal to 0, numerically. To avoid this problem we compute the {\it{core}} of \mb{} integral in the following way:
\begin{equation}
 \frac{\prod \limits_{j=1}^{N_n} \; \Gamma(\varLambda_j)}{\prod \limits_{k=1}^{N_d} \; \Gamma(\varLambda_k)} =
 \mbox{Exp}\left( \sum \limits_{j=1}^{N_n} \; \ln\Gamma(\varLambda_j) - \sum \limits_{k=1}^{N_d} \; \ln\Gamma(\varLambda_k) \right),
\end{equation}
where $\ln\Gamma$ denotes the log-gamma function (see, e.g. \texttt{CernLib} documentation and appendix~\ref{app:mbnum}).

%------>
Let's consider a 3-dimensional \MB{} integral 
\begin{align}
 I_{2, II}^{0h0w} = & \frac{1}{s^2} \frac{1}{(2 \pi i)^3}  
 \int\limits_{-i \infty - \frac{47}{37}}^{i \infty - \frac{47}{37}} dz_1 
 \int\limits_{-i \infty - \frac{139}{94}}^{i \infty - \frac{139}{94}} dz_2
 \int\limits_{-i \infty - \frac{176}{235}}^{i \infty - \frac{176}{235}} dz_3
 \left(- \frac{s}{m^2} \right)^{-z_1}
 \Gamma(-1 - z_1) \nonumber \\ 
 & \Gamma(2 + z_1) \Gamma(-1 - z_2) \Gamma(z_1 - z_2) \Gamma(1 + z_2 - z_3)^2 \Gamma(-z_3) \Gamma(1 + z_3) \nonumber \\
 & \Gamma(-z_1 + z_3)^2 \Gamma(-z_2 + z_3) /\Gamma(-z_1) \Gamma(1 + z_1 - z_2) \Gamma(1 - z_1 + z_3).
\label{3dimEx12} 
\end{align}
Its derivation, corresponding diagram and files needed for calculations are in \wwwaux{3dimNum}.

The integral in Eq.~(\ref{3dimEx12}) has the cancellation
of the overall damping factor along the 
$z_1(t_1)$-axis ($t_1=t, \, t_2=t_3=0$) or in spherical coordinates along the direction $\vec \theta = (\theta = \pi/2, \phi = 0)$. Numerical results for this integral obtained with different combinations of transformations in Eqs.~(\ref{lntransform})~and~(\ref{tantransform}) are compared with an analytical solution in Tab.~\ref{tab:NumTabI2}. This example is taken from~\cite{Dubovyk:2019krd}.

\begin{table}[!h]
\centering
\begin{tabular}{|l|ll|l|} \hline
\rule{0pt}{2.3ex}\texttt{AB}      & $-1.{\bf199526183135}$ & $+5.{\bf567365907880} i$ &   \\ \hline 
\rule{0pt}{2.3ex}\texttt{MB}$1$   & $-1.{\bf19952}5259137$ & $+5.{\bf56736}7419371 i$ & Cuhre,  $10^7$, $10^{-8}$  \\ \hline 
\rule{0pt}{2.3ex}\texttt{MB}$2$   & $-1.{\bf1995261831}68$ & $+5.{\bf567365907}904 i$ & Cuhre,  $10^7$, $10^{-8}$  \\ \hline 
\multicolumn{4}{|l|}{} \\ \hline
\rule{0pt}{2.3ex}\texttt{MB}$3$   & $-1.{\bf20}4597845834$ & $+5.{\bf567}518701898 i$ & Vegas,  $10^7$, $10^{-3}$  \\ \hline
\multicolumn{4}{|l|}{} \\ \hline
\rule{0pt}{2.3ex}\texttt{MB}$4$   & $-1.{\bf1995}16455248$ & $+5.{\bf5673}76681167 i$ & QMC, $10^6$, $10^{-5}$     \\ \hline
\rule{0pt}{2.3ex}\texttt{MB}$5$   & $-1.{\bf19952}7580305$ & $+5.{\bf56736}7345229 i$ & QMC, $10^7$, $10^{-6}$     \\ \hline
\end{tabular}
\caption{\label{tab:NumTabI2}
Numerical results for the integral Eq.~(\ref{3dimEx12}) for $s = m^2 = 1$. \texttt{AB} - analytical solution~\cite{Aglietti:2004tq}. \texttt{MB}$1$ to \texttt{MB}$5$ -- numerical integration of the MB integrals with different integration routines and transformations of the infinite integration region as described in the text.} 
\end{table}

In the table, the label
\texttt{MB}$1$ corresponds to the numerical integration of Eq.~(\ref{3dimEx12}), where the mapping into
the integration interval $[0,1]$ is done by the $\tan$-type of transformation of Eq.~(\ref{tantransform}) for all variables. 
\texttt{MB}$2$ denotes $\tan$-mapping for $t_1$ and $\ln$-mapping for the remaining variables. 
Integrations are done by the \texttt{CUHRE} routine 
of the \texttt{CUBA} library. The maximum number of integrand evaluations allowed was set to $10^7$. 
The absolute error reported by the routine is at the level of $10^{-8}$.
In \texttt{MB}$3$ the integration is done by the \texttt{VEGAS} routine~\cite{Lepage:1977sw,Lepage:1980dq} and the
$\ln$-type of transformation is used for all variables. Corresponding error estimation is of the order of $\sim 10^{-3}$. 
The last two rows \texttt{MB}$4$ and \texttt{MB}$5$ show results for the numerical integration of Eq.~(\ref{3dimEx12}) and
$\tan$-mapping for all variables with the quasi-Monte Carlo library \texttt{QMC}~\cite{Borowka:2018goh}. 
Numbers in the last column give the maximum number of integrand evaluations and the absolute error.

%--------------->
Integration of \MB{} integrals in the Minkowskian region with only logarithmic mapping and a
deterministic algorithm implemented in \texttt{CUHRE} leads to a \texttt{NaN} result. However, the
Monte-Carlo algorithm in \texttt{VEGAS} can handle integrable singularities and give a few
correct digits for relatively simple integrals.  
As one can see from Tab.~\ref{tab:NumTabI2} the highest real accuracy was obtained in the case \mb$2$. 
Here one should point out that for both results \mb$1$ and \mb$2$ an absolute error returned by the integration routine 
is at the same level $\sim 10^{-8}$.
{\it{This shows that the direct integration of \mb{} integrals is possible but limited by several factors.}}
First, the best accuracy is achieved when the cancellation of the exponential damping factor happens along with one
of the axes in the integration space. In the case of integrals with more than one scale, a cancellation takes place
in multiple directions or in some sector of the integration space, and getting a high accuracy result becomes
much more complicated. Second, the exponent $\alpha$ in $\frac{1}{|t|^\alpha}$ in the asymptotical expansion along the direction of the damping factor cancellation maybe not be big enough for a good convergence. In a multi-dimensional case in contrast to the one-dimensional case, this problem can be solved
by shifts of integration contours $z_i \rightarrow z_{i0} + s_i + i t_i, \, s_i \in Z$ which is the subject of the next section. An appropriate set of $\{s_i\}$ allows
to increase $\alpha$. For example, in case of the integral in Eq.~(\ref{3dimEx12}), taking $z_3 \rightarrow  -176/235 - 1 + i t_3$, one can get damping factor $\alpha = 881/235 \simeq 3.75$
instead of $646/235 \simeq 2.75$.   In addition to the shifted integral, contributions from the residues must be added.
The additional terms are \mb{} integrals with one integration less, see chapter~\ref{chapter-singul}, and hence simpler to evaluate.

\subsection{Shifting and Deforming Contours of Integration \label{sec:shift}}

Let's come back to the example from section~\ref{sec:simpleinvitation} and Eq.~(\ref{parts}), with a solution in Eq.~(\ref{eq:anal3}), see also section~\ref{sec:moregensums}

\begin{eqnarray}
V^{\epsilon^{-1}}_{V3l2m}(s) &=& 
-
% typos corrected 2015-01-21 (Johann Usovitsch)
 ~ \frac{1}{2s} 
\int\limits_{-\frac{1}{2}-i \infty}^{-\frac{1}{2}+i \infty} 
\frac{dz}{2\pi i}
{\left(\frac{-s}{M_Z^2}\right)^{-z}}
{\frac{\Gamma^3(-z)\Gamma(1+z)} {\Gamma(-2z)}}, \label{eq:V3l2m-again2}
\end{eqnarray}

Doing the same asymptotic analysis as in  secion~\ref{sec:1num} one can find that the integrand behaves like
$\tfrac{1}{\sqrt{t}}$ and this is not enough for a fast convergence. Moreover, shifts in the integration contour in
one-dimensional cases do not affect the asymptotic, see Problem~\ref{problem_num1d}. The analytical solution
for this integral is well defined, so numerical integration should also be possible.

One of the methods for overcoming difficulties connected with slow convergence is the deformation of the integration contours.
It can be written in general form like
\begin{equation}
z_i = z_{i0} + f_i(t_1,...,t_n) + i t_i    
\end{equation}
where we add some real-valued function to our standard contour parallel to the imaginary axis. This function $f_i$ must fulfill two basic conditions. First, it must restore the exponential damping factor in the integration region. Second, it must prevent the crossing of poles of gamma functions. Briefly speaking, when the imaginary part of arguments of gamma functions is equal to zero, the real part should not be equal to zero or negative integers. We automatically have the second condition with the standard contour, but in general, this is a highly non-trivial task. In addition, this function can be chosen to improve the general behavior of the integrand by removing oscillations and making it smoother, see section~\ref{sec:thimbles} and~\cite{Gluza:2016fwh}. 

The simplest ansatz for the function $f_i$ which fulfills the second condition is a linear function $f_i(t_1,...,t_n) = \theta t_i$ with the same coefficient $\theta$ for all integration variables and we have
\begin{equation}
z_i = z_{i0} + (i+\theta)t_i.    
\end{equation}
Formally we rotate all integration contours at the same angle. This approach is described in~\cite{Freitas:2010nx} and works well for certain integrals but it is not general. The main problem is that we have only one parameter to fix asymptotic behavior in the whole integration domain. Let's look at Eq.~(\ref{eq:MZ2overs}) and use a new contour
\begin{equation}
 \left( \frac{M^2_Z}{-s} \right)^z \longrightarrow  e^{t (i + \theta)(\ln \frac{M_Z^2}{s} + i\pi)} \longrightarrow e^{t (\theta \ln \frac{M_Z^2}{s} - \pi)}, s>0.
\label{eq:MZ2overs_2} 
\end{equation}
Keeping in mind the new asymptotic for gamma functions, one can choose $\theta$ in such a way as to control the overall exponential damping everywhere, see Problem~\ref{prob:thetalimit}. The situation becomes more complicated when we go to the integrals with more scales. From
Eq.~(\ref{eq:MZ2overs_2}) we see that even in the Euclidean case, rotated contours give some exponential factor, and if we have more than one kinematic coefficient, it can be in general not possible to find $\theta$ which
will provide exponential damping in all directions.

In Fig.~\ref{fig:contours_fig} a parabolic deformation is shown in addition to the parallel and rotated contours. It is a special case of transformations described in section~\ref{sec:thimbles}. The function $\theta t^2$ doesn't fulfill the condition of not crossing poles of gamma functions, but this type of deformation can be beneficial for one-dimensional integrals. Numerical examples for the contours in Eq.~(\ref{parts}) are given in \wwwaux{1dimNum}.

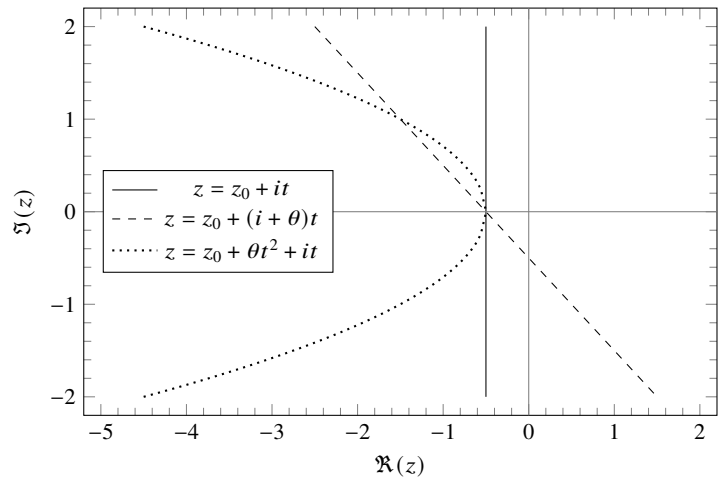
\begin{figure}[h!]
%\sidecaption
\centering
\begin{tikzpicture}
\begin{axis}[
    legend style={at={(0.03,0.35)},anchor=south west},
    width=10cm, height=7cm, 
    xlabel={$\Re(z)$},
    ylabel={$\Im(z)$},
    xmin=-5.2, xmax=2.2,
    ymin=-2.2, ymax=2.2,
    xtick={-5,-4,-3,-2,-1,0,1,2},
    ytick={-2,-1,0,1,2},
    minor tick num = 4
%    legend pos=north west,
%    ymajorgrids=true,
%    grid style=dashed,
]

\addplot[color=black] coordinates {(-0.5, -2) (-0.5, 2)};
\addlegendentry{$z=z_0 + i t$}

\addplot[color=black, dashed, domain=-2.5:1.5,samples=10]{-0.5-x};
\addlegendentry{$z=z_0 + (i + \theta) t$}

\addplot[color=black, dotted, domain=-4.5:1.5, line width=0.8pt, samples=100]{sqrt(-0.5-x)};
\addplot[color=black, dotted, domain=-4.5:1.5, line width=0.8pt, samples=100]{-sqrt(-0.5-x)};
\addlegendentry{$z=z_0 + \theta t^2 + i t$}

\addplot[color=gray] coordinates {(0, -2.2) (0, 2.2)};
\addplot[color=gray] coordinates {(-5.2, 0) (2.2, 0)};
    
\end{axis}
\end{tikzpicture}
\caption{Contours deformations discussed in~\cite{Dubovyk:2016ocz}.}
\label{fig:contours_fig}       % Give a unique label
\end{figure} 

Contours deformations can be easily implemented in \math{} as seen below.  

\begin{minted}[frame=single,breaklines,fontsize=\small]{mathematica}
In[1]:= V1 = MBint[-((Gamma[-z1]^3*Gamma[1+z1])/(-s)^z1)/(2*s*Gamma[-2*z1]), {{eps->0},{z1->-(1/2)}}];
In[2]:= {int,z0} = V1/.MBint[mb_,{{__},{Rule[z0_,val_],___}}]->{mb,val};
\end{minted}

\begin{minted}[frame=single,breaklines,fontsize=\small]{mathematica}
In[3]:= f1 = z0 + I t;  (* straight contour *)
In[4]:= J1 = D[f1, t];  (* Jacobian *)
In[5]:= intobj1 = J1/(2 Pi I) int /. z1 -> f1;
In[6]:= res1 = NIntegrate[intobj1 /. s -> 2, {t, -Infinity, Infinity}, Method -> DoubleExponential]
Out[6]:= 0.572124 - 0.0364984 I
\end{minted}

\newpage 

\begin{minted}[frame=single,breaklines,fontsize=\small]{mathematica}
In[7]:= f2 = z0 + theta t + I t; (* rotation *)
In[8]:= J2 = D[f2, t];           (* Jacobian *)
In[9]:= intobj2 = J2/(2 Pi I) int /. z1 -> f2;
In[10]:= res2 = NIntegrate[intobj2 /. s -> 2 /. theta -> -1, {t, -Infinity, Infinity}, Method -> DoubleExponential]
Out[10]:= 0.785398
\end{minted}
%\newpage
\begin{minted}[frame=single,breaklines,fontsize=\small]{mathematica}
In[11]:= f3 = z0 + theta t^2 + I t; (* parabolic deformation *)
In[12]:= J3 = D[f3, t];             (* Jacobian *)
In[13]:= intobj3 = J3/(2 Pi I) int /. z1 -> f3;
In[14]:= res3 = NIntegrate[intobj3 /. s -> 2 /. theta -> -1, {t, -Infinity, Infinity}, Method -> DoubleExponential]
Out[14]:= 0.785398
\end{minted}

The exact value of the integral ${\rm V1}$ at $s=2$ is $\pi/2 \simeq 0.78539816339744830962$.
Notice that the numerical evaluation based on a straight-line contour fails to reproduce the correct result.

Another method of \mb{} integrals evaluation is based on the shifts of integration contours.
It relies on  properties of \mb{} integrals described above, and its main feature is that by shifts of integration contours $\{s_i\}$ discussed at the end of section~\ref{sec:finite},
one can change the asymptotic behavior of the integrand and make the absolute value of the integral negligibly small.
The low accuracy of the result for the shifted integral plays no role in this case. This procedure is applied recursively to lower-dimensional integrals which appear after shifts. The algorithm is implemented in the package \texttt{MBnumerics.m}, for more
details see~\cite{Usovitsch:2018shx}. 
\subsection{Thresholds and no Need for Contour Deformations
\label{sec:2num}}

Here we would like to discuss an approach to the construction of \mb{} representations which allows for numerical integration without contour deformations and shifts.

For some Feynman integrals, the \mb{}-suite works without \texttt{MBnumerics} and any additional
tricks because even in the Minkowskian kinematic case, cancellation of the damping factor does not happen, and the integrand of the corresponding \mb{}
integral has always Euclidean-like asymptotic by default. As an example, Tab.~\ref{tab:NumTabI43} gives a result for the integral which corresponds to the Feynman diagram shown in Fig.~\ref{fig:I43}. The numerical evaluation has been obtained just with the \texttt{MB.m} package.

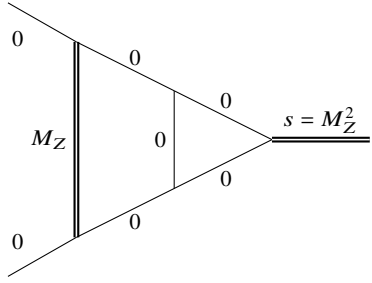
\begin{figure}[h!]
\centering
\begin{tikzpicture}[scale=1.3]
\begin{feynman}
 
\vertex at (0,-2) (i1);
\vertex at (0,2,1) (i2);
\vertex at (1,1) (i3);
\vertex at (1,0) (i4);
\vertex at (1,-1) (i5);
\vertex at (2,.5) (i6);
\vertex at (3,0) (f1);
\vertex at (4,0) (f2);
 
\draw    (0.3,1.4) -- (1,1) ; 
%\draw    (1,1) -- (2,-.5) ;
\draw  [thick, double]  (1,1) -- (1,-1) ;
%\draw    (1,0) -- (1.8,-.5) ;
\draw    (0.3,-1.4) -- (1,-1) ;
%\draw    (1,-1) -- (2,.5) ;
\draw    (2,-.5) -- (2,.5) ;
\draw    (2,.5) -- (1,1) ;
\draw   (2,.5) -- (3,0) ;
\draw    (1,-1) -- (3,0) ;
\draw [thick, double]  (3,0) -- (4,0) ;
 
\node[below] at (0.4,1.2) {{{\bf $0$}}};
\node[above] at (0.4,-1.2) {{{\bf $0$}}};
\node at (3.5,0.2) {{{\bf $s=M_Z^2$}}};
\node[left] at (1,0) {{$M_Z$}};
\node[left] at (2,0) {{0}};
%\node[left] at (1.3,-.5) {{$0$}};
%\node at (1.5,0.1) {{$6$}};  
\node at (1.6,0.84) {{$0$}};
\node at (2.53,.4) {{$0$}};
\node at (2.53,-.4) {{$0$}};
\node at (1.6,-0.84) {{$0$}};
\end{feynman}
\end{tikzpicture}
        \caption{\label{fig:I43}
The two-loop planar vertex diagram.}
\end{figure}

\begin{table}[!h]
\centering
\begin{tabular}{|l|ll|} \hline
               & $s = 1$                  &                              \\ \hline     
\texttt{AB}    & $-5.{\bf46318633201975}$ & $+3.{\bf83353434712221} i$   \\ \hline 
\mb            & $-5.{\bf463186332019}24$ & $+3.{\bf83353434712}128 i$   \\ \hline
               & $s = 2$                  &                              \\ \hline     
\texttt{AB}    & $-2.{\bf01402077672107}$ & $+2.{\bf06704179425209} i$   \\ \hline 
\mb            & $-2.{\bf014020776721}70$ & $+2.{\bf067041794252}42 i$   \\ \hline

\end{tabular}        
\caption{Numerical results for the diagram in Fig.~\ref{fig:I43} for two different values of $s$ and $M_Z^2=1$. 
\texttt{AB} - analytical solution~\cite{Aglietti:2003yc}. 
\mb{} - numerical integration of the corresponding representation is performed by the \texttt{MB.m} package. }
\label{tab:NumTabI43}
\end{table}

As shown in section~\ref{sec:1MBrepr}, Eq.~(\ref{eq:spanning2forest}), the $F$ polynomial for a given Feynman diagram can be written as a sum of two parts

\begin{equation}
\label{eq:spanning2forest_chap6}
 F  = F_0 + U \sum\limits_{i=1}^n x_i m_i^2.
\end{equation}

The first one, denoted as $F_0(x)$, corresponds to a diagram with all massless propagators. It depends on kinematic invariants. The second term $U(x)\sum_i m_i^2 x_i$ depends on masses of internal particles.
To minimize the dimensionality of the representation within the \ga{} approach, {\it{we do not expand the second term}} and we construct the representation as follows
\begin{multline}
\label{eq:thresholdtrick}
G(X) \sim \frac{U(x)^{N_{\nu}-d(L+1)/2}}
{\left( F_0(x) + U(x) {\color{gray}{\sum_i m_i^2 x_i}} \right)^{N_{\nu}-dL/2}} %\\
\sim
{\color{gray}{\prod_i  \left(m_i^2 x_i\right)^{z_i}}}
\frac{U(x)^{N_{\nu}-d(L+1)/2+\sum_i z_i}}
{F_0(x)^{N_{\nu}-dL/2+\sum_i z_i}}.
\end{multline}
This way, we effectively get a massless diagram plus as many additional integrations as many massive propagators we have. In practice, the number of additional integrations equals to the number of different masses in each chain of the diagram. Equal masses in one chain can be collected, giving a linear combination of Feynman parameters. Later such a combination can be simplified by \texttt{1BL} as discussed in section~\ref{sec:BLeff}.

This approach gives optimal dimensionality, but at a price: we lose information about physical and pseudo-thresholds. {\it{By a physical threshold, we mean a kinematic point where the $F$ polynomial starts to be negative. For the pseudo-threshold, one of the terms in $F$ becomes negative.}} For a general two-loop diagram with a single
kinematic invariant $s$, for example, a $Z b \bar b$ vertex,  we can rearrange terms in $F$ and write it schematically in the following  way  
\begin{eqnarray}
   F(x,s) &=& -s \sum\limits_{l,n,k\in \Omega_{F_0}} x_{l} x_{n} x_{k} +   
   \sum m_i^2 \sum\limits_{l,n,k\in \Omega_{m_i}} x_{l} x_{n} x_{k} \label{eq:thbefore}\\
   &=& -s \sum\limits_{l,n,k\in \Omega_{F_0} \backslash \Omega_{m_i}} x_{l} x_{n} x_{k} + 
   \sum \left(m^2_i -s\right) \sum\limits_{l,n,k\in \Omega_{F_0} \cap \Omega_{m_i}} x_{l} x_{n} x_{k} \nonumber \\
   &+& \sum m_i^2 \sum\limits_{l,n,k\in \Omega_{m_i} \backslash \Omega_{F_0}} x_{l} x_{n} x_{k}, \label{eq:thafter}
\end{eqnarray}
where in Eq.~(\ref{eq:thbefore}) the first term corresponds to $F_0$ in
Eq.~(\ref{eq:spanning2forest_chap6}) and the second is an expanded
$U \sum\limits_{i=1}^n x_i m_i^2$ term. In Eq.~(\ref{eq:thafter}) we separated and collected $x_l x_n x_k$ terms with a common  $\sum \left(m^2_i -s\right)$ dependency. There is a physical threshold at $s=0$ and pseudo-thresholds at points $s = \sum m_i^2$.

According to our observations, sometimes it is better to construct the \MB{} representation using the expanded version of the $F$ polynomial where all pseudo-thresholds are separated and collected explicitly. This leads to representations with higher dimensionality. However, in this case, we have a chance to get the exponential damping of the integrand in all kinematic regions.
More precisely, in this way of construction the coefficient $\beta(\vec \theta)$ in Eq.~(\ref{eq:assymp-spher}) fulfills a new condition
$\beta(\vec \theta) \geq 2 \pi$ and the overall damping factor
remains for Minkowskian kinematical points. In this case, we don't need to perform a contour deformation or other tricks. Integration can be done straightforwardly with the logarithmic mapping of Eq.~(\ref{lntransform}) and \texttt{MB.m} package. 

The integral in Fig.~(\ref{fig:I43}) has no intersection between $\Omega_{F_0}$ and the massive part $\Omega_{m_i}$, so explicit thresholds separation is unnecessary, and the integral can be easily integrated without additional tricks. For this example, see \wwwaux{TH1}. In the file, the representation was obtained with the help of the \la{} method. To collect all the peculiarities of the $F$ polynomial, in general, using the \ga{} method is mandatory, even for planar diagrams.

Now let's consider a more complicated example for the diagram shown in Fig.~\ref{fig:npSDMB} with results in Tab.\ref{tab:SDmink}. The diagram has one massive propagator. In addition to the two-dimensional representation in section~\ref{sec:2loopMBgeneral}, Eq.~(\ref{eq:V6l0m2dim}), we have one more integration, and the minimal dimensionality for this integral is three.  
\begin{minted}[frame=single,breaklines,fontsize=\small]{mathematica}
In[1]:= MBreprNP[{1}, {PR[k1, 0, n1] PR[k1 - k2, 0, n2] PR[k2, 0, n3] PR[k1 - k2 + p1, m, n4] PR[k2 + p2, 0, n5] PR[k1 + p1 + p2, 0, n6]}, {k1, k2}];
...
In[2]:= Fauto[0];
...
In[3]:= fupc = m^2 x[1] x[2] x[4] + m^2 x[1] x[3] x[4] + m^2 x[2] x[3] x[4] + m^2 x[1] x[4]^2 + m^2 x[3] x[4]^2 (* + m^2 x[1] x[4] x[5]-s x[1] x[4] x[5]*) + m^2 x[2] x[4] x[5] + m^2 x[4]^2 x[5] - s x[1] x[2] x[6] - s x[1] x[3] x[6] - s x[2] x[3] x[6] - s x[1] x[4] x[6] + m^2 x[2] x[4] x[6] + m^2 x[3] x[4] x[6] + m^2 x[4]^2 x[6] - s x[1] x[5] x[6] + m^2 x[4] x[5] x[6];
\end{minted}
 In the frame above, we show a partial output from  \texttt{AMBREv3} package for this integral. The program allows manual manipulation with the $F$ polynomial. The pseudo-threshold terms \verb|m^2 x[1] x[4] x[5] - s x[1] x[4] x[5]|
are commented.
We can keep the threshold terms in the form \verb|m^2 x[1] x[4] x[5] - s x[1] x[4] x[5]| $\rightarrow$ \verb|m2s x[1] x[4] x[5]|. In this case rescaling of Feynman parameters as in Eq.~(\ref{TransformRule}) reduces dimensionality to 7.
For the specific kinematic point  $s = m^2$ the term $m^2 - s = $ \verb|m2s| can be dropped out, reducing dimensionality by one. The final result after $\epsilon$-expansion and simplifications with {\tt BL}s is 4-dimensional, which is only one dimension higher than the optimal one. As in the case of the diagram in Fig.~\ref{fig:I43}, the \mb{} integrand now does not cause numerical problems, and the integration can be done with \texttt{MB.m} package. The result is given in Tab.~\ref{tab:SDmink}.  All the manipulations for the $F$ polynomial discussed here are given in \wwwaux{TH2}.

At this point, we should stress that at the pseudo-threshold point $s = m^2$, the integral is continuous, and we can drop out the related term in the $F$ polynomial. Calculations at physical thresholds, in our case $s=0$, should be done differently, by analytical expansion around the point $s \ll 1$. After that, we can take the limit $s = 0$ and compute left and right limits $s \rightarrow 0^{ \pm}$.  
For analytical expansion of \mb{} integrals, see section~\ref{sec:1exp}.

 \section{\mb{} Numerical Evaluation by Steepest Descent 
 \label{sec:thimbles}}
 
 The method of steepest descent has already been used in works by Riemann, who applied it to estimate the hypergeometric function, and by Cauchy, Debye, and Nekrasov. For historical records and references to their original works, see~\cite{Petrova:1997}. The method is often used for the asymptotic evaluation of integrals and is based on the idea that many functions in the complex
plane have a stationary point. There is one direction at that stationary point in which the function decreases rapidly, and there is an orthogonal direction in which the function increases rapidly.  In oher words, the stationary point is a saddle point; the function decreases in one direction while it increases in another direction. 
By deforming the integration path so that it goes through the stationary point in the direction in which the function decreases, one can evaluate the integral asymptotically. 

Here we will connect stationary points with \mb{} through the Lefschetz thimbles (\texttt{LT}),   
by searching for a stationary phase contours $\mcC$  as solutions of properly defined differential equations.

\texttt{LT} are applied in research of mathematics, crossing many issues
like behaviour of \texttt{LT} in presence of poles, singularities and branch cuts, behaviour at complex infinity, Stokes phenomenon, relation to relative homology of a punctured Riemann sphere. In physics, it can be applied to the 
analytical continuation of 3d Chern-Simons theory, QCD with chemical potential, resurgence theory, counting master integrals or the repulsive Hubbard model. Applying this method to the numerical evaluation of \MB{} integrals is at the exploratory stage and has been discussed in~\cite{Gluza:2016fwh,Sidorov:2017aea}. 
Here we will present the main idea for the lowest one-dimensional \MB{} integrals,  
in both Euclidean ($s<0$) and Minkowski $(s>0)$ regions. These cases have been explored in fine details in~\cite{Gluza:2016fwh}.
For higher dimensions, even solving two-dimensional \mb{} integrals is not fully understood and explored and can be a potential subject of a nice research work by the reader (see Problem~\ref{problem_mbde}). 

 \subsection{General Idea \label{subsLTgeneralidea}}

Let us write then a general \mb{} integrand $F(z)$, transformed into exponential form. For brevity, we suppress the dependence on $s$ 
and shall use $F(z)$ instead of $F(s,z)$  
 \bq \label{Is}
I(s)=\frac{1}{2\pi i}\int\limits_{\mcC_0}
%c_0-i\infty}^{c_0+i\infty}
dz\,F(z)=\frac{1}{2\pi i}\int\limits_{c_0-i\infty}^{c_0+i\infty}dz\,e^{-f(z)}.
\eq
$\mcC_0$ is a contour  defined by $\mathrm{\Re}(z)=c_0$ while $f(z)=-\ln F(z)$.

The core of the problem with integration over $\mcC_0$ is highly-oscillatory behaviour of the integrand $F(z)$. 
For such a class of integrands, standard methods of numerical integration are often not adequate.

One of possible ways to  get rid of 
 numerical problems with the \MB{} integrand $F(z)$ which is of  highly-oscillatory behaviour (see section~\ref{sec:simpleinvitation} and~\cite{Dubovyk:2016ocz}) is to integrate Eq.~(\ref{Is}) over a new contour 
$\mcC= 
\mcJ_1+\mcJ_2+\mcA$. 

A typical example is sketched in Fig.~\ref{fig:LTcontours} where $\mcC$ is a sum of three contours $\mcJ_1$, $\mcJ_2$ and $\mcA$  along which 
the behaviour of $f$ is under control. 

\begin{figure}[h!]
\sidecaption
\includegraphics[scale=.45]{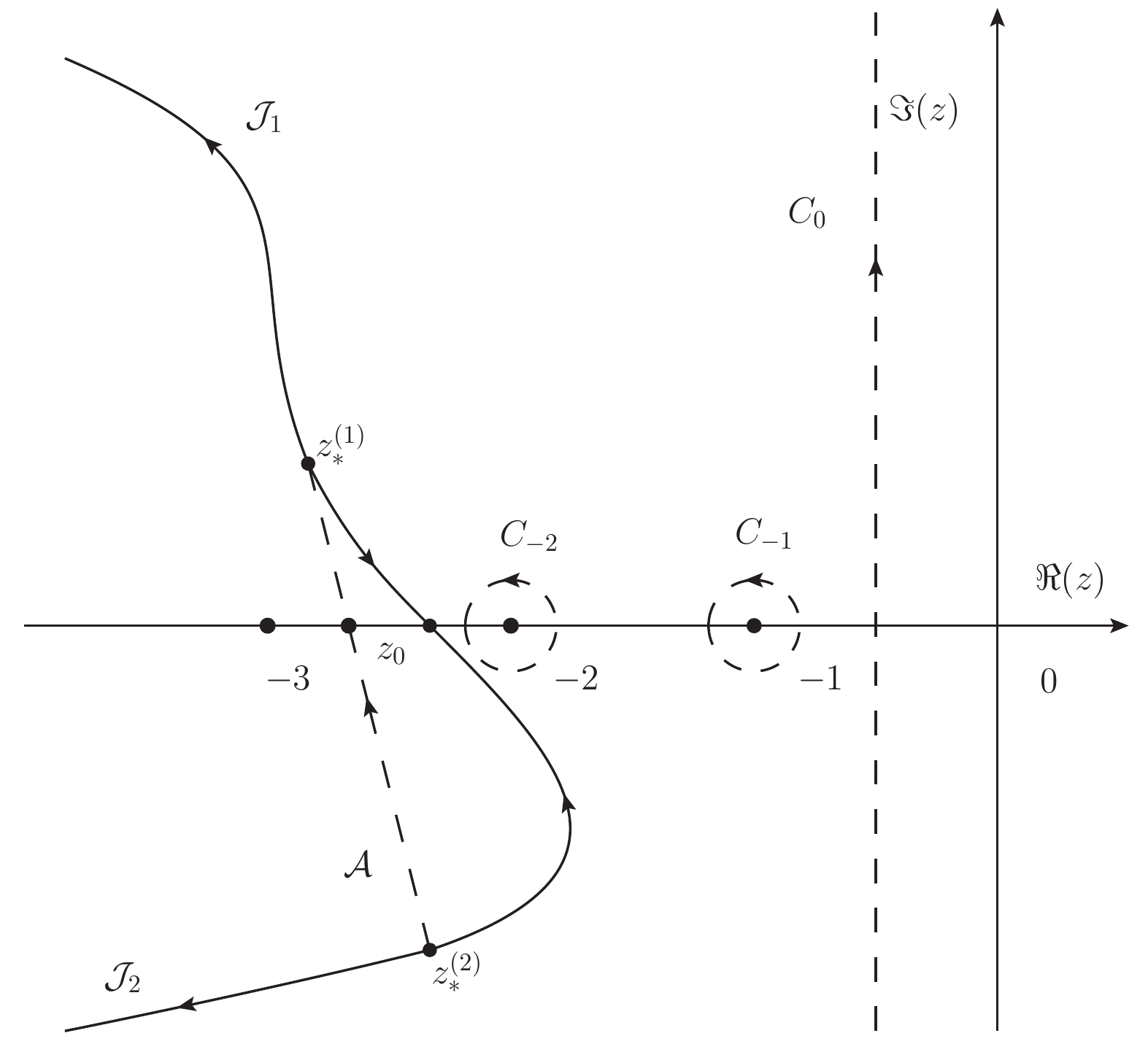}
\caption{A deformation of the integration contour 
$\mcC_0$ defined by $\mathrm{\Re}(z)=c_0$ to a contour $\mcC=\mcJ_1+\mcJ_2+\mcA$.
$\mcJ_{1,2}$ are two Lefschetz thimbles which start at saddle points $z_*^{(1,2)}$ and go towards infinity. 
The compact contour $\mcA$ (interval) connects the two saddle points $z_{*}^{(1)}$ and $z_*^{(2)}$. 
When there is an obstruction in deriving the parameterization of $\mcJ_{1,2}$ around some point, e.g. $z_0$, one can bypass that region 
using the contour $\mcA$. 
Note that here a deformation $\mcC_0\to \mcC$ requires taking into account integrals over two `small' contours, $\mcC_{-2}$ and $\mcC_{-1}$ around poles $z=-2$ and $z=-1$ which contribute to $\sum\res F$ in \eqref{IsC}. }
\label{fig:LTcontours}       % Give a unique label
\end{figure} 

Let us describe in details this decomposition. 

As $\mcJ_{1,2}$ we choose such stationary phase contours 
%$\mcJ_{1,2}$ are two contours 
which start at saddle points $z_*^{(1,2)}$ and go towards infinity without hitting other poles. Both contours are chosen such that  $\mathrm{\Im}(f)$ is constant along them and function $\mathrm{\Re}(f)$ is stricly increasing when one moves away from $z_*^{(1,2)}$. 
Varietes defined in such a way are called 
steepest descent contours~\cite{Bender:1999,Wong:2001} or Lefschetz thimbles~\cite{Pham:1983,Witten:2010cx,Witten:2010zr,Harlow:2011ny,Kanazawa:2014qma,Tanizaki:2014xba}. Usage of $\mcJ_{1,2}$ allows to control the behaviour of $f(z)$ when $z\to\infty$. Because $\mathrm{\Re}(f)$ is stricly increasing, the integrand $e^{-f}$ decreases rapidly at the ends of $\mcJ_{1,2}$. That transform the integral \eqref{Is} into a form which is more suitable for numerical treatment.  

In this method, $\mcA$ is an auxillary compact contour with an explicit analytical para\-meterization. It connects two saddle points on different Lefschetz thimbles $\mcJ_k$.\footnote{More generally, the contour $\mcA$ can connect any two points $z_1$ and $z_2$ which belongs to $\mcJ_1$ and $\mcJ_2$ respectively.  
They do not necessarily have to be saddle points of $f$.} $\mcA$ is choosen such that $F$ does not have singularities along $\mcA$. Then the integral $\int_{\mcA}dz\,F$ can be calculated with 
%arbitrary
high precision with the help of standard numerical methods. The simplest choice of $\mcA$ is an interval spanned between two saddle points $z_*^{(1,2)}$. In some cases of integrands $F$, it is convenient to introduce such an additional contour to improve convergence and accuracy of the numerical integration. 

After deformation $\mcC_0\to\mcC$ the original MB integral \eqref{Is} can be written as
\bq \label{IsC}
I(s)=\sum_{k=1,2}\frac{e^{-i\phi_k}}{2\pi i}\int\limits_{\mcJ_k}dz\,e^{-\mathrm{\Re}(f)}+\frac{1}{2\pi i}\int\limits_{\mcA}dz\,F+\sum\limits_{\mcC_0\to\mcC}\mathrm{Res}\,F,
\eq
where $\phi_k=\left.\mathrm{\Im}(f)\right|_{\mcJ_k}$. According to the Cauchy theorem, the sum over residues in Eq.~\eqref{IsC} is necessary when deformation from $\mcC_0$ to $\mcC$ encircles any poles of $F=e^{-f}$. 
The additional condition is  that during transition from $\mcC_0$ to $\mcC$ one surrounds only a finite set of poles of $F$. In other words, $\mcC$ is such that
the difference $\mcC-\mcC_0$ consists of only a finte sum of `small' closed contours encircling respective poles of $F$. 

The analytical formula describing $\mcJ_k$ can be found only in the simplest cases by explicitly solving the equation $\mathrm{\Im}(f)=\mathrm{const.}$ Instead, we use the fact that the  function $\mathrm{\Re}(f)$ defines a Morse 
flow~\cite{Nicolaescu:2011,Arnold:2012}.  
Such a flow is realized by a parametrization $t\mapsto z(t)$ of $\mcJ_k$ which  obeys the following differential equation~\cite{Witten:2010cx,Kanazawa:2014qma}:
%
% downward flow equation
\begin{equation}\label{LTmaster}
\frac{dz}{dt}=(\partial_zf)^*
\end{equation}
with an initial condition:
\begin{equation}
\lim\limits_{t\to-\infty}z(t)=z_*.
\end{equation}
With respect to various methods known in the literature~\cite{jgll2016,Anastasiou:2005cb, Czakon:2005rk, Freitas:2010nx} which shift/rotate contours or use approximate forms  thereof, this method relies on deriving numerically a parametrization $z(t)$ of $\mcJ_k$ 
%with the help 
as a solution of differential equation \eqref{LTmaster} and then, again numerically, integrating the function  $e^{-\mathrm{\Re}(f)}$ along contour  $\mcC$ composed of Lefschetz thimbles $\mcJ_k$ (and compact contour $\mcA$ if necessary).  
This purely numerical approach is complementary to the Pad\'e approximation~\cite{Gluza:2016fwh}. 

We choose $z_*$ as a starting point for differential equation \eqref{LTmaster}. 
It is worth mentioning that other initial points 
%than saddle points 
are also possible. However, from the practical point of view, it is easier to find a point $z_*$ which satisfies $\partial_zf(z_*)=0$ and starting from that point properly construct $\mcJ_k$,
%, along which $\mathrm{Re}\,f$ is incresing, 
than to find a point $z_{\mathrm{in}}$ lying on some stationary phase contour and check if $\mathrm{\Re}(f)$ is increasing along that contour.  

Let us note that at first sight the presented method seems to be not appropriate for finding a parametrization of a Lefschetz thimble around
%in the vicinity of 
a saddle point $z_*$ which is, at the same time, a zero of the integrand $F$. The reason is that in this region the differential equation \eqref{LTmaster} is not well-defined because $\partial_zf=-\partial_zF/F$ diverges at the saddle point $z_*$. However, one can shift $F$ by a holomorphic function e.g. $F\to\widetilde{F}=F+1$ without changing the value of the integral $I(s)=\int_C dz\,\widetilde{F}=\int_C dz\,F$. Now, $\widetilde{F}(z_*)=1$ and $\partial_z\widetilde{F}(z_*)=0$. Hence one can use $\widetilde{f}=-\ln\widetilde{F}$ in the differential equation \eqref{LTmaster} to derive parametrization of a Lefschetz thimble related to a saddle point $z_*$. 

\subsection{Implementation of the Method \label{subsLTimplmethod}}

%%%%%%%%%%%%%%%%%%%%%%%%%%%%
%%%%%%%%%%%%%%%%%%%%%%%%%%%%
Let us now describe how to numerically derive parametrizations $z_k(t)$ of Lefschetz thimbles $\mcJ_{k}$ using 
%solve in practice 
the differential equation \eqref{LTmaster}. Below we present all crucial steps of the method and 
enumerate the routines used from the  {\tt Mathematica} language. 
To solve \eqref{LTmaster} one has to specify three ingredients: 

(a) the saddle point $z_*$ from which Lefschetz thimble starts, 

(b) a line $l(t)=z_*+t e^{i\beta}$ tangent to the curve $z(t)$ at point $z_*$, and 

(c) a distortion 
\beq\label{dist}
z_*(\epsilon)=z_*+\epsilon e^{i\beta}
\eeq
from $z_*$ along $l(t)$ which is controlled by a small parameter $\epsilon\ll1$, the smaller $\epsilon$ is the more accurate solution of \eqref{LTmaster} one gets. 

In practice, the distorted point $z_*(\epsilon)$ plays a role of the initial point at $t=0$  for the {\tt NDSolve}.  
Such a small deviation   
is necessary to start off the abovementioned numerical routine.  At $z_*$ the derivative $dz/dt$ is zero so setting $z_*$ as the initial point, i.e. $z(0)=z_*$, would only generate static solution $z(t)\equiv z_*$.  

The phase $\beta$ in \eqref{dist} is related to the slope of $l(t)$. To find possible values of $\beta$, one uses the Laurent series of $f$ around $z_*$ and solve the condition for stationary phase $\mathrm{\Im}(f)=\phi$ keeping only leading terms.  
The easiest way to decide which value of $\beta$ corresponds to Lefschetz thimble(s) $\mcJ$ is to solve \eqref{LTmaster} for all possible distortions $z_*(\epsilon)$ in some small range of $t$ 
and then compare along which solution $\mathrm{\Re}(f)$ is increasing. In cases when $\partial^2_zf(z_*)\neq0$ one can use eigenvectors of the Hessian matrix of $\mathrm{\Re}(f)$ at $z_*$ to find directions along which $\mathrm{\Re}(f)$ is increasing.

To find saddle points we use {\tt FindRoot} which looks for a solution of an equation $\partial_zf=0$ in the vicinity of a chosen point. It is convenient to first start looking for $z_*$ within the orginal fundamental region $-1<\mathrm{\Re}(z)<0$. If this fails then one jumps to another region $-2<\mathrm{\Re}(z)<-1$, etc.
In the Minkowski region, the position of a saddle point is not restricted at all, while in the Euclidean region, a saddle point has to be real or it can be complex but then it must come in pair with its complex conjugate $(z_*)^*$. 

After finding a saddle point $z_*^{(1)}$, one construct all its possible distortions \eqref{dist} and use them in {\tt NDSolve}. If {\tt NDSolve} returns two Lefschetz thimbles $\mcJ_{1,2}$ starting from $z_*^{(1)}$ and such that the sum  $\mcJ_1+\mcJ_2$ is well-defined deformation of $\mcC_0$, see Fig.~\ref{fig:LTcontours}, then this step of the method is accomplished and one can go further. 

In cases in which {\tt NDSolve} is not able to find two Lefschetz thimbles starting from one saddle point, one can use the already mentioned trick which relies on using auxillary compact contour $\mcA$. It allows to bypass, in a fully-controlled way, regions in which {\tt NDSolve} does not work well or returns warnings/errors. 
In such a situation, one construct only one Lefschetz thimble $\mcJ_{1}$ which starts at $z_*^{(1)}$ and goes to $\infty$ and then one finds another saddle point $z_*^{(2)}$ and applies the same procedure as for $z_*^{(1)}$. If for $z_*^{(2)}$ it is possible to construct at least one Lefschetz thimble $\mcJ_{2}$ such that $z_*^{(1)}$ and $z_*^{(2)}$ can be connected via a compact contour $\mcA$ and the sum  $\mcJ_1+\mcJ_2+\mcA$ is well-defined deformation of $\mcC_0$, see Fig.~\ref{fig:LTcontours}, then this step of the method is accomplished and one can go to the next step.  
If one fails, e.g. because the obtained contours are closed or they hit poles, then one has to choose another saddle point of $f$ and repeat the whole procedure. 

The final output of {\tt NDSolve} is  
the {\tt InterpolatingFunction} which is further used to integrate numerically the integrand $F$ using {\tt NIntegrate}.

To monitor whether the solution of \eqref{LTmaster} parametrizes a contour which goes towards $\infty$  without hitting a pole, one can make use of the slope of the line which is tangent to that contour:
\beq
%\frac{dy}{dx}=
\frac{\mathrm{\Im}\frac{dz}{dt}}{\mathrm{\Re}\frac{dz}{dt}}=-\tan\left[\arg\left(\partial_zf\right)\right].
\eeq
If $\mathcal{J}$ asymptotically approaches a line $\mathrm{\Im}(f(z\to\infty))=\mathrm{const.}$, then along $\mcJ$, in the limit $t\to\infty$, one gets
\beq
\Delta\theta=\theta_{\pm\infty}+\arg(\partial_zf)|_{z=z(t)}\to n\pi,\quad n\in\mathbb{Z}
\eeq 
In other words, $\Delta\theta$ measures whether $\mcJ$ approaches a line of constant phase $\mathrm{\Im}(f(z\to\infty))=\mathrm{const.}$ when $t\to\infty$. 

 $\theta_{\pm\infty}$ for the stationary-phase contour are defined as follows,
\begin{equation}
\begin{aligned}
z &\simas^{t\rightarrow +\infty} \zinf + i e^{i \theta_{+\infty}} t\,,
\quad t>0\,,\\
z &\simas^{t\rightarrow -\infty} \zinf + i e^{-i\theta_{-\infty}} t\,,
\quad t<0\,,
\end{aligned}
\label{MinkowskiAsymptoticForm}
\end{equation}
their general solutions are discussed in~\cite{Gluza:2016fwh}. 

Let us now discuss numerical integration of $F$ over contour $\mcC$. Taking into account the change of the integration measure given by \eqref{LTmaster}, the integral \eqref{IsC} can be written as:
\beqa\label{Isum}
I(s)&=&\frac{e^{-i\phi_{1}}}{2\pi i}\int\limits_{-\infty}^{+\infty}dt\,\left. (\partial_z f)^*e^{-\mathrm{\Re}(f)}\right|_{z=z_1(t)}-\frac{e^{-i\phi_{2}}}{2\pi i}\int\limits_{-\infty}^{+\infty}dt\,\left.(\partial_z f)^*e^{-\mathrm{\Re}(f)}\right|_{z=z_2(t)}\nonumber\\
&&+\frac{1}{2\pi i}\int\limits_{0}^1 dt\,F\left(z_{\mcA}(t)\right)\frac{dz_{\mcA}(t)}{dt}+\sum\limits_{\mcC_0\to\mcC}\mathrm{Res}\,F,
\eeqa
where $\phi_{1,2}=\mathrm{\Im}(f)|_{\mcJ_{1,2}}$ take into account the fact that the value of $\mathrm{\Im}(f)$ can 
%jump across saddle point where two flows meet. 
be different on each Lefschetz thimble $\mcJ_k$. 
The minus sign in front of the second term in \eqref{Isum} corresponds to the opposite orientation of $\mcJ_2$  with respect to the orientation of $\mcC$, see Fig.~\ref{fig:LTcontours}.
$z_{k}=z_{k}(t)$ are parametrizations of $\mcJ_{k}$ derived from {\tt  NDSolve} as discussed above.
In both cases the flow starts from the saddle points $z_*^{(k)}=z_{k}(t=-\infty)$ but because the distorted point $z_*(\epsilon)=z(t=0)$ is close to $z_*=z(t=-\infty)$, due to the small parameter $\epsilon$, one can integrate over $t\in (0,+\infty)$ instead of $t\in (-\infty,+\infty)$.
In practice one integrates over $(0,t_{\mathrm{max}})$ where $t_{\mathrm{max}}$ is chosen such that all contributions from the integrand, up to declared precision, are taken into account. Varying $t_{\mathrm{max}}$ provides additional test whether the method works properly. When one makes $t_{\mathrm{max}}$ bigger and bigger then $I(s)$ should asymptotically approach a finite value. 
Finally, $z_{\mcA}(t)$ used in \eqref{Isum} is an explicit analytical parameterization of the contour $\mcA$. When $\mcA$ is an interval then its parameterization can be of the following form:
\beq
t\mapsto z_{\mcA}(t)=z_2+t(z_1-z_2).
\eeq

\begin{tips}{\texttt{MBDE} solutions in Euclidean and Minkowskian kinematics}
Let us consider the badly behaving integrand Eq.~(\ref{parts}), considered in sections~\ref{sec:simpleinvitation}~and~\ref{sec:moregensums}, see  Eqs.~(\ref{eq:V3l2m-again})~and~(\ref{eq:V3l2m-res})~\cite{Czakon:2005rk,Gluza:2016fwh}.

\begin{eqnarray}
F_1(z) &=& 
 ~  
{(-s)^{-z}}
{\frac{\Gamma^3(-z)\Gamma(1+z)} {\Gamma(-2z)}} \label{eq:LTex1}
\end{eqnarray}
at three different kinematic points  $s=-1/20$, $1+i \delta$, $5$. Functional behavior of the function is given in Figs.~\ref{sln20-f1-sm120}--\ref{sln20-f1-s5} with the aid
of a function designed to compress the vertical scale (left side figures),

\begin{equation}
\sln_{m} x \equiv \mathrm{sign}(x)\,\ln(1+ |x| e^m)\,.
\end{equation}

\begin{figure}[h!]
\begin{center}
\includegraphics[scale=0.5]{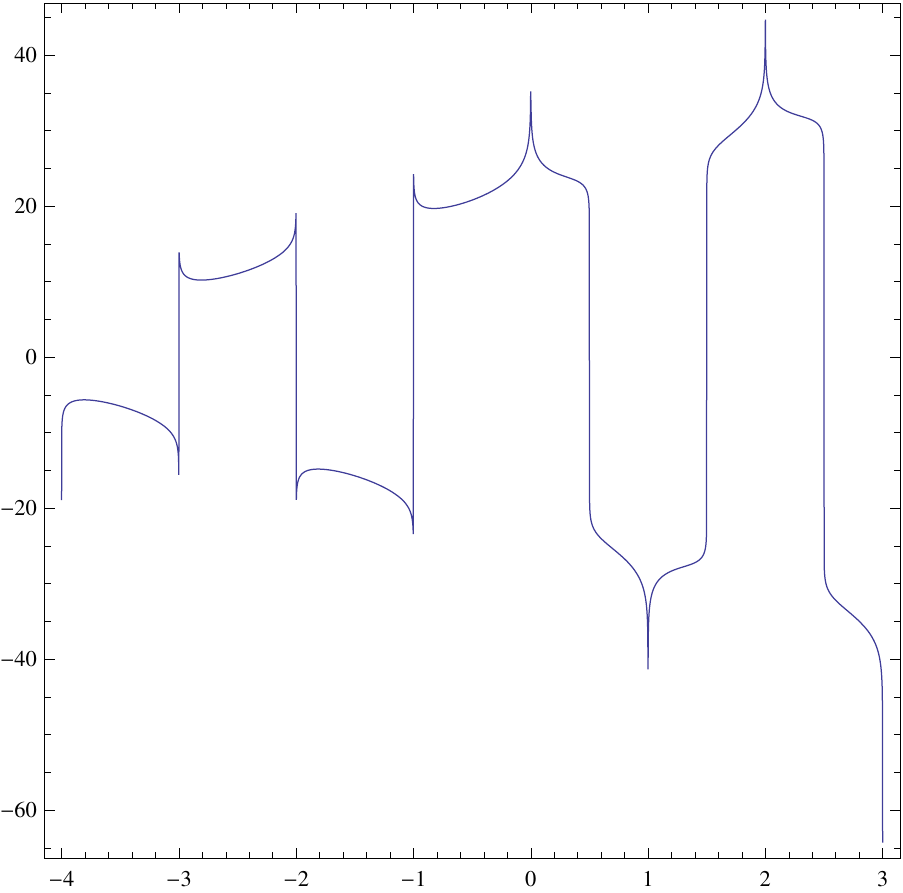}
\includegraphics[scale=0.5]{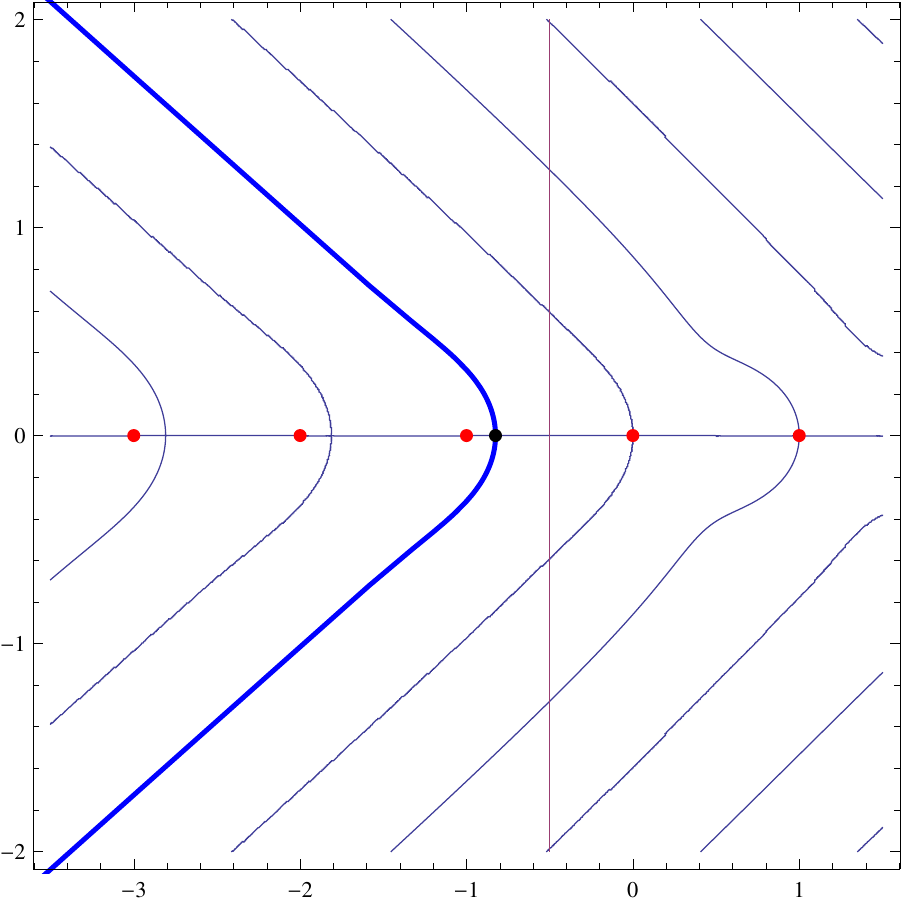}
\begin{picture}(0,0)
\put(-100,110){\small $\mcJ_1$}
\put(-100,20){\small $\mcJ_2$}
\put(-60,70){\small $z_*$}
\put(-52,90){\small $\mcC_0$}
\end{picture}
\caption{
A plot of $\mathrm{sln}_{20}\,F_1$ for $s=-1/20$ (left).
$\mathrm{\Im}(f_1)=0$ contours for $s=-1/20$. Thick blue line corresponds to the solution of \eqref{LTmaster} with $z_*=-0.8256$ (right).  
}\label{sln20-f1-sm120}
\end{center}
\end{figure}

\begin{figure}[h!]
\begin{center}
\includegraphics[scale=0.5]{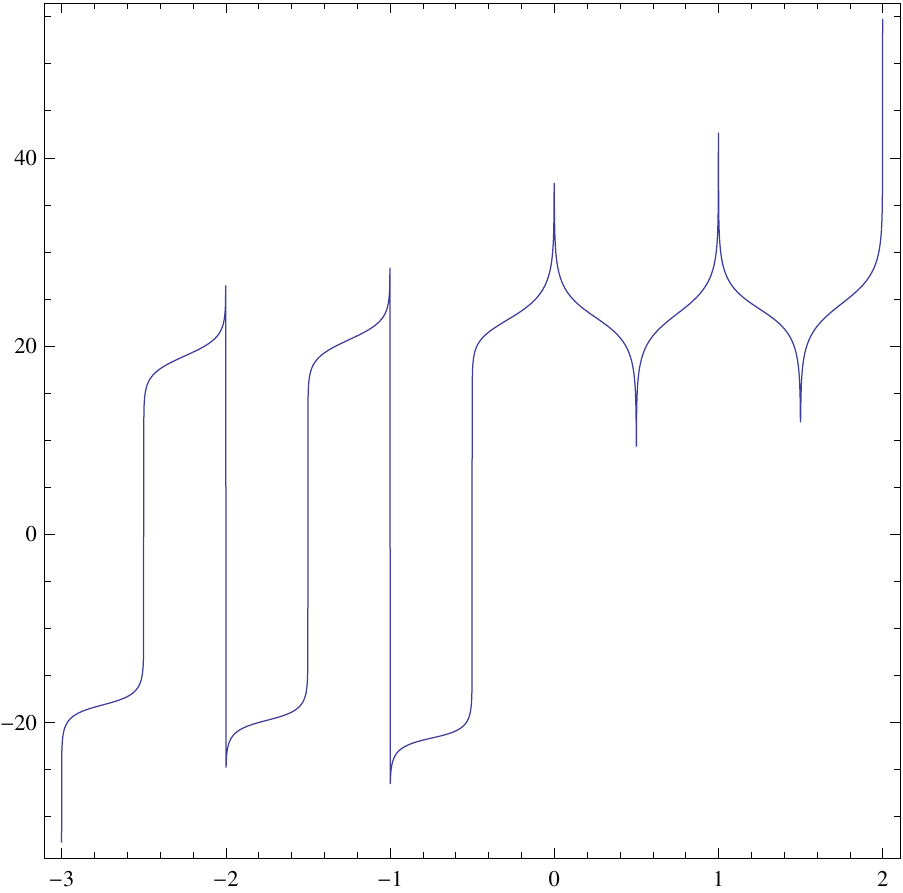}
\includegraphics[scale=0.5]{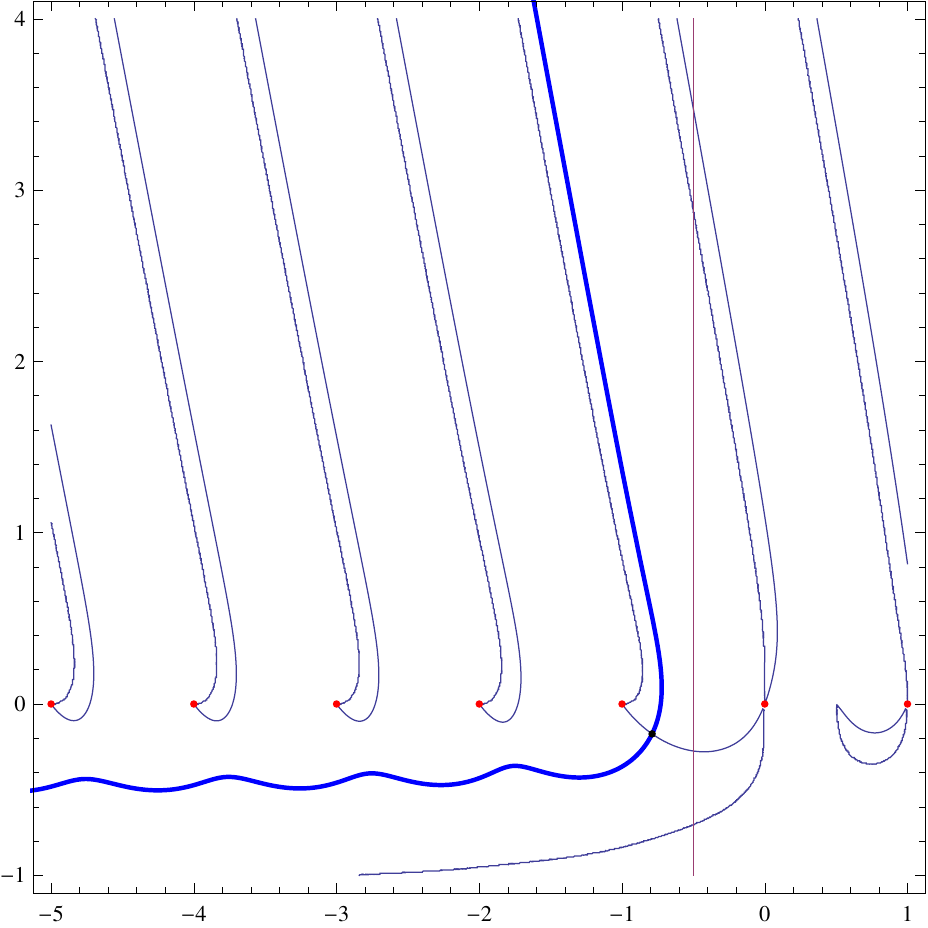}
\begin{picture}(0,0)
\put(-55,125){\small $\mcJ_1$}
\put(-120,12){\small $\mcJ_2$}
\put(-43,22){\small $z_*$}
\put(-47,85){\small $\mcC_0$}
\end{picture}
\caption{
 A plot of $\sln_{20}\,\mathrm{\Re}\,(F_1)$ for $s=1+i \delta$ (left).
$\mathrm{\Im}(f_1)=2.289$ contours for $s=1+i \delta$. Thick blue line corresponds to the solution of \eqref{LTmaster} with $z_*=-0.7893-0.1745i$ (right).  
}\label{sln20-f1-s1}
\end{center}
\end{figure}

\begin{figure}[h!]
\begin{center}
\includegraphics[scale=0.5]{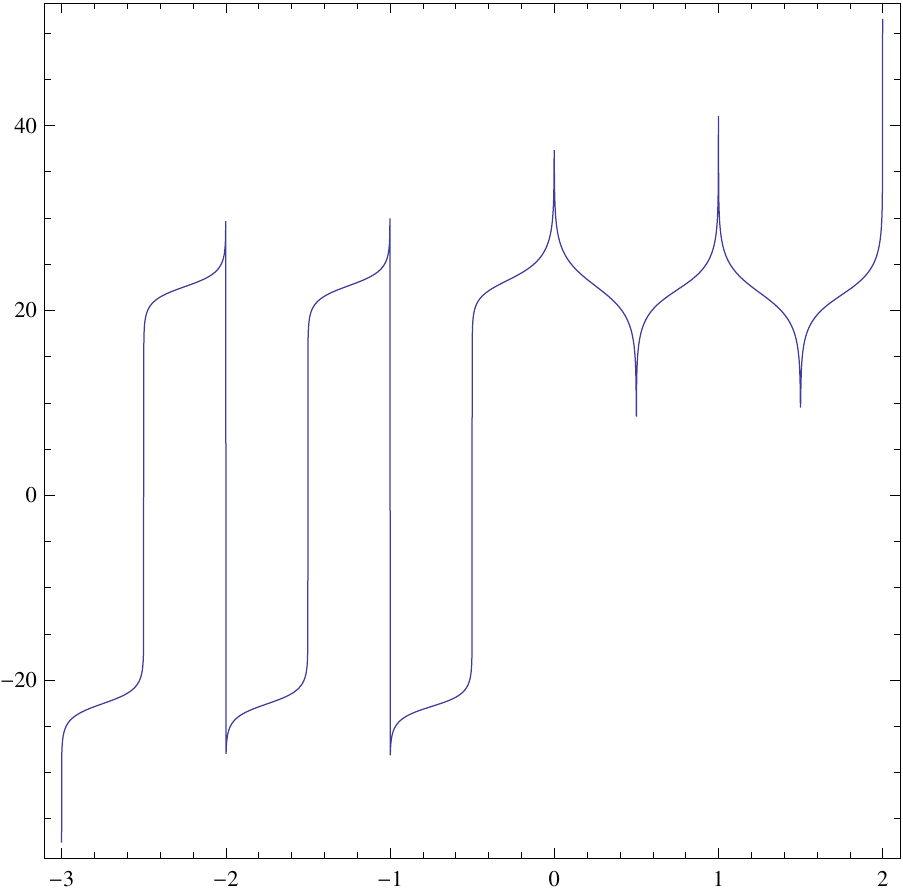}
\includegraphics[scale=0.5]{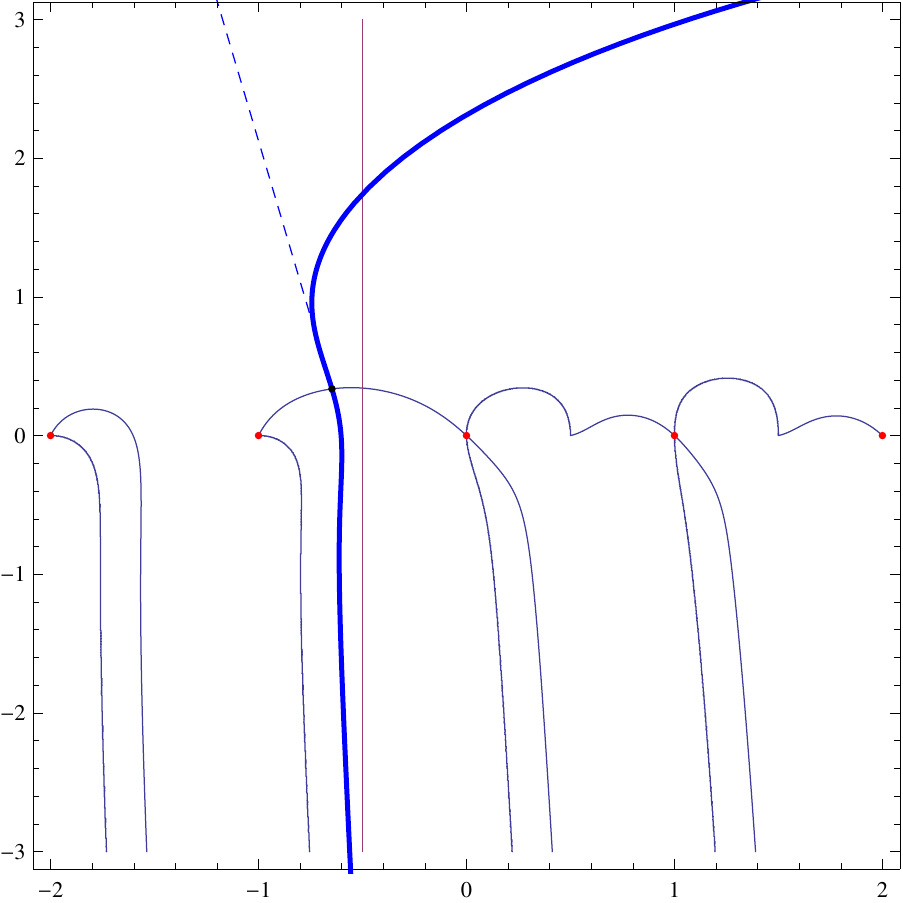}
\begin{picture}(0,0)
\put(-70,120){\small $\mcJ_1$}
\put(-94,20){\small $\mcJ_2$}
\put(-93,76){\small $z_*$}
\put(-90,120){\small $\mcC_0$}
\end{picture}
\caption{
A plot of $\mathrm{sln}_{20}\,\mathrm{\Re}(F_1)$ for $s=5$ (left).
$\mathrm{\Im}(f_1)=-1.895$ contours for $s=5$. Thick blue line corresponds to the solution of \eqref{LTmaster} with $z_*=-0.6470+0.3366i$ (right).  
}\label{sln20-f1-s5}
\end{center}
\end{figure}

The file 
\texttt{MBDE\_Springer.nb} with a complete solution can be found in the book auxiliary files repository~\cite{www_aux_springer}.
Based on  a discussion in section~\ref{subsLTgeneralidea}, the easy to understand  major steps for the integral evaluation of the function in Eq.~(\ref{eq:LTex1}) are easy to follow there.
 
\end{tips}

\subsection{Tricks and pitfalls for `Vanishing Derivatives'}

Let us consider the integrand
$F_4(z)=F_1(z)\psi^4(1-z)$,  $F_4^{(0-3)}(z_*)=0$ for $z_*\in(-1,0)$, $s=-2$. 
In some cases, for example $F_4$ in the Minkowski region, it is hard to find a saddle point $z_*$ through which a contour goes to infinity, see Fig.~\ref{sln20-f4-s2}. 
\begin{figure}[h!]
\begin{center}
\includegraphics[scale=0.5]{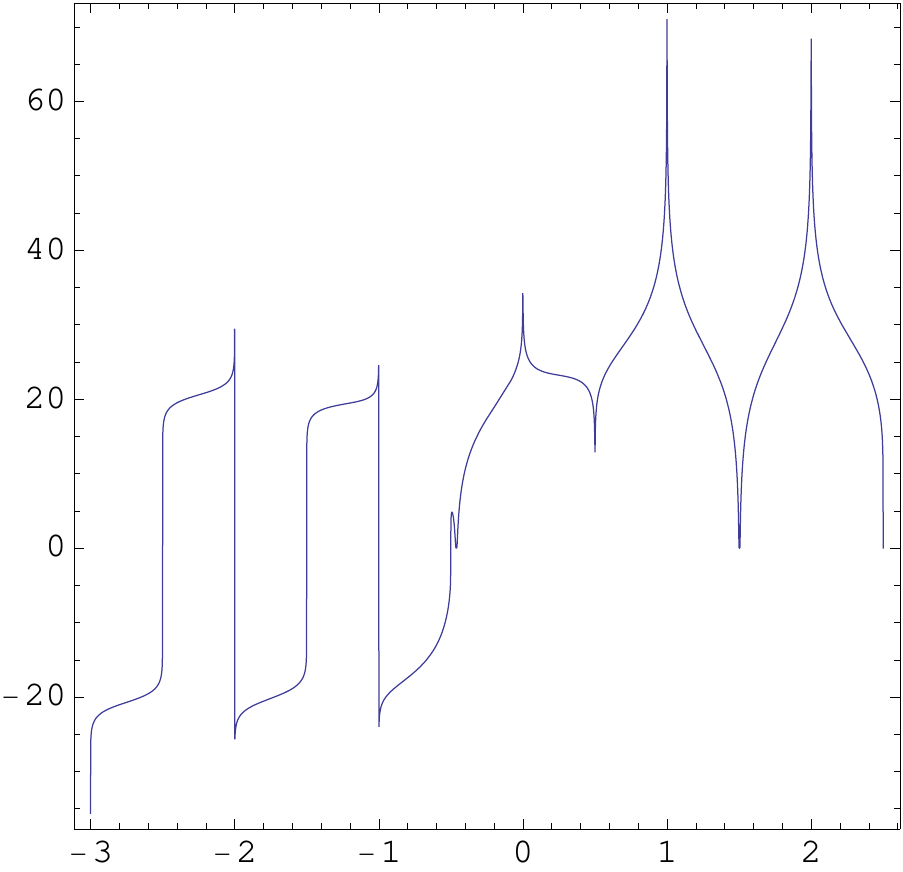}
\includegraphics[scale=0.5]{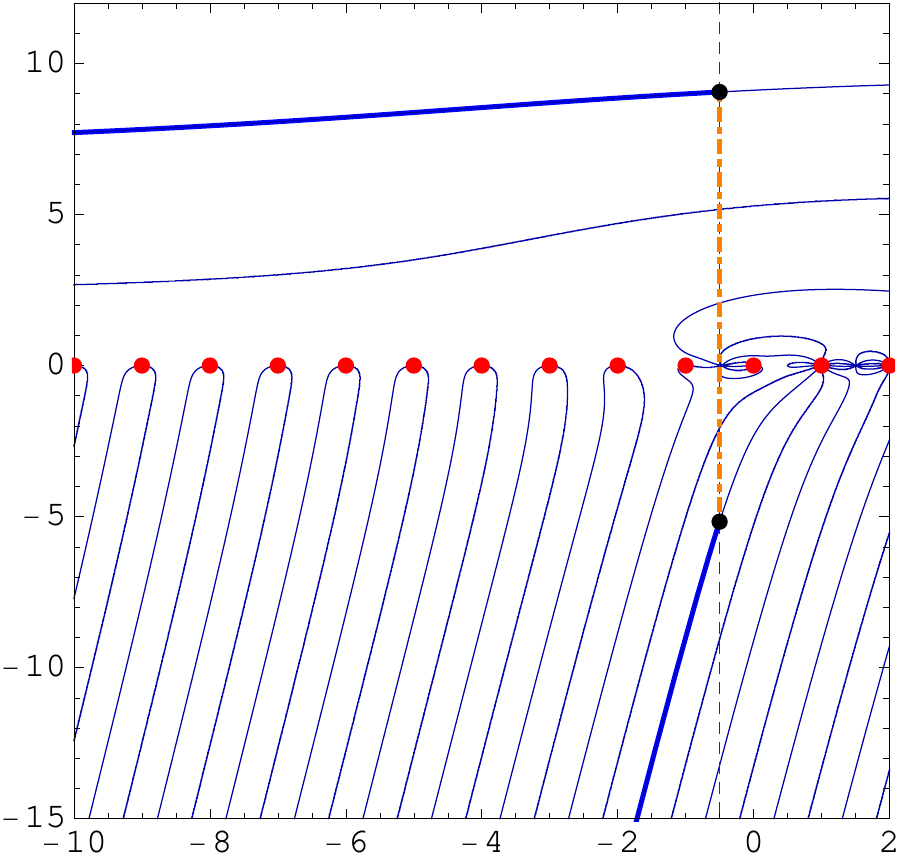}
\begin{picture}(0,0)
\put(-100,97){\small $\mcJ_1$}
\put(-50,20){\small $\mcJ_2$}
\put(-36,105){\small $z_1$}
\put(-38,50){\small $z_2$}
\put(-25,115){\small $\mcC_0$}
\put(-38,85){\small $\mcA$}
\end{picture}\caption{
A plot of $\mathrm{sln}_{20}\,F_4$ for $s=2$ (left).
$\mathrm{\Im}(f_4)=-2.793$ contours for $s=2$. Thick blue lines correspond to contours $\mcJ_{1,2}$ which go to $\infty$. They intersect $\mathrm{\Re}(z)=-1/2$ line at $z_1=-1/2+9.057i$ and $z_2=-1/2-5.169i$  respectively. The orginal integration contour $\mcC_0$ is deformed to $\mathcal{J}_1+\mcA+\mathcal{J}_2$, where $\mathcal{J}_1$ is an upper part of the contour $\mcC$, starting at $z_1$, $\mathcal{J}_2$ is a lower part of the contour $\mcC$, starting at $z_2$, and $\mcA=\{z\in\mathbb{C}:z=z_2+t(z_1-z_2), t\in\left<0,1\right>\}$ is an interval connecting $z_1$ and $z_2$, orange, dot-dashed line (right).}\label{sln20-f4-s2}
\end{center}
\end{figure}
Then one can use the following trick. One looks for a new contour $\mcC$ which can be decomposed into three pieces $\mcJ_1+\mcA+\mcJ_2$, where $\mcJ_{1,2}$ are 
two $\mathrm{\Im}\,(F_4)=0$ contours ending at $\infty$ while $\mcA=\{z\in\mathbb{C
}:z=z_2+t(z_1-z_2), t\in\left<0,1\right>\}$ is an interval connecting $z_1$ and 
$z_2$. These two points lie on $\mcJ_1$ and $\mcJ_2$, respectively, and are such
 that $F_4^{(k)}(z_{1,2})\neq0$. The additional requirement is that $\mcC_0$ can
 be deformed into $\mcC$ by encircling only finite number of poles of $F_4$. For
 example, $z_{1,2}$ can be choosen as intersection points of two $\mathrm{\Im}\,(F
_4)=0$ contours with the line $\mathrm{\Re}(z)=-1/2$, see Fig.~\ref{sln20-f4-s2} (right).
From the practical point of view, there is a way to control if numerically derived $\mcJ_{1,2}$ really go to $\infty$ and do not hit a zero or pole. Namely, one can use \eqref{LTmaster} to monitor the asymptotic behaviour of $\mathcal{J}_{1,2}$:
\begin{equation}
\frac{dy}{dx}=\frac{\mathrm{\Im}\frac{dz}{dt}}{\mathrm{\Re}\frac{dz}{dt}}=-\tan\left[\arg\left(\partial_zf_4\right)\right].
\end{equation}
So, for $t\to\infty$ the value of $dy/dx$ along $\mcJ_{\pm}$ should asymptotically approach the slope of the lines defined by $\mathrm{\Im}(f_4(z\to \infty))=0$.

%\end{tips}

\subsection{Beyond One-dimensional Cases}

There is so far not much developement within the LT method beyond the one-dimensional case. Eq.~(\ref{LTmaster}) can be generalized to
\begin{equation}\label{flow}
 \frac{d z_{i}}{d t}=-\left(\frac{\partial f}{\partial z_{i}}\right)^{*}, \quad i=1,...,n.
 \end{equation}
 This allows us to define Lefschetz thimbles associated with a given critical point as
%jg
 \begin{equation}
 \mathcal{J}_{\alpha}=\lbrace z\in \mathbb{C}^{n}: \lim_{t\rightarrow \infty} z(t)=z_{\alpha} \rbrace,
 \end{equation}
 where $z(t)$ is the solution of the flow equation \eqref{flow}.
The Lefschetz thimble is a collection of curves flowing in the direction of the steepest descent that asymptotically reach the corresponding critical point. Note that the real part of the integrand for critical points are at the same time saddle points. There exists an alternative direction in which the real part changes the most, i.e., the direction of the steepest ascent. With this direction are associated geometrical structures similar to the Lefschetz thimbles, called anti-thimbles, which satisfy the system of upward flow differential equations
%jg
 \begin{equation}
 \mathcal{K}_{\alpha}= \lbrace z \in \mathbb{C}^{n} : \lim_{t\rightarrow \infty} \bar{z}(t)=z_{\alpha} \rbrace,
 \end{equation}
 where $\bar{z}(t)$ are curves that satisfy the following system of upward flow equations
 \begin{equation}
 \frac{d z_{i}}{d t}=\left(\frac{\partial f}{\partial z_{i}}\right)^{*}, \quad i=1,...,n.
 \end{equation}
 The idea of thimbles as a surface of steepest descent appeared in~\cite{Pham:1983}. It has been shown there that within the homology theory, an initial contour of integration $\mathcal{C}$ can be changed into a finite sum of thimbles, each associated with one critical point
\begin{equation}
 \mathcal{C}=\sum_{\alpha}N_{\alpha}\mathcal{J}_{\alpha}.
 \label{split}
 \end{equation}
 The sum in the above equation goes over critical points and $N_{\alpha} \in \mathbb{Z}$. This way, we can change the integration over one surface into the sum of integrals over thimbles.
See W.~Flieger's contribution to~\cite{Blondel:2018mad} for more details.
 
\section{Numerical evaluation of phase space \MB{} integrals} 

The tools described above for the numerical evaluation of \MB{} integrals can also be very effective for obtaining numerical solutions for phase space integrals. The power of the \MB{} approach is nicely demonstrated in the following example. Let us consider an angular integral with three denominators,
\begin{equation}
I_{j,k,l} = \int d\Omega_{d-1}(q) \frac{1}{(p_1\cdot q)^j (p_2\cdot q)^k (p_3\cdot q)^l},
\end{equation}
where the fixed momenta $p_1$, $p_2$ and $p_3$ are massless and we are integrating over the angular variables of the (massless) momentum $q$. Such integrals appear e.g. when integrating over soft radiation beyond NLO accuracy~\cite{Somogyi:2008fc,Bolzoni:2009ye,Bolzoni:2010bt}. Let us attempt to evaluate the above integral numerically. One obvious choice is to use a parametrization in terms of angular variables as described in section~\ref{sec:4MBrepr}. So let us choose a Lorentz-frame such that
\begin{equation}
    \begin{split}
        p_1^\mu &= (1,\vec{p}_1) = (1,\vec{0}_{d-2},1),
        \\
        p_2^\mu &= (1,\vec{p}_2) = (1,\vec{0}_{d-3},
        \sin\chi_2^{(1)},
        \cos\chi_2^{(1)}),
        \\
        p_3^\mu &= (1,\vec{p}_3) = (1,\vec{0}_{d-4},
        \sin\chi_3^{(2)}\sin\chi_3^{(1)},
        \cos\chi_3^{(2)}\sin\chi_3^{(1)},
        \cos\chi_3^{(1)}).
    \end{split}
    \label{eq:pmu-def-again}  
\end{equation}
In this frame, the vector $q^{\mu}$ has the form
\begin{equation}
    \begin{split}
        q^\mu = (1,..\mbox{``angles''}..,
        \cos\vartheta_n\prod_{k=1}^{n-1}\sin\vartheta_k,
        \cos\vartheta_{n-1}\prod_{k=1}^{n-2}
        \sin\vartheta_k,\ldots,
        \cos\vartheta_2\sin\vartheta_1,
        \cos\vartheta_1),
    \end{split}
    \label{eq:qmu-def-again}
\end{equation}
where  $..\mbox{``angles''}..$ denotes those angular variables that can be integrated trivially (since the integrand does not depend on them). In this frame, our integral $I_{j,k,l}$ becomes
\begin{equation}
\begin{split}
I_{j,k,l} &= \int d\Omega_{d-4} 
	\int_{-1}^{1} d(\cos\vartheta_1)
%	\int_{-1}^{1} 
	d(\cos\vartheta_2)
%	\int_{-1}^{1} 
	d(\cos\vartheta_3)(\sin\vartheta_1)^{-2\eps}(\sin\vartheta_2)^{-1-2\eps}
\\&\times
	(\sin\vartheta_3)^{-2-2\eps}
	(1-\cos\vartheta_1)^{-j}
	(1-\cos\chi_2^{(1)}\cos\vartheta_1
		-\sin\chi_2^{(1)}\sin\vartheta_1\cos\vartheta_2)^{-k}
\\&\times
	(1-\cos\chi_3^{(1)}\cos\vartheta_1
		-\cos\chi_3^{(2)}\sin\chi_3^{(1)}\sin\vartheta_1\cos\vartheta_2
\\&\quad	
		-\sin\chi_3^{(2)}\sin\chi_3^{(1)}\sin\vartheta_1\sin\vartheta_2\cos\vartheta_3)^{-l}.
\end{split}
\label{eq:Ijkl}
\end{equation}
However, the straightforward evaluation of this integral runs into the following difficulty. If at least two of the the exponents $j$, $k$ and $l$ are positive integers (at $\eps=0$), already the integrand has a line singularity inside the integration region. Indeed, if say $j=1$ and $k=1$, the integrand is singular not only at $\cos\vartheta_1 = 1$, but also along the line $\cos\vartheta_1 = \cos\chi_2^{(1)}$ whenever $\cos\vartheta_2 = 1$. Obviously these singularities lead to poles in $\eps$ which must be made explicit before any actual numerical computation can be attempted. However, the presence of the line singularity means that the usual procedure of making these poles explicit, \emph{sector decomposition} (see e.g.~\cite{Heinrich:2008si} and references therein) is not directly applicable to Eq.~(\ref{eq:Ijkl}), as it relies on the assumption that the only singularities are on the integration boundaries. If all three of the exponents are positive integers, the last factor in Eq.~(\ref{eq:Ijkl}) develops singularities inside the integration region as well. These divergences must then be treated on a case-by-case basis before numerical integration can proceed.
\begin{tips}{Removing Line Singularities by Partial Fractioning}
In certain cases the partial fraction decomposition of the integrand can be applied to eliminate line singularities. To see this in its simplest form, consider the case $j=1$, $k=1$ and $l=0$. Clearly we are in the situation described above, where in addition to the singularity on the integration boundary at $\cos\vartheta_1 = 1$, the line singularity at $\cos\vartheta_1 = \cos\chi_2^{(1)}$, $\cos\vartheta_2 = 1$ is also present. However, we can remove this singularity as follows. Let us perform partial fractioning of the denominator,
\begin{equation}
\frac{1}{(p_1\cdot q) (p_2\cdot q)} =
\left[\frac{1}{(p_1\cdot q)} + \frac{1}{(p_2\cdot q)}\right]\frac{1}{(p_1+p_2)\cdot q}.
\end{equation}
Now, it is easy to check that the denominator $(p_1+p_2)\cdot q$ is no longer singular. Hence, all singularities are now associated with the two separate denominators in the bracket. The first term is only divergent at $\cos\vartheta_1 = 1$ in the chosen frame and we can proceed to apply sector decomposition to resolve the $\eps$ pole. The second term has the line singularity \emph{in this particular frame}. However, we are free to use rotational invariance to compute this second integral in another Lorentz-frame, where the roles of $p_1$ and $p_2$ are interchanged! In this rotated frame, the line singularity would appear in the denominator $(p_1\cdot q)$, while $(p_2\cdot q)^{-1}$ is singular only at $\cos\vartheta'_1 = 1$, where $\cos\vartheta'$ is the variable in the rotated Lorentz-frame. Thus, we do not have to deal with the line singularity explicitly.

Another approach to dealing with the line singularity in this particular case would be to split the region of integration in $\cos\vartheta_1$ as 
\begin{equation}
\int_{-1}^{1} d(\cos\vartheta_1) = 
	\int_{-1}^{\cos\chi_2^{(1)}} d(\cos\vartheta_1)
	+ \int_{\cos\chi_2^{(1)}}^{1} d(\cos\vartheta_1).
\end{equation}
Obviously the two integrals only have singularities on the boundaries of the integration region in this case. Note however, that this approach requires the precise knowledge of the geometry of the singularities inside the integration region, which in general can be quite elaborate.
\end{tips}
\begin{svgraybox}
However, all of these complications can be sidestepped immediately by using the \MB{} representation of the angular integral, discussed in section~\ref{sec:4MBrepr}. 
\end{svgraybox}
Indeed, for general exponents $j$, $k$ and $l$ we find~\cite{Somogyi:2011ir}
\begin{equation}
\begin{split}
I_{j,k,l} &= 2^{2-j-k-l-2\eps} \pi^{1-\eps} \frac{1}{\Gamma(j)\Gamma(k)\Gamma(l)
	\Gamma(2-j-k-l-2\eps)}
	\int_{-i\infty}^{+i\infty} \frac{dz_1\,dz_2\,dz_3}{(2\pi i)^3}
\\ &\times
	\Gamma(-z_1) \Gamma(-z_2) \Gamma(-z_3)
%\\ &\times
	\Gamma(j+z_1+z_2) \Gamma(k+z_1+z_3) \Gamma(l+z_2+z_3)
\\ &\times
	\Gamma(1-j-k-l-\eps-z_1-z_2-z_3)
	v_{12}^{z_1}
	v_{13}^{z_2}
	v_{23}^{z_3},
\end{split}
\end{equation}
where
\begin{equation}
v_{12} = \frac{p_1\cdot p_2}{2},
\qquad
v_{13} = \frac{p_1\cdot p_3}{2}
\qquad\mbox{and}\qquad
v_{23} = \frac{p_2\cdot p_3}{2}.
\end{equation}
Thus, for any specific exponents $j$, $k$ and $l$ we can directly proceed with resolving the poles, performing the $\eps$-expansion and evaluating the resulting \MB{} integrals numerically. As an example, consider e.g. $j=1$, $k=1+\eps$ and $l=\eps$. This particular set of exponents appears when integrating over soft-gluon emission from the interference of tree-level and one-loop matrix elements, as happens in QCD computations at NNLO accuracy~\cite{Somogyi:2008fc,Bolzoni:2010bt}. Then we must evaluate the integral
\begin{equation}
\begin{split}
I_{1,1+\eps,\eps} &= 2^{-4\eps} \pi^{1-\eps} \frac{1}{\Gamma(1+\eps)\Gamma(\eps)
	\Gamma(-4\eps)}
	\int_{-i\infty}^{+i\infty} \frac{dz_1\,dz_2\,dz_3}{(2\pi i)^3}
	\Gamma(-z_1) 
\\ &\times
	\Gamma(-z_2) \Gamma(-z_3)
	\Gamma(1+z_1+z_2) \Gamma(1+\eps+z_1+z_3) \Gamma(\eps+z_2+z_3)
\\ &\times
	\Gamma(-1-3\eps-z_1-z_2-z_3)
	v_{12}^{z_1}
	v_{13}^{z_2}
	v_{23}^{z_3}.
\end{split}
\label{eq:I3-example}
\end{equation}
In this particular case, we must address one further subtlety: it is relatively straightforward to see that we cannot find straight line contours running parallel to the imaginary axis such that the real parts of the arguments of all gamma functions involving integration variables are positive, for any real value of $\eps$. Indeed, clearly on the one hand we must have $\Re(z_i) < 0$ from the first three gamma functions in the integrand. But then, looking at the last gamma function on the second line, $\Gamma(\eps+z_2+z_3)$, we deduce that $\eps > 0$. However, the sum of the arguments of the last three gamma function in the integrand, $\Gamma(1+\eps+z_1+z_3)$, $\Gamma(\eps+z_2+z_3)$ and $\Gamma(-1-3\eps-z_1-z_2-z_3)$ is $-\eps+z_3$. If the real parts of all arguments where positive, this would imply $\Re(-\eps+z_3)>0$ and given that $\Re(z_3) < 0$, we derive $\eps < 0$, in contradiction to our previous observation! One way to proceed is to introduce an auxiliary regulator, say we set $l=\eps$ to $l=\eps + \delta$. Now it becomes possible to find straight line contours and we may then proceed to analytically continue the integral first to $\delta \to 0$ and then to $\eps \to 0$. This strategy has been also discussed in chapter~\ref{chapter-singul}. It is imperative to check that this procedure does not introduce any poles or logarithms of the regulator $\delta$, so that the $\delta\to 0$ limit exists and is finite (in $\delta$). The algorithmic procedure for performing the analytic continuations was detailed in chapter~\ref{chapter-singul} and we can implement the computation in \math{} using the {\tt MB.m} package as follows (see also \wwwaux{RealPSNum})
\begin{minted}[frame=single,breaklines,fontsize=\small]{mathematica}
In[1]:= Ijkl = 2^(2-j-k-l-2ep) Pi^(1-ep) * 1/(Gamma[j] Gamma[k] Gamma[l] Gamma[2-j-k-l-2ep]) Gamma[-z1] Gamma[-z2] Gamma[-z3] Gamma[j+z1+z2] Gamma[k+z1+z3] Gamma[l+z2+z3] Gamma[1-j-k-l-ep-z1-z2-z3] v12^z1 v13^z2 v23^z3 /. {j->1, k->1+ep, l->ep+delta};
In[2]:= contour = MBoptimizedRules[Ijkl, delta->0, {}, {delta}];
In[3]:= Ijkl = MBcontinue[Ijkl, delta->0, contour];
In[4]:= Ijkl = MBexpand[Ijkl, 1, {delta, 0, 0}];
In[5]:= FreeQ[Ijkl /. MBint[int_, rule_]->int, delta]
Out[5]:= True
In[6]:= Ijkl = Ijlk /. {MBint[int_,rule_] :> MBint[int, {Select[rule[[2]], !FreeQ[#,ep]&], Select[rule[[2]], FreeQ[#,ep]&]}];
In[7]:= Ijkl = Ijkl/. {MBint[int_, rule_]:>MBcontinue[int, ep->0, rule];
\end{minted}
The fifth input checks that indeed the $\delta\to 0$ limit is free of $\delta$ (and hence does not involve poles of logarithms of $\delta$), while the sixth input brings the rule in {\tt MBint} to a form where the analytic continuation can then performed in $\eps$ in the next step. (For further details, consult the documentation of the {\tt MB.m} package~\cite{Czakon:2005rk}.) 

\begin{figure}
\begin{center}
\includegraphics[width=0.5\textwidth]{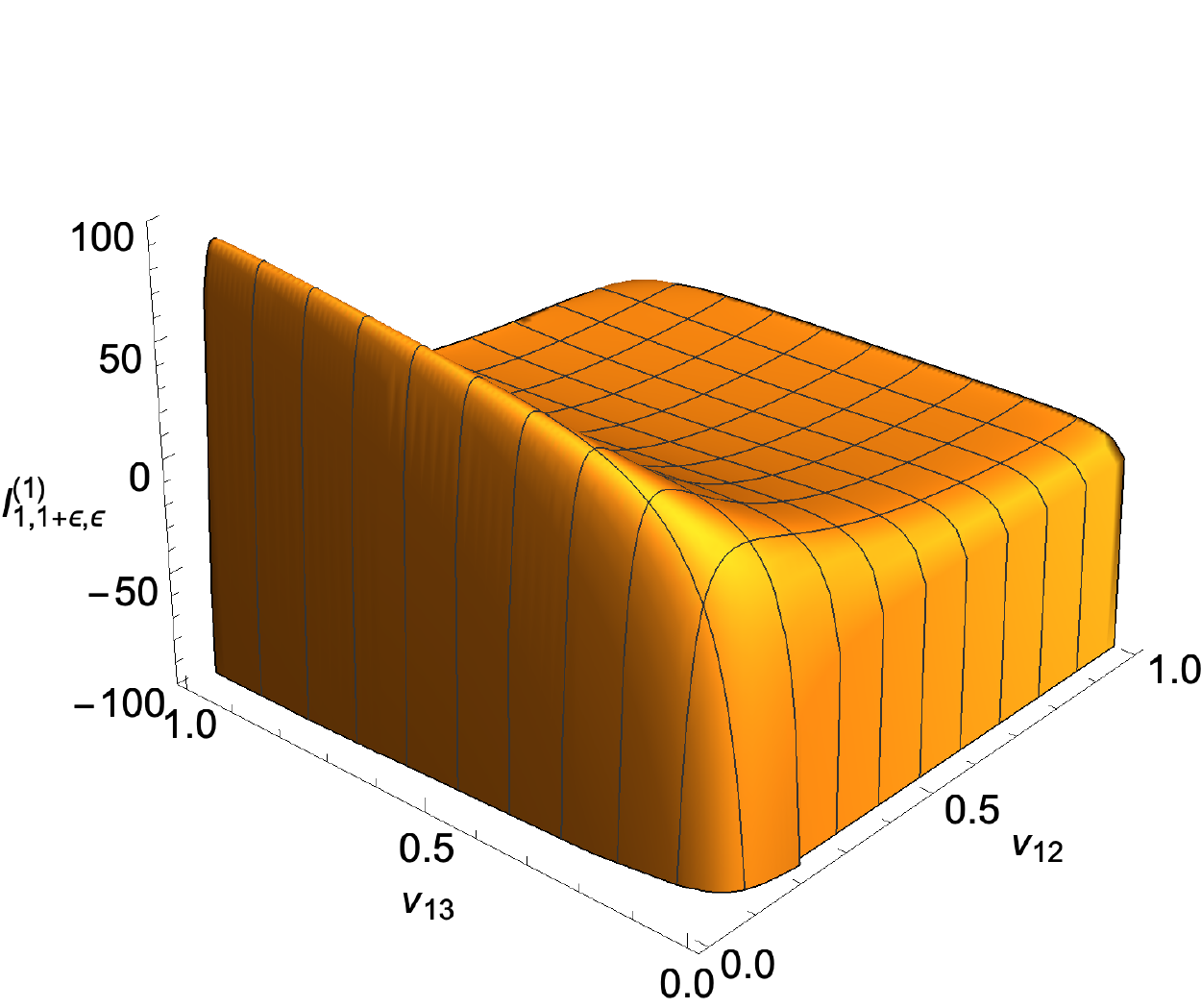}
\end{center}
\caption{\label{fig:I3-expample} Numerical result for the third expansion coefficient  $I^{(1)}_{1,1+\eps,\eps}$ defined in Eqs.~(\ref{eq:I3-example})~and~(\ref{eq:I3-example-exp}) for the co-planar case $\cos\chi_3^{(2)}=-1$. In this case, we have simply $v_{12} = (1-\cos\chi_{2}^{(1)})/2$ and $v_{13} = (1-\cos\chi_{3}^{(1)})/2$.}
\end{figure}

Now we are ready to expand in $\eps$ to the desired order and obtain a numerical solution for the expansion coefficients. In fact, in this example, the expansion starts at $1/\epsilon$:
\begin{equation}
I_{1,1+\eps,\eps} = 
\frac{1}{\eps} I^{(-1)}_{1,1+\eps,\eps}
+
I^{(0)}_{1,1+\eps,\eps}
+
\eps I^{(1)}_{1,1+\eps,\eps} 
+
{\mathcal O}(\eps^2)
\label{eq:I3-example-exp}
\end{equation}
and it turns out that the coefficient of $\frac{1}{\ep}$ and the finite part are given by zero dimensional \MB{} integrals, while the coefficient of $\ep$ involves only one-dimensional \MB{} integrals. However, three-dimensional \MB{} integrals appear for all higher-order expansion coefficients.

In terms of the numerical evaluation, we note that for real angles, all variables $v_{ij}$ are positive and so in this sense, we are dealing with `Euclidean' kinematics. Thus, we may simply use straight line integration contours. In Fig.~\ref{fig:I3-expample} we show the order $\epsilon$ expansion coefficient $I^{(1)}_{1,1+\eps,\eps}$, while the complete numerical solution for Eq.~(\ref{eq:I3-example-exp}) can be found in \wwwaux{RealPSNum}. In the numerical calculation, we have specialized to the case where the three-vectors $\vec{p}_1$, $\vec{p}_2$ and $\vec{p}_3$ are co-planar.
%, i.e., $\cos\chi_3^{(2)} = -1$. 
Then the result depends only on the angles $\chi_2^{(1)}$ and $\chi_3^{(1)}$ and we can represent the functions as surfaces in three dimensions.

\section*{Problems}
\addcontentsline{toc}{section}{Problems} 
\begin{problem}
\label{problem_num1d}
Calculate the asymptotics for 1-dim integrals Eq.~(\ref{1dimNumEx1}) and 
Eq.~(\ref{parts}), check that they do not depend on the position of the contour $z_0$.
\\
\hint{} See derivation above Eq.~(\ref{Ex1limit}).
\end{problem}

\begin{problem}
\label{problem_num1d}
For Eq.~(\ref{3dimEx12}) check the direction of the damping factor cancellation and calculate the value of the parameter $\alpha$ in the fractional part $\frac{1}{|t|^{\alpha}}$ of the asymptotic for a contour position before and after the shift $z_2 \rightarrow z_2 +2.$ \\
\hint{} For checking the direction keep in mind that the $i t$ term in the gamma function leads to $e^{- \pi |t|/2}$ in the limit. For calculating $\alpha$,  put all irrelevant $t_i$ equal to 0.
\end{problem}

\begin{problem}
\label{prob:thetalimit}
Find optimal $\theta$ parameter values for the rotated and parabolic contours illustrated in Fig.~\ref{fig:contours_fig} for the integral in Eq.~(\ref{parts}). \\
\hint{} The relation $\ln z - \ln(-z) = i \pi \, \mbox{sign}(\Im (z))$ doesn't depend on parametrization. Compare asymptotics of gamma functions ratio with the kinematic part, $\theta$ should restore exponential damping in both directions $t \rightarrow \pm \infty$.
The result should have the form of an inequality.
\end{problem}

\begin{problem}
\label{prob:mbnumerics}
Install and run examples for numerical integration of \mb{} integrals using the \texttt{MBnumerics} package given in~\cite{ambrewww}. \\
\hint{} For installation, follow the Appendix and \texttt{HOWTO} discussed there. 
\end{problem}
%\begin{question}{Problems}
%\begin{enumerate}
%\item 
\begin{problem}
\label{problem_mbde} Concerning the \texttt{MBDE} approach described in section~\ref{sec:thimbles}, it is a very efficient method for  1-dim \mb{} integrals. However, already 2-dim cases are not fully understood, see for instance W.~Flieger's contribution to~\cite{Blondel:2018mad} and the PhD thesis by Z.~Peng~\cite{Peng:2012zpa}. 
In the case of \mbr{} integrals we have to face the following issues:
  
\begin{itemize}
  
\item since the gamma function has infinitely many critical points we have to consider how many of them are necessary to match the assumed numerical precision;
  
\item the parametrization of thimbles by solving flow equations for each relevant critical point. This stage requires also calculation of the Hessian matrix and the associated eigensystem;
  
\item the knowledge of singularities of the MB integrand is necessary to avoid them by thimbles; 
  
\item the determination of the inverse function for the cuts approach;
  
\item the summation of thimbles with suitable coefficients in \eqref{split} to reconstruct the whole integral. 
\end{itemize}
  
You are invited to make a reasearch in this direction!
\end{problem} 

\putbib[%
bibs/refs,%
bibs/MBmethods,%
bibs/2loops_LL16,%
bibs/Phd_Dubovyk,%
bibs/LRrefa,%
bibs/2loopsreport]
\end{bibunit} 
%%%%%%%%%%%%%%%%%%%%% appendix.tex %%%%%%%%%%%%%%%%%%%%%%%%%%%%%%%%%
%
% sample appendix
%
% Use this file as a template for your own input.
%
%%%%%%%%%%%%%%%%%%%%%%%% Springer-Verlag %%%%%%%%%%%%%%%%%%%%%%%%%%
\begin{bibunit}[elsarticle-num-ID] 
\let\stdthebibliography\thebibliography
\renewcommand{\thebibliography}{%
\let\section\subsection
\stdthebibliography} 

\appendix 
\chapter{Public Software and Codes For \mb{} Studies}
\label{introA}  

\section{Analytic software \label{app:analsoftware}}
 For a construction of \mb{} representations, analytic continuation in $\eps$, expansions, decreasing dimensionality of the \mb{} integrals, see chapters \ref{chapter-MBrepr}-\ref{chapter-MBanal}.   

\begin{enumerate}
   
%\end{enumerate}
%\section
\item[{\bf A.1.1}] 
{\ambrem \label{app:ambre}} %\cite{Dubovyk:2016ocz,Dubovyk:2016zok,Dubovyk:2016aqv}} 
The Mathematica toolkit AMBRE derives \mblong{} representations for Feynman integrals in $d = n-2 \epsilon$ dimensions.
The package is described in \cite{Dubovyk:2016ocz,Dubovyk:2016zok,Dubovyk:2016aqv}.
Connected useful \mb{} literature is \cite{Gluza:2007rt, Gluza:2007bd, Gluza:2009mj, Gluza:2010mz, Gluza:2010rn, Bielas:2013rja, Blumlein:2014maa, Dubovyk:2015yba, Ochman:2015fho,
      Dubovyk:2016ocz, Dubovyk:2016zok, Gluza:2016fwh, Dubovyk:2017cqw, Prausa:2017frh, Czakon:2005rk,
      Freitas:2010nx,
      Freitas:2012sy, Dubovyk:2016aqv,
      Tausk:1999vh,
      Gonzalez:2007ry, Gonzalez:2008xm, Gonzalez:2010uz,
      Cvitanovic:1974uf,
      Smirnov:2009up, Anastasiou:2005cb,
      Bogner:2010kv, nakanishi1971graph,
      mbtools-kosower, mbtools-smirnov, Bielas:2013v12, Gluza:2010v22, Dubovyk:201509v30x}.
      
\texttt{AMBRE} versions overview:
\begin{itemize}
 \item iteratively to each subloop -- loop-by-loop approach (LA): mostly for planar \\
       ({\texttt{AMBREv1.3.1}} \& {\texttt{AMBREv2.1.1}}) \\
%       \begin{center}
       \includegraphics[scale=0.45]{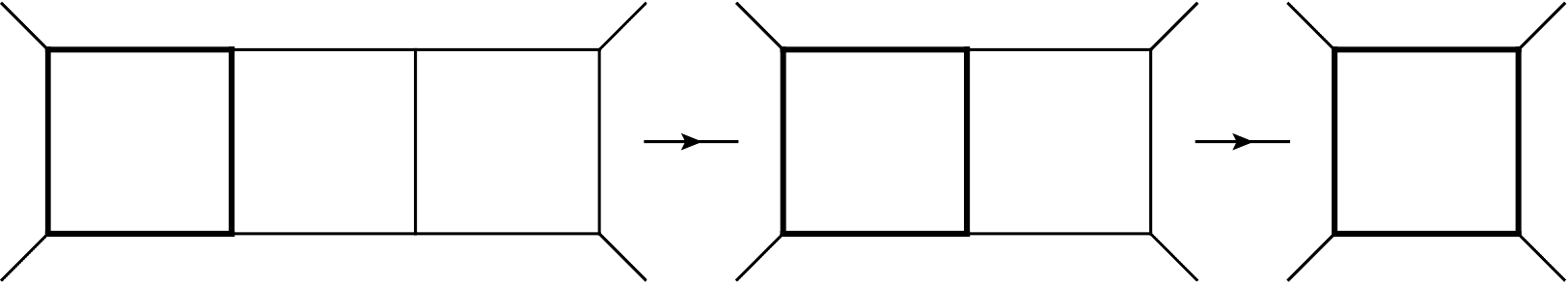}
%       \end{center}
 \item in one step to the complete U and F polynomials -- global approach (GA): general \\
       ({\texttt{AMBREv3.2}})

 \item combination of the above methods -- Hybrid approach (HA) \\
% (\texttt{AMBREv4}, not public)

%       \begin{center}
       \includegraphics[scale=0.35]{FIGS/mxpl.pdf} \\
       \includegraphics[scale=0.35]{FIGS/mxnp.pdf} 
%       \end{center}
\end{itemize}

Examples, description, updates, links to basic tools and literature can be found on the webpage \cite{ambrewww}. 

\item[{\bf A.1.2}] {\pltestm \label{app:pltest}}

The \math{} package \cite{Bielas:2013rja,ambrewww} determines planarity of a Feynman diagram given its propagators (momentum flows). %See chapters \ref{chapter-MBrepr}-\ref{chapter-MBanal}.

%\section{\pltestm \label{app:pltest}}

% \section{\mbm{} \label{app:MB.m}}
\item[{\bf A.1.3}] {\mbm{} \label{app:MB.m}}\\
The {\math{} package \cite{Czakon:2005rk,mbtools} allows to analytically continue any \MB{} integral in a given parameter and to resolve the singularity structure in this parameter.}  %See chapters \ref{chapter-MBrepr}-\ref{chapter-MBnum}.
 
% \section{\texttt{MBresolve.m} \label{app:MB.resolve}}
\item[{\bf A.1.4}] {\texttt{MBresolve.m} \label{app:MB.resolve}} \\
The \math{} package \cite{Smirnov:2009up,mbtools-smirnov}   resolves singularities of multifold Mellin-Barnes integrals in the dimensional regularization parameter $\eps$.  An alternative to \mbm{}.%, see chapter \ref{chapter-singul}.
 
% \section{\texttt{barnesroutines.m}}
 \item[{\bf A.1.5}] {\texttt{barnesroutines.m}}\\
 A tool for an automatic application of the first and second Barnes lemmas on lists of multiple Mellin-Barnes integrals \cite{mbtools-kosower}. 
% See in particular section \ref{sec:5MBrepr,sec:BLeff}. 
 
 % \section{\texttt{MBasymptotics.m}}
 \item[{\bf A.1.6}] \label{app:mbasymptotics} {\texttt{MBasymptotics.m}}\\
 The \math{} routine  expands Mellin-Barnes integrals in a small parameter \cite{mbtools}.

%\section{\texttt{asy2.m} \label{asy2}} 
\item[{\bf A.1.7}] {\texttt{asy2.m} \label{asy2}}\\
The \math{} code \cite{Jantzen:2012mw} performs asymptotic expansions of Feynman integrals using the strategy of expansion by regions and recognizes potential regions in threshold expansions or Glauber regions.

%\section{ \texttt{ASPIRE}} 
\item[{\bf A.1.8}] { \texttt{ASPIRE}} \\
The \math{} code \cite{Ananthanarayan:2018tog} which is  an alternative to \texttt{asy2.m}.% \ref{asy2}.  
\\ 
\end{enumerate}
\section{Multiple sums \label{app:mplsums}}
 
As discussed in chapters \ref{chapter:complex} and \ref{chapter-MBanal}, taking residues of \mb{} integrals leads to multiple sums over  hyper\-geo\-metric expressions.  Several techniques have been developed to manipulate these sums see e.g. \cite{Moch:2001zr,Davydychev:2003mv,Weinzierl:2004bn,Kalmykov:2007dk,Blumlein:2018cms,McLeod:2020dxg,Kotikov:2020ccc}.\\ Below we list some of the codes which deal with multiple sums. 
%\end{svgraybox}
\begin{enumerate}
  
%\section{\texttt{MBsums.m}}
\item[\bf{A2.1}] {\texttt{MBsums}}\\
The \math{} package \cite{Ochman:2015fho,ambrewww} which  transforms \mblong{} integrals into multiple sums. 

%\section{\texttt{MBConicHulls}}
\item[\bf{A2.2}] {\texttt{MBConicHulls}}\\
The \math{} package 
\cite{Ananthanarayan:2020fhl} for generating multiple series representations of N-fold MB integrals using geometrical analysis based on conic hulls, $N>2$. Relies on the \texttt{MultivariateResidues} \math{} package. %{\texttt{MultivariateResidues}}.%\ref{multivariate}. 

%\section{\texttt{MultivariateResidues} \label{multivariate}}
\item[\bf{A2.3}] {\texttt{MultivariateResidues}}\\
The \math{} package \cite{Larsen:2017aqb} for efficient evaluation of multidimensional residues based on methods from computational algebraic geometry.

%\section{\texttt{SUMMER}}
\item[\bf{A2.4}] {\texttt{SUMMER}}\\
The  package \cite{Vermaseren:1998uu} written in \texttt{FORM} \cite{Vermaseren:2000nd} for evaluation of nested symbolic sums over combinations of harmonic series, binomial coefficients and denominators. In addition it treats Mellin transforms and the inverse Mellin transformation for functions that are encountered in Feynman diagram calculations. 

%\section{\texttt{nestedsums}}
\item[\bf{A2.5}]  {\texttt{nestedsum}}\\
The C++ program \cite{Weinzierl:2002hv} with \texttt{GiNaC} library for expansion of transcendental functions \cite{Moch:2001zr} with the algorithms  based on the Hopf algebra of nested sums. 

%\section{\texttt{XSummer}}
\item[\bf{A2.6}] {\texttt{XSummer}}\\
The  package \cite{Moch:2005uc} written in \texttt{FORM} \cite{Vermaseren:2000nd} for evaluation of nested sums, where the harmonic sums and their generalizations appear as building blocks, originating for example from the expansion of generalized hypergeometric functions around integer values of the parameters.

%\section{\texttt{HarmonicSums}}
\item[\bf{A2.7}] {\texttt{HarmonicSums}} \\
The \math{} package \cite{Ablinger:2013cf} for evaluation of S-sums and generalized harmonic polylogarithms.% (see chapter \ref{chapter-MBanal}).

%\section{\texttt{SumProduction}, \texttt{EvaluateMultiSums}, \texttt{Sigma} }
\item[\bf{A2.8}] {\texttt{SumProduction}, \texttt{EvaluateMultiSums}, \texttt{Sigma} }\\
The packages \texttt{EvaluateMultiSums}, \texttt{SumProduction} and \texttt{Sigma} 
\cite{Schneider:2013zna,Ablinger:2013eba}
deal with multi-sums over hypergeometric expressions. The results are given, if possible,   in terms of harmonic sums, generalized harmonic sums, cyclotomic harmonic sums or binomial sums.

%\section{\texttt{HypExp.m}} 
\item[\bf{A2.9}] {\texttt{HypExp}}, {\texttt{HypExp 2}}  \\        
   The Mathematica package \texttt{HypExp} which allows to expand hypergeometric functions $_JF_{J-1}$ around integer and half-integer parameters to arbitrary order  \cite{Huber:2005yg,Huber:2007dx}.
   \\

\item[\bf{A2.10}] {\texttt{Olsson}}  \\        
   The Mathematica package \cite{Ananthanarayan:2021yar}
which aims to find linear transformations for some classes of multivariable hypergeometric functions. It is based on \cite{Olsson:1964}.
\end{enumerate}

\section{Polylogarithms and generalizations \label{app:mplsgpls}}

%text from 0508212
In particle physics HPLs $H(\{a\},x)$ have been introduced in \cite{Remiddi:1999ew} in order to describe systematically QED self-energies and vertices, $\{a\} \in \{1,0,-1\}$. 
HPLs with some generalized arguments are introduced in \cite{Aglietti:2003yc}.
A generalization of HPLs for three-scale problems (e.g. QED boxes) are the two-dimensional harmonic polylogarithms $G(\{b\},x)$ (GPLs);
the index vector $\{b\}$ has elements $1,0,-1$, but now also those depending on a second kinematic variable $y$.
This was observed in \cite{Gehrmann:1999as}
and  worked out in \cite{Gehrmann:2000zt}
 for a planar massless problem.
 The minimal basis has been systematically worked out in \cite{Blumlein:2009ta}.
See also for MPLs studies in \cite{Ablinger:2013cf,Bonciani:2010ms,Czakon:2005jd}. 

As far as numerical solutions are concerned, the problem is solved for any
weight of \texttt{HPLs} and \texttt{GPLs}. 
Classical and Nielsen generalized polylogarithms are available within both the \math{} and \texttt{GiNac} \cite{Bauer:2000cp,ginacwww} language as built-in functions
(the evaluation for arbitrary complex arguments and without any restriction on the weight).
In addition many special packages have been developed. 

\begin{enumerate}
%\section{{\texttt{hplog}}}
\item[\bf{A3.1}] {{\texttt{hplog}}}\\
The numerical \texttt{Fortran} \cite{Gehrmann:2001pz} evaluation of \texttt{HPLs} up to weight four with indexes
$\{0,1,-1\}$ and the analytic continuation.

%\section{\texttt{dhpl}}
\item[\bf{A3.2}] {\texttt{dhpl}}\\
The systematic numerical treatment
of GPLs with indexes $\{0,1,1-z,-z\}$, performed in
\texttt{Fortran} \cite{Gehrmann:2001jv}. 

%\section{\texttt{HPOLY.f}}
\item[\bf{A3.3}] {\texttt{HPOLY.f}} \\
The \texttt{Fortran} numerical code \cite{Ablinger:2018sat} for \texttt{HPLs} up to weight eight. 

%\section{\texttt{handyG}}
\item[\bf{A3.4}] {\texttt{handyG}} \\
The \texttt{Fortran} library \cite{Naterop:2019xaf} for the evaluation of \texttt{GPLs}, in principle of any weight, suitable for Monte Carlo integration.

%\section{\texttt{FastGPL}}
\item[\bf{A3.5}] {\texttt{FastGPL}} \\
The \texttt{C++} library \cite{Wang:2021imw} for  efficient and accurate  evaluation of \texttt{GPLs} based on work \cite{Vollinga:2004sn}, suitable for Monte Carlo integration and event generation. 
         
%\section{\texttt{HPL.m}}
\item[\bf{A3.6}] {\texttt{HPL.m}}\\
The \math{} \texttt{HPLs} package \cite{Maitre:2005uu} with implementation of the product algebra, the derivative properties, series expansion and numerical evaluation of \texttt{HPLs}.

%\section{PolyLogTools} 
\item[\bf{A3.7}] {\texttt{PolyLogTools}}\\
The \math{} package \cite{Duhr:2019tlz} for multiple polylogarithms, including the Hopf algebra of the multiple polylogarithms and the symbol map, as well as the construction of single valued multiple polylogarithms with an algorithm for given symbol combination of MPLs, including the so-called fibration bases. 

%\section{\texttt{HyperInt}}
\item[\bf{A3.8}] {\texttt{HyperInt}}\\
The package \cite{Panzer:2014caa} based on the \texttt{Maple}  computer algebra system for symbolic integration of hyperlogarithms multiplied by rational functions, which also include \texttt{MPLs} when their arguments are rational functions. 
 
%\section{\texttt{CHAPLIN}}
\item[\bf{A3.9}] {\texttt{CHAPLIN}}\\
The \texttt{Fortran} library \cite{Buehler:2011ev} to evaluate all \texttt{HPLs} up to weight four numerically for any complex argument. 

\item[\bf{A3.9}] {\texttt{HYPERDIRE}}\\
The \math{} package \cite{Bytev:2013gva} devoted to the creation of a set of base programs for the differential reduction of hypergeometric functions.
\end{enumerate}

\section{\MB{} numerical software \label{app:mbnum}}

There are at the moment two publicly available packages for \mb{} numerical evaluations.
\begin{enumerate}
%\section{\mbm{}}
\item[\bf{A4.1}] {\mbm{}}\\
\mbm{} listed in {\bf A1.3} %\ref{app:MB.m} 
also allows to solve \mb{} representations numerically. It works fine in the Euclidean kinematic region. 

\mbm{} requires the \texttt{CUBA} Monte Carlo and {\tt quasi} Monte Carlo library \cite{Hahn:2004fe} for numerical integration and \texttt{CERNLIB} for evaluation of gamma and polygamma functions. 

In the original Mathematica package \MB{} \cite{Czakon:2005rk}, the gamma and polygamma functions are obtained from links to the libraries \texttt{libmathlib.a} and \texttt{libkernlib.a} of \texttt{CERNLIB}.
%In fact, only the libraries libmathlib.a and libkernlib.a are actually required.
 Unfortunately, 
the development and support of the original \texttt{CERNLIB} are no longer available \cite{cernlib}
%, see \url{http://cernlib.web.cern.ch/cernlib/} 
(last update 10 Oct 2014, 12 Feb 2018).
This can cause problems when running numeric commands within \mbm{} and programs based on \mbm{}. 
%In \wwwaux{HowTo}, we write a short manual 
In the file \texttt{README.md} in \cite{www_aux_springer}
we put a short manual how to solve the problems if numerical \mbm{} output gives NaNs.
\mbm{} is available at \cite{mbtools}.

%\section{\texttt{MBnumerics.m}}
\item[\bf{A4.2}] {\texttt{MBnumerics.m}} \\
The \math{} package for evaluation of \mb{} integrals in any kinematic point (Minkowskian or Euclidean).
\texttt{MBnumerics.m} is available at \cite{ambrewww}.
\end{enumerate}

\section{Other methods for \texttt{FI} numerical integrations \label{app:othernum}}

The well established method of calculations is the sector decomposition method. The programs \texttt{pySecDec} and \texttt{FIESTA} allow now for direct numerical integration 
of \texttt{FI} in physical (Minkowskian) space. In recent years numerical evaluation of \texttt{FI} based on differential equations has been also developed very much, there are several private programs or set of connected packages \cite{Mandal:2018cdj,Czakon:2020vql,Bronnum-Hansen:2020mzk} with exception of{\texttt{AMFlow}} and \texttt{SeaSyde} \cite{Liu:2022chg,Armadillo:2022ugh}.  Some of those rely on \texttt{DiffExp} \cite{Hidding:2020ytt},  a \math{} package for integrating families of \texttt{FI} order-by-order in the dimensional regulator from their systems of differential equations, in terms of one-dimensional series expansions along lines in phase-space, the idea based on the work \cite{Moriello:2019yhu}.

\begin{enumerate}
%\section{\mbm{}}
\item[\bf{A5.1}] {\texttt{pySecDec}} \\
The project \texttt{SecDec} performs the factorization of dimensionally regulated poles in parametric integrals, and the subsequent numerical evaluation of the finite coefficients. 
The latest version with the algebraic part of the program is written with python modules \cite{Borowka:2017idc,Borowka:2018goh}.
\item[\bf{A5.2}] {\texttt{FIESTA}} \\
 The program FIESTA is an alternative package to \texttt{pySecDec}, the newest version adds support for graphical processor units (GPUs) for the numerical integration, aiming at optimal performance at large scales when one is increasing the number of sampling points in order to reduce the uncertainty estimates \cite{Smirnov:2015mct}.
 
\item[\bf{A5.3}] {\texttt{AMFlow}} \\
 The \math{}  package \cite{Liu:2022chg} for numerical computation of dimensionally regularized \texttt{FI} via so-called auxiliary mass flow method. In this framework, integrals are treated as functions of an auxiliary mass parameter and their results can be obtained by constructing and solving differential systems with respect to this parameter, in an automatic way. 
\item[\bf{A5.4}] {\texttt{SeaSyde}} \\
The \math{} package \cite{Armadillo:2022ugh} for numerical computation of dimensionally regularized \texttt{FI}, with internal masses being in general complex-valued. The implementation solves by series expansions the system of differential equations satisfied by the Master Integrals.  

\end{enumerate} 

\chapter{Additional Working Files \label{appB1}}
%\section{Additional Working Files}

At the web pages \cite{www_aux_springer} we gathered files  with materials and solutions discussed in the book. For viewing the .nb files one can use Wolfram Player {\color{blue}{\href{https://www.wolfram.com/player/}{https://www.wolfram.com/player/}}}.

\begin{enumerate}
    \item MB\_miscellaneous\_Springer.nb
    \item MB\_SE2l2m\_Springer.nb
    \item MB\_Simpl\_Springer.nb
 %   \item MB\_V63MZ1MH\_Springer.nb  
    \item MB\_V6l3m1M\_Springer.nb
    \item MB\_V6l1MZ\_Springer.nb
    \item MB\_V6l0m\_Springer.nb
    \item MB\_SE6l0m\_Springer.nb
  %  \item MB\_V8l0m\_Springer.nb
    \item MB\_AMBREnew\_Springer.nb
    \item MB\_Zmatrix\_Springer.nb
    \item MB\_3L\_Springer.nb
    \item MB\_Gamma0\_Springer.nb
    \item MB\_PSint\_Springer.nb
    \item MB\_B5l2m2\_Springer.nb
    \item MB\_B5nf\_Springer.nb
    \item MB\_basicsums\_Springer.nb
    \item MB\_I11\_massless\_Springer.nb
    \item MB\_O11\_onemass\_Springer.nb
    \item MB\_Decoupling\_Springer.nb
    \item MB\_Euler\_Springer.nb
    \item MB\_symbolicint\_Springer.nb
    \item MB\_B7l4m1\_Springer.nb
    \item MB\_3dimNum\_Springer.nb (.tar.gz)
    \item MB\_1dimNum\_Springer.nb
    \item MB\_TH1\_Springer.nb
    \item MB\_TH2\_Springer.nb
    \item MB\_RealPSNum\_Springer.nb
    \item MB\_V6l1m\_Springer.sh  
    \item SD\_V6l1m\_generate\_Springer.py 
    \item SD\_V6l1m\_integrate\_Springer.py  
    \item MBDE\_Springer.nb
        \item README.md
\end{enumerate}

\chapter{Some \mb{} Talks \label{appB}}

\begin{enumerate}
    \item V.A.~Smirnov, {\it{The method of Mellin–Barnes Representation}}. \newline In: School of Analytic Computing in Theoretical High-Energy Physics,  \newline Atrani, Italy 2013,
     \href{https://indico.cern.ch/event/248025/}{https://indico.cern.ch/event/248025/}.
      \item T.~Riemann, {\it{Integrals, Mellin-Barnes representations and Sums}}. \newline In: Computer Algebra and Particle Physics 2009,  \newline DESY, Zeuthen, Germany, 2009,
     \href{https://indico.desy.de/event/1573}{https://indico.desy.de/event/1573}.
    
\end{enumerate}

\putbib[%
bibs/refs,%
bibs/2loops_LL16,%
bibs/Phd_Dubovyk,%
bibs/LRrefa,%
bibs/2loopsreport]
\end{bibunit}
% our stuff end

%\include{appendix_SandZsums}
%\include{chapter_template}
\backmatter%%%%%%%%%%%%%%%%%%%%%%%%%%%%%%%%%%%%%%%%%%%%%%%%%%%%%%%
\printindex

%%%%%%%%%%%%%%%%%%%%%%%%%%%%%%%%%%%%%%%%%%%%%%%%%%%%%%%%%%%%%%%%%%%%%%

\end{document}